%% file: main.tex
\documentclass[11pt]{article}
\usepackage[utf8]{inputenc}
\usepackage[english]{babel}
\usepackage{csquotes} 

\usepackage[
  backend=biber,
  style=numeric,   % common in math; gives labels like [Smi23]
  url=false,          % suppress URLs
  doi=false,          % suppress DOIs
  isbn=false,         % suppress ISBNs
  giveninits=true,    % turns "John" into "J."
  maxnames=99,        % never truncate author lists with "et al."
  sorting=nty,        % nyt -> sort by name → year → title
  sortcites=true
]{biblatex}
\AtEveryBibitem{\clearfield{issn}}   % suppress ISSNs
\DeclareFieldFormat{eprint:arxiv}{%
  \href{https://arxiv.org/abs/#1}{arXiv\addcolon\space#1}%
}
\DeclareFieldFormat{eprint:stackexchange}{%
  \href{#1}{#1}%
}
\renewbibmacro{in:}{}

\usepackage{authblk}
\usepackage{graphicx}
\usepackage{amsmath,amssymb,amsthm}
\usepackage[dvipsnames]{xcolor}
\usepackage{xfrac}
\usepackage{parskip} % careful with some document classes, but OK here
\usepackage{fancyhdr}
\usepackage{vmargin}
\usepackage{listings}
\usepackage{dsfont}
\usepackage{tabu}
\usepackage{enumitem}
\usepackage{pdflscape}
\usepackage{systeme}
\usepackage{stmaryrd}
\usepackage{physics}
\usepackage{pdfpages}
\usepackage{float}
\usepackage{tikz-cd}
\usepackage{mathrsfs}
\usepackage{tensor}
\usepackage{upgreek}
\usepackage{tikzit}
\setmarginsrb{3 cm}{0 cm}{3 cm}{2.5 cm}{1 cm}{1.5 cm}{1 cm}{1.5 cm}
\usepackage{caption}
\usepackage{subcaption}
\usepackage[colorlinks,citecolor={blue},urlcolor={blue}]{hyperref}
\usepackage{varioref}
\usepackage[capitalise]{cleveref}
\usepackage{array}
\usepackage{makecell}

\usetikzlibrary{patterns.meta}

\newcommand{\bb}[1]{\mathbb{#1}}
\newcommand{\Grad}[1]{\textup{grad}_{#1}\,}

\newtheoremstyle{break}
{\topsep}{\topsep}% 
{\itshape}{}% 
{\bfseries}{}% 
\theoremstyle{break} 
\newtheorem{theorem}{Theorem}[section] 
\newtheorem{definition}[theorem]{Definition} 
\newtheorem{corollary}[theorem]{Corollary} 
\newtheorem{lemma}[theorem]{Lemma} 
\newtheorem{proposition}[theorem]{Proposition} 
\newtheorem{remark}[theorem]{Remark} 
\newtheorem{notation}[theorem]{Notation} 
 
\newtheorem{assumption}[theorem]{Assumption} 
 
\newtheorem*{theorem-non}{Theorem} 
\newtheorem*{corollary-non}{Corollary} 
\newtheorem*{conjecture-non}{Conjecture}
\newtheorem*{lemma-non}{Lemma} 

\crefname{assumption}{Assumption}{Assumptions} 
\crefname{notation}{notation}{notations}

\allowdisplaybreaks
\numberwithin{equation}{section}
\makeatletter
\renewcommand\@maketitle{%
  \newpage
  \null
  \begin{center}%
    {\LARGE \@title \par}%
    \vskip 1em%
    {\large
      \lineskip .5em%
      \begin{tabular}[t]{c}%
        \@author
      \end{tabular}\par}%
  \end{center}%
  \par
  \vskip 1.5em}
\makeatother

\crefformat{equation}{Eq.~(#2#1#3)} %So that the eq numbers appear in italics whenever needed. 

\renewcommand{\thefootnote}{\fnsymbol{footnote}}

\title{Rapid mixing for Gibbs measures in Riemannian manifolds}

\author[1,2]{Ángela Capel}
\author[3]{Marco Castrillón López}
\author[4]{Sofyan Iblisdir}
\author[5]{Angelo Lucia}
\author[6,7]{Pablo~Páez Velasco\thanks{pablopaez@ucm.es}}
\author[6,7]{David Pérez García}

\affil[1]{\textit{Department of Applied Mathematics and Theoretical Physics, University of Cambridge, Wilberforce Road, Cambridge, CB3 0WA, United Kingdom}}
\affil[2]{\textit{Fachbereich Mathematik, Universität Tübingen, 72076 Tübingen, Germany}}
\affil[3]{\textit{Departamento de Álgebra, Geometría y Topología, Facultad de Matematicas, Universidad Complutense de Madrid, 28040 Madrid, Spain}}
\affil[4]{\textit{Departament de Física Quàntica i Astrofísica \& Institut de Ciències del Cosmos,
Universitat de Barcelona, 08028 Barcelona, Spain}}
\affil[5]{\textit{Dipartimento di Matematica, Politecnico di Milano, 20133 Milano, Italy}}
\affil[6]{\textit{Departamento de An\'{a}lisis Matemático y Matemática Aplicada, Facultad de Matemáticas, Universidad Complutense de Madrid, 28040 Madrid, Spain}}
\affil[7]{\textit{Instituto de Ciencias Matemáticas, 28049 Madrid, Spain}}

\begin{document}
\maketitle
\renewcommand{\thefootnote}{\arabic{footnote}}
\setcounter{footnote}{0}

\begin{abstract}
Langevin dynamics on Riemannian manifolds is analyzed. Conditions ensuring the existence of a suitable logarithmic Sobolev inequality (rapid mixing to the Gibbs distribution) are identified. These conditions involve the curvature of the manifold, the inverse temperature, escaping directions from saddle points, and exclude barren plateaus and spurious local minima. We show that when these conditions are met, mixing times polynomial in the dimension of the manifold are achievable. This result is obtained through a relation between Langevin processes in the domain and in the image of a Riemannian submersion. Such a relation can be of independent interest. 
\end{abstract}
\newpage

\hypersetup{linkcolor=black}
\tableofcontents
\hypersetup{linkcolor=red}

\newpage 

\input{TikzFigures/preamble}

\input{Chapters/Intro}
\newpage
\input{Chapters/PoincareIneq}
\newpage
\input{Chapters/Lifting_Poincare}
\newpage
\input{Chapters/Upgrading}
\newpage 
\input{Chapters/ProofThm}
\newpage
\input{Chapters/Suboptimality}
\newpage
\input{Chapters/Examples}
\newpage

\section*{Acknowledgements}
The first version of this manuscript (\href{https://arxiv.org/pdf/2606.13453v1}{arXiv:2606.13453v1}) was developed entirely without the use of an LLM. We acknowledge use of ChatGPT Astra 6.0 to proofread the first version of the manuscript, which allowed us to correct several typographical errors. This revision also identified a mistake in our proof strategy, which we corrected by adding \cref{assumptionsym}. The revision allowed us to simplify the construction of the second Lyapunov function via the use of an explicit expression, suggested by ChatGPT, using the same reasoning and proof strategy of the first version. P.\,P.\,V. would like to thank Pablo Hidalgo Palencia for many fruitful conversations throughout the development of this work, and Tomasz Kania for generously providing us with the argument for the escaping time estimates of the generalized CIR process used in the first version. A.\,C. acknowledges the support of the Deutsche Forschungsgemeinschaft (DFG, German Research Foundation) - Project-ID 470903074 - TRR 352. This project was funded within the QuantERA II Programme which has received funding from the EU’s H2020 research and innovation programme under the GA No 101017733. M.\,C.\,L. was supported by  Project PID2024-156578NB-I00 funded by MICIU /AEI /10.13039/501100011033 / FEDER, EU. S.\,I. acknowledges financial support from the “Center of Excellence Maria de Maeztu 2025-2029” award to the Institute of Cosmos Sciences, grant CEX2024-001451-M, funded by MICIU/AEI/10.13039/501100011033. A.\,L. acknowledges support from the Italian Ministry of University and Research (MUR), through ``Programma per Giovani Ricercatori Rita Levi Montalcini'', the grant ``Dipartimento di Eccellenza 2023-2027'' of Dipartimento di Matematica, Politecnico di Milano, as well as the National Group of Mathematical Physics (GNFM) of the Italian Institute for High Mathematics (INdAM). P.\,P.\,V. acknowledges support of the Spanish Ministry of Science and Innovation MCIN/AEI/10.13039/501100011033 (CEX2023-001347-S,CEX2019-000904-S, CEX2019-000904-S-21-2). This work has been funded by the Spanish Ministry of Science, Innovation and Universities MICIU/AEI/10.13039/501100011033 (CEX2023-001347-S, PID2023-146758NB- I00), Comunidad de Madrid (TEC-2024/COM-84-QUITEMAD-CM), Universidad Complutense de Madrid (FEI-EU-22-06),  and the Ministry for Digital Transformation and of Civil Service of the Spanish Government through the QUANTUM ENIA project call - Quantum Spain project, and by the European Union through the Recovery, Transformation and Resilience Plan - NextGenerationEU within the framework of the Digital Spain 2026 Agenda. 

%\bibliographystyle{unsrtnat} % We choose the "plain" reference style
%\bibliography{bibliografia} % Entries are in
%\nocite{*}
\printbibliography[heading=bibintoc]

\newpage 

\appendix 
\input{Chapters/Geometry}
\newpage
\input{Chapters/Sobolev_Spaces}
\newpage 
\input{Chapters/BakrEmery}
\newpage
%\input{Chapters/Misc}
%\newpage
%\input{Chapters/PDEs}
%\newpage
\input{Chapters/Laplacian}
\end{document}

%% file: TikzFigures/preamble.tex
\definecolor{myred}{HTML}{E53935}
\definecolor{myblue}{HTML}{1E88E5}
\definecolor{mygreen}{HTML}{43A047}
\definecolor{myyellow}{HTML}{FDD835}
\definecolor{myorange}{HTML}{FB8C00}
\definecolor{mygold}{HTML}{F9A825}
\definecolor{mypurple}{HTML}{8E24AA}
\definecolor{mygray}{HTML}{BDBDBD}
\definecolor{mybrown}{HTML}{6D4C41}
\definecolor{mynavy}{HTML}{1A237E}
\definecolor{mypink}{HTML}{ffbfca}
\definecolor{myseagreen}{HTML}{26A69A}
\definecolor{myviolet}{HTML}{f07ef0}
\definecolor{mydarkblue}{HTML}{0D47A1}
\definecolor{mydarkcyan}{HTML}{E0FFFF}
\definecolor{darkgray}{rgb}{0.66, 0.66, 0.66}
\definecolor{mydarkgreen}{HTML}{1B5E20}
\definecolor{mydarkmagenta}{HTML}{AD1457}
\definecolor{mydarkorange}{HTML}{EF6C00}
\definecolor{lightblue}{rgb}{0.68, 0.85, 0.9}
\definecolor{lightcyan}{rgb}{0.88, 1.0, 1.0}
\definecolor{lightgray}{rgb}{0.83, 0.83, 0.83}
\definecolor{mylightgreen}{HTML}{81C784}
\definecolor{lightyellow}{rgb}{1.0, 1.0, 0.88}
\definecolor{myshadow}{rgb}{0.5, 0.5, 0.5}
\definecolor{pink}{rgb}{1.0, 0.75, 0.8}
\definecolor{violet}{rgb}{0.93, 0.51, 0.93}
\definecolor{myauxcolor}{RGB}{245, 255, 255}
\definecolor{mylightgreen}{RGB}{193, 225, 159}
\tikzstyle{heavier} = [line width=0.8pt]

%% file: Chapters/Intro.tex
\section{Introduction}
\label{sec:intro}

Given some sufficiently smooth function $F : M \to \bb{R}$ on a Riemannian manifold $(M, g)$, and some fixed positive constant $\beta > 0$---usually referred to as the inverse temperature---the problem of sampling with respect to the associated Gibbs distribution 
\[
\nu(x) := \frac{1}{Z} e^{-\beta F(x)},
\]
where $Z$ is the normalizing constant of $\nu$, $Z := \int_M e^{-\beta F} \, \textup{dVol}_g$, has played an essential role in several areas. For example, in lattice gauge theory, where $M$ represents the set of states a system can be in, and $F$ its energy, solutions to this problem are of paramount importance; without the ability to sample distributions of the above form we would not be able to predict masses of elementary particles or the nature of phase transitions in high-energy physics \cite{Creutz_2023}. The problem of sampling $\nu$ is also closely related to that of minimizing functions with constraints, which is crucial in numerical analysis \cite{boumal2022intromanifolds,absil2008optimization}. Gibbs sampling also has applications in differential privacy, the gold standard of privacy in machine learning \cite{diffpriv1,diffpriv2,diffpriv3}.

More often than not, the distribution $\nu$ cannot be sampled directly. For this reason, it is common to aim for approximate sampling through the use of some Markovian process whose fixed point is the target distribution $\nu$, see \cite{brooks2011handbook,gelman1995bayesian,mackay2003information,robert2004monte}. Markov chain Monte Carlo (MCMC) methods for sampling from Gibbs distributions in $\bb{R}^n$ have been extensively studied (see \cite{lau2022non} and the references therein). More recently, analogous methods have also been developed for Riemannian manifolds. For overviews on the existing literature on MCMC algorithms on manifolds, see, for example \cite[Section 3]{liu2022geometry}, \cite[Section 2.3]{durmus2026geodesic}, and \cite[Section 1.2]{goyal2019sampling}. 

This work focuses on a continuous-time Markov process, known as the Langevin diffusion (or Langevin dynamics), a stochastic process $X_t$ on $M$ that combines gradient descent and Brownian motion\footnote{Although discrete-time stochastic processes are more relevant to computer simulations, to be able to mobilize the conceptual tools that will lead to a proof of our main results, we have chosen to work in continuous time. Showing that a discretization preserves the convergence properties of the continuous-time Langevin dynamics is a non-trivial problem. In a setting similar to ours, this has been established in \cite{cheng2022efficient} for the Geometric Euler-Maruyama discretization.}. This process is known to have $\nu$ as its limiting distribution---i.e. for any initial condition, its probability density at time $t$, $\rho_t$, converges to $\nu$ as $t$ tends to infinity \cite{bakry2013analysis}. The Langevin diffusion and its discretizations have been extensively studied both in $\bb{R}^n$ and on manifolds.

The fact that the Langevin diffusion has $\nu$ as its stationary distribution is only part of the picture. From both a theoretical and practical perspective, it is equally important to understand the rate at which the process converges to equilibrium. Establishing quantitative convergence rates is considerably more challenging than proving convergence itself. In the simple case where $M$ is the Euclidean space, this issue has been studied extensively \cite{menz2014,chewi2025analysis,vempala2019rapid,wibisono2019proximal,gentiloni2026exponential}. In this setting, the relation between convergence rates, the dimension of $M$, the properties of $F$ and the value of $\beta$ is considerably well understood. The picture is much less clear for general Riemannian manifolds.

Some quantitative convergence results do exist for Gibbs sampling on Riemannian manifolds. For instance, \cite{moitra2020fast} establishes fast convergence guarantees for the Langevin dynamics, although only in a neighborhood of local minima, with applications to noisy matrix factorization, matrix sensing, and matrix completion. \cite{cheng2022efficient} bounds the discretization error of the continuous Langevin dynamics and establishes convergence guarantees for the resulting algorithm, although with rates that can be arbitrarily slow for non-convex functions. Under convexity assumptions, \cite{li2022mirror} obtains fast convergence guarantees for a discretization of the Langevin dynamics. Beyond Langevin-based methods, \cite{lee2018convergence} derives convergence rates for the discrete Riemannian Hamiltonian Monte Carlo algorithm, while \cite{goyal2019sampling} establishes fast convergence guarantees for the geodesic walk to sample from the uniform measure, and provides convergence guarantees for Gibbs sampling associated with convex functions. In a different direction, \cite{cai2026onlineriemanniangradientdescent} studies optimization rather than sampling, establishing local convergence guarantees for a Riemannian gradient descent algorithm in the context of tensor-network methods for quantum state tomography. 

In order to prove \textit{rapid} convergence for the density of the Langevin dynamics towards its stationary density, a classic approach is to derive a suitable \textit{logarithmic Sobolev inequality} (log-Sobolev for short), which in turn provides exponentially-decaying upper bounds for the total variation distance between the densities $\rho_t$ and $\nu$ \cite{bakry2013analysis}. Moreover, a log-Sobolev inequality also implies fast convergence of suitable discretizations of the Langevin dynamics \cite{wang2020fast,gatmiry2022convergenceriemannianlangevinalgorithm}.

In a recent work by M. B. Li and M. A. Erdogdu \cite{LiErd2022}, the Langevin diffusion process $X_t$ on the product of $n$-dimensional spheres was analyzed.  Under suitable assumptions on the function $F$, they establish a log-Sobolev inequality. In this work, we show that qualitatively similar results also hold for a much broader class of Riemannian manifolds, including those studied in \cite{LiErd2022}. Our construction of a log-Sobolev inequality follows their general strategy, but requires new techniques and ideas\footnote{Furthermore, in our approach, we avoid some non-verified assumptions used in some of the proofs of \cite[Appendix D]{LiErd2022Supp}.}.

Note that one may distinguish two regimes of the Langevin diffusion, in which $X_t$ exhibits different behaviors; namely when $\frac{1}{\beta} \ll 1$ and when $\frac{1}{\beta} \gg 1$---which can be understood as the \textit{low-temperature} and \textit{high-temperature} regimes. In the former, the evolution of $X_t$ is mainly dictated by the gradient of $F$, whereas in the latter, it is mainly noise. In this work, we will aim at proving log-Sobolev inequalities valid in the low-temperature regime. As in \cite{LiErd2022}, the major obstacle to overcome is ensuring that $X_t$ is able to \textit{escape} the saddle points of $F$ \textit{rapidly}. 

The high-temperature regime is interesting too \cite{cao2025dynamicalapproacharealaw}, but obtaining a log-Sobolev inequality is then simple whenever the Ricci curvature of $(M, g)$ is positive, i.e. $\textup{Ric}_g \geq \gamma g$, for some $\gamma > 0$. In this case, for sufficiently large values of $\frac{1}{\beta}$, there exists some constant $\kappa > 0$ such that
\[
\nabla^2F +\frac{1}{\beta} \textup{Ric}_g \geq \kappa g.
\]
This inequality, which is known as a \textit{curvature-dimension condition} (cf. \cref{BakryEmeryandLyapunov}) implies that $M$ satisfies a log-Sobolev inequality with the same constant by classic Bakry-Émery theory \cite[Section 5.7]{bakry2013analysis}. To be more precise, the condition 
\[
\frac{1}{\beta} \in \Omega\Big(\frac{|\lambda_{\min}(\nabla^2 F)|}{\gamma} \Big),
\]
is sufficient to obtain the desired log-Sobolev inequality. 

\subsection{Main results}

As mentioned earlier, this work is a study of rapid mixing for the Langevin diffusion on a Riemannian manifold in the low-temperature regime. It turns out that several technical difficulties arise when the global minimum of $F$ is not unique. We will see, however, that when the multiplicity of global minima is only caused by a symmetry of $F$, the situation improves. Such symmetries are very common in physics, as can be appreciated in the literature on lattice gauge theory or tensor networks \cite{RevModPhys,montvay1994quantum}. The study of lattice gauge theories and sampling and optimizing tensor networks are two potential fields of application of the results of this paper; first steps in this direction are taken in \cite{ourwork}. 

Informally, the possible symmetries of $F$ will be factored out of the original manifold $M$. On the resulting quotient manifold, the minimum will be unique. More precisely, let $(M, g)$ be a compact Riemannian manifold, and let $F: M \to \bb{R}$ be some smooth function on $M$. We assume that the symmetries of $F$ can be described via a \textit{suitable} action of some Lie group $G$ on $M$, i.e. there exists some compact and connected Lie group $G$ acting on $M$ and such that 
\[
F(gx) = F(x),
\]
for every $x \in M$ and every $g \in G$. If we consider the orbit space $M/G$, and the projection $\pi: M \to M/G$, we can define a function $\tilde F$ on $M/G$ such that $F = \tilde F \circ \pi$. Assuming that $\tilde F$ has a unique global minimum, our strategy will consist in studying Langevin dynamics on $M/G$, obtain a Poincaré inequality, and use it to lift it and upgrade it to a log-Sobolev inequality for $\nu$ in $M$. A key aspect of our derivation will be to endow $M/G$ with a Riemannian metric $h$ so as to make $\pi$ a \textit{Riemannian submersion} (cf. \cref{chapterRiemannianSubmersions}). We will study in parallel the Langevin dynamics on $(M, g)$ and $(M/G, h)$ associated with $F$ and $\tilde F$ respectively, i.e. those satisfying the formal\footnote{Although the expressions of the SDEs are non-rigorous, they provide a clear insight to the dynamics of the processes $X_t$ and $\tilde X_t$. Furthermore, they show the connection to the usual gradient descent. Nevertheless, these SDEs will not be used in this work. We will only use the definition provided in \cref{martingaleproblem}. See \cite[Theorem 1.3.6]{hsustochastic}.} SDEs 
\begin{equation}
\label{eq:defprocesos}
dX_t = -\textup{grad}_g\, F(X_t) dt + \sqrt{\frac{2}{\beta}} dW_t, \quad \textup{and} \quad d\tilde X_t = -\textup{grad}_h\, \tilde F(\tilde X_t) dt + \sqrt{\frac{2}{\beta}} d\tilde W_t,
\end{equation}
where $W_t$ and $\tilde{W}_t$ are Brownian motions in $M$ and $M/G$. Such pairs of processes have attracted considerable attention in recent years \cite{xuquotient,menon2024geometrydeeplinearnetwork,menon2026implicitregularizationlangevindynamics,HIM23}.

To illustrate the foregoing discussion with an example, let us consider the function
\[
F(X) = \frac{\Tr(X^\dagger A X)}{\Tr(X^\dagger BX)},
\]
where $X$ is an $n \times m$ complex matrix, with $n \geq m$ fixed, such that $X^\dagger X = \mathds{1}_m$, $A, B$ square Hermitian matrices, and $B$ positive-definite. Finding the minimum of this function is a well-known problem, known as trace-ratio minimization, relevant in principal component analysis \cite{jolliffe2011principal} or in the issue of graph embedding \cite{yan2005graph}. $F$ has a symmetry: $F(XU) = F(X)$ for every unitary matrix $U$ of dimension $m$. Nevertheless, considering the action of the unitary group $\textup{U}(m)$ on the space of $n \times m$ complex matrices satisfying $X^\dagger X = \mathds{1}_m$ given by
\[
(U, X) \mapsto XU,
\]
we can define a \textit{projected version} of $F$ which often has a unique global minimum \cite{shen2010tracequotient}. The trace ratio function will be discussed in further detail in \cref{sec:traceratio}.

The main results of this work are obtained when the following conditions are met. 

\begin{assumption}
\label{assumption3.7} 
There exists a compact and connected Lie group $G$ acting freely, isometrically and smoothly on $(M, g)$. 
\end{assumption}

Under \cref{assumption3.7}, it is known (cf. \cref{existenceRiemannianSubmersionMetrics}) that the quotient space $M/G$ can be endowed with a manifold structure and a Riemannian metric $h$ for which the projection
\[
\pi: (M, g) \to (M/G, h),
\]
is a \textit{Riemannian submersion}. 

Next, we need two assumptions regarding the curvature of $(M, g)$ and $(M/G, h)$. The first is needed for the lifting process of the Poincaré inequality (cf. \cref{SectionLift}), the next to obtain explicit curvature-dimension condition constants. 
\begin{assumption}[Curvature]
\label{assumption3.8} 
The fibers of $\pi$ have non-negative Ricci curvature. 
\end{assumption}

\begin{assumption}[Curvature]
\label{assumption3.9} 
There exists a constant $\mathbf{K} \geq 1$ such that for every $p\in M/G$, in normal coordinates centered at $p$, $|R_{ijk}^l(p)| \leq \mathbf{K}$ for every $i, j, k, l \in \{1, \dotsc, \dim( M/G)\}$, where $R$ denotes the Riemann curvature tensor of $(M/G, h)$. There also exist constants $R_M, R_{M/G} \geq 0$ that lower bound the Ricci curvatures of $(M, g)$ and $(M/G, h)$, respectively,
\[
\textup{Ric}_h \geq -R_{M/G}\,h,\quad \textup{\textit{and}}\quad \textup{Ric}_g \geq -R_M\, g.
\]
\end{assumption}

Note that whenever the sectional curvature of $(M, g)$ is lower bounded by some constant $\kappa$, the Ricci curvature---which is the sum of the sectional curvature associated with all the planes containing $v$---is lower bounded by $(\dim(M)-1)\kappa$. Furthermore, by O'Neill's curvature formulas (cf. \cref{CurvaturaSubmersion}), we know that the sectional curvature of $(M/G, h)$ is also lower bounded by $\kappa$, and so the Ricci curvature of $(M/G, h)$ is lower bounded by $(\dim(M/G)-1)\kappa$. 

The final geometric assumption regards the quotient manifold $(M/G, h)$. 
\begin{assumption}
\label{assumptionsym}
$(M/G, h)$ is a symmetric space. 
\end{assumption}
This assumption is imposed for technical reasons, as it allows us to obtain quantitative bounds the coefficients of $h$ and $h^{-1}$ in normal coordinates, as it is shown in \cref{sec:LaplacianAndGradientBound}. We expect that this assumption could either be replaced by the same assumption on $(M, g)$, or simply removed. The former approach would require a more detailed analysis on the curvature of the submersion $\pi$, while the latter would require a finer control of the curvature of $(M/G, h)$.

Next, we need to assume that the group action of $G$ on $M$ represents the symmetries of $F$. 
\begin{assumption}[Symmetry of $F$]
\label{assumption3.1}
$F$ is constant in the fibers of $\pi$---i.e. $F(x) = F(y)$ for every $x, y \in M$ such that $\pi(x) = \pi(y)$. 
\end{assumption}
As we mentioned earlier, whenever \cref{assumption3.1} holds, we can define a unique smooth function $\tilde F : M/G \to \bb{R}$ such that $\tilde F \circ \pi = F$. 

\begin{remark}
\label{rmk:LipschitzConstants}
Since $M$ is compact and $F$ is smooth, there exist constants $A_1, A_2$ such that $F$ is $A_1$-Lipschitz and $\Grad{g}F$ is $A_2$-Lipschitz. The same Lipschitz constants are valid for $\tilde F$ and its gradient, as it will be shown in \cref{prop:preservationofLipschitz}. We further assume that $\nabla^2 \tilde F$ is $A_3$-Lipschitz.
\end{remark}

The following assumption is made for the sake of simplicity, and without loss of generality.  
\begin{assumption}
\label{assumption3.2} 
$\min_{x \in M} F(x) = 0.$
\end{assumption} 
 
Our derivation of a Poincaré inequality on $M/G$ requires that $\tilde F$ has a unique (global) minimum. We believe this limitation is not fundamental, but an artifact of our construction; results qualitatively similar to ours should hold when $\tilde F$ has multiple minima. For $M = \bb{R}^n$, it has been shown that this difficulty can be overcome. In this context, the Langevin dynamics converge fast to a mixture of Gibbs distributions concentrated around the various minima of $\tilde F$ \cite{menz2014}. One avenue to deal with local minima is to turn them into saddle points by overparametrization \cite{safran2021effects,fukumizu2019semi}. Studying functions $\tilde F: M/G \to \bb{R}$ with multiple minima is the subject of ongoing work and will be discussed elsewhere. 
\begin{assumption}
\label{assumption3.3} 
$\tilde F$ has a unique minimum.
\end{assumption}
This assumption implies that $F$ has no non-global local minima, and all of the global minima of $F$ lie in a single fiber of $\pi$. For simplicity, we will use the term \textit{saddle point} for any critical point that is not the global minimum of $\tilde F$. With a slight abuse of terminology, local maxima will also be called saddle points in this work, as their treatment in the proofs is the same.

Let $\tilde{\mathcal{C}}$ and $\tilde{\mathcal{S}}$ denote the set of critical and saddle points of $\tilde F$, respectively. We need to assume that they are isolated and sufficiently spaced.
\begin{assumption}[Critical point spacing]
\label{assumption3.4} 
The critical points of $\tilde F$ are isolated points separated by some distance $D > 0$, i.e. 
\[
d_h(x_1, x_2) \geq D, \quad \forall x_1, x_2 \in \tilde{\mathcal{C}},\ x_1 \neq x_2,
\]
where $d_h$ denotes the geodesic distance on $(M/G, h)$. 
\end{assumption}

Again, this assumption is made for technical reasons but we believe it should eventually be possible to remove it. It is meant to control the behavior of $X_t$ when it escapes from saddle points. 

One could argue that \cref{assumption3.4} is generically satisfied. In fact, Morse functions---those for which the Hessian's kernel is zero at every critical point---defined on a compact manifold have a finite number of critical points \cite[Corollary 4.6]{cheeger1975comparison}. Moreover, Morse functions form a dense open subset of $C^\infty(M)$ when endowed with the strong topology (cf. \cite[Theorem 1.2, Chapter 6]{hirsch2012differential}). 

The next assumption on $\tilde F$ is standard in the analysis of non-convex optimization and sampling algorithms, both in Euclidean space and on Riemannian manifolds, and we believe that it is necessary to guarantee rapid mixing. 

\begin{assumption}
\label{assumption3.5.1}
For every saddle point $y \in \tilde{\mathcal{S}}$, there exists an escape direction. That is, there exists some constant $\lambda_* \in (0, 1]$ such that for every saddle point $y \in \tilde{\mathcal{S}}$ it holds that 
\[
\lambda_{\min}\big(\nabla^2 \tilde F(y)\big) \leq -\lambda_*.
\]
\end{assumption}

\begin{assumption}
\label{assumption3.5.2}
The global minimum is an ``attractor'' for the process $\tilde X_t$. In other words, the Hessian of $\tilde F$ at the global minimum is non-degenerate. More precisely, at the unique minimum $x^*$ of $\tilde F$, we have
\[
\lambda_{\min}\big(\nabla^2 \tilde{F}(x^*)\big) \geq \lambda_*.
\]
where $\lambda_*$ is the same constant as in \cref{assumption3.5.1}. 
\end{assumption}

\begin{remark}
Assumptions \ref{assumption3.5.1} and \ref{assumption3.5.2} hold if
\begin{enumerate}
    \item For every saddle point $y$ of $F$, $\lambda_{\min}\big(\nabla^2 F(y)\big) \leq -\lambda_*$.
    \item For every $x \in \pi^{-1}(x^*)$, and every $v \in \ker\big(\nabla^2 F(x)\big)$,  $v \in \ker(d\pi|_x)$. Furthermore, for every eigenvector $v \in \ker\big(\nabla^2 F(x)\big)^\bot$, $\nabla^2 F(x)[v] = \lambda_v v$, where $\lambda_v \geq \lambda_*$. 
\end{enumerate}
\end{remark}

As we will show in \cref{prop:projectedHessian}, the non-zero eigenvalues of $\nabla^2 F$ at any critical point are equal to those of $\nabla^2 \tilde F$. 

While the above assumption on the Hessian of $\tilde F$ allows us to escape saddle points, we also need to ensure that the process $\tilde X_t$ \textit{moves} sufficiently \textit{fast} whenever it is sufficiently far from the critical points. For that, we would like the evolution of $\tilde X_t$ (\cref{eq:defprocesos}) to be dominated by the gradient term. This requires the norm of the latter to be large enough away from $\tilde{\mathcal{C}}$. 

\begin{assumption}[No barren plateaus]
\label{assumption3.6} 
There exists a constant $0 < C_{\tilde F} \leq 1$ such that
\begin{equation}
\label{eq:boundgradientdistance}
|\textup{grad}_h\, \tilde F(x)|_{h} \geq C_{\tilde F} d_{h}(x, \tilde{\mathcal{C}}),    
\end{equation}
for every $x \in M/G$. 
\end{assumption}

\begin{remark}
This assumption holds whenever
 \[
|\textup{grad}_g\, F(x)|_g \geq C_{\tilde F} d_g(x, \mathcal{C}),
\]
for every $x \in M$, where $\mathcal{C}$ denotes the set of critical points of $F$. Indeed, the norms of $\Grad{g} F$ and $\Grad{h} \tilde F$ are equal (cf.  \cref{prop:gradientprojectedfunction}), and distances between points do not increase under Riemannian submersions. 
\end{remark}

Most of the assumptions required for our convergence analysis are standard in the analysis of non-convex optimization and sampling algorithms, both in Euclidean space and on Riemannian manifolds. In particular, assumptions guaranteeing the existence of escape directions at saddle points, together with lower bounds on the gradient norm away from critical points, are commonly used in the analysis of gradient-based methods; see for example \cite{jin2017escapesaddlepointsefficiently,chi2019,zhang2022symmetrygeometrytractablenonconvex} and the references therein. 

A natural question is whether these assumptions are satisfied for problems of practical interest. A substantial body of work has shown that many optimization problems possess a favorable geometric landscape, characterized by the absence of spurious local minima and the presence of saddle points with escape directions. Such results have been established for numerous non-convex optimization problems in Euclidean space, including matrix sensing, matrix completion, phase retrieval, low-rank matrix factorization, and neural network training; see for example \cite{luo2022nonconvexmatrixfactorizationgeodesically,chi2019,sun2018geometric,sun2015complete,ge2016matrix,bhojanapalli2016global,li2019non} and the references therein. Finally, arguments for escaping saddle points in optimization algorithms on Riemannian manifolds can be found in \cite{sun2019escaping,criscitiello2019efficiently,hou2021fastglobalconvergencelowrank}.

We are now in a position to state the two main results of this work. For $\beta$ large enough, we have
\begin{itemize}
    \item Rapid convergence of the distributions of $X_t$ and $\tilde X_t$ to their respective Gibbs distributions. 
    \item Concentration of the Gibbs distributions around the minima of $F$ and $\tilde F$. 
\end{itemize}

\begin{theorem}[Main result 1 - informal version]
\label{thm:MainInformal1}
Let $(M, g)$ be a compact Riemannian manifold of dimension $\dim(M) \geq 2$. Assume that there exists a group $G$ acting on $M$ and satisfying \cref{assumption3.7}. Let $\pi : (M, g) \to (M/G, h)$ be the associated Riemannian submersion, where $M$ and $M/G$ satisfy \cref{assumption3.8,assumption3.9} with constants $\mathbf{K}$, $R_{M}$ and $R_{M/G}$, and $M/G$ satisfies \cref{assumptionsym}. Let $F: M \to \bb{R}$ be a smooth function satisfying \crefrange{assumption3.1}{assumption3.2} and let $\tilde F: M/G \to \bb{R}$ be the unique function such that $F = \tilde F \circ \pi$. Further assume that $\tilde F$ satisfies \crefrange{assumption3.3}{assumption3.6}. Let $\beta > 0$ be such that
\[
\beta \in \Omega\Big(\textup{poly}\Big(R_{M/G}, \dim(M), A_2, A_3, \mathbf{K}, \frac{1}{i(M/G)}, \frac{1}{\mathit{conv}(M/G)}, \frac{1}{\lambda_*}, \frac{1}{D}, \frac{1}{C_{\tilde F}}\Big)\Big),
\]
where $C_{\tilde F}$ is the constant from \cref{eq:boundgradientdistance}, $\mathit{conv}(M/G)$ denotes the convexity radius of $(M/G, h)$, $i(M/G)$ denotes the injectivity radius of $(M/G, h)$, $D$ is the lower bound on the distance between any two critical points of $\tilde F$, $\lambda_*$ is the bound on the eigenvalues of $\nabla^2 \tilde F$ at the critical points, $A_2 \geq 1$ is the Lipschitz constant of the gradient of $F$ and $\tilde F$, and $A_3 \geq 1$ is the Lipschitz constant of the Hessian of $\tilde F$.

Then  the distributions $\rho_t$ and $\tilde \rho_t$ of the processes $X_t$ and $\tilde X_t$ solving the SDEs
\[
dX_t = -\textup{grad}_g\, F(X_t) dt + \sqrt{\frac{2}{\beta}} dW_t, \quad \textup{and} \quad d\tilde X_t = -\textup{grad}_h\, \tilde F(\tilde X_t) dt + \sqrt{\frac{2}{\beta}} d\tilde W_t,
\]
with initial uniform distribution on $M$ and $M/G$, respectively, converge exponentially fast to their respective Gibbs distributions $\nu$, and $\tilde \nu$, i.e. 
\begin{align*}
\norm{\nu - \rho_t}^2_{\textup{TV}} &\leq  \beta\, e^{-2\alpha t} \max_{y \in M} F(y), \quad \forall t \geq 0,\\
\norm{\tilde \nu - \tilde \rho_t}^2_{\textup{TV}} &\leq  \beta\, e^{-2\tilde \alpha t} \max_{y \in M/G} \tilde F(y) = \beta\, e^{-2\tilde \alpha t} \max_{y \in M} F(y) , \quad \forall t \geq 0,
\end{align*}
where the constants $\alpha$, and $\tilde \alpha$ correspond to the log-Sobolev inequality constants associated with the manifolds $M$ and $M/G$, which in particular scale as 
\begin{align*}
\frac{1}{\alpha} &\in O\Big(\textup{poly}\Big(\beta, A_2, R_M, \textup{diam}(M), \textup{diam}(G), \frac{1}{\lambda_*}\Big)\Big),\\
\frac{1}{\tilde \alpha} &\in O\Big(\textup{poly}\Big(\beta, A_2, R_{M/G}, \textup{diam}(M/G), \frac{1}{\lambda_*}\Big)\Big).
\end{align*}
where $\textup{diam}(G)$ denotes the diameter of $G$ as a fiber of $\pi$ and $\textup{diam}(M)$, $\textup{diam}(M/G)$ denote the diameter of $(M, g)$ and $(M/G, h)$. 
\end{theorem}

\begin{theorem}[Main result 2 - informal version]
\label{thm:MainInformal2}
Let $(M, g)$ be a compact Riemannian manifold of dimension $\dim(M) \geq 2$. Let $F: M \to \bb{R}$ be twice differentiable with an $A_2$-Lipschitz gradient, $A_2 \geq 1$. Let 
\[
\varepsilon_{\max} \in O(i(M)^2 A_2),
\]
where $i(M)$ denotes the injectivity radius of $M$. For every $\varepsilon \in (0, \varepsilon_{\max}]$ and $\delta \in (0, 1)$, if 
\begin{align*}
\beta &\in \Omega\bigg(\frac{\dim(M)^2\log (A_2, \textup{Vol}(M),\dim(M), \varepsilon^{-1}, \delta^{-1})}{\varepsilon}\bigg),
\end{align*}
the Gibbs distribution $\nu = \frac{1}{Z}e^{-\beta F}$ satisfies 
\[
\nu\left(F - \min_{y \in M} F(y) \geq \varepsilon\right) \leq \delta.
\]
\end{theorem}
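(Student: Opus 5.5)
The plan is to bound the Gibbs mass of the sublevel complement directly: an upper bound for the numerator of $\nu(F - \min F \geq \varepsilon)$, and a lower bound for the partition function $Z$ obtained by integrating over a small geodesic ball around a global minimizer. Without loss of generality (replacing $F$ by $F-\min F$, which leaves $\nu$ unchanged), assume $\min_M F = 0$, attained at some $x^* \in M$. Then
\[
\nu(F \geq \varepsilon) = \frac{1}{Z}\int_{\{F\geq \varepsilon\}} e^{-\beta F}\,\textup{dVol}_g \leq \frac{\textup{Vol}(M)\, e^{-\beta\varepsilon}}{Z},
\]
so the whole task reduces to showing that $Z$ is not much smaller than $e^{-\beta\varepsilon/2}$ times a controlled volume factor.

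\textbf{Quadratic growth near $x^*$.} For $Z$, first I bound $F$ quadratically near $x^*$. Let $y \in B(x^*, r)$ with $r < i(M)$. Take the unit-speed minimizing geodesic $\gamma$ from $x^*$ to $y$ and set $\varphi(t) = F(\gamma(t))$. Then $\varphi(0)=0$ and $\varphi'(0) = \langle \Grad{g} F(x^*), \dot\gamma(0)\rangle = 0$. Since $\Grad{g}F$ is $A_2$-Lipschitz (in the sense of parallel transport along $\gamma$), $|\varphi'(t)| \leq A_2 t$. Hence $F(y) \leq \tfrac{A_2}{2} d_g(x^*,y)^2 \leq \tfrac{A_2}{2} r^2$.

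\textbf{Choice of radius.} Choose $r = \sqrt{\varepsilon/A_2}$. The hypothesis $\varepsilon \leq \varepsilon_{\max} \in O(i(M)^2 A_2)$, with a suitable constant, ensures $r \leq i(M)/2$. This gives
\[
Z \geq e^{-\beta\varepsilon/2}\,\textup{Vol}\big(B(x^*,r)\big).
\]

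\textbf{Volume of small balls.} Next I need a lower bound on $\textup{Vol}(B(x^*,r))$ that holds for $r \leq i(M)/2$ and depends only on $r$ and $n=\dim(M)$, with no curvature dependence. Croke's inequality provides exactly this: $\textup{Vol}(B(x,r)) \geq c_n r^n$, with an explicit dimensional constant $c_n$ satisfying $\log(1/c_n) \in O(n\log n)$. I expect this to be the main delicate point, since one must make sure the constant in $\varepsilon_{\max}$ places $r$ inside Croke's range, and must track $\log(1/c_n)$ explicitly. If a cruder substitute is used instead (e.g.\ comparing with a Euclidean ball in normal coordinates), one may lose an extra factor of $n$, which is compatible with the $\dim(M)^2$ in the statement.

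\textbf{Conclusion.} Combining the bounds,
\[
\nu(F\geq\varepsilon) \leq \frac{\textup{Vol}(M)}{c_n}\Big(\frac{A_2}{\varepsilon}\Big)^{n/2} e^{-\beta\varepsilon/2}.
\]
This is at most $\delta$ as soon as
\[
\beta \geq \frac{2}{\varepsilon}\Big(\log\textup{Vol}(M) + \tfrac{n}{2}\log\tfrac{A_2}{\varepsilon} + \log\tfrac{1}{c_n} + \log\tfrac{1}{\delta}\Big).
\]
The right-hand side is $O\big(\varepsilon^{-1} n\log n \cdot \log(A_2,\textup{Vol}(M),n,\varepsilon^{-1},\delta^{-1})\big)$, which is dominated by the stated condition $\beta \in \Omega\big(\dim(M)^2\log(\cdot)/\varepsilon\big)$.
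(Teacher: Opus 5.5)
Your argument is correct. It shares the paper's skeleton: the numerator is bounded by $\textup{Vol}(M)e^{-\beta\varepsilon}$, you use the quadratic bound $F\le\frac{A_2}{2}d_g(x^*,\cdot)^2$ near a minimizer, and you apply Croke's curvature-free estimate on a geodesic ball of radius at most $i(M)/2$. Where you differ is in how the ball contributes to $Z$.

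The paper keeps the weight $e^{-\beta A_2 d_g(x^*,x)^2/2}$ on the ball of radius $\sqrt{2\varepsilon/A_2}$. It integrates this weight in geodesic polar coordinates using a sphere-area form of Croke's bound, which gives a Laplace-type factor of order $(\beta A_2)^{-d/2}$, with $d=\dim(M)$. This leads to a condition of the form $\beta^{d/2}e^{-\beta\varepsilon}C\le\delta$. The paper resolves it using:
\begin{itemize}
\item the side conditions $\beta\ge\frac{1+2\log 2}{2\varepsilon}$ and $\beta\ge d/\varepsilon$;
\item a monotonicity argument for $\frac{\beta\varepsilon}{2}-\frac{d}{2}\log\beta$, so half of $\beta\varepsilon$ is spent there anyway;
\item Gamma-function estimates.
\end{itemize}

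You instead use the crude bound $F\le\varepsilon/2$ on the smaller ball of radius $\sqrt{\varepsilon/A_2}$. You give up half of the exponent at the outset, but the volume factor $c_n(\varepsilon/A_2)^{n/2}$ no longer depends on $\beta$. The threshold for $\beta$ is therefore explicit and linear, and you need Croke's inequality only in its original ball-volume form. Both routes end at $\beta\gtrsim\frac{\dim(M)}{\varepsilon}\log(\cdots)$, which is a factor $\dim(M)$ better than the informal statement requires, so your simplification costs nothing.

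One caveat about your aside: comparing with Euclidean balls in normal coordinates is not a usable fallback here. A pointwise lower bound on the volume density in normal coordinates needs an upper sectional-curvature bound, and that would bring into the estimate a curvature constant the statement does not allow. Croke's inequality is genuinely the ingredient that makes the bound curvature-free.
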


The above results can be of interest when considering a family of minimization problems defined on a sequence of manifolds with growing dimension. It is then natural to wonder how $\beta$ and the convergence rates $\alpha, \tilde\alpha$ scale with the dimension of the manifold on which $F$ is defined. 
\begin{remark}
As a consequence of \cref{thm:MainInformal1}, if 
\[
\beta \in \Omega(\textup{poly}(\dim(M))),
\]
i.e. if the constants that lower bound $\beta$ grow at most polynomially with respect to the dimension of $M$, assuming furthermore that the diameter of $M, M/G$ and $G$, as well as the lower bound on the Ricci curvature of $M$ grow at most polynomially with respect to the dimension of $M$, then the log-Sobolev constants satisfy
\[
\frac{1}{\alpha} \in O(\textup{poly}(\dim(M))),\quad \textup{and} \quad \frac{1}{\tilde \alpha} \in O(\textup{poly}(\dim(M))).
\]

In particular, whenever $\max_{y \in M} F$ scales as $\textup{poly}(\dim(M))$, the distance between the distribution of $X_t$ and $\nu$ decays exponentially at a rate $\Omega\Big(\frac{1}{\textup{poly}(\dim(M))}\Big)$. Similarly, the distance between the distribution of $\tilde X_t$ and $\tilde \nu$ decays exponentially at a rate $\Omega\Big(\frac{1}{\textup{poly}(\dim(M/G))}\Big)$. 
\end{remark}

\subsection{Structure of the text}

The structure of the next five sections reflects our proof strategy. 
\begin{itemize}
    \item \cref{SectionPI}. We will prove that, under slightly milder assumptions than those considered for \cref{thm:MainInformal1}, the base space $M/G$ satisfies a Poincaré inequality with a constant that only depends on the bound on the eigenvalues of $\nabla^2 \tilde F$ at its critical points, $\lambda_*$ (cf. \cref{assumption3.5.1,assumption3.5.2}).
    \item \cref{SectionLift}. We will show how to \textit{lift} a Poincaré inequality from the base space to the total space of a Riemannian submersion $\pi$, when the Gibbs measure on $M$ is defined with respect to a function $F$ that is constant in the fibers of $\pi$, and the Gibbs measure on $M/G$ is defined with respect to the \textit{projected version} of $F$. This result will allow us to obtain a Poincaré inequality for $M$. 
    \item  \cref{SectionPItoLSI}. After a short review on classic results which relate the existence of a log-Sobolev inequality to mixing time estimates, we will show how to \textit{upgrade} a Poincaré inequality to a log-Sobolev inequality under the existence of a \textit{curvature-dimension condition}. This will ultimately allow us to prove the existence of a logarithmic Sobolev inequality for both $M/G$ and $M$, which will in turn translate to upper bounds for the distance between the distribution of the processes $X_t$ and $\tilde X_t$ (cf. \cref{eq:defprocesos}) and their associated Gibbs distributions. 
    \item \cref{sec:proof}. We will formalize and prove \cref{thm:MainInformal1}, regarding convergence rates of the processes $X_t$ and $\tilde X_t$.
    \item \cref{SectionSuboptimality}. We will formalize and prove \cref{thm:MainInformal2}, regarding the concentration of the Gibbs distributions $\nu$ and $\tilde \nu$. 
\end{itemize}

\cref{sec:traceratio} is an analysis of two scenarios in which most of the assumptions of \cref{thm:MainInformal1,thm:MainInformal2} can be easily verified, namely the trace quotient minimization problem, and the mean-field energy minimization problem associated with the two-dimensional ferromagnetic Ising model. 

In an effort to make our work as self-contained (and accessible) as possible, background is provided on Riemannian geometry in \cref{SectionExamples}, Sobolev spaces in \cref{sec:obolevspaces}, and Bakry-Émery theory in \cref{BakryEmeryandLyapunov}. \cref{sec:LaplacianAndGradientBound} is a discussion of a technical aspect of \cref{SectionPI}.

%% file: Chapters/PoincareIneq.tex
\section{Poincaré inequality under unique minimum}
\label{SectionPI}

This section is devoted to the main element in the proof of the rapid mixing result stated in \cref{thm:MainInformal1}: obtaining a suitable Poincaré inequality in the base space of the Riemannian submersion. 

\begin{definition}
\label{def:Langevingenerator}
Let $(M, g)$ be a Riemannian manifold, $F \in C^\infty(M)$, and $\beta > 0$. The \textup{generator of the Langevin dynamics} $\operatorname{L}$ is the operator acting on\footnote{In fact, $\operatorname{L}$ can be extended to act on a broader subset of functions of $L^2(M, \nu)$ via its \textit{closure} \cite{wang2006functional}.} $C^2(M)$ as 
\[
\operatorname{L}f := \langle -\Grad{g} F, \Grad{g}f\rangle_g + \frac{1}{\beta} \Delta_g f.
\]
\end{definition}

We will follow the notation and definitions of \cite{revuz2013continuous}. For an in-depth study of stochastic processes on Riemannian manifolds, see \cite{hsustochastic}.

\begin{definition}
Let $(\Omega, \mathcal{F}, \bb{P})$ be a probability space and $M$ be a smooth manifold. A \textup{continuous-time stochastic process} in $M$ is a family of random variables 
\[
\{Y_t: t \in \bb{R}^+\},
\]
where $Y_t : \Omega \to M$ for every $t \in \bb{R}^+$.
\end{definition}

\begin{definition}
\label{def:LangevinDiffusion}
Given some filtered probability space $(\Omega, \mathcal{F})$ and compact Riemannian manifold $(M,g)$, the \textup{Langevin diffusion process} $X_t$ is the unique $\mathcal{F}$-adapted semimartingale on $M$ such that 
\begin{equation}
\label{martingaleproblem}
M^f_t := f(X_t) - f(X_0) - \int_0^t \operatorname{L}f(X_s)ds,
\end{equation}
is an $\mathcal{F}$-adapted local martingale for all $f \in C^2(M)$.
\end{definition}

The existence and uniqueness of such a process with a given initial distribution is proven in \cite[Theorem 1.3.4 and 1.3.6]{hsustochastic}. We often say that $X_t$ is generated by $\operatorname{L}$ or that $X_t$ \textit{solves the martingale problem} associated with $\operatorname{L}$. The formal SDE associated with this process is
\begin{equation}
\label{LangevinDiffusionEq}
dX_t = -\Grad{g}F(X_t)\,dt + \sqrt{\frac{2}{\beta}}\,dW_t,
\end{equation}
where $W_t$ is a Brownian motion on $M$ and $X_0$ is initialized with a distribution $\rho_0$ supported on $M$.

Poincaré inequalities involve the notions of carré du champ operator and Markov triple. 

\begin{definition}[Carré du champ operator]
\label{def:carreduchampLangevin}
Let $(M, g)$ be a compact manifold, and let $\operatorname{L}$ be as given in \cref{def:Langevingenerator}. We define the \textup{carré du champ operator} $\Gamma$ on $C^2(M) \times C^2(M)$ as
\[
\Gamma(f_1,f_2) := \frac{1}{2}(\operatorname{L}(f_1f_2) - f_1\operatorname{L}f_2 - f_2\operatorname{L}f_1).
\]
\end{definition}

\begin{notation}
We will often use $\Gamma(f)$ to denote $\Gamma(f,f)$. 
\end{notation}

\begin{remark}
\label{rem:explicitexprLangevin}
The carré du champ operator associated with $\operatorname{L}$ as given in \cref{def:Langevingenerator} can be written as
\[
\Gamma(f_1, f_2) = \frac{1}{\beta}\langle \Grad{g}f_1, \Grad{g}f_2\rangle_g,    
\]
for any $f_1, f_2 \in C^2(M)$, and so 
\[
\Gamma(f) = \frac{1}{\beta}|\Grad{g}f|^2_g,
\]
for any $f \in C^2(M)$.
\end{remark}
\begin{proof}
The result automatically follows, using the well-known identities
\[
\Delta_g(f_1f_2) = f_1\Delta_g f_2 + f_2\Delta_g f_1 + 2 \langle \Grad{g}f_1, \Grad{g}f_2\rangle_g,
\]
and
\[
\Grad{g}(f_1f_2) = f_1\,\Grad{g}f_2 + f_2\, \Grad{g}f_1. 
\]
\end{proof}

\begin{definition}[Markov triple]
\label{def:markovtripleLangevin}
Let $(M, g)$ be a compact manifold, $F \in C^\infty(M)$, ${\beta >0}$, and $\operatorname{L}$ be the associated Langevin diffusion generator. The \textup{Markov triple} $(M, \nu, \Gamma)$ associated with $\operatorname{L}$ consists of the Gibbs measure 
\[
d\nu := \frac{1}{Z}e^{-\beta F} \textup{dVol}_g,
\]
where $Z$ is the normalizing constant, $Z := \int_M e^{-\beta F} \textup{dVol}_g$, and $\Gamma$ is the carré du champ operator associated with $\operatorname{L}$. 
\end{definition}

\begin{definition}[Poincaré inequality]
\label{defPoincareIneq}
Let $(M, g)$ be a compact Riemannian manifold, $F \in C^\infty(M)$, and $\beta >0$. The Markov triple $(M, \nu, \Gamma)$ satisfies a \textup{Poincaré inequality} with constant $\kappa > 0$, denoted $\textup{PI}(\kappa)$, if for every $f \in C^2(M)$,
\[
\int_M f^2 \,d\nu - \Big(\int_M f \,d\nu\Big)^2 \leq \frac{1}{\kappa} \int_M \Gamma(f) \, d\nu.
\]
\end{definition}

The Poincaré inequality constant $\kappa$ controls the rate of convergence of the Langevin diffusion process \cite[Chapter 4]{bakry2013analysis}. Bounding this constant is also the first step toward obtaining a log-Sobolev inequality. 

As we mentioned earlier, in this work we consider two Riemannian manifolds $(M, g)$ and $(B, h)$, where we assume that $(B, h)$ is a symmetric space. We further assume that there exists a---surjective---Riemannian submersion with totally geodesic fibers, $\pi: (M, g) \to (B, h)$. Note that, as we are assuming that $M$ is compact, $B$ must be compact as well. Now let $F: M \to \bb{R}$ be a smooth function that is constant in the fibers of $\pi$. The existence of such a function $F$ is equivalent to the existence of a function $\tilde F : B \to \bb{R}$ satisfying $F = \tilde F \circ \pi$. It is known that $\tilde F$ is smooth if and only if $F$ is smooth \cite[Theorem 9.21]{boumal2022intromanifolds}. Therefore, for such fiber-invariant function $F$, the Markov triple $(M, \nu, \Gamma)$ naturally induces another $(B, \tilde \nu, \tilde \Gamma)$ for $\tilde F$ and vice versa, where of course
\begin{equation}
\label{eq:defMarkovB}
\tilde{\operatorname{L}} := -\textup{grad}_h \tilde F + \frac{1}{\beta} \Delta_h,\quad
\tilde\Gamma(f) = \frac{1}{\beta}|\Grad{h}f|^2_h,\quad \textup{and}\quad 
d\tilde \nu = \frac{1}{\tilde Z} e^{-\beta \tilde F}\textup{dVol}_h, 
\end{equation}
where $\tilde Z$ is the normalizing constant for $\tilde \nu$ on $(B, h)$. 

As alluded in the previous section, when attempting to prove \textit{rapid mixing} for a Langevin process, the multiplicity of minima turns out to be a serious technical obstruction. If a submersion allows us to remove this multiplicity, it turns out that deriving a Poincaré inequality for the Markov triple $(B, \tilde \nu, \tilde \Gamma)$ is much simpler than for the original Markov triple $(M, \nu, \Gamma)$, hence our interest in the former. The problem of lifting such a Poincaré inequality to the latter Markov triple, and of upgrading it to a log-Sobolev inequality will be dealt with in \cref{SectionLift,SectionPItoLSI}, respectively.

To prove a Poincaré inequality for the Markov triple $(B, \tilde\nu, \tilde\Gamma)$, we will follow the steps of \cite[Appendix D]{LiErd2022Supp}. We will first define and prove the existence of two suitable Lyapunov functions. Next, we will show how these Lyapunov functions can be used to extend a \textit{local} Poincaré inequality, which is easy to obtain, to the whole manifold $B$. 

\subsection{A first Poincaré inequality}
\label{sec:introductionPI}

Let us restate the assumptions considered in \cref{sec:intro} that will be used in this section for completeness.
\begin{assumption}
\label{modassumption3.9}
There exists a constant $\mathbf{K} \geq 1$ such that for every $p\in B$, in normal coordinates centered at $p$, $|R_{ijk}^l(p)| \leq \mathbf{K}$ for every $i, j, k, l \in \{1, \dotsc, \dim(B)\}$, where $R$ denotes the Riemann curvature tensor of $(B, h)$. There also exist some constant $R_B \geq 0$ that lower bounds the Ricci curvature of $(B, h)$
\[
\textup{Ric}_h \geq -R_{B}\,h
\]
\end{assumption}

\begin{assumption}
\label{modassumptionsym}
$(B, h)$ is a symmetric space. 
\end{assumption}

\begin{assumption}
\label{modassumption3.1}
$F$ is constant in the fibers of $\pi$---i.e. $F(x) = F(y)$ for every $x, y \in M$ such that $\pi(x) = \pi(y)$. 
\end{assumption}

\begin{remark}
Since $M$ is compact and $F$ is smooth, we know that there exist constants $A_1, A_2$ such that $F$ is $A_1$-Lipschitz and $\Grad{g}F$ is $A_2$-Lipschitz. The same Lipschitz constants are valid for $\tilde F$ and its gradient, as it will be shown in \cref{prop:preservationofLipschitz}. As $B$ is also compact and $\tilde F$ is smooth, there exists some constant $A_3$ such that $\nabla^2 \tilde F$ is $A_3$-Lipschitz.
\end{remark}

\begin{assumption}
\label{modassumption3.2}
$\min_{x \in M} F(x) = 0$. 
\end{assumption} 

\begin{assumption}
\label{modassumption3.3} 
$\tilde F$ has a unique minimum.
\end{assumption}

\begin{assumption}
\label{modassumption3.4} 
The critical points of $\tilde F$ are isolated points separated by some distance $D > 0$, i.e. 
\[
d_h(x_1, x_2) \geq D, \quad \forall x_1, x_2 \in \tilde{\mathcal{C}},\ x_1 \neq x_2.
\]
\end{assumption}

\begin{assumption}
\label{modassumption3.5.1}
For every saddle point $y \in \tilde{\mathcal{S}}$, there exists an escape direction. That is, there exists some constant $\lambda_* \in (0, 1]$ such that for every saddle point $y \in \tilde{\mathcal{S}}$ it holds that 
\[
\lambda_{\min}\big(\nabla^2 \tilde F(y)\big) \leq -\lambda_* < 0.
\]
\end{assumption}

\begin{assumption}
\label{modassumption3.5.2}
The global minimum is an ``attractor'' for the process $\tilde X_t$. In other words, the Hessian of $\tilde F$ at the global minimum is non-degenerate. More precisely, at the unique minimum $x^*$ of $\tilde F$, we have
\[
\lambda_{\min}\big(\nabla^2 \tilde{F}(x^*)\big) \geq \lambda_*.
\]
where $\lambda_*$ is the same constant as in \cref{modassumption3.5.1}. 
\end{assumption}

\begin{assumption}
\label{modassumption3.6} 
There exists a constant $0 < C_{\tilde F} \leq 1$ such that
\begin{equation}
\label{eq9.41bis}
|\Grad{h}\tilde F(x)|_{h} \geq C_{\tilde F} d_{h}(x, \tilde{\mathcal{C}}),    
\end{equation}
for every $x \in B$. 
\end{assumption}

\begin{theorem}[Poincaré Inequality on $B$]
\label{prop9.12}    
Let $(B, h)$ be a compact Riemannian manifold satisfying \cref{modassumptionsym,modassumption3.9}. Let $\tilde F : B \to \bb{R}$ be a smooth function satisfying \crefrange{modassumption3.3}{modassumption3.6}. Further assume for simplicity that $\tilde F \geq 0$. Let $a, \beta > 0$ be such that
\begin{align*}
a^2 &\geq \max\Big\{\frac{24 A_2 \dim(B)}{C^2_{\tilde F}},  \frac{288}{\lambda_*}\Big\},
\\\beta &\geq \max\Bigg\{\frac{192^2 \dim(B)^5 A^2_2 A^2_3 \mathbf{K}^2 a^6}{\lambda_*^2}, \frac{9a^2}{D^2}, \frac{4a^2}{i(B)^2}, \frac{4R_{B}}{\lambda_*}, \frac{a^2}{\mathit{conv}(B)^2}\Bigg\},
\end{align*}
where $C_{\tilde F}$ is the constant from \cref{eq9.41bis}, $\mathit{conv}(B)$ denotes the convexity radius of $B$, $i(B)$ denotes the injectivity radius of $(B, h)$, $D$ is the lower bound on the distance between any two critical points of $\tilde F$, $\lambda_*$ is the bound on the eigenvalues of $\nabla^2 \tilde F$ at the critical points, and $A_2 \geq 1$ is the Lipschitz constant of the gradient of $F$ and $\tilde F$, and $A_3 \geq 1$ is the Lipschitz constants of the Hessian of $\tilde F$. 

Then  the Markov triple $(B, \tilde \nu, \tilde \Gamma)$ satisfies a $\textup{PI}(\kappa)$, where
\[
\kappa = \frac{\lambda_*}{120}.
\]
\end{theorem}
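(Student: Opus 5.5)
Following the strategy of \cite[Appendix D]{LiErd2022Supp}, we will combine Lyapunov functions with local Poincaré inequalities, in the spirit of Bakry--Barthe--Cattiaux--Guillin. We will construct a function $W \geq 1$ on $B$ and a set $U$ consisting of small geodesic balls around the critical points such that
\[
\tilde{\operatorname{L}} W \leq -\theta W + b\,\mathds{1}_U,
\]
and we will show that $\tilde\nu$ restricted to $U$ satisfies a local Poincaré inequality. The standard Lyapunov criterion then gives a global $\textup{PI}(\kappa)$ with $\kappa$ of order $\theta/(1+b\,C_U)$, where $C_U$ is the local Poincaré constant. Our aim is to make every constant a fixed multiple of $\lambda_*$, so that the final bound is $\lambda_*/120$.

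\textbf{Step 1: first Lyapunov function, away from critical points.} Take $W_1 = \exp(\beta \tilde F/2)$. A direct computation gives
\[
\tilde{\operatorname{L}} W_1 = \frac{\beta}{2}\Big(\frac{1}{\beta}\Delta_h \tilde F - \frac12 |\Grad{h}\tilde F|_h^2\Big) W_1 .
\]
We bound $|\Delta_h \tilde F| \leq \dim(B) A_2$ and use \cref{modassumption3.6} to get $|\Grad{h}\tilde F|^2_h \geq C_{\tilde F}^2 d_h(x,\tilde{\mathcal C})^2$. Then on the region $d_h(x,\tilde{\mathcal C}) \geq a/\sqrt\beta$ we obtain $\tilde{\operatorname{L}}W_1 \leq -c\,a^2 C^2_{\tilde F} W_1$. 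The choice $a^2 \geq 24A_2\dim(B)/C^2_{\tilde F}$ makes this at most $-\lambda_* W_1$ times a constant. The conditions $\beta \geq 9a^2/D^2$, $4a^2/i(B)^2$ and $a^2/\mathit{conv}(B)^2$ ensure that the balls $B(y,a/\sqrt\beta)$ around distinct critical points are disjoint, lie inside the injectivity radius, and are geodesically convex. This is what lets us use normal coordinates in those balls.

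\textbf{Step 2: Lyapunov function near saddle points (the main obstacle).} Near a saddle $y$, $W_1$ fails because the gradient is small. Instead, in normal coordinates at $y$ we take an explicit function adapted to the escape direction $v$ with $\nabla^2\tilde F(y)[v]=\lambda_{\min} v$, for instance
\[
W_2 = \exp\big(-c\beta\langle x, v\rangle^2\big)
\]
or a similar quadratic. We then compute $\tilde{\operatorname{L}}W_2$ using a Taylor expansion of $\Grad{h}\tilde F$ with a Hessian remainder governed by $A_3$. The hard part is controlling the metric coefficients $h_{ij}$ and $h^{ij}$ and the Christoffel symbols entering $\Delta_h$ in normal coordinates. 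Here we will invoke the symmetric-space bounds of \cref{sec:LaplacianAndGradientBound}, which are of size $\mathbf{K}\,\dim(B)^{O(1)}|x|^2$. This should yield $\tilde{\operatorname{L}}W_2 \leq -c'\lambda_* W_2$ on the annulus between the saddle and the radius $a/\sqrt\beta$. The large threshold $\beta \gtrsim \dim(B)^5A_2^2A_3^2\mathbf K^2a^6/\lambda_*^2$ is exactly what makes these errors at most a small fraction of $\lambda_*$, and $a^2 \geq 288/\lambda_*$ absorbs the diffusion term. We then glue $W_1$ and $W_2$, matching them on the boundary spheres up to constants, so that the only set left uncontrolled is a ball around the minimum $x^*$.

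\textbf{Step 3: local Poincaré inequality at the minimum and conclusion.} On $B(x^*, a/\sqrt\beta)$, \cref{modassumption3.5.2} together with the Lipschitz Hessian gives $\nabla^2\tilde F \geq \frac{\lambda_*}{2}h$. The Ricci bound $\beta \geq 4R_B/\lambda_*$ then gives the Bakry--Émery condition
\[
\nabla^2 \tilde F + \frac{1}{\beta}\textup{Ric}_h \geq \frac{\lambda_*}{4}h
\]
on this convex ball. Therefore the restricted measure satisfies a Poincaré inequality with constant of order $\lambda_*$. Feeding $\theta \sim \lambda_*$, $b$, and this local constant into the Lyapunov criterion, and tracking the numerical factors, should give $\kappa = \lambda_*/120$.
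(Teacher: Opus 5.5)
Your Steps 1 and 3, and your choice of $W_2$, match the paper. With $c=\lambda_*/4$ your $W_2$ is a constant multiple of $W_2^y$, which is harmless because only $\tilde{\operatorname{L}}W/W$ enters. The gap is the gluing in Step~2: you claim $W_1$ and $W_2$ can be ``matched on the boundary spheres up to constants'' to give one global Lyapunov function, and this does not work.

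\emph{No constant matching.} On a sphere of radius $\rho\sim a/\sqrt\beta$ around a saddle $y$, $\log W_1\approx\frac{\beta}{2}\tilde F(y)+\frac{\beta}{4}\nabla^2\tilde F(y)[x,x]$ varies by $\Theta(a^2)$, with a profile set by the whole Hessian. By contrast, $W_2$ depends only on $\langle x,v\rangle$. So no constant $C$ gives $W_1=CW_2$ there.

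\emph{The minimum fails.} The standard way to glue supersolutions, $\min(W_1,CW_2)$, needs $CW_2<W_1$ near the inner sphere and $CW_2>W_1$ near the outer sphere. Take a direction $u\perp v$ with $\nabla^2\tilde F(y)[u,u]>0$, which exists at any nondegenerate saddle that is not a local maximum. Along it $W_2\equiv1$, while $W_1$ increases with the radius to leading order, so both conditions cannot hold.

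\emph{Smooth interpolation fails.} For $\chi W_1+(1-\chi)CW_2$ with $|\Grad{h}\chi|_h\sim\sqrt\beta/a$, the cross term $\frac{2}{\beta}\langle\Grad{h}\chi,\Grad{h}W_1\rangle_h$ is of order $A_2W_1$ with no small factor. The rate $-\frac{\lambda_*}{8}W$ cannot absorb it.

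\emph{The constant would also be wrong.} Even granting a global $W$, you would have $\theta=\lambda_*/8$ from the saddle regions but $b$ of order $A_2\dim(B)$ from $W_1$ on $\mathcal{B}(\frac{a}{\sqrt\beta},x^*)$. The Lyapunov criterion then gives a constant of order $\lambda_*^2/(A_2\dim(B))$, not $\lambda_*/120$, unless $W_1$ is replaced by a small power of itself.

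The missing idea is to localize the test function rather than the Lyapunov function. The paper normalizes $\int_{G_3}f\,d\tilde\nu=0$ and takes the cutoff $\chi_\varepsilon$ of \cref{lem:defbumpfunction}. It vanishes on the balls of radius $\frac{a}{2\sqrt\beta}$ around the saddles, equals one outside the balls of radius $\frac{a}{\sqrt\beta}$, and satisfies $\|\tilde\Gamma(\chi_\varepsilon)\|_\infty\le\frac1\beta\big(\frac{2\sqrt\beta}{a}+\varepsilon\big)^2$. The paper then bounds $\int f^2\,d\tilde\nu\le2\int f^2(1-\chi_\varepsilon)^2\,d\tilde\nu+2\int f^2\chi_\varepsilon^2\,d\tilde\nu$.

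Each (quasi-)Lyapunov function is used only on its own domain: $W_2^y$ on each saddle ball for $f(1-\chi_\varepsilon)$, and $W_1$ off the inner balls for $f\chi_\varepsilon$. These products vanish on the relevant boundaries, so \cref{Neumannboundarycondition} applies with no boundary terms. The pointwise bound $\langle\Grad{h}(\phi^2/W),\Grad{h}W\rangle_h\le|\Grad{h}\phi|^2_h$ then turns each drift inequality into an energy bound. As a result, $W_1$ keeps its own ratio $b/\theta=3/2$, $W_2^y$ has $b=0$, and \cref{lemma9.9} enters only through $\frac32\int_{G_3}f^2\,d\tilde\nu$.

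The price is a term $\frac{16}{a^2}\big(\frac{8}{\lambda_*}+\frac{1}{A_2\dim(B)}\big)\int f^2\,d\tilde\nu$ coming from $\tilde\Gamma(\chi_\varepsilon)$. Absorbing it is what $a^2\ge288/\lambda_*$ is for; it is not there to absorb the diffusion term. In \cref{prop9.6} the diffusion contribution $\frac1\beta\big(|\Grad{h}\tilde r|^2_h+\tilde r\,\Delta_h\tilde r\big)\ge\frac{1}{2\beta}$ is exactly what produces the uniform rate $-\lambda_*/8$, since the drift gain $\lambda_v\tilde r^2$ vanishes on $\{\langle x,v\rangle=0\}$.

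Finally, the cutoff needs an annulus on which both functions are valid. So $W_1$ must be controlled already for $d_h(x,\tilde{\mathcal C})\ge\frac{a}{2\sqrt\beta}$, as in \cref{lemma9.8}, and not only for $d_h(x,\tilde{\mathcal C})\ge\frac{a}{\sqrt\beta}$ as in your Step~1.
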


Informally, we will see that, whenever the \textit{amount of noise} $\frac{1}{\beta}$ is \textit{sufficiently small}, we obtain a Poincaré inequality with a constant that only depends on the value of $\lambda_*$, which in turn controls \textit{how fast} $\tilde X_t$ escapes from the saddle points of $\tilde F$. The constant $a$ sets a lower bound on the values of $\beta$ for which we can claim that the Poincaré inequality holds. 

The rest of the section will be devoted to the proof of this result, which will be given in \cref{sec2.3}. 

\subsection{Lyapunov functions}
\label{sec2.2}

\begin{definition}[Lyapunov function]
Let $(M, g)$ be a Riemannian manifold, and let $\operatorname{L}$ denote the Langevin generator from \cref{def:Langevingenerator} for some $F \in C^\infty(M)$ and $\beta > 0$. A function $W : M \rightarrow \bb{R}$, $W \in C^2(M)$ is said to be a \textup{Lyapunov function} on a measurable subset $U \subset M$ with parameters $\theta > 0$ and $b \geq 0$ if $W \geq 1$ and 
\[
\frac{\operatorname{L}W(x)}{W(x)} \leq -\theta + b\mathds{1}_U(x), \quad \forall x \in M.
\]
\end{definition}

\begin{definition}[Quasi-Lyapunov function]
Let $(M, g)$ be a compact Riemannian manifold, and $\operatorname{L}$ be the Langevin generator. Let $U$ be some open subset of $M$. A function $W: \overline{U} \rightarrow \bb{R}$ is said to be a \textup{quasi-Lyapunov function} on an open subset $U \subset M$ with parameter $\theta > 0$ if $W \in C^2(\overline{U})$, $W \geq 1$ and 
\[
\operatorname{L}W(x) \leq -\theta W(x), \quad \forall x \in U.
\]
\end{definition}

As we mentioned earlier, we are interested in obtaining Lyapunov functions\footnote{Our definition of Lyapunov functions is not standard, as they are often defined with slight modifications---see, for example \cite{Bakry2008}.} defined on (submanifolds of) $(B, h)$ and associated with the operator
\[
\tilde{\operatorname{L}} = - \Grad{h}\tilde F + \frac{1}{\beta} \Delta_h,
\]
where $\tilde F$ is the function from \cref{prop9.12} satisfying \crefrange{modassumption3.3}{modassumption3.6}.

\subsubsection{First Lyapunov function}

To construct our first Lyapunov function, we will use the following result, proven in \cite{LiErd2022Supp} (Lemma D.8). We include the proof for completeness, as it provides insight into the role of \cref{modassumption3.6}.
\begin{lemma}
\label{lemma9.8}
Let $(B, h)$ be a compact Riemannian manifold. Let $\tilde F: B \to \bb{R}$ be smooth, and suppose that it satisfies \crefrange{modassumption3.3}{modassumption3.6}. Let $\tilde{\mathcal{C}}$ be the set of critical points of $\tilde F$. Then, for every $\beta > 0$ and every $a > 0$ satisfying 
\[
a^2 \geq \frac{24A_2 \dim(B)}{C^2_{\tilde F}},
\]
where $A_2$ is the Lipschitz constant of the gradient of $\tilde F$, it holds that
\[
\frac{\Delta_h \tilde F(x)}{2} - \frac{\beta}{4}|\Grad{h}\tilde F(x)|^2_h \leq -A_2 \dim(B),\quad \forall x \in B : d_h(x, \tilde{\mathcal{C}})^2 \geq \frac{a^2}{4\beta}. 
\]
\end{lemma}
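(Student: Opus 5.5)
The argument bounds the two terms separately: an upper bound on the Laplacian from the Lipschitz gradient, and a lower bound on the gradient norm from \cref{modassumption3.6}.

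\textbf{Laplacian term.} Since $\Grad{h}\tilde F$ is $A_2$-Lipschitz, the Hessian satisfies $\|\nabla^2\tilde F(x)\|_{\mathrm{op}} \le A_2$ at every $x\in B$. This can be justified by differentiating the gradient along geodesics and comparing with parallel transport. Taking the trace over an orthonormal basis of $T_xB$ gives
\[
\Delta_h \tilde F(x) = \operatorname{tr}\nabla^2\tilde F(x) \le A_2\dim(B),
\]
so $\frac{1}{2}\Delta_h\tilde F(x)\le \frac{1}{2}A_2\dim(B)$.

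\textbf{Gradient term.} Fix $x$ with $d_h(x,\tilde{\mathcal C})^2 \ge \frac{a^2}{4\beta}$. By \cref{modassumption3.6},
\[
|\Grad{h}\tilde F(x)|_h^2 \ge C_{\tilde F}^2\, d_h(x,\tilde{\mathcal C})^2 \ge \frac{C_{\tilde F}^2 a^2}{4\beta}.
\]
Therefore $\frac{\beta}{4}|\Grad{h}\tilde F(x)|_h^2 \ge \frac{C_{\tilde F}^2a^2}{16}$. The hypothesis $a^2\ge 24A_2\dim(B)/C_{\tilde F}^2$ then gives
\[
\frac{\beta}{4}|\Grad{h}\tilde F(x)|_h^2 \ge \frac{24}{16}A_2\dim(B) = \frac{3}{2}A_2\dim(B).
\]

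\textbf{Combining.} Adding the two bounds,
\[
\frac{\Delta_h\tilde F(x)}{2} - \frac{\beta}{4}|\Grad{h}\tilde F(x)|_h^2 \le \frac12 A_2\dim(B) - \frac32 A_2\dim(B) = -A_2\dim(B).
\]
The argument is uniform in $\beta>0$: $\beta$ cancels, because the region shrinks exactly as $\beta$ grows.

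\textbf{Main obstacle.} The only delicate step is the Hessian bound from the Lipschitz constant of the gradient on a manifold. The interpretation is that $|P_{\gamma}^{-1}\Grad{h}\tilde F(\gamma(1)) - \Grad{h}\tilde F(\gamma(0))|\le A_2\,L(\gamma)$ along geodesics, where $P_\gamma$ is parallel transport. Differentiating this at $t=0$ gives $|\nabla_v \Grad{h}\tilde F|\le A_2|v|$, which is the operator bound used above. If the paper's Lipschitz notion differs, I will invoke its equivalent form via the results in the geometry appendix.
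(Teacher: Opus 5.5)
Your proof is correct and matches the paper's argument: you bound $\frac{\beta}{4}|\Grad{h}\tilde F(x)|_h^2$ below by $\frac{3}{2}A_2\dim(B)$ using \cref{modassumption3.6} together with the hypothesis on $a$, and you bound $\frac{1}{2}\Delta_h\tilde F(x)$ above by $\frac{1}{2}A_2\dim(B)$ because the Lipschitz gradient controls the Hessian's operator norm. The Hessian bound you flag as the delicate step is exactly \cref{rmk:LipschitzGrad} in the appendix, which the paper uses implicitly.
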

\begin{proof}
Using \cref{modassumption3.6} it holds that 
\[
|\Grad{h}\tilde F(x)|_h \geq C_{\tilde F} d_h(x, \tilde{\mathcal{C}}). 
\]
Thus, for every $x \in B$ such that $d_h(x, \tilde{\mathcal{C}})^2 \geq \frac{a^2}{4\beta}$ we obtain that 
\begin{align*}
\frac{\Delta_h \tilde F(x)}{2} - \frac{\beta}{4}|\Grad{h}\tilde F(x)|^2_h  &\leq \frac{\Delta_h \tilde F(x)}{2} - \frac{\beta}{4}C_{\tilde F}^2 d_h(x, \tilde{\mathcal{C}})^2
\\&\leq \frac{\Delta_h \tilde F(x)}{2} - \frac{a^2}{16}C_{\tilde F}^2
\\&\leq \frac{A_2\dim(B)}{2} - \frac{1}{16}C_{\tilde F}^2 \frac{24A_2 \dim(B)}{C^2_{\tilde F}}
\\&\leq -A_2 \dim(B),
\end{align*}
where the second-last inequality follows by the assumption $a^2 \geq \frac{24A_2 \dim(B)}{C^2_{\tilde F}}$. 
\end{proof}

In fact, \cref{modassumption3.6} is only used in the proof of \cref{lemma9.8}. Therefore, this assumption can be relaxed; it need only hold for every $x \in B$ satisfying $d_h(x, \tilde{\mathcal{C}})^2 \geq \frac{a^2}{4\beta}$, for the appropriate choice of $a$ and $\beta$.

\begin{notation}
For all $r > 0$ and $X \subset B$, we will denote the set of points within distance $r$ to $X$ as $U(r,X)$, i.e.
\[
U(r,X) := \{x \in B : d_h(x, X) < r\}.
\]
\end{notation}

Let $\tilde{\mathcal{S}}$ denote the saddle points of $\tilde F$. We are going to prove that 
\begin{equation}
\label{eq:defW1}
W_1: \overline{U}\Big(\frac{a}{2\sqrt{\beta}}, \tilde{\mathcal{S}}\Big)^c \to \bb{R},\quad x \mapsto  \exp\left(\frac{\beta}{2}\tilde F(x)\right)
\end{equation}
is a Lyapunov function on the geodesic ball $\mathcal{B}(\frac{a}{\sqrt{\beta}}, x^*)$, where $x^*$ is the unique minimum of $\tilde F$, $\overline{U}$ denotes the closure of $U$ and $\cdot^c$ denotes the complement. 

\begin{proposition}
\label{lem:deflyapunov1}
Under the assumptions of \cref{lemma9.8}, and the condition $\tilde F \geq 0$, the function $W_1$ from \cref{eq:defW1} is a Lyapunov function on $\mathcal{B}(\frac{a}{\sqrt{\beta}}, x^*)$ with respect to the Langevin generator
\[
\tilde{\operatorname{L}} = - \Grad{h}\tilde F + \frac{1}{\beta} \Delta_h,
\]
whenever 
\[
\frac{a}{\sqrt{\beta}} \leq \frac{D}{3}, 
\]
where $D> 0$ is any lower bound on the least distance between any two critical points of $\tilde F$.
\end{proposition}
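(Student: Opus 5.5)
Compute $\tilde{\operatorname{L}}W_1/W_1$ explicitly and then split the domain of $W_1$ into two regions. For $W_1 = \exp(\frac{\beta}{2}\tilde F)$ we have $\Grad{h} W_1 = \frac{\beta}{2} W_1 \Grad{h}\tilde F$ and
\[
\Delta_h W_1 = W_1\Big(\frac{\beta}{2}\Delta_h \tilde F + \frac{\beta^2}{4}|\Grad{h}\tilde F|_h^2\Big).
\]
Hence
\[
\frac{\tilde{\operatorname{L}}W_1}{W_1} = -\frac{\beta}{2}|\Grad{h}\tilde F|^2_h + \frac{1}{2}\Delta_h\tilde F + \frac{\beta}{4}|\Grad{h}\tilde F|_h^2 = \frac{\Delta_h\tilde F}{2} - \frac{\beta}{4}|\Grad{h}\tilde F|^2_h,
\]
which is exactly the quantity controlled by \cref{lemma9.8}. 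Since $\tilde F \geq 0$, we also have $W_1 \geq 1$, and $W_1$ is smooth. So it only remains to verify the drift inequality with some $\theta>0$ and $b\geq 0$.

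First consider points $x$ in the domain with $d_h(x,\tilde{\mathcal C})^2 \geq \frac{a^2}{4\beta}$. There \cref{lemma9.8} gives $\tilde{\operatorname{L}}W_1/W_1 \leq -A_2\dim(B)$, so we take $\theta = A_2\dim(B)$. Now consider the remaining points, those with $d_h(x,\tilde{\mathcal C}) < \frac{a}{2\sqrt\beta}$. Since the domain excludes $\overline U(\frac{a}{2\sqrt\beta},\tilde{\mathcal S})$, such an $x$ must lie within distance $\frac{a}{2\sqrt\beta}$ of a critical point that is not a saddle, and by \cref{modassumption3.3} this point is $x^*$. (The spacing condition $\frac{a}{\sqrt\beta}\le \frac D3$ guarantees that the neighbourhoods of distinct critical points are disjoint, so the nearest critical point is unambiguous and the two regions are cleanly separated.) Thus $x\in \mathcal B(\frac{a}{2\sqrt\beta},x^*)\subset \mathcal B(\frac{a}{\sqrt\beta},x^*)$.

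On this ball we use the crude bound $\frac{\Delta_h\tilde F}{2}-\frac\beta4|\Grad{h}\tilde F|^2 \leq \frac{\Delta_h \tilde F}{2}\leq \frac{A_2\dim(B)}{2}$. This holds because $|\nabla^2\tilde F|\le A_2$ by the Lipschitz gradient, and the Laplacian is the trace of the Hessian. Setting $b = \frac{3}{2}A_2\dim(B)$ then gives $\tilde{\operatorname{L}}W_1/W_1 \le -\theta + b\mathds 1_{\mathcal B(a/\sqrt\beta,x^*)}$ everywhere on the domain.

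The only delicate step is the bookkeeping of the regions. We must check that every point of the domain not covered by \cref{lemma9.8} lies in the ball around $x^*$, and that the point $x^*$ itself is not excised from the domain. The latter uses $d_h(x^*,\tilde{\mathcal S})\ge D > \frac{a}{2\sqrt\beta}$, which again follows from the hypothesis $\frac{a}{\sqrt\beta}\le\frac D3$.
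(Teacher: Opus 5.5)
Your proposal is correct and follows essentially the same route as the paper: you use the identity $\tilde{\operatorname{L}}W_1/W_1 = \frac12\Delta_h\tilde F - \frac\beta4|\Grad{h}\tilde F|_h^2$, apply \cref{lemma9.8} away from $\tilde{\mathcal C}$, and use the trivial bound $\frac12\Delta_h\tilde F\le \frac{A_2\dim(B)}{2}$ near $x^*$, which gives the same constants $\theta = A_2\dim(B)$ and $b=\frac32 A_2\dim(B)$. The only cosmetic difference is that the paper uses $\frac{a}{\sqrt\beta}\le \frac D3$ to write the domain as $\mathcal{B}(\frac{a}{\sqrt{\beta}}, x^*) \cup \overline{U}(\frac{a}{2\sqrt{\beta}}, \tilde{\mathcal{C}})^c$; your case split, via $d_h(x,\tilde{\mathcal C}) = \min\{d_h(x,\tilde{\mathcal S}), d_h(x,x^*)\}$, does not actually need the spacing hypothesis at the step where you invoke it.
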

\begin{proof}
From the choice of $\frac{a}{\sqrt{\beta}}$, we can decompose
\[
\overline{U}\Big(\frac{a}{2\sqrt{\beta}}, \tilde{\mathcal{S}}\Big)^c = \mathcal{B}\Big(\frac{a}{\sqrt{\beta}}, x^*\Big) \cup \overline{U}\Big(\frac{a}{2\sqrt{\beta}}, \tilde{\mathcal{C}}\Big)^c.
\]
Now, let us write
\begin{align*}
\tilde{\operatorname{L}} W_1(x) &= \Big\langle -\Grad{h}\tilde F(x), \Grad{h}\exp\Big(\frac{\beta}{2}\tilde F(x)\Big)\Big\rangle_h + \frac{1}{\beta} \Delta_h \exp\left(\frac{\beta}{2}\tilde F(x)\right) 
\\&= -\frac{\beta}{2} W_1(x) |\Grad{h}\tilde F(x)|^2_h + \frac{1}{\beta} \left(\frac{\beta}{2}W_1(x) \Delta_h \tilde F(x) + \frac{\beta^2}{4} W_1(x) |\Grad{h}\tilde F(x)|^2_h\right)
\\&= W_1(x) \Big(\frac{1}{2} \Delta_h \tilde F(x) - \frac{\beta}{4} |\Grad{h}\tilde F(x)|^2_h\Big).
\end{align*}

We can now apply \cref{lemma9.8} and the trivial bound involving the Lipschitz constant $A_2$, 
\[
\frac{1}{2} \Delta_h \tilde F(x) - \frac{\beta}{4} |\Grad{h}\tilde F(x)|^2_h \leq \frac{1}{2}\Delta_h \tilde F(x) \leq \frac{\dim(B) A_2}{2},
\]
which holds for every $x \in B$, to conclude that
\begin{equation}
\label{lyapunovW1}
\frac{\tilde{\operatorname{L}} W_1(x)}{W_1(x)} \leq - A_2 \dim(B) + \frac{3}{2} A_2 \dim(B) \mathds{1}_{\mathcal{B}\big(\frac{a}{\sqrt{\beta}}, x^*\big)}(x),
\end{equation}
for every $x \in \overline{U}\Big(\frac{a}{2\sqrt{\beta}}, \tilde{\mathcal{S}}\Big)^c $. \cref{lyapunovW1} proves that $W_1$ is a Lyapunov function on $\mathcal{B}\Big(\frac{a}{\sqrt{\beta}}, x^*\Big)$, as desired.
\end{proof}

\subsubsection{Second Lyapunov function}
\label{sec:secondlyapunovfunction}

Now, for every saddle point $y$ of $\tilde F$, we can construct a quasi-Lyapunov function in a sufficiently small ball centered around it. In order to define these quasi-Lyapunov functions, we will make use of the following auxiliary function, which can be understood as a \textit{projected distance function}.  
\begin{definition}
\label{def:tilder}
Let $y \in B$ be a fixed saddle point of $\tilde F$, and let $v \in T_y B$ be a fixed unit vector. For every $x$ outside the cut locus of $y$, we define, in normal coordinates 
\[
\tilde{r}_{y, v}(x) := \langle v, \log_y x\rangle.
\]
\end{definition}

The main reason why $\tilde r_{y, v}$ will be helpful is because the gradient of $\tilde r_{y, v}$ is always \textit{in the direction} of $v$, which is key in the proof of \cref{prop9.6}. This is clearly not the case for the gradient of the distance function $x \mapsto d_h(x, y)$. We also observe that, by Cauchy-Schwarz inequality
\begin{equation}
\label{eq:ineqtilderd}
|\tilde r_{y, v}(x)| \leq |\log_y x|_h = d_h(x, y),
\end{equation}
for all $y \in B$ and any $x \in B$ outside the cut locus of $y$. 

The quasi-Lyapunov functions are defined as follows: let $y \in B$ be a fixed saddle point of $\tilde F$, let $v \in T_y B$ be a unit eigenvector of $\nabla^2 \tilde F(y)$ that corresponds to the minimum eigenvalue of $\nabla^2 \tilde F(y)$. By \cref{modassumption3.5.1} we know that $\nabla^2 \tilde F(y)[v, v] = -\lambda_v$, for some $\lambda_v \geq \lambda_*$. Given some $a, \beta \geq 1$ we define $W_2^y : \overline{\mathcal{B}(\frac{a}{\sqrt{\beta}}, y)} \to \bb{R}$ as 
\begin{equation}
\label{eq:definitionW2}
W_2^y(x) := \exp\Big\{\frac{\lambda_*}{4}(a^2 - \beta \tilde r_{y, v}(x)^2)\Big\}.    
\end{equation}

Let us now prove that $W_2^y$ is a quasi-Lyapunov function on $\mathcal{B}(\frac{a}{\sqrt{\beta}}, y)$. A crucial ingredient in the proof is the fact that $v$ is the eigenvector of $\nabla^2 \tilde F(y)$ associated with its minimum eigenvalue. Actually, the fact that $W_2^y$ is a quasi-Lyapunov function on $\mathcal{B}(\frac{a}{\sqrt{\beta}}, y)$ encodes that the Langevin process $\tilde X_t$ is able to escape \textit{fast} from the saddle points.

\begin{proposition}[Second Lyapunov function]
\label{prop9.6}
Let $(B, h)$ be a compact manifold which is furthermore a symmetric space. Assume that there exists a constant $\mathbf{K} \geq 1$ such that for every $p\in B$, in normal coordinates centered at $p$, $|R_{ijk}^l(p)| \leq \mathbf{K}$ for every $i, j, k, l \in \{1, \dotsc, \dim(B)\}$, where $R$ denotes the Riemann curvature tensor of $(B, h)$. Let $\tilde F$ be a smooth function on $B$ satisfying \cref{modassumption3.5.1}.

Let $a \geq 1$,  
\begin{align*}
\beta &\geq \max\Bigg\{192^2 \dim(B)^5 A^2_2 A^2_3 \mathbf{K}^2 a^6, \frac{4a^2}{i(B)^2}\Bigg\},
\end{align*}
where $A_2, A_3 \geq 1$ are the Lipschitz constants of the gradient and the Hessian of $\tilde F$, and $i(B)$ denotes the injectivity radius of $(B, h)$. Let $W_2^y$ be defined as in \cref{eq:definitionW2}. Then, $W_2^y$ is a quasi-Lyapunov function on $\mathcal{B}( \frac{a}{\sqrt{\beta}},y)$ with constant $\frac{\lambda_*}{8}$, i.e. $W_2^y \in C^{\infty}(\overline{\mathcal{B}(\frac{a}{\sqrt{\beta}},y)})$, $W^y_2 \geq 1$ and 
\[
(\tilde{\operatorname{L}} W_2^y)(x) \leq -\frac{\lambda_*}{8} W_2^y(x),
\]
for every $x \in \mathcal{B}(\frac{a}{\sqrt{\beta}},y)$.
\end{proposition}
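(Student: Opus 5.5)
Set $\phi(x) := \frac{\lambda_*}{4}(a^2 - \beta\tilde r(x)^2)$, with $\tilde r = \tilde r_{y,v}$, so that $W_2^y = e^{\phi}$. Since $\beta \geq 4a^2/i(B)^2$, the ball $\mathcal{B}(\frac{a}{\sqrt\beta},y)$ has radius at most $i(B)/2$. Its closure therefore avoids the cut locus of $y$, so $\log_y$ and $\tilde r$ are smooth there and $W_2^y \in C^\infty$. By \cref{eq:ineqtilderd}, $\beta\tilde r^2 \leq \beta d_h(x,y)^2 \leq a^2$, which gives $\phi \geq 0$ and hence $W_2^y \geq 1$. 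The same computation as for $W_1$ yields
\[
\frac{\tilde{\operatorname{L}}W_2^y}{W_2^y} = -\langle \Grad{h}\tilde F, \Grad{h}\phi\rangle_h + \frac1\beta\Delta_h\phi + \frac1\beta|\Grad{h}\phi|_h^2 ,
\]
where $\Grad{h}\phi = -\frac{\lambda_*\beta}{2}\tilde r\,\Grad{h}\tilde r$ and $\Delta_h\phi = -\frac{\lambda_*\beta}{2}\big(|\Grad{h}\tilde r|^2_h + \tilde r\,\Delta_h\tilde r\big)$. It then suffices to bound each term in normal coordinates centred at $y$. In these coordinates $\tilde r(x) = v^i x_i$ is linear, so $\partial\tilde r = v$, $\Grad{h}\tilde r = h^{ij}v_j$, and $\Delta_h\tilde r = -h^{jk}\Gamma^i_{jk}v_i$.

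The key steps, in order, are as follows.
\begin{enumerate}
\item Using \cref{modassumptionsym} together with the curvature bound $\mathbf{K}$ (via the estimates of \cref{sec:LaplacianAndGradientBound}), obtain $|h^{ij}(x)-\delta^{ij}| \lesssim \dim(B)\mathbf{K}|x|^2$ and $|\Gamma^i_{jk}(x)| \lesssim \dim(B)\mathbf{K}|x|$ on the ball. Symmetric spaces have parallel curvature, so the Jacobi-field expansion of $h$ is explicit in terms of $R(y)$.
\item Deduce $|\Grad{h}\tilde r|_h^2 \geq 1 - O(\dim(B)\mathbf{K} a^2/\beta)$. The Laplacian term then satisfies $\frac1\beta\Delta_h\phi \leq -\frac{\lambda_*}{2}(1-o(1)) + \frac{\lambda_*}{2}|\tilde r\,\Delta_h\tilde r|$, and $|\tilde r\,\Delta_h\tilde r| \lesssim \dim(B)^2\mathbf{K}\,a^2/\beta$. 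This is where the negativity comes from.
\item Handle the drift term $\frac{\lambda_*\beta}{2}\tilde r\langle\Grad{h}\tilde F,\Grad{h}\tilde r\rangle_h$. Taylor-expand $\Grad{h}\tilde F$ along the geodesic from $y$, using $\Grad{h}\tilde F(y)=0$ and $A_3$-Lipschitzness of the Hessian: $\Grad{h}\tilde F(x) = P_{y\to x}\nabla^2\tilde F(y)[\log_y x] + E$ with $|E| \leq \frac{A_3}{2}d^2$. Because $v$ is an eigenvector with eigenvalue $-\lambda_v$, the main part gives $\langle\nabla^2\tilde F(y)\log_y x, v\rangle = -\lambda_v\tilde r$, up to the discrepancy between parallel transport and $\Grad{h}\tilde r$, which is controlled by step~1 times $A_2 d$. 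Hence
\[
\frac{\lambda_*\beta}{2}\tilde r\langle\Grad{h}\tilde F,\Grad{h}\tilde r\rangle_h \leq -\frac{\lambda_*\lambda_v\beta}{2}\tilde r^2 + \frac{\lambda_*\beta}{2}|\tilde r|\big(A_3 d^2 + C\dim(B)^{c}A_2\mathbf{K}d^3\big).
\]
The first term is nonpositive, and the error is $O\big(\lambda_* A_3 a^3/\sqrt\beta + \dots\big)$.
\item The gradient-squared term equals $\frac{\lambda_*^2\beta}{4}\tilde r^2|\Grad{h}\tilde r|^2$. It is absorbed by the drift term since $\lambda_v \geq \lambda_* \geq \lambda_*^2(1+o(1))/2$, using $\lambda_*\leq1$. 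This is exactly why the coefficient is $\lambda_*/4$.
\end{enumerate}

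Summing gives $\frac{\tilde{\operatorname{L}}W_2^y}{W_2^y} \leq -\frac{\lambda_*}{2} + \lambda_*\cdot O\big(\dim(B)^{5/2}A_2A_3\mathbf{K}a^3/\sqrt\beta\big)$. The hypothesis $\beta \geq 192^2\dim(B)^5A_2^2A_3^2\mathbf{K}^2a^6$ makes every error term at most a small fraction of $\lambda_*$, which leaves $-\lambda_*/8$ with room to spare. I expect the main obstacle to be step~1 combined with the parallel-transport comparison in step~3. These require explicit, dimension-quantified bounds on the metric coefficients and Christoffel symbols in normal coordinates over the whole ball, and the polynomial powers of $\dim(B)$ must match the stated threshold. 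I would first try to read these bounds off from the Jacobi equation with constant curvature operator available on a symmetric space.
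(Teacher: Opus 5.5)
Your proposal is correct and is essentially the paper's proof. It uses the same decomposition of $\tilde{\operatorname{L}}W_2^y/W_2^y$ and the same drift comparison $\langle -\Grad{h}\tilde F,\Grad{h}\tilde r\rangle_h\approx\lambda_v\tilde r$ (Hessian-Lipschitz Taylor bound, eigenvector identity, parallel-transport estimate for $\Grad{h}\tilde r$, as in \cref{AuxLemma3}), the symmetric-space bounds on $|\Grad{h}\tilde r|_h^2$ and $\tilde r\Delta_h\tilde r$ (\cref{lem:boundtermgradlaplacian}), and the same absorption of $\frac{\lambda_*^2\beta}{4}\tilde r^2|\Grad{h}\tilde r|_h^2$ into the drift. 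The differences are quantitative: you use that $\Delta_h\tilde r$ and $\nabla^2\tilde r$ vanish at $y$ to make $\tilde r\Delta_h\tilde r$ and the transport discrepancy of order $r^2$, where the paper uses the uniform bound $24\dim(B)^{5/2}$. Also, in step 4 the inequality actually needed is $\lambda_v\geq\frac{\lambda_*}{2}|\Grad{h}\tilde r|_h^2$, which follows from $\lambda_v\geq\lambda_*$ and $|\Grad{h}\tilde r|_h^2\leq 2$ without using $\lambda_*\leq 1$.
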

\begin{proof}
For the sake of clarity, we will use $\tilde r$ as a shorthand for $\tilde{r}_{y, v}$. First, note that $W_2^y \in C^{\infty}(\overline{\mathcal{B}(\frac{a}{\sqrt{\beta}},y)})$ follows by definition, as we are assuming that $\frac{a}{\sqrt{\beta}} \leq i(B)/2$. Also, since $|\tilde r(x)| \leq \frac{a}{\sqrt{\beta}}$ in $\mathcal{B}(\frac{a}{\sqrt{\beta}},y)$, it holds that 
\[
a^2 - \beta \tilde r(x)^2 \geq a^2 - \beta \frac{a^2}{\beta} \geq 0,
\]
which implies that $W_2^y \geq 1$ in $\mathcal{B}(\frac{a}{\sqrt{\beta}},y)$. 

In order to show that $W_2^y$ is a quasi-Lyapunov function, we need an explicit description of $\tilde{\operatorname{L}} W_2^y$. Let us first compute some intermediate expressions. First, by the chain rule,
\[
\Delta_h\Big(\exp\Big\{-\frac{\lambda_* \beta}{4} \tilde r^2\Big\}\Big)(x) = -\frac{\lambda_* \beta}{4}\exp\Big\{-\frac{\lambda_* \beta}{4} \tilde r^2(x)\Big\}  \Big(\Delta_h \tilde r^2(x) -\frac{\lambda_* \beta}{4}|\Grad{h} \tilde r^2(x)|^2_h \Big),
\]
which allows us to write the Laplacian of $W_2^y$ as 
\[
\Delta_h W_2^y(x) = -\frac{\lambda_* \beta}{4}W_2^y(x) \Big(\Delta_h \tilde r^2(x) -\frac{\lambda_* \beta}{4}|\Grad{h} \tilde r^2(x)|^2_h \Big).
\]
Applying again the chain rule for the gradient of $W_2^y$ we obtain
\[
\Grad{h} W_2^y(x) = \Grad{h}\exp\Big\{\frac{\lambda_*}{4}(a^2 - \beta \tilde r(x)^2)\Big\} = -\frac{\lambda_* \beta}{4}W_2^y(x) \Grad{h} \tilde r^2(x).
\]
Thus, $\tilde{\operatorname{L}} W_2^y$ can be written as
\begin{align}
\label{eq:expressionLW}
\begin{split}
(\tilde{\operatorname{L}} W_2^y)(x) &= \langle -\Grad{h}\tilde F(x), \Grad{h} W_2^y(x)\rangle_h + \frac{1}{\beta} \Delta_h W_2^y(x)
\\&= W_2^y(x) \Big\{ -\frac{\lambda_* \beta}{4} \langle -\Grad{h}\tilde F(x), \Grad{h} \tilde r^2(x)\rangle_h - \frac{\lambda_*}{4}  \Big(\Delta_h \tilde r^2(x) -\frac{\lambda_* \beta}{4}|\Grad{h} \tilde r^2(x)|^2_h \Big)\Big\}
\\&=W_2^y(x) \Big\{-\frac{\lambda_* \beta}{2} \tilde{\operatorname{L}}\big(\frac{1}{2} \tilde r^2\big)(x) + \frac{\lambda^2_* \beta}{4^2} |\Grad{h} \, \tilde r^2(x)|^2_h\Big\}.
\end{split}
\end{align}

In order to show that $W_2^y$ is a quasi-Lyapunov function, we need to find an upper bound for the expression in brackets on the right-hand side. Let us focus on $\tilde{\operatorname{L}}\big(\frac{1}{2} \tilde r^2\big)$, which can be written as
\begin{align}
\label{eq:expressionLrtilde}
\begin{split}
\big(\tilde{\operatorname{L}} \frac{1}{2}\tilde r^2\big)(x) &=  \langle -\Grad{h}\tilde F(x), \frac{1}{2} \Grad{h} \tilde{r}^2(x)\rangle_h + \frac{1}{2\beta} \Delta_h \tilde{r}^2(x)\\&=\langle -\Grad{h}\tilde F(x), \tilde{r}(x)\Grad{h}\tilde{r}(x) \rangle_h + \frac{1}{\beta}\left(\tilde{r}(x)\Delta_h \tilde{r}(x) + |\Grad{h}\tilde{r}(x)|^2_h\right). 
\end{split}
\end{align}
The first summand on the right-hand side of the last line in \cref{eq:expressionLrtilde} can be lower bounded as follows: since
\[
\beta \geq 192^2 \dim(B)^5 A^2_2 A^2_3 \mathbf{K}^2 a^6 \geq 72^2 \dim(B)^5 \mathbf{K}^2 a^2,
\]
from \cref{AuxLemma3} it follows that 
\[
|\langle -\Grad{h}\tilde F(x), \Grad{h} \tilde r(x) \rangle_h - \lambda_v \tilde r(x)| \leq (24\dim(B)^{5/2}A_2 + A_3)\frac{a^2}{\beta}, 
\]
for every $x \in \mathcal{B}(\frac{a}{\sqrt{\beta}}, y)$ and so 
\begin{align}
\label{eq:auxbound1}
\begin{split}
\tilde r(x)\langle -\Grad{h}\tilde F(x), \Grad{h} \tilde r(x) \rangle_h  &\geq \lambda_v \tilde r^2(x) - |\tilde r(x)| (24\dim(B)^{5/2}A_2 + A_3)\frac{a^2}{\beta}
\\&\geq \lambda_v \tilde r^2(x) - (24\dim(B)^{5/2}A_2 + A_3)\frac{a^3}{\beta\sqrt{\beta}},
\end{split}
\end{align}
where we used the bound $|\tilde r(x)| \leq \frac{a}{\sqrt{\beta}}$ for the last inequality.

Next, using again that $\frac{a}{\sqrt{\beta}} \leq \min\Big\{\frac{i(B)}{2}, \frac{1}{72 \dim(B)^{5/2} \mathbf{K}}\Big\}$ by assumption, we can apply \cref{lem:boundtermgradlaplacian} to conclude that\footnote{Despite $\Delta_h \tilde r$ being zero in the case of flat space and its gradient having constant norm one, bounding the Laplacian or the norm of the gradient of $\tilde r$ becomes non-trivial even in very simple settings such as the two-dimensional round sphere (cf. \cref{sec:notrivialboundlaplacian}). Nevertheless, when restricting $x$ to a sufficiently small neighborhood around $y$, we are able to obtain bounds for such quantities.}
\begin{align}
\label{eq:auxbound2}
\begin{split}
|\Grad{h}\tilde r(x)|^2_h &\leq 2,\\
|\Grad{h}\tilde r(x)|^2_h + \Tilde{r}(x)\Delta_h\Tilde{r}(x) &\geq \frac{1}{2},            
\end{split}
\end{align}
for every $x \in \mathcal{B}(\frac{a}{\sqrt{\beta}}, y)$. Inserting \cref{eq:auxbound1,eq:auxbound2} into \cref{eq:expressionLrtilde} shows that
\begin{align*}
\big(\tilde{\operatorname{L}} \frac{1}{2}\tilde r^2\big)(x) &\geq  \lambda_v \tilde r^2(x) - (24\dim(B)^{5/2}A_2 + A_3)\frac{a^3}{\beta\sqrt{\beta}} + \frac{1}{2\beta}
\\&= \lambda_v \tilde r^2(x) + \frac{1}{\beta}\Big(\frac{1}{2} - (24\dim(B)^{5/2}A_2 + A_3)\frac{a^3}{\sqrt{\beta}}\Big)
\end{align*}
for every $x \in \mathcal{B}(\frac{a}{\sqrt{\beta}}, y)$. As it also holds that 
\[
\beta \geq 192^2 \dim(B)^5 A^2_2 A^2_3 \mathbf{K}^2 a^6 \geq 4^2 (24\dim(B)^{5/2}A_2 + A_3)^2 a^6
\]
we find that 
\[
\frac{1}{2} - (24\dim(B)^{5/2}A_2 + A_3)\frac{a^3}{\sqrt{\beta}} \geq  \frac{1}{4}.
\]
This bound implies that, for every $x \in \mathcal{B}(\frac{a}{\sqrt{\beta}}, y)$
\[
\big(\tilde{\operatorname{L}} \frac{1}{2}\tilde r^2\big)(x) \geq \lambda_v \tilde{r}(x)^2 + \frac{1}{4\beta} \geq \lambda_* \tilde{r}(x)^2 + \frac{1}{4\beta}.
\]
Inserting this bound into \cref{eq:expressionLW} we see that
\begin{align*}
(\tilde{\operatorname{L}} W_2^y)(x) &= W_2^y(x) \Big\{-\frac{\lambda_* \beta}{2} \tilde{\operatorname{L}}(\frac{1}{2} \tilde r^2)(x) + \frac{\lambda^2_* \beta}{4^2} |\Grad{h} \, \tilde r^2(x)|^2_h\Big\} 
\\&\leq W_2^y(x) \Big\{-\frac{\lambda_* \beta}{2} \big(\lambda_* \tilde r^2(x) + \frac{1}{4\beta}\big) + \frac{\lambda^2_* \beta}{4^2} |\Grad{h}\, \tilde r^2(x)|^2_h\Big\} 
\end{align*}
Next, simply using the bound obtained in \cref{eq:auxbound2} for the norm of the gradient of $\tilde r$ in $\mathcal{B}(\frac{a}{\sqrt{\beta}}, y)$, we see that 
\begin{align*}
(\tilde{\operatorname{L}} W_2^y)(x) &\leq W_2^y(x) \Big\{-\frac{\lambda_* \beta}{2} \big(\lambda_* \tilde r^2(x) + \frac{1}{4\beta}\big) + \frac{\lambda^2_* \beta}{4} \tilde r^2(x)|\Grad{h} \tilde r(x)|_h^2\Big\} 
\\&\leq  W_2^y(x) \Big\{-\frac{\lambda_* \beta}{2} \big(\lambda_* \tilde r^2(x) + \frac{1}{4\beta}\big) + \frac{\lambda^2_* \beta}{2} \tilde r^2(x)\Big\}  
\\&= -\frac{\lambda_*}{8}W_2^y(x), 
\end{align*}
finishing the proof. 
\end{proof}
\subsection{Proof of Theorem \ref{prop9.12}}
\label{sec2.3}

Lastly, in this subsection we will prove that the Markov triple $(B, \tilde \nu, \tilde \Gamma)$ satisfies a Poincaré inequality with a---positive---constant that only depends on the value of $\lambda_*$, which is the positive constant controlling the escape directions from the saddle points (cf. \cref{modassumption3.5.1}), i.e. 
\[
\lambda_{\min}\big(\nabla^2 \tilde F(y)\big) \leq -\lambda_* < 0,\quad \forall y \in \tilde{\mathcal{S}}.
\]

Let $a, \beta > 0$ be fixed. We define the following sets in order to lighten the notation. 
\begin{align}
\label{eq:subsetsB}
\begin{split}
G_1 &:= \overline{U}\Big(\frac{a}{2\sqrt{\beta}}, \tilde{\mathcal{S}}\Big) = \Big\{x \in B\, |\, d_h(x, \tilde{\mathcal{S}})^2 \leq \frac{a^2}{4\beta}\Big\},\\
G_2 &:= U\Big(\frac{a}{\sqrt{\beta}}, \tilde{\mathcal{S}}\Big)  = \Big\{x \in B \, |\, d_h(x, \tilde{\mathcal{S}})^2 < \frac{a^2}{\beta}\Big\},\\
G_3 &:= \mathcal{B}(\frac{a}{\sqrt{\beta}},x^*) = \Big\{x \in B \, |\, d_h(x, x^*)^2 < \frac{a^2}{\beta}\Big\},\\
G_4 &:= B \setminus (G_1 \cup G_3),
\end{split}
\end{align}
where $x^*$ is the unique minimum of $\tilde F$. See \cref{fig:neighborhoods} for a visual representation.

\begin{figure}
    \centering
    \ctikzfig{TikzFigures/MapEnvironments}
    \caption{Sketch of the subsets defined in \cref{eq:subsetsB}. We have used striped filling to denote that some region is included in two subsets. The neighborhoods around the critical points are disjoint by \cref{modassumption3.4}, after choosing a sufficiently small value for $\frac{a}{\sqrt{\beta}}$.}
    \label{fig:neighborhoods}
\end{figure}

\subsubsection{Local Poincaré inequality near the global minimum}
\label{sec:localpoincare} 

Let us derive a Poincaré inequality for $(G_3, \left.\tilde \nu\right|_{G_3}, \tilde \Gamma|_{G_3})$, where $\left.\tilde \nu\right|_{G_3}$ denotes the Gibbs density restricted to $G_3$,
\[
\tilde \nu|_{G_3}(x) := \frac{\tilde \nu(x)}{\int_{G_3} \tilde \nu(y)dy} \mathds{1}_{G_3}(x),
\]
and $\tilde \Gamma|_{G_3}$ denotes the operator $\tilde \Gamma$ restricted to act on functions $f \in C^2(G_3)$. This inequality will follow from the geodesic convexity of $G_3$ along with the so-called \textit{curvature-dimension condition},
\[
\nabla^2\tilde F + \frac{1}{\beta}\textup{Ric}_h \geq \kappa h,
\]
with $\kappa > 0$. In fact, whenever this inequality holds on a \textit{geodesically convex} connected and open subset $A \subset B$, the triple $(A, \left.\tilde \nu\right|_A, \tilde \Gamma|_A)$ satisfies a $\textup{PI}(\kappa/2)$. We refer the reader to \cref{BakryEmeryandLyapunov} for more details. The geodesic convexity of the subset $A$ is crucial if one wants to directly obtain a Poincaré inequality from the curvature-dimension condition. For an in-depth study on how the convexity of a subset affects the interplay between the curvature-dimension condition and the Poincaré inequality, we refer to \cite{WangAnalysisforDiffusion,wang2006functional,Qian1997}.

In the following statement, which is analogous to \cite[Lemma D.12]{LiErd2022Supp}, we will obtain a Poincaré inequality for $(G_3, \left.\tilde \nu\right|_{G_3}, \tilde \Gamma|_{G_3})$ by using the curvature-dimension condition, as $G_3$ can always be assumed to be convex by taking $\frac{a}{\sqrt{\beta}}$ sufficiently small. Note that, unlike in \cite{LiErd2022Supp}, we are assuming that $\tilde F$ only has one global minimum in $B$ which simplifies the proof. 

\begin{lemma}[Poincaré Inequality on $G_3$]
\label{lemma9.9}
Suppose that $\tilde F$ satisfies \crefrange{modassumption3.3}{modassumption3.6} and assume that the Ricci curvature of $(B, h)$ is lower-bounded by some non-positive constant, i.e. $\textup{Ric}_{h} \geq -R_{B}\,h $ for some $R_{B} \geq 0$. Then for all choices of $\beta > 0$ such that
\[
\beta \geq \max \left\{a^2\frac{4A^2_3}{\lambda_*^2}, \frac{4R_{B}}{\lambda_*}, \frac{a^2}{\mathit{conv}(B)^2}\right\}, 
\]
where $\lambda_*$ is defined in \cref{modassumption3.5.1,modassumption3.5.2}, and $\mathit{conv}(B)$ denotes the convexity radius of $(B, h)$, it holds that the Markov triple $(G_3, \left.\tilde\nu\right|_{G_3}, \tilde \Gamma|_{G_3})$, satisfies a $\textup{PI}(\frac{\lambda_*}{8})$, i.e. 
\[
\int_{G_3} f^2\,d\tilde\nu|_{G_3} - \Big(\int_{G_3} f \, d\tilde\nu|_{G_3}\Big)^2 \leq \frac{8}{\lambda_* \beta} \int_{G_3} |\Grad{h}f|_h^2\, d\tilde\nu|_{G_3},
\]
for every $f \in C^2(G_3) \cap H^1(G_3)$, where $H^1$ denotes the Sobolev space of $L^2$ functions on $G_3$ with $L^2$ gradient (see \cref{sec:obolevspaces}). 
\end{lemma}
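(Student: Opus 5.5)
The plan is to establish the curvature-dimension condition
\[
\nabla^2\tilde F + \frac{1}{\beta}\textup{Ric}_h \geq \frac{\lambda_*}{4} h \quad \text{on } G_3,
\]
and then invoke the Bakry--Émery type result recalled in \cref{BakryEmeryandLyapunov}. That result says that a $\textup{CD}(\kappa,\infty)$ condition on a geodesically convex, connected, open subset $A$ yields a $\textup{PI}(\kappa/2)$ for $(A, \tilde\nu|_A, \tilde\Gamma|_A)$. With $\kappa = \lambda_*/4$ this gives exactly $\textup{PI}(\lambda_*/8)$. Recalling $\tilde\Gamma(f)=\frac1\beta|\Grad{h}f|_h^2$, this is the displayed inequality with constant $\frac{8}{\lambda_*\beta}$.

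\textbf{Step 1 (convexity).} Since $\beta \geq a^2/\mathit{conv}(B)^2$, the radius $\frac{a}{\sqrt\beta}$ is at most $\mathit{conv}(B)$. Hence the ball $G_3=\mathcal{B}(\frac{a}{\sqrt\beta},x^*)$ is strongly geodesically convex (and connected and open). This is precisely what the definition of the convexity radius provides.

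\textbf{Step 2 (Hessian lower bound).} By \cref{modassumption3.5.2}, $\nabla^2\tilde F(x^*)\geq \lambda_* h$. To compare the Hessian at $x\in G_3$ with that at $x^*$, I will use parallel transport along the unique minimizing geodesic from $x^*$ to $x$. The $A_3$-Lipschitz property of $\nabla^2\tilde F$ (understood in this parallel-transport sense) then gives, for every unit $w\in T_xB$,
\[
\nabla^2\tilde F(x)[w,w] \geq \lambda_* - A_3\, d_h(x,x^*) \geq \lambda_* - A_3\frac{a}{\sqrt\beta} \geq \frac{\lambda_*}{2},
\]
where the last step uses $\beta\geq 4a^2A_3^2/\lambda_*^2$.

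\textbf{Step 3 (Ricci term).} Since $\textup{Ric}_h\geq -R_B h$ and $\beta\geq 4R_B/\lambda_*$, we have $\frac1\beta\textup{Ric}_h \geq -\frac{\lambda_*}{4}h$. Adding this to Step 2 yields $\textup{CD}(\lambda_*/4,\infty)$ on $G_3$.

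\textbf{Step 4 (conclusion).} Apply the appendix result on the convex domain $G_3$, i.e.\ the Neumann-type Poincaré inequality for convex subsets under a CD condition. It holds for $f\in C^2(G_3)\cap H^1(G_3)$, extending from smooth functions by density in $H^1$.

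I expect the main obstacle to be the precise applicability of the convex-domain Bakry--Émery result. A boundary term appears in the $\Gamma_2$ computation (or reflection argument), and convexity of $\partial G_3$ is exactly what controls its sign. I also need to ensure the halving of the constant matches the stated form in \cref{BakryEmeryandLyapunov}. A second minor point is interpreting the Lipschitz property of the Hessian via parallel transport, which is legitimate because $G_3$ lies within the convexity, hence injectivity, radius.
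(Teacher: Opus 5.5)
Your proposal is correct and follows the paper's proof step for step. The steps are the convexity of $G_3$ from $a/\sqrt{\beta}\le \mathit{conv}(B)$, and the parallel-transport comparison with $\nabla^2\tilde F(x^*)$ using the $A_3$-Lipschitz Hessian. The paper phrases that comparison via Weyl's inequality on the smallest eigenvalue, while you bound the quadratic form directly; the two are equivalent. The remaining steps are absorbing the Ricci term via $\beta\ge 4R_B/\lambda_*$, and finally applying \cref{rem:CDimpliesPI} with $\kappa=\lambda_*/4$ to get $\textup{PI}(\lambda_*/8)$.
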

\begin{proof}
Let $x^*$ be the unique minimum of $\tilde F$. From the definition of $\tilde F$ having an $A_3$-Lipschitz Hessian (cf. \cref{def:lipschitzfunction}) we know that for every $x \in G_3$
\[
\max_{\substack{s \in T_{x^*}B\\|s|_h=1}} |\textup{P}^{-1}_{\log_{x^*}x} \circ \nabla^2 \tilde F(x) \circ \textup{P}_{\log_{x^*}x}[s] - \nabla^2 \tilde F(x^*)[s]|_h \leq A_3 d_h(x,x^*),
\]
where $\textup{P}_{\log_{x^*}x}$ denotes the parallel transport from $x^*$ to $x$ along the unique geodesic joining the two points. Since the parallel transport is an isometry, we can use Weyl's inequality to conclude that for every $x \in G_3$,
\begin{align*}
|\lambda_1(\nabla^2 \tilde F(x)) - \lambda_1(\nabla^2 \tilde F(x^*))| &\leq \max_{\substack{s \in T_{x^*}B\\|s|_h=1}} |\textup{P}^{-1}_{\log_{x^*}x} \circ \nabla^2 \tilde F(x) \circ \textup{P}_{\log_{x^*}x}[s] - \nabla^2 \tilde F(x^*)[s]|_h 
\\&\leq A_3 d_h(x,x^*),
\end{align*}
where $\lambda_1(\nabla^2\tilde F(x))$ denotes the smallest eigenvalue of the Hessian of $\tilde F$ at $x$. This way, we can apply \cref{modassumption3.5.2} to conclude that 
\[
\lambda_1(\nabla^2 \tilde F(x)) \geq \lambda_1(\nabla^2 \tilde F(x^*)) - A_3 d_h(x, x^*) \geq \frac{\lambda_*}{2},\quad \textup{whenever } d_h(x, x^*) \leq \frac{\lambda_*}{2A_3}.
\]
The above inequality is fulfilled on $G_3 = \{x \in B : d_h(x, x^*)^2 < \frac{a^2}{\beta}\}$, as by assumption $\frac{a^2}{\beta} \leq \frac{\lambda^2_*}{4 A^2_3}$. 

For this reason, and the fact that we assumed that $\textup{Ric}_h \geq -R_B$ where $R_B \geq 0$, then, given our choice of $\beta$,
\[
\nabla^2 \tilde F(x) + \frac{1}{\beta}\textup{Ric}_h \geq \left(\frac{\lambda_*}{2} - \frac{1}{\beta}R_B\right)h \geq \frac{\lambda_*}{4}h.
\]

Since we are ensuring that $\frac{a}{\sqrt{\beta}} \leq \mathit{conv}(B)$, we know that $(G_3, \left.\tilde\nu\right|_{G_3}, \tilde \Gamma|_{G_3})$ satisfies a $\textup{PI}(\lambda_*/8)$. See \cref{rem:CDimpliesPI} for more details. 
\end{proof}

\subsubsection{Extending the Poincaré inequality}
\label{sec:globalpoincare}

Having obtained a local Poincaré inequality on $G_3$, in this subsection we will extend it to obtain a Poincaré inequality for $(B, \tilde \nu, \tilde \Gamma)$. This will be done using the Lyapunov functions constructed in \cref{sec2.2},
\[
W_1(x) = \exp\Big(\frac{\beta}{2}\tilde F(x)\Big),\quad \textup{and}\quad W^y_2(x) = \exp\Big\{\frac{\lambda_*}{4}(a^2 - \beta \tilde r_{y, v}(x)^2)\Big\},\quad \forall y \in \tilde{\mathcal{S}}. 
\]

In order to extend our Poincaré inequality from $G_3$ to $B$, smooth bump functions will come in handy. The following two lemmas allow to construct one that vanishes on $G_1$, is equal to one on $G_2^c$, and such that its derivative is upper bounded by a suitable constant. 

\begin{lemma}[{\cite[Theorem 6.31]{lee2018introductionRiemannian}}]
\label{AuxLemma1}
Let $y \in B$ be some fixed point. The distance function $x \mapsto d_h(x, y)$ is a smooth function on $\mathcal{B}(i(B),y) \setminus \{y\}$ with unit gradient. 
\end{lemma}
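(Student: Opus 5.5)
The plan is to work in normal coordinates centred at $y$ and prove smoothness first, then the unit-gradient property. The statement is the standard fact that $r(x) := d_h(x,y)$ is smooth away from $y$ inside the injectivity radius and satisfies $|\Grad{h} r|_h = 1$ there.

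\emph{Smoothness.} Since $i(B)$ is the injectivity radius, the exponential map $\exp_y$ restricted to the open ball of radius $i(B)$ in $T_yB$ is a diffeomorphism onto $\mathcal{B}(i(B),y)$. Hence $\log_y$ is smooth on that ball. Geodesics from $y$ of length less than $i(B)$ are minimizing, so
\[
d_h(x,y) = |\log_y x|_h
\]
on the ball. The map $v \mapsto |v|_h$ is smooth on $T_yB\setminus\{0\}$, and $\log_y x \neq 0$ precisely when $x\neq y$. Therefore $r$ is smooth on $\mathcal{B}(i(B),y)\setminus\{y\}$. In normal coordinates $(x^1,\dots,x^n)$ this reads $r = \sqrt{\sum_i (x^i)^2}$.

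\emph{Unit gradient.} I would invoke the Gauss lemma. Let $\partial_r := x^i\partial_i / r$ be the radial vector field. It is the velocity field of the unit-speed radial geodesics $t\mapsto \exp_y(tv)$ with $|v|=1$, so $|\partial_r|_h = 1$. The Gauss lemma says $\partial_r$ is $h$-orthogonal to the level spheres $\{r = c\}$, and their tangent spaces equal $\ker dr$.

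Now compare $\Grad{h} r$ with $\partial_r$:
\begin{itemize}
\item $dr(\partial_r) = 1$, since $r$ increases at unit rate along radial geodesics;
\item $dr(w) = 0$ for every $w$ tangent to a level sphere.
\end{itemize}
Hence $\langle \Grad{h} r, w\rangle_h = dr(w) = 0$ for all $w\in\ker dr$, so $\Grad{h} r$ is parallel to $\partial_r$. Writing $\Grad{h} r = c\,\partial_r$, we get $c = c\,|\partial_r|_h^2 = \langle \Grad{h} r, \partial_r\rangle_h = dr(\partial_r) = 1$. Therefore $\Grad{h} r = \partial_r$, which has unit norm.

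The main obstacle is the Gauss lemma itself, namely the orthogonality of radial geodesics to geodesic spheres. Since the statement is cited from Lee's book, I would quote it rather than reprove it. If a proof were required, I would use a variation of the radial geodesics through a family of initial directions and show that the derivative of the inner product with the variation field vanishes.
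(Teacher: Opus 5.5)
Your proposal is correct: the paper gives no proof of its own and only cites Lee's Theorem 6.31, and your argument is the standard textbook proof behind that citation. You identify $d_h(\cdot,y)$ with $|\log_y(\cdot)|_h$ inside the injectivity radius to get smoothness, then use the Gauss lemma together with $dr(\partial_r)=1$ and $|\partial_r|_h=1$ to conclude $\textup{grad}_h\, r=\partial_r$.
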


\begin{lemma}
\label{lem:defbumpfunction}
Let 
\[
r := \frac{a}{\sqrt{\beta}} \leq \min\Big\{\frac{i(B)}{2}, \frac{D}{3}\Big\},
\]
where $D$ is the lower bound on the distance between any two saddle points of $\tilde F$ (cf. \cref{modassumption3.4}). Then  for every $\varepsilon > 0$ there exists a smooth partition function $\chi_{\varepsilon}: B \to [0, 1]$ such that $\chi_{\varepsilon} \equiv 1$ on $G_2^c$, $\chi_{\varepsilon} \equiv 0$ on $G_1$ and such that
\[
\norm{\tilde \Gamma(\chi_{\varepsilon})}_\infty \leq \frac{1}{\beta}\Big(\frac{2}{r} + \varepsilon\Big)^2.
\]
\end{lemma}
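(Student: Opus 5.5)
The plan is to build $\chi_\varepsilon$ as a smoothed radial cutoff around each saddle point. On $G_2\setminus G_1$ we have $\frac{r}{2}<d_h(x,y)<r$ for exactly one saddle $y$, and everywhere else the function is constant. First I check that the saddle neighbourhoods are disjoint and well separated. Since $r\le D/3$, any two saddles $y\neq y'$ satisfy $d_h(y,y')\ge D\ge 3r$. The balls $\mathcal{B}(r,y)$ are therefore pairwise disjoint, with a gap of width at least $r$ between them. It follows that
\[
G_2=\bigsqcup_{y\in\tilde{\mathcal{S}}}\mathcal{B}(r,y),\qquad G_1=\bigsqcup_{y\in\tilde{\mathcal{S}}}\overline{\mathcal{B}}(r/2,y).
\]
This lets me define $\chi_\varepsilon$ separately on each ball and set it equal to $1$ elsewhere.

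For the one-dimensional profile, I take a smooth nondecreasing $\phi_\varepsilon:\bb{R}\to[0,1]$ with $\phi_\varepsilon\equiv 0$ on $(-\infty,r/2+\eta]$ and $\phi_\varepsilon\equiv1$ on $[r-\eta,\infty)$, for some small $\eta>0$. I also require $\sup|\phi_\varepsilon'|\le \frac{1}{r/2-2\eta}+\delta$. Such a profile comes from mollifying the piecewise-linear ramp of slope $1/(r/2-2\eta)$. Choosing $\eta,\delta$ small enough in terms of $\varepsilon$ and $r$ gives $\sup|\phi_\varepsilon'|\le \frac{2}{r}+\varepsilon$. Then I set
\[
\chi_\varepsilon(x):=\phi_\varepsilon\big(d_h(x,y)\big)\ \text{ for } x\in\mathcal{B}(r,y),\ y\in\tilde{\mathcal{S}},\qquad \chi_\varepsilon(x):=1\ \text{ for } x\in G_2^c.
\]

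Next I verify smoothness. By \cref{AuxLemma1}, $d_h(\cdot,y)$ is smooth on $\mathcal{B}(i(B),y)\setminus\{y\}$. Since $r\le i(B)/2$, this region contains the closed annulus $\{r/2\le d_h(x,y)\le r\}$. Near $y$, that is for $d_h<r/2+\eta$, $\chi_\varepsilon$ vanishes identically, so the singularity of the distance function at $y$ does not matter. Near the sphere $d_h=r$, $\chi_\varepsilon$ is identically $1$ on both sides, so it glues smoothly with the constant $1$ on $G_2^c$. The boundary values are also as required: $\chi_\varepsilon=0$ on $G_1$ and $\chi_\varepsilon=1$ on $G_2^c$.

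Finally, the gradient bound. By \cref{rem:explicitexprLangevin} (applied on $B$), $\tilde\Gamma(\chi_\varepsilon)=\frac1\beta|\Grad{h}\chi_\varepsilon|_h^2$. Outside the annuli this is $0$. Inside an annulus the chain rule and the unit gradient from \cref{AuxLemma1} give
\[
|\Grad{h}\chi_\varepsilon|_h=|\phi_\varepsilon'(d_h(x,y))|\,|\Grad{h} d_h(\cdot,y)|_h\le \frac{2}{r}+\varepsilon.
\]
This yields the claimed bound on $\norm{\tilde\Gamma(\chi_\varepsilon)}_\infty$. The only genuinely delicate step is the smoothness of the gluing. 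It is handled by making $\phi_\varepsilon$ flat near both endpoints and by the fact that the closed annulus stays strictly inside the injectivity radius. Everything else is routine mollification.
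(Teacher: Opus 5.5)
Your proposal is correct and is essentially the paper's argument: a one-dimensional cutoff with $\sup|\psi_\varepsilon'|\le \frac{2}{r}+\varepsilon$ is composed with the distance to the saddle points. Smoothness and the unit gradient come from \cref{AuxLemma1} via $r\le i(B)/2$, and the chain rule then gives the bound on $\norm{\tilde\Gamma(\chi_\varepsilon)}_\infty$. Your ball-by-ball definition with $d_h(\cdot,y)$ and flat margins of width $\eta$ is a more careful version of the paper's use of $d_h(\cdot,\tilde{\mathcal{S}})$; to keep the mollified ramp inside $[r/2+\eta,\,r-\eta]$, start from a ramp on a slightly shorter interval, which your slack $\delta$ absorbs.
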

\begin{proof}
First, we can define a smooth bump function $\psi_{\varepsilon}: \bb{R} \to [0, 1]$ such that
\[
\psi_{\varepsilon}(x) = \begin{cases}
0, \quad & x \leq r/2,\\
\mathit{increasing}, \quad &x\in (r/2, r),\\
1,\quad & x\geq r,
\end{cases}
\]
with $\norm{\psi'_{\varepsilon}}_{\infty} \leq \frac{2}{r} + \varepsilon$ following standard methods (cf. \cite[Lemma G.13]{LiErd2022Supp}). Now, we define $\chi_{\varepsilon}: B \to [0, 1]$ as 
\[
\chi_{\varepsilon}(x) := \psi_{\varepsilon}(d_h(x,\tilde{\mathcal{S}})).
\]
Note that $\chi_{\varepsilon}(x)$ is constant in the interior of $G_1$ and in the complement of $G_2$. Furthermore, since we are assuming that $r \leq i(B)/2$, it holds by \cref{AuxLemma1} that the distance function is smooth in $G_2 \setminus \tilde{\mathcal{S}}$, and so $\chi_{\varepsilon}$ is smooth in $B$ and 
\[
\Grad{h}\chi_{\varepsilon} = \Grad{h}\psi_{\varepsilon}(d_h(x, \tilde{\mathcal{S}})) = \psi'_{\varepsilon}(d_h(x, \tilde{\mathcal{S}})) \Grad{h}d_h(x, \tilde{\mathcal{S}}),
\]
for every $x \in B$ such that $r/2 < d_h(x, \tilde{\mathcal{S}}) < r$ and $\Grad{h}\chi_{\varepsilon} \equiv 0$ elsewhere. Again, by \cref{AuxLemma1}, the distance function has a gradient with constant norm one, and so it holds that 
\[
|\Grad{h}\chi_{\varepsilon}(x)|_h = |\psi'(d_h(x, \tilde{\mathcal{S}}))|
\]
for every $x \in B$. For this reason, for every $x \in B$ such that $r/2 < d_h(x, \tilde{\mathcal{S}}) < r$,
\[
|\tilde \Gamma(\chi_{\varepsilon}(x))| = \frac{1}{\beta}|\Grad{h}\chi_{\varepsilon}(x)|^2 = \frac{1}{\beta}|\psi'_{\varepsilon}(d_h(x, \tilde{\mathcal{S}}))|^2 \leq \frac{1}{\beta}\Big(\frac{2}{r} + \varepsilon\Big)^2.
\]
Otherwise, $\tilde \Gamma(\chi_{\varepsilon}(x)) = 0$ which implies that
\[
\lVert\tilde \Gamma(\chi_{\varepsilon})\rVert_{\infty} = \frac{1}{\beta}  \lVert\Grad{h}\chi_{\varepsilon}\rVert^2_\infty = \frac{1}{\beta}\lVert\psi'_{\varepsilon}(d_h(\cdot, \tilde{\mathcal{S}}))\rVert^2_\infty \leq \frac{1}{\beta}\Big(\frac{2}{r} + \varepsilon\Big)^2.
\]
\end{proof}

Our final tool to derive a Poincaré inequality is an integration by parts formula on submanifolds of $B$. 
\begin{lemma}
\label{Neumannboundarycondition}
Let $\tilde{\operatorname{L}}$ and $\tilde\nu$ be as defined in \cref{eq:defMarkovB}, and let $N$ be a submanifold of $(B, h)$ of codimension $0$ with smooth boundary. For every pair of functions $W$, and $\phi$ in $C^2(\overline{N})$ such that $\phi$ vanishes at $\partial N$, it holds that 
\[
\int_N \phi \tilde{\operatorname{L}}W \,d\tilde\nu = - \frac{1}{\beta} \int_N \langle \Grad{h}\phi,\Grad{h}W\rangle_h \,d\tilde\nu .
\]
\end{lemma}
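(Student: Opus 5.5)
The plan is to apply the divergence theorem on $N$ to the weighted vector field $\phi\, e^{-\beta\tilde F}\,\Grad{h}W$ and then see that the resulting interior term is exactly the Langevin generator paired with $\phi$. First I rewrite the generator in divergence form. For $W\in C^2(\overline N)$,
\[
\operatorname{div}_h\big(e^{-\beta \tilde F}\Grad{h}W\big) = e^{-\beta\tilde F}\big(\Delta_h W - \beta\langle \Grad{h}\tilde F,\Grad{h}W\rangle_h\big) = \beta\, e^{-\beta\tilde F}\,\tilde{\operatorname{L}}W .
\]
Here I use $\Delta_h=\operatorname{div}_h\Grad{h}$ and the product rule $\operatorname{div}_h(fX)=f\operatorname{div}_hX+\langle\Grad{h}f,X\rangle_h$. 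Note that $\tilde{\operatorname{L}}$ in \cref{eq:defMarkovB} is to be read as $\langle -\Grad{h}\tilde F,\Grad{h}\cdot\rangle_h+\frac1\beta\Delta_h$, as in \cref{def:Langevingenerator}.

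Next I consider the $C^1$ vector field $X:=\phi\,e^{-\beta\tilde F}\Grad{h}W$ on $\overline N$. By the product rule,
\[
\operatorname{div}_h X = \beta\,\phi\, e^{-\beta\tilde F}\tilde{\operatorname{L}}W + e^{-\beta\tilde F}\langle\Grad{h}\phi,\Grad{h}W\rangle_h .
\]
Since $N$ is a compact codimension-$0$ submanifold with smooth boundary (compact because $B$ is compact), the divergence theorem gives $\int_N\operatorname{div}_hX\,\textup{dVol}_h=\int_{\partial N}\langle X,n\rangle_h\,d\sigma$, where $n$ is the outward unit normal. This boundary term vanishes because $\phi|_{\partial N}=0$.

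It remains to divide by $\beta\tilde Z$ and recall $d\tilde\nu=\tilde Z^{-1}e^{-\beta\tilde F}\textup{dVol}_h$. This yields
\[
\int_N\phi\,\tilde{\operatorname{L}}W\,d\tilde\nu=-\frac1\beta\int_N\langle\Grad{h}\phi,\Grad{h}W\rangle_h\,d\tilde\nu.
\]

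The only delicate point is regularity. The divergence theorem needs $X\in C^1(\overline N)$, and this holds since $\phi,W\in C^2(\overline N)$ and $\tilde F$ is smooth. There is also a minor ambiguity in the normalization: if $\tilde\nu$ is restricted and renormalized to $N$, both sides simply scale by the same constant, so the identity is unaffected. I expect no real obstacle here; the main care point is stating the divergence-form identity for $\tilde{\operatorname{L}}$ correctly.
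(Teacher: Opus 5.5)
Your proof is correct and is essentially the paper's argument. The paper applies Green's identity to $\Delta_h W$ against the test function $\phi e^{-\beta\tilde F}$, which vanishes on $\partial N$, and then cancels the drift terms. That is the same divergence-theorem computation as yours, just with the terms grouped in a different order.
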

\begin{proof}
By the definition of $\tilde \nu$ and $\tilde{\operatorname{L}}$,
\begin{align}
\int_N \phi \tilde{\operatorname{L}}W \,d\tilde\nu &= \int_N \phi\langle-\Grad{h}\tilde F, \Grad{h}W\rangle_h \,d\tilde\nu + \frac{1}{\beta} \int_N \phi \Delta_h W \,d\tilde\nu 
\\&= \int_N \phi\langle-\Grad{h}\tilde F, \Grad{h}W\rangle_h \,d\tilde\nu + \frac{1}{\tilde Z\beta} \int_N \phi e^{-\beta \tilde F}\Delta_h W\; \textup{dVol}_h.
\end{align}
Using Green's identity, since we are assuming that $\phi$ vanishes at $\partial N$, it holds that 
\begin{align*}
\int_N \phi \tilde{\operatorname{L}}W \,d\tilde\nu &= \int_N \phi\langle-\Grad{h}\tilde F, \Grad{h}W\rangle_h \,d\tilde\nu - \frac{1}{\tilde Z\beta} \int_N \langle \Grad{h}(\phi e^{-\beta \tilde F}),\Grad{h}W\rangle_h\; \textup{dVol}_h
\\&= \int_N \phi\langle-\Grad{h}\tilde F, \Grad{h}W\rangle_h \,d\tilde\nu - \frac{1}{\beta} \int_N \langle \Grad{h}\phi,\Grad{h}W\rangle_h \,d\tilde\nu
\\&\quad + \int_N \phi \langle \Grad{h}\tilde F,\Grad{h}W\rangle_h \,d\tilde\nu
\\&= - \frac{1}{\beta} \int_N \langle \Grad{h}\phi,\Grad{h}W\rangle_h \,d\tilde\nu.
\end{align*}  
\end{proof}

We are now ready to prove \cref{prop9.12}. The proof strategy adopted in the setting considered in \cite{LiErd2022} remains valid in ours. We give here the full proof for the sake of completeness. 

\begin{proof}[Proof of \cref{prop9.12}]
Let $f \in C^2(B)$. Without loss of generality, we can assume that $\int_{G_3} f \, d\tilde\nu = 0$, as 
\[
\int_B f^2 \, d\tilde\nu - \Big(\int_B f \,d\tilde \nu \Big)^2 \leq \int_B (f + c)^2 \, d\tilde\nu,
\]
for every $c \in \bb{R}$. Our goal will be to upper bound the right-hand side of this inequality by some multiple of $\int_B |\Grad{h}f|^2_h \,d\tilde\nu$ in order to obtain a Poincaré inequality. 

Let $\varepsilon > 0$ and let $\chi_{\varepsilon}$ be the bump function from \cref{lem:defbumpfunction}. Since $\chi_{\varepsilon} \equiv 0$ in $G_1$ and $\chi_{\varepsilon} \equiv 1$ in $G_2^c$, we know that
\begin{align}
\label{firstinequality}
\begin{split}
\int_B f^2 \,d\tilde\nu &= \int_B (f(1 - \chi_{\varepsilon}) + f \chi_{\varepsilon})^2 \,d\tilde\nu 
\\&\leq 2\int_B f^2(1-\chi_{\varepsilon})^2 \, d\tilde\nu + 2\int_B f^2\chi_{\varepsilon}^2 \, d\tilde\nu
\\&= 2\int_{G_2} f^2(1-\chi_{\varepsilon})^2 \, d\tilde\nu + 2\int_{G_1^c} f^2\chi_{\varepsilon}^2 \, d\tilde\nu.
\end{split}
\end{align}

Note that we are assuming that $\frac{a}{\sqrt{\beta}} \leq \frac{D}{3}$. Thus, as the critical points of $\tilde F$ are isolated by \cref{modassumption3.4} we can write $G_2$ as a disjoint union
\[
G_2 = \bigsqcup_{y_i \in \tilde{\mathcal{S}}} \mathcal{B}\big(\frac{a}{\sqrt{\beta}}, y_i\big),
\]
where each ball $\mathcal{B}\big(\frac{a}{\sqrt{\beta}}, y_i\big)$ is a connected component of $G_2$.

In \cref{prop9.6} we showed that for every $y \in \tilde{\mathcal{S}}$, $W^y_2\in C^\infty(\overline{\mathcal{B}\big(\frac{a}{\sqrt{\beta}}, y\big)})$, $W^y_2 \geq 1$ and 
\begin{equation}
\label{eq:lyapunovW2}
\frac{(\tilde{\operatorname{L}}W^y_2)(x)}{W^y_2(x)} \leq -\frac{\lambda_*}{8},\quad \forall x \in \mathcal{B}\big(\frac{a}{\sqrt{\beta}}, y\big). 
\end{equation}
Using \cref{eq:lyapunovW2} and \cref{Neumannboundarycondition} we can bound the first summand of \cref{firstinequality} as follows: 
\begin{align*}
\int_{G_2} f^2(1-\chi_{\varepsilon})^2 \,d\tilde\nu &= \sum_{y \in \tilde{\mathcal{S}}}\int_{\mathcal{B}\big(\frac{a}{\sqrt{\beta}}, y\big)} f^2(1-\chi_{\varepsilon})^2 \,d\tilde\nu
\\&\leq
-\frac{8}{\lambda_*}\sum_{y \in \tilde{\mathcal{S}}}\int_{\mathcal{B}\big(\frac{a}{\sqrt{\beta}}, y\big)} \frac{\tilde{\operatorname{L}}W^y_2}{W^y_2}f^2(1-\chi_{\varepsilon})^2 \,d\tilde\nu
\\&= \frac{8}{\lambda_* \beta} \sum_{y \in \tilde{\mathcal{S}}}\int_{\mathcal{B}\big(\frac{a}{\sqrt{\beta}}, y\big)} \Big\langle \Grad{h}\frac{f^2(1-\chi_{\varepsilon})^2}{W^y_2}, \Grad{h}W^y_2\Big\rangle_h \,d\tilde\nu,
\end{align*}
since $(1-\chi_{\varepsilon})$ vanishes at $\partial G_2$. Using the chain rule we can further bound, for every $\mathcal{B}\big(\frac{a}{\sqrt{\beta}}, y\big)$
\begin{align}
\label{eq:ineqchainrule}
\begin{split}
\frac{8}{\lambda_* \beta} \int_{\mathcal{B}\big(\frac{a}{\sqrt{\beta}}, y\big)} \Big\langle \Grad{h}&\frac{f^2(1-\chi_{\varepsilon})^2}{W^y_2}, \Grad{h} W^y_2\Big\rangle_h \,d\tilde\nu
\\&= \frac{16}{\lambda_*\beta} \int_{\mathcal{B}\big(\frac{a}{\sqrt{\beta}}, y\big)} \frac{f(1-\chi_{\varepsilon})}{W^y_2}\langle \Grad{h}(f(1-\chi_{\varepsilon})), \Grad{h}W^y_2\rangle_h \,d\tilde\nu
\\&\quad - \frac{8}{\lambda_*\beta}\int_{\mathcal{B}\big(\frac{a}{\sqrt{\beta}}, y\big)} \frac{f^2(1-\chi_{\varepsilon})^2}{(W^y_2)^2} |\Grad{h}W^y_2|^2_h \,d\tilde\nu
\\&= \frac{8}{\lambda_* \beta} \int_{\mathcal{B}\big(\frac{a}{\sqrt{\beta}}, y\big)} |\Grad{h}(f(1-\chi_{\varepsilon}))|^2_h \,d\tilde\nu
\\&\quad- \frac{8}{\lambda_* \beta} \int_{\mathcal{B}\big(\frac{a}{\sqrt{\beta}}, y\big)} \Big|\Grad{h}(f(1-\chi_{\varepsilon}))- \frac{f(1-\chi_{\varepsilon})}{W^y_2}\Grad{h}W^y_2\Big|^2_h \,d\tilde\nu
\\&\leq \frac{8}{\lambda_* \beta} \int_{\mathcal{B}\big(\frac{a}{\sqrt{\beta}}, y\big)} |\Grad{h}(f(1-\chi_{\varepsilon}))|^2_h \,d\tilde\nu
\\&= \frac{8}{\lambda_*}\int_{\mathcal{B}\big(\frac{a}{\sqrt{\beta}}, y\big)} \tilde \Gamma(f(1-\chi_{\varepsilon})) \,d\tilde\nu.
\end{split}
\end{align}
Lastly, using the fact that 
\begin{equation}
\label{productGamma}
\tilde \Gamma(fg) \leq 2(f^2 \tilde \Gamma(g) + g^2 \tilde \Gamma(f)),\quad \forall f,g \in C^2(B),
\end{equation}
it holds that 
\begin{align*}
\sum_{y \in \tilde{\mathcal{S}}}\frac{8}{\lambda_*}\int_{\mathcal{B}\big(\frac{a}{\sqrt{\beta}}, y\big)} \tilde \Gamma(f(1-\chi_{\varepsilon})) \,d\tilde\nu &= \frac{8}{\lambda_*}\int_{G_2} \tilde \Gamma(f(1-\chi_{\varepsilon})) \,d\tilde\nu 
\\&\leq \frac{16}{\lambda_*} \left(\int_{G_2} f^2 \tilde \Gamma(\chi_{\varepsilon}) \,d\tilde\nu + \int_{G_2} (1-\chi_{\varepsilon})^2 \tilde \Gamma(f) \,d\tilde\nu\right)
\\&\leq \frac{16\lVert\tilde \Gamma(\chi_{\varepsilon})\rVert_\infty}{\lambda_*} \int_{G_2} f^2 \,d\tilde\nu + \frac{16}{\lambda_*}\int_{G_2} \tilde \Gamma(f) \,d\tilde\nu
\\&\leq \frac{16\lVert\tilde \Gamma(\chi_{\varepsilon})\rVert_\infty}{\lambda_*} \int_B f^2 \,d\tilde\nu + \frac{16}{\lambda_*}\int_B \tilde \Gamma(f) \,d\tilde\nu.
\end{align*}

Thus, we have concluded that 
\begin{equation}
\label{conclusion1}
2\int_{G_2} f^2(1-\chi_{\varepsilon})^2 d\tilde \nu \leq \frac{32\lVert\tilde \Gamma(\chi_{\varepsilon})\rVert_\infty}{\lambda_*} \int_B f^2 d\tilde \nu + \frac{32}{\lambda_*}\int_B \tilde \Gamma(f)d\tilde \nu.
\end{equation}

Moreover, in \cref{lem:deflyapunov1} we proved that $W_1$ is smooth, $W_1 \geq 1$ and 
\begin{equation}
\label{eq:lyapunovW1}
\frac{\tilde{\operatorname{L}}W_1(x)}{W_1(x)} \leq -A_2 \dim(B) + \frac{3}{2}A_2 \dim(B) \mathds{1}_{G_3}(x), \quad \forall x \in G_1^c \cup G_3.
\end{equation}
Thus, the second summand of \cref{firstinequality} can be bounded similarly, using \cref{eq:lyapunovW1} and \cref{Neumannboundarycondition}, since $\chi_{\varepsilon}$ vanishes on $\partial G_1$. Recall that $G_1^c = G_3 \cup G_4$. Thus, 
\begin{align*}
\int_{G_1^c} f^2 \chi_{\varepsilon}^2 \,d\tilde\nu &\leq \frac{1}{A_2 \dim(B)} \int_{G_1^c} \frac{-\tilde{\operatorname{L}}W_1}{W_1} f^2 \chi_{\varepsilon}^2 \,d\tilde\nu + \frac{3}{2} \int_{G_3} f^2 \,d\tilde\nu
\\&= \frac{1}{A_2 \beta \dim(B)} \int_{G_1^c} \Big\langle \Grad{h}\frac{f^2 \chi_{\varepsilon}^2}{W_1}, \Grad{h}W_1\Big\rangle_h \,d\tilde\nu +  \frac{3}{2} \int_{G_3} f^2 \,d\tilde\nu
\\&\leq \frac{1}{A_2 \dim(B)} \int_{G_1^c} \tilde \Gamma(f\chi_{\varepsilon}) \,d\tilde\nu + \frac{3}{2}\int_{G_3} f^2 \,d\tilde\nu, 
\end{align*}
where the last inequality is obtained following an analogous procedure as in \cref{eq:ineqchainrule}. Now, since we already proved that $(G_3, \left.\tilde\nu\right|_{G_3}, \tilde \Gamma|_{G_3})$ satisfies a $\textup{PI}(\frac{\lambda_*}{8})$, it holds that 
\[
\int_{G_3} f^2 \,d\tilde\nu = \frac{Z_{G_3}}{Z_B} \int_{G_3} f^2 \,\left.d\tilde\nu\right|_{G_3} \leq \frac{Z_{G_3}}{Z_B} \frac{8}{\lambda_*}\int_{G_3} \tilde \Gamma(f) \, \left.d\tilde \nu\right|_{G_3} = \frac{8}{\lambda_*}\int_{G_3} \tilde \Gamma(f) \,d\tilde\nu,
\]
where $Z_{G_3} = \int_{G_3} d\tilde\nu$ and $Z_B = \int_B d\tilde\nu$. Thus, 
\begin{align*}
\int_{G_1^c} f^2 \chi_{\varepsilon}^2 \,d\tilde\nu  &\leq \frac{1}{A_2 \dim(B)} \int_{G_1^c} \tilde \Gamma(f\chi_{\varepsilon}) \,d\tilde\nu + \frac{3}{2}\int_{G_3} f^2 \,d\tilde\nu 
\\&\leq \frac{1}{A_2 \dim(B)} \int_{G_1^c} \tilde \Gamma(f\chi_{\varepsilon}) \,d\tilde\nu + \frac{12}{\lambda_*}\int_{G_3} \tilde \Gamma(f) \,d\tilde\nu
\\&\leq \frac{1}{A_2 \dim(B)} \int_{B} \tilde \Gamma(f\chi_{\varepsilon}) \,d\tilde\nu + \frac{12}{\lambda_*}\int_B \tilde \Gamma(f) \,d\tilde\nu.
\end{align*}
Using \cref{productGamma} we can upper bound this expression,
\begin{align*}
\frac{1}{A_2 \dim(B)} \int_{B} \tilde \Gamma(f\chi_{\varepsilon}) \,d\tilde\nu + \frac{12}{\lambda_*}\int_B \tilde \Gamma(f) \,d\tilde\nu &\leq \frac{2\lVert\tilde \Gamma(\chi_{\varepsilon})\rVert_\infty}{A_2 \dim(B)} \int_{B} f^2 \,d\tilde\nu
\\&\quad + \left(\frac{2}{A_2 \dim(B)}+ \frac{12}{\lambda_*}\right) \int_{B} \tilde \Gamma(f) \,d\tilde\nu,
\end{align*}
which ultimately leads to
\begin{equation}
\label{conclusion2}
2\int_{G_1^c} f^2 \chi_{\varepsilon}^2 \,d\tilde\nu \leq  \frac{4\lVert\tilde \Gamma(\chi_{\varepsilon})\rVert_\infty}{A_2 \dim(B)} \int_{B} f^2 \,d\tilde\nu + \left(\frac{4}{A_2 \dim(B)} + \frac{24}{\lambda_*}\right) \int_{B} \tilde \Gamma(f) \,d\tilde\nu.
\end{equation}

Inserting \cref{conclusion1,conclusion2} into \cref{firstinequality}, we can conclude that
\begin{align*}
\int_B f^2 \,d\tilde\nu \leq 4\lVert\tilde \Gamma(\chi_{\varepsilon})\rVert_\infty\left(\frac{8}{\lambda_*} + \frac{1}{A_2 \dim(B)} \right)\int_{B} f^2 \,d\tilde\nu + \left(\frac{4}{A_2 \dim(B)} + \frac{56}{\lambda_*}\right) \int_{B} \tilde \Gamma(f) \,d\tilde\nu.
\end{align*}

Recall that $\chi_{\varepsilon}$ is such that
\[
\lVert\tilde \Gamma(\chi_{\varepsilon})\rVert_{\infty} \leq \frac{1}{\beta}\Big(\frac{2}{r} + \varepsilon\Big)^2,
\]
where $r = \frac{a}{\sqrt{\beta}}$. This implies that 
\[
\int_B f^2 \,d\tilde\nu \leq \frac{4}{\beta}\Big(\frac{2\sqrt{\beta}}{a} + \varepsilon\Big)^2\left(\frac{8}{\lambda_*} + \frac{1}{A_2 \dim(B)} \right)\int_{B} f^2 \,d\tilde\nu + \left(\frac{4}{A_2 \dim(B)} + \frac{56}{\lambda_*}\right) \int_{B} \tilde \Gamma(f) \,d\tilde\nu.
\]
Since the choice of $\varepsilon> 0$ was arbitrary, we can now take $\varepsilon \to 0$ and conclude that 
\[
\int_B f^2 \,d\tilde\nu \leq \frac{16}{a^2}\left(\frac{8}{\lambda_*} + \frac{1}{A_2 \dim(B)} \right)\int_{B} f^2 \,d\tilde\nu + \left(\frac{4}{A_2 \dim(B)} + \frac{56}{\lambda_*}\right) \int_{B} \tilde \Gamma(f) \,d\tilde\nu.
\]

Therefore, since $A_2 \dim(B) \geq 1 \geq \lambda_*$, it holds that
\begin{align*}
1 - \frac{16}{a^2} \left(\frac{8}{\lambda_*} + \frac{1}{A_2 \dim(B)} \right) &\geq 1 - \frac{16}{a^2} \frac{9}{\lambda_*}
\\&\geq 1 - \frac{144}{a^2 \lambda_*}
\\&\geq \frac{1}{2},
\end{align*}
where the last inequality follows from the assumption that $a^2 \geq \frac{288}{\lambda_*}$. Therefore, we know that 
\[
\int_B f^2 \,d\tilde\nu \leq \Big(\frac{8}{A_2 \dim(B)} + \frac{112}{\lambda_*}\Big) \int_{B} \tilde \Gamma(f) \,d\tilde\nu.
\]
Lastly, using again that $A_2 \dim(B) \geq 1 \geq \lambda_*$ we conclude that 
\[
\int_B f^2 \,d\tilde\nu \leq \frac{120}{\lambda_*} \int_{B} \tilde \Gamma(f) \,d\tilde\nu,
\]
and so the claim follows. 
\end{proof}

%% file: Chapters/Lifting_Poincare.tex
\section{Poincaré inequalities through Riemannian submersions}
\label{SectionLift}

In this section, we will study the Poincaré inequality in the context of Riemannian submersions\footnote{For a short introduction to Riemannian submersions, see \cref{SectionExamples}.}. In particular, we will show how, given a Riemannian submersion $\pi: (M, g) \to (B, h)$, the existence of a Poincaré inequality on the base space $B$ can be \textit{lifted} to obtain a Poincaré inequality on the total space $M$ and vice versa.

Let $(M, g)$ be a complete Riemannian manifold---not necessarily compact\footnote{Although in the rest of the sections of this work $M$ is always assumed to be compact, we have decided to write this section for any Riemannian manifold $(M, g)$, not necessarily compact, as we believe that the result shown here is relevant on its own.}---and let $G$ be some compact and connected Lie group acting smoothly, freely and isometrically on $M$. Let $h$ be the metric on $M/G$ for which the projection $\pi: (M, g) \to (M/G, h)$ is a Riemannian submersion. As in \cref{sec:intro,SectionPI}, we consider a smooth function $F : M \to \bb{R}$ which is constant in the fibers of $\pi$---i.e. $F(gx) = F(x)$ for every $x \in M$ and every $g \in G$---and let $\tilde F$ be the unique function on $M/G$ satisfying $F = \tilde F \circ \pi$. We again define the Gibbs distributions on $M$ and $M/G$ as
\[
\nu = \frac{1}{Z} e^{-\beta F},\quad \tilde \nu = \frac{1}{\tilde Z} e^{-\beta \tilde F},
\]
where $Z$ and $\tilde Z$ are assumed to be positive and finite. 

Let us assume that a Poincaré inequality holds for either $M$ or $M/G$. That is, there exists some constant $\kappa > 0$ such that for every $f \in C^2(M) \cap H^1(M, \nu)$,
\[
\int_M f^2 \, d\nu - \Bigg(\int_M f \, d\nu\Bigg)^2 \leq \frac{1}{\kappa\beta} \int_M |\textup{grad}_g \ f|^2_g \, d\nu,
\]
or there exists some constant $\tilde \kappa > 0$ such that for every $f \in C^2(M/G) \cap H^1(M/G, \tilde \nu)$,
\[
\int_{M/G} f^2 \, d\tilde \nu - \Bigg(\int_{M/G} f \, d\tilde \nu\Bigg)^2 \leq \frac{1}{\tilde \kappa\beta} \int_{M/G} |\textup{grad}_{h} \ f|^2_{h} \, d\tilde \nu,
\]
where $H^1(M, \nu), H^1(M/G, \tilde \nu)$ denote the Sobolev spaces\footnote{For a short introduction to Sobolev spaces on manifolds, we refer the reader to \cref{sec:obolevspaces}.} of square-integrable functions with square-integrable gradient on $M$ and $M/G$ with respect to $\nu$ and $\tilde \nu$, respectively. In either case, assuming that one of the inequalities holds, we will identify sufficient conditions under which the other holds as well. In order to do so, we will make use of two main results. The first one, which will be discussed in \cref{sec:fubini}, can be understood as a version of Fubini's theorem for Riemannian submersions. The second, which will be shown in \cref{sec:liftingandloweringpoincare}, guarantees that compact connected manifolds with non-negative Ricci curvature always verify a Poincaré inequality with respect to their normalized Riemannian measure. We will then prove the equivalence between a Poincaré inequality with respect to $\nu$ in $M$ and one with respect to $\tilde \nu$ in $M/G$.

Note that, when $M$---and thus $M/G$---is compact, $C^2(M) \cap H^1(M, \nu) = C^2(M)$ and $C^2(M/G) \cap H^1(M/G, \tilde \nu) = C^2(M/G)$, and so the above Poincaré inequalities correspond to the Poincaré inequalities introduced in \cref{defPoincareIneq} for the Markov triples $(M, \nu, \Gamma)$ and $(M/G, \tilde \nu, \tilde \Gamma)$, where 
\[
\Gamma(f) = \frac{1}{\beta} |\Grad{g}f|^2_g, \quad \tilde \Gamma(\tilde f) = \frac{1}{\beta} |\Grad{h}\tilde f|^2_{h},
\]
for every $f \in C^2(M)$ and every $\tilde f \in C^2(M/G)$.

\subsection{Integrals and Riemannian submersions}
\label{sec:fubini}

The following theorem will be used to derive a variation of Fubini's theorem in the context of Riemannian submersions.

\begin{theorem}[{\cite[Chapter III, Section 2, Theorem 1]{sulanke1972differentialgeometrie}}]
\label{thm:fubinimanifolds}
Let $M$ and $N$ be two smooth manifolds of dimensions $m$ and $n$, respectively ($m \geq n$) with compatible orientations. Let $\varphi : M \to N$ be a $C^1$ function. Assume that the set of critical values of $\varphi$ has measure zero. Let $f : M \to \bb{R}$ be a measurable function. Let $\omega \in \Omega^{m-n}(M)$ be an $(m-n)$-form on $M$, and let $\eta \in \Omega^n(N)$ be an $n$-form on $N$. If $f$ is integrable with respect to $\omega \wedge \varphi^*\eta$ on $M$---where $\varphi^* \eta$ denotes the pullback of $\eta$ by $\varphi$---then, for almost every $x \in N$, 
\[
F(x) := \begin{cases}\int_{\varphi^{-1}(x)} f\, \omega, \quad& \textup{if } \varphi^{-1}(x) \neq \emptyset,\\
0\quad& \text{otherwise,}\end{cases}
\]
is well-defined, measurable, integrable with respect to $\eta$ and 
\[
\int_M f\; \omega \wedge \varphi^* \eta = \int_N F\; \eta.
\]
Moreover, if 
\[
\hat{F}(x) := \begin{cases}
\int_{\varphi^{-1}(x)} |f|\, |\omega|,\quad& \textup{if } \varphi^{-1}(x) \neq \emptyset,\\
0\quad& \text{otherwise,}
\end{cases}
\]
is integrable with respect to $\eta$ on $N$, then $f$ is integrable with respect to $\omega \wedge \varphi^* \eta$ on $M$. 
\end{theorem}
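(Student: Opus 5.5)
We follow the classical route of the smooth coarea formula. Two cases arise. If $\varphi$ has no critical points, the result is local. If it has some, we discard a null set and reduce to the first case.

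\textbf{Step 1: local case, $\varphi$ a submersion.} Near any regular point $p\in M$ the constant rank theorem gives charts $(x^1,\dots,x^n,y^1,\dots,y^{m-n})$ on $M$ and $(x^1,\dots,x^n)$ on $N$ in which $\varphi(x,y)=x$. These charts can be chosen compatible with the given orientations. Write $\eta=e(x)\,dx^1\wedge\dots\wedge dx^n$. Then
\[
\varphi^*\eta=e(x)\,dx^1\wedge\dots\wedge dx^n .
\]
In the wedge $\omega\wedge\varphi^*\eta$, every term of $\omega$ containing some $dx^i$ is killed. Hence
\[
\omega\wedge\varphi^*\eta=\pm\, w(x,y)\,e(x)\,dy\wedge dx ,
\]
where $w(x,y)$ is the coefficient of $dy^1\wedge\dots\wedge dy^{m-n}$ in $\omega$. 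The sign is fixed by the orientation conventions. The restriction of $\omega$ to the fibre $\{x\}\times\bb{R}^{m-n}$ is exactly $w(x,\cdot)\,dy$. The classical Fubini--Tonelli theorem on $\bb{R}^n\times\bb{R}^{m-n}$ then gives both conclusions for $f$ supported in the chart:
\[
\int f\,\omega\wedge\varphi^*\eta=\int\Big(\int_{\varphi^{-1}(x)}f\,\omega\Big)e(x)\,dx .
\]
These are: for a.e.\ $x$ the inner integral exists, it is measurable in $x$, and the formula above holds. Tonelli applied to $|f|\,|w|\,|e|$ gives the converse integrability statement.

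\textbf{Step 2: globalize.} Let $\{\psi_\alpha\}$ be a countable partition of unity on the open set of regular points, subordinate to such charts. Apply Step 1 to each $f\psi_\alpha$ and sum.
\begin{itemize}
\item For $f\ge0$, monotone convergence justifies interchanging the sum with both integrals.
\item For general $f$, split $f=f^+-f^-$. Integrability of $f$ makes each side finite, and the countable union of the exceptional null sets is still null.
\item Measurability of the fibre integral $F$ follows because it is a countable sum of measurable functions.
\end{itemize}
This also yields the ``moreover'' statement: Tonelli applied to $\hat F$ bounds $\int_M|f|\,|\omega\wedge\varphi^*\eta|$.

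\textbf{Step 3: critical points (main obstacle).} Let $C\subset M$ be the critical set. At each $p\in C$, $d\varphi_p$ is not surjective, so $(\varphi^*\eta)_p=0$. Hence the measure $|\omega\wedge\varphi^*\eta|$ vanishes on $C$, and $\int_M$ equals $\int_{M\setminus C}$. On the other side, by hypothesis $\varphi(C)$ has $\eta$-measure zero. For $x\notin\varphi(C)$, the fibre $\varphi^{-1}(x)$ lies entirely in the regular set. There it is an embedded $(m-n)$-submanifold, and its integral coincides with the one computed in Steps 1--2. Changing $F$ on the null set $\varphi(C)$ does not affect $\int_N F\,\eta$.

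The delicate points are the following:
\begin{itemize}
\item Justifying that the fibre integral is well defined, including its orientation. Fibres carry the orientation induced from those of $M$ and $N$, and we adopt this convention consistently with the sign in Step 1.
\item Handling only $C^1$ regularity of $\varphi$, so that the charts of Step 1 are only $C^1$. Change of variables for $C^1$ diffeomorphisms still suffices.
\end{itemize}
I expect this bookkeeping of orientations and null sets to be the main technical obstacle. Since the statement is quoted from Sulanke--Wintgen, in the paper it is enough to cite it and indicate this argument.
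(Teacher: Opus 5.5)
The paper gives no proof of this theorem: it is quoted from Sulanke--Wintgen and used only in \cref{LiftingSobolev}. There $\varphi=\pi$ is a Riemannian submersion, so the critical set $C$ is empty, and $\eta$ is the base volume form; hence only your Steps~1--2 are needed for the paper. Your sketch is the standard argument and it is correct: Fubini in submersion charts, a countable partition of unity, then discarding $C$ (where $\varphi^*\eta=0$) and $\varphi(C)$ (which is null).

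Two points should be tightened.

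\textbf{Signs.} $\omega$ and $\eta$ are signed forms, so $f\ge 0$ does not make $f\,\omega\wedge\varphi^*\eta$ a positive measure. The monotone-convergence step must therefore be run with $|f|$, $|\omega|$ and $|\eta|$. This yields $\int_M |f|\,|\omega\wedge\varphi^*\eta| = \int_N \hat F\,|\eta|$, hence $\hat F<\infty$ almost everywhere, and it also gives the ``moreover'' part. The signed identity then follows by dominated convergence from $\sum_\alpha |F_\alpha| \le \hat F$, where $F_\alpha$ is the fibre integral of $f\psi_\alpha\,\omega$. The splitting $f=f^+-f^-$ alone does not achieve this.

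\textbf{Meaning of ``almost every $x$''.} What you actually obtain is almost every with respect to $|\eta|$, not Lebesgue measure. On the zero set of $\eta$, Fubini gives no information about the fibre integrals. For instance, if $\eta\equiv 0$ then every measurable $f$ is integrable, yet $\int_{\varphi^{-1}(x)} f\,\omega$ may diverge for all $x$. So either read the statement this way or assume $\eta$ nowhere vanishing, as in the paper's application.
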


In particular, if we take $\varphi$ in \cref{thm:fubinimanifolds} to be a Riemannian submersion between two Riemannian manifolds, and define $\omega$ as the volume form of the fibers of $\varphi$, and $\eta$ as the volume form of the base space with respect to the induced metric, we obtain the following result.

\begin{corollary}
\label{LiftingSobolev}
Let $\pi : (M, g) \rightarrow (N, h)$ be a Riemannian submersion. Let us denote by $\hat{g}(p) := g|_{\pi^{-1}(p)}$ the induced metric on the fiber $\pi^{-1}(p)$, for every $p \in N$. For any integrable function $f$ on $M$ it holds that $\int_{\pi^{-1}(x)} f(y)\; \textup{dVol}_{\hat{g}(x)}(y)$ is integrable on $N$ and
\[
\int_M f(x)\; \textup{dVol}_g(x) = \int_N \left(\int_{\pi^{-1}(p)} f(y)\; \textup{dVol}_{\hat{g}(p)}(y) \right)\; \textup{dVol}_h(p).
\]
Moreover, if 
\[
\hat{F}(p) := \int_{\pi^{-1}(p)} |f(y)|\,\textup{dVol}_{\hat g(p)}(y)
\]
is integrable on $N$, then $f$ is integrable on $M$. 
\end{corollary}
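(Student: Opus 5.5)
The plan is to deduce \cref{LiftingSobolev} from \cref{thm:fubinimanifolds} with $\varphi = \pi$, $m = \dim M$, $n = \dim N$, and suitable forms $\omega$ and $\eta$. Since $\pi$ is a submersion, every point of $N$ in its image is a regular value, so the set of critical values is empty and the measure-zero hypothesis holds trivially. Each fiber $\pi^{-1}(p)$ is therefore an embedded submanifold of dimension $m-n$ carrying the induced metric $\hat g(p)$.

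Orientation is the first technical point, since \cref{thm:fubinimanifolds} is phrased for oriented manifolds and differential forms. Volume forms $\textup{dVol}$ should be read as Riemannian densities. The statement is local: both sides are countably additive in $f$. I will therefore cover $N$ by countably many oriented coordinate open sets $V_i$ over which $\pi^{-1}(V_i)$ is orientable. This works because the fibers are locally a product: by the submersion normal form, locally $M \cong V \times \bb{R}^{m-n}$. Alternatively, I can pass to the orientation double covers and halve both sides. Using a measurable partition subordinate to this cover, it suffices to prove the identity for $f$ supported in such a piece.

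On such a piece I take $\eta = \textup{dVol}_h$ and define $\omega$ as the $(m-n)$-form $\omega := \iota_{H_1 \wedge \dots \wedge H_n}\textup{dVol}_g$. Here $H_1, \dots, H_n$ are the horizontal lifts of a local oriented $h$-orthonormal frame $e_1, \dots, e_n$ on $N$. The key computation is pointwise linear algebra. Fix $x$, take an orthonormal basis $V_1, \dots, V_{m-n}$ of the vertical space $\ker d\pi_x$, and note that $H_1, \dots, H_n$ form an orthonormal basis of the horizontal space. This uses the defining property of a Riemannian submersion: $d\pi_x$ restricted to the horizontal space is an isometry onto $T_{\pi(x)}N$. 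Then three facts follow.

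\begin{itemize}
\item The form $\pi^*\eta$ annihilates vertical vectors and satisfies $\pi^*\eta(H_1, \dots, H_n) = \eta(e_1, \dots, e_n) = 1$.
\item The restriction of $\omega$ to the fiber equals $\textup{dVol}_{\hat g}$ up to sign, because $\omega(V_1, \dots, V_{m-n}) = \pm\textup{dVol}_g(H_1, \dots, H_n, V_1, \dots, V_{m-n}) = \pm 1$.
\item Evaluating on the full orthonormal basis gives $(\omega \wedge \pi^*\eta)(V_1, \dots, V_{m-n}, H_1, \dots, H_n) = \pm 1$, so $\omega \wedge \pi^*\eta = \pm\textup{dVol}_g$. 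Only the term pairing all vertical vectors with $\omega$ survives in this evaluation.
\end{itemize}

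Choosing the fiber orientations compatibly fixes all signs to $+$. With this choice, the function $F(p)$ produced by \cref{thm:fubinimanifolds} is exactly $\int_{\pi^{-1}(p)} f \,\textup{dVol}_{\hat g(p)}$, and the equality follows on each piece. Summing over pieces by dominated convergence, applied to $|f|$ first, gives the global statement.

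The converse statement (integrability of $\hat F$ implies integrability of $f$) is the second clause of \cref{thm:fubinimanifolds} applied on each piece, because $|\omega|$ restricted to the fibers is the fiber density. Alternatively, it follows directly from the Tonelli version of the identity applied to $|f|$ with monotone convergence.

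I expect the main obstacle to be the bookkeeping of orientations and the reduction to local pieces. The core pointwise identity $\omega \wedge \pi^*\eta = \textup{dVol}_g$ is the genuine content, and it is a short computation once the horizontal isometry property is in hand.
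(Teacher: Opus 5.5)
Your proposal is correct and follows the paper's route: apply \cref{thm:fubinimanifolds} with $\varphi = \pi$, note that a submersion has no critical values, and check $\omega \wedge \pi^*\eta = \textup{dVol}_g$ from the orthogonal horizontal/vertical splitting; your explicit choice $\omega = \iota_{H_1\wedge\dots\wedge H_n}\textup{dVol}_g$ makes rigorous what the paper glosses over, since the fiberwise form $\textup{dVol}_{\hat g}$ is not by itself an $(m-n)$-form on $M$, and the paper does not address orientations at all. One small correction: open sets $V_i \subset N$ with $\pi^{-1}(V_i)$ orientable need not exist (e.g.\ for the projection $N \times \mathbb{RP}^2 \to N$), so do the localization on $M$ instead, using the submersion charts $U \cong V \times W$ you already invoke, which are orientable and whose fibers are open pieces of the fibers of $\pi$.
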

\begin{proof}
Let $\eta =  \textup{dVol}_{h(\pi(x))}$, and $\omega = \textup{dVol}_{\hat{g}(\pi(x))}$. Since $\pi$ is a Riemannian submersion, it has no critical points and 
\[
\textup{dVol}_{\hat{g}(\pi(x))} \wedge \pi^* \textup{dVol}_{h(\pi(x))} = \textup{dVol}_{g(x)},
\]
which follows from the decomposition of $g$ into its horizontal and vertical components (cf. \cref{chapterRiemannianSubmersions}). 
\end{proof}

Two straightforward consequences of \cref{LiftingSobolev} arise when considering $G$-bundles with compact isometric fibers. Firstly, we can strengthen the result obtained in \cref{thm:fubinimanifolds} to ensure regularity of
\[
x \mapsto \int_{\pi^{-1}(x)} f(y)\, \textup{dVol}_{\hat{g}(x)}(y),
\]
provided that $f$ is sufficiently regular. Secondly, given a function $\tilde F$ on $B$, and $F$ defined on $M$ verifying $F = \tilde F \circ \pi$, we can relate the normalizing constant of the Gibbs measure associated with $F$ with that of the Gibbs measure associated with $\tilde F$ and the volume of $G$. 
\begin{proposition}[Upgraded regularity]
\label{rem:upgradedregularity}
Consider a principal $G$-bundle which is also a Riemannian submersion, $\pi: (M, g) \to (M/G, h)$, where $G$ is a compact Lie group acting isometrically on $(M, g)$. Further assume that the fibers of $\pi$ are all isometric to some fixed compact Lie group $(G, \hat{g})$. For every function $f \in C^k(M)$, the function
\[
\hat{F}(x) := \int_{\pi^{-1}(x)} f(y)\, \textup{dVol}_{\hat{g}}(y)
\]
is well-defined and belongs to $C^k(M/G)$. 
\end{proposition}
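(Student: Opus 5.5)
The natural route is to exploit the principal bundle structure and reduce the fiber integral to an integral over the fixed group $G$ via local trivializations. Fix $p_0 \in M/G$. Since $\pi : M \to M/G$ is a principal $G$-bundle, there is an open neighborhood $U \ni p_0$ and a smooth local section $\sigma : U \to M$ with $\pi \circ \sigma = \mathrm{id}_U$. This gives a diffeomorphism
\[
\Phi : U \times G \to \pi^{-1}(U), \qquad \Phi(p, a) := a\cdot\sigma(p),
\]
and for each $p \in U$ the restriction $\Phi(p, \cdot) : G \to \pi^{-1}(p)$ is a diffeomorphism onto the fiber. The goal is to write, for $p \in U$,
\[
\hat F(p) = \int_G f(a\cdot \sigma(p))\, J(p, a)\, d\mu(a),
\]
where $\mu$ is a fixed (Haar/Riemannian) measure on $G$ and $J$ is the Jacobian of $\Phi(p,\cdot)$ relative to $\mu$ and $\textup{dVol}_{\hat g(p)}$. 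Well-definedness is immediate, since each fiber is a compact embedded submanifold (a closed orbit of the compact group) and $f$ is continuous on it.

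The key step is to show that $J$ is smooth in $(p,a)$, and in fact independent of $a$ under the right normalization. Since $G$ acts by isometries of $g$, left translation $y \mapsto b\cdot y$ maps $\pi^{-1}(p)$ to itself isometrically for the induced metric $\hat g(p)$. Hence the pullback of $\textup{dVol}_{\hat g(p)}$ under the orbit map $a \mapsto a\cdot\sigma(p)$ is a left-invariant (or right-invariant, depending on convention) density on $G$. On a compact group this is a constant multiple $c(p)$ of Haar measure, and one can take $c(p) = \mathrm{Vol}_{\hat g(p)}(\pi^{-1}(p))/\mu(G)$. By the hypothesis that all fibers are isometric to $(G,\hat g)$, this volume is constant. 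Even without that hypothesis, $c(p)$ equals the determinant of the Gram matrix of $\hat g(p)$ evaluated on the fundamental vector fields at $\sigma(p)$. That Gram matrix depends smoothly on $p$ because $g$, $\sigma$, and the action are smooth, so $c$ is smooth and positive.

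With this, $\hat F(p) = c(p)\int_G f(a\cdot\sigma(p))\, d\mu(a)$. The integrand $(p,a) \mapsto f(\Phi(p,a))$ is $C^k$ on $U \times G$, and $G$ is compact. Differentiation under the integral sign, via dominated convergence with derivatives up to order $k$ bounded uniformly on $\overline{V}\times G$ for a relatively compact $V \Subset U$, gives $p \mapsto \int_G f(\Phi(p,a))\,d\mu(a) \in C^k(V)$. Multiplying by the smooth $c$ preserves $C^k$, and covering $M/G$ by such $V$ gives $\hat F \in C^k(M/G)$.

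The main obstacle is the Jacobian identification: making sure the fiber volume form really pulls back to a Haar multiple with smooth dependence on $p$. Invariance under the isometric action is what handles this. If the invariance argument is awkward with a given left/right convention, I would fall back on computing $J(p,a)$ directly as $\sqrt{\det}$ of the pulled-back metric in a fixed basis of $\mathfrak g$. That expression is manifestly smooth in $(p,a)$ and positive, which is all the differentiation-under-the-integral argument needs.
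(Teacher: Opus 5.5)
Your proof is correct and follows the same route as the paper: trivialize locally over $U$, rewrite $\hat F$ as an integral over the fixed compact group $G$, and differentiate under the integral sign using uniform bounds on $\overline{V}\times G$. What you add is the justification, via invariance under the isometric action, that the fiber volume form pulls back to $c(p)$ times Haar measure, with $c$ constant under the isometric-fibers hypothesis — a step the paper leaves implicit when it writes $\int_G f(x,y)\,\textup{dVol}_{\hat g}(y)$; in your general formula, $c(p)$ is the square root of the Gram determinant rather than the determinant itself, which does not affect smoothness or positivity.
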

\begin{proof}
For every $x_0 \in M/G$ we can consider an open neighborhood $U$ around $x_0$ and a local trivialization, so that in local coordinates 
\[
\hat F(x) = \int_{\pi^{-1}(x)} f(y) \,\textup{dVol}_{\hat{g}}(y) = \int_{G} f(x, y)\, \textup{dVol}_{\hat{g}}(y),
\]
for every $x \in U$. Now, since $G$ is compact and $f$ is continuous on $M$, we can apply the dominated convergence theorem to conclude that $\hat F$ is continuous at $x_0$. Thus, $\hat F$ is continuous in $M/G$. Furthermore, if $f$ is assumed to be in $C^k(M)$, then, using the compactness of $G$ and the regularity of $f$, we can differentiate under the integral sign to conclude that $\hat F$ is $C^k(M/G)$. 
\end{proof}

Lastly, let us state the variation of Fubini's theorem in $G$-bundles which are also Riemannian submersions, when integrating with respect to the Gibbs measure associated with a function which is constant in the fibers of the submersion $\pi$. As we mentioned previously, this will allow us to obtain a relation between the normalizing constants of the Gibbs measures and the volume of $G$. 

\begin{proposition}[Fubini's theorem for Gibbs measures]
\label{Fubini}
Consider a principal $G$-bundle which is also a Riemannian submersion, $\pi: (M, g) \to (M/G, h)$. Further assume that the fibers of $\pi$ are all isometric to some compact Lie group $(G, \hat{g})$. Let $F$ be some function on $M$ that is constant on the fibers of $\pi$, and $\tilde{F}$ be such that $\tilde{F} \circ \pi = F$. Consider the Gibbs measures $d\nu(x) = \frac{1}{Z}e^{-\beta F(x)}\textup{dVol}_g(x)$ on $M$ and $d\tilde{\nu}(x) = \frac{1}{\tilde{Z}}e^{-\beta \tilde{F}(x)}\textup{dVol}_{h}(x)$ on $M/G$. Provided both $\nu$ and $\tilde \nu$ are probability measures, given any integrable function $f$ on $M$ with respect to $\nu$, the function 
\[
\tilde f: x \mapsto \tilde f(x) = \int_{\pi^{-1}(x)} f(y)\, \textup{dVol}_{\hat g}(y)
\]
is integrable on $M/G$ with respect to $\tilde \nu$ and 
\[
\int_M f(x) d\nu(x) = \frac{\tilde{Z}}{Z}\int_{M/G} \tilde f(x)\, d\tilde{\nu}(x).
\]
\end{proposition}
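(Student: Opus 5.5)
The plan is to apply \cref{LiftingSobolev} to the function $x \mapsto f(x)e^{-\beta F(x)}$ and then use that $F$ is constant on the fibers, so the Gibbs weight factors out of each fiber integral.

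First, since $f$ is $\nu$-integrable, the function $f e^{-\beta F}$ is integrable with respect to $\textup{dVol}_g$. Indeed, $\int_M |f| e^{-\beta F}\,\textup{dVol}_g = Z\int_M |f|\,d\nu < \infty$, where $0<Z<\infty$ because $\nu$ is a probability measure. By \cref{LiftingSobolev}, the fiber integrals
\[
p \mapsto \int_{\pi^{-1}(p)} f(y)e^{-\beta F(y)}\,\textup{dVol}_{\hat g}(y)
\]
are then defined for almost every $p$, are integrable on $M/G$ with respect to $\textup{dVol}_h$, and satisfy
\[
\int_M f e^{-\beta F}\,\textup{dVol}_g = \int_{M/G}\Big(\int_{\pi^{-1}(p)} f(y)e^{-\beta F(y)}\,\textup{dVol}_{\hat g}(y)\Big)\textup{dVol}_h(p).
\]
On the fiber $\pi^{-1}(p)$ we have $F(y)=\tilde F(p)$ by \cref{modassumption3.1}, so the inner integral equals $e^{-\beta\tilde F(p)}\tilde f(p)$. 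This also shows that $\tilde f$ is well defined for almost every $p$.

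Next, we divide both sides by $Z$ and rewrite the right-hand side as
\[
\frac{\tilde Z}{Z}\int_{M/G}\tilde f\,\frac{e^{-\beta\tilde F}}{\tilde Z}\,\textup{dVol}_h = \frac{\tilde Z}{Z}\int_{M/G}\tilde f\,d\tilde\nu .
\]
This is exactly the claimed identity.

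For integrability of $\tilde f$ with respect to $\tilde\nu$, we apply the same argument to $|f|$. This gives
\[
\int_{M/G}\Big(\int_{\pi^{-1}(p)}|f|\,\textup{dVol}_{\hat g}\Big)e^{-\beta\tilde F(p)}\,\textup{dVol}_h(p) = Z\int_M|f|\,d\nu<\infty .
\]
Combined with $|\tilde f|\le \int_{\pi^{-1}(p)}|f|\,\textup{dVol}_{\hat g}$, this yields $\int_{M/G}|\tilde f|\,d\tilde\nu<\infty$.

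The only delicate point is the almost-everywhere definition of $\tilde f$, which \cref{LiftingSobolev} supplies. The positivity of $\tilde Z$ is assumed in the statement, and $M$ need not be compact because only integrability is used.
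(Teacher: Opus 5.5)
Your proposal is correct and follows the paper's proof: apply \cref{LiftingSobolev} to $f e^{-\beta F}$, pull the fiberwise-constant weight $e^{-\beta \tilde F(p)}$ out of each fiber integral, and renormalize by $\tilde Z$. Your separate check of $\tilde\nu$-integrability of $\tilde f$, done by running the same argument on $|f|$, is a useful addition, since the paper's chain of equalities leaves that step implicit.
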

\begin{proof}
The result simply follows by rewriting 
\begin{align*}
\int_M f(x) d\nu(x) &= \frac{1}{Z}\int_M f(x)e^{-\beta F(x)}\; \textup{dVol}_{g}(x) 
\\&= \frac{1}{Z}\int_{M/G} \left(\int_{\pi^{-1}(x)} f(y)e^{-\beta F(y)}\; \textup{dVol}_{\hat{g}}(y)  \right)\; \textup{dVol}_{h}(x) 
\\&= \frac{1}{Z}\int_{M/G} \left(\int_{\pi^{-1}(x)} f(y)\; \textup{dVol}_{\hat{g}}(y)  \right)\; e^{-\beta \tilde{F}(x)}\; \textup{dVol}_{h}(x) 
\\&= \frac{\tilde{Z}}{Z}\int_{M/G} \left(\int_{\pi^{-1}(x)} f(y)\; \textup{dVol}_{\hat{g}}(y)  \right)\; d\tilde{\nu}(x)
\\&= \frac{\tilde{Z}}{Z}\int_{M/G} \tilde f(x)\, d\tilde{\nu}(x).
\end{align*}
\end{proof}

From this expression, we obtain the following relation between the volume of $G$ and the normalizing constants $Z$ and $\tilde{Z}$ when considering $f \equiv 1$. 
\begin{equation}
\label{volumenzetas}
1 = \int_M d\nu(x) = \frac{\tilde{Z}}{Z}\int_{M/G} \textup{Vol}_{\hat{g}}(G)\; d\tilde{\nu}(x) = \frac{\tilde{Z}}{Z} \textup{Vol}_{\hat{g}}(G).
\end{equation}

\subsection{Lifting and lowering a Poincaré inequality}
\label{sec:liftingandloweringpoincare}

The following result allows us to guarantee that every compact and connected manifold of non-negative Ricci curvature satisfies a Poincaré inequality with respect to its normalized Riemannian measure, i.e. the probability measure associated with its volume form, which corresponds to the Gibbs measure at infinite temperature. Note that this result on its own is not useful in our setting; indeed, we are interested in proving the existence of a Poincaré inequality with respect to the Gibbs measure associated with a given function $F$ at finite temperature, which in general is different from the Riemannian measure. Nevertheless, this result is crucial for the lifting process of the Poincaré inequality, as will be shown in the proof of \cref{thmliftPI}. 

\begin{theorem}[{\cite{zhong1984estimate}}]
\label{teoremaPIGenerico}
Let $(N,h)$ be a compact, connected Riemannian manifold of finite volume $\textup{Vol}(N)$. Assume that $N$ has non-negative Ricci curvature. Let $d\mu = \frac{\textup{dVol}_h}{\textup{Vol}(N)}$ be the normalized Riemannian measure on $(N, h)$ and let $D$ be its diameter. Then  the Markov triple $(N, \mu, |\Grad{h}\cdot |_h^2)$ satisfies a $\textup{PI}(\frac{\pi^2}{D^2})$, i.e. for every $f \in C^2(N)$,
\[
\int_N f^2\; d\mu - \left(\int_N f\; d\mu \right)^2 \leq \frac{D^2}{\pi^2}\int_N |\Grad{h}f|_h^2 \; d\mu,
\]
which can be rewritten as
\[
\int_N f^2\; \textup{dVol}_h - \frac{1}{\textup{Vol}(N)}\left(\int_N f\; \textup{dVol}_h  \right)^2 \leq \frac{D^2}{\pi^2}\int_N |\Grad{h} f|_h^2 \; \textup{dVol}_h .
\]
\end{theorem}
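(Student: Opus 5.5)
The statement is the Zhong--Yang sharp lower bound for the first nonzero eigenvalue, and I would prove it by the classical gradient-estimate route rather than by any argument specific to this paper.

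\textbf{Reduction to an eigenvalue bound.} Since $N$ is compact and connected, $-\Delta_h$ has discrete spectrum $0=\lambda_0<\lambda_1\le\lambda_2\le\dots$ on $L^2(N,\mu)$. Its eigenfunctions form an orthonormal basis, and constants span the kernel. Expanding $f\in C^2(N)$ in this basis and using $\int_N|\Grad{h}f|_h^2\,d\mu=\int_N f(-\Delta_h f)\,d\mu$ gives
\[
\int_N f^2\,d\mu-\Big(\int_N f\,d\mu\Big)^2\le \frac{1}{\lambda_1}\int_N|\Grad{h}f|_h^2\,d\mu .
\]
So it suffices to show $\lambda_1\ge \pi^2/D^2$. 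The second form of the inequality in the statement follows by multiplying through by $\textup{Vol}(N)$.

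\textbf{Gradient estimate for a first eigenfunction.} Fix an eigenfunction $u$ with $\Delta_h u=-\lambda_1 u$. Replacing $u$ by $-u$ if needed and rescaling, I normalize so that $\max u=1$ and $\min u=-k$ with $0<k\le 1$; here $\min u<0$ because $\int_N u\,d\mu=0$. Write $u=\sin\theta$, or more precisely work with $\tilde u=(u-\tfrac{1-k}{2})/\tfrac{1+k}{2}$, which ranges over $[-1,1]$.

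The first goal is the Li--Yau type estimate $|\Grad{h}\theta|_h^2\le\lambda_1$. For this I apply the maximum principle to $Q=|\Grad{h}u|_h^2/(1-u^2)$ or a close variant. At an interior maximum, the Bochner formula
\[
\tfrac12\Delta_h|\Grad{h}u|_h^2=|\nabla^2u|_h^2+\langle\Grad{h}u,\Grad{h}\Delta_h u\rangle_h+\textup{Ric}_h(\Grad{h}u,\Grad{h}u),
\]
together with $\textup{Ric}_h\ge0$ and the Cauchy--Schwarz bound $|\nabla^2u|_h^2\ge(\nabla^2u[e_1,e_1])^2$ in the direction $e_1=\Grad{h}u/|\Grad{h}u|_h$, yields the inequality.

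Integrating $|\Grad{h}\theta|_h\le\sqrt{\lambda_1}$ along a minimizing geodesic from a minimum point of $u$ to a maximum point, whose length is at most $D$, gives $\sqrt{\lambda_1}\,D\ge \pi/2+\arcsin k$. This yields $\lambda_1\ge\pi^2/D^2$ only when $k=1$. In general it gives the weaker Li--Yau constant $\pi^2/(4D^2)$, or $\pi^2/(2D^2)$ after symmetrization.

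\textbf{Sharpening: the main obstacle.} The hard step is removing the loss caused by the asymmetry $k<1$, which is the genuine content of Zhong--Yang. I would follow their barrier method. Set $a=(1-k)/(1+k)$ and consider
\[
P=|\Grad{h}\tilde u|_h^2-\lambda_1(1+a\,\psi(\tilde u))(1-\tilde u^2)
\]
for an explicit one-variable function $\psi$ on $[-1,1]$ solving the ODE that arises from the one-dimensional model $u''=-\lambda u$ on an interval. The standard choice is
\[
\psi(\theta)=\frac{4}{\pi}\,\frac{\theta+\cos\theta\sin\theta-\tfrac{\pi}{2}\sin\theta}{\cos^2\theta}
\]
in the variable $\tilde u=\sin\theta$. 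I would verify the key properties $\psi(\pm1)=\pm$ finite, $|\psi|\le1$, and the differential inequality that makes the maximum principle close. With these, the Bochner computation at a maximum of $P$, again using only $\textup{Ric}_h\ge0$, shows $P\le0$.

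Finally, I integrate $d\theta/\sqrt{1+a\psi(\sin\theta)}$ along the geodesic from the minimum point to the maximum point. Using that $\psi$ is odd, the $a$-dependent terms cancel to first order. What remains is the bound $\pi\le\sqrt{\lambda_1}\,D$, that is, $\lambda_1\ge\pi^2/D^2$, which completes the proof via the reduction above.
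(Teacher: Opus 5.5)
The paper gives no proof of this statement; it simply imports Zhong--Yang's estimate $\lambda_1\ge\pi^2/D^2$. Your spectral reduction is the correct way to turn that estimate into the stated variance inequality, and the second form is just the first multiplied by $\textup{Vol}(N)$. So if you cite Zhong--Yang for the sharp eigenvalue bound, as the paper does, your proposal is complete, and the Li--Yau part is also fine. What goes beyond the paper is your outline of the sharp step. There the decisive maximum-principle computation is only asserted, and several points would need repair to make it self-contained.

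First, run the maximum principle not on $P$ itself but on $G=|\Grad{h}\theta|_h^2-\lambda_1 a\psi(\theta)$. Do this after the usual regularization $\tilde u\mapsto\tilde u/(1+\varepsilon)$, which makes $\theta=\arcsin\tilde u$ smooth, and argue by contradiction from $\max G=\lambda_1 c$ with $c>1$. For $P$ this is the contact argument with the optimal constant in place of $1$. At a maximum of $P$ with $P>0$, the weight $1-\tilde u^2$ adds first-order terms proportional to $P$ that your listed properties do not control.

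Second, the properties of $\psi$ actually needed are:
\begin{itemize}
\item the ODE $\tfrac{1}{2}\psi''\cos^2\theta-\psi'\sin\theta\cos\theta-\psi=-\sin\theta$;
\item the boundary values $\psi(\pm\tfrac{\pi}{2})=\pm1$ (your ``$\psi(\pm1)$'' mixes the variables $\theta$ and $\tilde u$);
\item $|\psi|\le1$;
\item the monotonicity $\psi'\ge0$, which your list omits.
\end{itemize}
Let $e_1$ be the unit direction of $\Grad{h}\theta$. At the maximum point, Bochner's formula with $\textup{Ric}_h\ge0$ and $|\nabla^2\theta|_h^2\ge\nabla^2\theta(e_1,e_1)^2$, combined with the ODE, gives
\[
(c-1)\bigl(2(c+a\psi)-a\psi'\sin\theta\cos\theta\bigr)+a^2\psi'\cos\theta\,(1-\psi\sin\theta)+\tfrac{1}{2}a^2\psi'^2\cos^2\theta\le0 .
\]
It is precisely $\psi'\ge0$, $|\psi|\le1$ and a bound on $\psi'$ that make the left-hand side positive when $c>1$, which is the contradiction.

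Finally, the integration step is exact rather than a first-order cancellation. By oddness of $\psi$ and convexity of $x\mapsto(1+x)^{-1/2}$,
\[
\int_{-\pi/2}^{\pi/2}\frac{d\theta}{\sqrt{1+a\psi(\theta)}}=\int_{0}^{\pi/2}\Bigl(\frac{1}{\sqrt{1+a\psi}}+\frac{1}{\sqrt{1-a\psi}}\Bigr)d\theta\ge\pi .
\]
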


We now proceed to state and prove the main two results of the section, namely \cref{thmliftPI,thmlowerPI}. The former will allow us to obtain a Poincaré inequality in the total space of a Riemannian submersion, provided the base space satisfies a Poincaré inequality, while the latter will allow us to do the opposite. Recall that, as we have mentioned previously, these Poincaré inequalities are always considered with respect to the Gibbs measure associated with a function which is constant in the fibers of the Riemannian submersion.

\begin{theorem}
\label{thmliftPI}
Let $(M, g)$ be a complete manifold and let $\pi: (M, g) \to (M/G, h)$ be a principal $G$-bundle that is also a Riemannian submersion, where $G$ is connected and acts isometrically on $(M, g)$. Assume that the fibers are all isometric to a compact Lie group $(G, \hat{g})$. Further assume that $(G, \hat{g})$ has non-negative Ricci curvature. Let $\textup{diam}(G)$ be the diameter of $G$. Consider a $C^2$ function $F: M \to \bb{R}$ which is constant on the fibers of $\pi$. Let $\tilde{F}: M/G \to \bb{R}$ be the (unique) function on $M/G$ such that $F = \tilde{F} \circ \pi$. Let 
\[
d\tilde{\nu} = \frac{1}{\tilde{Z}} e^{-\beta \tilde{F}(x)}\; \textup{dVol}_{h},\quad  d\nu = \frac{1}{Z}e^{-\beta F(x)}\textup{dVol}_g,
\]
be the Gibbs probability measures on $M/G$ and $M$, respectively. If $(M/G, \tilde{\nu}, \tilde \Gamma)$ satisfies a $\textup{PI}(\tilde{\kappa})$, then $(M, \nu, \Gamma)$ also satisfies a $\textup{PI}(\kappa)$ with
\[
\frac{1}{\kappa} = \max\Big\{\frac{1}{\tilde{\kappa}}, \frac{\textup{diam}(G)^2 \beta }{\pi^2}\Big\},
\]
where $\Gamma(\cdot) = \frac{1}{\beta}|\textup{grad}_g\, \cdot|^2_{g}$, and $\tilde{\Gamma}(\cdot) = \frac{1}{\beta} |\textup{grad}_{h} \,\cdot|^2_{h}$.
\end{theorem}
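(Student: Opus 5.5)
The plan is the standard variance decomposition along the fibers (law of total variance), combining \cref{Fubini} with \cref{teoremaPIGenerico} applied to each fiber. Fix $f \in C^2(M)\cap H^1(M,\nu)$ and define the fiber average
\[
\bar f(p) := \frac{1}{\textup{Vol}_{\hat g}(G)}\int_{\pi^{-1}(p)} f\,\textup{dVol}_{\hat g},
\]
which by \cref{rem:upgradedregularity} is $C^2$ on $M/G$, with $\bar f\circ\pi$ constant on fibers. We write
\[
\textup{Var}_\nu(f) = \int_M (f-\bar f\circ\pi)^2\,d\nu + \textup{Var}_\nu(\bar f\circ \pi).
\]
The cross term vanishes: by \cref{Fubini}, integrating $(f-\bar f\circ\pi)(\bar f\circ\pi - c)$ over each fiber gives zero.

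For the first (vertical) term, we apply \cref{Fubini} together with \cref{volumenzetas}, $\tilde Z/Z = 1/\textup{Vol}_{\hat g}(G)$. On each fiber, which is isometric to $(G,\hat g)$ and hence compact, connected and of non-negative Ricci curvature, \cref{teoremaPIGenerico} bounds $\int_{\pi^{-1}(p)}(f-\bar f(p))^2\,\textup{dVol}_{\hat g}$ by $\frac{\textup{diam}(G)^2}{\pi^2}\int_{\pi^{-1}(p)}|\textup{grad}_{\hat g}(f|_{\pi^{-1}(p)})|^2\,\textup{dVol}_{\hat g}$. The fiber gradient is the vertical projection of $\textup{grad}_g f$, so its norm is at most $|\textup{grad}_g f|_g$; for this, as in \cref{thmliftPI}, we take $\textup{diam}(G)$ to be the diameter of the fiber with its induced metric $\hat g$. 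Integrating back via \cref{Fubini}, and noting that $e^{-\beta F}$ is constant on fibers, gives
\[
\int_M (f-\bar f\circ\pi)^2\,d\nu \le \frac{\textup{diam}(G)^2}{\pi^2}\int_M |\mathcal{V}\textup{grad}_g f|_g^2\,d\nu .
\]

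For the second (horizontal) term, \cref{Fubini} with $f\equiv$ functions constant on fibers gives $\int_M \phi\circ\pi\, d\nu = \int_{M/G}\phi\,d\tilde\nu$. Hence $\textup{Var}_\nu(\bar f\circ\pi)=\textup{Var}_{\tilde\nu}(\bar f)\le \frac{1}{\tilde\kappa\beta}\int_{M/G}|\textup{grad}_h\bar f|_h^2\,d\tilde\nu$. It remains to bound $|\textup{grad}_h \bar f|_h(p)$ by the fiber average of the horizontal gradient $|\mathcal{H}\textup{grad}_g f|_g$. Take a horizontal unit vector field $X$ on $M$ that is basic, i.e. $\pi$-related to a unit vector $\tilde X$ at $p$ extended locally. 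Because $G$ acts isometrically and transitively on fibers, with flows of basic fields commuting with the action (this is where the principal bundle structure enters), the fiber volume is preserved along horizontal flow. One can then differentiate under the integral as in \cref{rem:upgradedregularity} to get
\[
\tilde X \bar f = \frac{1}{\textup{Vol}(G)}\int_{\pi^{-1}(p)} X f\,\textup{dVol}_{\hat g}.
\]
Cauchy--Schwarz/Jensen then gives $|\textup{grad}_h\bar f|_h^2(p) \le \frac{1}{\textup{Vol}(G)}\int_{\pi^{-1}(p)}|\mathcal H\textup{grad}_g f|_g^2\,\textup{dVol}_{\hat g}$. Using \cref{Fubini} once more,
\[
\textup{Var}_\nu(\bar f\circ\pi)\le\frac{1}{\tilde\kappa\beta}\int_M|\mathcal H\textup{grad}_g f|_g^2\,d\nu .
\]

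Adding the two bounds, and using $|\textup{grad}_g f|^2=|\mathcal H\textup{grad}_g f|^2+|\mathcal V\textup{grad}_g f|^2$, yields
\[
\textup{Var}_\nu(f)\le \max\Big\{\frac{1}{\tilde\kappa\beta},\frac{\textup{diam}(G)^2}{\pi^2}\Big\}\int_M|\textup{grad}_g f|_g^2\,d\nu = \frac1\kappa\int_M\Gamma(f)\,d\nu,
\]
which is the stated constant. The main obstacle is the horizontal differentiation step: justifying that derivatives of the fiber integral equal fiber integrals of horizontal derivatives. This requires the fibers to be isometric to one another under the horizontal holonomy, i.e. the horizontal flow mapping fibers isometrically, which is where the isometric $G$-action and the identification of all fibers with $(G,\hat g)$ are used. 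I would verify this in a local trivialization, writing the horizontal lift in terms of the connection and using $G$-invariance of the fiber volume form. In the non-compact case one should also check the integrability hypotheses of \cref{Fubini}, since $f\in H^1(M,\nu)$.
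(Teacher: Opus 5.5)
Your proposal is correct and follows the same route as the paper: the fiberwise Poincaré inequality of \cref{teoremaPIGenerico} controls the deviation from the fiber average, the base inequality is applied to the fiber average, and Cauchy--Schwarz bounds its base gradient by the fiber average of the horizontal gradient. This last step is the paper's \cref{LemmaGradienteHorizontal}, which you justify with basic flows instead of adapted coordinates. For that step you only need the basic flows to preserve fiber volume, not to map fibers isometrically onto each other, and this does hold under the stated hypotheses: the basic flow commutes with the isometric $G$-action, so the fiber volume forms pulled back to $G$ by the orbit maps are all $G$-invariant and hence proportional, and they coincide because all fibers are isometric to $(G,\hat g)$ and so have the same total volume.
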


\begin{theorem}
\label{thmlowerPI}
Under the conditions of \cref{thmliftPI}, where $(G, \hat{g})$ may not have non-negative Ricci curvature, if $(M, \nu, \Gamma)$ satisfies a $\textup{PI}(\kappa)$ then $(M/G, \tilde{\nu}, \tilde{\Gamma})$ also satisfies a $\textup{PI}(\kappa)$. 
\end{theorem}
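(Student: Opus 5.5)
The plan is to restrict the Poincaré inequality on $M$ to functions that are constant on the fibers. Given $\tilde f \in C^2(M/G)$, set $f := \tilde f \circ \pi \in C^2(M)$. The inequality for $(M,\nu,\Gamma)$ applied to $f$ should then translate term by term into the inequality for $(M/G,\tilde\nu,\tilde\Gamma)$ applied to $\tilde f$. No curvature assumption on the fibers is needed, which matches the statement of \cref{thmlowerPI}.

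The first step treats the variance terms. Since $f$ and $f^2$ are constant on each fiber $\pi^{-1}(x)$, \cref{Fubini} gives
\[
\int_M f^2\,d\nu = \frac{\tilde Z}{Z}\int_{M/G} \textup{Vol}_{\hat g}(G)\,\tilde f^2\,d\tilde\nu = \int_{M/G}\tilde f^2\,d\tilde\nu,
\]
where the last equality uses \cref{volumenzetas}. The same argument gives $\int_M f\,d\nu = \int_{M/G}\tilde f\,d\tilde\nu$. Hence the variance of $f$ under $\nu$ equals the variance of $\tilde f$ under $\tilde\nu$.

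The second step treats the energy term. Because $\pi$ is a Riemannian submersion, $\Grad{g}(\tilde f\circ\pi)$ is the horizontal lift of $\Grad{h}\tilde f$. Indeed, $d f = d\tilde f\circ d\pi$ vanishes on vertical vectors, so the gradient is horizontal, and $d\pi$ restricted to the horizontal space is an isometry. This gives $|\Grad{g} f|_g = |\Grad{h}\tilde f|_h\circ\pi$; alternatively one can cite the analogue of \cref{prop:gradientprojectedfunction}. Applying \cref{Fubini} and \cref{volumenzetas} once more,
\[
\int_M |\Grad{g} f|_g^2\,d\nu = \int_{M/G}|\Grad{h}\tilde f|_h^2\,d\tilde\nu.
\]
Substituting both identities into $\textup{PI}(\kappa)$ for $f$ yields exactly $\textup{PI}(\kappa)$ for $\tilde f$.

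The main obstacle is only technical, and only in the non-compact case. There one must check that $f\in C^2(M)\cap H^1(M,\nu)$ whenever $\tilde f\in C^2(M/G)\cap H^1(M/G,\tilde\nu)$, so that the hypothesis on $M$ can be applied. This follows from the two identities above, since the integrability part of \cref{Fubini} and \cref{LiftingSobolev} shows that $f^2$ and $|\Grad{g}f|_g^2$ are $\nu$-integrable exactly when $\tilde f^2$ and $|\Grad{h}\tilde f|_h^2$ are $\tilde\nu$-integrable.
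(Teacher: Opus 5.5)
Your proposal is correct and follows essentially the same route as the paper: pull back $\tilde f$ to $f=\tilde f\circ\pi$, check that $f\in C^2(M)\cap H^1(M,\nu)$ via \cref{LiftingSobolev}, and use \cref{Fubini} together with $|\Grad{g} f|_g = |\Grad{h}\tilde f|_h\circ\pi$ (which is exactly \cref{prop:gradientprojectedfunction}) to transfer both the variance and the energy term. The only cosmetic difference is that you reduce the factor $\frac{\tilde Z}{Z}\textup{Vol}_{\hat g}(G)$ to $1$ via \cref{volumenzetas} and compare variances directly, whereas the paper first centers $\tilde f$ and lets that factor cancel between the two sides.
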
 

For the proofs of \cref{thmliftPI,thmlowerPI}, we will use two auxiliary lemmas that relate the horizontal and the vertical components of the gradient of a function in the total space of a Riemannian submersion to the gradient with respect to the metrics of the base space and the fibers. These components are denoted by $\cdot^\mathcal{H}$ and $\cdot^\mathcal{V}$, respectively (see \cref{notverticalhorizontal}).

\begin{lemma}
\label{LemmaGradienteHorizontal}
Let $\pi: (M, g) \to (M/G, h)$ be a principal $G$-bundle which is also a Riemannian submersion, where $G$ is a compact Lie group acting isometrically on $(M, g)$. Assume that the fibers are all isometric to $(G, \hat{g})$. Then  for every $x \in M/G$ and for every $f \in C^2(M)$, the following holds:
\begin{equation*}
\left|\frac{1}{\textup{Vol}_{\hat{g}}(G)}\textup{grad}_{h}\int_{\pi^{-1}(x)} f(y)\; \textup{dVol}_{\hat{g}}(y)\right|_{h}^2
\leq \frac{1}{\textup{Vol}_{\hat{g}}(G)}  \int_{\pi^{-1}(x)} \left|\left(\textup{grad}_{g} \, f(y)\right)^{\mathcal{H}}\right|_{g}^2 \textup{dVol}_{\hat{g}}(y).
\end{equation*}
\end{lemma}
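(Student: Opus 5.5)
Write $\hat F(x) := \int_{\pi^{-1}(x)} f\,\textup{dVol}_{\hat g}$, which is $C^2$ by \cref{rem:upgradedregularity}. The plan is to compute $\textup{grad}_h \hat F$ via the directional derivative along a horizontal lift and then apply Cauchy--Schwarz/Jensen on the fiber. Fix $x \in M/G$ and a unit vector $w \in T_x(M/G)$. Let $X$ be a smooth vector field near $x$ with $X(x)=w$, and let $\bar X$ be its horizontal lift to $M$.

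The key claim is
\[
d\hat F_x(w) = \int_{\pi^{-1}(x)} df_y\big(\bar X(y)\big)\,\textup{dVol}_{\hat g}(y).
\]
To prove it, let $\phi_t$ be the flow of $\bar X$. Because $\bar X$ is $\pi$-related to $X$, $\phi_t$ maps fibers to fibers, $\phi_t(\pi^{-1}(x)) = \pi^{-1}(\gamma(t))$, where $\gamma$ is the integral curve of $X$ through $x$. So
\[
\hat F(\gamma(t)) = \int_{\pi^{-1}(x)} f(\phi_t(y))\, J_t(y)\,\textup{dVol}_{\hat g}(y),
\]
with $J_t$ the Jacobian of $\phi_t$ restricted to the fibers.

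\emph{Main obstacle: $J_t \equiv 1$ is not automatic.} Horizontal flows preserve fiber volume only when the fibers are minimal, and in general $\partial_t J_t|_{t=0} = -\langle H, \bar X\rangle$, where $H$ is the mean curvature vector of the fiber. Here I would use the principal bundle structure instead. Since $G$ acts isometrically and preserves the horizontal distribution, the horizontal lift $\bar X$ is $G$-invariant, so $\phi_t$ commutes with the $G$-action. Hence, after identifying each fiber with $G$ through the orbit map $g \mapsto g\cdot y_0$, $\phi_t$ is $G$-equivariant. The fiber metrics are all isometric to $\hat g$, and, identifying $\textup{dVol}_{\hat g}$ with Haar measure on each orbit, the pushforward of $\textup{dVol}_{\hat g}$ under $\phi_t$ is a $G$-invariant measure on $\pi^{-1}(\gamma(t))$ of the same total mass. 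By uniqueness of Haar measure it equals $\textup{dVol}_{\hat g}$, so $J_t\equiv 1$. (This needs the fiber volume form to be the Haar measure under the orbit identification, which the hypothesis of isometric fibers $(G,\hat g)$ with $G$ acting isometrically should give.) Differentiating under the integral sign at $t=0$, which is justified by compactness of $G$ and $f\in C^2$, then gives the key claim, since $df(\bar X) = \langle (\textup{grad}_g f)^{\mathcal H}, \bar X\rangle_g$.

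Finally, $|\bar X|_g = |w|_h = 1$ because $\pi$ is a Riemannian submersion. Cauchy--Schwarz therefore gives
\[
|d\hat F_x(w)| \le \int_{\pi^{-1}(x)} |(\textup{grad}_g f)^{\mathcal H}|_g \,\textup{dVol}_{\hat g}.
\]
Taking the supremum over unit $w$ bounds $|\textup{grad}_h \hat F(x)|_h$ by the same integral. Dividing by $\textup{Vol}_{\hat g}(G)$, squaring, and applying Jensen's inequality with respect to the probability measure $\textup{dVol}_{\hat g}/\textup{Vol}_{\hat g}(G)$ on the fiber yields the stated inequality.
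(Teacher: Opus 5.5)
Your argument is correct and has the same overall shape as the paper's: differentiate the fiber integral, then apply Cauchy--Schwarz. The key step, however, is handled differently.

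The paper works entirely in coordinates. It takes normal coordinates $z$ on $M/G$ centered at $x$ and extends them to coordinates $(z,y)$ on $\pi^{-1}(x)$ minus a null set, chosen so that the $\partial_{z^i}$ are horizontal and orthonormal along the fiber and $g$ is block diagonal there. It then writes $\hat F(z)=\int_G f(z,y)\,\textup{dVol}_{\hat g}(y)$, differentiates under the integral sign in each $z^i$, applies Cauchy--Schwarz componentwise, and identifies $\sum_i(\partial_{z^i}f)^2$ with $|(\textup{grad}_g f)^{\mathcal H}|_g^2$. That computation silently uses two facts:
\begin{itemize}
\item in the chosen chart the fiber volume element does not depend on $z$, which is exactly the $J_t\equiv 1$ issue you isolate;
\item this can be arranged together with horizontality of the $\partial_{z^i}$, for instance via a trivialization $(z,a)\mapsto s(z)\cdot a$ with $ds_x$ horizontal, whose $z$-independence of volume is once more your Haar-uniqueness argument.
\end{itemize}

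Your flow of the basic horizontal lift makes this step explicit and coordinate-free, and shows precisely where the hypotheses enter. The isometric, fiber-preserving action gives $G$-invariance of each fiber's Riemannian volume automatically, so the caveat in your parenthesis is not a real gap. The assumption that all fibers are isometric to $(G,\hat g)$ supplies the equal total masses. Your route also needs only $f\in C^1$, and it yields the identity $\textup{grad}_h\hat F(x)=\int_{\pi^{-1}(x)}\pi_*\bigl((\textup{grad}_g f)^{\mathcal H}\bigr)\,\textup{dVol}_{\hat g}$, from which the inequality follows by the triangle inequality and Jensen. The paper's route is shorter and more hands-on, but it leaves these points implicit.
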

\begin{proof}
First, recall that as $(G, \hat g)$ has finite volume and $f \in C^2(M)$, it holds that $|f|$ and  $|(\textup{grad}_g\, f)^{\mathcal{H}}|^2_g$ are both integrable in $\pi^{-1}(x)$ for every $x \in M/G$, so the above expression is well-defined.

Let us now fix $x \in M/G$. We choose normal coordinates $(z^1, \dotsc, z^k)$ in $M/G$ centered at \mbox{$x = (0, \dotsc, 0)$}. Thus, $\{\frac{\partial}{\partial z^1}, \dotsc, \frac{\partial}{\partial z^k}\}$ is an orthonormal basis for $T_x M/G$, i.e. \mbox{$\left.h_{ij}\right|_x = \delta_{ij}$}. 

Furthermore, we can extend these coordinates to $\pi^{-1}(x) \setminus \mathcal{X} (\simeq G \setminus \mathcal{X})$, where $\mathcal{X}$ is a measure-zero set. By a slight abuse of notation, we denote such coordinates as 
\[
(z^1, \dotsc, z^k, y^1, \dotsc, y^{n-k}).
\]
This way, every point $y \in \pi^{-1}(x) \setminus \mathcal{X}$ can be written as $(0, \dotsc, 0, y^1, \dotsc, y^{n-k})$. Furthermore, by horizontally lifting the frame $\{\frac{\partial}{\partial z^1}, \dotsc, \frac{\partial}{\partial z^k}\}$ to the fiber $\pi^{-1}(x) \setminus \mathcal{X}$, we obtain a frame
\[
\left\{\frac{\partial}{\partial z^1}, \dotsc, \frac{\partial}{\partial z^k}, \frac{\partial}{\partial y^1}, \dotsc, \frac{\partial}{\partial y^{n-k}}\right\}
\]
is such that $\left\{\frac{\partial}{\partial z^1}, \dotsc, \frac{\partial}{\partial z^k}\right\}$ are orthonormal with respect to $g$ and horizontal with respect to $\pi$ (cf. \cref{verticalhorizontaltangent}), $\left\{\frac{\partial}{\partial y^1}, \dotsc, \frac{\partial}{\partial y^{n-k}}\right\}$ are vertical and $g(\frac{\partial}{\partial z^i}, \frac{\partial}{\partial y^j}) = 0$ for every $i \in \{1, \dotsc, k\}$, $j \in \{1, \dotsc, n-k\}$. 

From our choice of coordinates, we know that on $\pi^{-1}(x) \setminus \mathcal{X}$, $g$ is of the form 
\begin{equation}
\label{matrixmetric}
\left(\begin{array}{ c | c }
    \mathds{1} & 0 \\
    \hline
    0 & A
  \end{array}\right),
\end{equation}
where $\mathds{1}$ is the identity matrix of size $k \times k$ and $A$ is an $(n-k) \times (n-k)$ matrix. 

In such coordinates, since the metric $h$ at $x$ is simply the Euclidean metric,
\[
\left.\textup{grad}_{h}( \cdot)\right|_x = h^{ij}(x)\frac{\partial }{\partial z^i}(\cdot)\frac{\partial }{\partial z^j} = \delta^{ij}\frac{\partial }{\partial z^i}(\cdot)\frac{\partial }{\partial z^j}, 
\]
and so
\begin{multline*}
\left|\frac{1}{\textup{Vol}_{\hat{g}}(G)} \textup{grad}_{h} \int_{\pi^{-1}(x)} f(y)\; \textup{dVol}_{\hat{g}}(y)\right|_{h}^2 = 
\frac{1}{\textup{Vol}_{\hat{g}}(G)^2} \sum_{i = 1}^k \left(\frac{\partial}{\partial z^i} \int_{\pi^{-1}(x)} f(y)\; \textup{dVol}_{\hat{g}}(y)\right)^2
\\= \frac{1}{\textup{Vol}_{\hat{g}}(G)^2} \sum_{i = 1}^k \left(\frac{\partial}{\partial z^i} \int_{G \setminus \mathcal{X}} f(0, \dotsc, 0, y)\; \textup{dVol}_{\hat{g}}(y)\right)^2
\\= \frac{1}{\textup{Vol}_{\hat{g}}(G)^2} \sum_{i = 1}^k \left(\int_{G \setminus \mathcal{X}} \frac{\partial}{\partial z^i} f(0, \dotsc, 0, y)\; \textup{dVol}_{\hat{g}}(y)\right)^2
\\\leq \frac{1}{\textup{Vol}_{\hat{g}}(G)} \sum_{i = 1}^k \int_{G \setminus \mathcal{X}} \left(\frac{\partial}{\partial z^i} f(0, \dotsc, 0, y)\right)^2 \textup{dVol}_{\hat{g}}(y),
\end{multline*}
where the last inequality is Cauchy-Schwarz's inequality: 
\[
\left(\int_G g(z)\; \textup{dVol}_{\hat{g}}(z)\right)^2 \leq \int_G g(z)^2\; \textup{dVol}_{\hat{g}}(z) \int_{G} \textup{dVol}_{\hat{g}}(z).
\]

On the other hand, let us rewrite 
\begin{equation}
\label{intofgrad}
\frac{1}{\textup{Vol}_{\hat{g}}(G)} \int_{\pi^{-1}(x)} |(\textup{grad}_g\, f(y))^{\mathcal{H}}|^2_g\; \textup{dVol}_{\hat{g}}(y)
\end{equation}
using the coordinates from the beginning of this proof, for some point in $\pi^{-1}(x) \setminus \mathcal{X}$,
\[
(\textup{grad}_g\, f)^{\mathcal{H}} = g^{ij}\frac{\partial f}{\partial x^i}\frac{\partial }{\partial z^j},
\]
where $x^i = z^i$ for $i \in \{1, \dotsc, k\}$ and $x^i = y^{i-k}$ for $i \in \{k+1, \dotsc, n\}$. Moreover, since $g$ in $\pi^{-1}(x) \setminus \mathcal{X}$ is of the form shown in \cref{matrixmetric} we can conclude that
\[
(\textup{grad}_g\, f)^{\mathcal{H}} = \delta^{ij}\frac{\partial f}{\partial z^i}\frac{\partial }{\partial z^j}.
\]
Therefore, its norm with respect to $g$ is simply
\[
|(\textup{grad}_g\, f(y))^{\mathcal{H}}|^2_g = g\left(\delta^{ij}\frac{\partial f}{\partial z^i}\frac{\partial }{\partial z^j}, \delta^{kl}\frac{\partial f}{\partial z^k}\frac{\partial }{\partial z^l}\right) = \delta^{ij}\frac{\partial f}{\partial z^i}\frac{\partial f}{\partial z^j}.
\]
This allows us to write \cref{intofgrad} as 
\[
\frac{1}{\textup{Vol}_{\hat{g}}(G)} \int_{G \setminus \mathcal{X}} \sum_{i = 1}^k \left(\frac{\partial}{\partial z^i}f(0,\dotsc,0, y)\right)^2 \textup{dVol}_{\hat{g}}(y),
\]
finishing the proof.
\end{proof}

\begin{lemma}
\label{LemmaGradienteVertical}
Let $\pi: (M, g) \to (M/G, h)$ be a principal $G$-bundle, for some compact Lie group $G$, that is also a Riemannian submersion. Assume that the fibers are all isometric to $(G, \hat{g})$. Then for every $x \in M/G$ and for every $f \in C^1(M)$, the following holds:
\[
\left|\textup{grad}_{\hat{g}}\, f(y)\right|_{\hat{g}}^2 = \left|\left(\textup{grad}_{g} \,f(y)\right)^{\mathcal{V}}\right|_{g}^2
\]
for almost every $y \in \pi^{-1}(x)$. 

\end{lemma}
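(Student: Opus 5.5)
The plan is to compare, at a point $y$ of the fiber $\pi^{-1}(x)$, the gradient of the restriction $f|_{\pi^{-1}(x)}$ with respect to the induced metric $\hat g$ and the vertical part of $\textup{grad}_g f$. The fiber is an embedded submanifold of $M$, and by assumption the induced metric on it is $\hat g(x) = g|_{\pi^{-1}(x)}$, which is isometric to $(G,\hat g)$. Hence the left-hand side can be read as the gradient of $f|_{\pi^{-1}(x)}$ computed intrinsically on $(\pi^{-1}(x), g|_{\pi^{-1}(x)})$. The key fact to use is the standard one: for an embedded submanifold with the induced metric, the intrinsic gradient of a restricted function equals the orthogonal projection of the ambient gradient onto the tangent space of the submanifold.

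I would carry this out in the adapted coordinates $(z^1,\dotsc,z^k,y^1,\dotsc,y^{n-k})$ from the proof of \cref{LemmaGradienteHorizontal}. These are valid on $\pi^{-1}(x)\setminus\mathcal{X}$ with $\mathcal{X}$ of measure zero, which is the source of the ``almost every'' in the statement. In them the metric has the block form of \cref{matrixmetric}, with the $A$-block equal to the coefficients of $\hat g$ in the coordinates $y^j$. The frame $\partial/\partial y^j$ spans the vertical space $\ker d\pi$, and the $\partial/\partial z^i$ are horizontal.

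Since $g^{-1}$ is also block diagonal, the ambient gradient splits as
\[
\textup{grad}_g f = \delta^{ij}\frac{\partial f}{\partial z^i}\frac{\partial}{\partial z^j} + (A^{-1})^{ab}\frac{\partial f}{\partial y^a}\frac{\partial}{\partial y^b}.
\]
The second summand is exactly $(\textup{grad}_g f)^{\mathcal{V}}$. Because $A$ equals $\hat g$ on the fiber, this summand coincides with $\textup{grad}_{\hat g} (f|_{\pi^{-1}(x)})$. Taking norms with respect to $g$ restricted to vertical vectors, which is $A=\hat g$, then gives
\[
|(\textup{grad}_g f)^{\mathcal{V}}|_g^2 = (A^{-1})^{ab}\,\partial_{y^a} f\,\partial_{y^b} f = |\textup{grad}_{\hat g} f|_{\hat g}^2.
\]
This identity holds pointwise on $\pi^{-1}(x)\setminus\mathcal{X}$. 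Both sides are continuous on the fiber for $f\in C^1$, so in fact the equality holds everywhere on the fiber, although the almost-everywhere statement is all that is needed.

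The only delicate point is justifying that the $A$-block of $g$ along the fiber is the metric $\hat g$ under the identification of $\pi^{-1}(x)$ with $G$. Since that identification is assumed to be an isometry with the induced metric, this follows directly from the definitions.
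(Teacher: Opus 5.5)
Your proposal is correct and follows essentially the same route as the paper: both work in the adapted coordinates from the proof of \cref{LemmaGradienteHorizontal}, use the block-diagonal form of $g$ (and hence of $g^{-1}$) along $\pi^{-1}(x)\setminus\mathcal{X}$ to read off $(\textup{grad}_g f)^{\mathcal{V}} = (A^{-1})^{ab}\,\partial_{y^a} f\,\partial_{y^b}$, and identify the $A$-block with the induced metric $\hat g$. Your extra remark that continuity upgrades the identity from almost every to every $y$ in the fiber is also correct. Equivalently, the identity is just the pointwise fact that the intrinsic gradient of a restriction is the tangential projection of the ambient gradient.
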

\begin{proof}
Fix $x \in M/G$. Working in the same coordinates as in the proof of \cref{LemmaGradienteHorizontal} we know that on $\pi^{-1}(x) \setminus \mathcal{X}$, $g$ is of the form 
\[
\left(\begin{array}{ c | c }
    \mathds{1} & 0 \\
    \hline
    0 & A
  \end{array}\right).
\]
Using the same argument as before, we obtain
\[
\left(\textup{grad}_{g}\, f\right)^{\mathcal{V}} = g^{ij} \frac{\partial f}{\partial y^i}\frac{\partial }{\partial y^j},
\]
finishing the proof. 
\end{proof}

Finally, we prove the following auxiliary technical lemma before turning to the main results of this section; it is stated separately for the sake of clarity.

\begin{lemma}
\label{lem:lemmaregularityintegral}
Under the conditions of \cref{thmliftPI}, let $f \in C^2(M) \cap H^1(M, \nu)$. Then 
\[
\hat F(x) := \int_{\pi^{-1}(x)} f(y)\,\textup{dVol}_{\hat{g}}(y)
\]
is a function in $C^2(M/G) \cap H^1(M/G, \tilde \nu)$.
\end{lemma}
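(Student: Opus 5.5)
The plan has two parts: regularity, which is immediate from \cref{rem:upgradedregularity}, and integrability, which combines \cref{Fubini}, Cauchy--Schwarz along the fibers, and \cref{LemmaGradienteHorizontal}.

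\textbf{Regularity.} Since $f \in C^2(M)$ and the fibers are all isometric to the compact group $(G,\hat g)$, \cref{rem:upgradedregularity} applies with $k=2$. It gives directly that $\hat F$ is well-defined and belongs to $C^2(M/G)$. It remains to show $\hat F \in L^2(M/G,\tilde\nu)$ and $\textup{grad}_h \hat F \in L^2(M/G,\tilde\nu)$.

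\textbf{Square-integrability of $\hat F$.} Apply Cauchy--Schwarz on each fiber:
\[
\hat F(x)^2 \leq \textup{Vol}_{\hat g}(G)\int_{\pi^{-1}(x)} f(y)^2\,\textup{dVol}_{\hat g}(y).
\]
Since $f^2$ is $\nu$-integrable, \cref{Fubini} applied to $f^2$ shows that the right-hand side is $\tilde\nu$-integrable, with
\[
\int_{M/G}\hat F^2\,d\tilde\nu \leq \textup{Vol}_{\hat g}(G)\,\frac{Z}{\tilde Z}\int_M f^2\,d\nu = \textup{Vol}_{\hat g}(G)^2\int_M f^2\,d\nu,
\]
where the last equality uses \cref{volumenzetas}.

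\textbf{Square-integrability of the gradient.} Multiply the inequality of \cref{LemmaGradienteHorizontal} by $\textup{Vol}_{\hat g}(G)^2$ to get
\[
|\textup{grad}_h\hat F(x)|_h^2 \leq \textup{Vol}_{\hat g}(G)\int_{\pi^{-1}(x)}|(\textup{grad}_g f)^{\mathcal H}|_g^2\,\textup{dVol}_{\hat g}.
\]
Next use $|(\textup{grad}_g f)^{\mathcal H}|_g^2 \leq |\textup{grad}_g f|_g^2$, which holds because the horizontal and vertical components are $g$-orthogonal. Then apply \cref{Fubini} to the $\nu$-integrable function $|\textup{grad}_g f|_g^2$, together with \cref{volumenzetas}, to obtain
\[
\int_{M/G}|\textup{grad}_h\hat F|_h^2\,d\tilde\nu \leq \textup{Vol}_{\hat g}(G)^2\int_M|\textup{grad}_g f|_g^2\,d\nu<\infty.
\]

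\textbf{Main obstacle.} The delicate step is justifying the use of \cref{Fubini} for the nonnegative functions $f^2$ and $|\textup{grad}_g f|^2_g$. When $M$ is not compact, one must make sure the fiber integrals are measurable and that the pointwise bounds hold everywhere on $M/G$. Both follow from the Tonelli-type part of \cref{thm:fubinimanifolds}, which applies because the integrands are nonnegative. The validity of \cref{LemmaGradienteHorizontal} at every $x$ relies on differentiating under the integral sign, and this is already justified by the compactness of $G$ in \cref{rem:upgradedregularity}.
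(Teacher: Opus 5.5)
Your proposal is correct and follows the paper's proof. You get regularity from \cref{rem:upgradedregularity}, the $L^2$ bound from Cauchy--Schwarz on the fibers together with \cref{Fubini}, and the gradient bound from \cref{LemmaGradienteHorizontal}, \cref{Fubini} and $|(\textup{grad}_g f)^{\mathcal H}|_g \leq |\textup{grad}_g f|_g$; the only cosmetic difference is that you rewrite $Z/\tilde Z = \textup{Vol}_{\hat g}(G)$ via \cref{volumenzetas}, where the paper leaves the factor as $\frac{Z}{\tilde Z}\textup{Vol}_{\hat g}(G)$.
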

\begin{proof}
The fact that $\hat F$ belongs to $C^2(M/G)$ follows from the regularity of $f$ by \cref{rem:upgradedregularity}. Now, using Cauchy-Schwarz's inequality
\[
\hat F(x)^2 = \left(\int_{\pi^{-1}(x)} f(y)\, \textup{dVol}_{\hat{g}}(y)\right)^2 \leq \textup{Vol}_{\hat{g}}(G)\int_{\pi^{-1}(x)} f^2(y)\, \textup{dVol}_{\hat{g}}(y),
\]
and so by \cref{Fubini}
\begin{align*}
\int_{M/G} \hat F(x)^2\,  d\tilde \nu(x) &\leq  \textup{Vol}_{\hat{g}}(G) \int_{M/G} \left(\int_{\pi^{-1}(x)} f^2(y)\, \textup{dVol}_{\hat{g}}(y)\right) d\tilde \nu (x)
\\&= \frac{Z}{\tilde Z}\textup{Vol}_{\hat{g}}(G) \int_M f^2(x) \,d\nu(x) < \infty,
\end{align*}
as $f \in H^1(M, \nu)$, which implies that $\hat F \in L^2(M/G, \tilde \nu)$. Moreover, from \cref{LemmaGradienteHorizontal}, we know that for every $x \in M/G$
\[
|\textup{grad}_{h}\,  \hat F(x)|^2_{h} \leq \textup{Vol}_{\hat{g}}(G) \int_{\pi^{-1}(x)} \left|\left(\textup{grad}_{g}\, f(y)\right)^{\mathcal{H}}\right|_{g}^2 \,\textup{dVol}_{\hat{g}}(y).
\]
Using again \cref{Fubini} we can conclude that 
\begin{align*}
\int_{M/G} |\textup{grad}_{h}\,  \hat F(x)|^2_{h}\, d\tilde \nu(x) &\leq \textup{Vol}_{\hat{g}}(G) \int_{M/G} \left( \int_{\pi^{-1}(x)} \left|\left(\textup{grad}_{g} \,f(y)\right)^{\mathcal{H}}\right|_{g}^2 \,\textup{dVol}_{\hat{g}}(y)\right) d\tilde\nu(x)
\\&= \textup{Vol}_{\hat{g}}(G) \frac{Z}{\tilde Z} \int_M \left|\left(\textup{grad}_{g} \,f(x)\right)^{\mathcal{H}}\right|_{g}^2\, d\nu(x)
\\&\leq \textup{Vol}_{\hat{g}}(G) \frac{Z}{\tilde Z} \int_M \left|\textup{grad}_{g} \,f(x)\right|_{g}^2 \,d\nu(x) < \infty,
\end{align*}
as $f \in H^1(M, \nu)$,  which allows us to conclude that $ \hat F \in H^1(M/G, \tilde\nu)$. 
\end{proof}

With the above lemmas in mind, we will now proceed to the proof of the lifting result for Poincaré inequalities. 

\begin{proof}[{Proof of \cref{thmliftPI}}]
First, recall that $(M/G, \tilde{\nu}, \tilde \Gamma)$ satisfies a $\textup{PI}(\tilde{\kappa})$ iff for every $f \in C^2(M/G) \cap H^1(M/G, \tilde{\nu})$
\[
\int_{M/G} f^2\; d\tilde{\nu} - \left(\int_{M/G} f \;d\tilde{\nu} \right)^2 \leq \frac{1}{\tilde{\kappa}\beta} \int_{M/G} |\textup{grad}_{h}\, f|_{h}^2\, d\tilde{\nu}.
\]
Let $f \in C^2(M) \cap H^1(M, \nu)$. Without loss of generality, we can assume that $\int_M f d\nu = 0$. Using \cref{Fubini} we can write
\begin{equation}
\label{eq:initialeqPoincare}
\int_M f^2\; d\nu = \frac{\tilde{Z}}{Z} \int_{M/G} \left(\int_{\pi^{-1}(x)} f^2(y) \,\textup{dVol}_{\hat{g}}(y)\right) d\tilde{\nu}(x).    
\end{equation}
As $f \in C^2(\pi^{-1}(x))$ for every $x \in M/G$, we can now use the fact that the fibers satisfy a Poincaré inequality; we apply \cref{teoremaPIGenerico} to \cref{eq:initialeqPoincare} and obtain
\begin{align}
\label{eq:firstauxineqpoincare}
\begin{split}
\int_M f^2\; d\nu &\leq \frac{\tilde{Z}}{Z}\left\{\frac{1}{\textup{Vol}_{\hat{g}}(G)}\int_{M/G} \left(\int_{\pi^{-1}(x)} f(y)\, \textup{dVol}_{\hat{g}}(y) \right)^2 d\tilde{\nu}(x)\right.
\\&\quad\left. + C \int_{M/G} \left(\int_{\pi^{-1}(x)} \left|\textup{grad}_{\hat{g}} \,f(y)\right|_{\hat{g}}^2 \,\textup{dVol}_{\hat{g}}(y)\right) d\tilde{\nu}(x) \right\},
\end{split}
\end{align}
where $C := \frac{\textup{diam}(G)^2}{\pi^2}$.

Using \cref{LemmaGradienteVertical}, we know that
\[
\left|\textup{grad}_{\hat{g}}\, f\right|_{\hat{g}}^2 = \left|\left(\textup{grad}_{g} \,f\right)^{\mathcal{V}}\right|_{g}^2,
\]
and so, using the fact that $f \in H^1(M, \nu)$, we can apply \cref{Fubini} to the second term on the right-hand side of \cref{eq:firstauxineqpoincare} to conclude that 
\[
\int_M f^2\; d\nu \leq \frac{\tilde{Z}}{Z\textup{Vol}_{\hat{g}}(G)}\int_{M/G} \left(\int_{\pi^{-1}(x)} f(y)\, \textup{dVol}_{\hat{g}}(y) \right)^2 d\tilde{\nu}(x) + C \int_{M} \left|\left(\textup{grad}_{g}\, f\right)^{\mathcal{V}}\right|_{g}^2 d\nu.
\]

Now, it follows from \cref{lem:lemmaregularityintegral} that
\[
x \mapsto \int_{\pi^{-1}(x)} f(y) \,\textup{dVol}_{\hat{g}}(y)
\]
is a function in $C^2(M/G) \cap H^1(M/G, \tilde \nu)$. Since we are assuming that $(M/G, \tilde{\nu}, \tilde{\Gamma})$ satisfies a Poincaré inequality and that $\int_M f d\nu = 0$ we can further bound
\begin{align*}
\int_M f^2\; d\nu &\leq \frac{\tilde{Z}}{Z\textup{Vol}_{\hat{g}}(G)}\Biggl\{\left( \int_{M/G} \left(\int_{\pi^{-1}(x)} f(y)\, \textup{dVol}_{\hat{g}}(y)\right)\; d\tilde{\nu}(x)\right)^2
\\&\quad + \frac{1}{\tilde{\kappa}\beta} \int_{M/G} \left|\textup{grad}_{h} \int_{\pi^{-1}(x)} f(y) \,\textup{dVol}_{\hat{g}}(y) \right|^2_{h} d\tilde{\nu}(x)\Biggr\}
\\&\quad + C \int_{M} \left|\left(\textup{grad}_{g} \,f\right)^{\mathcal{V}}\right|_{g}^2 d\nu 
\\&= \frac{\tilde{Z}}{Z\textup{Vol}_{\hat{g}}(G)}\Biggl\{\frac{Z^2}{\tilde{Z}^2}\left(\int_M f\, d\nu\right)^2 + \frac{1}{\tilde{\kappa}\beta}\int_{M/G} \left|\textup{grad}_{h} \int_{\pi^{-1}(x)} f(y) \,\textup{dVol}_{\hat{g}}(y) \right|^2_{h} d\tilde{\nu}(x)\Biggr\}
\\&\quad + C \int_{M} \left|\left(\textup{grad}_{g} \,f\right)^{\mathcal{V}}\right|_{g}^2 d\nu
\\&= \frac{\tilde{Z}}{Z}\frac{1}{\textup{Vol}_{\hat{g}}(G)\tilde{\kappa}\beta}\int_{M/G} \left|\textup{grad}_{h} \int_{\pi^{-1}(x)} f(y) \, \textup{dVol}_{\hat{g}}(y) \right|^2_{h} d\tilde{\nu}(x)
\\&\quad + C \int_{M} \left|\left(\textup{grad}_{g} \, f\right)^{\mathcal{V}}\right|_{g}^2 d\nu
\\&= \frac{1}{\textup{Vol}_{\hat{g}}(G)^2 \tilde{\kappa}\beta}\int_{M/G} \left|\textup{grad}_{h} \int_{\pi^{-1}(x)} f(y)\, \textup{dVol}_{\hat{g}}(y) \right|^2_{h} d\tilde{\nu}(x)
\\&\quad + C \int_{M} \left|\left(\textup{grad}_{g} \, f\right)^{\mathcal{V}}\right|_{g}^2 d\nu,
\end{align*}
where the last equality follows from \cref{volumenzetas}. Lastly, we apply \cref{LemmaGradienteHorizontal} and \cref{Fubini} to conclude that 
\begin{align*}
\int_M f^2\; d\nu &\leq \frac{1}{\textup{Vol}_{\hat{g}}(G)\tilde{\kappa}\beta }\int_{M/G} \left(\int_{\pi^{-1}(x)} |\left(\textup{grad}_{g} \,f(y)\right)^{\mathcal{H}}|_{g}^2 \,\textup{dVol}_{\hat{g}}(y)\right) d\tilde{\nu}(x)
\\&\quad + C \int_{M} \left|\left(\textup{grad}_{g}\, f\right)^{\mathcal{V}}\right|_{g}^2 d\nu 
\\&= \frac{\tilde{Z}}{Z}\frac{1}{\tilde{\kappa}\beta}\int_{M/G} \left(\int_{\pi^{-1}(x)} |\left(\textup{grad}_{g}\, f(y)\right)^{\mathcal{H}}|_{g}^2 \,\textup{dVol}_{\hat{g}}(y)\right) d\tilde{\nu}(x)
\\&\quad+ C \int_{M} \left|\left(\textup{grad}_{g} \,f\right)^{\mathcal{V}}\right|_{g}^2 d\nu 
\\&= \frac{1}{\tilde{\kappa}\beta}\int_{M} \left|(\textup{grad}_{g}\, f)^{\mathcal{H}} \right|_{g}^2 d\nu
+ C \int_{M} \left|\left(\textup{grad}_{g}\, f\right)^{\mathcal{V}}\right|_{g}^2 d\nu.
\end{align*}

Therefore, if $(M/G, \tilde{\nu}, \tilde{\Gamma})$ satisfies a $\textup{PI}(\tilde{\kappa})$, $(M, \nu, \Gamma)$ satisfies a $\textup{PI}(\kappa)$ verifying
\[
\frac{1}{\kappa} = \max \Big\{\frac{1}{\tilde{\kappa}}, \beta C\Big\}= \max\Big\{\frac{1}{\tilde{\kappa}}, \frac{\textup{diam}(G)^2 \beta }{\pi^2}\Big\},
\]
where $\textup{diam}(G)$ is the diameter of the fibers with the restricted metric $(G, \hat{g})$. 
\end{proof}

Lastly, let us prove that Poincaré inequalities can be lowered from the total space to the base space of a Riemannian submersion. Unlike for the previous result, we now do not need to use the fact that the fibers satisfy a Poincaré inequality, and the Poincaré constant obtained in the base space is the same as that of the total space. 

\begin{proof}[{Proof of \cref{thmlowerPI}}]
Let $\tilde f \in C^2(M/G)\cap H^1(M/G, \tilde{\nu})$, and let $f: M \to \bb{R}$ be the unique function such that $\tilde f \circ \pi = f$. In particular, $f$ is constant in the fibers of $\pi$ and belongs to $C^2(M)$. Using \cref{LiftingSobolev} and the fact that $\tilde f \in H^1(M/G, \tilde \nu)$ we know that 
\[
x \mapsto \int_{\pi^{-1}(x)} f^2(y) e^{-\beta F(y)}\,\textup{dVol}_{\hat g}(y) = \tilde f^2(x) e^{-\beta \tilde F(x)}\, \textup{Vol(G)}
\] 
is integrable on $M/G$. Thus $f^2 e^{-\beta F}$ is integrable on $M$---i.e. $f \in L^2(M, \nu)$. Substituting $f^2$ by $|\textup{grad}_g\, f|^2_g$ in the above expression and using \cref{prop:gradientprojectedfunction} we can similarly conclude that $|\textup{grad}_g\, f|^2_g\, e^{-\beta F}$ is integrable on $M$, which implies that $f \in H^1(M, \nu)$. 

Without loss of generality, assume that $\int_{M/G} \tilde f\; d\tilde{\nu} = 0$. Thus, by \cref{Fubini} we know that 
\begin{align*}
\int_M f(x)\, d\nu(x) &= \frac{\tilde Z}{Z} \int_{M/G} \left(\int_{\pi^{-1}(x)} f(y)\, \textup{dVol}_{\hat{g}}(y)\right) d\tilde{\nu}(x)
\\&= \frac{\tilde Z}{Z} \int_{M/G} \left(\tilde f(x) \int_{\pi^{-1}(x)}  \textup{dVol}_{\hat{g}}(y)\right) d\tilde{\nu}(x)
\\&= \frac{\tilde Z}{Z} \textup{Vol}(G) \int_{M/G} \tilde f(x)\, d\tilde{\nu}(x) = 0.
\end{align*}

Furthermore, using again \cref{Fubini}, we know that
\begin{align*}
\int_M f^2(x) \; d\nu(x) = \frac{\tilde Z}{Z} \textup{Vol}(G) \int_{M/G} \tilde{f}^2(x) \;d\tilde{\nu}(x).
\end{align*}

Thus, since $(M, \nu, \Gamma)$ satisfies a $\textup{PI}(\kappa)$ and $f \in C^2(M) \cap H^1(M, \nu)$, we know that
\begin{align*}
\int_{M/G} \tilde f^2(x)\; d\tilde{\nu}(x) &= \frac{Z}{\tilde{Z} \textup{Vol}(G)} \int_M f^2(x)\; d\nu(x) 
\\&\leq \frac{Z}{\tilde{Z} \textup{Vol}(G)} \frac{1}{\kappa \beta} \int_M |\textup{grad}_g \,f(x)|^2_g\; d\nu(x) 
\\&= \frac{1}{\textup{Vol}(G)}\frac{1}{\kappa \beta} \int_{M/G} \left(\int_{\pi^{-1}(x)} |\textup{grad}_g\, f(y)|^2_g \,\textup{dVol}_{\hat{g}}(y)\right)d\tilde{\nu}(x). 
\end{align*}

Moreover, by \cref{prop:gradientprojectedfunction} we can conclude that for every $x \in M$, it holds that 
\[
|\textup{grad}_g\, f(x)|^2_g = |\textup{grad}_{h}\, \tilde f(\pi(x))|^2_{h},
\]
and so
\begin{align*}
\int_{M/G} \tilde f^2(x)\; d\tilde{\nu}(x) &\leq \frac{1}{\textup{Vol}(G)}\frac{1}{\kappa \beta} \int_{M/G} \left(\int_{\pi^{-1}(x)} |\textup{grad}_{h}\, \tilde f(x)|^2_{h}\, \textup{dVol}_{\hat{g}}(y)\right)d\tilde{\nu}(x)
\\&= \frac{1}{\kappa \beta} \int_{M/G} |\textup{grad}_{h}\, \tilde f(x)|^2_{h}\; d\tilde{\nu}(x),
\end{align*}
finishing the proof. 
\end{proof}

Although the lowering result for the Poincaré inequality is not required for the proof of our main result or in the examples studied, we include it here as it may be of independent interest in this and other related settings. 

%% file: Chapters/Upgrading.tex
\section{From a Poincaré to a log-Sobolev inequality and rapid mixing}
\label{SectionPItoLSI}

In this section, we show how to derive a log-Sobolev inequality from a Poincaré inequality together with the curvature-dimension condition. The key tool is a weaker version of the log-Sobolev inequality, known as a \textit{defective} log-Sobolev inequality. The use of such weakened inequalities (including the related \textit{weak} log-Sobolev inequalities) is common practice in this type of argument (cf. \cite{cattiaux2007weak,carlen2004logarithmic}).

\begin{definition}[Logarithmic Sobolev inequalities]
We say that the Markov triple $(M, \nu, \Gamma)$ satisfies a \textup{defective logarithmic Sobolev inequality} with constants $\alpha > 0$ and $A \geq 0$, \textup{LSI}$(\alpha, A)$, if, for every probability measure $\mu$ such that\footnote{Recall that $\mu \ll \nu$ is used to denote that $\mu$ is \textit{absolutely continuous} with respect to $\nu$, i.e. for every measurable set $A$, $\nu(A) = 0$ implies that $\mu(A) = 0$.} $\mu \ll \nu$ with $f := \frac{d\mu}{d\nu} \in C^2(M)$,
\begin{equation}
\label{defectiveLSI}    
H(\mu | \nu) \leq \frac{1}{2\alpha} I(\mu | \nu) + A,
\end{equation}
where $H(\mu | \nu) := \int_M f \log f d\nu$ is known as the \textup{relative entropy} (or Kullback-Leibler divergence) of the probability measure $\mu$ with respect to $\nu$, and $I(\mu | \nu) := \int_M \frac{\Gamma(f)}{f} d\nu$ is known as the \textup{Fisher information} of $\mu$ with respect to $\nu$.\footnote{Here, we are using the convention $0 \log0 = 0$ and $\Gamma(0)/0 = 0$.} 

We say the Markov triple $(M, \nu, \Gamma)$ satisfies a \textup{tight logarithmic Sobolev inequality}---or simply \textup{log-Sobolev inequality}---with constant $\alpha > 0$, \textup{LSI}($\alpha$), if \cref{defectiveLSI} is satisfied with $A = 0$, i.e.
\[
H(\mu | \nu) \leq  \frac{1}{2\alpha}I(\mu | \nu).
\]
\end{definition}

The log-Sobolev inequality allows us to study the rate at which the distribution of the Langevin diffusion $X_t$, $\rho_t$, tends to the Gibbs distribution $\nu$. In order to have a notion of \textit{distance} between two probability distributions, we can make use of the so-called \textit{total variation distance}. 

\begin{definition}[Total variation distance]
Let $\mu$ and $\nu$ be two probability measures on a measure space $(E, \mathcal{F})$. Then  we define their total variation distance as 
\[
\norm{\mu - \nu}_{\text{TV}} := \sup\{|\mu(A) - \nu(A)| : A \in \mathcal{F}\}. 
\]
\end{definition}

We are interested in obtaining an upper bound for the total variation distance between $\rho_t$ and $\nu$, which decreases \textit{fast} with $t$. In fact, one of the implications of having a log-Sobolev inequality is that the total variation distance between the distribution of $X_t$, $\rho_t$ and the Gibbs distribution $\nu$ decreases exponentially fast with respect to $t$. 

\begin{proposition}
\label{thm:LSIImpliesHypercontract}
If the Markov triple $(M, \nu, \Gamma)$ satisfies an $\textup{LSI}(\alpha)$, then 
\[
\norm{\rho_t - \nu}^2_{\textup{TV}} \leq \frac{1}{2} e^{-2\alpha t} H(\rho_0 | \nu),\quad \forall t \geq 0.
\]
\end{proposition}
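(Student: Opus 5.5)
The plan is to combine the standard entropy-decay argument with Pinsker's inequality.

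\textbf{Entropy dissipation.} Let $f_t := \frac{d\rho_t}{d\nu}$. Since $\nu$ is invariant and $\operatorname{L}$ is symmetric in $L^2(\nu)$, the density evolves by the Fokker--Planck equation $\partial_t f_t = \operatorname{L} f_t$. Differentiating the relative entropy gives
\[
\frac{d}{dt} H(\rho_t|\nu) = \int_M (\log f_t + 1)\operatorname{L} f_t \, d\nu = -\int_M \Gamma(\log f_t, f_t)\, d\nu = -\int_M \frac{\Gamma(f_t)}{f_t}\, d\nu = -I(\rho_t|\nu).
\]
Here I use $\int \operatorname{L} f_t\, d\nu = 0$ and the integration by parts $\int \phi \operatorname{L}\psi \, d\nu = -\int \Gamma(\phi,\psi)\,d\nu$. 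On the closed manifold $M$ this is the analogue of \cref{Neumannboundarycondition} with no boundary term. The last step is the chain rule $\Gamma(\log f, f) = \Gamma(f)/f$.

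\textbf{Exponential decay.} Applying LSI$(\alpha)$ gives $I(\rho_t|\nu) \geq 2\alpha H(\rho_t|\nu)$. Hence $\frac{d}{dt}H(\rho_t|\nu) \leq -2\alpha H(\rho_t|\nu)$, and Grönwall's lemma yields $H(\rho_t|\nu) \leq e^{-2\alpha t} H(\rho_0|\nu)$.

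\textbf{Total variation.} Pinsker's inequality, $\norm{\rho_t - \nu}_{\textup{TV}}^2 \leq \frac{1}{2} H(\rho_t|\nu)$, then gives the claim.

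\textbf{Main obstacle: regularity.} The difficulty is justifying the formal computation: that $f_t$ is positive and $C^2$, so that LSI$(\alpha)$ (stated for $C^2$ densities) applies, and that one may differentiate under the integral. I would handle this as follows.
\begin{itemize}
\item For $t>0$, parabolic regularity and the strong maximum principle for the smooth, elliptic generator on the compact manifold make $f_t$ smooth and strictly positive.
\item This justifies the computation on $(0,\infty)$.
\item To reach the initial time, let $t\downarrow 0$ using lower semicontinuity of entropy, or equivalently the fact that $H(\rho_s|\nu)$ is nonincreasing and right-continuous at $0$ in the relevant cases.
\end{itemize}
If $H(\rho_0|\nu) = \infty$ there is nothing to prove.
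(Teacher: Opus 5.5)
Your argument is correct and matches the paper's proof. The paper invokes Theorem 5.2.1 of Bakry--Gentil--Ledoux for $H(\rho_t|\nu)\le e^{-2\alpha t}H(\rho_0|\nu)$, whose standard proof is exactly your dissipation-plus-Gr\"onwall computation, and then applies Pinsker's inequality.

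One small fix in your regularity step: lower semicontinuity of entropy gives the wrong inequality as $t\downarrow 0$. What you need is $H(\rho_s|\nu)\le H(\rho_0|\nu)$ for $s>0$. This follows directly from Jensen's inequality for the Markov semigroup $P_s$ together with the invariance of $\nu$.
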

\begin{proof}
Indeed, by \cite[Theorem 5.2.1]{bakry2013analysis} we know that 
\begin{equation}
\label{eq:boundKL}
H(\rho_t | \nu) \leq e^{-2\alpha t} H(\rho_0 | \nu), \quad \forall t \geq 0. 
\end{equation}
Using the Pinsker-Csizsár-Kullback inequality, we get
\[
\norm{\rho_t - \nu}^2_{\textup{TV}} \leq \frac{1}{2} H(\rho_t | \nu) \leq \frac{1}{2} e^{-2\alpha t} H(\rho_0 | \nu).
\]
\end{proof}

A log-Sobolev inequality shows that the total variation distance between $\rho_t$ and $\nu$ decreases exponentially fast with time. In many situations, it is also crucial to obtain bounds on $H(\rho_0 |\nu)$. 

\begin{proposition}
\label{BoundInitialRelativeEntropy}
Let $\mu$ be the uniform distribution on a compact Riemannian manifold $(M, g)$, let $F : M \to \bb{R}$ be a smooth function satisfying $\min_{x \in M} F(x) \geq 0$, and let $\nu$ be the Gibbs distribution $\nu = \frac{1}{Z}e^{-\beta F}$. Then 
\[
H(\mu | \nu) \leq \beta \max_{x \in M} F(x). 
\]
\end{proposition}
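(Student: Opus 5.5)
The plan is to compute the relative entropy explicitly and bound the normalizing constant. Write $V := \textup{Vol}_g(M)$. The uniform distribution is $d\mu = \frac{1}{V}\textup{dVol}_g$. Since $d\nu = \frac{1}{Z}e^{-\beta F}\textup{dVol}_g$ has a strictly positive continuous density, $\mu \ll \nu$ with
\[
f = \frac{d\mu}{d\nu} = \frac{Z}{V}e^{\beta F},
\]
which is smooth, hence in $C^2(M)$. Then
\[
H(\mu|\nu) = \int_M f\log f\, d\nu = \int_M \log f \, d\mu = \log\frac{Z}{V} + \frac{\beta}{V}\int_M F\,\textup{dVol}_g .
\]

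The two terms are bounded separately. Because $F \geq 0$, we have $e^{-\beta F} \leq 1$ pointwise, so $Z = \int_M e^{-\beta F}\textup{dVol}_g \leq V$ and therefore $\log\frac{Z}{V} \leq 0$. For the second term, the average of $F$ under $\mu$ is at most $\max_M F$, giving $\frac{\beta}{V}\int_M F\,\textup{dVol}_g \leq \beta\max_{x\in M}F(x)$. Adding the two bounds yields $H(\mu|\nu) \leq \beta \max_{x \in M} F(x)$.

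No step here is a real obstacle. The only points to check are that $f$ has the regularity the definition of $H$ requires, which holds because $F$ is smooth and $M$ is compact (so $V$ and $Z$ are finite and positive), and that the hypothesis $\min F \geq 0$ is used exactly to make $\log(Z/V) \leq 0$. The bound is somewhat loose, since the sharper estimate $\beta\int F\,d\mu$ is available, but the stated form is all that is needed later.
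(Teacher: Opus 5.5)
Your proposal is correct and follows the paper's proof essentially step for step. Like the paper, you compute $\frac{d\mu}{d\nu} = \frac{Z}{\textup{Vol}(M)}e^{\beta F}$ and write $H(\mu|\nu)$ as $\log\frac{Z}{\textup{Vol}(M)}$ plus the $\mu$-average of $\beta F$; you then use $F \geq 0$ to get $Z \leq \textup{Vol}(M)$ and bound the average by $\beta\max_M F$.
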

\begin{proof}
The fact that $\mu \ll \nu$ follows immediately. Now, since $d\mu = \frac{1}{\textup{Vol}(M)} \textup{dVol}_g(x)$ and $d\nu = \frac{1}{Z}e^{-\beta F(x)} \textup{dVol}_g(x)$,
\[
\frac{d\mu}{d\nu} = \frac{Z}{\textup{Vol}(M)} e^{\beta F(x)},
\]
which implies that 
\[
\log \frac{d\mu}{d\nu} = \log (\frac{Z}{\textup{Vol}(M)}) + \beta F(x).
\]
This expression allows us to write the relative entropy of $\mu$ with respect to $\nu$ as
\[
H(\mu | \nu) = \frac{1}{\textup{Vol}(M)} \int_M \log \frac{d\mu}{d\nu} \textup{dVol}_g(x) = \log (\frac{Z}{\textup{Vol}(M)}) + \frac{\beta}{\textup{Vol}(M)} \int_M F(x) \textup{dVol}_g(x).
\]
Now, note that since $F \geq 0$,
\[
Z = \int_M e^{-\beta F(x)} \textup{dVol}_g(x) \leq \int_M \textup{dVol}_g(x) = \textup{Vol}_g(M),
\]
and so we can upper bound the relative entropy as
\[
H(\mu | \nu) \leq \log(1) + \frac{\beta}{\textup{Vol}(M)} \int_M \max_{x \in M} F(x) \textup{dVol}_g(x) \leq \beta \max_{x \in M} F(x).
\]
\end{proof}

In particular, whenever the initial distribution of the Langevin diffusion $X_t$ is uniform on $M$, $H(\rho_0| \nu)$ can be upper bounded by the product of $\beta$ and the maximum of $F$. 

The Poincaré inequality yields an analogue of \cref{eq:boundKL} with the relative entropy replaced by the $\chi^2$ divergence (cf. \cite[Theorem 4.2.5]{bakry2013analysis}), which in turn yields the total variation distance bound $\norm{\rho_t - \nu}^2_{\textup{TV}} \leq \frac{1}{2} e^{-2\kappa t} \chi^2(\rho_0 | \nu)$ (cf. \cite[Equation 2.27]{Tsybakov2008}) where $\kappa$ is the Poincaré inequality constant. However, the relative entropy $H$ is upper bounded by the logarithm of  $1+ \chi^2$ (\cite[Lemma 2.7]{Tsybakov2008}). For this reason, while both the Poincaré and log-Sobolev inequalities provide exponentially decaying bounds on the total variation distance between $\rho_t$ and $\nu$, the prefactor obtained via the latter is exponentially smaller. 

\subsection{From Poincaré to log-Sobolev}

\begin{theorem}[{Poincaré inequality and curvature-dimension condition imply log-Sobolev inequality}]
\label{prop9.15}
Consider the Markov triple $(M, \nu, \Gamma)$ and let $\beta \geq 1$. Suppose that the following two conditions hold:
\begin{enumerate}
    \item $(M, \nu, \Gamma)$ satisfies a curvature-dimension condition with constant $-\kappa_1 < 0$, where $\kappa_1 \geq1$, denoted $\textup{CD}(-\kappa_1)$, i.e.
    \begin{equation}
    \label{CD3}
    \nabla^2 F + \frac{1}{\beta} \textup{Ric}_g \geq -\kappa_1 g.
    \end{equation}
    \item $(M, \nu, \Gamma)$ satisfies a $\textup{PI}(\kappa_2)$ with $\kappa_2 > 0$, where $\kappa_2 \leq 1$.
\end{enumerate}
Then  $(M, \nu, \Gamma)$ satisfies an $\textup{LSI}(\alpha)$ with
\[
\frac{1}{\alpha} = \frac{4\beta \kappa_1 \textup{diam}^*(M)^2}{\kappa_2},
\]
where $\textup{diam}^*(M) = \max\{\textup{diam}(M), 1\}$, and $\textup{diam}(M)$ is the diameter of $M$.
\end{theorem}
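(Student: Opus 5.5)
The approach is the classical two-step route: derive a \emph{defective} log-Sobolev inequality from the curvature-dimension bound \cref{CD3}, then tighten it with the Poincaré inequality via Rothaus' lemma. The defective inequality should come from the Wang / Bakry--Émery reverse estimates for the semigroup $P_t$ generated by $\operatorname{L}$. Under $\textup{CD}(-\kappa_1)$, in the normalisation where $\Gamma = \frac{1}{\beta}|\Grad{g}\cdot|^2$, the semigroup satisfies the reverse local log-Sobolev inequality
\[
P_t f\log P_t f - P_t f \log P_t f(x)\ \text{-type bound:}\quad \Gamma(P_t f) \leq \frac{2\kappa_1}{e^{2\kappa_1 t}-1}\, P_t f\; P_t(f\log f) - \ldots
\]
More usefully, it satisfies Wang's dimension-free Harnack inequality
\[
(P_t f)^p(x) \leq P_t f^p(y)\exp\Big(\frac{p\,\beta\,\kappa_1 d_g(x,y)^2}{(p-1)(1-e^{-2\kappa_1 t})}\Big).
\]
I would cite this from \cite{bakry2013analysis,wang2006functional} (see \cref{BakryEmeryandLyapunov}). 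Integrating the Harnack inequality (equivalently, the HWI-type inequality under $\textup{CD}(-\kappa_1)$) against $\nu$ and using $d_g \leq \textup{diam}(M)$ on the compact manifold, I expect a defective $\textup{LSI}(\alpha_0, A)$ with $\frac{1}{2\alpha_0} \lesssim \beta\kappa_1\textup{diam}(M)^2$ and a constant $A \lesssim \beta\kappa_1 \textup{diam}(M)^2$ as well. Concretely, I would apply the HWI inequality
\[
H \leq W_2\sqrt{\beta I} + \tfrac{\beta\kappa_1}{2}W_2^2,
\]
bound $W_2 \leq \textup{diam}(M)$, and absorb the cross term with Young's inequality. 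This gives $H \leq \frac{1}{2\alpha_0} I + A$ with both $\frac{1}{\alpha_0}$ and $A$ of order $\beta\kappa_1 \textup{diam}^*(M)^2$. Here $\kappa_1 \geq 1$, $\beta \geq 1$ and $\textup{diam}^* \geq 1$ are used to merge the terms into a single product.

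The second step removes the defect using the Poincaré inequality. Rothaus' lemma gives, for $f = g^2$,
\[
\textup{Ent}_\nu(g^2) \leq \textup{Ent}_\nu(\tilde g^2) + 2\textup{Var}_\nu(g), \qquad \tilde g = g - \textstyle\int g\,d\nu.
\]
Applying the defective inequality to $\tilde g$ (with $I = 4\int\Gamma(g)$) and bounding $\|\tilde g\|_2^2 = \textup{Var}(g) \leq \frac{1}{\kappa_2}\int\Gamma(g)$ then yields
\[
\textup{Ent}(g^2) \leq \Big(\frac{2}{\alpha_0} + \frac{A+2}{\kappa_2}\Big)\int \Gamma(g)\,d\nu.
\]
Converting back to the $H \leq \frac{1}{2\alpha}I$ normalisation, and using $\kappa_2 \leq 1$ to write $\frac{1}{\alpha_0} \leq \frac{1}{\alpha_0\kappa_2}$, gives $\frac{1}{\alpha}$ of the form $c\,\beta\kappa_1\textup{diam}^*(M)^2/\kappa_2$.

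The main obstacle is bookkeeping of constants so that $c = 4$ exactly. This depends on precisely which version of the HWI/Harnack estimate is quoted, and on the factor-$\beta$ normalisation of $\Gamma$. The Ricci curvature appears with weight $\frac{1}{\beta}$, so the effective Bakry--Émery operator is $\beta\operatorname{L}$ with curvature $-\beta\kappa_1$. I would first rescale time so that the generator is $\beta\operatorname{L} = \Delta - \beta\langle\nabla F,\nabla\cdot\rangle$, apply the standard results with curvature bound $-\beta\kappa_1$, and only then translate back. I would choose the Young splitting parameter to balance the terms, so that the sum of $\frac{1}{\alpha_0}$, $A$ and the Rothaus term is at most $4\beta\kappa_1\textup{diam}^*(M)^2$.
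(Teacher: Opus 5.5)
Your concrete route (the HWI inequality for $V=\beta F+\log Z$ with curvature bound $-\beta\kappa_1$, bounding $W_2$ by the diameter, Young's inequality to get a defective log-Sobolev inequality, then Rothaus' lemma with $\textup{PI}(\kappa_2)$ to tighten it, which is exactly \cref{proptightening}) is the paper's proof. The Harnack/reverse-inequality paragraph is not needed and can be dropped, and the constant you flag as the main obstacle is harmless: since $W_2\le\textup{diam}(M)$ trivially (sharper than the paper's $W_2^2\le 4\,\textup{diam}(M)^2$), choosing $W_2\sqrt{\beta I}\le\frac{\beta}{2}I+\frac12W_2^2$ gives $\frac{1}{\alpha_0}=\beta$ and $A=\frac{(1+\beta\kappa_1)\textup{diam}(M)^2}{2}$, hence $\frac{1}{\alpha}=\frac{1}{\alpha_0}+\frac{1}{\kappa_2}\big(\frac{A}{2}+1\big)\le\frac{5}{2}\,\frac{\beta\kappa_1\textup{diam}^*(M)^2}{\kappa_2}$, within the claimed constant $4$.
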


\begin{remark}
Note that we can assume that $\kappa_1 \geq 1$ since $\textup{CD}(-\kappa_1)$ with $0 < \kappa_1 < 1$ implies $\textup{CD}(-\kappa')$ for any $\kappa' \geq 1$. Similarly, we can assume that $1 \geq \kappa_2 > 0$, since $(M, \nu, \Gamma)$ satisfying a $\textup{PI}(\kappa_2)$ with $\kappa_2 > 1$ implies a $\textup{PI}(1)$.
\end{remark}

\begin{remark}
\cref{prop9.15} is a generalization of \cite[Proposition D.17]{LiErd2022Supp}. Note that we are allowing for the curvature-dimension condition to hold with respect to the negative constant $-\kappa_1$, as opposed to when we obtained a Poincaré inequality from the curvature-dimension condition, in which case we needed the constant to be positive (cf. \cref{lemma9.9,rem:CDimpliesPI}).
\end{remark}

To prove \cref{prop9.15}, we will proceed as in \cite[Section D.6]{LiErd2022Supp}. We will first obtain an upper bound on the relative entropy of $\rho_t$---the distribution of the Langevin diffusion $X_t$---with respect to $\nu$, under the assumption that the curvature-dimension condition holds. This upper bound will be written in terms of the Fisher information and the Wasserstein distance between $\rho_t$ and $\nu$. We will next get a defective log-Sobolev inequality from it. Finally, this inequality will be tightened using standard techniques. 

\begin{definition}[Wasserstein distance]
Let $(\mathcal{X}, d)$ be a separable and complete metric space, and let $p \in [1, \infty)$. Let $\mu$ and $\eta$ be two probability measures of $\mathcal{X}$. We define the \textup{Wasserstein distance} of order $p$ between $\mu$ and $\eta$ as
\[
W_p(\mu, \eta) := \left(\inf_{\pi \in \Pi(\mu, \eta)} \int_{\mathcal{X} \times \mathcal{X}} d(x, y)^p \, d\pi(x, y)\right)^{1/p},
\]
where $\Pi(\mu, \eta)$ is the set of joint probability measures on $\mathcal{X} \times \mathcal{X}$ whose marginals are $\mu$ and $\eta$. 
\end{definition}

As we mentioned earlier, the Wasserstein distance allows us to upper bound the relative entropy between two probability distributions in the presence of a curvature-dimension condition. 

\begin{proposition}[{HWI Inequality, \cite[Corollary 20.13]{villani2008optimal}}]
\label{theorem9.13}
Let $(M, g)$ be a complete Riemannian manifold equipped with a measure $d\mu = e^{-V}\textup{dVol}_g$, $V \in C^2(M)$, satisfying the following curvature-dimension bound: 
\[
\nabla^2 V + \textup{Ric}_g \geq \kappa g,
\]
for some $\kappa \in \bb{R}$. Then  if $\mu \in P_2(M)$, i.e. $\mu$ is a probability measure and
\[
\int_M d_g(x, x_0)^2 d\mu < \infty,\quad \forall x_0 \in M,
\]
for any $\eta \in P_2(M)$, $\eta \ll \mu$, it holds that
\begin{equation}
\label{eq:prelimineqH}
H(\eta | \mu) \leq W_2(\eta, \mu)\sqrt{\mathbf{I}(\eta | \mu)} - \frac{\kappa}{2}W_2(\eta, \mu)^2,    
\end{equation}
where $\mathbf{I}(\eta | \mu)$ is the \textit{unadjusted} Fisher information, i.e. $\mathbf{I}(\eta | \mu) := \int_M \frac{|\Grad{g} f|_g^2}{f} d\mu$, with $f = \frac{d\eta}{d\mu}$. 
\end{proposition}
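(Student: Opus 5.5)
The plan is the standard optimal-transport route: $\kappa$-displacement convexity of the relative entropy under the curvature-dimension bound, followed by an above-tangent estimate at the starting point of a Wasserstein geodesic. Write $f = \frac{d\eta}{d\mu}$ and $\mathcal{H}(\rho) := H(\rho\,|\,\mu) = \int_M \rho\log\rho\,\textup{dVol}_g + \int_M V\rho\,\textup{dVol}_g$, up to the normalization constant. We may assume $W_2(\eta,\mu) < \infty$ and $\mathbf{I}(\eta|\mu) < \infty$, since otherwise \cref{eq:prelimineqH} is trivial.

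\textbf{Step 1: the Wasserstein geodesic.} Since $\eta \ll \mu \ll \textup{Vol}_g$ and both measures lie in $P_2(M)$, McCann's theorem on Riemannian manifolds gives an optimal map $T(x) = \exp_x(\nabla\varphi(x))$ with $\varphi$ a $d^2/2$-concave function. It also gives the displacement interpolation $\mu_t := (T_t)_\#\eta$ with $T_t(x) = \exp_x(t\nabla\varphi(x))$. This curve is a constant-speed geodesic from $\mu_0 = \eta$ to $\mu_1 = \mu$, and $W_2(\eta,\mu)^2 = \int_M |\nabla\varphi|_g^2\,d\eta$.

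\textbf{Step 2: $\kappa$-displacement convexity.} I will show
\[
\mathcal{H}(\mu_t) \leq (1-t)\mathcal{H}(\eta) + t\,\mathcal{H}(\mu) - \frac{\kappa}{2}t(1-t)W_2(\eta,\mu)^2 .
\]
Via the change-of-variables formula $\rho_0(x) = \rho_t(T_t x)\,\mathcal{J}_t(x)$, this reduces to a pointwise statement along each transport geodesic $t \mapsto T_t(x)$. The function $-\log\mathcal{J}_t(x) + V(T_t x)$ must be $\kappa|\nabla\varphi(x)|^2$-convex in $t$. The $V$-part contributes $\nabla^2V(\dot\gamma,\dot\gamma)$. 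The Jacobian part is handled by the Jacobi-field / Riccati computation: writing $\mathcal{J}_t = \det J_t$, one has $-\frac{d^2}{dt^2}\log\det J_t \geq \textup{Ric}(\dot\gamma,\dot\gamma)$, where the trace of the squared second fundamental term is dropped since it is nonnegative. Adding the two contributions and using $\nabla^2V + \textup{Ric}_g \geq \kappa g$ gives the claim after integrating against $\eta$.

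\textbf{Step 3: above-tangent inequality and Cauchy--Schwarz.} Since $\mathcal{H}(\mu) = 0$, dividing the convexity inequality by $t$ and letting $t \downarrow 0$ yields
\[
\mathcal{H}(\eta) \leq -\frac{d}{dt}\Big|_{t=0^+}\mathcal{H}(\mu_t) - \frac{\kappa}{2}W_2^2 .
\]
The first variation of $\mathcal{H}$ along the interpolation is bounded below by $\int_M \langle \Grad{g}\log f, \nabla\varphi\rangle_g\,d\eta$. Applying Cauchy--Schwarz in $L^2(\eta)$ gives
\[
\Big|\int_M \langle \Grad{g}\log f,\nabla\varphi\rangle_g\,d\eta\Big| \leq \Big(\int_M \frac{|\Grad{g}f|_g^2}{f}\,d\mu\Big)^{1/2}\Big(\int_M|\nabla\varphi|_g^2\,d\eta\Big)^{1/2} = \sqrt{\mathbf{I}(\eta|\mu)}\,W_2(\eta,\mu),
\]
which is \cref{eq:prelimineqH}.

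\textbf{Main obstacle.} The hard part is rigor rather than the formal computation. $T$ is only differentiable almost everywhere (Alexandrov-type regularity of semiconcave $\varphi$), and the cut locus must be avoided. Justifying the first-variation bound at $t = 0$ requires integration by parts against $\nabla\varphi$, yet $\varphi$ is not smooth, and one only obtains an inequality through the distributional Laplacian of $\varphi$. I would first prove everything for $f$ smooth with compact support and bounded away from zero. I would then approximate a general $\eta$, using lower semicontinuity of $\mathcal{H}$, continuity of $W_2$ on $P_2$, and convergence of the Fisher information along the approximations. In our compact setting this is exactly the argument of \cite[Corollary 20.13]{villani2008optimal}.
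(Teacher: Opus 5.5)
Your proposal is correct and follows the same route as the result being cited. The paper gives no proof of its own and simply invokes \cite[Corollary 20.13]{villani2008optimal}, whose proof is exactly this argument: $\kappa$-displacement convexity of $H(\cdot\,|\,\mu)$ along the McCann interpolation, then the above-tangent (first-variation) lower bound at $t=0$, then Cauchy--Schwarz.

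One correction to your sign convention. With $T(x)=\exp_x(\nabla\varphi(x))$, the potential $\varphi$ is $d^2/2$-convex in Villani's convention, hence semiconvex rather than semiconcave. This semiconvexity is what makes the singular part of the distributional Laplacian of $\varphi$ nonnegative, which is the sign your first-variation lower bound needs.
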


The following lemma provides an upper bound on the Wasserstein distance that will be useful to derive a defective log-Sobolev inequality from \cref{eq:prelimineqH}. 

\begin{lemma}[{\cite[Theorem 6.15]{villani2008optimal}}]
\label{9.14Li}
Let $\mu, \eta$ be two probability measures on a separable and complete metric space $(\mathcal{X}, d)$. Then for every $p \geq 1$ and every $x_0 \in \mathcal{X}$, it holds that
\[
W_p(\mu, \eta) \leq 2^{\frac{1}{p'}}\left\{\int_{\mathcal{X}} d(x_0, x)^p \, d|\mu - \eta|(x)\right\}^{\frac{1}{p}},
\]
where $\frac{1}{p} + \frac{1}{p'} = 1$.
\end{lemma}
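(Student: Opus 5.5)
The idea is to bound $W_p(\mu,\eta)$ from above by exhibiting one explicit coupling in $\Pi(\mu,\eta)$: leave the common part of $\mu$ and $\eta$ in place, and transport the excess of $\mu$ onto the excess of $\eta$ with a product coupling. If the right-hand side is infinite there is nothing to prove, so assume it is finite.

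\textbf{Setup.} Write $\mu\wedge\eta := \mu - (\mu-\eta)_+$, where $(\mu-\eta)_\pm$ is the Jordan decomposition of the signed measure $\mu-\eta$. Both $\mu$ and $\eta$ are probability measures, so
\[
a := (\mu-\eta)_+(\mathcal{X}) = (\eta-\mu)_+(\mathcal{X}).
\]
If $a=0$ then $\mu=\eta$ and both sides vanish. Otherwise define
\[
\pi := (\mathrm{Id},\mathrm{Id})_\#(\mu\wedge\eta) + \frac{1}{a}\,(\mu-\eta)_+\otimes(\eta-\mu)_+ .
\]

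\textbf{Marginals.} The first marginal of $\pi$ is $\mu\wedge\eta + (\mu-\eta)_+ = \mu$. The second marginal is $\mu\wedge\eta + (\eta-\mu)_+ = \eta$. Hence $\pi\in\Pi(\mu,\eta)$.

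\textbf{Cost estimate.} The diagonal part of $\pi$ contributes zero cost, since $d(x,x)=0$. For the product part, convexity of $t\mapsto t^p$ together with the triangle inequality gives
\[
d(x,y)^p \le 2^{p-1}\big(d(x,x_0)^p + d(x_0,y)^p\big).
\]
Integrating this against $\frac1a(\mu-\eta)_+\otimes(\eta-\mu)_+$, and using that each factor has total mass $a$, yields
\[
\int d^p\,d\pi \le 2^{p-1}\Big(\int d(x_0,x)^p\,d(\mu-\eta)_+ + \int d(x_0,y)^p\,d(\eta-\mu)_+\Big) = 2^{p-1}\int d(x_0,x)^p\,d|\mu-\eta|(x).
\]

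\textbf{Conclusion.} Taking $p$-th roots and using $(p-1)/p = 1/p'$ gives the claimed bound for $W_p(\mu,\eta)$.

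The only point that needs care is measurability and well-definedness of the coupling, and the identification of $|\mu-\eta|$ as $(\mu-\eta)_+ + (\eta-\mu)_+$. Both are standard on a separable complete metric space, so I do not expect a genuine obstacle.
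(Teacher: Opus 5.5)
Your proof is correct, and it is exactly the standard argument behind the cited result (Villani's proof of Theorem 6.15). It uses the coupling that keeps $\mu\wedge\eta$ on the diagonal and spreads $(\mu-\eta)_+$ over $(\eta-\mu)_+$ by a normalized product, combined with $d(x,y)^p\le 2^{p-1}\bigl(d(x,x_0)^p+d(x_0,y)^p\bigr)$.

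The paper itself imports this lemma without proof, so there is nothing further to compare. The only implicit step worth stating is that $\mu\wedge\eta\ge 0$, which holds because $(\mu-\eta)_+\le\mu$ by the Hahn decomposition; this is what makes $\pi$ a genuine nonnegative coupling.
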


Using this result we can easily upper bound the Wasserstein distance by the diameter of the manifold. This result is analogous to \cite[Particular case 6.16]{villani2008optimal}.
\begin{remark}
\label{cotaWasserstein}
Let $(M, g)$ be a compact Riemannian manifold with diameter $\textup{diam}(M)$. Then  for any two probability measures $\mu, \eta \in P_2(M)$ it holds that
\[
W_2(\mu, \eta)^2 \leq 4\, \textup{diam}(M)^2.
\]
\end{remark}

\begin{proof}
Setting $p = 2$ in \cref{9.14Li} we obtain 
\[
W_2(\mu, \eta)^2 \leq 2\, \textup{diam}(M)^2 \int_{M} d|\mu - \eta|(x) = 4\,\textup{diam}(M)^2 \norm{\mu - \eta}_{\text{TV}}.
\]
Now
\[
\norm{\mu - \eta}_{\text{TV}} = \sup\{|\mu(A) - \eta(A)| : A \in \mathcal{F}\} \leq 1,
\]
yielding
\[
W_2(\mu, \eta)^2 \leq 4\, \textup{diam}(M)^2.
\]
\end{proof}

Lastly, as we mentioned earlier, we will end the proof of \cref{prop9.15} with a \textit{tightening} argument, which will allow us to obtain a tight log-Sobolev inequality from the existence of a defective log-Sobolev inequality and a Poincaré inequality. To this end, we will use the following result, which we include without proof. 

\begin{proposition}[{Tightening of a log-Sobolev inequality with a Poincaré inequality - \cite[Proposition 5.1.3]{bakry2013analysis}}]
\label{proptightening}
Assume that $(M, \nu, \Gamma)$ satisfies an $\textup{LSI}(C_1, C_2)$ and a $\textup{PI}(C')$. Then  the Markov triple satisfies a log-Sobolev inequality with constant
\[
\frac{1}{\frac{1}{C_1} + \frac{1}{C'}(\frac{C_2}{2} + 1)}.
\]
\end{proposition}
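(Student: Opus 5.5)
The plan is to pass from the entropy/Fisher-information formulation to the equivalent functional form on square roots of densities, and then to combine the defective inequality with the Poincaré inequality through Rothaus' lemma.

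\textbf{Step 1: functional form of the defective inequality.} Take a density $f = d\mu/d\nu$ and write $f = g^2$ with $g \geq 0$. The chain rule gives $\Gamma(f)/f = 4\Gamma(g)$, so $I(\mu|\nu) = 4\int_M \Gamma(g)\,d\nu$. Hence $\textup{LSI}(C_1, C_2)$ reads
\[
\textup{Ent}_\nu(g^2) \leq \frac{2}{C_1}\int_M \Gamma(g)\,d\nu + C_2 \int_M g^2\,d\nu,
\]
where $\textup{Ent}_\nu(\phi) := \int \phi\log\phi\,d\nu - \int\phi\,d\nu \log\int\phi\,d\nu$. First we prove this for $\int g^2\,d\nu = 1$. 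Both sides are $2$-homogeneous in $g$, so it extends to every nonnegative $g$.

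\textbf{Step 2: extension to arbitrary $g$.} We extend the inequality to sign-changing $g \in C^2(M)$ by applying it to $|g|$. For this we use $\textup{Ent}_\nu(|g|^2) = \textup{Ent}_\nu(g^2)$ and $\Gamma(|g|) \leq \Gamma(g)$ almost everywhere. The function $|g|$ is only Lipschitz, so we insert a regularization: apply the inequality to $\sqrt{g^2+\epsilon}$, which is smooth, and let $\epsilon \to 0$ by dominated convergence on the compact $M$. This takes care of the $C^2$ requirement on densities.

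\textbf{Step 3: Rothaus' lemma.} The key tool is Rothaus' lemma: for every $g$ and every constant $c$,
\[
\textup{Ent}_\nu\big((g+c)^2\big) \leq \textup{Ent}_\nu(g^2) + 2\int_M g^2\,d\nu .
\]
We apply it with $\tilde g := g - m$, where $m := \int g\,d\nu$, and $c = m$. Then we use the extended defective inequality on $\tilde g$, together with $\Gamma(\tilde g) = \Gamma(g)$ and $\int\tilde g^2\,d\nu = \textup{Var}_\nu(g)$:
\[
\textup{Ent}_\nu(g^2) \leq \frac{2}{C_1}\int_M\Gamma(g)\,d\nu + (C_2+2)\,\textup{Var}_\nu(g).
\]
The Poincaré inequality $\textup{PI}(C')$ gives $\textup{Var}_\nu(g) \leq \frac{1}{C'}\int\Gamma(g)\,d\nu$. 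Therefore
\[
\textup{Ent}_\nu(g^2) \leq 2\Big(\frac{1}{C_1} + \frac{1}{C'}\Big(\frac{C_2}{2}+1\Big)\Big)\int_M\Gamma(g)\,d\nu .
\]

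\textbf{Step 4: back to densities.} Returning to $f = g^2$, this is exactly $H(\mu|\nu) \leq \frac{1}{2\alpha}I(\mu|\nu)$ with $\frac{1}{\alpha} = \frac{1}{C_1} + \frac{1}{C'}\big(\frac{C_2}{2}+1\big)$, which is the claimed constant.

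\textbf{Main obstacle.} I expect Rothaus' lemma itself to be the main obstacle. I would first try the standard route: reduce to proving that $t \mapsto \textup{Ent}_\nu((g+t)^2) - \textup{Ent}_\nu(g^2) - 2\int g^2\,d\nu$ is nonpositive, using the elementary pointwise inequality behind $(a+b)^2\log(a+b)^2$ expansions, as in Bakry–Gentil–Ledoux. The sign and regularization bookkeeping in Step 2 is the secondary technical point.
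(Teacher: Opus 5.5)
Your proposal is correct and is essentially the standard proof of \cite[Proposition 5.1.3]{bakry2013analysis}, which the paper cites without reproducing. That proof applies Rothaus' lemma to the centered function, then the defective inequality and $\textup{PI}(C')$, and yields exactly the stated constant; Rothaus' lemma is proved in the same reference, so you can cite it rather than reprove it. The one step you left unchecked is Step~4, where $\sqrt{f}$ need not be $C^2$ on $\{f=0\}$. Fix it by applying the Step~3 inequality to $\sqrt{f+\epsilon}$, using $\int_M \Gamma(f)/(f+\epsilon)\,d\nu \leq I(\mu|\nu)$ (recall $\Gamma(f)=0$ where the nonnegative $C^2$ function $f$ vanishes), and letting $\epsilon\to 0$ in $\textup{Ent}_\nu(f+\epsilon)$.
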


With the above results in mind, we are now ready to prove \cref{prop9.15}. 

\begin{proof}[Proof of \cref{prop9.15}]
Let $\nu$ be the Gibbs distribution, and let $\mu$ be a probability measure on $M$ satisfying $\mu \ll \nu$ and such that $\frac{d\mu}{d\nu} \in C^2(M)$. Consider the relative entropy between $\mu$ and $\nu$, $H(\mu|\nu)$. In order to apply \cref{theorem9.13}, we have to rewrite $\nu$ as $\nu = e^{-V}$ for a suitable function $V$. It suffices to consider $V = \beta F + \log(Z)$. For this choice of $V$, the curvature-dimension condition of \cref{theorem9.13} is fulfilled. Indeed,
\begin{align*}
\nabla^2 V + \textup{Ric}_g = \nabla^2(\beta F + \log(Z)) + \textup{Ric}_g = \beta \nabla^2 F + \textup{Ric}_g \geq -\beta \kappa_1 g,
\end{align*}
where the inequality follows by multiplying the curvature-dimension condition written in \cref{CD3} by $\beta$. Therefore, applying \cref{theorem9.13} we conclude that
\begin{equation}
\label{ineqrelativeentropy}
H(\mu | \nu) \leq W_2(\mu, \nu) \sqrt{\mathbf{I}(\mu | \nu)} + \frac{\beta \kappa_1}{2} W_2(\mu, \nu)^2,
\end{equation}
where
\[
\mathbf{I}(\mu | \nu) = \beta I(\mu | \nu),
\]
and $I(\mu | \nu)$ is the Fisher information. Furthermore, we can apply Young's inequality 
\[
ab \leq \frac{\varepsilon}{2}a^2 + \frac{1}{2\varepsilon}b^2, \quad \forall \varepsilon > 0,\ \forall a, b \geq 0.
\]
to \cref{ineqrelativeentropy} with $a = \sqrt{\mathbf{I}(\mu | \nu)}$ and $b = W_2(\mu, \nu)$ to obtain
\begin{align*}
H(\mu | \nu) &\leq \frac{\varepsilon}{2}\mathbf{I}(\mu | \nu) + \left(\frac{1}{2\varepsilon} + \frac{\beta \kappa_1}{2}\right)W_2(\mu, \nu)^2\\
&\leq \frac{\varepsilon}{2}\mathbf{I}(\mu | \nu) + \left(\frac{1}{2\varepsilon}+ \frac{\beta \kappa_1}{2}\right)4\,\textup{diam}(M)^2\\
&= \frac{\varepsilon\beta}{2}I(\mu | \nu) + \left(\frac{1}{2\varepsilon}+ \frac{\beta \kappa_1}{2}\right)4\, \textup{diam}(M)^2,
\end{align*}
where the second inequality follows from \cref{cotaWasserstein}. 

The above inequality is a defective logarithmic Sobolev inequality (cf. \cref{defectiveLSI}) with constants $\alpha = \frac{1}{\varepsilon \beta}$ and $A = \left(\frac{1}{2\varepsilon}+ \frac{\beta \kappa_1}{2}\right)4\, \textup{diam}(M)^2$. Thus, we can apply the tightening result from \cref{proptightening} to conclude that $(M, \nu, \Gamma)$ satisfies a tight logarithmic Sobolev inequality with constant 
\[
\frac{1}{\varepsilon \beta + \frac{1}{\kappa_2}\left(\left(\frac{1}{2\varepsilon}+ \frac{\beta \kappa_1}{2}\right)2\,\textup{diam}(M)^2 + 1\right)},
\]
i.e.
\[
H(\mu | \nu) \leq \frac{1}{2}\left(\varepsilon\beta + \frac{1}{\kappa_2}\left\{\left(\frac{1}{2\varepsilon}+ \frac{\beta \kappa_1}{2}\right)2\,\textup{diam}(M)^2 + 1\right\}\right)I(\mu | \nu).
\]
Since this inequality works for all $\varepsilon > 0$ we can choose $\varepsilon = \frac{\textup{diam}(M)}{\sqrt{\beta \kappa_2}}$ to obtain the least upper bound, i.e.
\[
\frac{1}{\tilde\alpha} := 2\,\textup{diam}(M)\sqrt{\frac{\beta}{\kappa_2}} + \frac{\beta \kappa_1 \textup{diam}(M)^2 + 1}{\kappa_2}.
\]

Let us conclude by obtaining a more readable log-Sobolev inequality constant. Indeed as we are assuming without loss of generality that $\beta \geq 1$, $0 < \kappa_2 \leq 1$, $\kappa_1 \geq 1$ and defining $\textup{diam}^*(M) := \max\{1, \textup{diam}(M)\}$ and we can bound 
\begin{align*}
2\,\textup{diam}^*(M)\sqrt{\frac{\beta}{\kappa_2}} + \frac{\beta \kappa_1 \textup{diam}^*(M)^2 + 1}{\kappa_2} &\leq \frac{2\beta \,\textup{diam}^*(M) + \beta \kappa_1 \textup{diam}^*(M)^2 + 1}{\kappa_2}\\
&\leq \frac{4\beta \kappa_1 \textup{diam}^*(M)^2}{\kappa_2},
\end{align*}
finishing the proof. 
\end{proof}

%% file: Chapters/ProofThm.tex
\section{Rapid mixing for Gibbs measures in Riemannian manifolds}
\label{sec:proof}

We are now in a position to state and prove our rapid mixing result for Langevin dynamics on Riemannian manifolds. An informal presentation thereof was given in \cref{thm:MainInformal1}. To this end, we will put together the main results of \cref{SectionPI,SectionLift,SectionPItoLSI}. We will first prove that $(M, \nu, \Gamma)$ and $(M/G, \tilde\nu, \tilde \Gamma)$ both satisfy a log-Sobolev inequality. Next, we will see how the rapid mixing result follows. 

\begin{theorem}[{\cref{thm:MainInformal1}} - formal version]
\label{thm:MainFormal1}
Let $(M, g)$ be a compact Riemannian manifold. Let $G$ be a group action on $M$ and satisfying \cref{assumption3.7}. Let $\pi: (M, g) \to (M/G, h)$ be the induced Riemannian submersion and assume that it has totally geodesic fibers. Further assume that the fibers of $\pi$---which are isometric to $(G, \hat{g})$, where $\hat{g}$ corresponds to their induced metric---satisfy \cref{assumption3.8}. Assume that $M$ and $M/G$ satisfy \cref{assumption3.9} with constants $\mathbf{K}$, $R_M$ and $R_{M/G}$, and $M/G$ satisfies \cref{assumptionsym}.

Let $F: M \to \bb{R}$ be a smooth function satisfying \crefrange{assumption3.1}{assumption3.2} and let $\tilde F: M/G \to \bb{R}$ be the unique function such that $F = \tilde F \circ \pi$. Assume that $\tilde F$ satisfies \crefrange{assumption3.3}{assumption3.6}. Let $a, \beta > 0$ be such that
\begin{align*}
a^2 &\geq \max\Big\{\frac{24 A_2 \dim(M/G)}{C^2_{\tilde F}}, \frac{288}{\lambda_*}\Big\},
\\\beta &\geq \max\Bigg\{\frac{192^2 \dim(M/G)^5 A^2_2 A^2_3 \mathbf{K}^2 a^6}{\lambda_*^2},  \frac{9a^2}{D^2}, \frac{4a^2}{i(M/G)^2}, \frac{4R_{M/G}}{\lambda_*}, \frac{a^2}{\mathit{conv}(M/G)^2}\Bigg\},
\end{align*}
where $C_{\tilde F}$ is the constant from \cref{eq:boundgradientdistance}, $\mathit{conv}(M/G)$ denotes the convexity radius of $(M/G, h)$, $i(M/G)$ denotes the injectivity radius of $(M/G, h)$, $D$ is the lower bound on the distance between any two critical points of $\tilde F$, $\lambda_*$ is the bound on the eigenvalues of $\nabla^2 \tilde F$ at the critical points, $A_2 \geq 1$ is the Lipschitz constant of the gradient of $F$ and $\tilde F$, and $A_3 \geq 1$ is the Lipschitz constant of the Hessian of $\tilde F$. 

Then  the Markov triple $(M, \nu, \Gamma)$ satisfies an $\textup{LSI}(\alpha_M)$, with
\begin{equation}
\label{eq:constantalphaM}
\frac{1}{\alpha_M} = 4\beta (A_2 + R_M) \textup{diam}^*(M)^2\max \Big\{\frac{120}{\lambda_*}, \frac{\textup{diam}(G)^2 \beta}{\pi^2}\Big\},
\end{equation}
where $\textup{diam}(G)$ denotes the diameter of $G$ as a fiber of $\pi$ and $\textup{diam}^*(M)$ denotes the maximum between the diameter of $(M, g)$ and $1$. Furthermore, the Markov triple $(M/G, \tilde \nu, \tilde \Gamma)$ satisfies an $\textup{LSI}(\alpha_{M/G})$, with
\begin{equation}
\label{eq:constantalphaMG}
\frac{1}{\alpha_{M/G}} = \frac{480\beta (A_2 + R_{M/G}) \textup{diam}^*(M/G)^2}{\lambda_*}.    
\end{equation}
\end{theorem}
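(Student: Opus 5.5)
The plan is to chain the three main ingredients: \cref{prop9.12} on the base, the lifting \cref{thmliftPI}, and the upgrade \cref{prop9.15}, applied once on $M/G$ and once on $M$.

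\emph{Step 1 (Poincaré on $M/G$).} Under \cref{assumption3.7}, $M/G$ is a compact manifold and $\pi$ is a Riemannian submersion (cf. \cref{existenceRiemannianSubmersionMetrics}). By \cref{assumption3.9,assumptionsym} and \crefrange{assumption3.3}{assumption3.6}, all hypotheses of \cref{prop9.12} hold with $B = M/G$. Moreover $\tilde F \geq 0$ by \cref{assumption3.2}, and the stated ranges of $a$ and $\beta$ are exactly those of \cref{prop9.12}. Hence $(M/G,\tilde\nu,\tilde\Gamma)$ satisfies $\textup{PI}(\lambda_*/120)$.

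\emph{Step 2 (lift to $M$).} Since $G$ is compact and connected and acts freely and isometrically, $\pi$ is a principal $G$-bundle. Because the fibers are orbits of an isometric action with totally geodesic fibers, they are all isometric to $(G,\hat g)$. \cref{assumption3.8} gives non-negative Ricci curvature of the fibers, so \cref{thmliftPI} yields $\textup{PI}(\kappa)$ on $M$ with
\[
\frac{1}{\kappa}=\max\Big\{\frac{120}{\lambda_*},\frac{\textup{diam}(G)^2\beta}{\pi^2}\Big\}.
\]
Note that $\kappa\le 1$ automatically, since $\lambda_*\le1$.

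\emph{Step 3 (curvature-dimension condition).} $\grad F$ is $A_2$-Lipschitz, so $\nabla^2F\ge -A_2 g$. By \cref{assumption3.9}, $\textup{Ric}_g\ge -R_M g$. Using $\beta \ge 1$ (which follows from the lower bound on $\beta$, e.g. the first term exceeds $1$ since $a^6\ge (288)^3$ and the other factors are at least $1$), we get
\[
\nabla^2F+\tfrac1\beta\textup{Ric}_g\ge -(A_2+R_M)g.
\]
So $\textup{CD}(-\kappa_1)$ holds with $\kappa_1=A_2+R_M\ge1$. The same argument on $M/G$ uses that $\grad_h\tilde F$ is $A_2$-Lipschitz (\cref{prop:preservationofLipschitz}) and gives $\kappa_1=A_2+R_{M/G}$.

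\emph{Step 4 (upgrade).} Apply \cref{prop9.15} with these $\kappa_1$ and $\kappa_2$. On $M$ this gives
\[
\frac{1}{\alpha_M}=4\beta(A_2+R_M)\textup{diam}^*(M)^2\cdot\frac1\kappa,
\]
which is \cref{eq:constantalphaM}. On $M/G$, with $\kappa_2=\lambda_*/120\le1$, it gives \cref{eq:constantalphaMG}, where $4\cdot120=480$.

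The main obstacle is Step 2. There one must carefully check that the hypotheses of \cref{thmliftPI} follow from \cref{assumption3.7} plus totally geodesic fibers. This means verifying that each fiber, with its induced metric, is isometric to one fixed $(G,\hat g)$, and that the quotient is a genuine principal bundle. The rest is bookkeeping of constants and checking $\beta\ge1$.
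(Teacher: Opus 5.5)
Your proposal is correct and follows the paper's proof exactly. It applies \cref{prop9.12} on $M/G$, lifts via \cref{thmliftPI}, derives $\nabla^2 F + \frac{1}{\beta}\textup{Ric} \geq -(A_2 + R)\,g$ from the Lipschitz gradient and the Ricci lower bound, and then applies \cref{prop9.15} on each space; your explicit checks that $\beta \geq 1$, $\kappa_1 \geq 1$ and $\kappa_2 \leq 1$ (hypotheses of \cref{prop9.15}) are a welcome addition, since the paper uses them without comment.
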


Note that there are two extra assumptions needed to prove the existence of a log-Sobolev inequality for the Markov triples $(M, \nu, \Gamma)$ and $(M/G, \tilde \nu, \tilde \Gamma)$, compared to those needed to prove the existence of a Poincaré inequality for $(M/G, \tilde \nu, \tilde \Gamma)$. The first is that the Ricci curvature of the fibers of $\pi$ is non-negative, in order to be able to lift the Poincaré inequality from $M/G$ to $M$. The second is the existence of a lower bound on the Ricci curvature of $M$, which allows us to tighten the Poincaré inequality lifted to $M$ to a log-Sobolev inequality. 

Also observe that in the above theorem we claim that the fibers of $\pi$ are isometric. This is always the case whenever the Riemannian submersion $\pi$ has totally geodesic fibers \cite[Proposition 1.8]{escobales1975riemannian}. 

As we mentioned before, the proof of \cref{thm:MainFormal1} only consists in putting together the results obtained in the previous sections. In particular, we will first use the Poincaré inequality theorem---which corresponds to \cref{prop9.12}---to conclude that $(B, \tilde \nu, \tilde \Gamma)$ satisfies a Poincaré inequality. Then  using the lifting result shown in \cref{thmliftPI} we will obtain a Poincaré inequality on $(M, \nu, \Gamma)$. Lastly, using the tightening arguments from \cref{prop9.15} we will \textit{upgrade} both Poincaré inequalities to obtain logarithmic Sobolev inequalities.

\begin{proof}[Proof of \cref{thm:MainFormal1}]
First, recall that under the assumptions of the theorem, we can apply \cref{prop9.12} and conclude that the Markov triple $(M/G, \tilde \nu, \tilde \Gamma)$ satisfies a $\textup{PI}(\kappa_{M/G})$ with
\[
\kappa_{M/G} = \frac{\lambda_*}{120}. 
\]
Now, we can lift this Poincaré inequality from $M/G$ to $M$ by simply applying \cref{thmliftPI}, and claim that $(M, \nu, \Gamma)$ satisfies a $\textup{PI}(\kappa_M)$ with
\begin{align*}
\frac{1}{\kappa_{M}} &= \max \Big\{\frac{1}{\kappa_{M/G}}, \frac{\textup{diam}(G)^2 \beta}{\pi^2}\Big\}\\
&= \max \Big\{\frac{120}{\lambda_*}, \frac{\textup{diam}(G)^2 \beta}{\pi^2}\Big\}. 
\end{align*}

After obtaining a Poincaré inequality for $(M, \nu, \Gamma)$ and $(M/G, \tilde \nu, \tilde \Gamma)$, it only remains to tighten them to logarithmic Sobolev inequalities. This can be done using \cref{prop9.15}. Indeed, note that by \cref{rmk:LipschitzConstants} it holds that 
\begin{align*}
\nabla^2_g F &\geq -A_2 g,\\
\nabla^2_h \tilde F &\geq -A_2 h,    
\end{align*}
and so, as we are assuming that $\textup{Ric}_g \geq -R_M\, g$, $\textup{Ric}_h \geq -R_{M/G}\, h$ for some constants $R_M, R_{M/G} \geq 0$, we can conclude that
\begin{align*}
\nabla_g^2 F + \frac{1}{\beta}\textup{Ric}_g &\geq -(A_2 + R_M)\,g,\\
\nabla_h^2 \tilde F + \frac{1}{\beta}\textup{Ric}_h &\geq -(A_2 + R_{M/G})\, h.
\end{align*}
Thus, by \cref{prop9.15} we can conclude that $(M, \nu, \Gamma)$ satisfies an $\textup{LSI}(\alpha_M)$ with 
\begin{align*}
\frac{1}{\alpha_M} &= \frac{4\beta (A_2 + R_M) \textup{diam}^*(M)^2}{\kappa_M}\\
&= 4\beta (A_2 + R_M) \textup{diam}^*(M)^2\max \Big\{\frac{120}{\lambda_*}, \frac{\textup{diam}(G)^2 \beta}{\pi^2}\Big\}.
\end{align*}
and $(M/G, \tilde \nu, \tilde \Gamma)$ satisfies an $\textup{LSI}(\alpha_{M/G})$ with
\begin{align*}
\frac{1}{\alpha_{M/G}} = \frac{4\beta (A_2 + R_{M/G}) \textup{diam}^*(M/G)^2}{\kappa_{M/G}} = \frac{480\beta (A_2 + R_{M/G}) \textup{diam}^*(M/G)^2}{\lambda_*}.
\end{align*}
\end{proof}

As we mentioned previously, the rapid mixing result shown in \cref{thm:MainInformal1} becomes a corollary of \cref{thm:MainFormal1}.  
\begin{corollary}
\label{cor:rapidmixing}
Under the assumptions of \cref{thm:MainFormal1}, the Langevin diffusion process $X_t$ defined in \cref{LangevinDiffusionEq} with initial uniform distribution on $M$ converges exponentially fast to the Gibbs distribution $\nu$, i.e. 
\[
\norm{\nu - \rho_t}^2_{\textup{TV}} \leq  \beta\, e^{-2\alpha_M t} \max_{y \in M} F(y), \quad \forall t \geq 0, 
\]
where $\rho_t$ denotes the distribution of $X_t$ and $\alpha_M$ denotes the log-Sobolev inequality constant of $(M, \nu, \Gamma)$. Similarly, the Langevin diffusion process $\tilde X_t$ defined analogously with initial uniform distribution on $M/G$ also converges exponentially fast to the Gibbs distribution $\tilde \nu$, i.e. 
\[
\norm{\tilde \nu - \tilde \rho_t}^2_{\textup{TV}} \leq  \beta\, e^{-2\alpha_{M/G} t} \max_{y \in M/G} \tilde F(y) = \beta\, e^{-2\alpha_{M/G} t} \max_{y \in M} F(y) , \quad \forall t \geq 0,
\]
where $\tilde \rho_t$ denotes the distribution of $\tilde X_t$ and $\alpha_{M/G}$ denotes the log-Sobolev inequality constant of $(M/G, \tilde \nu, \tilde \Gamma)$. 
\end{corollary}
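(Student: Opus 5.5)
The corollary follows by chaining three results already in place. \cref{thm:MainFormal1} gives the two log-Sobolev inequalities, \cref{thm:LSIImpliesHypercontract} turns them into decay in total variation, and \cref{BoundInitialRelativeEntropy} controls the initial relative entropy when the process starts from the uniform distribution.

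For the total space, \cref{thm:MainFormal1} gives that $(M, \nu, \Gamma)$ satisfies $\textup{LSI}(\alpha_M)$, with $\alpha_M$ as in \cref{eq:constantalphaM}. \cref{thm:LSIImpliesHypercontract} then yields
\[
\norm{\rho_t - \nu}^2_{\textup{TV}} \leq \frac{1}{2}e^{-2\alpha_M t} H(\rho_0 | \nu).
\]
Here $\rho_0$ is the normalized Riemannian volume, and \cref{assumption3.2} gives $\min F = 0$. So \cref{BoundInitialRelativeEntropy} applies and gives $H(\rho_0|\nu) \leq \beta \max_M F$. Dropping the factor $\tfrac12$, which only weakens the bound, gives the stated inequality.

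The base space is handled the same way. \cref{thm:MainFormal1} gives $\textup{LSI}(\alpha_{M/G})$ for $(M/G, \tilde\nu, \tilde\Gamma)$, and \cref{BoundInitialRelativeEntropy} applies on the compact manifold $(M/G,h)$ because $\tilde F \geq 0$, with $\min \tilde F = \min F = 0$. It remains to check that $\max_{M/G}\tilde F = \max_M F$. This holds because $\pi$ is surjective and $F = \tilde F\circ\pi$.

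There is no genuine obstacle here. The only point needing care is that \cref{thm:LSIImpliesHypercontract} is applied to $\rho_t$, the law of the process solving the martingale problem of \cref{def:LangevinDiffusion}. One must make sure this law is the Fokker–Planck/semigroup evolution $P_t^*\rho_0$ for which the entropy decay of \cite[Theorem 5.2.1]{bakry2013analysis} holds. On a compact manifold this is standard: uniqueness of the martingale problem \cite{hsustochastic} identifies the law with the Markov semigroup generated by $\operatorname{L}$. The uniform initial density is smooth and strictly positive, so the regularity requirement $d\mu/d\nu \in C^2$ is also met.
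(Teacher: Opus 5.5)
Your proposal is correct and follows the paper's proof exactly. \cref{thm:MainFormal1} supplies the two log-Sobolev inequalities, \cref{thm:LSIImpliesHypercontract} turns them into total-variation decay with prefactor $\tfrac12 H(\rho_0|\nu)$, and \cref{BoundInitialRelativeEntropy} bounds the initial entropy by $\beta\max F$, using $\max_{M/G}\tilde F=\max_M F$ on the base space. Your remarks on discarding the factor $\tfrac12$ and on identifying the law of $X_t$ with the semigroup evolution are harmless additions that the paper leaves implicit.
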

\begin{proof}
In \cref{thm:MainFormal1} we proved that $(M, \nu, \Gamma)$ satisfies an $\textup{LSI}(\alpha_M)$ (see \cref{eq:constantalphaM}) and the Markov triple $(M/G, \tilde \nu, \tilde \Gamma)$ satisfies an $\textup{LSI}(\alpha_{M/G})$ (see \cref{eq:constantalphaMG}). Using \cref{thm:LSIImpliesHypercontract} we can obtain an upper bound on the total variation distance between the densities of $X_t$ and $\tilde X_t$ and the Gibbs distributions $\nu$ and $\tilde \nu$,
\[
||\rho_t - \nu||_{\textup{TV}}^2 \leq \frac{1}{2} e^{-2\alpha_M t} H(\rho_0|\nu),\quad \textup{and}\quad ||\tilde \rho_t - \tilde \nu||_{\textup{TV}}^2 \leq \frac{1}{2} e^{-2\alpha_{M/G}\, t} H(\tilde \rho_0|\tilde \nu),
\]
for every $t \geq 0$. Moreover, by \cref{BoundInitialRelativeEntropy}, we can bound the relative entropy between the uniform distribution and the Gibbs distribution on $M$ and $M/G$ as
\[
H(\rho_0 | \nu) \leq \beta \max_{y \in M} F(y),\quad \textup{and}\quad H(\tilde \rho_0 | \tilde \nu) \leq \beta \max_{y \in M/G} \tilde F(y) = \beta \max_{y \in M} F(y),
\]
concluding the proof.
\end{proof}

%% file: Chapters/Suboptimality.tex
\section{Suboptimality of the Gibbs distribution}
\label{SectionSuboptimality}

The aim of this section is to prove that, given a compact Riemannian manifold $(M, g)$ and a sufficiently smooth function $F: M \to \bb{R}$, the Gibbs distribution associated with $F$, $\nu = \frac{1}{Z}e^{-\beta F}$, \textit{finds} the minima of $F$. This result was stated informally in \cref{thm:MainInformal2}.

More formally, we will show that, by choosing appropriate values for $\beta$, which scale polynomially with the dimension of the manifold, and logarithmically with respect to the volume of the manifold and the Lipschitz constant of the gradient of $F$, the distribution $\nu$ gets concentrated around the minima of the function $F$.

\begin{theorem}[{\cref{thm:MainInformal2}} - formal version]
\label{thm:MainFormal2}
Let $(M, g)$ be a compact Riemannian manifold of dimension $d \geq 2$, and let $F \in C^2(M)$ with an $A_2$-Lipschitz gradient, where without loss of generality $A_2 \geq 1$. Let 
\[
\varepsilon_{\max} := \min \Big\{\frac{i(M)^2 A_2}{8}, 1\Big\},
\]
where $i(M)$ denotes the injectivity radius of $M$. For every $\varepsilon \in (0, \varepsilon_{\max}]$ and $\delta \in (0, 1)$, if
\[
\beta \geq \frac{2}{\varepsilon}\left( \frac{1}{2} +d\log\frac{d^3A_2 \textup{Vol}^*(M)\sqrt{2\pi}}{\delta\varepsilon} \right),
\]
where $\textup{Vol}^*(M) := \max\{\textup{Vol}(M), 1\}$, then the Gibbs distribution $\nu(x) = \frac{1}{Z} e^{-\beta F(x)}$ satisfies
\[
\nu\left(F - \min_{y \in M} F(y) \geq \varepsilon\right) \leq \delta.
\]
\end{theorem}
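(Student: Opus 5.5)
Write $F^* := \min_M F$ and assume $F^*=0$, attained at $x^*$. The plan is the standard ``mass of bad set over mass near the minimizer'' comparison:
\[
\nu(F\ge \varepsilon) = \frac{\int_{\{F\ge\varepsilon\}} e^{-\beta F}\,\textup{dVol}_g}{\int_M e^{-\beta F}\,\textup{dVol}_g} \le \frac{e^{-\beta\varepsilon}\,\textup{Vol}(M)}{Z}.
\]
It then suffices to lower bound $Z$ by integrating only over a small geodesic ball around $x^*$.

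\textbf{Local upper bound on $F$ near $x^*$.} Since $\Grad{g}F(x^*)=0$ and the gradient is $A_2$-Lipschitz, for $x=\exp_{x^*}(v)$ with $|v|<i(M)$ we integrate along the geodesic, using parallel transport as in the definition of Lipschitz gradient, to get
\[
F(x)\le \tfrac{A_2}{2}\,d_g(x,x^*)^2 .
\]
Hence on $\mathcal{B}(r,x^*)$ with $r^2 = \varepsilon/A_2$ we have $F\le \varepsilon/2$. The restriction $\varepsilon\le i(M)^2A_2/8$ gives $r\le i(M)/(2\sqrt2)$, so the ball lies inside the injectivity radius and normal coordinates are available.

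\textbf{Volume lower bound of a small ball.} This is the main obstacle, since no curvature assumption is made in this theorem and the target bound involves only $d$, $A_2$, $\textup{Vol}(M)$, $\varepsilon$ and $\delta$. My first attempt is to avoid volume comparison altogether. Instead of using the crude bound $F\le \varepsilon/2$, I would use the Gaussian-type bound $e^{-\beta F}\ge e^{-\beta A_2 d(x,x^*)^2/2}$ on the ball. I would then bound the Riemannian volume density in normal coordinates from below on a ball of radius comparable to $i(M)$, or else use a Croke-type inequality $\textup{Vol}(\mathcal{B}(s,x))\ge c_d s^d$ valid for $s\le i(M)/2$. The latter inequality depends only on the dimension and the injectivity radius, and I expect it to be what produces the $d^3$ and $\sqrt{2\pi}$ factors inside the logarithm.

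Combining, for a suitable radius $s$ we get
\[
Z\ge c_d\, s^d e^{-\beta\varepsilon/2}.
\]
Taking $s\sim\sqrt{\varepsilon/A_2}$ and using a Stirling-type lower bound on $c_d$, which is where $\sqrt{2\pi}$ enters, this gives
\[
\frac{\textup{Vol}(M)}{Z}\le \exp\!\Big(d\log\frac{d^3A_2\textup{Vol}^*(M)\sqrt{2\pi}}{\varepsilon}\Big)e^{\beta\varepsilon/2}.
\]

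\textbf{Concluding.} We therefore get
\[
\nu(F\ge\varepsilon)\le \exp\!\Big(-\frac{\beta\varepsilon}{2} + d\log\frac{d^3A_2\textup{Vol}^*(M)\sqrt{2\pi}}{\varepsilon}\Big).
\]
The stated lower bound on $\beta$ makes the exponent at most $-\tfrac12 - d\log(1/\delta)\le\log\delta$. I would absorb the $\delta$-dependence into the logarithm using $d\ge 2$, and use the assumption $\varepsilon\le1$ together with $\textup{Vol}^*\ge1$ to handle crude constants, such as replacing $\log(\textup{Vol}(M))$ by $\log(\textup{Vol}^*(M))$.
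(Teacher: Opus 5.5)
Your argument is correct, and it handles the key local estimate differently from, and more simply than, the paper.

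\textbf{What the two proofs share.} Both start from $\nu(F\ge\varepsilon)\le e^{-\beta\varepsilon}\textup{Vol}(M)/\int_{\mathcal{B}(\cdot,x^*)}e^{-\beta F}\,\textup{dVol}_g$. Both use $F\le\frac{A_2}{2}d_g(\cdot,x^*)^2$, and both rest on Croke's inequality.

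\textbf{The paper's route.} It keeps the Gaussian weight $e^{-\beta A_2 d_g^2/2}$ on the ball of radius $\sqrt{2\varepsilon/A_2}$. It integrates this weight in geodesic polar coordinates, using the lower bound $\mathcal{A}_d\rho^{d-1}$ on the area of geodesic spheres, and obtains a lower bound of order $(\beta A_2)^{-d/2}$. This leads to $\nu(F\ge\varepsilon)\le\mathscr{B}_d\,\beta^{d/2}e^{-\beta\varepsilon}$. The factor $\beta^{d/2}$ must then be absorbed by a monotonicity argument for $\frac{\beta\varepsilon}{2}-\frac d2\log\beta$, which needs the side conditions $\beta\ge d/\varepsilon$ and $\beta\ge(1+2\log 2)/(2\varepsilon)$.

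\textbf{Your route.} Despite your remark about switching to the Gaussian bound, your estimate $Z\ge c_d s^d e^{-\beta\varepsilon/2}$ is just the uniform bound $F\le\varepsilon/2$ on $\mathcal{B}(\sqrt{\varepsilon/A_2},x^*)$. So what you actually use is a ball of $\beta$-independent radius together with Croke's ball-volume bound $\textup{Vol}(\mathcal{B}(s,x))\ge c_d s^d$ for $s\le i(M)/2$, where $c_d=2^{d-1}\textup{Vol}(\mathbb{S}^{d-1})^d/(d^d\,\textup{Vol}(\mathbb{S}^{d})^{d-1})$. This is Croke's result in its original form, and the alternative through normal-coordinate densities is unnecessary.

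\textbf{Comparison.} You lose half the exponential rate. However, the paper gives up exactly that half when absorbing $\beta^{d/2}$, so the final condition on $\beta$ is the same. Your inequality is linear in $\beta$ and needs no side conditions. The paper's intermediate bound is better only for $\beta$ well above the threshold, where its rate $e^{-\beta\varepsilon}$ beats your $e^{-\beta\varepsilon/2}$.

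\textbf{The step you leave implicit.} You need $\log\textup{Vol}(M)-\log c_d+\frac d2\log\frac{A_2}{\varepsilon}\le d\log\frac{d^3A_2\textup{Vol}^*(M)\sqrt{2\pi}}{\varepsilon}$, and it holds with room. The Gamma-function estimates the paper uses, $\Gamma(\frac d2)/\Gamma(\frac{d+1}{2})\le 2/\sqrt{\pi}$ and $\Gamma(\frac{d+1}{2})<\sqrt{2\pi}(\frac{d}{2e})^{d/2}$, give $-\log c_d\le\frac32 d\log d$. The remaining terms are dominated using $d\ge 2$, $A_2\ge 1$, $\varepsilon\le 1$ and $\textup{Vol}^*(M)\ge 1$.
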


In the above theorem one can assume, without loss of generality, that 
\[
\min_{x\in M} F(x) = 0.
\]

This section is based on \cite[Appendix C]{LiErd2022Supp}. Note that the statement of the theorem---which corresponds to \cite[Theorem 2.5]{LiErd2022}---has remained almost identical to the original, as well as the auxiliary lemmas needed to prove it. Nevertheless, some of the proofs have been adapted to generalize the result to an arbitrary compact manifold $M$. 

\begin{lemma}[{\cite[Lemma C.1]{LiErd2022Supp}}] 
\label{lem:firstlemsubopt}
Let $(M, g)$, $\varepsilon_{\max}$ and $F$ be as in \cref{thm:MainFormal2}. Let $x^* \in M$ be any global minimum of $F$. Then  for every $\varepsilon \in (0, \varepsilon_{\max}]$ the following bound holds:
\[
\nu(F - \min_{y \in M} F(y) \geq \varepsilon ) \leq \frac{e^{-\beta \varepsilon} \textup{Vol}(M)}{\int_{\mathcal{B}(R, x^*)} e^{-\beta \frac{A_2}{2} d_g(x^*, x)^2}\; \textup{dVol}_g(x)},
\]
where $R := \sqrt{\frac{2\varepsilon}{A_2}}$, $\mathcal{B}(R, x^*)$ denotes the geodesic ball of radius $R$ centered at $x^*$, and $d_g(\cdot, \cdot)$ is the geodesic distance of $(M, g)$. 
\end{lemma}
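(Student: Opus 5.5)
We may assume $\min_M F = 0$, so $F(x^*)=0$. The plan is to write the probability as a ratio of unnormalized integrals, bound the numerator trivially and the denominator from below via a quadratic upper bound on $F$ near $x^*$. Concretely,
\[
\nu(F \geq \varepsilon) = \frac{\int_{\{F\geq\varepsilon\}} e^{-\beta F}\,\textup{dVol}_g}{\int_M e^{-\beta F}\,\textup{dVol}_g}.
\]

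For the numerator, on $\{F\geq\varepsilon\}$ we have $e^{-\beta F}\leq e^{-\beta\varepsilon}$. Hence the numerator is at most $e^{-\beta\varepsilon}\,\textup{Vol}(M)$.

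For the denominator, since the integrand is positive, restricting to $\mathcal{B}(R,x^*)$ can only decrease the integral, so it is at least $\int_{\mathcal{B}(R,x^*)} e^{-\beta F}\,\textup{dVol}_g$. It then suffices to show that
\[
F(x)\leq \frac{A_2}{2}\, d_g(x^*,x)^2 \quad \text{for all } x\in\mathcal{B}(R,x^*).
\]

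To prove this, first check that $R=\sqrt{2\varepsilon/A_2}\leq\sqrt{2\varepsilon_{\max}/A_2}\leq i(M)/2$, using $\varepsilon_{\max}\leq i(M)^2A_2/8$. Thus every $x$ in the ball is joined to $x^*$ by a unique minimizing unit-speed geodesic $\gamma:[0,\ell]\to M$ with $\ell=d_g(x^*,x)$. Set $\phi(t)=F(\gamma(t))$. Since $x^*$ is a global minimum, $\phi(0)=0$ and $\phi'(0)=\langle \Grad{g}F(x^*),\dot\gamma(0)\rangle_g=0$.

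Next, $\phi'(t)=\langle \Grad{g}F(\gamma(t)),\dot\gamma(t)\rangle_g$. Because $\dot\gamma$ is parallel along $\gamma$, we get
\[
|\phi'(t)-\phi'(0)| = \big|\langle P_{t}^{-1}\Grad{g}F(\gamma(t)) - \Grad{g}F(x^*),\dot\gamma(0)\rangle_g\big| \leq A_2 t,
\]
where $P_t$ is parallel transport along $\gamma$. This uses the definition of an $A_2$-Lipschitz gradient via parallel transport along the minimizing geodesic, as in the paper's definition used in the proof of \cref{lemma9.9}. Integrating gives $\phi(\ell)\leq A_2\ell^2/2$, which is the desired bound.

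The main care point is this last step: making sure the Lipschitz notion is applied along the unique minimizing geodesic, which is well defined thanks to the injectivity radius condition. An equivalent route is to note that an $A_2$-Lipschitz gradient gives $\nabla^2F\leq A_2 g$, so $\phi''\leq A_2$.

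Combining the numerator and denominator bounds yields the claimed inequality.
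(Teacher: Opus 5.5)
Your proof is correct and is essentially the argument of the cited Lemma C.1, which the paper invokes without reproducing it. You bound the numerator by $e^{-\beta\varepsilon}\textup{Vol}(M)$, restrict the partition function to $\mathcal{B}(R,x^*)$, and use $F(x)\le \frac{A_2}{2}d_g(x^*,x)^2$ there, obtained by integrating the Lipschitz-gradient estimate along a minimizing geodesic issuing from the critical point $x^*$.

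One small remark: the quadratic bound holds along any minimizing geodesic on the compact manifold, so $R\le i(M)/2$ is not actually needed for this lemma. That condition matters for the subsequent volume lower bound, though invoking it here does no harm.
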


Note that we require the geodesic ball of radius $R$ centered at $x^*$ to be contained in the geodesic ball with radius equal to the injectivity radius of $M$. Under this extra assumption, the original proof of \cref{lem:firstlemsubopt} remains valid. For this reason, we do not include a proof for it in this work.  

For the next auxiliary lemma, which is an adapted version of \cite[Lemma C.2]{LiErd2022Supp}, we will incorporate a proof, as the techniques used are different.

\begin{lemma}
Let $(M, g)$ be a compact Riemannian manifold of dimension $d \geq 2$. Let $A_2 > 0$ be a fixed constant,
\[
\varepsilon_{\max} = \min\Big\{\frac{i(M)^2 A_2}{8}, 1\Big\}
\]
and let $\varepsilon \in (0, \varepsilon_{\max}]$, $R = \frac{2\varepsilon}{A_2}$. Then  for every
\[
\beta \geq \frac{1+2\log2}{2\varepsilon},
\]
and every $y \in M$ it holds that
\[
\int_{\mathcal{B}(R, y)} e^{-\beta \frac{A_2}{2} d_g(y, x)^2}  \; \textup{dVol}_g(x) \geq 2^{d-1} \pi^{1/2} \frac{\Gamma(\frac{d+1}{2})^{d-1}}{\Gamma(\frac{d}{2})^d d^{d-1}} \left(\frac{1}{\beta A_2}\right)^{d/2}e^{-1/2}.
\]
\end{lemma}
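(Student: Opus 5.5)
The plan is to combine Croke's lower bound on the volume of small geodesic balls with a layer-cake (radial) decomposition of the Gaussian-type integral. Set $c := \beta A_2/2$ and $\phi(s) := e^{-c s^2}$, and write $V(s) := \textup{Vol}_g(\mathcal{B}(s, y))$. Croke's inequality states that for every $s \le i(M)/2$,
\[
V(s) \geq C_d\, s^d, \qquad C_d := \frac{2^{d-1}\omega_{d-1}^d}{d^d\,\omega_d^{d-1}} = \frac{2^{d}\pi^{1/2}\,\Gamma(\frac{d+1}{2})^{d-1}}{\Gamma(\frac d2)^d\, d^d},
\]
where $\omega_k = 2\pi^{(k+1)/2}/\Gamma(\frac{k+1}{2})$ denotes the volume of the unit $k$-sphere. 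The constant in the statement equals $\frac d2\, C_d$. So Croke alone is not enough: I have to gain a factor $d/2$ from the radial integration. First, however, I check that the whole ball $\mathcal{B}(R,y)$ lies in the range where Croke applies. The hypotheses give $\varepsilon \le \min\{i(M)^2A_2/8, 1\}$, and I will use both parts of this minimum to show $R = 2\varepsilon/A_2 \le i(M)/2$. Should this require the normalisation $A_2 \ge 1$ inherited from \cref{thm:MainFormal2}, I will make that explicit.

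Next I apply the layer-cake formula. Since $\phi$ is decreasing and $d_g(y,\cdot)$ is a radial variable,
\[
\int_{\mathcal{B}(R,y)} \phi(d_g(y,x))\,\textup{dVol}_g(x) = \phi(R)V(R) + \int_0^R 2cs\,e^{-cs^2}\, V(s)\,ds \ \ge\ \phi(R)\,C_dR^d + C_d\int_0^R 2c\, s^{d+1} e^{-cs^2} ds .
\]
Substituting $u = cs^2$ turns the last integral into $C_d\, c^{-d/2}\,\gamma(\tfrac d2+1, cR^2)$, where $\gamma$ is the lower incomplete gamma function. Integrating by parts once more recombines this with the boundary term into $C_d\, c^{-d/2}\,\tfrac d2\,\gamma(\tfrac d2, cR^2)$; this is exactly where the factor $d/2$ comes from. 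Using $c^{-d/2} = 2^{d/2}(\beta A_2)^{-d/2}$, it then suffices to prove
\[
2^{d/2}\,\gamma\big(\tfrac d2, cR^2\big) \;\geq\; e^{-1/2}.
\]

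For this final inequality I lower bound the integrand only on $u \in [0,\min\{\tfrac12, cR^2\}]$, where $e^{-u} \ge e^{-1/2}$. This gives $\gamma(\tfrac d2, cR^2) \ge e^{-1/2}\,\tfrac{2}{d}\, t^{d/2}$ with $t = \min\{\tfrac12, cR^2\}$. Hence I need $cR^2$ to be at least a constant, and then the factor $2^{d/2}$ absorbs $\tfrac2d\,2^{-d/2}$ up to the required slack. This is the point where the hypothesis $\beta \ge (1+2\log 2)/(2\varepsilon)$ must enter, through $cR^2 = 2\beta\varepsilon^2/A_2$. I expect this bookkeeping to be the main obstacle: the hypothesis controls $\beta\varepsilon$, whereas $cR^2$ involves $\beta\varepsilon^2/A_2$. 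If the direct route falls short, I will first try keeping the boundary term $\phi(R)C_dR^d$ separately instead of recombining it, and choose whichever of the two contributions is larger in each regime of $cR^2$. The specific constant $1+2\log 2$ suggests a comparison of the form $e^{-cR^2}2^{d} \ge \dots$, so I will look for the inequality that produces exactly this threshold.
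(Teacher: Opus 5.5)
Your layer-cake reduction is sound. With the ball-volume bound $V(s)\ge C_ds^d$, the integration by parts gives $\int_{\mathcal{B}(R,y)}e^{-c\,d_g(y,x)^2}\,\textup{dVol}_g\ge dC_d\int_0^R s^{d-1}e^{-cs^2}\,ds$. This is exactly the paper's intermediate bound: the paper gets it from polar coordinates and a sphere-area bound $\mathcal{A}_d\rho^{d-1}$ with $\mathcal{A}_d=dC_d$, whereas your route needs only the ball-volume form of Croke's estimate. Your reduction to $2^{d/2}\gamma(\frac d2,cR^2)\ge e^{-1/2}$ is also correct.

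The real problem is the radius. With $R=2\varepsilon/A_2$ taken literally the lemma is false, so the bookkeeping you anticipate cannot be done. Fix $M$ and $A_2$, take $\beta=\frac{1+2\log 2}{2\varepsilon}$ and let $\varepsilon\to0$. The left-hand side is at most $\textup{Vol}_g(\mathcal{B}(R,y))=O(R^d)=O(\varepsilon^d)$, while the right-hand side is a fixed multiple of $(\beta A_2)^{-d/2}\asymp\varepsilon^{d/2}$. In your variables, $cR^2=2\beta\varepsilon^2/A_2\to 0$. The containment $R\le i(M)/2$ also does not follow from $\varepsilon\le\min\{i(M)^2A_2/8,1\}$, even with $A_2\ge1$: for instance $i(M)=3$, $A_2=\varepsilon=1$ gives $R=2$.

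The statement has a typo. The radius used in the paper's proof, and in \cref{lem:firstlemsubopt} where this lemma is applied, is $R=\sqrt{2\varepsilon/A_2}$. Then $R\le i(M)/2$ follows from $\varepsilon\le i(M)^2A_2/8$ alone, and $cR^2=\beta\varepsilon$, which is exactly the quantity the hypothesis on $\beta$ controls.

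Even with the correct radius, your last step fails for $d\ge 3$. Truncating to $u\in[0,\frac12]$ gives $2^{d/2}\gamma(\frac d2,cR^2)\ge 2^{d/2}\cdot e^{-1/2}\cdot\frac2d\,2^{-d/2}=\frac2d\,e^{-1/2}$. This misses $e^{-1/2}$ by precisely the factor $\frac d2$ you worked to gain, so there is no slack; near $u=0$ the weight $u^{d/2-1}$ is tiny. Truncate from the other side instead. On $[\frac12,\beta\varepsilon]$ one has $u^{d/2-1}\ge 2^{1-d/2}$ because $d\ge2$, so
\[
2^{d/2}\gamma\big(\tfrac d2,\beta\varepsilon\big)\ \ge\ 2\int_{1/2}^{\beta\varepsilon}e^{-u}\,du\ =\ 2\big(e^{-1/2}-e^{-\beta\varepsilon}\big)\ \ge\ e^{-1/2}.
\]
The last inequality is equivalent to $\beta\varepsilon\ge\frac12+\log 2$, i.e. exactly the hypothesis $\beta\ge\frac{1+2\log2}{2\varepsilon}$. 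This is the paper's argument written in the variable $u=r^2/2$; there one keeps $r\ge1$ and uses $r^{d-1}\ge r$. Alternatively, truncating to $[0,1]$ gives $2^{d/2}\frac2d\,e^{-1}\ge e^{-1/2}$ for all $d\ge2$, which also suffices since $\beta\varepsilon\ge1$.
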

\begin{proof}
By the Gauss lemma, since $R \leq i(M)/2$,
\begin{equation}
\label{eq:rewrittenintegral}
\int_{\mathcal{B}(R, y)} e^{-\beta \frac{A_2}{2} d_g(y, x)^2} \; \textup{dVol}_g(x) = \int_0^R e^{-\beta\frac{A_2}{2}\rho^2} \left(\int_{S_{y}(\rho)} d\sigma \right)\; d\rho,
\end{equation}
where
\[
S_{y}(\rho) := \{\text{exp}_{y}(v)\, : \, |v|_g = \rho\}
\]
is the geodesic sphere of radius $\rho$ centered at $y$, $\exp_{y}$ denotes the exponential map centered at $y$, and $d\sigma$ is the induced volume form on $S_{y}(\rho)$. Now, since $M$ is compact, by \cite[Proposition 14]{Croke1980}, it holds that 
\[
\textup{Vol}(S_{y}(\rho)) \geq  2^{d-1} \frac{\textup{Vol}_{g_{\mathit{round}}}(\bb{S}^{d-1})^d}{\textup{Vol}_{g_{\mathit{round}}}(\bb{S}^{d})^{d-1} d^{d-1}} \rho^{d-1},
\]
for all $\rho \leq i(M)/2$, where $\textup{Vol}_{g_{\mathit{round}}}(\bb{S}^{d})$ denotes the volume of the $d$-dimensional round sphere of radius $1$, which is given by (cf. \cite{Gray1979})  
\[
\textup{Vol}_{g_{\mathit{round}}}(\bb{S}^{d}) = \frac{2 \pi^{(d+1)/2}}{\Gamma(\frac{d+1}{2})}.
\]
Thus, 
\begin{align*}
\textup{Vol}(S_{y}(\rho)) &\geq  2^{d-1} \frac{\textup{Vol}_{g_{\mathit{round}}}(\bb{S}^{d-1})^d}{\textup{Vol}_{g_{\mathit{round}}}(\bb{S}^{d})^{d-1} d^{d-1}} \rho^{d-1}
\\&= 2^{d-1} \frac{\left(\frac{2 \pi^{d/2}}{\Gamma(\frac{d}{2})}\right)^d}{\left(\frac{2 \pi^{(d+1)/2}}{\Gamma(\frac{d+1}{2})}\right)^{d-1} d^{d-1}} \rho^{d-1}\\
&=2^{d} \pi^{1/2} \frac{\Gamma(\frac{d+1}{2})^{d-1}}{\Gamma(\frac{d}{2})^d d^{d-1}} \rho^{d-1}
\\&= \mathcal{A}_d\, \rho^{d-1},
\end{align*}
where
\[
\mathcal{A}_d := 2^{d} \pi^{1/2} \frac{\Gamma(\frac{d+1}{2})^{d-1}}{\Gamma(\frac{d}{2})^d d^{d-1}}.
\]

This way, we can lower-bound the right-hand side of \cref{eq:rewrittenintegral} as
\[
\int_0^R e^{-\beta\frac{A_2}{2}\rho^2} \left(\int_{S_{y}(\rho)}\; d\sigma \right)\; d\rho \geq \mathcal{A}_d \int_0^R e^{-\beta\frac{A_2}{2}\rho^2} \rho^{d-1}\; d\rho.
\]
Applying the change of variable $r = \sqrt{\beta A_2}\rho$ we get
\begin{align}
\label{eq1}
\nonumber
\int_0^R e^{-\beta\frac{A_2}{2}\rho^2} \rho^{d-1}\; d\rho &= \int_0^{\sqrt{\beta A_2}R} e^{-\frac{r^2}{2}} \left(\frac{1}{\beta A_2}\right)^{d/2} r^{d-1}\; dr 
\\&= \left(\frac{1}{\beta A_2}\right)^{d/2} \int_0^{\sqrt{\beta A_2}R} e^{-\frac{r^2}{2}} r^{d-1}\; dr.    
\end{align}
Now, since $d \geq 2$ and $R = \sqrt{\frac{2\varepsilon}{A_2}}$, if we choose $\beta$ such that 
\[
\beta \geq \frac{1}{A_2 R^2} = \frac{1}{2\varepsilon},
\]
it holds that 
\[
\sqrt{\beta A_2}R \geq 1,
\]
and so we can lower bound \cref{eq1} as follows:
\begin{align*}
\left(\frac{1}{\beta A_2}\right)^{d/2} \int_0^{\sqrt{\beta A_2}R} e^{-\frac{r^2}{2}} r^{d-1}\; dr &\geq \left(\frac{1}{\beta A_2}\right)^{d/2} \int_1^{\sqrt{\beta A_2}R} e^{-\frac{r^2}{2}} r^{d-1}\; dr\\
&\geq \left(\frac{1}{\beta A_2}\right)^{d/2} \int_1^{\sqrt{\beta A_2}R} e^{-\frac{r^2}{2}} r \; dr \\
&= \left(\frac{1}{\beta A_2}\right)^{d/2} \left.\left(-e^{-\frac{r^2}{2}}\right)\right|_{1}^{\sqrt{\beta A_2} R}\\
&= \left(\frac{1}{\beta A_2}\right)^{d/2} \left(e^{-1/2} - e^{-\frac{\beta A_2 R^2}{2}}\right)\\
&= \left(\frac{1}{\beta A_2}\right)^{d/2} \left(e^{-1/2} - e^{-\beta \varepsilon}\right).
\end{align*}

Furthermore, 
\[
e^{-1/2} - e^{-\beta \varepsilon} \geq \frac{1}{2}e^{-1/2} \iff \log\frac{1}{2} - \frac{1}{2} \geq -\beta \varepsilon \iff \beta \geq \frac{1}{2\varepsilon} + \frac{\log 2}{\varepsilon} = \frac{1+2\log 2}{2\varepsilon}.
\]
Therefore, for such values of $\beta$
\[
\left(\frac{1}{\beta A_2}\right)^{d/2} \int_0^{\sqrt{\beta A_2}R} e^{-\frac{r^2}{2}} r^{d-1}\; dr \geq \frac{1}{2} \left(\frac{1}{\beta A_2}\right)^{d/2}e^{-1/2},
\]
which allows us to conclude that
\[
\int_{\mathcal{B}(R, y)} e^{-\beta \frac{A_2}{2} d_g(y, x)^2} \; \textup{dVol}_g(x) \geq 2^{d-1} \pi^{1/2} \frac{\Gamma(\frac{d+1}{2})^{d-1}}{\Gamma(\frac{d}{2})^d d^{d-1}} \left(\frac{1}{\beta A_2}\right)^{d/2}e^{-1/2}  .
\]
\end{proof}

With the above two auxiliary lemmas in mind, we will now write the proof of \cref{thm:MainFormal2}. It has the same structure as in \cite{LiErd2022Supp}. Nevertheless, we have decided to incorporate it in our work for completeness, as the theorem is now stated for any general compact manifold $M$. 

\begin{proof}[Proof of \cref{thm:MainFormal2}]

From the previous lemmas we obtained the following bound:
\[
\nu(F \geq \varepsilon) \leq \frac{e^{-\beta \varepsilon} \textup{Vol}(M)}{2^{d-1} \pi^{1/2} \frac{\Gamma(\frac{d+1}{2})^{d-1}}{\Gamma(\frac{d}{2})^d d^{d-1}} \left(\frac{1}{\beta A_2}\right)^{d/2}e^{-1/2}},
\]
for $\beta \geq \frac{1+2\log 2}{2\varepsilon}$. If we now define 
\[
\mathscr{B}_d := \frac{e^{1/2}\Gamma(\frac{d}{2})^d d^{d-1}A_2^{d/2} \textup{Vol}(M)}{2^{d-1} \pi^{1/2} \Gamma(\frac{d+1}{2})^{d-1} },
\]
it only remains to find a value of $\beta$ such that 
\begin{equation}
\label{goalineq}
\beta^{d/2} e^{-\beta \varepsilon} \mathscr{B}_d \leq \delta.    
\end{equation}
Taking logarithms in this expression we obtain
\[
\frac{d}{2}\log \beta - \beta\varepsilon + \log \mathscr{B}_d \leq \log \delta,
\]
which can be rewritten as 
\begin{equation}
\label{eq:inequalitybeta}
\beta\varepsilon - \frac{d}{2} \log \beta \geq \log \frac{1}{\delta} + \log \mathscr{B}_d.
\end{equation}
Now, note that 
\[
\frac{\beta\varepsilon}{2} - \frac{d}{2} \log \beta
\]
is an increasing function of $\beta$ for $\beta \geq \frac{d}{\varepsilon}$. Therefore, for such values
\[
\frac{\beta\varepsilon}{2} - \frac{d}{2}\log \beta \geq \frac{d}{2}\Big(1 - \log \frac{d}{\varepsilon}\Big).
\]
Inserting this inequality into \cref{eq:inequalitybeta}, it is easy to see that we only need to find the values of $\beta$ for which
\[
\frac{\beta\varepsilon}{2} + \frac{d}{2}\Big(1 - \log \frac{d}{\varepsilon}\Big) \geq \log \frac{1}{\delta} + \log \mathscr{B}_d,
\]
that is, 
\[
\beta \geq \frac{2}{\varepsilon}\left(\log \frac{1}{\delta} + \log \mathscr{B}_d +  \frac{d}{2}\Big(\log \frac{d}{\varepsilon} - 1\Big)\right).
\]

Let us express the logarithm of $\mathscr{B}_d$ explicitly:
\begin{align*}
\log \mathscr{B}_d &= \log \left(\frac{e^{1/2}\Gamma(\frac{d}{2})^d d^{d-1}A_2^{d/2} \textup{Vol}(M)}{2^{d-1} \pi^{1/2} \Gamma(\frac{d+1}{2})^{d-1} }\right)
\\&= \frac{1}{2} + \log \Gamma\big(\frac{d+1}{2}\big) + d\log\frac{\Gamma\left(\frac{d}{2}\right)}{\Gamma\left(\frac{d+1}{2}\right)} + (d-1)\log d + \frac{d}{2}\log A_2 + \log \textup{Vol}(M)
\\&\quad - (d-1)\log 2 -\frac{1}{2}\log\pi 
\\&\leq \frac{1}{2} + \log \Gamma\big(\frac{d+1}{2}\big) + d\log\frac{\Gamma\left(\frac{d}{2}\right)}{\Gamma\left(\frac{d+1}{2}\right)} +  (d-1)\log d + \frac{d}{2}\log A_2 + \log \textup{Vol}(M)
\\&\leq \frac{1}{2} + \log \Gamma\big(\frac{d+1}{2}\big) + d\log\frac{\Gamma\left(\frac{d}{2}\right)}{\Gamma\left(\frac{d+1}{2}\right)} + d\log (dA_2) + \log \textup{Vol}(M).
\end{align*}

Furthermore, $\frac{\Gamma\left(\frac{d}{2}\right)}{\Gamma\left(\frac{d+1}{2}\right)}$ is a decreasing function of $d$ for every $d \geq 2$. Thus, 
\begin{align*}
\log \mathscr{B}_d &\leq \frac{1}{2} + \log \Gamma\big(\frac{d+1}{2}\big) + d\log\frac{\Gamma(1)}{\Gamma\left(\frac{3}{2}\right)} + d\log (dA_2) + \log \textup{Vol}(M)
\\&\leq \frac{1}{2} + \log \Gamma\big(\frac{d+1}{2}\big) + d\log (2dA_2) + \log \textup{Vol}(M).
\end{align*}

Therefore, we can choose 
\begin{align*}
\beta &\geq \frac{2}{\varepsilon}\left(\frac{1}{2} + \log \Gamma\big(\frac{d+1}{2}\big) + d\log (\frac{2d^2A_2}{\delta \varepsilon}) + \log \textup{Vol}(M)\right)\\    
&\geq \frac{2}{\varepsilon}\left(\log \frac{1}{\delta} + \log \mathscr{B}_d +  \frac{d}{2}(\log \frac{d}{\varepsilon} - 1)\right).
\end{align*}

Now, for every $d \geq 2$ we can use the following upper bound on the Gamma function (see \cite[Theorem 1.5]{batir2008inequalities})
\[
\Gamma\left(\frac{d+1}{2}\right) < \sqrt{2\pi} \left(\frac{d}{2e}\right)^{d/2},
\]
and so
\[
\log \Gamma\left(\frac{d+1}{2}\right) < \log \sqrt{2\pi} + \frac{d}{2}\log \frac{d}{2e}
\]
Thus, a suitable lower bound for $\beta$ is 
\[
\beta \geq \frac{2}{\varepsilon}\left(\frac{1}{2} + \log \sqrt{2\pi} + \frac{d}{2}\log \frac{d}{2e} + d\log (\frac{2d^2A_2}{\delta \varepsilon}) + \log \textup{Vol}(M)\right)
\]
Lastly, to simplify this, we note that whenever
\[
\beta \geq \frac{2}{\varepsilon}\left( \frac{1}{2} +d\log\frac{d^3A_2 \textup{Vol}^*(M)\sqrt{2\pi}}{\delta\varepsilon} \right),
\]
where $\textup{Vol}^*(M) := \max\{\textup{Vol}(M), 1\}$, \cref{goalineq} holds, finishing the proof. 
\end{proof}

\begin{remark}
The choice of $\beta$ in \cref{thm:MainFormal2} scales as 
\[
\Omega\left(\frac{d}{\varepsilon}\log\left(\frac{dA_2 \textup{Vol}^*(M)}{\delta \varepsilon}\right)\right).
\]
\end{remark}

In particular, in order for $\beta$ to grow polynomially with the dimension $d$, the value of $\textup{Vol}(M)$ and $A_2$ can grow exponentially with the dimension of $M$.  

%% file: Chapters/Examples.tex
\section{Examples}
\label{sec:traceratio}

In this section we will consider two scenarios in which most of the assumptions of \cref{thm:MainFormal1} can be easily verified---in particular, the existence of a unique minimum for the \textit{projected} function, and those regarding the eigenvalues of its Hessian at the critical points. We will study the functions considered in the trace ratio minimization problem, and in the problem of minimizing the energy of a two-dimensional classical ferromagnetic Ising spin model.

Note that similar results---in particular, results establishing absence of spurious local minima and the existence of saddle points with escape directions---have been established for numerous non-convex optimization problems in Euclidean space, including matrix sensing, matrix completion, phase retrieval, low-rank matrix factorization, and neural network training; see for example \cite{luo2022nonconvexmatrixfactorizationgeodesically,chi2019,sun2018geometric,sun2015complete,ge2016matrix,bhojanapalli2016global,li2019non} and the references therein.

\subsection{Trace ratio minimization}

The trace ratio minimization problem consists in finding the minimum of the function
\begin{align}
\label{eq:traceratiofunction}
\begin{split}
f: \textup{V}_k(\bb{C}^n) \to \bb{R},\quad X \mapsto \frac{\Tr(X^\dagger A X)}{\Tr(X^\dagger B X)},
\end{split}
\end{align}
where $\textup{V}_k(\bb{C}^n)$ denotes the Stiefel manifold of linear isometries from $\bb{C}^k$ to $\bb{C}^n$, with $n \geq 2$ and $1 \leq k < n$ (cf. \cref{curvaturestiefel}), and the matrices $A, B \in \bb{C}^{n \times n}$ are Hermitian with $B$ positive definite. 

Finding the minimum of this function---or the maximum, which corresponds to the minimization problem for $(-A, B)$---has attracted sustained attention in the literature. This problem is related to \textit{principal component analysis} (PCA) \cite{jolliffe2011principal} and more generally to dimensionality reduction techniques, such as graph embedding \cite{yan2005graph}. Furthermore, very similar problems are studied in the context of quantum information \cite{yanguez2025efficient,DMRGFrank}. 

In order to apply our results to the function studied in this problem, we want to find a \textit{projected version} of $f$ which has a unique minimum. To that end, let us first \textit{lift} it to $\textup{U}(n)$ by defining 
\begin{equation}
\label{eq:traceratioUn}
F: \textup{U}(n) \to \bb{R},\quad U := (X|Y) \mapsto \frac{\Tr(X^\dagger A X)}{\Tr(X^\dagger B X)},
\end{equation}
where $X$ corresponds to the first $k$ columns of $U$ and $Y$ corresponds to its last $n-k$ columns. If we consider the projection map
\[
\pi_1: \textup{U}(n) \to \textup{V}_k(\bb{C}^n),\quad (X | Y) \mapsto X
\]
studied in \cref{curvaturestiefel}, it is clear that
\[
F = f \circ \pi_1.
\] 

If we now consider the actions of the groups $\textup{U}(k) \times \textup{U}(n-k)$ and $\textup{U}(k)$ on $\textup{U}(n)$ and $\textup{V}_k(\bb{C}^n)$, respectively, defined as

\begin{minipage}{0.45\linewidth}
\begin{align*}
\textup{U}(k) \times \textup{U}(n-k) \times \textup{U}(n) &\to \textup{U}(n)\\
((A, B),U) &\mapsto U \begin{pmatrix}
A & 0\\
0 & B
\end{pmatrix},
\end{align*}
\end{minipage}
\begin{minipage}{0.45\linewidth}
\begin{align*}
\textup{U}(k) \times \textup{V}_k(\bb{C}^n) &\to \textup{V}_k(\bb{C}^n)\\
(U,X)  &\mapsto XU
\end{align*}
\end{minipage}

\noindent we showed that they induce the projection maps onto the Grassmann manifold
\[
\pi_2: \textup{V}_k(\bb{C}^n) \to \textup{Gr}_k(\bb{C}^n)\quad \textup{and}\quad \pi_3: \textup{U}(n) \to \textup{Gr}_k(\bb{C}^n),
\]
as seen in \cref{curvaturegrassmann}. Furthermore, $F$ is invariant under the above action on $\textup{U}(n)$: given any $U = (X | Y ) \in \textup{U}(n)$ and any $(A, B) \in \textup{U}(k) \times \textup{U}(n-k)$, it holds that 
\begin{align*}
F\bigg(U \begin{pmatrix}
A& 0\\
0 & B
\end{pmatrix}\bigg) = F((XA|YB)) = F((X|Y)) = F(U).
\end{align*}
Therefore, we can \textit{project} $F$ onto $\textup{Gr}_k(\bb{C}^n)$, obtaining the function 
\begin{align}
\label{eq:traceratiograssmann}
\begin{split}
\tilde f: \textup{Gr}_k(\bb{C}^n) \to \bb{R},\quad P \mapsto \frac{\Tr(P A)}{\Tr(P B)},
\end{split}
\end{align}
where we are using the description of the Grassmann manifold given in \cref{curvaturegrassmann}, i.e. 
\[
\textup{Gr}_k(\bb{C}^n) = \{P \in \bb{C}^{n \times n} : P^\dagger = P, P^2 = P, \Tr(P) = k\}.
\]
The function $\tilde f$ can also be seen as a \textit{projected version} of $f$ via $\pi_2$, by simply rewriting 
\[
\frac{\Tr(X^\dagger A X)}{\Tr(X^\dagger B X)} = \frac{\Tr(XX^\dagger A)}{\Tr(XX^\dagger B)},
\]
and denoting $P = XX^\dagger$. 

Recall that, if we endow $\textup{U}(n)$ with its bi-invariant metric $g$, there exists a metric $h_1$ on $\textup{V}_k(\bb{C}^n)$ and a metric $h_2$ on $\textup{Gr}_k(\bb{C}^n)$ for which the projections
\[
\pi_1: (\textup{U}(n), g) \to (\textup{V}_k(\bb{C}^n), h_1),\quad \textup{and}\quad 
\pi_3: (\textup{U}(n), g) \to (\textup{Gr}_k(\bb{C}^n), h_2),
\]
are Riemannian submersions with totally geodesic fibers. Therefore, by \cref{prop:inducedtotgeodfibers} we know that $\pi_2: (\textup{V}_k(\bb{C}^n), h_1) \to (\textup{Gr}_k(\bb{C}^n), h_2)$ is also a Riemannian submersion with totally geodesic fibers.

This way, we have defined three functions---namely $F$, $f$ and $\tilde f$---verifying 
\[
F = f \circ \pi_1,\quad \textup{and}\quad F = \tilde f \circ \pi_3,
\]
(see \cref{fig:figuresettingex1}). Having defined the trace minimization problem on $\textup{U}(n)$, $\textup{V}_k(\bb{C}^n)$ and $\textup{Gr}_k(\bb{C}^n)$, our aim will be to obtain log-Sobolev inequalities for the Markov triples 
\[
(\textup{U}(n), \nu_F, \Gamma_g),\quad (\textup{V}_k(\bb{C}^n), \nu_f, \Gamma_{h_1}), \quad \textup{and}\quad (\textup{Gr}_k(\bb{C}^n), \nu_{\tilde f}, \Gamma_{h_2}), 
\]
induced by the operators
\begin{equation}
\label{eq:diferentesoperators}
\operatorname{L}_F := -\textup{grad}_g\, F + \frac{1}{\beta}\Delta_g,\ \operatorname{L}_f := -\textup{grad}_{h_1}\, f + \frac{1}{\beta}\Delta_{h_1},\ \textup{and}\ \operatorname{L}_{\tilde f} := -\textup{grad}_{h_2} \, \tilde f + \frac{1}{\beta}\Delta_{h_2},
\end{equation}
where $\beta > 0$ is some fixed constant. Recall that, when considering these operators, the associated Gibbs measures are given by
\[
d\nu_F = \frac{1}{Z_F} e^{-\beta F}\textup{dVol}_g,\quad d\nu_f = \frac{1}{Z_f}e^{-\beta f}\textup{dVol}_{h_1},\quad \text{and}\quad d\nu_{\tilde f} = \frac{1}{Z_{\tilde f}} e^{-\beta \tilde f}\textup{dVol}_{h_2},
\]
where $Z_F, Z_f$ and $Z_{\tilde f}$ are their respective normalizing constants, and the carré du champ operators are given by
\[
\Gamma_g(\phi_1) = \frac{1}{\beta}|\textup{grad}_g\, \phi_1|^2_g,\quad \Gamma_{h_1}(\phi_2) = \frac{1}{\beta}|\textup{grad}_{h_1}\, \phi_2|^2_{h_1},\quad 
\Gamma_{h_2}(\phi_3) = \frac{1}{\beta}|\textup{grad}_{h_2}\, \phi_3|^2_{h_2},
\]
for any $C^2$ functions $\phi_1, \phi_2, \phi_3$ in $\textup{U}(n)$, $\textup{V}_k(\bb{C}^n)$ and $\textup{Gr}_k(\bb{C}^n)$, respectively.

\begin{figure}
    \centering
\begin{tikzpicture}[scale=1]
    \node[anchor=center] at (2.155, 2.465) {$\big((\textup{U}(n), g), F\big)$};
    \node[anchor=center] at (2.155, 1.337) {$\big((\textup{V}_k(\mathbb{C}^n), h_1), f\big)$};
    \node[anchor=center] at (2.155, 0.208) {$\big((\textup{Gr}_k(\mathbb{C}^n), h_2), \tilde f\,\big)$};
    \draw[->] (0.6, 1.265) .. controls (0.35, 0.889) and (0.35, 0.513) .. (0.6, 0.137);
    \draw[->] (0.6, 2.465) .. controls (0.35, 2.089) and (0.35, 1.713) .. (0.6, 1.337);
    \draw[->] (3.8, 2.465) .. controls (4.1, 1.713) and (4.1, 0.96) .. (3.8, 0.208);
    \node[anchor=center] at (0.1, 1.901) {$\pi_1$};
    \node[anchor=center] at (0.1, 0.701) {$\pi_2$};
    \node[anchor=center] at (4.4, 1.337) {$\pi_3$};
\end{tikzpicture}
    \caption{Sketch describing the Markov triples, functions and Riemannian submersions considered in this subsection, where $f$, $F$ and $\tilde f$ are as defined in \cref{eq:traceratioUn,eq:traceratiofunction,eq:traceratiograssmann}, respectively.}
    \label{fig:figuresettingex1}
\end{figure}

The function $\tilde f$ has been studied in depth\footnote{Although their work focuses on real Grassmann manifolds, their results automatically apply to the complex case after replacing symmetric matrices by Hermitian matrices, and the transpose with the conjugate transpose.} in \cite{shen2010tracequotient}. Two of the main results proven in the aforementioned reference concern the study of local minima and global minima of $\tilde f$. In particular, they prove that $\tilde f$ has no spurious local minima, and that under generic assumptions, its global minimum is unique and the Hessian of $\tilde f$ at the global minimum is non-degenerate.

\begin{proposition}[{\cite[Theorem 2]{shen2010tracequotient}}]
\label{thm:nolocalminimatraceratio}
Every local minimum of $\tilde f$ is a global minimum. 
\end{proposition}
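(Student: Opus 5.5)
The strategy is to characterize the critical points of $\tilde f$ on $\textup{Gr}_k(\bb{C}^n)$ by a first-order condition, and then show by a second-order argument that a critical point satisfying that condition is either the global minimum or admits a strict descent direction. The natural reduction uses the classical Dinkelbach-type identity. For a fixed $\rho \in \bb{R}$, set
\[
g_\rho(P) := \Tr\big(P(A - \rho B)\big).
\]
Since $\Tr(PB) > 0$ for every $P$ (because $B$ is positive definite), we have $\tilde f(P) \geq \rho$ if and only if $g_\rho(P) \geq 0$. Let $\rho^*$ denote the global minimum value of $\tilde f$, which exists by compactness. Then $P$ is a global minimizer of $\tilde f$ exactly when $\min_Q g_{\rho^*}(Q) = 0$ is attained at $P$. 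By Ky Fan's principle, this happens exactly when the range of $P$ is spanned by eigenvectors for the $k$ smallest eigenvalues of $A - \rho^* B$.

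\textbf{Step 1: first-order condition.} Let $P$ be a local minimum and put $\rho = \tilde f(P)$. Consider a curve $P(t) = e^{tK} P e^{-tK}$ with $K$ anti-Hermitian. Differentiating the quotient at $t = 0$ gives
\[
\frac{d}{dt}\tilde f(P(t))\Big|_{t=0} = \frac{\Tr\big([K,P](A - \rho B)\big)}{\Tr(PB)}.
\]
This vanishes for all such $K$, so $[P, A - \rho B] = 0$. Hence the range of $P$ is an invariant subspace of the Hermitian matrix $A - \rho B$, and it is spanned by $k$ eigenvectors of that matrix. Moreover, the eigenvalues $\mu_1, \dotsc, \mu_k$ on the range of $P$ satisfy $\sum_i \mu_i = g_\rho(P) = 0$.

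\textbf{Step 2: second-order condition.} Take $u$ a unit eigenvector in the range of $P$ with eigenvalue $\mu$, and $w$ a unit eigenvector in the orthogonal complement with eigenvalue $\mu'$. Rotate $u$ toward $w$ inside $\textup{span}\{u, w\}$, that is, replace $u$ by $\cos t\, u + \sin t\, w$. Along this curve $\tilde f(P(t)) - \rho$ equals the ratio of
\[
\Tr\big(P(t)(A - \rho B)\big) = \sin^2 t\,(\mu' - \mu) + (\text{cross terms})
\]
to a positive denominator. The cross terms vanish because $u$ and $w$ are orthogonal eigenvectors of $A - \rho B$. Local minimality therefore forces $\mu' \geq \mu$ for every such pair. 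So the range of $P$ consists of eigenvectors for the $k$ smallest eigenvalues of $A - \rho B$, and consequently
\[
\min_Q g_\rho(Q) = g_\rho(P) = 0.
\]
This means $\tilde f(Q) \geq \rho$ for all $Q$, so $\rho = \rho^*$ and $P$ is a global minimum.

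\textbf{Main obstacle.} The delicate point is the second-order computation. The correct numerator of the quotient along the curve is $g_\rho(P(t))$, and it must be checked that the first-order term in the denominator does not spoil the sign. It does not: the numerator is $O(t^2)$ with coefficient $\mu' - \mu$, while the denominator stays positive, so only the numerator's sign matters. Since the statement is quoted from \cite[Theorem 2]{shen2010tracequotient}, it would also suffice to check that their real-Grassmannian argument transfers verbatim to the complex Hermitian case, as the paper's footnote indicates.
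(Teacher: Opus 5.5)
Your argument is correct and complete. The paper gives no proof of its own here and simply cites Shen et al.; your proof follows the route suggested by the material the paper imports from that source a few lines later.

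\begin{itemize}
\item Your Step 1 is their critical-point characterization, namely that $P$ and $\Phi(P)=A-\tilde f(P)B$ share an eigenbasis. To close the small gap, take $K=[P,A-\rho B]$, which is anti-Hermitian. The derivative then becomes $-\|[P,A-\rho B]\|_F^2/\Tr(PB)$, which vanishes only if the commutator is zero.
\item Your condition $\mu'\geq\mu$ is exactly positive semidefiniteness of the quoted Hessian $\frac{2}{\Tr(BP)}\sum_{i,j}|z_{ij}|^2(\sigma_j-\lambda_i)$ at a local minimum. Your rotation curve is in fact the geodesic $e^{tK}Pe^{-tK}$ with $K=wu^\dagger-uw^\dagger$.
\end{itemize}

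The Dinkelbach and Ky Fan step then finishes the proof.

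What your version buys by using the explicit curve instead of the Hessian formula is the exact identity $\tilde f(P(t))-\rho=\sin^2 t\,(\mu'-\mu)/\Tr(P(t)B)$. With it, there is no second-order bookkeeping at all, so your ``main obstacle'' is not really an obstacle. Also, since everything is written for complex Hermitian $A$ and $B$, your proof checks the complex Grassmannian case directly, rather than relying on the paper's footnote that the real-case argument transfers verbatim.
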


\begin{proposition}[{\cite[Lemma 1]{shen2010tracequotient}}]
Let $P$ be a global minimum of $\tilde f$. Assume that there is a gap between the $k$-th and the $k+1$-th smallest eigenvalue of $\Phi(P) := A - \tilde{f}(P) B$. Then  the global minimum is unique and its Hessian is non-degenerate at $P$. 
\end{proposition}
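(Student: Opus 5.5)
Following the strategy of \cite{shen2010tracequotient}, I would work with the stationarity condition for $\tilde f$ on $\textup{Gr}_k(\bb{C}^n)$. Write $\rho := \tilde f(P)$ and $\Phi(P) = A - \rho B$. Since $\Tr(P\Phi(P)) = 0$ identically, differentiating $\tilde f(P) = \Tr(PA)/\Tr(PB)$ along a tangent direction $\dot P$ gives
\[
d\tilde f(P)[\dot P] = \frac{\Tr(\dot P\,\Phi(P))}{\Tr(PB)}.
\]
The tangent space at $P$ consists of Hermitian $\dot P$ with $P\dot P P = 0$ and $(1-P)\dot P(1-P) = 0$. Hence $P$ is critical if and only if $(1-P)\Phi(P)P = 0$, that is, if and only if the range of $P$ is an invariant subspace of $\Phi(P)$. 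At a global minimum $\rho^* = \min \tilde f$, we have $\Tr(Q\Phi(P)) = \Tr(QB)(\tilde f(Q) - \rho^*) \geq 0$ for every $Q$. Since $\Tr(P\Phi(P)) = 0$, the projection $P$ minimizes $Q \mapsto \Tr(Q\Phi(P))$ over the Grassmannian. By Ky Fan's principle, $P$ must then be a spectral projection onto the $k$ smallest eigenvalues of $\Phi(P)$, and the sum of those eigenvalues equals $0$.

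\textbf{Uniqueness.} Let $P'$ be another global minimizer. Then $\Phi(P') = A - \rho^* B = \Phi(P)$ because $\tilde f(P') = \rho^*$. So $P'$ is also a minimizer of $\Tr(Q\Phi)$ with the same matrix $\Phi := \Phi(P)$. The gap between $\lambda_k(\Phi)$ and $\lambda_{k+1}(\Phi)$ makes the Ky Fan minimizer unique, namely the spectral projection onto the $k$ lowest eigenvectors. Hence $P' = P$.

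\textbf{Non-degeneracy of the Hessian.} I would take a curve $Q(t) = e^{tX}Pe^{-tX}$ with $X$ anti-Hermitian and off-diagonal with respect to $P$. Writing $\tilde f(Q) - \rho^* = \Tr(Q\Phi)/\Tr(QB)$, the numerator vanishes to first order at $t=0$, and criticality kills the derivative of the denominator's contribution. The second derivative is therefore
\[
\frac{1}{\Tr(PB)}\,\frac{d^2}{dt^2}\Big|_{t=0}\Tr(Q(t)\Phi) = \frac{2}{\Tr(PB)}\Tr\big(X^\dagger[\ldots]\big).
\]
In the eigenbasis of $\Phi$ this becomes $\frac{2}{\Tr(PB)}\sum_{i\le k<j}|X_{ji}|^2(\lambda_j - \lambda_i)$. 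It is bounded below by
\[
\frac{2(\lambda_{k+1}-\lambda_k)}{\Tr(PB)}\,|X|^2 .
\]
This is strictly positive because of the gap and because $\Tr(PB) > 0$ for $B \succ 0$.

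\textbf{Main obstacle.} The computations themselves are routine. The delicate point is the careful handling of the second-order term of the quotient: one must verify that the cross terms vanish at a critical point, so that the Hessian of $\tilde f$ really equals the Hessian of $\Tr(Q\Phi)$ divided by $\Tr(PB)$. I would check this using the product rule together with $\Tr(P\Phi) = 0$ and $d\,\Tr(Q\Phi)|_P = 0$.
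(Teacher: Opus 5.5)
The paper does not prove this statement itself. It imports it from \cite{shen2010tracequotient} and only quotes the two ingredients it uses later: critical points satisfy $[P,\Phi(P)]=0$, and the Hessian at a critical point is $\frac{2}{\Tr(BP)}\sum_{i,j}|z_{ij}|^2(\sigma_j-\lambda_i)$.

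Your argument is correct and makes the result self-contained. You rederive both quoted ingredients directly. The key extra step is your observation that every global minimizer minimizes the \emph{same} linear functional $Q\mapsto\Tr(Q\Phi)$, with $\Phi=A-\rho^*B$, and that this minimum equals $0$. Ky Fan's principle together with the gap then gives two things at once:
\begin{itemize}
\item uniqueness of the minimizer;
\item the identification of the range of $P$ with the span of the $k$ lowest eigenvectors of $\Phi$.
\end{itemize}
The second of these is exactly what lets you sign every factor $\lambda_j-\lambda_i$ in the Hessian formula, so the gap yields the uniform lower bound $\lambda_{k+1}-\lambda_k>0$.

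Two minor remarks.

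First, evaluating the Hessian along $Q(t)=e^{tX}Pe^{-tX}$ is legitimate. At a critical point, $(\tilde f\circ\gamma)''(0)=\nabla^2\tilde f(P)[\gamma'(0),\gamma'(0)]$ holds for any curve $\gamma$ through $P$, because the extra term $d\tilde f(P)[\tfrac{D}{dt}\gamma'(0)]$ vanishes. Moreover, $[X,P]$ ranges over all of $T_P\textup{Gr}_k(\bb{C}^n)$ as $X$ ranges over off-diagonal anti-Hermitian matrices. The quotient-rule ``obstacle'' you flag is then immediate from $N(0)=N'(0)=0$.

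Second, a normalization point. If $|X|^2$ denotes the full Frobenius norm of the anti-Hermitian $X$, it equals $2\sum_{i\le k<j}|X_{ji}|^2$. Your lower bound should then read $(\lambda_{k+1}-\lambda_k)|X|^2/\Tr(PB)$. This constant is irrelevant for non-degeneracy.
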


Let us now characterize the critical points of $\tilde f$. We want to ensure that they are isolated, separated by a \textit{sufficiently large} distance, and that there always exists an escape direction for every saddle point. This will be done in the following subsection. 

\subsubsection{Critical points of \texorpdfstring{$\tilde f$}{f}}
\label{sec:criticalpointstraceratio}

Although it is difficult to characterize the critical points of $\tilde f$ in the general case, we will consider two settings in which the critical points of $\tilde f$ can be described explicitly, namely when $\tilde f$ is defined on $\textup{Gr}_1(\bb{C}^n)$ and when $B = \mathds{1}$. We believe that similar results can be derived in more general scenarios. 

First, note that in \cite[Corollary 1]{shen2010tracequotient} it is proven that the critical points of $\tilde f$ correspond to those $P \in \textup{Gr}_k(\bb{C}^n)$ for which there exists a common orthonormal eigenbasis for $\Phi(P) = A - \tilde{f}(P) B$ and $P$. This way, the question of finding the set of critical points of $\tilde f$ reduces to characterizing the set 
\[
\bigg\{P \in \textup{Gr}_k(\bb{C}^n) : \bigg[P, A - \frac{\Tr(PA)}{\Tr(PB)}B\bigg] = 0\bigg\}.
\]
To do so, we will make use of the following simultaneous diagonalization result.  
\begin{proposition}[{\cite[Theorem 7.6.4]{horn2012matrix}}]
\label{thm:simultdiag}
Let $A, B$ be two Hermitian matrices. If $B$ is positive definite, then there exists a non-singular matrix $S$ such that $B = SS^\dagger$ and $A = S \Lambda S^{\dagger}$, where $\Lambda$ is real-diagonal. Furthermore, the diagonal entries of $\Lambda$ correspond to the eigenvalues of $B^{-1}A$. 
\end{proposition}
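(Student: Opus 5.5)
The statement is the simultaneous diagonalization of a Hermitian pencil $(A,B)$ with $B$ positive definite. The plan is to reduce to the spectral theorem via a congruence by $B^{1/2}$.

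\textbf{Step 1: the square root of $B$.} Since $B$ is Hermitian positive definite, the spectral theorem gives a unitary $V$ and a diagonal matrix $D$ with positive entries such that $B = VDV^\dagger$. Set
\[
B^{1/2} := VD^{1/2}V^\dagger .
\]
This matrix is Hermitian, positive definite and invertible, and it satisfies $B^{1/2}B^{1/2}=B$.

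\textbf{Step 2: diagonalize the reduced matrix.} Consider
\[
C := B^{-1/2} A B^{-1/2}.
\]
It is Hermitian, since $A$ and $B^{-1/2}$ are. Applying the spectral theorem again gives a unitary $W$ and a real diagonal $\Lambda$ with $C = W\Lambda W^\dagger$.

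\textbf{Step 3: define $S$ and verify.} Put $S := B^{1/2}W$. It is non-singular as a product of invertible matrices. Then
\[
SS^\dagger = B^{1/2}WW^\dagger B^{1/2} = B,
\qquad
S\Lambda S^\dagger = B^{1/2}W\Lambda W^\dagger B^{1/2} = B^{1/2}CB^{1/2} = A .
\]

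\textbf{Step 4: identify the diagonal entries.} Using $S^\dagger S^{-\dagger}=\mathds{1}$, we compute
\[
B^{-1}A = (SS^\dagger)^{-1}S\Lambda S^\dagger = S^{-\dagger}S^{-1}S\Lambda S^\dagger = S^{-\dagger}\Lambda S^{\dagger}.
\]
So $B^{-1}A$ is similar to $\Lambda$. Its eigenvalues, counted with multiplicity, are therefore the diagonal entries of $\Lambda$; in particular they are real.

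No step is a real obstacle. The only care needed is in Step 1, where $B^{1/2}$ must be Hermitian and invertible so that $C$ is Hermitian, and in Step 4, where the factor order must be tracked so that the result is a similarity rather than a congruence.
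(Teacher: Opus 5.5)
Your proof is correct. It is the standard argument for this result: congruence by $B^{-1/2}$, unitary diagonalization of the Hermitian matrix $B^{-1/2}AB^{-1/2}$, then the similarity $B^{-1}A=S^{-\dagger}\Lambda S^{\dagger}$. The paper gives no proof of its own and only cites Horn and Johnson. One small correction: in Step 4 the identities you actually use are $(SS^\dagger)^{-1}=S^{-\dagger}S^{-1}$ and $S^{-1}S=\mathds{1}$, not $S^\dagger S^{-\dagger}=\mathds{1}$, though your computation itself is right.
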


Lastly, to study the escape directions of $\tilde f$ at its saddle points, we will analyze its Hessian. It is shown in \cite{shen2010tracequotient} that the Hessian of $\tilde f$ at a critical point $P$ in the direction of $X = \begin{pmatrix}
X_{11} & X_{12}\\
X_{12}^\dagger & X_{22}
\end{pmatrix} \in T_P\textup{Gr}_k(\bb{C}^n)$---where $X_{11}, X_{12}, X_{22}$ are matrices of size $k \times k$, $k \times (n-k)$ and $(n-k) \times (n-k)$, respectively---is given by
\[
\nabla^2 \tilde f(P)[X, X] = \frac{2}{\Tr(BP)} \sum_{i = 1}^k \sum_{j = 1}^{n-k} |z_{ij}|^2 (\sigma_j - \lambda_i),
\]
where $(z_{ij})$ are the coordinates of $X_{12}$, $\{\sigma_j\}_{j = 1}^{n-k}$ correspond to the eigenvalues of $\Phi(P)$ associated with the zero eigenvectors of $P$, and $\{\lambda_i\}_{i = 1}^{k}$ correspond to the eigenvalues of $\Phi(P)$ associated with the non-zero eigenvectors of $P$.

\paragraph{First case: rank-one \texorpdfstring{$P$}{P}.}\mbox{}\\
Let us first consider the case when $\tilde f$ is defined on $\textup{Gr}_1(\bb{C}^n)$. Here, we can rewrite \cref{eq:traceratiofunction} as
\begin{align}
\label{eq:traceratioGr1}
\begin{split}
\tilde f: \textup{Gr}_1(\bb{C}^n) \to \bb{R},\quad \ketbra{\varphi} \mapsto \frac{\Tr(\ketbra{\varphi} A)}{\Tr(\ketbra{\varphi} B)},
\end{split}
\end{align}
where $\ket{\varphi}$ is some unit vector in $\bb{C}^n$, and $\bra{\varphi} := \ket{\varphi}^\dagger$. Moreover, the set of critical points of $\tilde f$ can be easily described. 

\begin{proposition}
\label{prop:criticalpointsGr1}
Let $\tilde f$ be as defined in \cref{eq:traceratioGr1}, and assume that $B^{-1}A$ has $n$ distinct eigenvalues. Then  $\tilde f$ has exactly $n$ critical points.
\end{proposition}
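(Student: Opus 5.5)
By the characterization recalled above from \cite[Corollary 1]{shen2010tracequotient}, the critical points of $\tilde f$ on $\textup{Gr}_1(\bb{C}^n)$ are exactly the projectors $P=\ketbra{\varphi}$ that commute with $\Phi(P)=A-\tilde f(P)B$. For rank one, $[P,\Phi(P)]=0$ means $\ket{\varphi}$ is an eigenvector of $\Phi(P)$, so the plan is to solve
\[
\big(A-\tilde f(P)B\big)\ket{\varphi}=\mu\ket{\varphi}
\]
and show that $\mu=0$ and that $\ket{\varphi}$ is forced into one of $n$ lines.

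First, pairing the eigenvalue equation with $\bra{\varphi}$ gives
\[
\mu=\bra{\varphi}A\ket{\varphi}-\tilde f(P)\bra{\varphi}B\ket{\varphi}=0,
\]
using the definition $\tilde f(P)=\bra{\varphi}A\ket{\varphi}/\bra{\varphi}B\ket{\varphi}$. Hence $P$ is critical if and only if $A\ket{\varphi}=\tilde f(P)B\ket{\varphi}$, that is, if and only if $\ket{\varphi}$ is an eigenvector of $B^{-1}A$ with eigenvalue $\tilde f(P)$.

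Conversely, if $B^{-1}A\ket{\varphi}=\lambda\ket{\varphi}$, then $A\ket{\varphi}=\lambda B\ket{\varphi}$. Pairing with $\bra{\varphi}$ and using that $B$ is positive definite gives $\tilde f(P)=\lambda$. Then $\Phi(P)\ket{\varphi}=0$, so $P$ commutes with $\Phi(P)$ and is critical.

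Thus the critical set is the set of projectors onto eigenlines of $B^{-1}A$, and it remains to count these lines. By \cref{thm:simultdiag}, write $B=SS^\dagger$ and $A=S\Lambda S^\dagger$. Then
\[
B^{-1}A=(S^\dagger)^{-1}\Lambda S^\dagger,
\]
so $B^{-1}A$ is diagonalizable with eigenvectors the columns of $(S^\dagger)^{-1}$. When the $n$ eigenvalues are distinct, each eigenspace is a complex line. The normalized eigenvectors in one eigenspace differ only by a phase, so each eigenspace gives exactly one projector $\ketbra{\varphi}$. Distinct eigenspaces give distinct projectors, so there are exactly $n$ critical points.

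The main obstacle is the first step: making sure the commutation criterion of \cite{shen2010tracequotient} transfers to the complex rank-one setting and that the rank-one reduction to an eigenvector equation is valid. The rest is linear algebra. If the citation is doubtful, I would instead verify criticality directly: differentiate $\tilde f$ along a curve $\ket{\varphi(t)}$ with $\ket{\varphi(0)}=\ket{\varphi}$ and $\braket{\varphi}{\dot\varphi}=0$. This gives
\[
\frac{d}{dt}\tilde f\Big|_{t=0}=\frac{2\,\textup{Re}\,\bra{\dot\varphi}\big(A-\tilde f B\big)\ket{\varphi}}{\bra{\varphi}B\ket{\varphi}}.
\]
This vanishes for all $\ket{\dot\varphi}\perp\ket{\varphi}$ exactly when $(A-\tilde f B)\ket{\varphi}\in\textup{span}\{\ket{\varphi}\}$, which is the same condition as before.
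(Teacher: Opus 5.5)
Your proof is correct and follows the same route as the paper: the commutation criterion of \cite{shen2010tracequotient} reduces criticality of $\ketbra{\varphi}$ to $A\ket{\varphi}=\tilde f(\ketbra{\varphi})B\ket{\varphi}$, and the simultaneous diagonalization $A=S\Lambda S^\dagger$, $B=SS^\dagger$ then places $\ket{\varphi}$, up to phase, on one of $n$ eigenlines. You also check explicitly that every eigenline of $B^{-1}A$ gives a critical point, via $\Phi(P)\ket{\varphi}=0$; this supplies the ``at least $n$'' half of ``exactly $n$'', which the paper's proof leaves implicit.
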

\begin{proof}
As we saw earlier, the critical points of $\tilde f$ correspond to the set of rank-one projectors $\ketbra{\varphi}$ that satisfy
\[
[\ketbra{\varphi}, \Phi(\ketbra{\varphi})] = 0.
\]
This condition can be rewritten as
\[
\Phi(\ketbra{\varphi}) \ketbra{\varphi} = \ketbra{\varphi} \Phi(\ketbra{\varphi}).
\]
Now, since $\braket{\varphi} = 1$, it holds that 
\begin{align*}
\Phi(\ketbra{\varphi}) \ket{\varphi} &=\ketbra{\varphi} \Phi(\ketbra{\varphi}) \ket{\varphi}\\
&= \bra{\varphi}(A - \tilde f(\ketbra{\varphi})B)\ket{\varphi}\ket{\varphi}
\\&= (\bra{\varphi}A\ket{\varphi} - \tilde f(\ketbra{\varphi})\bra{\varphi}B\ket{\varphi})\ket{\varphi}
\\&= 0,
\end{align*}
and so 
\begin{equation}
\label{eq:psieig}
A\ket{\varphi} = \tilde f(\ketbra{\varphi}) B \ket{\varphi}.    
\end{equation}

Using \cref{thm:simultdiag} we know that there exists some non-singular matrix $S$ such that $A = S\Lambda S^\dagger$ and $B = SS^\dagger$, where $\Lambda$ is a diagonal matrix with diagonal entries equal to the eigenvalues of $B^{-1}A$. Therefore, \cref{eq:psieig} can be rewritten as
\[
S\Lambda S^\dagger \ket{\varphi} = \tilde f(\ketbra{\varphi}) S S^\dagger \ket{\varphi},
\]
and so 
\[
\Lambda S^\dagger \ket{\varphi} = \tilde f(\ketbra{\varphi}) S^\dagger \ket{\varphi},
\]
which implies that $S^\dagger \ket{\varphi}$ is an eigenvector of $\Lambda$. Since we are assuming that $\Lambda$ has $n$ distinct eigenvalues, then, up to a complex phase, there are only $n$ possible choices for $S^\dagger \ket{\varphi}$, which in turn implies that $\tilde f$ has $n$ critical points. 
\end{proof}

We have proven that the set of critical points of $\tilde f$ corresponds to
\[
\{\ketbra{\varphi} \in \textup{Gr}_1(\bb{C}^n): S^\dagger \ket{\varphi} \textup{ is an eigenvector of } \Lambda\}, 
\]
where $S$ and $\Lambda$ are as defined in the previous proof. Now, using the expression for the distance in the Grassmann manifold \cite[Eq. 5.3]{bendokat2024grassmann} we know that the minimum distance between any two critical points of $\tilde f$ is given by
\[
D := \min\Big\{\sqrt{2}\arccos(|\bra{\varphi_1}\ket{\varphi_2}|) \in [0, \frac{\pi}{\sqrt{2}}] : \ketbra{\varphi_1} \neq \ketbra{\varphi_2} \textup{ critical points of } \tilde f\Big\}.
\]

To conclude, we analyze the Hessian of $\tilde f$ at the critical points. As we mentioned earlier, at any critical point $\ketbra{\varphi}$ of $\tilde f$, its Hessian in the direction of $X \in T_P \textup{Gr}_1(\bb{C}^n)$ can be written as 
\[
\nabla^2 \tilde f(\ketbra{\varphi})[X, X] = \frac{2}{\Tr(B\ketbra{\varphi})} \sum_{j = 1}^{n-1} |z_{1j}|^2 (\sigma_j - \lambda_1).
\]
Moreover, recall that $\lambda_1$ is the eigenvalue of $\Phi(\ketbra{\varphi})$ associated with $\ket{\varphi}$, which we showed in the proof of \cref{prop:criticalpointsGr1} to be zero. Thus, we can rewrite this expression as 
\[
\nabla^2 \tilde f(\ketbra{\varphi})[X, X] = \frac{2}{\Tr(B\ketbra{\varphi})} \sum_{j = 1}^{n-1} |z_{1j}|^2 \sigma_j,
\]
where the values $\sigma_j$ correspond to the eigenvalues of 
\[
A - \tilde f(\ketbra{\varphi}) B
\]
not associated with $\ket{\varphi}$. Therefore, since $h_2(X,X) = 2\sum_{j}|z_{1j}|^2$, denoting by $\tilde{\mathcal{S}}$ the set of saddle points of $\tilde f$, we know that for every $P \in \tilde{\mathcal{S}}$,
\[
\lambda_{\min}(\nabla^2\tilde f(P)) \leq \max_{P \in \tilde{\mathcal{S}}} \min \Big\{\frac{\sigma_i(P)}{\Tr(BP)} : i \in \{1, \dotsc, n-1\}\Big\}, 
\]
where $\sigma_i(P)$ denotes the $i$-th eigenvalue $\sigma_i$ at the critical point $P$, and at the global minimum $P^*$
\[
\lambda_{\min}(\nabla^2\tilde f(P^*)) = \min \Big\{\frac{\sigma_i(P^*)}{\Tr(BP^*)} : i \in \{1, \dotsc, n-1\}\Big\}, 
\]
so we can choose 
\[
\lambda_* = \min\Big\{1, -\max_{P \in \tilde{\mathcal{S}}} \min \Big\{\frac{\sigma_i(P)}{\Tr(BP)} : i \in \{1, \dotsc, n-1\}\Big\}, \min \Big\{\frac{\sigma_i(P^*)}{\Tr(BP^*)} : i \in \{1, \dotsc, n-1\}\Big\}\Big\}.
\]
\paragraph{Second case: \texorpdfstring{$B = \mathds{1}$}{B = 1}.}\mbox{}\\
We can rewrite \cref{eq:traceratiofunction} as 
\begin{align}
\label{eq:traceratiograssmannBid}
\begin{split}
\tilde f: \textup{Gr}_k(\bb{C}^n) \to \bb{R},\quad P \mapsto \frac{\Tr(P A)}{k}.
\end{split}
\end{align}
In this case, assuming that $A$ has a simple spectrum, any two critical points of $\tilde f$ lie in each other's cut locus, which in particular implies that
\[
d(P, Q) \geq i(\textup{Gr}_k(\bb{C}^n)),
\]
for any two critical points $P, Q$ of $\tilde f$ (cf. \cref{secinjectivity}). 

Describing the cut locus of a point in a manifold is not an easy task in general. Nevertheless, for the case of the Grassmann manifold with the induced metric, the cut locus of a point $P = U U^\dagger$ can be described easily \cite[Section 5.1]{bendokat2024grassmann}. Indeed, for any $P = UU^\dagger \in \textup{Gr}_k(\bb{C}^n)$, its cut locus $C(P)$ is given by
\begin{equation}
\label{eq:cutlocusdescription}
C(P) = \{ Q = YY^\dagger \in \textup{Gr}_k(\bb{C}^n) : \rank(U^\dagger Y) < k\}.
\end{equation}
Therefore, the cut locus of a projector $P$ in $\textup{Gr}_k(\bb{C}^n)$ corresponds to those projectors whose associated subspace contains at least one orthogonal direction to the subspace on which $P$ projects. 

\begin{proposition}
Let $\tilde f$ be defined as in \cref{eq:traceratiograssmannBid}, and assume that $A$ has $n$ distinct eigenvalues. Then  $\tilde f$ has $\binom{n}{k}$ distinct critical points, which are separated by a distance of at least $\frac{\sqrt{2}\pi}{2}$. 
\end{proposition}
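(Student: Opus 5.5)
We reduce the critical-point problem to a commutation condition and then compute distances using the cut-locus description \cref{eq:cutlocusdescription}. By \cite[Corollary 1]{shen2010tracequotient}, with $B=\mathds{1}$ the critical points of $\tilde f$ are exactly the projectors $P\in\textup{Gr}_k(\bb{C}^n)$ commuting with $\Phi(P)=A-\frac{\Tr(PA)}{k}\mathds{1}$. Since the scalar part commutes with everything, this is equivalent to $[P,A]=0$.

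We first count the critical points. Write $A=\sum_{i=1}^n a_i\ketbra{e_i}$ with distinct $a_i$. If $[P,A]=0$, then $P$ preserves each eigenspace of $A$. Each eigenspace is one-dimensional, since the spectrum is simple. Hence $P$ is diagonal in the basis $\{\ket{e_i}\}$, and because $P^2=P$ its diagonal entries lie in $\{0,1\}$. The trace condition $\Tr(P)=k$ then forces $P=P_I:=\sum_{i\in I}\ketbra{e_i}$ for a unique subset $I\subset\{1,\dots,n\}$ with $|I|=k$. Conversely, every $P_I$ commutes with $A$. Distinct subsets give distinct projectors, so there are exactly $\binom{n}{k}$ critical points.

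Next we show that any two distinct critical points lie in each other's cut locus. Take $I\neq J$ and set $U=(e_i)_{i\in I}$ and $Y=(e_j)_{j\in J}$. Then $U^\dagger Y$ is the $k\times k$ matrix with entries $\langle e_i,e_j\rangle=\delta_{ij}$. Since $|I|=|J|=k$ and $I\neq J$, some $j\in J\setminus I$ exists, and the corresponding column of $U^\dagger Y$ vanishes. Thus $\rank(U^\dagger Y)<k$, and \cref{eq:cutlocusdescription} gives $P_J\in C(P_I)$.

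For the distance bound we use the principal-angle formula for the Grassmannian distance, \cite[Eq.~5.3]{bendokat2024grassmann}. It is the same formula used in the rank-one case, $d(P,Q)=\sqrt{2}\,\big(\sum_l\theta_l^2\big)^{1/2}$, where $\cos\theta_l$ are the singular values of $U^\dagger Y$. Here those singular values are $1$ (with multiplicity $|I\cap J|$) and $0$ (with multiplicity $|J\setminus I|\ge1$). So at least one principal angle equals $\pi/2$, and
\[
d(P_I,P_J)\geq \sqrt{2}\cdot\frac{\pi}{2}=\frac{\sqrt{2}\pi}{2}.
\]
This matches the injectivity radius of $\textup{Gr}_k(\bb{C}^n)$ referenced in \cref{secinjectivity}.

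The only delicate step is matching normalizations. The factor $\sqrt{2}$ depends on the metric $h_2$ induced by the bi-invariant metric on $\textup{U}(n)$, and we must confirm it agrees with the convention in \cite{bendokat2024grassmann}. We would check this against the rank-one formula already used above, which carries the same $\sqrt{2}$ prefactor. The rest of the argument is elementary linear algebra.
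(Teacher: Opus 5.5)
Your proof is correct. The first two steps, reducing to $[P,A]=0$ and counting the $\binom{n}{k}$ projectors $P_I$, are the paper's, with more detail on why $P$ must be diagonal in the eigenbasis of $A$ and on the converse.

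You genuinely diverge at the distance bound. The paper uses only the cut-locus membership $P_J\in C(P_I)$ from \cref{eq:cutlocusdescription} to get $d(P_I,P_J)\geq i(\textup{Gr}_k(\bb{C}^n))$. It then imports $i(\textup{Gr}_k(\bb{C}^n))\geq\frac{\sqrt{2}\pi}{2}$ from \cref{cor:injectivityradii}, i.e.\ from \cref{InjectivityRadiusBound} with $K\leq 2$ and $l\geq\sqrt{2}\pi$. You instead compute the distance from principal angles, so your cut-locus paragraph is never actually used and could be dropped.

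The paper's route needs no explicit distance formula, hence no normalization check, but it rests on the curvature and closed-geodesic estimates of the appendix. Your route is more elementary and gives the exact value $d(P_I,P_J)=\frac{\sqrt{2}\pi}{2}\sqrt{|J\setminus I|}$. This shows the bound is attained, and combined with $P_J\in C(P_I)$ it shows that the injectivity-radius bound of \cref{cor:injectivityradii} is sharp.

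The normalization you flag is right for $h_2$. At $P_0=\mathrm{diag}(\mathds{1}_k,0)$, a horizontal $\Omega=\begin{pmatrix}0&-B^\dagger\\ B&0\end{pmatrix}\in\mathfrak{u}(n)$ has $|\Omega|_g^2=2\norm{B}_F^2$. The curve $t\mapsto e^{t\Omega}P_0e^{-t\Omega}$ makes principal angles $t\,\sigma_i(B)$ with $P_0$, as long as these are at most $\pi/2$. This gives $d=\sqrt{2}\,(\sum_l\theta_l^2)^{1/2}$, consistent with the identity $h_2(X,X)=2\sum_{ij}|z_{ij}|^2$ used in the paper.
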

\begin{proof}
When $B = \mathds{1}$, any critical point $P$ of $\tilde f$ satisfies
\[
[P, A - \tilde f(P)] = 0,
\]
which is equivalent to the condition that
\[
[P, A] = 0.
\]
This way, for every critical point $P$ of $\tilde f$, we know that $P$ and $A$ can be diagonalized in the same basis. Assuming again that $A$ has $n$ distinct eigenvalues with associated eigenvectors $\{\ket{\varphi_i}\}_{i = 1}^n$, it must be the case that we can write $P$ as 
\[
P = \sum_{l = 1}^{k} \ketbra{\varphi_{i_l}},
\]
where $\{i_1 \dotsc, i_k\} \subset \{1, \dotsc, n\}$. Thus, there are $\binom{n}{k}$ choices of $P$. 

Using the above description of the critical points, we know that every two critical points will differ in at least one orthogonal direction. Thus, by \cref{eq:cutlocusdescription}, any two critical points $P$ and $Q$ will lie in each other's cut locus. Therefore, their distance is lower bounded by the injectivity radius of $\textup{Gr}_k(\bb{C}^n)$ which in turn is at least $\frac{\sqrt{2}\pi}{2}$ (cf. \cref{cor:injectivityradii}). 
\end{proof}

Let us now analyze the Hessian of $\tilde f$ (cf. \cref{eq:traceratiograssmannBid}) at any critical point $P$. Recall from the previous result that we can write $A$ and $P$ as 
\[
A = \sum_{i = 1}^k \lambda_i \ketbra{\varphi_i} + \sum_{i = k+1}^{n} \sigma_{i-k} \ketbra{\varphi_i},\quad \textup{and}\quad P = \sum_{i = 1}^k \ketbra{\varphi_i},
\]
where $\{\ket{\varphi_i}\}_{i = 1}^n$ is a common eigenbasis for both matrices, and $\{\lambda_i\}_{i = 1}^k$, $\{\sigma_i\}_{i=1}^{n-k}$ are the associated eigenvalues for $A$, which are assumed to be distinct. 

Furthermore, from the expression of the Hessian of $\tilde f$ at some critical point $P$, and using the fact that $\Phi(P) = A - \tilde f(P)$, we know that
\[
\nabla^2 \tilde f(P)[X, X] = \frac{2}{\Tr(BP)} \sum_{i = 1}^k \sum_{j = 1}^{n-k} |z_{ij}|^2 (\sigma_j - \lambda_i).
\]

This way, denoting by $\tilde{\mathcal{S}}$ the set of saddle points of $\tilde f$, and denoting $E_1 < E_2 < \dotsc < E_n$ the distinct eigenvalues of $A$, we can use again the fact that $h_2(X,X) = 2\sum_{ij}|z_{ij}|^2$ to conclude that for every $P \in \tilde{\mathcal{S}}$, 
\[
\lambda_{\min}(\nabla^2\tilde f(P)) \leq - \frac{E_{k+1} - E_k}{k},
\]
and at the global minimum $P^*$, 
\[
\lambda_{\min}(\nabla^2\tilde f(P^*)) = \frac{E_{k+1} - E_k}{k},
\]
so we can choose
\[
\lambda_* = \min\Big\{1, \frac{E_{k+1} - E_k}{k}\Big\}
\]

\subsubsection{Rapid mixing and suboptimality for the Gibbs distributions}

Lastly, let us apply \cref{thm:MainFormal1,thm:MainFormal2,cor:rapidmixing} to the trace ratio problem. Recall that we are interested in the functions $F, f$ and $\tilde f$ defined on $(\textup{U}(n), g)$, $(\textup{V}_k(\bb{C}^n), h_1)$, $(\textup{Gr}_k(\bb{C}^n), h_2)$ as presented in \cref{eq:traceratiofunction,eq:traceratioUn,eq:traceratiograssmann}. 

The following result allows us to guarantee that, for sufficiently large values of $\beta$, the associated Gibbs distribution to $F$, $f$, and $\tilde f$ \textit{find} the global minima. 

\begin{proposition}
\label{prop:mainthmtrace1}
Let $F$, $f$ and $\tilde f$ be defined as in \cref{eq:traceratiofunction,eq:traceratioUn,eq:traceratiograssmann} and let $A_2 \geq 1$ be the Lipschitz constant of the gradient of $F$, $f$ and $\tilde f$.  Let $\operatorname{L}_F$, $\operatorname{L}_f$ and $\operatorname{L}_{\tilde f}$ be the operators defined as in \cref{eq:diferentesoperators}. For every $\varepsilon \in (0, 1]$ and $\delta \in (0, 1)$, if $\beta$ is such that 
\[
\beta \geq \frac{2}{\varepsilon}\bigg(\frac{1}{2} + 6n^2\log n + \frac{n^3(n+1)}{2}\log(2\pi) + n^2 \log \frac{A_2 \sqrt{2\pi}}{\delta\varepsilon}\bigg),
\]
then the Gibbs distribution $\nu_F(x) = \frac{1}{Z_F} e^{-\beta F(x)}$ on $\textup{U}(n)$ is such that
\[
\nu_F\left(F - \min_{y \in \textup{U}(n)} F(y) \geq \varepsilon\right) \leq \delta.
\]
If we define 
\[
\varepsilon^M_{\max} := \min \Big\{\frac{\pi^2 A_2}{16}, 1\Big\},
\]
then, for every $\varepsilon \in (0, \varepsilon^M_{\max}]$ and $\delta \in (0, 1)$, if $\beta$ is such that 
\[
\beta \geq \frac{2}{\varepsilon}\bigg(\frac{1}{2} + k(2n-k)\bigg(3\log(k(2n-k))  +\log \frac{A_2 \sqrt{2\pi}}{\delta\varepsilon}\bigg) + k^2(2n-k)\frac{2n-k+1}{2}\log(2\pi) \bigg),
\]
the Gibbs distribution $\nu_f(x) = \frac{1}{Z_f} e^{-\beta f(x)}$ on $\textup{V}_k(\bb{C}^n)$ is such that
\[
\nu_f\left(f - \min_{y \in \textup{V}_k(\bb{C}^n)} f(y) \geq \varepsilon\right) \leq \delta.
\]
and if $\beta$ is such that
\[
\beta \geq \frac{2}{\varepsilon}\bigg(\frac{1}{2} + 2k(n-k)\bigg(3\log (2k(n-k)) + \log \frac{A_2 \sqrt{2\pi}}{\delta\varepsilon} + k(n-k)\log(2\pi)\bigg) \bigg),
\]
the Gibbs distribution $\nu_{\tilde f}(x) = \frac{1}{Z_{\tilde f}} e^{-\beta \tilde f(x)}$ on $\textup{Gr}_k(\bb{C}^n)$ is such that
\[
\nu_{\tilde f}\left(\tilde f - \min_{y \in \textup{Gr}_k(\bb{C}^n)} \tilde f(y) \geq \varepsilon\right) \leq \delta.
\]
\end{proposition}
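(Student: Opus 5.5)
The plan is to apply \cref{thm:MainFormal2} three times, once on each of $\textup{U}(n)$, $\textup{V}_k(\bb{C}^n)$ and $\textup{Gr}_k(\bb{C}^n)$. For each manifold, the work is to make the abstract hypothesis on $\beta$,
\[
\beta \geq \frac{2}{\varepsilon}\Big(\frac12 + d\log\frac{d^3 A_2 \textup{Vol}^*(M)\sqrt{2\pi}}{\delta\varepsilon}\Big),
\]
explicit. This requires three ingredients: the real dimension $d$, an upper bound on $\textup{Vol}^*(M)$, and a lower bound on the injectivity radius $i(M)$, which controls $\varepsilon_{\max}$.

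The dimensions are standard:
\begin{itemize}
\item $\dim_{\bb{R}}\textup{U}(n)=n^2$;
\item $\dim_{\bb{R}}\textup{V}_k(\bb{C}^n)=k(2n-k)$;
\item $\dim_{\bb{R}}\textup{Gr}_k(\bb{C}^n)=2k(n-k)$.
\end{itemize}

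Substituting these, the term $d\log(d^3)$ becomes $3d\log d$. This gives $6n^2\log n$ for $\textup{U}(n)$, $3k(2n-k)\log(k(2n-k))$ for the Stiefel manifold, and $3\cdot 2k(n-k)\log(2k(n-k))$ for the Grassmannian. The remaining term $d\log\frac{A_2\sqrt{2\pi}}{\delta\varepsilon}$ reproduces the corresponding summands in the statement.

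For the volumes, I will use the Riemannian submersions with totally geodesic fibers from \cref{curvaturestiefel,curvaturegrassmann}, together with the Fubini identity \cref{LiftingSobolev}. These give
\[
\textup{Vol}(\textup{V}_k)=\frac{\textup{Vol}(\textup{U}(n))}{\textup{Vol}(\textup{U}(n-k))},
\qquad
\textup{Vol}(\textup{Gr}_k)=\frac{\textup{Vol}(\textup{U}(n))}{\textup{Vol}(\textup{U}(k))\textup{Vol}(\textup{U}(n-k))}.
\]
For $\textup{U}(n)$ with the bi-invariant metric, the classical Macdonald-type formula reads
\[
\textup{Vol}(\textup{U}(n)) = c_n \prod_{j=1}^n \frac{2\pi^j}{(j-1)!}.
\]
I only need a crude upper bound of the form $\log\textup{Vol}^*(\textup{U}(n)) \le \frac{n^2(n+1)}{2}\log(2\pi)$; the exponent $\sum_j j = n(n+1)/2$ times a factor absorbing $\sqrt{2}$-normalizations gives this, and factorials in the denominator are at least $1$. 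Multiplying by $d=n^2$ yields the stated $\frac{n^3(n+1)}{2}\log(2\pi)$.

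The Stiefel and Grassmann cases are analogous. For the Stiefel manifold I bound $\textup{Vol}(\textup{V}_k)$ by $\prod_{j=n-k+1}^{n}(2\pi)^{j}$, giving exponent $\frac{k(2n-k+1)}{2}$, hence the term $k^2(2n-k)\frac{2n-k+1}{2}\log(2\pi)$ after multiplying by $d$. For the Grassmannian I bound the volume by $(2\pi)^{k(n-k)}$. These volume estimates are the main obstacle: the normalization of the metric $g$ (trace form versus $\frac12$ trace) must be pinned down from \cref{curvaturestiefel}, and I expect to need some slack to make the product formula fit exactly under the stated exponents.

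Finally, I check the admissible range of $\varepsilon$. For $\textup{U}(n)$ the statement allows every $\varepsilon\in(0,1]$, so I need $i(\textup{U}(n))^2A_2/8\geq 1$ after using $A_2\ge1$; this should follow from $i(\textup{U}(n))=\sqrt{2}\pi$ (or $\pi$, depending on normalization) via \cref{cor:injectivityradii}. For $\textup{V}_k$ and $\textup{Gr}_k$, I will use that $i\ge \frac{\sqrt2\pi}{2}$ from \cref{cor:injectivityradii}, which gives $i^2/8=\pi^2/16$ and matches $\varepsilon^M_{\max}$. This is also consistent with the fact that totally geodesic quotients do not decrease injectivity radius below that bound.

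With these three ingredients in hand, each conclusion follows directly from \cref{thm:MainFormal2}.
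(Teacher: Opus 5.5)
Your route is the paper's: apply \cref{thm:MainFormal2} on each of the three manifolds, using the exact dimensions, the injectivity-radius bounds of \cref{cor:injectivityradii}, and a volume bound. Your bookkeeping of the dimension terms, of the Stiefel and Grassmann volume exponents, and of the admissible $\varepsilon$-ranges is correct.

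The step that is not yet a proof is the volume of $\textup{U}(n)$, which you flag yourself, and as written it does not close. The inequality you say suffices, $\log\textup{Vol}^*(\textup{U}(n))\le\frac{n^2(n+1)}{2}\log(2\pi)$, gives $\frac{n^4(n+1)}{2}\log(2\pi)$ after multiplying by $d=n^2$, not the stated $\frac{n^3(n+1)}{2}\log(2\pi)$. What you need is $\textup{Vol}(\textup{U}(n))\le(2\pi)^{n(n+1)/2}$. There is no room to ``absorb'' an undetermined constant $c_n$ or to hope for slack. For the metric $g(X,Y)=\Tr(Y^\dagger X)$ of \cref{metrica} (this, not \cref{curvaturestiefel}, is where the normalization is fixed) one has $\textup{Vol}(\textup{U}(n))=(2\pi)^{n(n+1)/2}/\prod_{k=1}^{n-1}k!$, as recorded in \cref{dimensionandvolume}. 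So your $c_n$ must be exactly $2^{n(n-1)/2}$, and the bound is an equality for $n\le 2$; for example, $\textup{Vol}(\textup{U}(2))=8\pi^3=(2\pi)^3$. If you want to derive the formula rather than cite it, use the fibration $U\mapsto Ue_1$ onto $\bb{S}^{2n-1}$, whose fibers are isometric to $\textup{U}(n-1)$. The induced base metric is the round one with the $2n-2$ directions orthogonal to the Hopf direction stretched by $\sqrt2$. This gives $\textup{Vol}(\textup{U}(n))=\frac{(2\pi)^n}{(n-1)!}\textup{Vol}(\textup{U}(n-1))$, and the factorials then give the inequality.

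With this formula, your Stiefel bound follows from the quotient identity. The Grassmann bound needs one more line that you omit: $\textup{Vol}(\textup{Gr}_k(\bb{C}^n))=(2\pi)^{k(n-k)}\prod_{j=1}^{k}\frac{(j-1)!}{(n-k+j-1)!}\le(2\pi)^{k(n-k)}$, using $n-k\ge1$.

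On the injectivity radius, the trace metric gives $i(\textup{U}(n))=\pi$, not $\sqrt2\pi$: the unit-speed geodesic $t\mapsto e^{t\,\mathrm{diag}(i,0,\dots,0)}$ closes at $t=2\pi$. This is enough, since $\pi^2/8>1$. Finally, drop the remark that totally geodesic quotients do not lower the injectivity radius. It is not needed, since \cref{cor:injectivityradii} bounds $i(\textup{V}_k(\bb{C}^n))$ and $i(\textup{Gr}_k(\bb{C}^n))$ directly. It is also false in general: the Hopf fibration $(\bb{S}^3,g_{\mathit{round}})\to(\bb{S}^2,\tfrac14 g_{\mathit{round}})$ has totally geodesic fibers and halves the injectivity radius.
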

\begin{proof}
In order to prove this result, it suffices to bound the injectivity radius and the volume of $\textup{U}(n)$, $\textup{V}_k(\bb{C}^n)$ and $\textup{Gr}_k(\bb{C}^n)$, and apply \cref{thm:MainFormal2}. The bounds obtained in \cref{cor:injectivityradii} for the injectivity radii of these manifolds are
\[
i(\textup{U}(n)) \geq \pi, \quad i(\textup{V}_k(\bb{C}^n)) \geq \frac{\sqrt{2}\pi}{2}, \quad \textup{and}\quad i(\textup{Gr}_k(\bb{C}^n)) \geq \frac{\sqrt{2}\pi}{2}, 
\]
and the bounds on their volume can be deduced from the expressions shown in \cref{dimensionandvolume}, i.e. 
\[
\textup{Vol}(\textup{U}(n)) \leq (2\pi)^{n(n+1)/2}, \quad \textup{Vol}(\textup{V}_k(\bb{C}^n)) \leq (2\pi)^{nk - \frac{k(k-1)}{2}},\quad \textup{Vol}(\textup{Gr}_k(\bb{C}^n)) \leq (2\pi)^{nk -k^2}.
\]
Lastly, the dimensions of $\textup{U}(n)$, $\textup{V}_k(\bb{C}^n)$ and $\textup{Gr}_k(\bb{C}^n)$ can be substituted by the explicit expressions found in \cref{dimensionandvolume}, i.e. 
\[
\dim(\textup{U}(n)) = n^2,\quad \dim(\textup{V}_k(\bb{C}^n)) =2nk-k^2, \quad \dim(\textup{Gr}_k(\bb{C}^n)) = 2k(n-k).
\]
\end{proof}

Next, let us first restate some of the assumptions needed in order for \cref{thm:MainFormal1} to hold.  
\begin{assumption}
\label{ass:sec8.1assumption1}
The critical points of $\tilde f$ are isolated points separated by some distance $D > 0$, i.e. 
\[
d_{h_2}(x_1, x_2) \geq D, \quad \forall x_1, x_2 \in \tilde{\mathcal{C}},\ x_1 \neq x_2,
\]
where $\tilde{\mathcal{C}}$ denotes the set of critical points of $\tilde f$. 
\end{assumption}

\begin{assumption}
\label{ass:sec8.1assumption2.1}
For every saddle point $y \in \tilde{\mathcal{S}}$, there exists an escape direction. That is, there exists some constant $\lambda_* \in (0, 1]$ such that for every saddle point $y \in \tilde{\mathcal{S}}$ it holds that 
\[
\lambda_{\min}\big(\nabla^2 \tilde f(y)\big) \leq -\lambda_* < 0.
\]
\end{assumption}

\begin{assumption}
\label{ass:sec8.1assumption2.2}
The Hessian of $\tilde f$ at the global minimum is non-degenerate. More precisely, at the unique minimum $x^*$ of $\tilde f$, we have
\[
\lambda_{\min}\big(\nabla^2 \tilde{f}(x^*)\big) \geq \lambda_*.
\]
where $\lambda_*$ is the same constant as in \cref{ass:sec8.1assumption2.1}. 
\end{assumption}

\begin{assumption}
\label{ass:sec8.1assumption3}
There exists a constant $0 < C_{\tilde f} \leq 1$ such that
\begin{equation*}
|\Grad{h_2}\tilde f(x)|_{h_2} \geq C_{\tilde f} d_{h_2}(x, \tilde{\mathcal{C}}),    
\end{equation*}
for every $x \in \textup{Gr}_k(\bb{C}^n)$. 
\end{assumption}

\begin{proposition}
\label{prop:mainthmtrace2}
Let $F$, $f$ and $\tilde f$ be defined as in \cref{eq:traceratiofunction,eq:traceratioUn,eq:traceratiograssmann}. Assume that there is a gap between the $k$-th and the $k+1$-th smallest eigenvalue of 
\[
\Phi(P^*)= A - \tilde f(P^*)B,
\]
where $P^*$ is the global minimum of $\tilde f$. Furthermore, assume that $\tilde f$ satisfies \crefrange{ass:sec8.1assumption1}{ass:sec8.1assumption3}---where in particular, $\lambda_*$ depends on the gap of $\Phi(P^*)$. Let $\operatorname{L}_F$, $\operatorname{L}_f$ and $\operatorname{L}_{\tilde f}$ be the operators defined as in \cref{eq:diferentesoperators}, and let $a, \beta > 0$ be such that
\begin{align*}
a^2 &\geq \max\Big\{\frac{48 A_2 k(n-k)}{C^2_{\tilde f}}, \frac{288}{\lambda_*}\Big\},
\\\beta &\geq \max\Bigg\{\frac{384^2 2^5 k^5 (n-k)^5 A^2_2 A^2_3 a^6}{\lambda_*^2},  \frac{9a^2}{D^2}, \frac{8a^2}{\pi^2}\Bigg\},
\end{align*}
where $A_2 \geq 1$ is the Lipschitz constant of the gradient of $F$, $f$ and $\tilde f$, and $A_3 \geq 1$ is the Lipschitz constant of the Hessian of $\tilde f$. Then  the Markov triple $(\textup{U}(n), \nu_F, \Gamma_g)$ associated with the operator $\operatorname{L}_F$ satisfies an $\textup{LSI}(\alpha)$ verifying 
\[
\frac{1}{\alpha} = 4\beta A_2 \pi^2 n (1 + \sqrt{2})^2\max \Big\{\frac{120}{\lambda_*}, n (1 + \sqrt{2})^2 \beta\Big\},
\]
the Markov triple $(\textup{V}_k(\bb{C}^n), \nu_{f}, \Gamma_{h_1})$ associated with the operator $\operatorname{L}_f$ satisfies an $\textup{LSI}(\hat \alpha)$ verifying 
\[
\frac{1}{\hat \alpha} = 4\beta A_2 \pi^2 n (1 + \sqrt{2})^2\max \Big\{\frac{120}{\lambda_*}, k (1 + \sqrt{2})^2 \beta\Big\},
\] 
and the Markov triple $(\textup{Gr}_k(\bb{C}^n), \nu_{\tilde f}, \Gamma_{h_2})$ associated with the operator $\operatorname{L}_{\tilde f}$ satisfies an $\textup{LSI}(\tilde \alpha)$ verifying
\[
\frac{1}{\tilde \alpha} = \frac{480\beta A_2  \pi^2 n (1 + \sqrt{2})^2}{\lambda_*}.
\]
\end{proposition}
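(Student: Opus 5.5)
The plan is to instantiate \cref{thm:MainFormal1} three times, with the same base space $B=\textup{Gr}_k(\bb{C}^n)$ endowed with $h_2$. The total spaces are $(\textup{U}(n), g)$ with group $\textup{U}(k)\times \textup{U}(n-k)$, and $(\textup{V}_k(\bb{C}^n), h_1)$ with group $\textup{U}(k)$. Rather than re-deriving anything, I would run the same chain as in its proof:
\begin{itemize}
\item \cref{prop9.12} gives $\textup{PI}(\lambda_*/120)$ on $(\textup{Gr}_k(\bb{C}^n),\nu_{\tilde f},\Gamma_{h_2})$;
\item \cref{thmliftPI}, applied to $\pi_3$ and to $\pi_2$, lifts it to $\textup{U}(n)$ and to $\textup{V}_k(\bb{C}^n)$;
\item \cref{prop9.15} upgrades each of the three Poincaré inequalities to a log-Sobolev inequality.
\end{itemize}
The work is therefore to check the hypotheses and to replace every abstract constant by an explicit one. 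On the function side, \cref{ass:sec8.1assumption1,ass:sec8.1assumption2.1,ass:sec8.1assumption2.2,ass:sec8.1assumption3} are assumed. Uniqueness of the minimum and the lower bound on the Hessian at $P^*$ come from the spectral gap of $\Phi(P^*)$ via \cite[Lemma 1]{shen2010tracequotient}, and the absence of spurious minima comes from \cref{thm:nolocalminimatraceratio}. Invariance of $F$ under the group actions was checked above.

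On the geometric side I would collect, from the appendix on $\textup{U}(n)$, $\textup{V}_k(\bb{C}^n)$ and $\textup{Gr}_k(\bb{C}^n)$ (\cref{curvaturestiefel,curvaturegrassmann,cor:injectivityradii,dimensionandvolume}), the following facts.
\begin{itemize}
\item $\pi_3$ and $\pi_2$ are Riemannian submersions with totally geodesic fibers (\cref{prop:inducedtotgeodfibers}), arising from free isometric actions of compact connected groups.
\item The fibers, isometric to $\textup{U}(k)\times \textup{U}(n-k)$ and $\textup{U}(k)$ with bi-invariant metrics, have non-negative Ricci curvature. This gives \cref{assumption3.8}.
\item $\textup{Gr}_k(\bb{C}^n)$ is a compact symmetric space, giving \cref{assumptionsym}. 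It has non-negative sectional curvature, so $R_{M/G}=0$ by O'Neill. The bi-invariant metric on $\textup{U}(n)$ also gives $R_M=0$.
\item In normal coordinates the curvature components of the Grassmannian are bounded by $\mathbf{K}=2$.
\item $i(\textup{Gr}_k(\bb{C}^n))\geq \pi/\sqrt{2}$, and $\mathit{conv}(\textup{Gr}_k(\bb{C}^n))\geq \pi/(2\sqrt{2})$.
\end{itemize}
Plugging these values in, together with $\dim \textup{Gr}_k(\bb{C}^n)=2k(n-k)$, turns the hypotheses of \cref{thm:MainFormal1} into the stated ones:
\begin{itemize}
\item $24A_2\dim/C_{\tilde f}^2 = 48A_2k(n-k)/C_{\tilde f}^2$;
\item $192^2 (2k(n-k))^5 \mathbf{K}^2 = 384^2 2^5 k^5(n-k)^5$;
\item both $4a^2/i^2$ and $a^2/\mathit{conv}^2$ are at most $8a^2/\pi^2$;
\item the term $4R_{M/G}/\lambda_*$ disappears because $R_{M/G}=0$.
\end{itemize}

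For the constants in the conclusion I would use $R_M=R_{M/G}=0$, so that the curvature-dimension constant in \cref{prop9.15} is just $A_2$. It then remains to bound the diameters. I would aim for $\textup{diam}(\textup{U}(m))\leq \pi\sqrt{m}(1+\sqrt{2})$, obtained from the bi-invariant geodesics $t\mapsto e^{tX}$ with spectrum of $X$ in $[-\pi,\pi]$, with some slack. The fiber $\textup{U}(k)\times \textup{U}(n-k)$ should also have diameter at most $\pi\sqrt{n}(1+\sqrt 2)$. Since Riemannian submersions do not increase distances, the diameters of $\textup{V}_k(\bb{C}^n)$ and $\textup{Gr}_k(\bb{C}^n)$ are bounded by that of $\textup{U}(n)$. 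This gives $\textup{diam}^*(\cdot)^2\leq \pi^2 n(1+\sqrt 2)^2$ in all three cases. Substituting into \cref{eq:constantalphaM}, with $\textup{diam}(G)^2\beta/\pi^2$ equal to $n(1+\sqrt 2)^2\beta$ for $\textup{U}(n)$ and $k(1+\sqrt2)^2\beta$ for $\textup{V}_k(\bb{C}^n)$, and into \cref{eq:constantalphaMG}, yields the three displayed values of $1/\alpha$, $1/\hat\alpha$ and $1/\tilde\alpha$.

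I expect the main obstacle to be extracting the explicit geometric constants. The most delicate is the uniform bound $\mathbf{K}$ on $|R^l_{ijk}|$ in normal coordinates at every point, which should follow from homogeneity of the symmetric space and the explicit curvature tensor of the Grassmannian. Next come the convexity radius of $\textup{Gr}_k(\bb{C}^n)$ and the diameter estimates with the factor $(1+\sqrt2)$. I would first try to obtain all of these at a single base point and transport them by homogeneity. A minor point is that the same Lipschitz constant $A_2$ serves for $F$, $f$ and $\tilde f$, which is \cref{prop:preservationofLipschitz}.
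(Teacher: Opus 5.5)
Your proposal is correct and follows the paper's proof essentially step by step. Like the paper, you apply \cref{thm:MainFormal1}, running \cref{prop9.12}, then \cref{thmliftPI} along $\pi_3$ and $\pi_2$, then \cref{prop9.15}, and you use the same constants: $\mathbf{K}=2$, $R_{\textup{U}(n)}=R_{\textup{Gr}_k(\bb{C}^n)}=0$, the radii from \cref{cor:injectivityradii}, and the bound $\pi\sqrt{n}(1+\sqrt2)$ on the diameters of $\textup{U}(n)$, its quotients and the fibers.

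The deviations are minor. Your direct geodesic argument gives the sharper bound $\textup{diam}(\textup{U}(m))\le\pi\sqrt{m}$, in place of the paper's Bonnet--Myers route. For the Stiefel tightening step you should state explicitly that $\textup{Ric}_{h_1}\geq 0$ (\cref{curvStiefel}), since your bullet list only justifies $R_M=0$ for $\textup{U}(n)$.
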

\begin{proof}
We begin by applying \cref{thm:MainFormal1} to the Markov triples $(\textup{U}(n), \nu_F, \Gamma_g)$ and \linebreak $(\textup{Gr}_k(\bb{C}^n), \nu_{\tilde f}, \Gamma_{h_2})$. For every $X, Y, Z \in \mathfrak{u}(n)$ it is known (cf. \cite[Proposition 31, Section 11]{OneilSemi}) that $R_{h_2}({d\pi_3} X, {d\pi_3} Y){d\pi_3} Z = d\pi_3([[X,Y], Z])$. In particular, for every $x \in \textup{U}(n)$, choosing normal coordinates centered at $\pi_3(x)$ and unit vectors, it follows from \cref{ineqliebracket} that $|({R_{h_2}})_{ijkl}\pi_3(x)|\leq 2$ and so we can take $\mathbf{K} = 2$ in \cref{thm:MainFormal1}. Furthermore, we also showed in \cref{LemaCurvaturaUn,prop:curvGrassmann} that the Ricci curvatures of the unitary group and the Grassmann manifold are non-negative, and so the constants $R_{\textup{U}(n)}$ and $R_{\textup{Gr}_k(\bb{C}^n)}$ from \cref{thm:MainFormal1} can be taken to be $0$.

Also, the Riemannian submersion $\pi_3 : (\textup{U}(n), g) \to (\textup{Gr}_k(\bb{C}^n), h_2)$ has totally geodesic fibers, which implies by \cref{CurvaturaSubmersion,vanishingT} that the sectional curvature of its fibers coincides with that of the total space.

The result now follows by substituting the manifold-dependent constants in the statement of \cref{thm:MainFormal1} with the known bounds derived in \cref{SectionExamples}. We use the lower bounds for the injectivity radii of $(\textup{U}(n), g)$ and $(\textup{Gr}_k(\bb{C}^n), h_2)$, along with the lower bound on the convexity radius of $(\textup{Gr}_k(\bb{C}^n), h_2)$ obtained in \cref{cor:injectivityradii}, i.e.
\[
\mathit{conv}(\textup{Gr}_k(\bb{C}^n)) \geq \frac{\sqrt{2}\pi}{4}.
\]
We also use the upper bound on the diameter of $(\textup{U}(n), g)$, and $(\textup{Gr}_k(\bb{C}^n), h_2)$ found in \cref{diamUn,eq:diamaterboundstiefelgrassmann}, i.e.
\[
\textup{diam}(\textup{Gr}_k(\bb{C}^n)) \leq \textup{diam}(\textup{U}(n)) \leq \pi \sqrt{n}(1 + \sqrt{2}),
\]
and substitute the dimensions of $\textup{U}(n)$ and $\textup{Gr}_k(\bb{C}^n)$. 

Lastly, we obtain a bound on the diameter of the fibers $\textup{U}(k) \times \textup{U}(n-k)$ with the induced metric, which can be derived via \cref{diameterProduct} and \cref{diamUn}, i.e. 
\[
\textup{diam}(\textup{U}(k) \times \textup{U}(n-k)) \leq \sqrt{\pi^2 k (1+ \sqrt{2})^2 + \pi^2 (n-k) (1+\sqrt{2})^2} = \pi (1 + \sqrt{2}) \sqrt{n}.
\]

All of the above bounds allow us to obtain the log-Sobolev inequality constants for the Markov triples $(\textup{U}(n), \nu_F, \Gamma_g)$ and $(\textup{Gr}_k(\bb{C}^n), \nu_{\tilde f}, \Gamma_{h_2})$. 

To obtain the log-Sobolev inequality of the triple $(\textup{V}_k(\bb{C}^n), \nu_f, \Gamma_{h_1})$, we proceed as in the proof of \cref{thm:MainFormal1}: we first apply \cref{prop9.12} to $(\textup{Gr}_k(\bb{C}^n), \nu_{\tilde f}, \Gamma_{h_2})$ to find that it satisfies a $\textup{PI}(\kappa)$ with
\[
\kappa = \frac{\lambda_*}{120}. 
\]
Then  using the lifting theorem for Poincaré inequalities shown in \cref{thmliftPI}---as the fibers $U(k)$ have non-negative induced Ricci curvature---we lift the Poincaré inequality via $\pi_2$ to ensure that $(\textup{V}_k(\bb{C}^n), \nu_f, \Gamma_{h_1})$ satisfies a $\textup{PI}(\tilde \kappa)$ where
\[
\frac{1}{\tilde \kappa} = \max\Big\{\frac{120}{\lambda_*}, (1+\sqrt{2})^2  k \beta\Big\},
\]
which can be then tightened using \cref{prop9.15} to obtain an $\textup{LSI}(\hat{\alpha})$, where
\[
\frac{1}{\hat{\alpha}} = \frac{4\beta A_2 \textup{diam}(\textup{V}_k(\bb{C}^n))^2}{\tilde \kappa}.
\]
Lastly, using the bound on the diameter of the Stiefel manifold shown in \cref{eq:diamaterboundstiefelgrassmann} we see that 
\[
\textup{diam}(\textup{V}_k(\bb{C}^n)) \leq \pi \sqrt{n}(1 + \sqrt{2}),
\]
and so the result follows. 
\end{proof}

Recall that in \cref{sec:criticalpointstraceratio} we proved that the critical points of $\tilde f$ are isolated when it is defined on $\textup{Gr}_1(\bb{C}^n)$ or when $B = \mathds{1}$, assuming a simple spectrum of $B^{-1}A$. Furthermore, we also gave an expression for the minimum distance between any two critical points, and the constant $\lambda_*$ from \cref{ass:sec8.1assumption2.1}.

\begin{remark}
In the case when $\tilde f$ is defined on $\textup{Gr}_1(\bb{C}^n)$ or $B = \mathds{1}$, \cref{ass:sec8.1assumption1} holds, i.e. the critical points of $\tilde f$ are isolated, and the constant $D$---which lower bounds the distance between any two critical points---is given by 
\[
D = \frac{\sqrt{2}\pi}{2},
\]
when $B = \mathds{1}$---which in particular does not depend on $k$ or $n$---and by
\[
D = \min\Big\{\sqrt{2}\arccos(|\bra{\varphi_1}\ket{\varphi_2}|) \in [0, \frac{\pi}{\sqrt{2}}] : \ketbra{\varphi_1} \neq \ketbra{\varphi_2} \textup{ critical points of } \tilde f\Big\},
\]
when $\tilde f$ is defined on $\textup{Gr}_1(\bb{C}^n)$.

Moreover, whenever the spectrum of $B^{-1}A$ is simple, \cref{ass:sec8.1assumption2.1,ass:sec8.1assumption2.2} hold, and the constant $\lambda_*$ can be written when $B = \mathds{1}$ as 
\[
\lambda_* := \min\Big\{1, \frac{E_{k+1} - E_k}{k}\Big\},
\]
where $E_{k+1}$ and $E_k$ are the $k+1$-th and $k$-th smallest eigenvalues of $A$, respectively, and as 
\[
\lambda_* := \min\Big\{1, -\max_{P \in \tilde{\mathcal{S}}} \Big\{\frac{\min_{i \in \{1, \dotsc, n-1\}} \sigma_i(P)}{\Tr(BP)} \Big\}, \frac{\min_{i \in \{1, \dotsc, n-1\}} \sigma_i(P^*)}{\Tr(BP^*)}\Big\}.
\]
when $\tilde f$ is defined on $\textup{Gr}_1(\bb{C}^n)$, where $P^*$ denotes the minimum of $\tilde f$. 
\end{remark}

Lastly, let us restate \cref{cor:rapidmixing}.

\begin{corollary}
Let $X_t$, $\hat X_t$ and $\tilde X_t$ be the Langevin diffusion processes generated by $\operatorname{L}_F$, $\operatorname{L}_f$, and $\operatorname{L}_{\tilde f}$, with initial uniform distribution on $\textup{U}(n)$, $\textup{V}_k(\bb{C}^n)$ and $\textup{Gr}_k(\bb{C}^n)$, respectively. Under the assumptions of \cref{prop:mainthmtrace2}, they converge exponentially fast to the Gibbs distributions $\nu_F$, $\nu_f$, and $\nu_{\tilde f}$, respectively, i.e. 
\begin{align*}
\norm{\nu_F - \rho_t}^2_{\textup{TV}} &\leq  \beta\, e^{-2\alpha t} (\max_{y \in \textup{U}(n)} F(y) - \min_{y \in \textup{U}(n)} F(y)), \\
\norm{\nu_f - \hat \rho_t}^2_{\textup{TV}} &\leq  \beta\, e^{-2\hat \alpha t} (\max_{y \in \textup{V}_k(\bb{C}^n)} f(y) - \min_{y \in \textup{V}_k(\bb{C}^n)} f(y))= \beta\, e^{-2\hat\alpha t} (\max_{y \in \textup{U}(n)} F(y) - \min_{y \in M} F(y)),\\
||\nu_{\tilde f} - \tilde \rho_t||^2_{\textup{TV}} &\leq  \beta\, e^{-2\tilde \alpha t} (\max_{y \in \textup{Gr}_k(\bb{C}^n)} \tilde f(y) - \min_{y \in \textup{Gr}_k(\bb{C}^n)} \tilde f(y))= \beta\, e^{-2\tilde \alpha t} (\max_{y \in \textup{U}(n)} F(y) - \min_{y \in \textup{U}(n)} F(y)),
\end{align*}
for every $t \geq 0$, where $\rho_t, \hat \rho_t$ and $\tilde{\rho}_t$ denote the distributions of $X_t, \hat X_t$ and $\tilde X_t$, and $\alpha, \hat \alpha, \tilde \alpha$ denote the log-Sobolev inequality constants obtained in \cref{prop:mainthmtrace2}.
\end{corollary}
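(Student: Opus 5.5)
This corollary should follow by combining the convergence statement with the relative-entropy estimate, applied to each of the three processes. First, by \cref{prop:mainthmtrace2}, the Markov triples $(\textup{U}(n), \nu_F, \Gamma_g)$, $(\textup{V}_k(\bb{C}^n), \nu_f, \Gamma_{h_1})$ and $(\textup{Gr}_k(\bb{C}^n), \nu_{\tilde f}, \Gamma_{h_2})$ satisfy $\textup{LSI}(\alpha)$, $\textup{LSI}(\hat\alpha)$ and $\textup{LSI}(\tilde\alpha)$. \cref{thm:LSIImpliesHypercontract} then gives
\[
\norm{\rho_t - \nu_F}^2_{\textup{TV}} \leq \tfrac12 e^{-2\alpha t} H(\rho_0|\nu_F),
\]
and the analogous bounds for $\hat\rho_t$ and $\tilde\rho_t$.

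Second, I bound the initial entropies. \cref{BoundInitialRelativeEntropy} assumes $\min F \geq 0$, and $F$ is not normalized here. I would replace $F$ by $F - \min F$. This leaves the Gibbs measure $\nu_F$, the generator $\operatorname{L}_F$, and hence the process unchanged, since the additive constant cancels in the normalization and in the gradient. Applying the proposition to the shifted function gives $H(\rho_0|\nu_F) \leq \beta(\max F - \min F)$, and the same holds for $f$ and $\tilde f$. Multiplying by $\tfrac12$ only strengthens the result, so the stated bounds follow, with the factor $\beta$ in place of $\beta/2$.

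Third, I justify the equalities of oscillations. Since $F = f\circ\pi_1 = \tilde f\circ\pi_3$ and the projections $\pi_1, \pi_3$ are surjective, the maxima and minima of $f$ and $\tilde f$ coincide with those of $F$. There is no real obstacle. The only point needing care is the shift by $\min F$: one has to note that $\nu$ and the dynamics are invariant under it, and that the LSI constants from \cref{prop:mainthmtrace2} were derived under the same normalization (\cref{assumption3.2}), so they apply unchanged.
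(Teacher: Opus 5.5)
Your argument is correct and matches the paper's: the corollary is \cref{cor:rapidmixing} specialized to the trace-ratio setting, and its proof combines \cref{thm:LSIImpliesHypercontract} with \cref{BoundInitialRelativeEntropy} and drops the factor $\tfrac12$, just as you do. Your explicit shift by $\min F$ correctly supplies the normalization that the paper leaves implicit; note also that the LSI constants need no appeal to \cref{assumption3.2}, since adding a constant to the potential changes neither $\nu$ nor $\Gamma$, so the Markov triple is the same.
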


\subsection{Two-dimensional Ising model}

The second function that we consider is defined as
\begin{align}
\label{eq:isingfunction}
\begin{split}
F : \textup{SU}(2)^{\times n^2} &\rightarrow \mathbb{R}\\
(U_1, \dotsc, U_{n^2}) &\mapsto \bra{0}^{\otimes n^2} \big(U_1^\dagger \otimes \dotsm \otimes U_{n^2}^\dagger \big)H \big(U_1 \otimes \dotsm \otimes U_{n^2}\big) \ket{0}^{\otimes n^2},
\end{split}
\end{align}
where $n \geq 2$ and $H$ corresponds to a ferromagnetic Ising model Hamiltonian, i.e.
\[
H = -\sum_{\langle I, J\rangle \in \Lambda} \sigma_z^I \sigma_z^J - \sum_{I\in V} h_I \sigma_z^I,
\]
over a square lattice $(V, \Lambda)$ with nearest-neighbor interactionsa and open boundary conditions. $h_{(i, j)}\in \bb{R}$ are defined as
\begin{equation}
\label{eq:definitionmagneticfield}
h_{(i, j)} = \begin{cases}
4 + \varepsilon_{(i, j)} &\quad \text{if } 1 < i < n \text{ and } 1 < j < n,\\
3 + \varepsilon_{(i, j)} &\quad \text{if } 1 < i < n\text{ xor } 1 < j < n,\\
2 + \varepsilon_{(i, j)} &\quad \text{otherwise},
\end{cases}
\end{equation}
with $\varepsilon_{(i, j)} > 0$ for every $(i, j) \in V$, and $\sigma_z$ denotes the $Z$-Pauli matrix, i.e. $\sigma_z = \begin{pmatrix}
1 & 0 \\
0 & -1
\end{pmatrix}$. We denote by $\ket{0}$ and $\ket{1}$ the eigenvectors of $\sigma_z$ with eigenvalue $1$ and $-1$, respectively. $F$ is commonly regarded as the expectation value of $H$ in a given state. Previous results for performing efficient Gibbs sampling for this model (as a model with discrete spins) are known \cite{jerrum1993polynomial}. 

As in the previous example, the function $F$ defined in \cref{eq:isingfunction} has a clear symmetry: let $i \in \{1, \dotsc, n^2\}$ and let $U_i, \tilde U_i \in \textup{SU}(2)$ be such that 
\[
\tilde U_i\ket{0} = e^{i\theta} U_i\ket{0},
\]
for some $\theta \in [0, 2\pi)$. Then  
\[
F(U_1, \dotsc, U_i, \dotsc, U_{n^2}) = F(U_1, \dotsc, \tilde U_i, \dotsc, U_{n^2}).
\]

The symmetry of $F$ can be studied in terms of a group action on $\textup{SU}(2)$. Indeed, if we consider the action of $\textup{U}(1)$ on $\textup{SU}(2)$ defined as
\begin{align}
\label{eq:actiononSU2}
\begin{split}
\textup{SU}(2) \times \textup{U}(1) &\to \textup{SU}(2)\\
(X, e^{i\theta}) &\mapsto X\begin{pmatrix}
e^{i\theta} & 0\\
0 & e^{-i\theta}
\end{pmatrix},
\end{split}
\end{align}
it is clear that $F$ is invariant under this action in any of its coordinates $(U_1, \dotsc, U_{n^2})$. 

Note that the action shown in \cref{eq:actiononSU2} is free: for every $U \in \textup{SU}(2)$ and every $e^{i\theta} \in \textup{U}(1)$, it holds that
\[
U\begin{pmatrix}
e^{i\theta} & 0\\
0 & e^{-i\theta}
\end{pmatrix} = U
\iff
\begin{pmatrix}
e^{i\theta} & 0\\
0 & e^{-i\theta}
\end{pmatrix} = \mathds{1}
\]
which implies that $e^{i\theta} = 1$. 

Moreover, the action is smooth and isometric whenever $\textup{SU}(2)$ is endowed with its bi-invariant metric $g_{\mathit{bi}}$. Therefore, by \cref{existenceRiemannianSubmersionMetrics}, there exists a Riemannian metric $h$ on $\textup{SU}(2)/\textup{U}(1)$ such that the projection $\pi: (\textup{SU}(2), g_{\mathit{bi}}) \to (\textup{SU}(2)/\textup{U}(1), h)$ is a Riemannian submersion, which furthermore has totally geodesic fibers (cf. \cref{rmk:totallygeodesic}). 

In fact, $\pi$ corresponds to a well-studied Riemannian submersion, known as the Hopf fibration between $\bb{S}^3$ and $\bb{S}^2$. To illustrate this, let us introduce the following auxiliary result.
\begin{proposition}
\label{prop:isomSUS}
Let $\textup{SU}(2)$ be endowed with the bi-invariant metric $g_{\mathit{bi}}$. Then 
\[
(\textup{SU}(2), g_{\mathit{bi}}) \simeq (\bb{S}^3, 2g_{\textit{round}}),
\]
where $g_{\textit{round}}$ is the standard round metric on the sphere $\bb{S}^3$.
\end{proposition}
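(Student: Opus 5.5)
The plan is to build the isometry explicitly from the standard identification of $\textup{SU}(2)$ with the unit quaternions, and then to compare the two metrics at a single point, using invariance to propagate the comparison everywhere.

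First, write every $U \in \textup{SU}(2)$ as
\[
U = \begin{pmatrix} \alpha & -\bar\gamma \\ \gamma & \bar\alpha \end{pmatrix}, \qquad |\alpha|^2 + |\gamma|^2 = 1 .
\]
The map $\Phi : U \mapsto (\textup{Re}\,\alpha, \textup{Im}\,\alpha, \textup{Re}\,\gamma, \textup{Im}\,\gamma) \in \bb{S}^3 \subset \bb{R}^4$ is a diffeomorphism; it is the restriction of a real-linear isomorphism between the quaternion span $\{\begin{pmatrix} \alpha & -\bar\gamma\\ \gamma & \bar\alpha\end{pmatrix}\}$ and $\bb{R}^4$. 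A direct computation gives
\[
\Tr(X^\dagger X) = 2(|\alpha|^2 + |\gamma|^2)
\]
for every matrix $X$ in this span. Hence the Frobenius inner product $\langle X, Y\rangle = \textup{Re}\Tr(X^\dagger Y)$ restricted to the span equals $2$ times the Euclidean inner product under $\Phi$.

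Second, I will use that the bi-invariant metric $g_{\mathit{bi}}$ fixed earlier in the paper (cf. \cref{SectionExamples}) is the restriction of the Frobenius inner product $\langle X,Y\rangle = \textup{Re}\Tr(X^\dagger Y)$ to tangent spaces $T_U\textup{SU}(2) = U\mathfrak{su}(2)$. The tangent space $T_U\textup{SU}(2)$ lies inside the quaternion span, since that span is a real vector space containing $\textup{SU}(2)$. Therefore $d\Phi$ multiplies squared lengths by exactly $1/2$, so $\Phi^* g_{\textit{round}} = \frac{1}{2} g_{\mathit{bi}}$, equivalently $g_{\mathit{bi}} = \Phi^*(2 g_{\textit{round}})$. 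This gives the desired isometry.

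The main (mild) obstacle is normalization: the paper's bi-invariant metric must really be $\textup{Re}\Tr(X^\dagger Y)$ and not, say, $-\frac12\Tr(XY)$ or the Killing form. I would check this against the convention used in \cref{SectionExamples}, for instance via the injectivity radius $i(\textup{U}(n)) \geq \pi$ quoted there, which is consistent with the trace metric. I would also cross-check by computing one geodesic: $t \mapsto \exp(t\,\textup{diag}(i,-i))$ has $g_{\mathit{bi}}$-length $\sqrt{2}\cdot 2\pi$, while its image is a great circle of round length $2\pi$. This matches the factor $\sqrt2$ in length, i.e. $2$ in the metric. If the paper's normalization differed, the same argument would yield the corresponding constant multiple instead.
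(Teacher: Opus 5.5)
Your proof is correct and follows essentially the paper's argument: both identify $\textup{SU}(2)$ with the unit quaternions and rely on the fact that the trace form on the quaternion span is twice the Euclidean inner product on $\bb{R}^4$. The only difference is that the paper checks this at the identity and carries it to other points by right translation (a Euclidean isometry of $\bb{R}^4$), whereas you note that it holds on the whole four-dimensional span, which contains every tangent space, so it applies at every point without any invariance step.
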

\begin{proof}
First, note that
\[
\textup{SU}(2) = \left\{\begin{pmatrix}
a & -\overline{b} \\
b & \overline{a}
\end{pmatrix}\ :\ a, b \in \bb{C},\  |a|^2+|b|^2=1\right\} \simeq \bb{S}^3.
\]
Recall from \cref{prop:liealgebracident} that its Lie algebra $\mathfrak{su}(2)$ consists of $2 \times 2$ anti-Hermitian matrices with zero trace, 
\[
\mathfrak{su}(2) = T_{\mathds{1}} \textup{SU}(2) = \left\{ \begin{pmatrix}
ix & z\\
-\overline{z} & -ix
\end{pmatrix}\ :\ x \in \bb{R},\ z \in \bb{C}\right\}.
\]

Moreover, the bi-invariant metric of $\textup{SU}(2)$ at the identity is given by
\[
\Tr \Bigg(\begin{pmatrix}
-ix_2 & -z_2\\
\overline{z_2} & ix_2
\end{pmatrix}
\begin{pmatrix}
ix_1 & z_1\\
-\overline{z_1} & -ix_1
\end{pmatrix}\Bigg) = x_1x_2 + z_2 \overline{z_1} + z_1 \overline{z_2} + x_1 x_2 
\]
for any two tangent vectors $\begin{pmatrix}
ix_2 & z_2\\
-\overline{z_2} & -ix_2
\end{pmatrix}$, $\begin{pmatrix}
ix_1 & z_1\\
-\overline{z_1} & -ix_1
\end{pmatrix} \in T_{\mathds{1}}\textup{SU}(2)$, which corresponds to $2g_{\mathit{Eucl}}$ in $\bb{R}^3$.

Now, the metric $g_{\mathit{bi}}$ is extended from $T_\mathds{1} \textup{SU}(2)$ to $T\textup{SU}(2)$ as 
\[
g_{\mathit{bi}}(AG, BG) := \Tr(G^\dagger B^\dagger A G) = \Tr (B^\dagger A).
\]

Note that the map
\begin{align*}
\bb{R}^4 \rightarrow \bb{R}^4,\quad 
\begin{pmatrix}
x & y\\
-\overline{y} & \overline{x}
\end{pmatrix} \mapsto \begin{pmatrix}
x & y\\
-\overline{y} & \overline{x}
\end{pmatrix}G
\end{align*}
where $x, y \in \bb{C}$, is an isometry in $\bb{R}^4$, and so the extension of the scalar product coincides with twice the usual scalar product in the round sphere $\bb{S}^3$, i.e. $2g_{\mathit{round}}$. 
\end{proof}

With this result in mind, it is clear that the action defined in \cref{eq:actiononSU2} is equivalent to the action of $\textup{U}(1)$ on $\bb{S}^3$ given by 
\begin{align*}
\textup{U}(1) \times \bb{S}^3 \to \bb{S}^3,\quad (e^{i\theta}, x) \mapsto e^{i\theta}x.
\end{align*} 
This action induces a Riemannian submersion, known as the Hopf fibration between $(\bb{S}^3, \lambda g_{\mathit{round}})$ and $(\bb{S}^2, \frac{\lambda}{4}g_{\mathit{round}})$, for any $\lambda > 0$ \cite[Section 4.3, Chapter 3]{petersen2006riemannian}. Using \cref{prop:isomSUS} it holds that the Riemannian submersion with totally geodesic fibers
\[
\pi:(\textup{SU}(2), g_{\mathit{bi}}) \to (\textup{SU}(2)/\textup{U}(1), h),
\]
corresponds to the Hopf fibration 
\[
\pi: (\bb{S}^3, 2 g_{\mathit{round}}) \to (\bb{S}^2, \frac{1}{2}g_{\mathit{round}}).
\]

Let us now study the function $F$ on $\textup{SU}(2)^{\times n^2}$ and its \textit{projected version} $\tilde F$ on $(\bb{S}^2)^{\times n^2}$, which can be understood as the function $F$ defined on the product of Bloch spheres. 

\subsubsection{Critical points}

Let us characterize the critical points of $F$. To do so, we will first show that every eigenvector of $H$ is a critical point of $F$, and then prove that only the eigenvectors of $H$ can be critical points. 

Let us introduce an auxiliary result that will allow us to conclude that the eigenvectors of $H$ are critical points of $F$. 
\begin{lemma}
\label{LemmaMatrixTheory}
Let $A$ be an $n$-square complex matrix. Then
\[
\textup{Tr}(AX) = 0
\]
for every Hermitian matrix $X$ if and only if $A = 0$. 
\end{lemma}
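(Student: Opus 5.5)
The direction ``$A = 0 \Rightarrow \Tr(AX) = 0$'' is immediate. For the converse, I will test the hypothesis on a spanning family of Hermitian matrices and read off the entries of $A = (a_{kl})$. The natural choice is the elementary Hermitian matrices built from matrix units $E_{kl}$.

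First take $X = E_{kk}$, which is Hermitian. Since $\Tr(AE_{kl}) = a_{lk}$, this gives $\Tr(AE_{kk}) = a_{kk} = 0$ for every $k$.

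Next, for $k \neq l$, take $X = E_{kl} + E_{lk}$, which gives $a_{lk} + a_{kl} = 0$. Then take $X = i(E_{kl} - E_{lk})$, which is Hermitian because $(iE_{kl})^\dagger = -iE_{lk}$. This gives $i(a_{lk} - a_{kl}) = 0$. The two linear equations together force $a_{kl} = a_{lk} = 0$, so every entry of $A$ vanishes.

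An equivalent route is to note that the Hermitian matrices span $\bb{C}^{n\times n}$ over $\bb{C}$, since any $Y$ can be written as $Y = \frac{Y+Y^\dagger}{2} + i\,\frac{Y - Y^\dagger}{2i}$ with both summands Hermitian. By complex linearity of $X \mapsto \Tr(AX)$, the hypothesis then gives $\Tr(AY) = 0$ for all $Y$. Choosing $Y = A^\dagger$ yields $\Tr(AA^\dagger) = \norm{A}_F^2 = 0$, hence $A = 0$.

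There is no real obstacle here. The only point needing care is that $\Tr(AX)$ is complex-valued, so complex linearity in $X$ is what allows the passage from Hermitian test matrices to arbitrary ones.
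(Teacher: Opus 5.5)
The paper states this lemma without proof, treating it as a standard fact, so there is no argument of its own to compare against. Your proof is correct and complete, and either of your two routes works on its own.

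One small point favours your entrywise route. In the proof of \cref{CriticalPointsOneGate}, the vanishing is only available for traceless Hermitian $X$, namely those with $iX \in \mathfrak{su}(n)$. Your off-diagonal test matrices $E_{kl}+E_{lk}$ and $i(E_{kl}-E_{lk})$, $k \neq l$, are already traceless. Combined with the zero diagonal of the matrix constructed there, they give its vanishing directly. This avoids the paper's detour of writing a general Hermitian matrix as a diagonal part plus a traceless part before invoking the lemma.
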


With this result in mind, let us study the critical points of an auxiliary function $f$. 
\begin{proposition}
\label{CriticalPointsOneGate}
Let $\hat{H}$ be some Hermitian matrix and let $f$ be defined as 
\begin{align*}
f: &\ \textup{SU}(n) \rightarrow \bb{R}\\
&U \mapsto \bra{0}U^\dagger \hat{H} U \ket{0},
\end{align*}
where $\ket{0}$ denotes some fixed unit vector in $\bb{C}^{n}$. Then  $U\ket{0}$ is an eigenvector of $\hat{H}$ if and only if $U$ is a critical point of $f$. 
\end{proposition}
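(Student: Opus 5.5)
We compute the differential of $f$ directly. The tangent space at $U$ is $T_U\textup{SU}(n)=\{UX : X\in\mathfrak{su}(n)\}$, so every curve through $U$ can be written to first order as $\gamma(t)=U e^{tX}$ with $X$ anti-Hermitian and traceless. Put $\ket{\psi}:=U\ket{0}$ and $X=iY$, where $Y$ is Hermitian and traceless. Then
\[
\frac{d}{dt}\Big|_{t=0} f(Ue^{tX}) = \bra{0}\big(X^\dagger U^\dagger \hat H U + U^\dagger \hat H U X\big)\ket{0} = i\bra{0}[U^\dagger\hat H U, Y]\ket{0}.
\]
Using cyclicity of the trace, this becomes $i\Tr\big(Y\,[\ketbra{0}, U^\dagger \hat H U]\big)$. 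Hence $U$ is a critical point if and only if $\Tr(YC)=0$ for every traceless Hermitian $Y$, where $C:=[\ketbra{0},U^\dagger\hat HU]$.

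To pass from traceless to arbitrary Hermitian $Y$, note that $\Tr(C)=0$, since $C$ is a commutator. So $\Tr(\mathds{1}\,C)=0$ holds automatically, and the condition extends by linearity to all Hermitian matrices. By \cref{LemmaMatrixTheory}, $U$ is critical if and only if $C=0$, that is, $\ketbra{0}$ commutes with $U^\dagger\hat H U$. Conjugating by $U$, this is equivalent to $[\ketbra{\psi},\hat H]=0$.

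It remains to show that $[\ketbra{\psi},\hat H]=0$ holds exactly when $\ket{\psi}$ is an eigenvector of $\hat H$. Apply the commutator to $\ket{\psi}$: $\hat H\ket{\psi}=\ketbra{\psi}\hat H\ket{\psi}=\bra{\psi}\hat H\ket{\psi}\ket{\psi}$, so $\ket{\psi}$ is an eigenvector. This is the same manipulation as in the proof of \cref{prop:criticalpointsGr1}. Conversely, if $\hat H\ket{\psi}=\lambda\ket{\psi}$, then Hermiticity gives $\bra{\psi}\hat H=\lambda\bra{\psi}$, and both products $\hat H\ketbra{\psi}$ and $\ketbra{\psi}\hat H$ equal $\lambda\ketbra{\psi}$.

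The only delicate step is the trace and identity bookkeeping. The tangent directions at $U$ cover only the traceless part, and that gap is closed by the vanishing trace of the commutator $C$. Everything else is routine algebra.
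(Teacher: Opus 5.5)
Your proof is correct and follows essentially the paper's route: compute $df$ along $Ue^{tX}$ (the paper uses $e^{itX}U$), reduce criticality to the vanishing of a trace pairing against all traceless Hermitian matrices, extend this to all Hermitian matrices, and apply \cref{LemmaMatrixTheory}. The difference is only presentational. Your matrix, conjugated back by $U$, is $[|\psi\rangle\langle\psi|,\hat H]$, which is exactly the paper's $A=(\overline{\alpha_j}\alpha_i(E_j-E_i))_{ij}$ written in the eigenbasis of $\hat H$; you close the traceless gap with $\Tr(C)=0$ where the paper uses that $A$ has zero diagonal, which makes your version coordinate-free and lets it handle both implications at once.
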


\begin{proof}
Let us give an expression for the differential of $f$ at any point $U$ in the direction of a tangent vector $iXU$, where $X$ is some traceless Hermitian matrix. To do so, we consider the bi-invariant metric of $\textup{SU}(n)$ and parametrize $f$ along a geodesic starting at $U$ in the direction of $iXU$, 
\[
\gamma(t) = e^{itX} U,
\]
yielding
\[
f(t) := f(\gamma(t)) = \bra{0} U^\dagger e^{-itX} \hat{H} e^{itX} U \ket{0}.
\]
Computing its derivative, we obtain
\begin{align*}
f'(t) &= -i \bra{0} U^\dagger e^{-itX}X\hat{H}e^{itX}U\ket{0} + i\bra{0} U^\dagger e^{-itX}\hat{H}Xe^{itX}U\ket{0} 
\\&= i\bra{0} U^\dagger e^{-itX}[\hat{H}, X]e^{itX}U\ket{0}.
\end{align*}
Using the notation $\ket{U} := U\ket{0}$ we can rewrite this expression as
\[
f'(t) = i\bra{U} e^{-itX}[\hat{H}, X]e^{itX}\ket{U},
\]
which allows us to conclude that
\[
df|_U(iXU) = f'(0) = i\bra{U}[\hat{H}, X]\ket{U}.
\]

Assume now that $U\ket{0}$ is an eigenvector of $\hat{H}$, i.e. $\hat{H}\ket{U} = E \ket{U}$ for some $E \in \bb{R}$. Thus, for every $iX \in \mathfrak{su}(n)$
\[
df|_U (iXU) = i\bra{U}[\hat{H}, X]\ket{U} = i\bra{U}\hat{H}X - X\hat{H}\ket{U} = iE \bra{U}X - X\ket{U} = 0. 
\]
To prove the converse, let $U \in \textup{SU}(n)$ and suppose that
\begin{equation}
\label{eq:eq1}
\bra{U}[\hat{H}, X]\ket{U} = 0,\quad \forall iX \in \mathfrak{su}(n).    
\end{equation}
We can write $\ket{U}$ in the (orthonormal) eigenbasis of $\hat{H}$; $\ket{U} = \sum_{i = 1}^n \alpha_i \ket{\varphi_i}$, where $\hat{H} \ket{\varphi_i} = E_i \ket{\varphi_i}$. With this expression, \cref{eq:eq1} becomes
\[
\left(\sum_{i = 1}^n \overline{\alpha_i} \bra{\varphi_i}\right)\hat{H}X\left(\sum_{i = 1}^n \alpha_i \ket{\varphi_i}\right) = \left(\sum_{i = 1}^n \overline{\alpha_i} \bra{\varphi_i}\right)X\hat{H}\left(\sum_{i = 1}^n \alpha_i \ket{\varphi_i}\right),
\]
which implies that
\[
\left(\sum_{i = 1}^n \overline{\alpha_i} E_i \bra{\varphi_i}\right)X\left(\sum_{i = 1}^n \alpha_i \ket{\varphi_i}\right) = \left(\sum_{i = 1}^n \overline{\alpha_i} \bra{\varphi_i}\right)X\left(\sum_{i = 1}^n \alpha_i E_i\ket{\varphi_i}\right),
\]
and so
\[
\sum_{i, j = 1}^n \overline{\alpha_i} \alpha_j E_i \bra{\varphi_i}X\ket{\varphi_j} = \sum_{i, j = 1}^n \overline{\alpha_i} \alpha_j E_j \bra{\varphi_i}X\ket{\varphi_j}.
\]

Grouping all the terms on the left-hand side we obtain
\[
\sum_{i, j= 1}^n \overline{\alpha_i} \alpha_j (E_i-E_j) \bra{\varphi_i}X\ket{\varphi_j} = 0.
\]
If we define the matrix $A = \left(\overline{\alpha_j}\alpha_i (E_j - E_i)\right)_{ij}$ we can rewrite the expression as 
\[
\sum_{i, j = 1}^n \overline{\alpha_i} \alpha_j (E_i-E_j) \bra{\varphi_i}X\ket{\varphi_j} = \sum_{i, j = 1}^n A_{ji} X_{ij} = \textup{Tr}(AX) = 0 
\]
for every $iX \in \mathfrak{su}(n)$.

In order to apply \cref{LemmaMatrixTheory}, take $Y$ to be any Hermitian matrix. Therefore, we can write $Y = \Lambda + X$ where $\Lambda$ is diagonal and $X$ is such that $iX \in \mathfrak{su}(n)$, thus 
\[
\textup{Tr}(AY) = \textup{Tr}(A\Lambda) + \textup{Tr}(AX) = \textup{Tr}(A\Lambda) = 0,
\]
where the last equality follows from the fact that $\Lambda$ is diagonal. Indeed
\[
\textup{Tr}(A\Lambda) = \sum_{i = 1}^n A_{ii}\Lambda_{ii} = \sum_{i = 1}^n |\alpha_i|^2 (E_i-E_i)\Lambda_{ii} = 0.
\]
Using \cref{LemmaMatrixTheory} we can conclude that
\[
\overline{\alpha_j}\alpha_i (E_j - E_i) = 0,\quad \forall i, j \in \{1, \dotsc, n\}.
\]
This implies that, if all of the eigenvalues are different, at most one $\alpha_i$ is non-zero, implying that $\ket{U}$ is an eigenvector of $\hat{H}$. 
In the case where some eigenvalues are equal, we can conclude that $\ket{U}$ has to be a linear combination of eigenvectors associated with the same eigenvalue, implying that $\ket{U}$ is too an eigenvector of $\hat{H}$. 
\end{proof}

In the next auxiliary result, we will show how to compute the gradient of a function defined on the product of two manifolds, when endowed with the product metric.
\begin{lemma}
\label{productrulegradient}
Let $(M_1, g_1)$ and $(M_2, g_2)$ be two Riemannian manifolds and let $M_1 \times M_2$ be the product manifold endowed with the product metric $g$. Let us define
\begin{align*}
\hat{f}: &\ M_1 \times M_2 \rightarrow \bb{R}\\
&\quad(x, y) \mapsto f_1(x)f_2(y),
\end{align*}
where $f_i: M_i \rightarrow \bb{R}$ are smooth functions. Then 
\[
\textup{grad}_{g}\, \hat{f}(x,y) = (f_2(y) \textup{grad}_{g_1}\, f_1(x),\, f_1(x) \textup{grad}_{g_2}\, f_2(y)), \quad \forall (x, y) \in M_1 \times M_2.
\]
\end{lemma}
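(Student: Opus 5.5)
The plan is to prove \cref{productrulegradient} by characterizing the gradient through its defining identity: $\textup{grad}_g\,\hat f(x,y)$ is the unique vector $w \in T_{(x,y)}(M_1\times M_2)$ with $g(w, v) = d\hat f|_{(x,y)}(v)$ for every tangent vector $v$. We therefore compute $d\hat f$ and check that the proposed vector satisfies this identity.

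The first step uses the canonical splitting $T_{(x,y)}(M_1 \times M_2) \cong T_x M_1 \oplus T_y M_2$. By definition of the product metric, $g\big((u_1,u_2),(v_1,v_2)\big) = g_1(u_1,v_1) + g_2(u_2,v_2)$. Writing $p_1, p_2$ for the projections, we have $\hat f = (f_1\circ p_1)(f_2\circ p_2)$. The ordinary product rule for differentials, together with $dp_i(v_1,v_2) = v_i$, then gives
\[
d\hat f|_{(x,y)}(v_1,v_2) = f_2(y)\, df_1|_x(v_1) + f_1(x)\, df_2|_y(v_2).
\]
Using the definition of $\textup{grad}_{g_i}$, each term can be rewritten, yielding
\[
d\hat f|_{(x,y)}(v_1,v_2) = g_1\big(f_2(y)\,\textup{grad}_{g_1} f_1(x), v_1\big) + g_2\big(f_1(x)\,\textup{grad}_{g_2} f_2(y), v_2\big).
\]

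The second step applies the product-metric formula to recognize the right-hand side as $g\big(w,(v_1,v_2)\big)$, with $w = \big(f_2(y)\,\textup{grad}_{g_1} f_1(x),\ f_1(x)\,\textup{grad}_{g_2} f_2(y)\big)$. Since $g$ is nondegenerate, $w$ is uniquely determined, so $w = \textup{grad}_g\,\hat f(x,y)$, which is the claimed formula.

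No step here is genuinely hard. The only point needing care is justifying the identification of tangent spaces: curves of the form $t\mapsto(\gamma_1(t),y)$ and $t\mapsto(x,\gamma_2(t))$ realize the two summands, and $d\hat f$ is linear, so evaluating along these curves separately suffices.
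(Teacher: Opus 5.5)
Your proof is correct and follows essentially the same route as the paper: compute $d\hat f$ with the product rule, rewrite each term using $\textup{grad}_{g_i}$, and identify the result with $g(w,\cdot)$ through the product metric. The only cosmetic difference is that the paper evaluates $d\hat f$ along a geodesic $\gamma=(\gamma_1,\gamma_2)$ with $\gamma'(0)=(U,V)$, while you use the projections $p_1,p_2$ and the chain rule; yours is slightly cleaner, since the geodesic property is never actually needed.
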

\begin{proof}
Let $(x, y) \in M_1 \times M_2$, let $W \in T_{(x,y)} (M_1 \times M_2)$ and let $\gamma(t)$ be the geodesic such that $\gamma(0) = (x, y)$ and $\gamma'(0) = W$. Writing both $W$ and $\gamma(t)$ in coordinates, we know that $W = (U, V)$ with $U \in T_x M_1$ and $V \in T_y M_2$, and $\gamma(t) = (\gamma_1(t), \gamma_2(t))$ with $\gamma_1(0) = x$, $\gamma_1'(0) = U$, $\gamma_2(0) = y$,  and $\gamma_2'(0) = V$. 

Using the definition of the differential of $\hat{f}$ and the product rule in $\bb{R}$ we conclude that 
\begin{align*}
d\hat{f}|_{(x,y)}(U, V) = \left.\frac{d}{dt}\right|_{t = 0} \hat{f}(\gamma(t)) &= \left.\frac{d}{dt}\right|_{t = 0} \hat{f}(\gamma_1(t), \gamma_2(t))
\\&= \left.\frac{d}{dt}\right|_{t = 0} f_1(\gamma_1(t))f_2(\gamma_2(t))\\
&= f_2(y)\left.\frac{d}{dt}\right|_{t = 0} f_1(\gamma_1(t)) + f_1(x)\left.\frac{d}{dt}\right|_{t = 0} f_2(\gamma_2(t))
\\&=f_2(y)df_1|_{x}(U) + f_1(x)df_2|_{y}(V).
\end{align*}
Therefore,
\begin{align*}
g(\textup{grad}_g\, \hat{f}, W) = d\hat{f}(W) &= f_2(y)df_1(U) + f_1(x)df_2(V) 
\\&= f_2(y)g_1(\textup{grad}_{g_1}\,f_1, U) + f_1(x)g_{2}(\textup{grad}_{g_2}\,f_2, V)
\\&= g\left((f_2(y)\textup{grad}_{g_1}\,f_1, f_1(x)\textup{grad}_{g_2}\,f_2), (U, V)\right),
\end{align*}
and the claim follows.     
\end{proof}

With the above results in mind, we will now study the critical points of the function $F$ defined in \cref{eq:isingfunction}. In particular, we will prove the following result.
\begin{proposition}
\label{thm:descriptioncriticalpointsIsing}
Let $F$ be defined as in \cref{eq:isingfunction}. Then  every critical point of $F$ corresponds to an eigenvector of $H$. 
\end{proposition}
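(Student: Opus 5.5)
The plan is to reduce the multi-site critical point condition to a family of single-site conditions, one per factor of $\textup{SU}(2)^{\times n^2}$. Each single-site condition is then handled by \cref{CriticalPointsOneGate}. The specific choice of magnetic fields in \cref{eq:definitionmagneticfield} is what forces every site to be in a computational basis state.

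\textbf{Step 1 (splitting the gradient).} Since $\textup{SU}(2)^{\times n^2}$ carries the product metric, the gradient of $F$ at $(U_1,\dotsc,U_{n^2})$ is the tuple of partial gradients. Here the $i$-th component is the gradient of $U_i \mapsto F(U_1,\dotsc,U_i,\dotsc,U_{n^2})$ with the other $U_J$ frozen. This is the same computation as in \cref{productrulegradient}, applied term by term: $F$ is a sum of products of single-site functions, because $H$ is a sum of tensor products of $\sigma_z$'s and identities. So $(U_1,\dotsc,U_{n^2})$ is critical if and only if, for every $i$, $U_i$ is a critical point of
\[
f_i(U) := \bra{0}U^\dagger H_i U\ket{0}.
\]
Here $H_i$ is the $2\times 2$ Hermitian matrix obtained by taking the partial expectation of $H$ in the states $\ket{\psi_J} := U_J\ket{0}$ for $J\neq i$. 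Explicitly,
\[
H_i = -\Big(h_i + \sum_{J \sim i} m_J\Big)\sigma_z + c_i \mathds{1}, \qquad m_J := \bra{\psi_J}\sigma_z\ket{\psi_J} \in [-1,1],
\]
where $c_i$ is a real constant and $J\sim i$ runs over the lattice neighbours of $i$. By \cref{CriticalPointsOneGate} with $\hat H = H_i$ and $n=2$, criticality is equivalent to $\ket{\psi_i}$ being an eigenvector of $H_i$ for every $i$.

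\textbf{Step 2 (the key estimate, and the main obstacle).} The only difficulty is the degenerate case $h_i + \sum_{J\sim i} m_J = 0$. In that case $H_i$ is a multiple of the identity and every $\ket{\psi_i}$ would be an eigenvector, including non-basis states. To exclude this, I would use that site $i$ has $\deg(i)$ neighbours, with $\deg(i)$ equal to $4$, $3$ or $2$ for interior, edge and corner sites respectively. By \cref{eq:definitionmagneticfield},
\[
h_i = \deg(i) + \varepsilon_i > \deg(i) \geq \Big|\sum_{J\sim i} m_J\Big|.
\]
Hence the coefficient of $\sigma_z$ in $H_i$ is strictly negative, so $H_i$ has the two distinct eigenvectors $\ket{0}$ and $\ket{1}$. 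Therefore each $\ket{\psi_i}$ equals $\ket{0}$ or $\ket{1}$ up to a phase. The ferromagnetic, open-boundary geometry of the lattice is exactly what makes this degree count work.

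\textbf{Step 3 (conclusion).} By Step 2, $U_1\ket{0}\otimes\dotsm\otimes U_{n^2}\ket{0}$ is, up to a global phase, a computational basis vector $\ket{s_1\dotsm s_{n^2}}$. Since $H$ is diagonal in the computational basis, this vector is an eigenvector of $H$, which proves the claim. For completeness, the converse also follows from Step 1: if the product state is a computational basis vector, then each $\ket{\psi_i}$ is an eigenvector of the diagonal matrix $H_i$.
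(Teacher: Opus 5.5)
Your proof is correct and follows the paper's route. Both arguments split the gradient site by site using the product metric, observe that the $i$-th component is $-\bigl(h_i+\sum_{J\sim i} m_J\bigr)$ times the gradient of the single-site $\sigma_z$ expectation, and then apply \cref{CriticalPointsOneGate} once this prefactor is shown to be nonzero. The one difference is that you get nonvanishing directly from $h_i>\deg(i)\geq\bigl|\sum_{J\sim i}m_J\bigr|$ at every site at once, whereas the paper spreads the same inequality over a corners--edges--interior contradiction argument whose inductive hypotheses are not actually needed.
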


Let us first introduce some notation; using $(x,y)$ coordinates on the square lattice $V$ we can express $F$ explicitly as
\begin{align*}
F(U_{(1,1)}, \dotsc, U_{(n,n)}) &= -\sum_{\langle (i_1, i_2), (j_1, j_2)\rangle \in \Lambda} \bra{0}U_{(i_1, i_2)}^\dagger \sigma_z U_{(i_1, i_2)}\ket{0}\bra{0}U_{(j_1, j_2)}^\dagger \sigma_z U_{(j_1, j_2)}\ket{0} 
\\&\quad- \sum_{(i_1, i_2) \in V}h_{(i_1, i_2)} \bra{0}U_{(i_1, i_2)}^\dagger \sigma_z U_{(i_1, i_2)}\ket{0},
\end{align*}
where $h_{(i_1, i_2)}$ is as defined in \cref{eq:definitionmagneticfield}. To simplify the notation, we will denote
\[
f_{(i, j)} := \bra{0}U_{(i, j)}^\dagger \sigma_z U_{(i, j)}\ket{0},
\]
and we will denote the gradient as $\nabla$ for simplicity. This way, we can rewrite $F$ as 
\[
F = -\sum_{\langle (i_1, i_2), (j_1, j_2)\rangle \in \Lambda} f_{(i_1, i_2)}f_{(j_1, j_2)}- \sum_{(i_1, i_2) \in V}h_{(i_1, i_2)} f_{(i_1, i_2)}.
\]

Note that, using \cref{productrulegradient}, we know that the gradient of $F$ with respect to $U_{(i, j)}$ is 
\begin{align}
\label{localgradient}
\begin{split}
\nabla_{(i, j)}F &:= -\left(f_{(i-1, j)}+f_{(i+1, j)}\right.
\\&\quad\quad\left. +f_{(i, j-1)}+f_{(i, j+1)}+h_{(i, j)}\right)\nabla f_{(i,j)},   
\end{split}
\end{align}
where some of the summands may be omitted (i.e. when $i-1=0, j-1=0, i+1>n$ or $j+1>n$). 

Also recall from \cref{CriticalPointsOneGate}, that $\nabla f_{(i, j)}$ only vanishes when $f_{(i, j)} = \pm 1$, that is, when $U_{i, j}\ket{0}$ is---up to a phase---$\ket{0}$ or $\ket{1}$, respectively.

We are interested in characterizing the critical points of $F$, i.e, when
\[
\nabla_{(i, j)} F = 0\quad \forall i,j \in \{1, \dotsc, n\}.
\]
First of all, note that when $\ket{U}$ is an eigenvector of $H$, then it holds that $\nabla f_{(i, j)} = 0$ for every $(i, j) \in V$ and so it is a critical point of $F$. 

\begin{proof}[Proof of \cref{thm:descriptioncriticalpointsIsing}]
Assume that we are working on an $n \times n$ lattice, with $n \geq 2$. We will prove that every critical point corresponds to an eigenvector of $H$. To do so, we will consider a critical point and assume that at least one of the $\nabla f_{(i, j)}$ is non-zero. This will be done in three steps:

\begin{enumerate}
    \item We will first assume that $\nabla f_{(1,1)}$ is non-zero, and reach a contradiction. Therefore, by the symmetry of the problem we can conclude that every critical point must have 
    \[
    \nabla f_{(1, 1)} = \nabla f_{(1, n)} = \nabla f_{(n, 1)} = \nabla f_{(n, n)} = 0.
    \]
    \item For the next step, assuming that $\nabla f_{(1,1)} = \nabla f_{(1, n)} = 0$ we will prove that if $\nabla f_{(1, k)} \neq 0$ then we cannot be at a critical point.
    
    Again, by the symmetry of the problem, for the third step we may assume that $\nabla f_{(1,i)} = \nabla f_{(i, 1)} = 0$ for every $i = 1, \dotsc, n$. 
\item For the last step of the proof, we will assume that $\nabla f_{(k, l)} \neq 0$ but $\nabla f_{(k-1, l)} = \nabla f_{(k, l-1)} = 0$ for some $(k, l)$ with $k \in \{2, \dots, n-1\}, l \in \{2,\dotsc,  n-1\}$ and prove that we cannot be at a critical point.  
\end{enumerate}

\paragraph{First step: corners.}

Assume that $\nabla f_{(1,1)} \neq 0$. Therefore, by \cref{localgradient}
\[
f_{(1,2)} + f_{(2,1)} + \varepsilon_{(1,1)} + 2 = 0,
\]
which cannot happen, as $-1 \leq f_{(i, j)} \leq 1$ for every $(i, j) \in V$. Therefore, we may now assume that the gradients $\nabla f_{(i, j)}$ of the corners vanish, and continue with the second step of our proof.

\paragraph{Second step: first row.}

Assume that, for some $1 < k < n$, $\nabla f_{(1,k)} \neq 0$, and $\nabla f_{(1, k-1)} = 0$. This implies that 
\begin{equation*}
f_{(1,k-1)} + f_{(1,k+1)} + f_{(2,k)} + 3 + \varepsilon_{(1, k)} = 0.
\end{equation*}
Since $\nabla f_{(1,k-1)} = 0$ by assumption, it has to be the case that $f_{(1, k-1)} = \pm1$, and so
\begin{equation*}
f_{(1,k+1)} + f_{(2,k)} + 2 + \varepsilon_{(1,k)} = 0,\quad \textup{or}\quad f_{(1,k+1)} + f_{(2,k)} + 4 + \varepsilon_{(1,k)} = 0,
\end{equation*}
which in either case is a contradiction.

Again, by the symmetry of the problem, we can assume that all of the $f_{(i,j)}$ laying on the \textit{outer layer} of our lattice have vanishing gradients. 

\paragraph{Third step: inner spins.}
It only remains to prove the case when $\nabla f_{(i, j)} \neq 0$ for some $1 < i, j < n$, and 
\[
\nabla f_{(i-1, j)} = \nabla f_{(i, j-1)} = 0.
\]
In this case 
\[
f_{(i-1, j)} + f_{(i, j-1)} + f_{(i+1, j)} + f_{(i, j+1)} + 4 + \varepsilon_{(i, j)} = 0,
\]
which cannot happen, as 
\[
f_{(i-1, j)} + f_{(i, j-1)} \in \{-2, 0, 2\}.
\] 
\end{proof}

\begin{remark}
Note that the proof of \cref{thm:descriptioncriticalpointsIsing} only relies on the magnetic field being greater than the interaction degree at each site. For this reason, by considering a suitable magnetic field for each spin, the proof can be easily adapted to hold for any dimension and any interaction graph. 
\end{remark}

\cref{thm:descriptioncriticalpointsIsing} allows us to conclude that the set of critical points of $F$ can be written as
\[
\{ (U_1, \dotsc, U_{n^2}) \in \textup{SU}(2)^{\times n^2} : (U_1 \otimes \cdots \otimes U_{n^2})\ket{0}^{\otimes n^2} = e^{i\theta}\ket{x} : x \in \{0, 1\}^{n^2}, \theta \in [0, 2\pi)\}.
\]
Therefore, if we consider the Riemannian submersion with totally geodesic fibers $\pi$ from $\textup{SU}(2)^{\times n^2}$ to $(\bb{S}^2)^{\times n^2}$, and consider $\tilde F$, the unique function on $(\bb{S}^2)^{\times n^2}$ such that $F = \tilde F \circ \pi$, we can conclude that $\tilde F$ has exactly $2^{n^2}$ critical points, which correspond---via a slight abuse of notation where $\ket{\psi} \equiv e^{i\theta}\ket{\psi}$ for every vector $\ket{\psi}$ and any $\theta \in [0, 2\pi)$---to the points in $(\bb{S}^2)^{\times n^2}$ given by
\[
\{\ket{x} : x \in \{0, 1\}^{n^2}\}.
\]
Thus, we can lower bound the distance between any two critical points of $\tilde F$. 
\begin{lemma}
\label{lem:distancecriticalpoints}
Let $\tilde F$ be the unique function on $((\bb{S}^2)^{\times n^2}, \frac{1}{2}g_{\mathit{round}})$ such that $F = \tilde F \circ \pi$. Then for any two distinct critical points $x, y$ of $\tilde F$, it holds that 
\[
d_{\frac{1}{2}g_{\mathit{round}}}(x, y) \geq \frac{\sqrt{2}\pi}{2}.
\]
\end{lemma}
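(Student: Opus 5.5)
The plan is to reduce the claim to a single factor of the product and then compute a distance on the Bloch sphere. The product metric on $(\bb{S}^2)^{\times n^2}$ is a Riemannian product, so its geodesic distance satisfies
\[
d(x,y)^2 = \sum_{I} d_{\frac{1}{2}g_{\mathit{round}}}(x_I, y_I)^2,
\]
where $d$ denotes the product distance. This holds because geodesics of a product are products of geodesics, so minimizing curves decompose componentwise. In particular, $d(x,y) \geq d(x_I, y_I)$ for every single factor $I$. The proof therefore only needs a lower bound on the distance within one factor.

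Two distinct critical points $x = \ket{x}$ and $y = \ket{y}$, with $x, y \in \{0,1\}^{n^2}$, differ in at least one site $I$. At that site one point projects to $\pi(\ket{0})$ and the other to $\pi(\ket{1})$. Under the identification of \cref{prop:isomSUS} and the Hopf fibration, $\pi(\ket{0})$ and $\pi(\ket{1})$ are antipodal points of $\bb{S}^2$, namely the north and south poles of the Bloch sphere. This can be checked via the Bloch vector map $\ket{\psi} \mapsto (\bra{\psi}\sigma_x\ket{\psi}, \bra{\psi}\sigma_y\ket{\psi}, \bra{\psi}\sigma_z\ket{\psi})$, which sends $\ket{0}$ and $\ket{1}$ to $(0,0,\pm1)$ and realizes the Hopf map up to scaling.

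The distance between antipodal points on the unit round sphere is $\pi$. Scaling the metric by $\tfrac12$ multiplies lengths by $\tfrac{1}{\sqrt2}$, so the distance becomes $\pi/\sqrt{2} = \tfrac{\sqrt2\pi}{2}$, and the claim follows.

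The main obstacle is the normalization. One has to verify that the base metric really is $\frac12 g_{\mathit{round}}$ on the unit sphere, and that $\ket{0}$ and $\ket{1}$ go to antipodal points rather than to points at some other angle. As a cross-check I would compute the distance directly upstairs. Since $\pi$ has totally geodesic fibers, the base distance equals the minimal distance between the fibers $\ket{0}\textup{U}(1)$ and $\ket{1}\textup{U}(1)$ in $(\bb{S}^3, 2g_{\mathit{round}})$. On the unit $\bb{S}^3 \subset \bb{C}^2$ these fibers are the orthogonal circles $\{(e^{i\theta},0)\}$ and $\{(0,e^{i\phi})\}$, at round distance $\pi/2$. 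Scaling by $\sqrt{2}$ gives $\tfrac{\sqrt2\pi}{2}$, which agrees with the first computation.
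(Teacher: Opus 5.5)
Your proposal is correct and follows the paper's argument: you pick a site where the two bit strings differ, note that $\ket{0}$ and $\ket{1}$ project to antipodal points of the Bloch sphere, and use that antipodal points of $(\bb{S}^2,\tfrac12 g_{\mathit{round}})$ are at distance $\pi/\sqrt2$, while also making explicit the product-distance step $d(x,y)\ge d(x_I,y_I)$ that the paper leaves implicit (it is \cref{diameterProduct}). Your upstairs cross-check is valid too, but the reason is not totally geodesic fibers: the base distance equals the distance between fibers for any Riemannian submersion with complete total space.
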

\begin{proof}
For any two distinct eigenvectors $\ket{x} = \ket{x_1, \dotsc, x_{n^2}}$, $\ket{x'} = \ket{x'_1, \dotsc, x'_{n^2}}$ of $H$ there exists some $j \in \{1, \dotsc, n^2\}$ such that $\{x_j, x'_j \}= \{0,1\}$. Since any point in $\bb{S}^2$ can be written as
\[
\cos\frac{\theta}{2} \ket{0} + \sin \frac{\theta}{2} e^{i \varphi} \ket{1},
\]
it is clear that the points in the Bloch sphere $\bb{S}^2$ corresponding to $x_j$ and $x'_j$ are antipodal. Therefore, the distance between any two distinct critical points, $\ket{x}$ and $\ket{x'}$, when considering the round metric in $\bb{S}^2$ scaled by $\frac{1}{2}$, is lower bounded by $\frac{\sqrt{2}}{2}\pi$. Indeed, the distance between two antipodal points in $(\bb{S}^2,g_{\mathit{round}})$ is $\pi$, and when scaling the metric by $\frac{1}{2}$, the distances get scaled by a factor of $\frac{1}{\sqrt{2}}$. 
\end{proof}

Lastly, let us study the Hessian of $F$ at the critical points. 
\begin{lemma}
\label{lem:escapingeigenvalueIsing}
Let $F$ be as defined in \cref{eq:isingfunction}, and let $\tilde F$ be its \textit{projected version} on $((\bb{S}^2)^{\times n^2}, h)$. Then $\tilde F$ has a unique minimum, a unique maximum and the rest of the critical points are saddle points. Moreover, for every saddle point $y$ of $\tilde F$, $\lambda_{\min}\big(\nabla^2 \tilde F(y)\big) \leq -\lambda_* < 0$, where
\[
\lambda_* := \min\{1, \min_{(i, j) \in V} \varepsilon_{(i, j)}\}.
\]
\end{lemma}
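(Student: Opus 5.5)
Since $\tilde F$ depends only on the Bloch vectors, I will work directly on $((\bb{S}^2)^{\times n^2}, h)$, with $h$ the product of copies of $\frac12 g_{\mathit{round}}$. Write a point as $(p_I)_{I\in V}$ with $p_I\in\bb{S}^2$, and set $z_I := \langle p_I, e_3\rangle$ (the $z$-coordinate), so that $f_I = z_I$ and
\[
\tilde F = -\sum_{\langle I,J\rangle\in\Lambda} z_I z_J - \sum_{I\in V} h_I z_I .
\]
By \cref{thm:descriptioncriticalpointsIsing} and the discussion following it, the critical points are exactly the $2^{n^2}$ configurations with every $z_I=\pm1$, i.e. every $p_I$ is a pole. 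At such a point each $z_I$ is individually critical, so the Hessian of a product $z_Iz_J$ equals $z_J\nabla^2 z_I + z_I\nabla^2 z_J$, since the cross terms contain $\nabla z_I\otimes\nabla z_J=0$. The Hessian is therefore block diagonal over sites:
\[
\nabla^2\tilde F(x) = \bigoplus_{I\in V} -\Big(\sum_{J\sim I} z_J + h_I\Big)\nabla^2 z_I(p_I).
\]
On the unit round sphere, $\nabla^2 z = -z\,g_{\mathit{round}}$ at the poles. For the metric $\frac12 g_{\mathit{round}}$ the Hessian tensor is unchanged but the eigenvalues relative to $h$ get multiplied by $2$. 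Hence the block at site $I$ is $2 s_I\big(\sum_{J\sim I} z_J + h_I\big)\,\mathrm{Id}_2$, where $s_I := z_I\in\{\pm1\}$.

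Let $c_I := \sum_{J\sim I} s_J + h_I$. Since $h_I = \deg(I) + \varepsilon_I$ (this is what \cref{eq:definitionmagneticfield} encodes: degree $4,3,2$ for interior, edge and corner sites), and $\sum_{J\sim I} s_J\ge -\deg(I)$, we get $c_I\ge \varepsilon_I>0$ at every critical point. So each block has sign $s_I$ and absolute value $2c_I\ge 2\varepsilon_I$. This gives the classification:
\begin{itemize}[label={}]
\item the configuration with all $s_I=+1$ is the unique point where the Hessian is positive definite; since the global minimum of $\tilde F$ on a compact manifold is a critical point with positive semidefinite Hessian, it must be this point, and it is therefore the unique minimum;
\item symmetrically, the configuration with all $s_I=-1$ is the unique maximum (the Hessian is negative definite);
\item every other configuration has blocks of both signs, so it is a nondegenerate saddle.
\end{itemize}

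For the escape bound, take a saddle $y$. It has some site with $s_I=-1$, and the corresponding block has eigenvalue $-2c_I\le -2\varepsilon_I\le -\min_V\varepsilon \le -\lambda_*$. The definition $\lambda_*=\min\{1,\min\varepsilon\}$ then gives $\lambda_*\in(0,1]$. The maximum also satisfies this bound, which is consistent with the paper's convention of treating maxima as saddles.

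The main obstacle I expect is justifying the Hessian formula rigorously and getting the metric factor right. Both can be settled by computing explicitly along a great circle through a pole: with $z=\cos\theta$ and $h$-arclength $t=\theta/\sqrt2$, one gets $\frac{d^2}{dt^2}\cos(\sqrt2 t)\big|_{0}=-2$. I also need $\nabla^2 z_I=-z_I h$ in the $h$-normalized sense, so that $c_I>0$ translates into the correct sign. The lower bound $c_I\ge\varepsilon_I$ is what uses $h_I>\deg(I)$, exactly as in the proof of \cref{thm:descriptioncriticalpointsIsing}.
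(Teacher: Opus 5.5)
Your proof is correct, but it takes a different route from the paper's. The paper stays upstairs on $\textup{SU}(2)^{\times n^2}$. It differentiates $F$ twice along the geodesics $t\mapsto(e^{it\tilde X_1}U_1,\dotsc,e^{it\tilde X_{n^2}}U_{n^2})$. At a product eigenvector $\ket{U}=\ket{l_1}\otimes\dotsm\otimes\ket{l_{n^2}}$ of $H$ it finds that the Hessian quadratic form equals $2\sum_i (E_{\text{flip}_i}-E)\,|\alpha^{(i)}_{\bar l_i l_i}|^2$. So the Hessian is diagonal, with entries given by single-spin-flip energy gaps of $H$. Escape directions come from a flip $\ket{1}\to\ket{0}$ that lowers the energy, and the statement for $\tilde F$ is inherited via \cref{prop:equalityHessians}.

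You instead work downstairs in Bloch coordinates, using $\nabla^2 z=-z\,g_{\mathit{round}}=-2z\,h$ and the product rule at points where every $dz_I$ vanishes. Your block eigenvalue $2s_Ic_I$ is exactly the flip gap $E_{\text{flip}_I}-E$, so the two computations agree.

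What your version buys:
\begin{itemize}
\item It computes $\nabla^2\tilde F$ directly, so there is no lift or horizontality bookkeeping. The normalization relative to $h$ is explicit, whereas the paper leaves implicit that a unit horizontal vector has $|\alpha|^2=\tfrac12$.
\item It gives the explicit constant $-2\min_{I\in V}\varepsilon_I$ and treats the maximum explicitly.
\item It simultaneously yields $\lambda_{\min}(\nabla^2\tilde F(x^*))\geq 2\min_{I\in V}\varepsilon_I\geq\lambda_*$ at the minimum. The paper has to prove this separately in \cref{lem:hessiandefpos}, by showing that the kernel of $\nabla^2F$ is vertical.
\end{itemize}

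The paper's version, in turn, uses only the operator form of $F$ and the bi-invariant metric upstairs. It therefore needs neither Bloch coordinates nor the explicit quotient metric. For your route, the single great-circle check you propose is enough to pin down $h=\tfrac12 g_{\mathit{round}}$. Left $\textup{SU}(2)$-translations descend to rotations of the Bloch sphere, so the quotient metric is rotation-invariant and determined by one normalization.
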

\begin{proof}
Let us parametrize $F$ along a geodesic 
\[
\gamma(t) = (e^{it\tilde X_1}U_1, \dotsc, e^{it\tilde X_{n^2}}U_{n^2}).
\]
Taking derivatives, we may conclude that
\begin{align*}
\nabla^2 F|_{\overrightarrow{U}}(i\overrightarrow{XU},i\overrightarrow{XU}) &= - \sum_{i, j = 1}^{n^2} \bra{U} X_i X_j H \ket{U} - \sum_{i, j = 1}^{n^2} \bra{U} H X_i X_j \ket{U}
\\&\quad + 2\sum_{i, j = 1}^{n^2} \bra{U} X_i H X_j \ket{U},
\end{align*}
where we denote by $X_i$ the tangent vector $\tilde X_i$ tensorized with the identity, i.e. writing each $\tilde X_i$ in the eigenbasis $\{\ket{0}, \ket{1}\}$,
\[
X_i = \mathds{1} \otimes \dotsm \otimes \mathds{1} \otimes \left(\sum_{k,l = 0}^1 \alpha_{k, l}^{(j)} \ket{k}\bra{l}\right)\otimes \mathds{1} \otimes \dotsm \otimes \mathds{1}. 
\]
We also denote 
\[
\overrightarrow{U} := (U_1, \dotsc, U_{n^2}),\ \ket{U} := U_1 \otimes \dotsm \otimes  U_{n^2}\ket{0}^{\otimes n^2},
\]
and 
\[
\overrightarrow{XU} := (X_1X_2 \dotsm X_{n^2}) (U_1 \otimes U_2 \otimes \dotsm \otimes U_{n^2}),
\]
for simplicity. 

If $\ket{U}$ is an eigenvector of $H$ with eigenvalue $E$, then
\begin{equation}
\label{eqHessianMultiGate}
\left.\nabla^2 F\right|_{\overrightarrow{U}}(i\overrightarrow{XU},i\overrightarrow{XU}) = 2\left(\sum_{i, j = 1}^{n^2} \bra{U} X_i H X_j \ket{U} - E\sum_{i, j = 1}^{n^2} \bra{U} X_i X_j \ket{U}\right).
\end{equation}
Moreover, writing $\ket{U}$ as 
\[
\ket{U} = \ket{l_1} \otimes \ket{l_2} \otimes \dotsm \otimes \ket{l_{n^2}},
\]
where $l_i \in \{0, 1\}$ for every $i \in \{1, \dotsc, n^2\}$, we can conclude that
\begin{align*}
X_j \ket{U} &= \ket{l_1} \otimes \overset{j)}{\dotsm} \otimes \left(\sum_{k = 0}^1 \alpha_{k, l_j}^{(j)} \ket{k}\right)\otimes \dotsm \otimes \ket{l_{n^2}}
\\&= \sum_{k = 0}^1 \alpha_{k, l_j}^{(j)} \ket{l_1} \otimes \overset{j)}{\dotsm} \otimes  \ket{k}\otimes \dotsm \otimes \ket{l_{n^2}}.
\end{align*}
If $i \neq j$ we can write the second term on the right-hand side of \cref{eqHessianMultiGate} as
\begin{align*}
\bra{U}X_i X_j \ket{U} &=  \left(\sum_{k = 0}^1 \overline{\alpha}_{k, l_i}^{(i)} \bra{l_1} \otimes \overset{i)}{\dotsm} \otimes  \bra{k}\otimes \dotsm \otimes \bra{l_{n^2}}\right)
\\&\quad \times \left(\sum_{k = 0}^1 \alpha_{k, l_j}^{(j)} \ket{l_1} \otimes \overset{j)}{\dotsm} \otimes  \ket{k}\otimes \dotsm \otimes \ket{l_{n^2}}\right)
\\&= \left(\sum_{k = 0}^1 \overline{\alpha}_{k, l_i}^{(i)}\bra{k}\right)\ket{l_i}\bra{l_j} \left(\sum_{k = 0}^1 \alpha_{k, l_j}^{(j)} \ket{k}\right)
\\&= \overline{\alpha}^{(i)}_{l_i, l_i}\alpha^{(j)}_{l_j, l_j}.
\end{align*}
If $i = j$ we can write
\[
\bra{U} X_i^2 \ket{U} = \sum_{k = 0}^1 |\alpha^{(i)}_{k, l_i}|^2.
\]

On the other hand, if $i \neq j$ we can write the first term on the right-hand side of \cref{eqHessianMultiGate} as
\begin{align*}
\bra{U}X_i H X_j\ket{U}&= \bra{U} X_i H \left(\sum_{k = 0}^1 \alpha_{k, l_j}^{(j)} \ket{l_1} \otimes \overset{j)}{\dotsm} \otimes  \ket{k}\otimes \dotsm \otimes \ket{l_{n^2}}\right)
\\&= \bra{U} X_i \left(\sum_{k = 0}^1 \alpha_{k, l_j}^{(j)}E_{l_1, \overset{j)}{\dotsc}, k, \dotsc, l_{n^2}} \ket{l_1} \otimes \overset{j)}{\dotsm} \otimes  \ket{k}\otimes \dotsm \otimes \ket{l_{n^2}}\right)
\\&= \overline{\alpha}^{(i)}_{l_i, l_i}\alpha_{l_j, l_j}^{(j)}E_{l_1, \dotsc, l_j, \dotsc, l_{n^2}},
\end{align*}
where $E_{l_1, \dotsc, l_{n^2}}$ denotes the eigenvalue of $H$ associated with the eigenstate $\ket{l_1, \dotsc, l_{n^2}}$. 

If $i = j$ then 
\begin{align*}
\bra{U}X_i H X_i\ket{U}&= \bra{U} X_i H \left(\sum_{k = 0}^1 \alpha_{k, l_i}^{(i)} \ket{l_1} \otimes \overset{i)}{\dotsm} \otimes  \ket{k}\otimes \dotsm \otimes \ket{l_{n^2}}\right)
\\&= \bra{U} X_i \left(\sum_{k = 0}^1 \alpha_{k, l_i}^{(i)}E_{l_1, \overset{i)}{\dotsc}, k, \dotsc, l_{n^2}} \ket{l_1} \otimes \overset{i)}{\dotsm} \otimes  \ket{k}\otimes \dotsm \otimes \ket{l_{n^2}}\right)
\\&= \sum_{k = 0}^1 |\alpha^{(i)}_{k, l_i}|^2 E_{l_1, \overset{i)}{\dotsc}, k, \dotsc, l_{n^2}}.
\end{align*}

Substituting these expressions into \cref{eqHessianMultiGate} we obtain
\begin{align}
\label{expressionhessiano}
\begin{split}
\frac{1}{2}\left.\nabla^2 F\right|_{\overrightarrow{U}}(i\overrightarrow{XU},i\overrightarrow{XU}) &= \sum_{i, j = 1}^{n^2} \bra{U} X_i H X_j \ket{U} - E\sum_{i, j = 1}^{n^2} \bra{U} X_i X_j \ket{U}
\\&= \sum_{\substack{i, j = 1\\i \neq j}}^{n^2} \left(\overline{\alpha}^{(i)}_{l_i, l_i}\alpha_{l_j, l_j}^{(j)}E_{l_1, \dotsc, l_j, \dotsc, l_{n^2}} - E \overline{\alpha}^{(i)}_{l_i, l_i}\alpha^{(j)}_{l_j, l_j}\right) 
\\&\quad + \sum_{i = 1}^{n^2} \left(\sum_{k = 0}^1 |\alpha^{(i)}_{k, l_i}|^2 E_{l_1, \dotsc, k, \dotsc, l_{n^2}} - E \sum_{k = 0}^1 |\alpha^{(i)}_{k, l_i}|^2\right)
\\&= \sum_{i = 1}^{n^2} \sum_{k = 0}^1 \left(E_{l_1, \overset{i)}{\dotsc}, k, \dotsc, l_{n^2}} - E\right) |\alpha^{(i)}_{k, l_i}|^2
\\&= \sum_{i = 1}^{n^2} \left(E_{l_1, \overset{i)}{\dotsc}, l_i + 1 \textup{ mod } 2, \dotsc, l_{n^2}} - E\right) |\alpha^{(i)}_{l_i + 1 \textup{ mod } 2, l_i}|^2.
\end{split}
\end{align}
This way, if $E$ corresponds to the lowest (resp. highest) eigenvalue of $H$, the Hessian of $F$ is positive-semidefinite (resp. negative-semidefinite).

Observe that in \cref{expressionhessiano} $E_{l_1, \overset{i)}{\dotsc}, (l_i + 1 \mod 2), \dotsc, l_{n^2}}$ corresponds to the eigenvalue obtained when flipping a single spin---in the $i$-th position. Thus, if the eigenvector is not that with minimal eigenvalue of $H$ we can flip at least one spin from $\ket{1}$ to $\ket{0}$, in which case the associated eigenvalue decreases at least by $\min_{I \in V}\varepsilon_I$. 
\end{proof}

Lastly, we would like to see that the Hessian of $\tilde F$ at the global minimum is positive definite. To do this, we will see which directions $\overrightarrow{XU}$ lie in the kernel of $\nabla^2 F$.
\begin{proposition}
\label{lem:hessiandefpos}
The Hessian of $\tilde F$ at its global minimum is positive definite. 
\end{proposition}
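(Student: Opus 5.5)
The plan is to use the explicit Hessian formula \cref{expressionhessiano} at the global minimum and reduce positive definiteness of $\nabla^2\tilde F$ to a statement about which directions lie in the kernel of $\nabla^2 F$. At the global minimum $\overrightarrow{U}^*$, $\ket{U^*}$ is the ground state of $H$. Since all $h_I>0$ and the coupling is ferromagnetic, this ground state is $\ket{0}^{\otimes n^2}$ up to phases, so $l_i=0$ for all $i$. By \cref{expressionhessiano},
\[
\tfrac{1}{2}\nabla^2 F|_{\overrightarrow{U}^*}(i\overrightarrow{XU},i\overrightarrow{XU})=\sum_{i=1}^{n^2}\big(E_{0,\dots,\overset{i)}{1},\dots,0}-E\big)\,|\alpha^{(i)}_{1,0}|^2 .
\]
Each single-spin flip from the all-$\ket{0}$ state raises the energy by $2\big(h_i+\deg(i)\big)\geq 2\varepsilon_i>0$, using the definition \cref{eq:definitionmagneticfield} of $h_I$. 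Hence the Hessian vanishes on a direction exactly when $\alpha^{(i)}_{1,0}=0$ for every $i$, that is, when every $\tilde X_i$ is diagonal in $\{\ket{0},\ket{1}\}$. Traceless anti-Hermitian diagonal generators are multiples of $\mathrm{diag}(i,-i)$, so the kernel of $\nabla^2F(\overrightarrow{U}^*)$ is spanned by the directions $\mathrm{diag}(i\theta_j,-i\theta_j)$ acting in each factor.

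The second step identifies this kernel with the vertical space of $\pi$. Right multiplication by $\mathrm{diag}(e^{i\theta},e^{-i\theta})$, as in \cref{eq:actiononSU2}, corresponds to the vertical direction $U_j\,\mathrm{diag}(i,-i)$. At the minimum we may take $U_j=\mathds{1}$, by choosing the representative in the fiber, and then left and right multiplication coincide, so the kernel directions are exactly $\ker d\pi$. If one prefers not to fix the representative, note that $e^{itX}U\ket{0}$ with $X$ diagonal only changes the phase of $U\ket{0}$ when $U\ket{0}=\ket{0}$. This means such $X$ moves along the fiber, since $F$ is constant along it, and a dimension count (one vertical direction per factor) finishes the identification.

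Finally, we invoke \cref{prop:projectedHessian}: at a critical point, the nonzero eigenvalues of $\nabla^2F$ coincide with those of $\nabla^2\tilde F$, and more precisely $\nabla^2\tilde F(\pi(x))[d\pi v,d\pi v]=\nabla^2F(x)[v^{\mathcal H},v^{\mathcal H}]$. Because the kernel equals the vertical space, $\nabla^2 F$ is positive definite on the horizontal space, with smallest eigenvalue at least a constant times $\min_I \varepsilon_I$ (after accounting for the metric normalisation $2g_{\mathit{round}}$ versus the $|\alpha|^2$ coordinates). Therefore $\nabla^2\tilde F$ is positive definite at its minimum.

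The main obstacle is the careful identification in the second step. The Hessian formula is written for left-multiplied directions $e^{itX}U$, whereas the fiber action is right multiplication, and one must check that the non-kernel directions are genuinely horizontal rather than mixing vertical components. This amounts to checking that $\ket{0}$ is fixed up to phase, so that the diagonal $\tilde X_i$ span precisely $d\pi$'s kernel, with matching dimension $n^2$.
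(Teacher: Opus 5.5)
Your proposal is correct and follows essentially the same route as the paper: evaluate \cref{expressionhessiano} at the all-$\ket{0}$ ground state, note that the null directions are exactly the diagonal traceless generators $\mathrm{diag}(a,-a)$, identify these with the vertical directions of the $\textup{U}(1)$ action, and conclude via \cref{prop:equalityHessians} (which is what your ``more precisely'' identity is). Your extra care about left- versus right-multiplied directions is a detail the paper passes over silently, and it causes no trouble: every point in the fiber over the minimum has diagonal $U_j$, so the two notions coincide. It does not change the argument.
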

\begin{proof}
We want to conclude that the kernel of $\nabla^2 F$ at its global minimum is included in the vertical tangent space of $\textup{SU}(2)^{\times n^2}$ with respect to the submersion $\pi: \textup{SU}(2)^{\times n^2} \to (\bb{S}^2)^{\times n^2}$. This way, using \cref{prop:equalityHessians}, we will conclude that $\tilde F$ has a positive-definite Hessian at the global minimum. 

Let us use the expression for the Hessian shown in \cref{expressionhessiano} and let us denote by $E$ the smallest (non-degenerate) eigenvalue of $H$, which corresponds to the eigenvector $\ket{0}^{\otimes n^2}$. Then  at any $U$ for which $U\ket{0}^{\otimes n^2} = e^{i\theta} \ket{0}^{\otimes n^2}$, it holds that
\[
\frac{1}{2}\left.\nabla^2 F\right|_{\overrightarrow{U}}(i\overrightarrow{XU},i\overrightarrow{XU}) = \sum_{i = 1}^{n^2} \left(E_{0, \overset{i)}{\dotsc}, 1, \dotsc, 0} - E\right) |\alpha^{(i)}_{1, 0}|^2,
\]
where $E_{0, \overset{i)}{\dotsc}, 1, \dotsc, 0}$ denotes the eigenvalue associated with $|0, \overset{i)}{\dotsc}, 1, \dotsc, 0\rangle$ and the terms $\alpha^{(i)}_{j,k}$ are the coordinates of the $i$-th component of the vector $\overrightarrow{X}$.

Since $E$ is the unique lowest eigenvalue of $H$, the only directions $i\overrightarrow{XU}$ that lie in the kernel of $\left.\nabla^2 F\right|_{\overrightarrow{U}}$ correspond to those such that $\alpha^{(i)}_{k, li} = 0$ for every $i$ and every $k \neq l_i$. Otherwise, we would have a non-zero term in the sum: $E_{l_1, \overset{i)}{\dotsc}, k, \dotsc, l_{n^2}} - E > 0$. 

For this reason, every $X_j$ is of the form
\[
X_j = \mathds{1} \otimes \dotsm \otimes \mathds{1} \otimes \left(\alpha_{0, 0}^{(j)} \ketbra{0} - \alpha_{0, 0}^{(j)} \ketbra{1}\right)\otimes \mathds{1} \otimes \dotsm \otimes \mathds{1},
\]
and so it can be written as 
\[
X_j = \mathds{1} \otimes \dotsm \otimes \mathds{1} \otimes \begin{pmatrix}
a & 0\\
0 & -a
\end{pmatrix}\otimes \mathds{1} \otimes \dotsm \otimes \mathds{1},
\]
for some $a$. These directions are vertical with respect to the action shown in \cref{eq:actiononSU2}, and so the Hessian of $\tilde F$ at its global minimum is positive definite.
\end{proof}

This result, along with the calculations presented in the proof of \cref{lem:escapingeigenvalueIsing} allow us to guarantee that at the global minimum $x^*$ of $\tilde F$, 
\[
\lambda_{\min} (\nabla^2 \tilde F(x^*)) \geq \lambda_*,
\]
with the same constant $\lambda_*$ as in \cref{lem:escapingeigenvalueIsing}.

\subsubsection{Rapid mixing and suboptimality for the Gibbs distributions}
Let us conclude this section by applying \cref{thm:MainFormal1,thm:MainFormal2,cor:rapidmixing} to our setting. In particular, we will study the function $F$ defined on the manifold $\textup{SU}(2)^{\times n^2}$ endowed with the metric $g$ which corresponds to the product of the bi-invariant metric of $\textup{SU}(2)$ and its projection $\tilde F$ on $(\bb{S}^2)^{\times n^2}$. 

\begin{proposition}
\label{prop:mainthmising1}
Let $F: \textup{SU}(2)^{\times n^2} \to \bb{R}$ be defined as in \cref{eq:isingfunction} and let $\tilde F: (\bb{S}^2)^{\times n^2} \to \bb{R}$ be the unique function such that $F = \tilde F \circ \pi$, where $\pi$ is the Riemannian submersion
\[
\pi:(\textup{SU}(2)^{\times n^2}, g) \to ((\bb{S}^2)^{\times n^2}, h),
\]
$g$ corresponds to the product of the bi-invariant metric of $\textup{SU}(2)$, and $h$ corresponds to the product of the metric $\frac{1}{2}g_{\mathit{round}}$, where $g_{\mathit{round}}$ denotes the usual round metric on the sphere. Let $A_2 \geq 1$ be the Lipschitz constant of the gradient of $F$ and $\tilde F$. For every $\varepsilon \in (0, 1]$ and $\delta \in (0, 1)$, if $\beta$ is such that 
\[
\beta \geq \frac{2}{\varepsilon}\bigg(\frac{1}{2} + 3^2n^2 \log (3n^2) + 3 n^4 \log(2^{5/2} \pi^2) + 3n^2 \log \frac{A_2\sqrt{2\pi}}{\delta\varepsilon}\bigg),
\]
the Gibbs distribution $\nu_F(x) = \frac{1}{Z_F} e^{-\beta F(x)}$ on $\textup{SU}(2)^{\times n^2}$ is such that
\[
\nu_F\Big(F - \min_{y \in \textup{SU}(2)^{\times n^2}} F(y) \geq \varepsilon\Big) \leq \delta.
\]
Now, if we define 
\[
\varepsilon_{\max} := \min \Big\{\frac{\pi^2 A_2}{16}, 1\Big\},
\]
then, for every $\varepsilon \in (0, \varepsilon_{\max}]$ and $\delta \in (0, 1)$, if $\beta$ satisfies 
\[
\beta \geq \frac{2}{\varepsilon}\bigg(\frac{1}{2} + 6n^2 \log (2n^2) + 2n^4 \log \frac{\pi^{3/2}}{\Gamma(3/2)} + 2n^2 \log \frac{A_2\sqrt{2\pi}}{\delta\varepsilon}\bigg),
\]
the Gibbs distribution $\nu_{\tilde F}(x) = \frac{1}{Z_{\tilde F}} e^{-\beta \tilde F(x)}$ on $(\bb{S}^{2})^{\times n^2}$ is such that
\[
\nu_{\tilde F}\Big(\tilde F - \min_{y \in (\bb{S}^{2})^{\times n^2}} \tilde F(y) \geq \varepsilon\Big) \leq \delta.
\]
\end{proposition}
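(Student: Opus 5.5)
The plan is to apply \cref{thm:MainFormal2} twice, once on $M=\textup{SU}(2)^{\times n^2}$ with the product bi-invariant metric and once on $(\bb{S}^2)^{\times n^2}$ with the product of $\frac{1}{2}g_{\mathit{round}}$. In each case the work reduces to bounding three geometric quantities: the dimension $d$, the volume $\textup{Vol}^*$, and the injectivity radius $i$. These feed into the threshold $\varepsilon_{\max}$ and into the lower bound on $\beta$, exactly as was done for the trace ratio problem in \cref{prop:mainthmtrace1}. I will shift $F$ by $\min F$, which changes neither the Gibbs measure nor the event in question, and use that $A_2\geq 1$ is a common gradient-Lipschitz constant for $F$ and $\tilde F$ (\cref{prop:preservationofLipschitz}).

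For the total space I use \cref{prop:isomSUS}: each factor is isometric to $(\bb{S}^3, 2g_{\mathit{round}})$, so it has dimension $3$, volume $2^{3/2}\cdot 2\pi^2=2^{5/2}\pi^2$, and injectivity radius $\sqrt2\pi$. Products of metrics give $d=3n^2$ and $\textup{Vol}=(2^{5/2}\pi^2)^{n^2}$. The injectivity radius of a Riemannian product is the minimum of those of the factors, so $i(M)=\sqrt2\pi$ and $i(M)^2A_2/8\geq \pi^2/4\geq 1$; hence $\varepsilon_{\max}=1$, which matches the range $\varepsilon\in(0,1]$ in the statement. Substituting into the bound of \cref{thm:MainFormal2} and expanding the logarithm gives
\[
d\log\frac{d^3A_2\textup{Vol}\sqrt{2\pi}}{\delta\varepsilon}= 3n^2\cdot 3\log(3n^2)+3n^4\log(2^{5/2}\pi^2)+3n^2\log\frac{A_2\sqrt{2\pi}}{\delta\varepsilon},
\]
which is the stated condition.

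For the base I use $(\bb{S}^2,\frac12 g_{\mathit{round}})$. It has dimension $2$ and volume $\frac12\cdot4\pi=2\pi$, which I will write as $\pi^{3/2}/\Gamma(3/2)$ to match the statement; note that $\Gamma(3/2)=\sqrt\pi/2$, so this equals $2\pi$. Its injectivity radius is $\pi/\sqrt2$. Hence $d=2n^2$, $\textup{Vol}=(\pi^{3/2}/\Gamma(3/2))^{n^2}\geq 1$, and $i=\pi/\sqrt2$, so $i^2A_2/8=\pi^2A_2/16$, which gives the stated $\varepsilon_{\max}$. Expanding $d\log(d^3\cdots)=2n^2\cdot 3\log(2n^2)+2n^4\log\frac{\pi^{3/2}}{\Gamma(3/2)}+2n^2\log\frac{A_2\sqrt{2\pi}}{\delta\varepsilon}$ then gives the second condition.

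The main care point is the injectivity radius of the products: I must justify that it equals the minimum over factors, which holds because geodesics and conjugate and cut points split factorwise. I also need to confirm $\textup{Vol}\geq1$, so that $\textup{Vol}^*=\textup{Vol}$ in both cases. Apart from these two checks, everything is direct substitution into \cref{thm:MainFormal2}.
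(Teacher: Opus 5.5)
Your proposal is correct and follows the paper's proof. Both apply \cref{thm:MainFormal2} to each manifold after computing the dimension, volume and injectivity radius via \cref{prop:isomSUS} and metric scaling, and the resulting thresholds on $\beta$ and the values of $\varepsilon_{\max}$ match the statement exactly. You also fill in two points the paper uses without comment: that the injectivity radius of a Riemannian product is the minimum over the factors, and that $\textup{Vol}\geq 1$, so that $\textup{Vol}^*=\textup{Vol}$.
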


Before we present the proof of \cref{prop:mainthmising1}, note that $((\bb{S}^2)^{\times n^2}, h)$ is assumed to be endowed with the product of the round metric scaled by $\frac{1}{2}$, i.e. 
\[
((\bb{S}^2)^{\times n^2}, h) \simeq (\bb{S}^2, \frac{1}{2}g_{\mathit{round}})^{\times n^2}.
\]
In contrast, in \cref{sec:studyingmanifolds} we studied the sphere endowed with the round metric. Nevertheless, the results derived in the aforementioned section can be easily adapted to hold for the scaled sphere. Indeed, given some $n$-dimensional Riemannian manifold $(M, g)$, if we consider the same manifold $M$ endowed with the scaled metric $\lambda g$, for some $\lambda > 0$, the Ricci curvature of the scaled manifold stays the same as the original \linebreak (\cite[Theorem 7.30]{lee2018introductionRiemannian}), while the distances get scaled by $\sqrt{\lambda}$ and the volume form gets scaled by $\lambda^{n/2}$. 

\begin{proof}[Proof of \cref{prop:mainthmising1}]
To prove the suboptimality result for the Gibbs distributions $\nu_F$ and $\nu_{\tilde F}$, we need to bound the injectivity radii and the volume of $(\textup{SU}(2)^{\times n^2}, g)$ and $((\bb{S}^{2})^{\times n^2}, h)$. First, as we saw in \cref{prop:isomSUS} that 
\[
(\textup{SU}(2), g_{\mathit{bi}}) \simeq (\bb{S}^3, 2g_{\mathit{round}}),
\]
where $g_{\mathit{bi}}$ denotes the bi-invariant metric of $\textup{SU}(2)$, we know that
\[
\textup{Vol}(\textup{SU}(2)) = 2^{3/2} \textup{Vol}(\bb{S}^3) = 2^{3/2} \frac{2\pi^2}{\Gamma(2)} = 2^{5/2} \pi^2, 
\]
where we used the expression for the volume of the sphere endowed with the round metric seen in \cref{dimensionandvolume}, and so 
\[
\textup{Vol}(\textup{SU}(2)^{\times n^2}) = (2^{5/2} \pi^2)^{n^2}. 
\]
Also 
\[
\textup{Vol}\Big(\Big(\bb{S}^2, \frac{1}{2}g_{\mathit{round}}\Big)\Big) = \frac{1}{2}\textup{Vol}(\bb{S}^2) = \frac{\pi^{3/2}}{\Gamma(3/2)},
\]
and so 
\[
\textup{Vol}(((\bb{S}^2)^{\times n^2}, h))) = \Big(\frac{\pi^{3/2}}{\Gamma(3/2)}\Big)^{n^2}.
\]
For the injectivity radii, we simply use the scaling property of the distance, along with the bounds obtained in \cref{cor:injectivityradii} to conclude that
\[
i(\textup{SU}(2)^{\times n^2}, g) = i(\textup{SU}(2), g_{\mathit{bi}}) = i(\bb{S}^3, 2g_{\mathit{round}}) = \sqrt{2}i(\bb{S}^3) \geq \sqrt{2}\pi. 
\]
\[
i(((\bb{S}^2)^{\times n^2}, h)) = i(\bb{S}^2, \frac{1}{2}g_{\mathit{round}}) = \frac{\sqrt{2}}{2}i(\bb{S}^2) \geq \frac{\sqrt{2}\pi}{2}. 
\]
Substituting these bounds into \cref{thm:MainFormal2}, the suboptimality for the Gibbs distributions follows. 
\end{proof}

Now, let us restate the assumption needed for \cref{thm:MainFormal1} to hold in this case.
\begin{assumption}
\label{ass:sec8.2assumption1} 
There exists a constant $0 < C_{\tilde F} \leq 1$ such that
\[
|\Grad{h} \tilde F(x)|_h \geq C_{\tilde F} d_h(x, \tilde{\mathcal{C}}),    
\]
for every $x \in (\bb{S}^2)^{\times n^2}$, where $\tilde{\mathcal{C}}$ denotes the set of critical points of $\tilde F$. 
\end{assumption}

\begin{proposition}
\label{prop:mainthmising2}
Let $F: \textup{SU}(2)^{\times n^2} \to \bb{R}$ be defined as in \cref{eq:isingfunction} and let $\tilde F: (\bb{S}^2)^{\times n^2} \to \bb{R}$ be the unique function such that $F = \tilde F \circ \pi$, where $\pi$ is the Riemannian submersion
\[
\pi:(\textup{SU}(2)^{\times n^2}, g) \to ((\bb{S}^2)^{\times n^2}, h),
\]
$g$ corresponds to the product of the bi-invariant metric of $\textup{SU}(2)$, and $h$ corresponds to the product of the metric $\frac{1}{2}g_{\mathit{round}}$, where $g_{\mathit{round}}$ denotes the usual round metric on the sphere. Assume that $\tilde F$ satisfies \cref{ass:sec8.2assumption1}, and let
\[
\lambda_* := \min\{1, \min_{I \in V} \varepsilon_I\}, 
\]
(cf. \cref{eq:definitionmagneticfield}). Furthermore, let $a, \beta > 0$ be such that
\begin{align*}
a^2 &\geq \max\Big\{\frac{48 A_2 n^2}{C^2_{\tilde F}}, \frac{288}{\lambda_{*}}\Big\},
\\\beta &\geq \max\Bigg\{\frac{384^2 (2n^2)^5 A^2_2 A^2_3 a^6}{\lambda_{*}^2}, \frac{18a^2}{\pi^2}\Bigg\},
\end{align*}
where $A_2 \geq 1$ is the Lipschitz constant of the gradient of $F$ and $\tilde F$, and $A_3 \geq 1$ is the Lipschitz constant of the Hessian of $\tilde F$. Let $\operatorname{L}_F$ and $\operatorname{L}_{\tilde F}$ be defined as 
\[
\operatorname{L}_F := -\textup{grad}_g\, F + \frac{1}{\beta}\Delta_g,\quad \textup{and}\quad \operatorname{L}_{\tilde F} := -\textup{grad}_h\, \tilde F + \frac{1}{\beta} \Delta_h.
\]
Then  the Markov triple $(\textup{SU}(2)^{\times n^2}, \nu_F, \Gamma_g)$ associated with $\operatorname{L}_F$ satisfies an $\textup{LSI}(\alpha)$, where
\[
\frac{1}{\alpha} = 16\beta A_2 \pi^2 n^2\max \Big\{\frac{120}{\lambda_*}, 2 n^2 \beta\Big\},
\]
and the Markov triple $((\bb{S}^2)^{\times n^2}, \nu_{\tilde F}, \Gamma_h)$ associated with $\operatorname{L}_{\tilde F}$ satisfies an $\textup{LSI}(\tilde \alpha)$, where
\[
\frac{1}{\tilde \alpha} = \frac{240\beta A_2  n^2 \pi^2}{\lambda_*}.
\]
\end{proposition}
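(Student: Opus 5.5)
This is a direct application of \cref{thm:MainFormal1} to the submersion $\pi:(\textup{SU}(2)^{\times n^2},g)\to((\bb{S}^2)^{\times n^2},h)$, in the same way as \cref{prop:mainthmtrace2}. The work is to check each hypothesis and to substitute explicit geometric constants into the bounds on $a$ and $\beta$ and into \cref{eq:constantalphaM,eq:constantalphaMG}.

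\textbf{Geometric hypotheses.}
\begin{itemize}
\item \cref{assumption3.7}: the group is $\textup{U}(1)^{n^2}$, acting factorwise by \cref{eq:actiononSU2}. This action is free, smooth and isometric, and it produces the product of Hopf fibrations $(\bb{S}^3,2g_{\mathit{round}})\to(\bb{S}^2,\tfrac12 g_{\mathit{round}})$ (\cref{prop:isomSUS}). Its fibers are totally geodesic.
\item \cref{assumption3.8}: the fibers are flat tori $\textup{U}(1)^{n^2}$, so their Ricci curvature is $0\ge 0$.
\item \cref{assumptionsym}: a product of round spheres is a symmetric space.
\item \cref{assumption3.9}: both $\textup{SU}(2)^{\times n^2}$ and $(\bb{S}^2)^{\times n^2}$ are products of positively curved factors, so we may take $R_M=R_{M/G}=0$. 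For the curvature bound, a product of spheres of sectional curvature $2$ (the scaled $\bb{S}^2$) has normal-coordinate components $|R_{ijk}^l|\le 2$, so $\mathbf{K}=2$.
\end{itemize}

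\textbf{Function hypotheses.}
\begin{itemize}
\item \cref{assumption3.1}: $F$ is invariant under the action, so it descends to $\tilde F$.
\item \cref{assumption3.2}: shift $F$ by a constant; this does not affect $\nu_F$.
\item \cref{assumption3.3} and \cref{assumption3.5.1}: both follow from \cref{lem:escapingeigenvalueIsing}.
\item \cref{assumption3.5.2}: follows from \cref{lem:hessiandefpos} together with the remark after it.
\item \cref{assumption3.4}: holds with $D=\sqrt2\pi/2$ by \cref{lem:distancecriticalpoints}.
\item \cref{assumption3.6}: this is assumed in the statement as \cref{ass:sec8.2assumption1}.
\end{itemize}

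\textbf{Constants and the bounds on $a$ and $\beta$.} We have $\dim(M/G)=2n^2$. This gives $24A_2\dim(M/G)=48A_2n^2$ and $192^2(2n^2)^5\mathbf{K}^2=384^2(2n^2)^5$, matching the bounds on $a$ and $\beta$ in the statement. For the radii, in each factor $i(\bb{S}^2,\tfrac12 g_{\mathit{round}})=\pi/\sqrt2$, and the convexity radius is half of that, $\pi/(2\sqrt2)$. Products of these spaces have the same injectivity and convexity radii as their factors. Then
\[
\frac{9a^2}{D^2}=\frac{18a^2}{\pi^2},\qquad \frac{4a^2}{i^2}=\frac{8a^2}{\pi^2},\qquad \frac{a^2}{\mathit{conv}^2}=\frac{8a^2}{\pi^2}.
\]
The term $18a^2/\pi^2$ dominates the other two, which explains the single radius term in the statement. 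Since $R_{M/G}=0$, the term $4R_{M/G}/\lambda_*$ vanishes.

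\textbf{Diameters.} For a product, diameters add in quadrature.
\begin{itemize}
\item Base: each $(\bb{S}^2,\tfrac12 g_{\mathit{round}})$ has diameter $\pi/\sqrt2$, so $\textup{diam}((\bb{S}^2)^{\times n^2})^2=n^2\pi^2/2$.
\item Total space: each $(\bb{S}^3,2g_{\mathit{round}})$ has diameter $\sqrt2\pi$, so $\textup{diam}^*(M)^2=2\pi^2n^2$.
\item Fiber: each Hopf fiber is a great circle of length $2\sqrt2\pi$ and has diameter $\sqrt2\pi$, so $\textup{diam}(G)^2=2\pi^2n^2$ and $\textup{diam}(G)^2\beta/\pi^2=2n^2\beta$.
\end{itemize}

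\textbf{Log-Sobolev constants.} Substituting into \cref{eq:constantalphaM} gives $1/\alpha=4\beta A_2\cdot 2\pi^2n^2\max\{120/\lambda_*,2n^2\beta\}$. Substituting into \cref{eq:constantalphaMG} gives $1/\tilde\alpha=480\beta A_2\pi^2n^2/(2\lambda_*)=240\beta A_2n^2\pi^2/\lambda_*$, which matches the statement. The constant for $\alpha$ comes out as $8$ rather than the stated $16$. A larger stated constant is harmless, since an LSI with a smaller $\alpha$ follows from one with a larger $\alpha$. I expect the stated $16$ comes from a cruder diameter bound, for instance using $\textup{diam}(\bb{S}^3,2g_{\mathit{round}})\le 2\pi$ rather than $\sqrt2\pi$.

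\textbf{Main obstacle.} The hard part is the bookkeeping of the geometric constants, in particular justifying $\mathbf{K}=2$ in normal coordinates for the product and getting the diameter and radius scalings right. If a bound fails to line up exactly, I would replace it by a weaker estimate; that only enlarges the required $\beta$ or $1/\alpha$, and the stated inequalities still follow.
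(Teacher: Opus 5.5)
Your proposal is correct and follows the paper's proof: it applies \cref{thm:MainFormal1} directly, with the same choices $\mathbf{K}=2$, $D=\sqrt{2}\pi/2$, $R_M=R_{M/G}=0$, and the same radius and diameter substitutions. The one discrepancy, your $8$ versus the stated $16$ in $1/\alpha$, has exactly the cause you guessed: the paper uses the cruder bound $\textup{diam}(\textup{SU}(2),g_{\mathit{bi}})\le 2\pi$ from \cref{diamSUn} instead of the exact value $\sqrt{2}\pi$ given by \cref{prop:isomSUS}, and your sharper LSI implies the stated one.
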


\begin{proof}
In order to obtain the log-Sobolev inequalities, recall from \cref{thm:descriptioncriticalpointsIsing} and \cref{lem:distancecriticalpoints} that the critical points of $\tilde F$ are isolated and separated by a distance of at least $\frac{\sqrt{2}\pi}{2}$. Moreover, we proved in 
\cref{lem:escapingeigenvalueIsing,lem:hessiandefpos} that \cref{assumption3.5.1,assumption3.5.2} hold with $\lambda_*$ given by
\[
\min\{1, \min_{I \in V} \varepsilon_I\}.
\]

The Ricci curvatures of both $(\textup{SU}(2)^{\times n^2}, g)$ and $((\bb{S}^2)^{\times n^2}, h)$ are non-negative (cf. \cref{curvProduct,LemaCurvaturaSUn,sectionalcurvatureofsphere}). As in the proof of \cref{prop:mainthmtrace2}, we can use \cite[Proposition 31, Section 11]{OneilSemi} and \cref{ineqliebracket} to conclude that the coefficients of the Riemann curvature tensor of $((\mathbb{S}^2)^{\times n^2}, h)$ in normal coordinates when endowed with the product of scaled round metrics are bounded by $2$.

Also, it holds that the Riemannian submersion $\pi : (\textup{SU}(2)^{\times n^2}, g) \to ((\bb{S}^2)^{\times n^2}, h)$ has totally geodesic fibers, which means by \cref{CurvaturaSubmersion,vanishingT} that the sectional curvature of the fibers coincides with that of the total space. Since $(\textup{SU}(2)^{\times n^2}, g)$ has non-negative sectional curvature by \cref{LemaCurvaturaUn,curvatureproduct}, we conclude that the sectional curvature of the fibers is also non-negative and so their Ricci curvature is non-negative as well. 

Now, the result follows by substituting the manifold-dependent constants in the statement of \cref{thm:MainFormal1,} with the known bounds found in \cref{SectionExamples}. In particular, the dimensions of $\textup{SU}(2)^{\times n^2}$ and $(\bb{S}^2)^{\times n^2}$ are known to be $3n^2$ and $2n^2$, respectively. Also, using \cref{diameterProduct}, \cref{diamSUn} and the scaling property of the distances, we know that 
\begin{align*}
\textup{diam}(((\bb{S}^2)^{\times n^2}, h)) &= n\, \textup{diam}((\bb{S}^2, \frac{1}{2}g_{\mathit{round}})) = n\frac{\sqrt{2}\pi}{2},\\
\textup{diam}((\textup{SU}(2)^{\times n^2}, g)) &= n\, \textup{diam}((\textup{SU}(2), g_{\mathit{bi}})) \leq 2\pi n.
\end{align*}
Lastly, the diameter of the fibers with the induced metric can be bounded using \cref{diameterProduct} and \cref{prop:isomSUS}. Indeed, for every $x \in ((\bb{S}^2)^{\times n^2}, h)$
\[
\textup{diam}((\pi^{-1}(x), g|_{\pi^{-1}(x)})) \leq n\, \textup{diam}((\bb{S}^3, 2g_{\mathit{round}})) \leq \sqrt{2}\pi n,
\]
and by \cref{cor:injectivityradii} and the scaling of the metric, the convexity radius of $((\bb{S}^2)^{\times n^2}, h)$ is lower bounded by
\[
\mathit{conv}(((\bb{S}^2)^{\times n^2}, h)) = \mathit{conv}((\bb{S}^2, \frac{1}{2}g_{\mathit{round}})) = \frac{\sqrt{2}}{2}\mathit{conv}((\bb{S}^2, g_{\mathit{round}})) \geq \frac{\pi\sqrt{2}}{4},
\]
so the result follows.
\end{proof}

Lastly, let us restate \cref{cor:rapidmixing}.

\begin{corollary}
Let $X_t$ and $\tilde X_t$ be the Langevin diffusion processes generated by $\operatorname{L}_F$ and $\operatorname{L}_{\tilde F}$, with initial uniform distribution on $\textup{SU}(2)^{\times n^2}$ and $(\bb{S}^2)^{\times n^2}$, respectively. Under the assumptions of \cref{prop:mainthmising2}, they converge exponentially fast to their associated Gibbs distributions $\nu_F$, and $\nu_{\tilde F}$ i.e. 
\begin{align*}
\norm{\nu_F - \rho_t}^2_{\textup{TV}} &\leq  \beta\, e^{-2\alpha t} (\max_{y \in \textup{SU}(2)^{\times n^2}} F(y) - \min_{y \in \textup{SU}(2)^{\times n^2}} F(y)),\\
\norm{\nu_{\tilde F} - \tilde \rho_t}^2_{\textup{TV}} &\leq  \beta\, e^{-2\tilde\alpha t} (\max_{y \in (\bb{S}^2)^{\times n^2}} \tilde F(y) - \min_{y \in (\bb{S}^2)^{\times n^2}} \tilde F(y))
\\&= \beta\, e^{-2\tilde\alpha t} (\max_{y \in \textup{SU}(2)^{\times n^2}} F(y) - \min_{y \in \textup{SU}(2)^{\times n^2}} F(y)),
\end{align*}
for every $t \geq 0$, where $\rho_t$ and $\tilde \rho_t$ denote the distribution of $X_t$ and $\tilde X_t$, and $\alpha$ and $\tilde \alpha$ denote the log-Sobolev inequality constants of $(\textup{SU}(2)^{\times n^2}, \nu_F, \Gamma_g)$, and $((\bb{S}^2)^{\times n^2}, \nu_{\tilde F}, \Gamma_h)$. 
\end{corollary}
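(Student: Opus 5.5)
This corollary follows by combining \cref{prop:mainthmising2} with \cref{thm:LSIImpliesHypercontract,BoundInitialRelativeEntropy}, in the same way \cref{cor:rapidmixing} was derived from \cref{thm:MainFormal1}.

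First, under the assumptions of \cref{prop:mainthmising2}, the Markov triples $(\textup{SU}(2)^{\times n^2}, \nu_F, \Gamma_g)$ and $((\bb{S}^2)^{\times n^2}, \nu_{\tilde F}, \Gamma_h)$ satisfy $\textup{LSI}(\alpha)$ and $\textup{LSI}(\tilde\alpha)$, respectively. By \cref{thm:LSIImpliesHypercontract},
\[
\norm{\rho_t - \nu_F}^2_{\textup{TV}} \leq \frac{1}{2}e^{-2\alpha t}H(\rho_0|\nu_F), \qquad \norm{\tilde\rho_t - \nu_{\tilde F}}^2_{\textup{TV}} \leq \frac{1}{2}e^{-2\tilde\alpha t}H(\tilde\rho_0|\nu_{\tilde F}).
\]

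Next, I bound the initial entropies. \cref{BoundInitialRelativeEntropy} assumes $\min F \ge 0$, which does not hold for the Ising energy. Replacing $F$ by $F - \min F$ changes neither the Gibbs measure, the generator, nor $\Gamma$, and the shifted function is nonnegative. Applying \cref{BoundInitialRelativeEntropy} to it with the uniform initial distribution gives
\[
H(\rho_0|\nu_F) \le \beta(\max F - \min F),
\]
and likewise $H(\tilde\rho_0|\nu_{\tilde F}) \le \beta(\max\tilde F - \min\tilde F)$. Since we only need an upper bound, the factor $\frac12$ can be dropped.

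Finally, $\pi$ is surjective and $F = \tilde F\circ\pi$, so $\tilde F$ and $F$ have the same range. Hence $\max\tilde F - \min\tilde F = \max F - \min F$, which gives the stated equality. There is no genuine obstacle; the only point requiring care is the normalization $\min F = 0$ implicit in \cref{BoundInitialRelativeEntropy}, and the shift invariance of $\nu_F$ handles it.
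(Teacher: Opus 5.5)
Your proposal is correct and follows the paper's own route: the paper gets this corollary by restating \cref{cor:rapidmixing} with the log-Sobolev constants of \cref{prop:mainthmising2}, which amounts to combining \cref{thm:LSIImpliesHypercontract} (with the factor $\frac{1}{2}$ dropped) and \cref{BoundInitialRelativeEntropy}. Your remark that shifting $F$ by $-\min F$ leaves $\nu_F$, the generator and $\Gamma$ unchanged is precisely the implicit justification for the $\max F - \min F$ prefactor. Surjectivity of $\pi$ likewise gives the final equality of ranges.
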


%% file: Chapters/Geometry.tex
\section{Background in Riemannian geometry}
\label{SectionExamples}

This section gathers most of the results from Riemannian geometry that are needed in this work. In particular, we will give a general introduction to Lie groups, and analyze the curvature of the unitary and special unitary groups. We will also introduce Riemannian submersions, and obtain results regarding the connection between the curvature of their domain and their image, focusing on the particular case of Stiefel and Grassmann manifolds. We will conclude the section with a study of fiber-invariant functions in the context of Riemannian submersions, and a set of auxiliary results regarding some properties about the sphere and the aforementioned manifolds, that are used throughout the text. 

\subsection{Lie groups}
\label{sec:SecLieGroups}
We begin by introducing some of the most elementary definitions and properties of Lie groups. Lie groups can be both studied from the point of view of algebra and differential/Riemannian geometry. We will focus on the latter approach---see \cite{fulton2013representation} for the algebraic viewpoint.

\begin{definition}[Lie group]
A \textup{Lie group} is a group $G$ that can be endowed with a smooth (real) manifold structure in such a way that both the group operation and the inverse map are smooth. 
\end{definition}

In this work, we will restrict our attention to a specific kind of Lie groups, namely \textit{linear} Lie groups, which are closed subsets of $\textup{GL}(n)$. In fact, the only Lie groups that we consider in this text are the unitary group $\textup{U}(n)$ and special unitary group $\textup{SU}(n)$. As we mentioned earlier, this section is only intended to be a short summary of the main definitions and results used in \cref{sec:proof,sec:traceratio}. For this reason, we have decided to omit many proofs. For a more general and detailed introduction to Lie groups, see for example \cite{gallier2020differential}.

To each Lie group $G$ we can associate a real vector space, known as its \textit{Lie algebra} $\mathfrak{g}$, which in the case of linear Lie groups is given by
\[
\mathfrak{g} := \{X \in \textup{M}_n(\bb{C}) : e^{tX} \in G \textup{ \textit{for all} } t \in \bb{R}\}.
\]

\begin{proposition}
\label{prop:liealgebracident}
For the Lie groups $\textup{U}(n)$ and $\textup{SU}(n)$, their associated Lie algebras $\mathfrak{u}(n)$ and $\mathfrak{su}(n)$ are given by the following sets: 
\[
\mathfrak{u}(n) = \{ X \in \textup{M}_n(\bb{C}) : X^\dagger = - X\},
\]
and
\[
\mathfrak{su}(n) = \{ X \in \textup{M}_n(\bb{C}) : X^\dagger = - X,\ \Tr(X) = 0\}. 
\]
\end{proposition}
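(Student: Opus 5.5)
We prove the two inclusions separately. By the definition of the Lie algebra of a linear Lie group, $X \in \mathfrak{g}$ if and only if $e^{tX} \in G$ for every $t \in \bb{R}$.

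\emph{Case of $\mathfrak{u}(n)$.} Suppose first that $X^\dagger = -X$. Since $(e^{tX})^\dagger = e^{tX^\dagger}$, we get
\[
(e^{tX})^\dagger e^{tX} = e^{-tX}e^{tX} = \mathds{1},
\]
where the last step uses that $-tX$ and $tX$ commute. Hence $e^{tX} \in \textup{U}(n)$ for all $t$.

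Conversely, suppose $e^{tX} \in \textup{U}(n)$ for all $t$. Then the identity
\[
e^{tX^\dagger}e^{tX} = \mathds{1}
\]
holds for all $t \in \bb{R}$. Both sides are smooth in $t$, so we may differentiate at $t=0$. The left-hand side gives $X^\dagger + X$ and the right-hand side gives $0$, so $X^\dagger = -X$.

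\emph{Case of $\mathfrak{su}(n)$.} Here we also need the determinant condition, and the key tool is Jacobi's formula
\[
\det(e^{tX}) = e^{t\Tr(X)}.
\]
This follows by triangularizing $X$ (Schur decomposition), since the diagonal entries of $e^{tX}$ are then the exponentials of $t$ times the eigenvalues of $X$.

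If $X^\dagger=-X$ and $\Tr X = 0$, then $e^{tX}$ is unitary by the first case, and its determinant is $e^{0}=1$. Hence $e^{tX} \in \textup{SU}(n)$.

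Conversely, if $e^{tX}\in\textup{SU}(n)$ for all $t$, the first case gives $X^\dagger=-X$. Moreover $e^{t\Tr X}=1$ for all $t$, and differentiating at $t=0$ gives $\Tr X = 0$.

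The only nontrivial ingredient is the formula $\det e^{X}=e^{\Tr X}$, which is standard. The rest is a routine differentiation at $t=0$.
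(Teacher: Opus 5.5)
Your proof is correct and complete. Both inclusions for $\mathfrak{u}(n)$ follow from $(e^{tX})^\dagger = e^{tX^\dagger}$ and differentiation at $t=0$, and you handle the determinant condition for $\mathfrak{su}(n)$ correctly via $\det e^{tX} = e^{t\Tr X}$. The paper itself gives no proof here: it records this as a standard fact in the background appendix, with proofs deliberately omitted and deferred to the Lie group literature, and your argument is the standard one. The only quibble is terminological: the identity $\det e^{A}=e^{\Tr A}$ is usually described as a consequence of Jacobi's formula rather than Jacobi's formula itself.
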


The vector space $\mathfrak{g}$ is called a Lie algebra because it is endowed with a bilinear map, known as the Lie bracket. 
\begin{definition}[Lie bracket]
Given a Lie algebra $\mathfrak{g}$, a \textup{Lie bracket} is a bilinear map
\[
[\cdot, \cdot] : \mathfrak{g} \times \mathfrak{g} \to \mathfrak{g},
\]
satisfying
\begin{itemize}
    \item $[u, u] = 0$ for every $u \in \mathfrak{g}$.
    \item $[u, [v, w]] + [w, [u, v]] + [v, [w,u ]] = 0$, for every $u, v, w \in \mathfrak{g}$. This is known as the \textup{Jacobi identity}, and together with the previous properties, implies that $[u, v] = -[v, u]$ for every $u, v \in \mathfrak{g}$. 
\end{itemize}
\end{definition}

In the case of linear Lie groups, the Lie bracket associated with their Lie algebra is given by the matrix commutator operator, i.e. $[u, v] := uv - vu$, for every $u, v \in \mathfrak{g}$. 

Bear in mind that the notation $[\cdot, \cdot]$ is also used for the Lie bracket of vector fields on a general manifold $M$, not necessarily a Lie group. Both uses of the same notation are instances of the same structure; indeed, the space of vector fields on $M$ forms a Lie algebra---the algebra of the diffeomorphism group of $M$---endowed with the Lie bracket of vector fields. Nevertheless, the two can be distinguished by the context: while the Lie bracket of $\mathfrak{g}$ is a map defined on $\mathfrak{g} \times \mathfrak{g}$, the Lie bracket of vector fields is defined on $\mathfrak{X}(M) \times \mathfrak{X}(M)$. The Lie bracket of vector fields is also known as the Lie derivative. 

One essential property of Lie groups is that their Lie algebra $\mathfrak{g}$ can be identified with the tangent space at the identity, $\mathfrak{g} \equiv T_e G$ (cf. \cite[Theorem 4.8]{gallier2020differential}). Furthermore, for each element $p \in G$, 
\[
T_p G = p\mathfrak{g} = \mathfrak{g}p.
\]

Lie groups also give us a very convenient way to define vector fields on them, given a tangent vector at some fixed point. Indeed, given a Lie group $G$ and a vector $v \in \mathfrak{g}$, we can define the following vector fields $v^L, v^R \in \mathfrak{X}(G)$ in the following way: 
\[
v^L(a) = av,\quad v^R(a) = va,\quad \forall a \in G.
\]
Such vector fields are known as \textit{left} and \textit{right-invariant} vector fields, respectively. 

Invariant vector fields have various useful properties; one of which appears in the following proposition (cf. \cite[Section 19.2]{gallier2020differential}). 
\begin{proposition}
\label{propinvvectorfields1}
The Lie derivative of two left-invariant (resp. right-invariant) vector fields is again left-invariant (resp. right-invariant).
\end{proposition}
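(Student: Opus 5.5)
The plan is to prove the statement by a direct computation, exploiting the fact that every Lie group considered here is a linear Lie group, i.e.\ a closed, hence embedded, submanifold of $\textup{M}_n(\bb{C}) \simeq \bb{R}^{2n^2}$. Fix $v, w \in \mathfrak{g}$. The formulas $X(a) := av$ and $Y(a) := aw$ make sense for every $a \in \textup{M}_n(\bb{C})$, so they define smooth (linear) vector fields on the ambient vector space. Their restrictions to $G$ are exactly $v^L$ and $w^L$, since $T_aG = a\mathfrak{g}$ and therefore these fields are tangent to $G$.

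The first step is the ambient computation. On a vector space the Lie bracket of vector fields is
\[
[X, Y](a) = DY(a)[X(a)] - DX(a)[Y(a)].
\]
Since $Y$ is linear in $a$, we have $DY(a)[h] = hw$, and similarly $DX(a)[h] = hv$. Hence
\[
[X, Y](a) = avw - awv = a[v, w],
\]
where $[v,w] = vw - wv \in \mathfrak{g}$ is the matrix commutator, which lies in $\mathfrak{g}$ because $\mathfrak{g}$ is closed under the bracket.

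The second step transfers this to $G$. I would use the standard fact that if two ambient vector fields are tangent to an embedded submanifold $G$, then their bracket is tangent to $G$ and its restriction to $G$ equals the bracket of the restricted fields. Equivalently, the inclusion $\iota : G \hookrightarrow \textup{M}_n(\bb{C})$ makes $v^L$ $\iota$-related to $X$ and $w^L$ $\iota$-related to $Y$, and the Lie bracket is natural with respect to $F$-related vector fields. This gives $[v^L, w^L] = ([v,w])^L$, which is left-invariant.

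For right-invariant fields the same argument applies with $X(a) = va$ and $Y(a) = wa$. Here $DY(a)[h] = wh$ and $DX(a)[h] = vh$, so
\[
[X, Y](a) = wva - vwa = -[v,w]\,a = (-[v,w])^R(a),
\]
which is again right-invariant.

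I expect the only genuinely delicate point to be the restriction step, namely the naturality of the bracket under the inclusion. If one prefers to avoid the linear structure entirely, there is an intrinsic alternative. Each left translation $L_g$ is a diffeomorphism satisfying $dL_g(v^L(a)) = gav = v^L(ga)$, so $v^L$ is $L_g$-related to itself. Naturality of the bracket under diffeomorphisms then gives $dL_g([v^L,w^L](a)) = [v^L,w^L](ga)$, which is precisely the definition of left-invariance. The right-invariant case follows in the same way using right translations.
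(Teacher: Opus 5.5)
Your proof is correct. The paper does not prove this statement itself; it defers to \cite{gallier2020differential}. The standard argument there is essentially your ``intrinsic alternative'': $v^L$ is $L_g$-related to itself, and naturality of the bracket under the diffeomorphism $L_g$ gives $dL_g\big([v^L,w^L](a)\big)=[v^L,w^L](ga)$. To match the paper's definition of left-invariant fields as fields of the form $v^L$, add the one-line remark that any field $Z$ with this property satisfies $Z(a)=aZ(e)$ with $Z(e)\in T_eG=\mathfrak g$.

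Your main route is genuinely different. You extend $v^L, w^L$ to linear vector fields on $\textup{M}_n(\bb{C})$, compute the bracket there, and restrict via naturality under the inclusion. This requires more machinery: the matrix structure, embeddedness of $G$, and the restriction lemma. It also only makes sense for linear groups, whereas the translation argument works for any Lie group.

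In exchange, your route yields the bracket explicitly: $[v^L,w^L]=[v,w]^L$ and $[v^R,w^R]=-[v,w]^R$. So it proves \cref{propinvvectorfields2} at the same time, with signs matching the paper's conventions. It also gives $[v,w]\in\mathfrak g$ for free, since the restricted bracket is tangent to $G$, so you need not assume closedness of $\mathfrak g$ under the commutator separately.

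One small imprecision: a closed subgroup of $\textup{GL}(n)$ is embedded in $\textup{M}_n(\bb{C})$, because $\textup{GL}(n)$ is open, but it need not be closed there. Only embeddedness is used, and for $\textup{U}(n)$ and $\textup{SU}(n)$ compactness settles it anyway.
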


In fact, the Lie derivative of two invariant vector fields is related to the Lie bracket of the Lie algebra as follows:
\begin{proposition}[{\cite[Proposition 19.11]{gallier2020differential}}]
\label{propinvvectorfields2}
Let $G$ be a linear Lie group with unit $e$, and let $u, v \in \mathfrak{g}$. Then
\[
[u^L, v^L](e) = uv - vu
\]
and
\[
[u^R, v^R](e) = vu - uv.
\]
\end{proposition}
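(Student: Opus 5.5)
We identify $T_eG$ with $\mathfrak{g}$ and use the definition of the Lie bracket of vector fields as a derivation: for $f$ a smooth function on $G$, $[X,Y]f = X(Yf) - Y(Xf)$. Since $G$ is a linear Lie group, it is a closed submanifold of $\textup{GL}(n)\subset \textup{M}_n(\bb{C})$. Every vector field $u^L$ or $u^R$ is the restriction to $G$ of a linear (hence smooth) vector field on the ambient space $\textup{M}_n(\bb{C})$: $a\mapsto au$, respectively $a\mapsto ua$. Moreover, the bracket of two vector fields tangent to a submanifold equals the restriction of the bracket of any extensions. So it suffices to compute the bracket of the linear vector fields $X(a)=au$, $Y(a)=av$ (and their right analogues) on the vector space $\textup{M}_n(\bb{C})$, viewed as a real vector space.

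On a vector space we use the coordinate formula $[X,Y](a) = DY(a)[X(a)] - DX(a)[Y(a)]$, where $D$ denotes the ordinary derivative. The fields are linear, so $DY(a)[w] = wv$ and $DX(a)[w]=wu$. This gives
\[
[u^L,v^L](a) = (au)v - (av)u = a(uv-vu).
\]
At $a=e$ this is $uv-vu$. In passing this also shows that $[u^L,v^L] = (uv-vu)^L$, consistent with \cref{propinvvectorfields1}.

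For the right-invariant fields, $X(a)=ua$ and $Y(a)=va$, so $DY(a)[w]=vw$ and $DX(a)[w]=uw$. Hence
\[
[u^R,v^R](a) = v(ua) - u(va) = (vu-uv)a,
\]
which at $e$ equals $vu-uv$.

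The only delicate points are two conventions. The first is the sign convention for the bracket: with $[X,Y]f=X(Yf)-Y(Xf)$ one gets $[X,Y]=DY\cdot X - DX\cdot Y$. I would verify this directly by applying both sides to a coordinate function $f(a)=\ell(a)$ with $\ell$ linear, where $Yf = \ell\circ Y$. The second is the justification that computing in the ambient space is legitimate. For that I would invoke the standard fact that tangent vector fields on an embedded submanifold are related via the inclusion to their extensions, and that the bracket is natural under related fields. No other obstacle is expected.
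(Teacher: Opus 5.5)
Your proof is correct. The paper gives no argument of its own for this statement; it only cites the reference, so there is no in-paper proof to compare against.

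On the details: the formula $[X,Y](a)=DY(a)[X(a)]-DX(a)[Y(a)]$ is the right one for the convention $[X,Y]f=X(Yf)-Y(Xf)$. Checking it on linear functions is enough, because a tangent vector on a vector space is determined by its values on them. The computations giving $a(uv-vu)$ and $(vu-uv)a$ are right. Naturality of the bracket under the inclusion $\iota\colon G\hookrightarrow \textup{M}_n(\bb{C})$ then legitimately carries the ambient result back to $T_eG\cong\mathfrak{g}$, since that identification is itself made through $d\iota_e$.

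The usual textbook route is intrinsic rather than extrinsic. The flow of $u^L$ is $a\mapsto ae^{tu}$, so
\[
[u^L,v^L](e)=\frac{d}{dt}\Big|_{t=0}\, e^{tu}ve^{-tu}.
\]
That is, the bracket of left-invariant fields is the differential of the adjoint representation. The right-invariant case works the same way with the flow $a\mapsto e^{tu}a$.

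The two approaches buy different things:
\begin{itemize}
\item The flow argument makes sense for abstract Lie groups and explains why the answer is $\textup{ad}(u)(v)$. It does require knowing the flows and the Lie-derivative characterization of the bracket.
\item Your argument uses the matrix embedding to reduce everything to a one-line computation. As you noted, it also gives the global identities $[u^L,v^L]=(uv-vu)^L$ and $[u^R,v^R]=(vu-uv)^R$ at no extra cost. This recovers the invariance statement of the preceding proposition, at least for linear groups.
\end{itemize}
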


Since Lie groups are manifolds, they can be endowed with a Riemannian metric. In particular, some Lie groups can be endowed with what is known as a \textit{bi-invariant} metric.

\begin{definition}[Bi-invariant metric]
Given a linear Lie group $G$ endowed with a Riemannian metric $g$, we say that $g$ is \textup{bi-invariant} if, for every $a, b \in G$ and every $X, Y \in T_b G$
\[
g(X, Y) = g(aX, aY),
\]
and
\[
g(X, Y) = g(Xa, Ya).
\]
\end{definition}
Note that, in the above definition, if $X, Y \in T_b G$, then $aX, aY \in T_{ab} G$ and $Xa, Ya \in T_{ba} G$. 

\begin{remark}
\label{metrica}
We can define a bi-invariant metric on $G = \textup{U}(n)$ or $\textup{SU}(n)$ as follows; for every $X, Y \in \mathfrak{g}$, let
\[
g(X, Y) = \Tr(Y^\dagger X).
\]
Now, we can extend the metric to $TG$; let $pX, pY \in T_p G$, with $X, Y \in \mathfrak{g}$, we define
\[
g(pX, pY) := \Tr(Y^\dagger p^\dagger p X) = \Tr(Y^\dagger X) = g(X, Y),
\]
where the equality follows from $p$ being unitary. 
\end{remark}

We will refer to $g(X, Y) = \Tr (Y^\dagger X)$ as the bi-invariant metric of $\textup{SU}(n)$ and $\textup{U}(n)$. Working with this metric allows us to characterize all the geodesics of $\textup{SU}(n)$ and $\textup{U}(n)$ (cf. \cite[Proposition 21.20]{gallier2020differential}). 
\begin{proposition}
Consider $G = \textup{SU}(n)$ or $\textup{U}(n)$ endowed with the bi-invariant metric shown in \cref{metrica}. For every $p \in G$, the geodesic passing through $p$ in the direction of $Xp \in T_p G$, $X \in \mathfrak{g}$ is of the form 
\[
\gamma(t) = e^{tX}p. 
\]
\end{proposition}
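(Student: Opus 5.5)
The approach is to work directly with the bi-invariant metric of \cref{metrica} and reduce everything to the identity element by right translation. Along the way I will check that $e^{tX}p$ is defined for all $t$ and actually lies in $G$.

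\emph{Membership in $G$.} Since $X \in \mathfrak{g}$, the definition of the Lie algebra gives $e^{tX} \in G$ for all $t$, so $\gamma(t) = e^{tX}p \in G$. Its velocity is $\gamma'(t) = X e^{tX}p \in \mathfrak{g}\,\gamma(t) = T_{\gamma(t)}G$.

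\emph{Reduction to the identity.} Right multiplication $R_p: a \mapsto ap$ preserves the metric by bi-invariance, so it is an isometry and maps geodesics to geodesics. It therefore suffices to show that $t \mapsto e^{tX}$ is the geodesic through $e$ with velocity $X$. The case of a general $p$ then follows, together with uniqueness of geodesics with a given initial point and velocity.

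\emph{The one-parameter subgroup is a geodesic.} Two routes are available, and I would present the second.
\begin{enumerate}
\item \emph{Intrinsic route.} Use the Koszul formula for left-invariant fields. For $u,v,w \in \mathfrak{g}$, invariance makes $g(u^L,v^L)$ constant, and $\mathrm{ad}$-invariance, $g([w,u],v) = -g(u,[w,v])$, follows from differentiating $g(aua^{-1},ava^{-1}) = g(u,v)$. Together these give $\nabla_{u^L}v^L = \tfrac12[u^L,v^L]$, so $\nabla_{X^L}X^L = 0$. The integral curve of $X^L$ through $e$ is $t \mapsto e^{tX}$, hence it is a geodesic.
\item \emph{Extrinsic route.} Regard $G$ as a submanifold of $\textup{M}_n(\bb{C}) \cong \bb{R}^{2n^2}$ with the real inner product $\mathrm{Re}\,\Tr(Y^\dagger X)$. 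This restricts to $g$ on $\mathfrak{g}$, where the trace is already real. A curve in a Riemannian submanifold of Euclidean space is a geodesic iff its acceleration is normal to the submanifold. Here $\gamma''(t) = X^2 e^{tX}p$. For any $Y \in \mathfrak{g}$,
\[
\mathrm{Re}\,\Tr\big((Y\gamma)^\dagger X^2\gamma\big) = \mathrm{Re}\,\Tr(Y^\dagger X^2) = -\mathrm{Re}\,\Tr(Y X^2).
\]
Since $X$ and $Y$ are anti-Hermitian, $\overline{\Tr(YX^2)} = \Tr\big((YX^2)^\dagger\big) = \Tr(X^2 Y) = \Tr(YX^2)$, so this quantity is real. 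Normality would then require $\Tr(YX^2) = 0$, which fails in general, for example with $Y = X$. So the acceleration is \emph{not} normal in this naive embedding and the extrinsic check cannot work. I would therefore abandon it.
\end{enumerate}

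I will use the intrinsic Koszul argument. The main obstacle is verifying $\mathrm{ad}$-invariance and the resulting formula $\nabla_{u^L}v^L = \tfrac12[u^L,v^L]$ carefully. \cref{propinvvectorfields2} gives the bracket of left-invariant fields, and for the right-translated form $e^{tX}p$ one can equally use right-invariant fields: $\gamma$ is the integral curve of $X^R$, whose covariant derivative along itself vanishes by the same argument. Using $X^R$ this way avoids the reduction to $e$ altogether.
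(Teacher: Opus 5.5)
Your intrinsic argument is correct and complete. Bi-invariance gives $\mathrm{ad}$-invariance, and the Koszul formula then yields $\nabla_{u^L}v^L=\tfrac12[u^L,v^L]$, together with the analogous identity for right-invariant fields. Hence $\nabla_{X^R}X^R=0$, and $t\mapsto e^{tX}p$, the integral curve of $X^R$ through $p$ with initial velocity $Xp$, is the geodesic in question. The paper gives no proof of its own and simply cites \cite[Proposition 21.20]{gallier2020differential}; your Koszul argument is the standard proof of that fact.

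What is wrong is your reason for discarding the extrinsic route, and the conclusion you drew from it is false. You dropped a sign. Since $Y^\dagger=-Y$ and $(X^2)^\dagger=X^2$, one has $(YX^2)^\dagger=-X^2Y$, hence $\overline{\operatorname{Tr}(YX^2)}=-\operatorname{Tr}(YX^2)$. So $\operatorname{Tr}(YX^2)$ is purely imaginary, not real. For instance, $\operatorname{Tr}(X^3)=-i\sum_j\lambda_j^3$ when $X$ has eigenvalues $i\lambda_j$, so your test case $Y=X$ gives real part $0$.

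Consequently $\mathrm{Re}\operatorname{Tr}\big((Y\gamma)^\dagger X^2\gamma\big)=-\mathrm{Re}\operatorname{Tr}(YX^2)=0$ for every $Y\in\mathfrak{g}$, i.e.\ $\gamma''(t)=X^2\gamma(t)$ is normal to $T_{\gamma(t)}G$. Conceptually, $X^2$ is Hermitian and $\{Sq: S^\dagger=S\}$ lies in the normal space of $G$ at $q$. Moreover, $\mathrm{Re}\operatorname{Tr}\big((Yq)^\dagger Xq\big)=\operatorname{Tr}(Y^\dagger X)=g(Xq,Yq)$ at every $q\in G$, so the embedding is isometric. This computation is therefore by itself a complete proof, shorter than the Koszul route.

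You could have caught the error yourself. In an isometric embedding, a non-normal acceleration would mean $e^{tX}p$ is \emph{not} a $g$-geodesic, contradicting the very statement you are proving. Either route is fine, but the claim that the extrinsic check fails must be removed.
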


One key property of Lie groups endowed with a bi-invariant metric is that they are \textit{symmetric spaces}. This fact is used several times in \cref{SectionPI,sec:LaplacianAndGradientBound} and allows us to obtain an explicit expression for the Taylor series of the metric in normal coordinates. 
\begin{definition}
\label{def:symmetricspace}
A connected Riemannian manifold $(M, g)$ is a \textup{symmetric space} if for each point $p \in M$ there exists a point reflexion, i.e. an isometry $\varphi : M \to M$ that fixes $p$ and such that $d\varphi|_p : T_p M \to T_p M$ is equal to $-\textup{Id}$.
\end{definition}

As noted previously, it is known that every  connected Lie group endowed with a bi-invariant metric is a symmetric space. Furthermore, the round sphere, and Grassmann manifolds endowed with their induced metric are symmetric spaces too (c.f. \cite[Example 3.24]{lee2018introductionRiemannian}). 

\subsubsection{Curvature of the unitary and special unitary groups}
\label{curvatureunitarygroup}

Let us now study the curvature of $\textup{U}(n)$ and $\textup{SU}(n)$ when endowed with the bi-invariant metric. Since $\textup{U}(n)$ and $\textup{SU}(n)$ are linear Lie groups, their curvature can be studied by inspecting what is known as the \textit{Killing form}. 

\begin{definition}[Killing form]
Given a Lie algebra $\mathfrak{g}$ of a linear Lie group $G$, the \textup{Killing form} of $G$ is a symmetric bilinear form $\mathfrak{B}: \mathfrak{g} \times \mathfrak{g} \to \bb{R}$ given by 
\[
\mathfrak{B}(u, v) := \Tr([u, [v, \cdot]]),
\]
where $[\cdot, \cdot]$ denotes the Lie bracket associated with the Lie algebra of $G$, which is given by the commutator. 
\end{definition}

The Killing form can in fact be defined for \textit{any} Lie algebra $\mathfrak{g}$, not just for Lie algebras associated with linear Lie groups. For further details, see \cite{gallier2020differential}. 

\begin{proposition}[{\cite[Proposition 21.19]{gallier2020differential}}]
\label{curvLieGroups}
For any Lie group $G$ equipped with a bi-invariant metric $g$, the following hold: 
\begin{enumerate}
\item For all pairs of orthonormal vectors $u, v \in \mathfrak{g}$,
\[
K_g(u,v) = \frac{1}{4}|[u,v]|^2_g.
\]
\item For all $u, v, w \in \mathfrak{g}$, 
\[
R(u, v)w = \frac{1}{4}[[u, v], w]. 
\]
\item For all $u, v \in \mathfrak{g}$, 
\[
\textup{Ric}_g(u,v) = -\frac{1}{4}\mathfrak{B}(u,v),
\]
where $\mathfrak{B}$ is the Killing form of $G$, and $K_g$, $R$ and $\textup{Ric}_g$ denote the sectional curvature, and the Riemann and Ricci curvature tensors of $(G, g)$, respectively.  
\end{enumerate}
\end{proposition}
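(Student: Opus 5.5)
The three items can be proved in sequence, using the standard formulas for left-invariant fields under a bi-invariant metric. For $u,v,w\in\mathfrak{g}$ let $u^L,v^L,w^L$ be the associated left-invariant vector fields. By bi-invariance, $g(u^L,v^L)$ is constant on $G$. Moreover, the adjoint action is isometric: differentiating $g(\mathrm{Ad}_{e^{tw}}u,\mathrm{Ad}_{e^{tw}}v)=g(u,v)$ at $t=0$ gives
\[
g([w,u],v)+g(u,[w,v])=0 .
\]
Now apply the Koszul formula to left-invariant fields. The derivative terms vanish because the inner products are constant, and by \cref{propinvvectorfields2} the brackets of left-invariant fields are the left-invariant fields of the Lie bracket. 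The ad-invariance above then collapses the formula to
\[
\nabla_{u^L}v^L=\tfrac12[u,v]^L .
\]

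For item 2, expand $R(u,v)w=\nabla_u\nabla_v w-\nabla_v\nabla_u w-\nabla_{[u,v]}w$ with the formula above. This gives
\[
R(u,v)w=\tfrac14[u,[v,w]]-\tfrac14[v,[u,w]]-\tfrac12[[u,v],w] .
\]
The Jacobi identity turns the first two terms into $\tfrac14[[u,v],w]$, so $R(u,v)w=-\tfrac14[[u,v],w]$. The paper's stated formula has the opposite sign, which reflects its convention for $R$. I will fix the sign convention so that item 1 comes out positive and check that it is consistent with item 2.

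For item 1, take $u,v$ orthonormal and compute $K_g(u,v)=g(R(u,v)v,u)$ in the matching convention:
\[
K_g(u,v)=\tfrac14 g([[u,v],v],u)\cdot(\pm1) .
\]
Ad-invariance moves the outer bracket onto $u$: $g([[u,v],v],u)=-g([u,v],[u,v])$ up to sign. The result is $\tfrac14|[u,v]|_g^2$, as claimed.

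For item 3, write $\mathrm{Ric}(u,v)=\operatorname{Tr}\big(w\mapsto R(w,u)v\big)$. By item 2 this map is $w\mapsto \pm\tfrac14[[w,u],v]=\pm\tfrac14[v,[u,w]]$. Its trace is therefore $\tfrac14\operatorname{Tr}(\mathrm{ad}_v\,\mathrm{ad}_u)$ with a sign fixed by the convention. Since $\operatorname{Tr}(\mathrm{ad}_v\mathrm{ad}_u)=\operatorname{Tr}(\mathrm{ad}_u\mathrm{ad}_v)=\mathfrak{B}(u,v)$, the sign works out to give $-\tfrac14\mathfrak{B}(u,v)$.

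Because everything is left-invariant, these identities at the identity transfer to every point. The main obstacle is keeping the curvature sign conventions consistent across the three items; the algebra itself is routine once $\nabla_{u^L}v^L=\tfrac12[u,v]^L$ is in hand.
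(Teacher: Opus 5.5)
Your argument is correct. The paper gives no proof of its own, only the citation \cite[Proposition 21.19]{gallier2020differential}, and your route (ad-invariance plus the Koszul formula to get $\nabla_{u^L}v^L=\tfrac12[u,v]^L$, then expanding $R$ and applying the Jacobi identity) is the standard proof of that result.

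The signs, which you leave hedged with $\pm$ and ``up to sign'', are in fact fully determined by the convention you fixed. With $R(X,Y)=[\nabla_X,\nabla_Y]-\nabla_{[X,Y]}$ one has $R(u,v)v=-\tfrac14[[u,v],v]$ and $g([[u,v],v],u)=-|[u,v]|_g^2$ exactly, so $K_g(u,v)=g(R(u,v)v,u)=\tfrac14|[u,v]|_g^2$. Likewise $\mathrm{Ric}_g(u,v)=\operatorname{Tr}\bigl(w\mapsto R(w,u)v\bigr)=-\tfrac14\operatorname{Tr}(\mathrm{ad}_v\,\mathrm{ad}_u)=-\tfrac14\mathfrak{B}(u,v)$. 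The $+\tfrac14$ in item 2 is the same tensor written in the opposite convention $R(X,Y)=\nabla_{[X,Y]}-[\nabla_X,\nabla_Y]$, under which $K_g(u,v)=g(R(u,v)u,v)$. So the sign of $K_g$ is not something a convention can be ``chosen'' to fix; it comes out nonnegative in any consistent convention.
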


This proposition is very useful in our context, as there exists an explicit formula for the Killing form of $\textup{U}(n)$ and $\textup{SU}(n)$.
\begin{remark}[{\cite[Remark of Page 648]{gallier2020differential}}]
\label{KillingLieGroups}
The Killing form of $\textup{U}(n)$ is 
\[
\mathfrak{B}(X,Y) = 2n\Tr(XY) - 2\Tr(X) \Tr(Y), \quad X, Y \in \mathfrak{u}(n).
\]
Furthermore, the Killing form of $\textup{SU}(n)$ is
\[
\mathfrak{B}(X, Y)= 2n\Tr(XY),\quad X, Y \in \mathfrak{su}(n).
\]
\end{remark}

The expressions shown in \cref{KillingLieGroups} allow us to bound the curvatures of $\textup{U}(n)$ and $\textup{SU}(n)$ when endowed with their bi-invariant metric.

\begin{proposition}
\label{LemaCurvaturaUn}
Let $\textup{U}(n)$ be endowed with the bi-invariant metric $g$ from \cref{metrica}. Then  
\[
\textup{Ric}_{g} \geq 0,\quad \textup{\textit{and}}\quad 
0 \leq K_{g} \leq \frac{1}{2}.
\]
\end{proposition}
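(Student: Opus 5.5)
The plan is to read everything off the curvature formulas of \cref{curvLieGroups}, using the Killing form from \cref{KillingLieGroups}. Since the metric is bi-invariant, left translation is an isometry. It therefore suffices to check both bounds at the identity, on $\mathfrak{u}(n) = T_{\mathds{1}}\textup{U}(n)$, because every tangent plane and every tangent vector elsewhere is carried isometrically there.

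\emph{Ricci curvature.} For $X \in \mathfrak{u}(n)$, \cref{curvLieGroups} and \cref{KillingLieGroups} give
\[
\textup{Ric}_g(X,X) = -\tfrac{1}{4}\mathfrak{B}(X,X) = -\tfrac{n}{2}\Tr(X^2) + \tfrac{1}{2}\Tr(X)^2 .
\]
Because $X$ is anti-Hermitian, $-\Tr(X^2) = \Tr(X^\dagger X) = |X|_g^2 \geq 0$. Moreover $\Tr(X)$ is purely imaginary, so $\Tr(X)^2 = -|\Tr X|^2 \leq 0$. The claim therefore reduces to the inequality $n|X|_g^2 \geq |\Tr X|^2$. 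This is Cauchy--Schwarz for the Frobenius inner product, $|\Tr(\mathds{1}^\dagger X)|^2 \leq \Tr(\mathds{1})\,\Tr(X^\dagger X)$, and it yields $\textup{Ric}_g \geq 0$.

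\emph{Sectional curvature.} For orthonormal $u,v \in \mathfrak{u}(n)$, \cref{curvLieGroups} gives $K_g(u,v) = \tfrac{1}{4}|[u,v]|_g^2$, and this is immediately $\geq 0$. For the upper bound I need $|[u,v]|_g^2 \leq 2$. I will invoke the known sharp commutator estimate (Böttcher--Wenzel), $\|uv - vu\|_F^2 \leq 2\|u\|_F^2\|v\|_F^2$, which holds for arbitrary complex matrices and here gives exactly $|[u,v]|_g^2 \leq 2$. That yields $K_g \leq \tfrac12$.

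The Böttcher--Wenzel step is the main obstacle if it is to be proved rather than cited. The naive bound $\|uv\|_F + \|vu\|_F \leq 2\|u\|_F\|v\|_F$ only gives $|[u,v]|^2 \leq 4$, that is, $K \leq 1$. The natural fix is to remove the components that do not contribute to the commutator. Since $[u,v]$ is unchanged if we add multiples of $\mathds{1}$ to $u$ and $v$, I can project both onto the traceless part. That does not increase the norms, so it suffices to treat $u,v \in \mathfrak{su}(n)$. Independently, I can replace $v$ by its component orthogonal to $u$, which is already the case here since $u,v$ are orthonormal. I would then either cite the sharp inequality, which is presumably what the paper's \cref{ineqliebracket} records, or, if only a weaker constant is needed elsewhere, accept $K_g \leq 1$. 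For the stated $\tfrac12$, I would rely on the cited inequality.
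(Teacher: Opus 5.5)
Your proposal is correct and follows the paper's proof essentially step for step. You reduce to the identity by left-invariance, read $\textup{Ric}_g$ off the Killing form of $\textup{U}(n)$, and bound $K_g = \tfrac14|[X,Y]|_g^2$ with the B\"ottcher--Wenzel inequality, which is exactly what the paper cites as \cref{ineqliebracket}. The one addition is your Cauchy--Schwarz step $|\Tr X|^2 \leq n\,\Tr(X^\dagger X)$, which makes explicit the nonnegativity of $-\tfrac12\bigl(n\Tr(X^2) - \Tr(X)^2\bigr)$ that the paper only asserts.
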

\begin{proof}
Since left translations are isometries, it suffices to establish the curvature statements at the identity. By \cref{curvLieGroups,KillingLieGroups} it holds that
\[
\textup{Ric}_{g}(X, X) = -\frac{1}{2}\left(n\Tr(X^2) - \Tr(X)^2\right) \geq 0,
\]
for any $X \in \mathfrak{u}(n)$.

Now, given two orthonormal vectors $X,Y \in \mathfrak{u}(n)$, using \cref{curvLieGroups} we obtain
\begin{align*} 
0\leq K_{g}(X, Y) &= \frac{1}{4} \left|[X, Y]\right|_g^2.
\end{align*}
Furthermore, it is known \cite{FrobeniusNorm} that
\begin{equation}
\label{ineqliebracket}
\left|[X, Y]\right|^2_g \leq 2 |X|_g^2 |Y|_g^2,    
\end{equation}
which allows us to conclude that 
\[
0\leq K_{g}(X, Y)  \leq \frac{2}{4} \left|X\right|_g^2 \left|Y\right|_g^2 \leq \frac{1}{2}.
\]
\end{proof}

\begin{proposition}
\label{LemaCurvaturaSUn}
Consider $\textup{SU}(n)$ endowed with the bi-invariant metric $g$ from \cref{metrica}. Then  
\[
\textup{Ric}_{g} = \frac{n}{2}g,\quad \textup{\textit{and}}\quad 0 \leq K_{g} \leq \frac{1}{2}.
\]
\end{proposition}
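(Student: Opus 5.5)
The proof will follow the template of \cref{LemaCurvaturaUn}. Left translations are isometries of the bi-invariant metric, so it suffices to compute everything at the identity, where $T_{\mathds{1}}\textup{SU}(n) = \mathfrak{su}(n)$. There we use \cref{curvLieGroups} together with the explicit Killing form of $\textup{SU}(n)$ from \cref{KillingLieGroups}.

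For the Ricci curvature, let $X, Y \in \mathfrak{su}(n)$. By \cref{curvLieGroups} and \cref{KillingLieGroups},
\[
\textup{Ric}_g(X,Y) = -\frac{1}{4}\mathfrak{B}(X,Y) = -\frac{1}{4}\, 2n\Tr(XY) = -\frac{n}{2}\Tr(XY).
\]
Since $X^\dagger = -X$, we have $\Tr(XY) = -\Tr(X^\dagger Y)$. By \cref{metrica}, $g(X,Y) = \Tr(Y^\dagger X) = \Tr(X^\dagger Y)$; here $\Tr(Y^\dagger X) = \overline{\Tr(X^\dagger Y)}$ is real for anti-Hermitian matrices, so the order does not matter. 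Hence
\[
\textup{Ric}_g(X,Y) = \frac{n}{2}\, g(X,Y).
\]
By left-invariance of both sides this identity holds at every point, so $\textup{Ric}_g = \frac{n}{2} g$.

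For the sectional curvature, take orthonormal $X, Y \in \mathfrak{su}(n)$. Then \cref{curvLieGroups} gives $K_g(X,Y) = \frac{1}{4}|[X,Y]|_g^2 \geq 0$. For the upper bound, $\mathfrak{su}(n) \subset \mathfrak{u}(n)$ and the metric is the same Frobenius inner product, so inequality \cref{ineqliebracket} applies verbatim: $|[X,Y]|_g^2 \leq 2|X|_g^2|Y|_g^2 = 2$. This yields $K_g \leq \frac{1}{2}$.

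The only step requiring care is the sign and real-valuedness bookkeeping relating $\Tr(XY)$ to $g(X,Y)$ for anti-Hermitian matrices. There is no genuine obstacle, since both ingredients are quoted results. One could alternatively note that the sectional curvature bounds are simply inherited from \cref{LemaCurvaturaUn}, because $\textup{SU}(n)$ is a totally geodesic subgroup of $\textup{U}(n)$ with the restricted metric.
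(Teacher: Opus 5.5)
Your proposal is correct and matches the paper's proof: reduce to the identity, compute $\textup{Ric}_g(X,Y) = -\frac{1}{4}\mathfrak{B}(X,Y) = -\frac{n}{2}\Tr(XY) = \frac{n}{2}\,g(X,Y)$ via \cref{curvLieGroups,KillingLieGroups}, and get $0 \leq K_g \leq \frac{1}{2}$ exactly as in \cref{LemaCurvaturaUn} from $K_g = \frac{1}{4}|[X,Y]|_g^2$ and \cref{ineqliebracket}. Your check that $\Tr(XY) = -g(X,Y)$ is real for anti-Hermitian matrices just spells out the sign step the paper leaves implicit.
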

\begin{proof}
As in the previous result, we may only consider the expressions at the identity. The bound of the sectional curvature can be proven following the same steps as in \cref{LemaCurvaturaUn}.

In order to obtain the equation for the Ricci curvature, note that by \cref{curvLieGroups,KillingLieGroups}
\[
\textup{Ric}_{g}(X, Y) = -\frac{2}{4}n\Tr(XY) = \frac{n}{2}g(X, Y),
\]
for every every pair $X, Y \in \mathfrak{su}(n)$. 
\end{proof}

\subsection{Curvature of product manifolds}
\label{curvatureproduct}

Let us now study the curvature of product manifolds and how it is related to that of each factor. The usual metric considered for the product of two Riemannian manifolds is known as the \textit{product metric}, which is given as the direct sum of the metrics each manifold is endowed with. This is the metric considered in the manifolds appearing in \cref{sec:traceratio}. 

Before we show how to bound the curvature of the product metric, let us present the following two auxiliary lemmas.

\begin{lemma}
\label{productmanifoldsmetric}
Let $(M, g)$ and $(N, h)$ be two Riemannian manifolds and let $M \times N$ be the product manifold endowed with the product metric $g \oplus h$. Let $X, Y \in \mathfrak{X}(M\times N)$ be two decomposable vector fields, i.e. $X = (X_1, X_2)$, and $Y = (Y_1, Y_2)$, with $X_1, Y_1 \in \mathfrak{X}(M)$, $X_2, Y_2 \in \mathfrak{X}(N)$. Then  $(g \oplus h)(X, Y)$ verifies 
\[
(g \oplus h)(X, Y) = g(X_1, Y_1) + h(X_2, Y_2),
\]
and
\[
[X, Y]_{M \times N} = [X_1, Y_1]_M + [X_2, Y_2]_N,
\]
where $[\cdot, \cdot]_{M \times N}$, $[\cdot, \cdot]_{M}$, and $[\cdot, \cdot]_{N}$ denote the Lie derivative in $M\times N$, $M$ and $N$, respectively, and the sum is written in the sense of $T_{(x_1, x_2)} (M\times N) = T_{x_1} M \oplus T_{x_2} N$. 

Furthermore, (c.f. \cite[Exercise 1.a, chapter 6]{doCarmoriemannian}), using the same sum notation for $T_{(x_1, x_2)} (M\times N)$, 
\[
\nabla_X Y = \nabla^M_{X_1} Y_1 + \nabla^N_{X_2} Y_2,
\]
where $\nabla$, $\nabla^M$ and $\nabla^N$ denote the Levi-Civita connection of $M \times N$, $M$ and $N$, respectively. 
\end{lemma}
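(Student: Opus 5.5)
The approach is a local computation in product charts. We work pointwise at $(x_1,x_2)\in M\times N$ and use coordinates adapted to the product. Take a chart $(u^1,\dots,u^m)$ around $x_1$ in $M$ and a chart $(w^1,\dots,w^k)$ around $x_2$ in $N$. The pair gives a chart of $M\times N$ in which the coordinate fields $\partial_{u^i}$ have no $N$-component and depend only on $u$. Likewise, the $\partial_{w^j}$ have no $M$-component and depend only on $w$.

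A decomposable field $X=(X_1,X_2)$ then reads $X=X_1^i(u)\partial_{u^i}+X_2^j(w)\partial_{w^j}$. Its $M$-coefficients are independent of $w$ and its $N$-coefficients are independent of $u$. This is the point I would make explicit first, since it is exactly what "decomposable" means here.

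The metric identity is immediate from the definition of the product metric as $g\oplus h$ on $T_{x_1}M\oplus T_{x_2}N$. The cross terms vanish because $TM$ and $TN$ are orthogonal by construction. For the bracket, I would expand $[X,Y]$ in the product chart. The coordinate formula
\[
[X,Y]^a = X(Y^a)-Y(X^a)
\]
has mixed terms such as $X_2(Y_1^i)$, and these vanish because $Y_1^i$ does not depend on $w$. What remains is $[X_1,Y_1]_M+[X_2,Y_2]_N$.

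For the connection, I would avoid citing the exercise and argue through the Koszul formula. Define $\tilde\nabla_XY:=\nabla^M_{X_1}Y_1+\nabla^N_{X_2}Y_2$ on decomposable fields and extend it by the Leibniz rule. Since decomposable fields span each tangent space, they determine the connection. I would then check that $\tilde\nabla$ is torsion-free and metric:
\begin{itemize}
\item Torsion-freeness follows from the bracket identity just proved together with the torsion-freeness of $\nabla^M$ and $\nabla^N$.
\item Metric compatibility follows from the metric identity, since $X(g(Y_1,Z_1))=X_1(g(Y_1,Z_1))$, the function $g(Y_1,Z_1)$ being independent of the $N$-variables.
\end{itemize}
By uniqueness of the Levi-Civita connection, $\tilde\nabla=\nabla$.

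The only delicate step is this extension from decomposable fields to a genuine connection. An equivalent and cleaner route is to compute the Christoffel symbols in the product chart. The metric matrix is block diagonal, with the $M$-block depending only on $u$ and the $N$-block only on $w$. Hence every mixed Christoffel symbol $\Gamma$ vanishes and the pure ones coincide with those of $M$ and $N$. Substituting into $\nabla_XY=(X(Y^c)+\Gamma^c_{ab}X^aY^b)\partial_c$ gives the formula directly, so I would present the proof this way.
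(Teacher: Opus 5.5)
Your proof is correct. The paper gives no argument for this lemma: it treats the metric and bracket identities as immediate from the definitions and refers to the cited exercise in do Carmo for the connection formula. Your product-chart computation is therefore a self-contained replacement for a citation, not a competing route.

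Both computations check out. In the bracket, the mixed terms such as $X_2(Y_1^i)$ and $X_1(Y_2^j)$ vanish because decomposability makes the $M$-coefficients independent of $w$ and the $N$-coefficients independent of $u$. In the Christoffel computation you have $G=\mathrm{diag}\bigl(g_{ij}(u),h_{kl}(w)\bigr)$. Any symbol whose three indices do not all belong to the same factor involves either an off-diagonal entry of $G$ or a derivative of one block in the other block's variables, so it vanishes. The pure symbols are exactly those of $M$ and $N$.

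You are right to distrust your first route as written. Defining $\tilde\nabla$ on decomposable fields and extending by the Leibniz rule needs a well-definedness check before you can invoke uniqueness of the Levi-Civita connection. Doing that check locally with coordinate fields amounts to the Christoffel computation anyway.

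If you prefer a chart-free argument, skip constructing $\tilde\nabla$ and apply the Koszul formula to the actual $\nabla$ with decomposable $X,Y,Z$. By your metric and bracket identities, and since $X\bigl(g(Y_1,Z_1)\circ\mathrm{pr}_1\bigr)=\bigl(X_1\,g(Y_1,Z_1)\bigr)\circ\mathrm{pr}_1$, the right-hand side splits into the Koszul expression on $M$ plus the one on $N$. This gives
\[
2(g\oplus h)(\nabla_XY,Z)=2(g\oplus h)\bigl(\nabla^M_{X_1}Y_1+\nabla^N_{X_2}Y_2,\,Z\bigr).
\]
Decomposable $Z$ span every tangent space, since you can extend vectors on each factor by bump functions, so this determines $\nabla_XY$.

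The coordinate route buys complete explicitness. The Koszul route avoids charts and shows that only the two identities you already proved are needed.
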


Using \cref{productmanifoldsmetric} it is straightforward to see how the Riemann curvature tensor of a product is related to those of its factors. 
\begin{lemma}
\label{RiemannProd}
Let $(M, g)$ and $(N, h)$ be two Riemannian manifolds and let $M \times N$ be the product manifold endowed with the product metric. Then  for every point $(x, y) \in M \times N$ and every four tangent vectors $X, Y, Z, W \in T_{(x, y)} M \times N$,
\[
R_{g \oplus h}(X, Y, Z, W) = R_{g}(X_1, Y_1, Z_1, W_1) + R_{h}(X_2, Y_2, Z_2, W_2),
\]
where $R_{g \oplus h}$ denotes the Riemann curvature tensor of $M \times N$, and $R_{g}$, $R_{h}$ denote the Riemann curvature tensors of $M$ and $N$ respectively. We have used the same notation as in \cref{productmanifoldsmetric} for the decomposition of the tangent vectors.
\end{lemma}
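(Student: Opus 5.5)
The plan is to prove the identity one level at a time, first for the Levi-Civita connection and then for the curvature operator, using the decomposition of \cref{productmanifoldsmetric}. The statement is pointwise and tensorial, so it suffices to check it on decomposable vector fields. I extend $X, Y, Z, W \in T_{(x,y)}(M \times N)$ to decomposable fields $X = (X_1, X_2)$ and so on, where each $X_1$ is a vector field on $M$ (pulled back to $M\times N$, i.e. independent of the $N$-coordinate) and each $X_2$ is a vector field on $N$. Working with these lifted fields is legitimate because $R_{g\oplus h}$ is $C^\infty$-linear in every slot.

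Recall $R(X,Y)Z = \nabla_X\nabla_Y Z - \nabla_Y\nabla_X Z - \nabla_{[X,Y]}Z$. By \cref{productmanifoldsmetric}, $\nabla_Y Z = \nabla^M_{Y_1}Z_1 + \nabla^N_{Y_2}Z_2$. This is again a decomposable field: its $M$-part depends only on the $M$-point and its $N$-part only on the $N$-point. A second application of the lemma therefore gives
\[
\nabla_X\nabla_Y Z = \nabla^M_{X_1}\nabla^M_{Y_1}Z_1 + \nabla^N_{X_2}\nabla^N_{Y_2}Z_2 .
\]
The lemma also gives $[X,Y] = [X_1,Y_1]_M + [X_2,Y_2]_N$, which is decomposable of the same type, so
\[
\nabla_{[X,Y]}Z = \nabla^M_{[X_1,Y_1]}Z_1 + \nabla^N_{[X_2,Y_2]}Z_2 .
\]
Combining the three terms yields
\[
R_{g\oplus h}(X,Y)Z = R_g(X_1,Y_1)Z_1 + R_h(X_2,Y_2)Z_2 .
\]

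To finish, I pair this with $W$ through the product metric. The first formula of \cref{productmanifoldsmetric} splits $(g\oplus h)(\cdot, W)$ into $g(\cdot, W_1) + h(\cdot, W_2)$, which gives the four-tensor identity
\[
R_{g\oplus h}(X,Y,Z,W) = R_g(X_1,Y_1,Z_1,W_1) + R_h(X_2,Y_2,Z_2,W_2).
\]

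The only delicate point is justifying that the iterated covariant derivative and the bracket stay within the class of decomposable fields to which \cref{productmanifoldsmetric} applies. This holds because we chose lifts that depend on only one factor. Mixed terms such as $\nabla^M_{X_1}$ acting on $Z_2$ vanish for these lifts, since $Z_2$ is constant along the $M$ directions. No computation beyond this bookkeeping is needed.
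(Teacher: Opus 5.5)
Your argument is correct and follows the route the paper intends: the paper writes out no proof, only remarking that the identity is straightforward from the splitting of the metric, the Lie bracket and the Levi-Civita connection in \cref{productmanifoldsmetric}. Your write-up supplies the one point the paper leaves implicit, namely that covariant derivatives and brackets of decomposable fields are again decomposable, so the lemma can be applied twice to $\nabla_X\nabla_Y Z$ and once to $\nabla_{[X,Y]}Z$.
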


With the two above lemmas in mind, it is now straightforward to prove the following statement. 

\begin{proposition}[Curvature of the product]
\label{curvProduct}
Let $(M, g)$, $(N, h)$ be two Riemannian manifolds of dimension $d_M$ and $d_N$, respectively, verifying
\[
K_g \leq D_1,\quad K_h \leq D_2,
\]
\[
\textup{Ric}_g \geq \delta_1 \,g,\quad \textup{Ric}_h \geq \delta_2\, h,
\]
for some constants $D_1, D_2, \delta_1, \delta_2 \in \bb{R}$, where $K_g$ (resp. $K_h$) denotes the sectional curvature of $M$ (resp. $N$) and $\textup{Ric}_g$ (resp. $\textup{Ric}_h$) denotes the Ricci curvature of $M$ (resp. $N$). Assume that for every $x \in M$ and every $y \in N$, the coefficients of the Riemann curvature tensors of $M$ and $N$, denoted as $R_g$ and $R_h$ respectively are such that for any unit vectors $X_1, Y_1, Z_1, W_1 \in T_x M$, $X_2, Y_2, Z_2, W_2 \in T_y N$,
\[
|R_g(X_1, Y_1, Z_1, W_1)| \leq \Lambda_1, \quad |R_h(X_2,Y_2, Z_2, W_2)| \leq \Lambda_2,
\]
for every $i,j,k,l \in \{1, \dotsc, d_M\}$ and every $m,n,p,q \in \{1, \dotsc, d_N\}$, for some constants $\Lambda_1, \Lambda_2 \in \bb{R}$. Then,  considering $M \times N$ endowed with the product metric, it holds that
\[
K_{g \oplus h} \leq \max\{0, D_1, D_2\},\quad \textup{Ric}_{g \oplus h} \geq \min\{\delta_1, \delta_2 \}\, (g \oplus h),
\]
and, for every $(x,y) \in M \times N$, and any unitary vectors $X, Y, Z, W \in T_{(x, y)} M \times N$, 
\[
|R_{g \oplus h}(X,Y,Z,W)| \leq \max\{\Lambda_1, \Lambda_2\},
\]
for every $i, j, k, l \in \{1, \dotsc, d_M + d_N\}$.
\end{proposition}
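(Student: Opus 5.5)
The plan is to reduce all three claims to computations in orthonormal frames adapted to the splitting $T_{(x,y)}(M\times N) = T_xM \oplus T_yN$, using \cref{RiemannProd} throughout. For a tangent vector $X$ we write $X = X_1 + X_2$. The curvature tensor of the product is block-diagonal in this decomposition, meaning that any mixed entry involving both $TM$ and $TN$ components vanishes.

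For the sectional curvature, take orthonormal $X, Y$ at $(x,y)$. Then
\[
K_{g\oplus h}(X,Y) = R_g(X_1,Y_1,Y_1,X_1) + R_h(X_2,Y_2,Y_2,X_2).
\]
Each term equals $K_g$ (resp.\ $K_h$) of the plane spanned by $X_1, Y_1$ (resp.\ $X_2, Y_2$) multiplied by the area term $|X_1|^2|Y_1|^2 - g(X_1,Y_1)^2$ (resp.\ the analogous term for $X_2, Y_2$). We take the term to be zero if the vectors are dependent. Denote these two area terms by $a_1, a_2 \geq 0$. We must check $a_1 + a_2 \leq 1$. This follows from $a_1 + a_2 \leq |X_1|^2|Y_1|^2 + |X_2|^2|Y_2|^2 - (g(X_1,Y_1)^2 + h(X_2,Y_2)^2)$, combined with $g(X_1,Y_1) + h(X_2,Y_2) = 0$ and $|X_1|^2+|X_2|^2 = |Y_1|^2+|Y_2|^2 = 1$. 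A short Cauchy--Schwarz type computation gives $a_1 + a_2 \leq 1$ (this is the Gram determinant of the two orthonormal vectors minus a nonnegative cross term). Hence $K_{g\oplus h} \leq \max\{D_1,D_2\} a_1 + \max\{D_1, D_2\} a_2 \leq \max\{0, D_1, D_2\}$. The $0$ is needed because if $D_i < 0$ the sum $a_1 + a_2$ might be smaller than $1$. We expect this step to be the main obstacle, since the inequality $a_1 + a_2 \leq 1$ must be verified carefully.

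For the Ricci curvature, compute the trace of $R_{g\oplus h}(\cdot, X)X$ in an orthonormal basis made of an orthonormal basis of $T_xM$ together with one of $T_yN$. By block-diagonality this gives $\textup{Ric}_{g\oplus h}(X,X) = \textup{Ric}_g(X_1,X_1) + \textup{Ric}_h(X_2,X_2)$. That sum is at least $\delta_1|X_1|^2 + \delta_2|X_2|^2 \geq \min\{\delta_1,\delta_2\}|X|^2$.

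For the tensor bound, interpret the hypothesis and the conclusion in coordinates adapted to the splitting, namely with orthonormal basis vectors. Each unit basis vector then lies entirely in $TM$ or in $TN$. If the four vectors are not all in the same factor, the entry is $0$. Otherwise the entry equals an entry of $R_g$ or of $R_h$, so it is bounded by $\max\{\Lambda_1,\Lambda_2\}$. This is the reading consistent with the index formulation in the statement.
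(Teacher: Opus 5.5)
Your sectional-curvature and Ricci arguments are correct. On the sectional curvature you are more careful than the paper, which never checks that the two area factors sum to at most $1$. Your check works: with $p=|X_1|^2$, $q=|Y_1|^2$ and $h(X_2,Y_2)=-g(X_1,Y_1)$ one gets $a_1+a_2=1-p(1-q)-q(1-p)-2g(X_1,Y_1)^2\le 1$.

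The gap is in the third claim. The statement, like its hypothesis, concerns \emph{arbitrary} unit vectors $X,Y,Z,W\in T_{(x,y)}(M\times N)$. You only treat vectors that each lie in one factor, i.e.\ entries in an orthonormal frame adapted to the splitting. This is not merely a reading of the leftover index notation. The bound supplies the constant $\mathbf{K}$ of \cref{assumption3.9} for products such as $(\bb{S}^2)^{\times n^2}$. In \cref{sec:LaplacianAndGradientBound} the normal coordinates are rotated so that $v=\partial/\partial x^1$, where $v$ is an eigenvector of $\nabla^2\tilde F$ that in general has components in several factors. So coefficients in non-adapted orthonormal frames must also be controlled, and block-diagonality alone does not give this.

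The missing step is short. Write $X=X_1+X_2$, and similarly for $Y,Z,W$, with $a_i=|X_i|$, $b_i=|Y_i|$, $c_i=|Z_i|$, $d_i=|W_i|$. By \cref{RiemannProd}, multilinearity and the hypotheses applied to the normalized components,
\[
|R_{g\oplus h}(X,Y,Z,W)|\le \Lambda_1a_1b_1c_1d_1+\Lambda_2a_2b_2c_2d_2\le \max\{\Lambda_1,\Lambda_2\}\,(a_1b_1+a_2b_2),
\]
using $c_i,d_i\le 1$. Cauchy--Schwarz then gives $a_1b_1+a_2b_2\le (a_1^2+a_2^2)^{1/2}(b_1^2+b_2^2)^{1/2}=1$. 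This is exactly the paper's argument.
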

\begin{proof}
From the previous Lemma, we know that for any point $(x, y) \in M \times N$ and any four tangent vectors $X, Y, Z, W \in T_{(x, y)} M \times N$, it holds that
\[
R_{g \oplus h}(X, Y, Z, W) = R_g(X_1, Y_1, Z_1, W_1) + R_h(X_2, Y_2, Z_2, W_2),
\]
where we are using the same notation as in \cref{productmanifoldsmetric} for the decomposition of the tangent vectors. If either of the components is zero, the result follows automatically. Otherwise, we can write 
\begin{align*}
R_{g \oplus h}(X, Y, Z, W) &= R_g(X_1, Y_1, Z_1, W_1) + R_h(X_2, Y_2, Z_2, W_2)
\\&= a_1 b_1 c_1 d_1 R_g(\tilde X_1, \tilde Y_1, \tilde Z_1, \tilde W_1) + a_2 b_2c_2d_2 R_h(\tilde X_2, \tilde Y_2, \tilde Z_2, \tilde W_2),
\end{align*}
where $a_i, b_i, c_i, d_i$ correspond to the norms of $X_i, Y_i, Z_i, W_i$, respectively, with respect to $g$ and $h$, and $\tilde X_i, \tilde Y_i,\tilde Z_i,\tilde W_i$ correspond to the normalized vectors $X_i, Y_i, Z_i, W_i$, respectively. Thus
\begin{align*}
|R_{g \oplus h}(X, Y, Z, W)| &= a_1 b_1 c_1 d_1 |R_g(\tilde X_1, \tilde Y_1, \tilde Z_1, \tilde W_1)| + a_2 b_2c_2d_2 |R_h(\tilde X_2, \tilde Y_2, \tilde Z_2, \tilde W_2)|
\\&\leq \max\{\Lambda_1, \Lambda_2\} (a_1b_1c_1d_1 + a_2b_2c_2d_2)
\\&\leq \max\{\Lambda_1, \Lambda_2\} (a_1b_1 + a_2b_2)
\\&\leq \max\{\Lambda_1, \Lambda_2\} (a_1^2 + a_2^2)^{1/2}(b_1^2 + b^2_2)^{1/2}
\\&\leq \max\{\Lambda_1, \Lambda_2\},
\end{align*}
where we used that $0 < c_i, d_i < 1$ for $i \in \{1, 2\}$ and $a_1^2 + a_2^2 = b_1^2 + b_2^2 = 1$.

Now, let $\Pi$ be a plane in $T_{(x, y)} M \times N$. Let $\Pi = \mathit{span}\{X, Y\}$, for some orthonormal vectors $X, Y \in T_{(x,y)} M \times N$. We have that 
\begin{align*}
K_{g \otimes h}(\Pi) &= R_g(X_1,Y_1,X_1,Y_1) + R_h(X_2,Y_2,X_2,Y_2) 
\\&= K_g(X_1,Y_1) |X_1 \wedge Y_1|^2 + K_h(X_2,Y_2) |X_2 \wedge Y_2|^2,
\end{align*}
where $|X_1 \wedge Y_1|^2 := |X_1|_g^2|Y_1|_g^2 - g(X_1,Y_1)^2$ and $|X_2 \wedge Y_2|^2$ is defined analogously. Thus, it follows that
\[
K_{g \otimes h}(\Pi) \leq D_1 |X_1 \wedge Y_1|^2 + D_2 |X_2 \wedge Y_2|^2 \leq \max\{0, D_1, D_2\}. 
\]

For the last inequality, let $X \in T_{(x, y)} M \times N$. Then 
\begin{align*}
\textup{Ric}_{g \oplus h}(X, X) &= \sum_{i = 1}^{d_M+d_N} R_{g \oplus h}(X, e_i, X, e_i) \\&= \sum_{i = 1}^{d_M} R_g(X_1, e_i, X_1, e_i) + \sum_{i = d_M+1}^{d_M+d_N} R_h(X_2, e_i, X_2, e_i)
\\&= \textup{Ric}_g(X_1, X_1) + \textup{Ric}_h(X_2, X_2),
\end{align*}
where $\{e_i\}_{i = 1}^{d_M+d_N}$ is an orthonormal basis for $T_{(x, y)} M \times N$, which can be divided into an orthonormal basis for $T_x M$, and an orthonormal basis for $T_y N$, namely $\{e_i\}_{i = 1}^{d_M}$ and $\{e_i\}_{i = d_M+1}^{d_M+d_N}$, respectively. 
\end{proof}

There is one particular family of manifolds which have convenient curvature properties. These manifolds are known as Einstein manifolds.

\begin{definition}[Einstein manifold]
We say a manifold $M$ is \textup{Einstein} if
\[
\textup{Ric}_g = \lambda g,
\]
for some constant $\lambda \in \bb{R}$.
\end{definition}

\begin{proposition}[{\cite[Proposition 1.99]{besse2007einstein}}]
\label{productEinstein}
The product of two Riemannian manifolds which are Einstein with the same constant $\lambda$ is again Einstein with the same constant when endowed with the product metric.
\end{proposition}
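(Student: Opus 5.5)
The statement is \cref{productEinstein}: if $\textup{Ric}_g=\lambda g$ on $M$ and $\textup{Ric}_h=\lambda h$ on $N$, then $\textup{Ric}_{g\oplus h}=\lambda\,(g\oplus h)$. The plan is to compute the Ricci tensor of the product directly from \cref{RiemannProd}, exactly as in the last part of the proof of \cref{curvProduct}, but now for the full bilinear form rather than only its diagonal.

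First, fix $(x,y)\in M\times N$ and choose an orthonormal basis $\{e_i\}_{i=1}^{d_M+d_N}$ of $T_{(x,y)}(M\times N)$ adapted to the splitting $T_xM\oplus T_yN$. Concretely, $\{e_i\}_{i\le d_M}$ is an orthonormal basis of $T_xM$ and $\{e_i\}_{i>d_M}$ is an orthonormal basis of $T_yN$. This is possible because the product metric makes the two summands orthogonal (\cref{productmanifoldsmetric}).

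Next, for $X=(X_1,X_2)$ and $Y=(Y_1,Y_2)$ we write
\[
\textup{Ric}_{g\oplus h}(X,Y)=\sum_i R_{g\oplus h}(X,e_i,Y,e_i).
\]
We apply \cref{RiemannProd} to each term. For $i\le d_M$ the second component of $e_i$ vanishes, so the $h$-part of the curvature is zero; for $i>d_M$ the first component vanishes, so the $g$-part is zero. This yields
\[
\textup{Ric}_{g\oplus h}(X,Y)=\textup{Ric}_g(X_1,Y_1)+\textup{Ric}_h(X_2,Y_2).
\]

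Finally, we insert the Einstein hypotheses:
\[
\lambda g(X_1,Y_1)+\lambda h(X_2,Y_2)=\lambda\,(g\oplus h)(X,Y).
\]
The last equality holds by \cref{productmanifoldsmetric}. Since the point $(x,y)$ was arbitrary, this proves the claim.

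The only step needing care is the vanishing of the mixed terms. It relies on \cref{RiemannProd} being multilinear with components taken separately, so that a single zero component kills the whole term. That step is immediate, so no real obstacle is expected.
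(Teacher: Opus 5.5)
Your proof is correct. The paper gives no proof of its own here (it only cites Besse), but your argument is exactly the adapted orthonormal-frame computation at the end of the proof of \cref{curvProduct}, carried out for the full bilinear form rather than only the diagonal $\textup{Ric}_{g\oplus h}(X,X)$; the diagonal alone would in fact suffice by polarization, since both sides are symmetric.
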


\subsection{Injectivity and convexity radii}
\label{secinjectivity}

Before we study Riemannian submersions, let us first define the concept of injectivity and convexity radii of a Riemannian manifold. In fact, such quantities play a crucial role in the proof of the main results from \cref{SectionSuboptimality,SectionPI}. Having lower bounds for such quantities ultimately allows us to control the Poincaré and the log-Sobolev inequality constants studied in \cref{SectionPI,SectionPItoLSI}. 

In order to define the injectivity radius of a Riemannian manifold, we need to introduce the \textit{exponential map} and its inverse. 
\begin{definition}[Exponential map, logarithm]
\label{def:exponentialmap}
Let $(M, g)$ be a compact\footnote{We require $M$ to be compact for simplicity, in order to be able to define $\exp_x$ on $T_xM$, as in the general case the exponential map is only defined on a subset of $T_x M$, unless $M$ is complete.} Riemannian manifold. For every $p \in M$, the \textup{exponential map} at $p$, $\exp_p : T_p M \to M$, is defined as 
\[
\exp_p(v) := \gamma_{p, v}(1), 
\]
where $\gamma_{p, v}$ is the unique geodesic in $M$ such that $\gamma(0) = p$, $\gamma'(0) = v$. 

Whenever the exponential map is invertible, one can define its inverse, known as the \textit{logarithm} at $p$, denoted as $\log_p : U \to T_pM$, where $U \subset M$ is its domain.  
\end{definition}

\begin{remark}[Normal coordinates]
\label{def:normalcoordinates}
On any connected neighborhood of $0 \in T_xM$ on which $\exp_x$ is a diffeomorphism onto its image, the exponential map induces local coordinates in a natural way. These coordinates are known as \textup{normal coordinates centered at $x$} (see \cite[Chapter 5]{lee2018introductionRiemannian}). 
\end{remark}

\begin{definition}[Injectivity Radius]
\label{def:injectivityradius}
Let $(M, g)$ be a Riemannian manifold. The \textup{injectivity radius} of $M$, denoted by $i(M)$, is the greatest radius $r > 0$ for which the exponential map defined on the ball of radius $r$ centered at the origin of $T_x M$, denoted $\mathcal{B}_{T_xM}(r,0)$, is a diffeomorphism onto its image for all $x \in M$;
\[
i(M) := \inf_{\vphantom{\tilde{M}}x \in M} \sup \{r > 0 : \exp_x |_{\mathcal{B}_{T_xM}(r,0)} \text{ is a diffeomorphism onto its image}\}.
\]
\end{definition}

Bounding the injectivity radius is not an easy task in general. Nevertheless, for compact manifolds with upper bounded sectional curvature, it becomes easier.

\begin{proposition}[{\cite{Klingenberg}}]
\label{InjectivityRadiusBound}
Let $(M, g)$ be a compact Riemannian manifold, and let $K_g$ denote its sectional curvature. Assume that there exists a constant $D > 0$ such that $K_g \leq D$. Then
\[
i(M) \geq \min \left\{\frac{\pi}{\sqrt{D}}, \frac{1}{2}l(M)\right\},
\]
where $l(M)$ is the length of the shortest non-trivial periodic geodesic in $M$.
\end{proposition}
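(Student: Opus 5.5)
The plan is the classical argument of Klingenberg: express the injectivity radius through the cut locus, and split according to whether a nearest cut point is conjugate or not. First recall that for compact $M$, $i(M)=\inf_{x\in M} d_g(x,\mathrm{Cut}(x))$. The function $x\mapsto d_g(x,\mathrm{Cut}(x))$ is continuous, so by compactness there is a point $p$ attaining the infimum. Then pick $q\in\mathrm{Cut}(p)$ with $d_g(p,q)=i(M)=:\ell$; it exists because the cut locus is closed. By the standard structure theorem for cut points, one of two things holds: either (a) $q$ is conjugate to $p$ along some minimizing geodesic, or (b) there are at least two distinct minimizing unit-speed geodesics $\gamma_1,\gamma_2$ from $p$ to $q$.

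In case (a) I would invoke the Morse--Schoenberg (Rauch) comparison theorem. If $K_g\le D$, then along any geodesic there are no conjugate points before length $\pi/\sqrt{D}$. This follows by comparing Jacobi fields with those of the sphere of curvature $D$, or equivalently by showing the index form is positive definite on $[0,t]$ for $t<\pi/\sqrt D$. Hence $\ell\ge \pi/\sqrt{D}$.

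In case (b) I would show that $\gamma_1$ and $\gamma_2$ join to form a closed geodesic of length $2\ell$, which gives $\ell\ge l(M)/2$. The argument has two steps.
\begin{itemize}
\item First, at $q$ one has $\gamma_1'(\ell)=-\gamma_2'(\ell)$. If not, there is a vector $w\in T_qM$ making an acute angle with both $-\gamma_1'(\ell)$ and $-\gamma_2'(\ell)$. Moving $q$ slightly in the direction $w$ decreases the distance to $p$ along both geodesics, by the first variation formula. Since $q$ is not conjugate, the inverse function theorem applied to $\exp_p$ near the preimages of $q$ then produces cut points of $p$ strictly closer than $\ell$. This contradicts the choice of $q$.
\item Second, I use the minimality of $p$: the relation $p\in\mathrm{Cut}(q)$ and the symmetry $d_g(q,\mathrm{Cut}(q))\ge i(M)=\ell=d_g(q,p)$ show that $p$ is also a nearest cut point of $q$. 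Running the same first-variation argument from $q$ then gives $\gamma_1'(0)=-\gamma_2'(0)$.
\end{itemize}
Thus the loop $\gamma_1\ast\overline{\gamma_2}$ is smooth at both $p$ and $q$, and it is a nontrivial periodic geodesic of length $2\ell$.

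Combining the two cases yields $i(M)\ge\min\{\pi/\sqrt D,\, l(M)/2\}$. I expect the main obstacle to be the first bullet of case (b), the non-conjugate case. It requires a careful local argument that, absent the antipodal tangency, one can perturb $q$ to obtain closer cut points. The natural tool is the local diffeomorphism property of $\exp_p$ near nonconjugate preimages. One has to compare two local branches of $d_g(p,\cdot)$ and show that their coincidence set, which lies in the cut locus, enters the ball of radius $\ell$.
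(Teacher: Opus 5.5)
Your proposal is correct. It is the classical Klingenberg-lemma argument: reduce to a nearest cut point $q$ of a point $p$ realizing $i(M)$, bound the conjugate case by Rauch/Morse--Schoenberg, and in the non-conjugate case show that two minimizing geodesics close up into a smooth closed geodesic of length $2\ell$ using the symmetry $p\in\mathrm{Cut}(q)$. The paper gives no proof of its own here and simply cites Klingenberg, and this is that argument.

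The step you flag as the main obstacle is easier than you fear, and you can skip the coincidence set. Since $\ell=d_g(p,\mathrm{Cut}(p))$, the map $\exp_p$ is injective on the open $\ell$-ball of $T_pM$. The local inverse branches of $\exp_p$ near $\ell\gamma_1'(0)$ and $\ell\gamma_2'(0)$, evaluated at $\exp_q(sw)$ for small $s>0$, give two distinct preimages of norm $<\ell$, which is an immediate contradiction.
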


Before defining the convexity radius of a Riemannian manifold, let us define the \textit{cut locus} of a point in a Riemannian manifold, which is closely related to its injectivity radius. 

\begin{definition}[Cut locus]
Let $(M, g)$ be a complete and connected Riemannian manifold, and let $p \in M$, $v \in T_p M$. We define the \textup{cut time} associated with $p$ and $v$ as the maximum time $t$ for which the geodesic $\gamma : [0, t] \to M$, $\gamma_{(p, v)}(s) := \exp_p(sv)$ minimizes the distance between $p$ and $\exp_p(sv)$ for every $s \in [0, t]$,
\[
t_{\mathit{cut}}(p, v) := \sup \{t > 0 : \left.\gamma_{(p, v)}\right|_{[0, t]} \textup{ is minimizing}\}. 
\]
When $t_{\mathit{cut}}(p, v) < \infty$, we denote $\gamma_{(p, v)}(t_{\mathit{cut}}(p, v))$ as the \textup{cut point} of $p$ along $\gamma_{(p, v)}$.

This way, we define the \textup{cut locus} of $p \in M$ as the set of cut points of $p$ along all possible geodesics,
\[
C(p) := \{\gamma_{(p, v)}(t_{\mathit{cut}}(p, v)) : t_{\mathit{cut}}(p, v) < \infty, \ v \in T_p M, |v|_g = 1\}. 
\]
\end{definition}

From the definition of the cut locus, given $p \in M$ and $q \in C(p)$, it must be the case that $d_g(p, q) \geq i(M)$.

The next quantity of interest is the convexity radius, which is defined as the greatest radius for which any ball of such radius centered anywhere in the manifold is \textit{geodesically convex}. 

\begin{definition}
Given a manifold $M$ a subset $U \subset M$ is said to be \textup{geodesically convex}---or simply \textup{convex}---if, for any two points $x$ and $y$ in $U$, there exists a unique length-minimizing geodesic in $M$ connecting $x$ and $y$ which lies entirely in $U$. 
\end{definition}

\begin{definition}[Convexity radius]
The \textup{convexity radius} of a Riemannian manifold, denoted by $\mathit{conv}(M)$, is the greatest radius $r > 0$ for which the ball $\mathcal{B}(r,x)$ is geodesically convex for all $x \in M$; 
\[
\mathit{conv}(M) := \inf_{\vphantom{\tilde{M}} x \in M}\sup\{r > 0 : \mathcal{B}(s,x) \text{ is geodesically convex for every } 0 < s < r\}.  
\]
\end{definition}

Again, as for the injectivity radius, controlling the convexity radius is in general a complicated task, which becomes easier for compact manifolds with upper bounded sectional curvature.

\begin{proposition}[{\cite[Proposition 95]{berger2007panoramic}}]
\label{controlconvexityradius}
Let $(M, g)$ be a compact Riemannian manifold, and let $K_g$ denote its sectional curvature. Assume that there exists a constant $D > 0$ such that $K_g \leq D$, then 
\[
\mathit{conv}(M) \geq \frac{1}{2} \min\left\{\frac{\pi}{\sqrt{D}}, \frac{1}{2}l(M)\right\},
\]
where again $l(M)$ is the length of the shortest non-trivial periodic geodesic in $M$. 
\end{proposition}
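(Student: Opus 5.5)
This is a lower bound on the convexity radius (\cref{controlconvexityradius}). The plan is to reduce it to the injectivity radius estimate of \cref{InjectivityRadiusBound} together with Whitehead's classical convexity argument. Set
\[
r_0 := \frac{1}{2}\min\Big\{\frac{\pi}{\sqrt{D}}, \frac{1}{2}l(M)\Big\}.
\]
By \cref{InjectivityRadiusBound}, $i(M) \geq 2r_0$. Hence for every $x \in M$ and every $r \leq r_0$, any two points of $\mathcal{B}(r,x)$ are at distance less than $2r \leq i(M)$. So they are joined by a unique minimizing geodesic, and it depends smoothly on its endpoints. What remains is to show that this geodesic stays inside $\mathcal{B}(r,x)$.

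\textbf{Step 1: the distance function is strictly convex.} Let $\rho = d_g(x,\cdot)$. On $\mathcal{B}(i(M),x)\setminus\{x\}$ it is smooth by \cref{AuxLemma1}. Since $K_g \leq D$, the Hessian comparison theorem (Rauch) gives
\[
\nabla^2 \rho \geq \sqrt{D}\cot\big(\sqrt{D}\rho\big)\,\big(g - d\rho\otimes d\rho\big).
\]
For $\rho < \pi/(2\sqrt{D})$ the cotangent is positive. Consequently $\nabla^2(\rho^2/2) > 0$ on $\mathcal{B}(\pi/(2\sqrt{D}),x)$, and $r_0 \leq \pi/(2\sqrt{D})$.

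\textbf{Step 2: maximum principle along geodesics.} Take $p,q \in \mathcal{B}(r,x)$ and let $\gamma$ be the minimizing geodesic joining them. Its length is less than $2r$, so every point of $\gamma$ lies within distance $3r$ of $x$. This is where a naive argument loses a factor, since Step 1 only gives convexity of $\rho^2$ out to radius $\pi/(2\sqrt{D})$, not $3r$.

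To handle this, I would use a continuity argument in the endpoints, as in Whitehead's proof. Fix $p$ and move $q$ from $p$ out to its final position, so that $\gamma$ deforms continuously from a point. Let $\phi(t) = \rho(\gamma(t))^2$. As long as $\gamma$ stays inside the closed ball of radius $r$, Step 1 applies and $\phi'' > 0$. If $\gamma$ ever left the open ball $\mathcal{B}(r,x)$, then at the first parameter where it does, it would touch the sphere $\{\rho = r\}$ tangentially from inside. At that interior point $\phi$ has a local maximum equal to $r^2$, while the endpoints satisfy $\phi < r^2$, which contradicts $\phi'' > 0$. Hence $\max \phi < r^2$ for all endpoints in the ball, and every such ball is geodesically convex.

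The main obstacle is Step 1 near $\rho = 0$ and the rigorous form of the continuity argument. At the centre, $\rho$ itself is not smooth, but $\rho^2$ is smooth there and has Hessian $2g$ at $x$, so I would work with $\rho^2$ throughout. The uniqueness of the minimizing geodesic, needed for its continuous dependence on the endpoints, comes from $2r < i(M)$. Since the paper quotes this result from \cite{berger2007panoramic}, the proposal is to present the short argument above and otherwise cite the reference.
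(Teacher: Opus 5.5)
Your argument is correct. The paper gives no proof of its own and only cites Berger, and what you have written is the standard proof of this classical result: Klingenberg's bound (\cref{InjectivityRadiusBound}) gives unique, continuously varying minimizing geodesics when $2r< i(M)$, and Whitehead's continuity/maximum-principle argument with Hessian comparison for $\rho^2$ keeps them inside balls of radius below $\pi/(2\sqrt{D})$.

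Two small details to make explicit. The moving endpoint should travel inside $\mathcal{B}(r,x)$, e.g.\ radially through $x$. And by the definition of $\mathit{conv}$ you only need radii $r<r_0$, which sidesteps the degenerate tangential Hessian bound at $\rho=\pi/(2\sqrt{D})$.
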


\cref{InjectivityRadiusBound,controlconvexityradius} show that, in order to obtain lower bounds on the injectivity and convexity radii of a compact manifold, it is convenient to bound its sectional curvature and to control the length of its shortest non-trivial periodic geodesic.

\subsection{Riemannian submersions}
\label{chapterRiemannianSubmersions}
In this subsection we will gather some of the most basic definitions and results in the field of Riemannian submersions. For a more thorough analysis we refer the reader to classic references such as \cite{lee2018introductionRiemannian} or O'Neill's articles \cite{oneill1966fundamental, oneil1967submersions}.

\begin{definition}[Submersion]
Let $M$ and $B$ be two smooth manifolds, and let $F : M \rightarrow B$ be a smooth map. We say that $F$ is a \textup{submersion} if its differential $F_*|_p$ is surjective for all $p \in M$. $M$ and $B$ are often referred to as the \textup{total} and \textup{base} spaces, respectively. 
\end{definition}

Given a submersion, the tangent space of the total space at each point can be split into its \textit{vertical} and \textit{horizontal} components. 

\begin{definition}[Vertical and horizontal tangent spaces]
\label{verticalhorizontaltangent}
Given a submersion $\pi: M \rightarrow B$, we define the \textup{vertical tangent space} at $p \in M$ as $V_p M := \ker \pi_*|_p$. Note that $V_p M$ is a vector space of dimension $\dim(M) - \dim(B)$. This way, if $M$ is a Riemannian manifold, we can define the \textup{horizontal tangent space} as $H_p M = (V_p M)^{\bot}$. Thus,
\[
T_p M = V_p M \oplus H_p M,
\]
for every $p \in M$. 
\end{definition}

\begin{notation}
\label{notverticalhorizontal}
If $\pi: M \to B$ is a submersion and $X$ is a tangent vector at $x \in M$, $X \in T_x M$, we will denote the horizontal and vertical components of $X$ by $X^\mathcal{H}$ and $X^\mathcal{V}$, respectively. This way, 
\[
X = X^\mathcal{V} + X^\mathcal{H} \in V_x M \oplus H_x M.
\]
\end{notation}

Next, we will introduce \textit{Riemannian submersions}, which are submersions between manifolds that \textit{preserve} the metric between the total and the base space.

\begin{definition}[Riemannian Submersion]
Let $(M, g)$ and $(B, h)$ be two Riemannian manifolds, and let $\pi: M \rightarrow B$ be a submersion. We say $\pi$ is a \textup{Riemannian submersion} if for every $p \in M$, $\pi_*|_p$ maps $H_p M$ isometrically onto $T_{\pi(p)} B$. That is, for each $p \in M$ and $X, Y \in H_p M$, 
\[
\langle X, Y \rangle_g = \langle \pi_*|_p X,\pi_*|_p Y \rangle_h.
\]
\end{definition}

In particular, given a Riemannian submersion $\pi: (M, g) \to (B, h)$ the metric $g$ splits as 
\[
g = g^\mathcal{H} + g^\mathcal{V},
\]
where $g^\mathcal{H}$ and $g^\mathcal{V}$ correspond to the restrictions of $g$ on the horizontal and the vertical tangent spaces, respectively.

When given a Riemannian submersion between two Riemannian manifolds, one can relate the geodesics of the base and the total space using the following result. 

\begin{proposition}[{\cite[Corollary 2]{oneil1967submersions}}]
\label{LemmaHorizontalGeodesics}
Let $\pi: (M, g) \rightarrow (B, h)$ be a Riemannian submersion. If $\gamma$ is a geodesic on $M$ that is horizontal at some point $\gamma(t_0)$ ( i.e. $\gamma'(t_0) \in H_{\gamma(t_0)} M$), then $\gamma(t)$ is horizontal for every $t$ in the domain of $\gamma$, and so $\pi \circ \gamma$ is a geodesic of $B$. If $M$ is complete, every geodesic $ \tilde\gamma$ in $B$ is the projection of a horizontal geodesic $\gamma$ in $M$, i.e. $\tilde \gamma = \pi \circ \gamma$. 
\end{proposition}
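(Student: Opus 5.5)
The statement to prove is \cref{LemmaHorizontalGeodesics}, O'Neill's result on horizontal geodesics. The plan is to go through the local length-minimizing property of horizontal lifts, and then use the uniqueness of geodesics with given initial data.

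\textbf{Step 1: lengths.} Since $\pi_*$ is an isometry on horizontal spaces and is orthogonal projection-like on $TM$, for any curve $c$ in $M$ we have $|(\pi\circ c)'|_h = |c'^{\mathcal{H}}|_g \leq |c'|_g$. Hence $L(\pi\circ c)\leq L(c)$, with equality exactly when $c$ is horizontal. In particular $d_h(\pi(x),\pi(y))\leq d_g(x,y)$.

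\textbf{Step 2: lifts of base geodesics.} Let $\tilde\gamma$ be a geodesic of $B$ with $\tilde\gamma(t_0)=\pi(p)$. Its horizontal lift $\gamma$ through $p$ exists locally: it solves the ODE $\gamma' = (\tilde\gamma')^{\mathcal{H}}$, the horizontal lift of the velocity. Take a short segment of $\tilde\gamma$ inside a convex ball, so that it is the unique minimizer between its endpoints. Then $L(\gamma)=L(\tilde\gamma)$. Any curve $c$ in $M$ joining $\gamma(a)$ to $\gamma(b)$ projects to a curve joining the same endpoints in $B$, so $L(c)\geq L(\pi\circ c)\geq L(\tilde\gamma)=L(\gamma)$. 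Thus $\gamma$ is locally minimizing and parametrized proportionally to arc length, hence a geodesic of $M$. This shows that horizontal lifts of geodesics are geodesics.

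\textbf{Step 3: horizontality is preserved.} Let $\gamma$ be a geodesic of $M$ with $\gamma'(t_0)$ horizontal. Let $\tilde\gamma$ be the geodesic of $B$ with initial velocity $\pi_*\gamma'(t_0)$, and let $\sigma$ be its horizontal lift through $\gamma(t_0)$. Then $\sigma'(t_0)=\gamma'(t_0)$, because the horizontal lift of $\pi_*\gamma'(t_0)$ is $\gamma'(t_0)$ itself. By Step 2, $\sigma$ is a geodesic, so $\sigma=\gamma$ by uniqueness. Hence $\gamma$ is horizontal and $\pi\circ\gamma$ is a geodesic near $t_0$. The set of times at which $\gamma$ is horizontal is therefore open, and it is closed by continuity, so by connectedness it is the whole domain. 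The same open–closed argument applied to the projection shows $\pi\circ\gamma$ is a geodesic on the whole domain.

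\textbf{Step 4: completeness.} If $M$ is complete, the horizontal lift of a geodesic $\tilde\gamma$ of $B$ is a geodesic of $M$ and therefore extends for all time. Its projection is a geodesic agreeing with $\tilde\gamma$ near the starting point, hence equal to $\tilde\gamma$ everywhere.

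The main obstacle is Step 2. It requires the local existence of the horizontal lift, which follows from ODE theory in local trivializations. It also requires that the lift is a geodesic and not merely minimizing among horizontal curves; this is exactly what the projection–length inequality from Step 1 delivers.
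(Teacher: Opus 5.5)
Your argument is correct. The paper itself gives no proof of this statement: it is quoted from O'Neill, and the standard argument there is different from yours.

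\textbf{The cited route.} It works infinitesimally. Write $\alpha' = X + U$ with $X$ horizontal and $U$ vertical. Using the tensors $A$ and $T$, one splits $\nabla_{\alpha'}\alpha' = 0$ into a vertical part $\mathcal{V}\nabla_{\alpha'}U + T_U X = 0$ and a horizontal part $(\pi\circ\alpha)'' = -\pi_*(2A_X U + T_U U)$. Along the fixed geodesic $\alpha$, the vertical equation is a homogeneous linear first-order ODE for the section $U$ of the vertical bundle. So $U(t_0)=0$ forces $U\equiv 0$, and the horizontal equation then says $\pi\circ\alpha$ is a geodesic. With $U=0$, the same equations also show directly that horizontal lifts of geodesics are geodesics.

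\textbf{Your route.} It is metric instead. You use only four facts: $\pi_*$ is an isometry on horizontal spaces, so projection does not increase length; geodesics are locally minimizing; constant-speed minimizers are geodesics; and geodesics are unique given initial data. The price is constructing horizontal lifts by an ODE in adapted coordinates. This avoids the connection computations specific to submersions and the O'Neill tensors altogether. However, it says nothing about geodesics that are not horizontal, whereas O'Neill's equations describe exactly how their projections fail to be geodesics.

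\textbf{Completeness part.} This is the same in both approaches. The geodesic of $M$ with initial velocity $\textup{lift}_p\,\tilde\gamma'(t_0)$ is defined on all of $\mathbb{R}$. By the first part it is horizontal with geodesic projection, and it coincides with $\tilde\gamma$ by uniqueness in $B$.

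One thing you leave implicit is the existence of some $p\in\pi^{-1}(\tilde\gamma(t_0))$, i.e.\ surjectivity of $\pi$. This is built into O'Neill's definition of a submersion. When $B$ is connected it also follows from completeness by exactly your lifting argument, since $\pi(M)$ is then both open and closed.
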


One simple way to construct surjective Riemannian submersions given a Riemannian manifold is by considering quotient maps with respect to group actions. Therefore, we will first introduce some basic definitions about group actions on manifolds and then proceed to state the main results, which guarantee that certain group actions induce Riemannian submersions. 

\begin{definition}[Group action on a manifold]
Let $G$ be a group and $M$ be a manifold. A \textup{left action} of $G$ on $M$  is a map $G \times M \to M$, usually denoted as $(x, p) \mapsto x\cdot p$ verifying the following: 
\begin{itemize}
    \item $x_2 \cdot (x_1 \cdot p) = (x_2x_1)\cdot p$ for every $x_1, x_2 \in G$, $p \in M$.
    \item $e \cdot p = p$ for every $p \in M$, $e$ being the unit of $G$. 
\end{itemize}
Right actions are defined analogously. 
\end{definition}

Although group actions often give rise to Riemannian submersions, not every group action does. Nevertheless, if the group action fulfills certain properties, this is the case. 

\begin{definition}[Free action]
A group action of $G$ on a manifold $M$ is \textup{free} if for every $p \in M$, 
\[
\{x \in G : x p = p\} = \{e\},
\]
where $e$ denotes the unit of $G$. 
\end{definition}

\begin{definition}[Smooth action]
The action of a Lie group $G$ on a manifold $M$ is \textup{smooth} if the map $G \times M \rightarrow M$ given by $(x, p)\mapsto x \cdot p$ is smooth. 
\end{definition}

\begin{definition}[Proper action]
A group action of $G$ on a manifold $M$ is said to be \textup{proper} if the map 
\begin{align*}
G \times M &\rightarrow M \times M\\
(x, p) &\mapsto (x\cdot p, p)    
\end{align*} 
is proper, i.e., the preimage of every compact set is compact.
\end{definition}

In particular, every smooth action by a compact Lie group is proper (cf. \cite[Corollary C.16]{lee2018introductionRiemannian}). 

\begin{definition}[Isometric action]
A group action of $G$ on a Riemannian manifold $(M, g)$ is said to be \textup{isometric} if, for every $x \in G$, the map 
\begin{align*}
\alpha_x : M &\to M\\
p &\mapsto x \cdot p
\end{align*}
is an isometry on $(M, g)$ (i.e. a diffeomorphism such that for every point $p \in M$, its differential satisfies that $g(u, v) = g(d\alpha_x(u), d\alpha_x(v))$ for every $u, v \in T_p M$). 
\end{definition}

When a group action on a manifold satisfies all of the above conditions, the quotient map from the manifold to the orbit space is a Riemannian submersion. 

\begin{proposition}[{\cite[Corollary 2.29]{lee2018introductionRiemannian}}]
\label{existenceRiemannianSubmersionMetrics}
Suppose $(M, g)$ is a Riemannian manifold, and let $G$ be a Lie group acting smoothly, freely, properly and isometrically on $M$. Then  the orbit space $M/G$ has a unique smooth manifold structure and Riemannian metric such that $\pi : M \rightarrow M/G$ is a Riemannian submersion.
\end{proposition}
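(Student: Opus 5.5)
The statement is \cref{existenceRiemannianSubmersionMetrics}: a smooth, free, proper, isometric action makes $M/G$ a manifold and $\pi$ a Riemannian submersion. The proof splits into a smooth part and a metric part.

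For the smooth structure, I would invoke the quotient manifold theorem: a smooth, free, proper action of a Lie group $G$ on $M$ gives $M/G$ a unique topological and smooth manifold structure of dimension $\dim M - \dim G$ for which $\pi$ is a smooth submersion. The route is as follows. Properness makes $M/G$ Hausdorff. Then, at each $p$, take a slice $S$ through $p$ transverse to the orbit $G\cdot p$. Freeness and properness make the map $G\times S\to M$, $(x,s)\mapsto x\cdot s$, a diffeomorphism onto an open $G$-invariant neighborhood, once $S$ is small enough. These slices give the charts of $M/G$. Uniqueness holds because the smooth structure is characterized by $\pi$ being a smooth submersion: a map out of $M/G$ is smooth iff its composition with $\pi$ is smooth.

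For the metric, use the splitting $T_pM = V_pM\oplus H_pM$ with $V_pM=\ker d\pi_p = T_p(G\cdot p)$. Since $d\pi_p$ is surjective, $d\pi_p|_{H_pM}: H_pM\to T_{\pi(p)}(M/G)$ is a linear isomorphism. Define $h_{\pi(p)}(u,v) := g_p(\tilde u,\tilde v)$, where $\tilde u,\tilde v\in H_pM$ are the horizontal lifts of $u$ and $v$.

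The main point is that $h$ is well defined, i.e.\ independent of the choice of $p$ in the fiber. For $q = x\cdot p$, the map $\alpha_x$ is an isometry and commutes with $\pi$ ($\pi\circ\alpha_x=\pi$). It follows that $d\alpha_x$ maps $V_pM$ onto $V_qM$, since it preserves orbits. Being an isometry, it also maps $H_pM$ onto $H_qM$. Moreover $d\pi_q\circ d\alpha_x = d\pi_p$, so $d\alpha_x$ sends horizontal lifts at $p$ to horizontal lifts at $q$, and it preserves $g$. Hence the two definitions agree.

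Smoothness of $h$ comes from local sections. Around any point of $M/G$, pick a smooth local section $\sigma$ of $\pi$, which exists because $\pi$ is a submersion. For smooth vector fields $U,V$ on $M/G$, their horizontal lifts along $\sigma$ are smooth, because orthogonal projection onto $H$ is smooth and $d\pi|_H$ is a smooth bundle isomorphism. So $h(U,V)=g(\tilde U,\tilde V)\circ\sigma$ is smooth. Positive definiteness is inherited from $g$.

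With this $h$, the map $d\pi_p|_{H_pM}$ is an isometry by construction, so $\pi$ is a Riemannian submersion. Uniqueness of $h$ is forced, because this isometry condition determines $h$ on every tangent space.

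I expect the main obstacle to be the slice/chart construction in the quotient manifold theorem, specifically proving injectivity of $G\times S\to M$ near $\{e\}\times S$ from properness. This is standard, and I would cite it (e.g.\ Lee's smooth manifolds text) rather than reprove it. The metric part is then routine.
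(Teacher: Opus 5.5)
Your proposal is correct and follows the same route as the source the paper cites; the paper gives no proof of its own and defers to Lee's Corollary 2.29. As in Lee, you get the smooth structure from the quotient manifold theorem, define $h$ through horizontal lifts, obtain well-definedness from the isometric, fiber-preserving and fiber-transitive action, and obtain smoothness from local sections (this is exactly the content of Lee's Theorem 2.28), so citing the slice construction rather than reproving it leaves nothing missing.
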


The Riemannian submersions that arise as the orbit space of a suitable group action are known as $G$-bundles. Let us briefly introduce them, as they are discussed in \cref{SectionLift}. For a more in-depth study of fiber bundles or $G$-bundles, we refer to \cite{cohen1998topology}. 

\begin{definition}
Let $M$ and $B$ be two connected manifolds. We say that $\pi: M \to B$ is a \textup{locally trivial fibration, or fiber bundle with fiber $F$} if it satisfies all of the following conditions:
\begin{itemize}
    \item $\pi^{-1}(x) \simeq F$ for every $x \in B$.
    \item $\pi$ is surjective. 
    \item For every $x \in B$ there exists an open neighborhood $U \subset B$ and a homeomorphism---also known as a local trivialization---$\Phi: \pi^{-1}(U) \to U \times F$ such that the following diagram commutes:
\[\begin{tikzcd}
	{\pi^{-1}(U)} && {U \times F} \\
	\\
	& U
	\arrow["\Phi", from=1-1, to=1-3]
	\arrow["\pi"', from=1-1, to=3-2]
	\arrow["{\pi_1}", from=1-3, to=3-2]
\end{tikzcd},\]
where $\pi_1$ denotes the projection onto the first component.
\end{itemize}
\end{definition}

\begin{definition}
Let $G$ be a Lie group. A \textup{principal $G$-bundle} is a fiber bundle $\pi: M \to B$ with fiber $G$ satisfying all of the following conditions: 
\begin{itemize}
    \item There exists a free action $\mu$ of $G$ on $M$ such that 
    \[\begin{tikzcd}
	{M \times G} &&& M \\
	\\
	{B \times\{e\}} &&& B
	\arrow["\mu", from=1-1, to=1-4]
	\arrow["{\pi \times \epsilon}"', from=1-1, to=3-1]
	\arrow["\pi", from=1-4, to=3-4]
	\arrow["{=}"{description}, draw=none, from=3-1, to=3-4]
\end{tikzcd}\]
commutes, where $\epsilon(g) = e$ for every $g \in G$, and $e$ denotes the unit of $G$. 
    \item The induced action on the fibers
    \[
    \mu|_{\pi^{-1}(x)}: \pi^{-1}(x) \times G \to \pi^{-1}(x),
    \]
    is free and \textit{transitive}---i.e. for any two points $x_1, x_2 \in \pi^{-1}(x)$ there exists some $g \in G$ such that $x_1 = \mu(x_2,g)$. 
    \item  There exist local trivializations $\Phi: \pi^{-1}(U) \to U \times G$ such that the following diagrams commute:
    \[\begin{tikzcd}
	{\pi^{-1}(U) \times G} &&& {U \times G \times G} \\
	\\
	{\pi^{-1}(U)} &&& {U \times G}
	\arrow["{\Phi \times \textup{Id}}", from=1-1, to=1-4]
	\arrow["\mu"', from=1-1, to=3-1]
	\arrow["{\textup{Id} \times \cdot}", from=1-4, to=3-4]
	\arrow["\Phi", from=3-1, to=3-4]
\end{tikzcd},\]
where $\cdot$ denotes the usual group multiplication of $G$. 
\end{itemize}
\end{definition}

As we mentioned earlier, whenever the assumptions described in \cref{existenceRiemannianSubmersionMetrics} hold, the Riemannian submersion is also a principal $G$-bundle. 
\begin{proposition}[{\cite[Theorem 1.3]{cohen1998topology}}]
Let $M$ be a manifold and let $G$ be a compact Lie group acting freely and smoothly on $M$. Then  
\[
\pi: M \to M/G
\]
is a principal $G$-bundle. 
\end{proposition}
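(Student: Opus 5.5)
The statement to prove is the last proposition: if $G$ is a compact Lie group acting freely and smoothly on $M$, then $\pi: M \to M/G$ is a principal $G$-bundle. I would check the three defining conditions of a principal $G$-bundle in turn, after first fixing the manifold structure on $M/G$.

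\textbf{Manifold structure and the orbit conditions.} Since $G$ is compact, the action is proper (cf. \cite[Corollary C.16]{lee2018introductionRiemannian}). The quotient manifold theorem (the smooth part of \cref{existenceRiemannianSubmersionMetrics}, which uses no metric) then gives $M/G$ a unique smooth structure making $\pi$ a smooth submersion. The fibers $\pi^{-1}(x)$ are exactly the orbits $G\cdot p$. The action therefore commutes with $\pi$, which is the first diagram. Restricted to a fiber, the action is transitive by definition of an orbit and free by hypothesis. Moreover, the orbit map $g \mapsto g\cdot p$ is an injective immersion from a compact manifold, hence an embedding, so each fiber is diffeomorphic to $G$.

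\textbf{Equivariant local trivializations.} This is the main step. At $p \in M$, I would choose a slice: a small submanifold $S$ through $p$ transverse to the orbit, with $T_pM = T_p(G\cdot p) \oplus T_pS$. For instance, take $S$ to be the image under the exponential map of a small ball in the orthogonal complement of $T_p(G\cdot p)$, with respect to a $G$-invariant metric obtained by averaging against Haar measure. Consider the map $\Psi: G \times S \to M$, $(g,s) \mapsto g\cdot s$.
\begin{itemize}
\item[(i)] $\Psi$ is a local diffeomorphism at $(e,p)$ by the tangent-space splitting. By equivariance it is then a local diffeomorphism along all of $G \times \{p\}$.
\item[(ii)] After shrinking $S$, $\Psi$ is injective. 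If not, there would be sequences $g_n s_n = g_n' s_n'$ with $s_n, s_n' \to p$. Compactness of $G$ lets us extract convergent subsequences, and freeness forces the limit elements to agree, which contradicts local injectivity from (i).
\end{itemize}
Hence $\Psi$ is a diffeomorphism onto the open set $\pi^{-1}(U)$, where $U = \pi(S)$ and $\pi|_S: S \to U$ is a diffeomorphism. Setting $\Phi := (\pi|_S \times \mathrm{Id}) \circ \Psi^{-1}$ gives $\Phi(g\cdot s) = (\pi(s), g)$.

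\textbf{Conventions and the remaining diagrams.} The formula $\Phi(g\cdot s) = (\pi(s), g)$ makes the projection diagram commute. It also makes $\Phi$ equivariant, with $G$ acting on the $G$-factor of $U \times G$ by multiplication, so the last diagram commutes as well. One point needs care: the paper writes the action on the right, $M \times G \to M$, so I would either convert the left action via $p\cdot g := g^{-1}\cdot p$ or reorder $\Psi$ as $(s,g) \mapsto s\cdot g$, so that $\Phi$ intertwines right multiplication on $G$.

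The main obstacle is step (ii), injectivity of $\Psi$ near $G \times \{p\}$. This is exactly where compactness of $G$, or properness in general, is needed, and I would handle it with the sequential argument above.
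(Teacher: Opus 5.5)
The paper gives no argument for this proposition; it imports it from \cite[Theorem 1.3]{cohen1998topology}. Your slice construction is a correct, self-contained proof, so this is a genuinely different route from a bare citation.

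Compared with the citation, your proof has two advantages. It checks the paper's own definition of a principal $G$-bundle directly, namely a right action $\mu\colon M\times G\to M$ together with the commuting diagrams in that definition, rather than relying on Cohen's conventions matching it. It also shows exactly where compactness enters: extracting convergent subsequences in step (ii), and making the orbits embedded. As you note, properness is really all that is needed. The citation buys only brevity.

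A few steps deserve one more line each.
\begin{itemize}
\item To conclude that $\Psi$ is a diffeomorphism onto an open set, $d\Psi$ must be invertible on all of $G\times S$, not only along $G\times\{p\}$. The set where $d\Psi$ is invertible is open and, by equivariance, invariant under left translation in the $G$-factor. So it contains $G\times S'$ for a smaller slice $S'$.
\item The orbit map is an immersion because it is equivariant, hence of constant rank, and injective.
\item $\pi|_S$ is injective because $\Psi$ is. Its inverse is smooth because the $G$-invariant map $\mathrm{pr}_S\circ\Psi^{-1}$ descends through the surjective submersion $\pi$.
\item If you pass to a right action via $p\cdot g:=g^{-1}\cdot p$, the trivialization must be $\Phi(g\cdot s)=(\pi(s),g^{-1})$ to intertwine right multiplication. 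Your second option, $\Psi(s,g)=s\cdot g$, avoids this.
\item The paper's definition of a fiber bundle requires $M$ and $B$ to be connected, so you should assume $M$ connected. Then $M/G=\pi(M)$ is connected too.
\end{itemize}

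Incidentally, the $G$-invariant metric is never used. Any submanifold through $p$ whose tangent space complements $T_p(G\cdot p)$ works as $S$.
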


The Riemannian submersions that we consider in this text satisfy one more property; all of the fibers of $\pi$ are \textit{totally geodesic}. 

\begin{definition}[Totally geodesic submanifold]
A Riemannian submanifold $(\tilde{M}, \tilde{g})$ of $(M, g)$ is said to be \textup{totally geodesic} if every $g$-geodesic that is tangent to $\tilde{M}$ at some point $t_0$ stays in $\tilde{M}$ for all $t$ in some interval $(t_0-\varepsilon, t_0+\varepsilon)$, $\varepsilon > 0$.
\end{definition}

The definition of a totally geodesic submanifold can also be understood in the terms of the following proposition. 

\begin{proposition}[{\cite[Proposition 8.12]{lee2018introductionRiemannian}}]
\label{geodesicstotallygeodesic}
Let $(\tilde{M}, \tilde{g})$ be an embedded Riemannian submanifold of a Riemannian manifold $(M, g)$. If $\tilde{M}$ is totally geodesic in $M$, then every $\tilde{g}$-geodesic in $\tilde{M}$ is also a $g$-geodesic in $M$.
\end{proposition}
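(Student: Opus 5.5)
The plan is to verify the geodesic equation of $(M, g)$ for $\gamma$ pointwise. Being a $g$-geodesic is a local condition, namely $\nabla_t \gamma'(t) = 0$ for every $t$, where $\nabla$ is the Levi-Civita connection of $g$. So it suffices to fix an arbitrary time $t_0$ in the domain of a $\tilde g$-geodesic $\gamma$ and show that $\gamma$ agrees with some $g$-geodesic on a neighborhood of $t_0$.

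Fix $t_0$ and set $p := \gamma(t_0)$ and $v := \gamma'(t_0) \in T_p\tilde M$. Let $\sigma$ be the unique $g$-geodesic with $\sigma(t_0) = p$ and $\sigma'(t_0) = v$. Since $\sigma$ is tangent to $\tilde M$ at $t_0$, the definition of a totally geodesic submanifold gives some $\varepsilon > 0$ such that $\sigma(t) \in \tilde M$ for all $t \in (t_0-\varepsilon, t_0+\varepsilon)$. Because $\tilde M$ is embedded, $\sigma$ restricted to this interval is a smooth curve in $\tilde M$.

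The key step is to show that this restriction is also a $\tilde g$-geodesic. For this I will use the Gauss formula for the embedded Riemannian submanifold $(\tilde M, \tilde g)$: along any curve in $\tilde M$,
\[
\nabla_t \sigma' = \tilde\nabla_t \sigma' + \mathrm{II}(\sigma', \sigma'),
\]
where $\tilde\nabla_t \sigma'$ is tangent to $\tilde M$, $\mathrm{II}$ is the second fundamental form, and $\mathrm{II}(\sigma',\sigma')$ is normal to $\tilde M$. Since $\nabla_t \sigma' = 0$, taking tangential components yields $\tilde\nabla_t \sigma' = 0$ on $(t_0-\varepsilon, t_0+\varepsilon)$. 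Thus $\sigma$ and $\gamma$ are both $\tilde g$-geodesics with the same position and velocity at $t_0$, and uniqueness of geodesics in $(\tilde M, \tilde g)$ gives $\sigma = \gamma$ on a neighborhood of $t_0$.

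It follows that $\nabla_t\gamma'(t_0) = \nabla_t \sigma'(t_0) = 0$. Since $t_0$ was arbitrary, $\gamma$ is a $g$-geodesic on its whole domain, and no global continuation argument is needed. The only point requiring care is the use of the Gauss formula, specifically that the tangential projection of the ambient covariant derivative along a curve in $\tilde M$ is the induced covariant derivative. This is standard for embedded Riemannian submanifolds and I would cite it rather than reprove it.
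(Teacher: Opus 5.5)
Your proof is correct. The paper gives no argument of its own here: it quotes the statement from Lee, where it is one implication of the equivalence ``$\tilde M$ totally geodesic $\Leftrightarrow$ every $\tilde g$-geodesic is a $g$-geodesic $\Leftrightarrow$ $\mathrm{II}\equiv 0$''. The standard proof of that equivalence goes through the second fundamental form.

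It starts as you do, with the ambient geodesic $\sigma$ with $\sigma'(t_0)=v\in T_p\tilde M$ staying in $\tilde M$. It then reads off the \emph{normal} part of the Gauss formula, $0=(\nabla_t\sigma')^{\perp}=\mathrm{II}(v,v)$. Symmetry of $\mathrm{II}$ and polarization give $\mathrm{II}\equiv 0$. For an arbitrary $\tilde g$-geodesic $\gamma$ one then gets $\nabla_t\gamma'=\tilde\nabla_t\gamma'+\mathrm{II}(\gamma',\gamma')=0$ at once, without comparing two curves.

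You instead use the \emph{tangential} part to see that $\sigma$ is a $\tilde g$-geodesic, and then uniqueness of $\tilde g$-geodesics to identify $\sigma$ with $\gamma$ near $t_0$. Your route avoids the algebraic facts about $\mathrm{II}$ and works directly from the paper's definition of totally geodesic. Embeddedness enters exactly where you say, to view $\sigma$ as a smooth curve in $\tilde M$.

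The route through $\mathrm{II}$ costs the polarization step but yields the pointwise characterization $\mathrm{II}\equiv 0$. That is the form the paper relies on elsewhere: for a Riemannian submersion, $(\nabla_U V)^{\mathcal H}$ with $U,V$ vertical is the second fundamental form of the fibres. This is what \cref{vanishingT} and the proof of \cref{prop:inducedtotgeodfibers} use.
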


\begin{remark}
\label{rmk:totallygeodesic}
In particular, given a Lie group $G$ endowed with a bi-invariant metric $g$ and some closed subgroup $H \subset G$, it holds that there exists a metric $h$ on $G/H$ such that $\pi: (G, g) \mapsto (G/H, h)$ is a Riemannian submersion with totally geodesic fibers \cite[Section 5]{oneill1966fundamental}. 
\end{remark}

\subsubsection{Curvature of Riemannian submersions}
\label{curvatureriemanniansub}
Let us now examine the relationship between the curvature of the total and the base space of a Riemannian submersion. For the purpose of this work, we will only define the so-called O'Neill tensor $T$. For an in-depth study of the O'Neill tensors, see \cite{oneill1966fundamental}.
\begin{definition}[O'Neill tensor]
\label{defOneilTensors}
Let $\pi : M \to B$ be a Riemannian submersion. For any two vector fields $E, F \in \mathfrak{X}(M)$, we define the \textup{O'Neill tensor} $T$ as
\[
T_E F := (\nabla_{E^\mathcal{V}} (F^\mathcal{V}))^\mathcal{H} + (\nabla_{E^\mathcal{V}} (F^\mathcal{H}))^\mathcal{V},
\]
where $\cdot ^\mathcal{V}$ and $\cdot^{H}$ denote the vertical and horizontal parts of a vector field, respectively.
\end{definition}

The tensor $T$ allows us to relate the sectional curvature of the base space and the fibers of a Riemannian submersion with that of the total space.
\begin{lemma}[{\cite[Corollary 1]{oneill1966fundamental}}]
\label{CurvaturaSubmersion}
Let $\pi : (M, g) \rightarrow (B, h)$ be a Riemannian submersion. Given two horizontal orthonormal vectors $X, Y \in H_xM$ and two vertical orthonormal vectors $V, W \in V_xM$ at some point $x \in M$, it holds that
\[
K_g(X, Y) = K^*_h(\pi_* X, \pi_* Y) - \frac{3}{4}\left|[X, Y]^\mathcal{V}\right|_g^2,
\]
and
\[
K_g(V, W) = \hat{K}(V, W) - \langle T_V V, T_W W\rangle_g + |T_V W|_g^2,
\]
where $K_g$ is the sectional curvature of $M$, $\hat{K}$ is the sectional curvature of the fibers of $\pi$, $K^*_h$ is the sectional curvature of $B$, and $T$ is the O'Neill tensor.
\end{lemma}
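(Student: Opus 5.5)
The plan is to treat the two formulas separately, since the vertical one is a pure submanifold statement and the horizontal one needs the structure of basic vector fields.

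\textbf{Vertical formula.} Each fiber $\pi^{-1}(\pi(x))$ is an embedded Riemannian submanifold of $(M,g)$, with normal bundle equal to the horizontal distribution. Its second fundamental form on vertical fields is
\[
\mathrm{II}(V,W) = (\nabla_V W)^\mathcal{H} = T_V W,
\]
by \cref{defOneilTensors}, since $V = V^\mathcal{V}$ and $W = W^\mathcal{V}$. The Gauss equation for a submanifold gives, for orthonormal vertical $V, W$,
\[
\hat K(V,W) = K_g(V,W) + \langle \mathrm{II}(V,V),\mathrm{II}(W,W)\rangle_g - |\mathrm{II}(V,W)|_g^2.
\]
Substituting $\mathrm{II} = T$ and rearranging gives the second identity directly. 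The only care needed is to check that the sign convention for $R$ matches the one in which $K_g(V,W) = \langle R(V,W)W,V\rangle_g$.

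\textbf{Horizontal formula.} Extend $X, Y$ to \emph{basic} horizontal fields, that is, horizontal lifts of vector fields $X_*, Y_*$ on $B$ with $\pi_*X = X_*$ and $\pi_*Y = Y_*$. I would establish, or take from O'Neill, the following standard facts.
\begin{enumerate}
\item $\langle X,Y\rangle_g = \langle X_*,Y_*\rangle_h\circ\pi$.
\item $[X,Y]^\mathcal{H}$ is the lift of $[X_*,Y_*]$.
\item $(\nabla_X Y)^\mathcal{H}$ is the lift of $\nabla^*_{X_*}Y_*$. This follows from Koszul's formula, since every term involving only basic fields is $\pi$-related.
\item $(\nabla_X Y)^\mathcal{V} = \tfrac12 [X,Y]^\mathcal{V}$. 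The vertical part is skew in $X, Y$, because $\langle \nabla_X X, V\rangle_g = -\langle X, \nabla_V X\rangle_g = -\tfrac12 V\langle X,X\rangle_g = 0$, using $[V,X]$ vertical and $\langle X,X\rangle_g$ constant on fibers.
\item For vertical $V$, $\langle \nabla_X V, Y\rangle_g = -\langle V, \nabla_X Y\rangle_g = -\tfrac12\langle V,[X,Y]\rangle_g$.
\end{enumerate}

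With these facts in hand, expand
\[
\langle R(X,Y)Y,X\rangle_g = \langle \nabla_X\nabla_Y Y - \nabla_Y\nabla_X Y - \nabla_{[X,Y]}Y, X\rangle_g.
\]
Split each inner covariant derivative into its horizontal and vertical parts. Fact 4 gives $(\nabla_Y Y)^\mathcal{V} = 0$, so the first term reduces to its horizontal-lift contribution. In the second term, the vertical part $\tfrac12[X,Y]^\mathcal{V}$ contributes, via fact 5,
\[
\langle \nabla_Y \tfrac12[X,Y]^\mathcal{V}, X\rangle_g = \tfrac14|[X,Y]^\mathcal{V}|_g^2.
\]
In the third term, split $[X,Y] = [X,Y]^\mathcal{H} + [X,Y]^\mathcal{V}$. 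The vertical piece contributes $\tfrac12|[X,Y]^\mathcal{V}|_g^2$, again via fact 5 applied after noting $\nabla_V Y - \nabla_Y V = [V,Y]$ is vertical. All the remaining purely horizontal terms are $\pi$-related to the corresponding terms of $\langle R^*(X_*,Y_*)Y_*,X_*\rangle_h$, using facts 1–3.

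Collecting, the correction is $-\tfrac14 - \tfrac12 = -\tfrac34$ times $|[X,Y]^\mathcal{V}|_g^2$. The result is a tensor identity, so it is independent of the chosen basic extensions and gives the first formula at $x$.

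\textbf{Main obstacle.} The hardest step is this bookkeeping in the horizontal case: tracking which vertical contributions survive and getting exactly the coefficient $\tfrac34$. I would organise it through the tensor $A_XY := (\nabla_X Y)^\mathcal{V} = \tfrac12[X,Y]^\mathcal{V}$ and its skew-adjointness $\langle A_X Y, V\rangle_g = -\langle Y, A_X V\rangle_g$. This reduces the computation to O'Neill's identity $K_g = K^* - 3|A_XY|_g^2$.
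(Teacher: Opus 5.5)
Your proposal is correct. For the vertical formula you apply the Gauss equation to the fibers, whose second fundamental form is $T_VW=(\nabla_VW)^{\mathcal{H}}$; for the horizontal formula you compute with basic fields, where $(\nabla_XY)^{\mathcal{V}}=\tfrac12[X,Y]^{\mathcal{V}}$ produces the corrections $-\tfrac14-\tfrac12=-\tfrac34$. The paper gives no proof of its own and simply cites O'Neill's Corollary 1, and your argument is the standard derivation of that result; your closing point that $[X,Y]^{\mathcal{V}}$ at $x$ does not depend on the horizontal extensions is the same point the paper makes in the remark following the lemma.
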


Notice that, while the sectional curvature takes a pair of tangent vectors as arguments, both the Lie derivative and the O'Neill tensor take two vector fields as arguments. Nevertheless, via a slight abuse of notation, the expressions are well-defined. 

\begin{remark}
\label{remarkcurvaturasubmersion}
The expressions shown in \cref{CurvaturaSubmersion} are well-defined, as for any two horizontal vector fields $X, Y \in \mathfrak{X}(M)$, and $p \in M$
\[
\left.\left([X, Y]^\mathcal{V}\right)\right|_p
\]
only depends on $X_p$ and $Y_p$. Similarly, for any two vertical vector fields $U, V \in \mathfrak{X}(M)$, and $p \in M$
\[
\left.\left(T_U V\right)\right|_p
\]
only depends on $U_p$ and $V_p$. 
\end{remark}
\begin{proof}
Choosing an adapted orthonormal frame on a neighborhood of $p$, we can write
\[
X = X^i(x) \left.\frac{\partial}{\partial h^i}\right|_x,\quad Y = Y^i(x) \left.\frac{\partial}{\partial h^i}\right|_x,
\]
where $\{\left.\frac{\partial}{\partial h^i}\right|_x\}$ form an orthonormal basis of $H_x M$, for every $x$ in a neighborhood of $p$. Using this frame, 
\[
\left.[X, Y]\right|_p = \sum_{i, j} X^i(p) Y^j(p)\Big[\frac{\partial}{\partial h^i}\bigg|_p,\, \frac{\partial}{\partial h^j}\bigg|_p\Big] + \sum_{j} X(Y^j)(p)\frac{\partial}{\partial h^j}\bigg|_p  - \sum_{i} Y(X^i)(p) \frac{\partial}{\partial h^i}\bigg|_p.
\]
Note that the two last summands in this expression are horizontal, thus
\[
\left(\left.[X, Y]\right|_p\right)^\mathcal{V} = \left(\sum_{i, j} X^i(p) Y^j(p)\Big[\frac{\partial}{\partial h^i}\bigg|_p, \frac{\partial}{\partial h^j}\bigg|_p\Big]\right)^\mathcal{V},
\]
which only depends on the point $p$. 

Working in the same frame, and using the definition of $T$
\[
(T_U V)|_p = \left.(\nabla_{U} V)^\mathcal{H}\right|_p = (U(V^k))(p) + U^i(p) V^j(p) \Gamma^k_{ij}(p)) \left.\frac{\partial}{\partial h^k}\right|_p.
\]
Note that, since $V$ is vertical, $V^k(x) \equiv 0$ for every $k$ (we are only summing over the horizontal components). Thus 
\[
(T_U V)|_p = (U^i(p) V^j(p) \Gamma^k_{ij}(p)) \left.\frac{\partial}{\partial h^k}\right|_p,
\]
which only depends on $p$.
\end{proof}

Let us conclude with a useful result, which can be seen in \cite{escobales1975riemannian}, and relates the O'Neill tensor $T$ to the fibers of a Riemannian submersion being totally geodesic.
\begin{lemma}
\label{vanishingT}
Let $\pi: (M, g) \to (B, h)$ be a Riemannian submersion. $\pi$ has totally geodesic fibers if and only if the O'Neill tensor $T$ from \cref{defOneilTensors} vanishes identically.
\end{lemma}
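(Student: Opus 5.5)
The plan is to argue through the second fundamental form of the fibers, since both conditions can be read off it. For a Riemannian submersion $\pi: (M,g) \to (B,h)$, let $U, V$ be vertical vector fields. By \cref{defOneilTensors}, $T_U V = (\nabla_U V)^{\mathcal{H}}$. Because the fiber through a point is an embedded submanifold with vertical tangent spaces and horizontal normal spaces, this is exactly the second fundamental form $\mathrm{II}(U,V)$ of the fiber. The first step is therefore to record that $T$ restricted to vertical arguments coincides with $\mathrm{II}$. The second step is to check that the remaining components of $T$ are determined by this restriction. For $U$ vertical and $X$ horizontal,
\[
T_U X = (\nabla_U X)^{\mathcal{V}}.
\]
Taking inner products with a vertical $V$ and using metric compatibility together with $g(X,V)=0$, we get
\[
g(T_U X, V) = -g(X, \nabla_U V) = -g(X, T_U V).
\]
So $T_U X$ is the negative adjoint of $T_U$ on vertical vectors; that is, it is the shape operator. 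Finally, $T_E F$ depends only on $E^{\mathcal{V}}$ by definition, and it is tensorial, because the derivative part of $\nabla_{E^{\mathcal{V}}}$ is killed by the projections, just as in the computation in \cref{remarkcurvaturasubmersion}. Together these give: $T \equiv 0$ if and only if $T_U V = 0$ for all vertical $U, V$, which holds if and only if every fiber has vanishing second fundamental form.

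It remains to prove that a submanifold has $\mathrm{II} \equiv 0$ if and only if it is totally geodesic in the sense of the definition above. I would prove both directions through the Gauss formula $\nabla_U V = \hat\nabla_U V + \mathrm{II}(U,V)$, where $\hat\nabla$ is the induced connection on the fiber.

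For the direction $\mathrm{II}=0 \Rightarrow$ totally geodesic, take a $g$-geodesic tangent to the fiber at $t_0$. Let $\sigma$ be the geodesic of the fiber with the same initial data. When $\mathrm{II}=0$, the Gauss formula gives $\nabla_{\sigma'}\sigma' = \hat\nabla_{\sigma'}\sigma' = 0$, so $\sigma$ is also a $g$-geodesic. Uniqueness of geodesics then forces the original geodesic to equal $\sigma$ on an interval, so it stays in the fiber.

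For the converse, given a vertical vector $v$, take the $g$-geodesic $\gamma$ with $\gamma'(0)=v$. By total geodesy $\gamma$ stays in the fiber for small $t$, so it is a curve in the fiber with $\nabla_{\gamma'}\gamma' = 0$. Splitting with the Gauss formula yields $\mathrm{II}(v,v)=0$. Since $\mathrm{II}$ is symmetric, polarization gives $\mathrm{II} \equiv 0$.

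The main subtlety is the tensoriality and identification step: confirming that $T_U X$ carries no information beyond $T_U V$, so that vanishing of the purely vertical part forces all of $T$ to vanish. The adjointness computation handles this. Everything else is the standard correspondence between a vanishing second fundamental form and total geodesy, for which one may also cite \cite{oneill1966fundamental,escobales1975riemannian}.
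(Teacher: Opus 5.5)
Your proof is correct. The paper does not prove this lemma itself; it only cites \cite{escobales1975riemannian}, and the fact is classical (already in O'Neill \cite{oneill1966fundamental}).

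Your argument is the standard self-contained one:
\begin{itemize}
\item On vertical pairs, $T_UV=(\nabla_UV)^{\mathcal H}$ is the second fundamental form of the fibre.
\item By metric compatibility, the mixed component $T_UX=(\nabla_UX)^{\mathcal V}$ is minus the adjoint of that second fundamental form.
\item Horizontal first arguments give zero by the definition of $T$.
\end{itemize}
So $T\equiv0$ reduces to $\mathrm{II}\equiv0$. The Gauss formula then gives the equivalence with total geodesy, in exactly the local sense of the paper's definition: geodesics tangent to a fibre stay in it on an interval.

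Your adjointness step is the same computation the paper later uses in the proof of \cref{prop:inducedtotgeodfibers} to pass from $(\nabla_UV)^{\mathcal H}=0$ to $T=0$. Your write-up therefore makes that argument self-contained too. The citation buys brevity. Your route checks the equivalence against the paper's own two-term definition of $T$ and its own definition of total geodesy, instead of relying on the reference's conventions matching.

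Two small points are worth stating explicitly. First, in the converse direction, the geodesic lying in the fibre is smooth as a curve into the fibre because fibres of a submersion are embedded; this is what lets you apply the Gauss formula along it. Second, the symmetry of $\mathrm{II}$ needed for polarization comes from integrability of the vertical distribution, since $[U,V]$ is vertical.
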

This result allows us to conclude that, whenever a Riemannian submersion has totally geodesic fibers, the sectional curvature of the fibers is the same as that of the total space. 

Moreover, using \cref{vanishingT}, we can guarantee that certain \textit{nested} submersions \textit{inherit} totally geodesic fibers. 
\begin{proposition}
\label{prop:inducedtotgeodfibers}
Let $(M_1, g_1)$, $(M_2, g_2)$ and $(M_3, g_3)$ be three Riemannian manifolds, and let $\pi_1, \pi_2$ and $\pi_3$ be three Riemannian submersions,
\begin{align*}
\pi_1 : (M_1, g_1) \to (M_2, g_2),\\
\pi_2: (M_2, g_2) \to (M_3, g_3),\\
\pi_3: (M_1, g_1) \to (M_3, g_3),
\end{align*}
such that $\pi_3 = \pi_2 \circ \pi_1$. If $\pi_3$ has totally geodesic fibers, then $\pi_2$ has totally geodesic fibers. 
\end{proposition}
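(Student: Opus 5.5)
The plan is to argue directly from the definition of totally geodesic submanifolds, using horizontal lifts of geodesics through $\pi_1$ and \cref{LemmaHorizontalGeodesics}. The O'Neill tensor characterisation of \cref{vanishingT} is not needed.

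\textbf{Setup.} Fix $z \in M_3$ and consider the fibers
\[
N := \pi_2^{-1}(z) \subset M_2, \qquad P := \pi_3^{-1}(z) = \pi_1^{-1}(N) \subset M_1 .
\]
The identity $P = \pi_1^{-1}(N)$ holds because $\pi_3 = \pi_2 \circ \pi_1$. By hypothesis, $P$ is totally geodesic in $(M_1, g_1)$. We must show that every $g_2$-geodesic $\gamma$ with $\gamma(t_0) \in N$ and $\gamma'(t_0) = v \in T_{\gamma(t_0)} N$ stays in $N$ for $t$ in some interval around $t_0$.

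\textbf{Step 1: lift the initial data.} Pick any $p \in \pi_1^{-1}(\gamma(t_0))$; such a point lies in $P$. Let $\tilde v \in H_p M_1$ be the unique $\pi_1$-horizontal vector with $d\pi_1|_p(\tilde v) = v$. Then
\[
d\pi_3|_p(\tilde v) = d\pi_2(v) = 0,
\]
since $v$ is vertical for $\pi_2$. Hence $\tilde v \in \ker d\pi_3|_p = T_p P$, i.e.\ $\tilde v$ is tangent to $P$.

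\textbf{Step 2: lift the geodesic.} Let $\sigma$ be the $g_1$-geodesic with $\sigma(t_0) = p$ and $\sigma'(t_0) = \tilde v$. Since $\sigma$ is horizontal at $t_0$ for $\pi_1$, \cref{LemmaHorizontalGeodesics} shows that $\sigma$ stays $\pi_1$-horizontal and that $\pi_1 \circ \sigma$ is a $g_2$-geodesic. This geodesic has initial point $\gamma(t_0)$ and initial velocity $v$. By uniqueness of geodesics, $\pi_1 \circ \sigma = \gamma$ on their common interval of definition.

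\textbf{Step 3: conclude.} Because $\sigma'(t_0)$ is tangent to the totally geodesic submanifold $P$, the geodesic $\sigma$ remains in $P$ for $t \in (t_0 - \varepsilon, t_0 + \varepsilon)$. Therefore, on that interval,
\[
\gamma(t) = \pi_1(\sigma(t)) \in \pi_1(P) = N .
\]
So $N$ is totally geodesic. Since $z$ was arbitrary, every fiber of $\pi_2$ is totally geodesic.

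\textbf{Main point to check.} The delicate step is applying \cref{LemmaHorizontalGeodesics} without any completeness assumption. It is only used locally, for a geodesic horizontal at a single time, and that part of O'Neill's result holds for arbitrary Riemannian submersions. If one prefers to avoid this, the alternative is to use \cref{vanishingT}: show that $T^{\pi_2}$ vanishes by expressing $\pi_1$-horizontal lifts of $\pi_2$-vertical fields as $\pi_3$-vertical fields and using $T^{\pi_3} = 0$. The geodesic argument above is simpler, so I would present that.
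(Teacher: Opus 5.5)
Your proof is correct, but it follows a different route from the paper's.

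The paper argues infinitesimally through \cref{vanishingT}. For $\pi_2$-vertical fields $U,V$ on $M_2$ it takes their $\pi_1$-basic lifts $\tilde U,\tilde V$ and notes that these are $\pi_3$-vertical. Hence $\nabla^{M_1}_{\tilde U}\tilde V$ is $\pi_3$-vertical, because $\pi_3$ has totally geodesic fibers. O'Neill's identity $d\pi_1(\nabla^{M_1}_{\tilde U}\tilde V)=\nabla^{M_2}_U V$ then gives $d\pi_2(\nabla^{M_2}_U V)=0$. Finally, metric compatibility shows that $\nabla^{M_2}_U X$ is horizontal for horizontal $X$, so the O'Neill tensor of $\pi_2$ vanishes. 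This is exactly the alternative you mention at the end.

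Your argument instead works directly with the geodesic definition of totally geodesic submanifolds that the paper adopts. You use it both for the hypothesis on the fibers of $\pi_3$ and for the conclusion about the fibers of $\pi_2$. So you never need to pass between that definition and the vanishing of $T$, and you skip the skew-symmetry step. Apart from uniqueness of geodesics, your only input is the first half of \cref{LemmaHorizontalGeodesics}, which, as you correctly note, is local and needs no completeness. In exchange, the paper's version is a pointwise tensor identity with no bookkeeping of geodesic domains, at the cost of relying on O'Neill's lemma on basic fields and on \cref{vanishingT}.

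One small caveat applies to both arguments. Your Step 1 needs $\pi_1^{-1}(\gamma(t_0))\neq\emptyset$; in Step 3 only $\pi_1(P)\subseteq N$ is actually used, not equality. Likewise, the paper's lifts only control points of $\pi_1(M_1)$. So both tacitly assume that $\pi_1$ is surjective. Without that assumption the statement can fail: take $\pi_1$ to be the inclusion of an open set on which the $\pi_2$-fibers are totally geodesic, while they are not totally geodesic elsewhere. Surjectivity does hold in the paper's application $\textup{U}(n)\to\textup{V}_k(\bb{C}^n)$.
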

\begin{proof}
We want to show that the O'Neill tensor $T$ associated with $\pi_2$ vanishes identically. To do so, we will show that for any two vertical vector fields  with respect to $\pi_2$, $U, V \in \mathfrak{X}(M_2)$, it holds that 
\begin{equation}
\label{eq:vanishingverticalcon}
(\nabla^{M_2}_U V)^{\mathcal{H}} = 0.    
\end{equation}
Should this hold, the result would follow: using the compatibility of the Levi-Civita connection $\nabla^{M_2}$, we know that for every two vertical vector fields with respect to $\pi_2$, $U, V \in \mathfrak{X}(M_2)$ and any horizontal vector field with respect to $\pi_2$, $X \in \mathfrak{X}(M_2)$, 
\[
0 = U g_2(V, X) = g_2(\nabla^{M_2}_U V, X) + g_2(V, \nabla^{M_2}_U X) = g_2(V, \nabla^{M_2}_U X),
\]
where the first equality follows from $V$ being vertical and $X$ horizontal, and the last equality follows from the assumption that \cref{eq:vanishingverticalcon} holds. This allows us to conclude that $\nabla^{M_2}_U X$ is always horizontal, and so $T$ vanishes (cf. \cref{defOneilTensors}). 

Let us then prove that \cref{eq:vanishingverticalcon} holds. Given $U, V \in \mathfrak{X}(M_2)$---which are vertical with respect to $\pi_2$ by assumption---we can \textit{lift} them to $M_1$, and obtain two basic lifts with respect to $\pi_1$, $\tilde U, \tilde V \in \mathfrak{X}(M_1)$ which are such that $d\pi_1(\tilde U) = U$ and $d\pi_1(\tilde V) = V$. Note that $\tilde U$ and $\tilde V$ are vertical with respect to $\pi_3$, 
\begin{align*}
d\pi_3(\tilde U) = d\pi_2(d\pi_1(\tilde U)) = d\pi_2(U) = 0,\\
d\pi_3(\tilde V) = d\pi_2(d\pi_1(\tilde V)) = d\pi_2(V) = 0,
\end{align*}
where the last equality follows from the assumption that $U$ and $V$ are vertical with respect to $\pi_2$. Now, from $\pi_3$ having totally geodesic fibers, we know that $\nabla^{M_1}_{\tilde U} \tilde V$ is vertical with respect to $\pi_3$. Moreover, it is known \cite[Lemma 1]{oneill1966fundamental} that $\nabla^{M_1}_{\tilde U} \tilde V$ is such that $d\pi_1(\nabla^{M_1}_{\tilde U} \tilde V) = \nabla^{M_2}_U V$, and so
\[
0 = d\pi_3(\nabla^{M_1}_{\tilde U} \tilde V) = d\pi_2(d\pi_1(\nabla^{M_1}_{\tilde U} \tilde V)) = d\pi_2(\nabla^{M_2}_U V),
\]
finishing the proof.
\end{proof}

\subsubsection{Curvature of complex Stiefel manifolds}
\label{curvaturestiefel}

\begin{definition}[Stiefel manifold]
Given $k, n \in \bb{N}$ with $1 \leq k \leq n$, the set of linear isometries from $\bb{C}^k$ to $\bb{C}^n$ is a smooth manifold known as the \textup{complex Stiefel manifold} (cf. \cite{autenried2014sub}),
\[
\textup{V}_k(\bb{C}^n) := \{X \in \bb{C}^{n \times k} : X^\dagger X = \mathds{1}\}.
\]
\end{definition}

We will restrict our attention to the case when $1 < k < n$, as $\textup{V}_n(\bb{C}^n) = \textup{U}(n)$, and $\textup{V}_1(\bb{C}^n)  = \bb{S}^{2n-1}$. These manifolds can be seen as the quotient space $\textup{U}(n)/\textup{U}(n-k)$, where $\textup{U}(n-k)$ acts on $\textup{U}(n)$ as 
\begin{align}
\label{eq:defactionstiefel}
\begin{split}
\textup{U}(n-k) \times \textup{U}(n) &\to \textup{U}(n)\\
(X, U) &\mapsto U\begin{pmatrix}
\mathds{1}_k & 0\\
0 & X
\end{pmatrix}.
\end{split}
\end{align}
It is easy to show that this action is free: let $U \in \textup{U}(n)$ and $X \in \textup{U}(n-k)$. If
\[
U\begin{pmatrix}
\mathds{1}_k & 0\\
0 & X
\end{pmatrix} = U,
\]
then 
\[
\begin{pmatrix}
\mathds{1}_k & 0\\
0 & X
\end{pmatrix} = \mathds{1}_n,
\]
and so it must be the case that $X = \mathds{1}_{n-k}$. Moreover, the action is smooth and isometric whenever $\textup{U}(n)$ is endowed with its bi-invariant metric $g$. Therefore, by \cref{existenceRiemannianSubmersionMetrics} we know that there exists a metric $h$ on $\textup{V}_k(\bb{C}^n)$ for which the quotient map 
\[
\pi: (\textup{U}(n), g) \to (\textup{V}_k(\bb{C}^n), h)
\]
is a Riemannian submersion. Furthermore, as we discussed in \cref{chapterRiemannianSubmersions}, the submersion has totally geodesic fibers. 

In order to control the curvature of the complex Stiefel manifolds, we will need an auxiliary result, which will allow us to \textit{lift} vector fields in a convenient way. 

\begin{proposition}
\label{horizontalleftinvariant}
Let $G$ be a Lie group endowed with a bi-invariant metric and let $H$ be a closed subgroup of $G$. Consider the action of $H$ on $G$ from the right:
\begin{align*}
H \times G \to G\\
(h, g) \mapsto gh.
\end{align*}
Let $X \in T_e G$ be a horizontal vector with respect to the fibration $G \to G/H$. Then its corresponding left-invariant vector field
\[
X^L(p) := pX
\]
is horizontal for every $p \in G$.
\end{proposition}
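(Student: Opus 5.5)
The statement to prove is \cref{horizontalleftinvariant}: left-invariant extensions of horizontal vectors at $e$ stay horizontal. I would argue directly from bi-invariance of the metric together with an explicit description of the fibers of the right action of $H$.

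First I identify the vertical space at any $p\in G$. The fiber through $p$ is the orbit $pH=\{ph: h\in H\}$. Its tangent space at $p$ is therefore $V_pG = p\,\mathfrak h$, where $\mathfrak h\subset\mathfrak g$ is the Lie algebra of $H$: a curve $p e^{tY}$ with $Y\in\mathfrak h$ lies in the fiber and has velocity $pY$, and the dimensions match. In particular $V_eG=\mathfrak h$. The hypothesis that $X\in T_eG$ is horizontal thus says $\langle X, Y\rangle_g=0$ for every $Y\in\mathfrak h$.

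Next I fix $p\in G$ and an arbitrary vertical vector $pY\in V_pG$ with $Y\in\mathfrak h$. By left invariance of the metric (the first condition in the definition of a bi-invariant metric, applied with $a=p$ and $b=e$),
\[
\langle X^L(p), pY\rangle_g=\langle pX, pY\rangle_g=\langle X,Y\rangle_g=0 .
\]
Hence $X^L(p)\perp V_pG$, so $X^L(p)\in H_pG$ for every $p\in G$.

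The only point needing care is the identification $V_pG=p\,\mathfrak h$, that is, that the vertical space of the quotient map $G\to G/H$ coincides with the tangent space to the orbit. This is standard for a free, proper action of the closed subgroup $H$: $\ker d\pi|_p$ is the tangent space of the fiber $pH$, and left translation by $p$ maps $H$ diffeomorphically onto $pH$. Notably, only left invariance of the metric is used. Right invariance is what makes the $H$-action isometric and hence yields the Riemannian submersion via \cref{existenceRiemannianSubmersionMetrics}.
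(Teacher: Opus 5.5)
Your argument is correct and is essentially the paper's proof. The paper observes that the left translation $L_p$ is an isometry carrying the fiber $H$ onto the fiber $pH$, so $dL_p$ maps $T_eH$ onto $T_p(pH)$ and preserves orthogonality; this is exactly your identification $V_pG = p\,\mathfrak h$ combined with $\langle pX, pY\rangle_g = \langle X, Y\rangle_g$. You simply spell out the identification of the vertical space with the tangent space of the orbit, which the paper takes for granted.
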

\begin{proof}
Note that 
\begin{align*}
L_g: G \to G\\
x \mapsto gx
\end{align*}
is an isometry for every $g \in G$. Furthermore, $L_g$ maps fibers to fibers: $L_g(xH) = (gx)H$. Therefore, if $X$ is horizontal at $e$---i.e. perpendicular to the fiber $H$---it follows that $(L_g)_* X$ is perpendicular to $(dL_g)_e (T_e H)$, where $(L_g)_* X = gX$ and $(dL_g)_e (T_e H) = T_g(gH)$. 
\end{proof}

\cref{horizontalleftinvariant} allows us to bound the curvature of the Stiefel and Grassmann manifolds. 

\begin{proposition}
\label{curvStiefel}
Let $\textup{U}(n)$ be endowed with its bi-invariant metric $g$ (cf. \cref{metrica}). Let $k < n$ and consider the induced metric $h$ on $\textup{V}_k(\bb{C}^{n})$ for which $\pi : (\textup{U}(n), g) \to (\textup{V}_k(\bb{C}^{n}), h)$ is a Riemannian submersion. Then
\[
0 \leq K_{h} \leq 2,\quad \textup{\textit{and}}\quad 
\textup{Ric}_{h} \geq 0.
\]
\end{proposition}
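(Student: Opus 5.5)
We will work at the identity coset and use O'Neill's formula (\cref{CurvaturaSubmersion}) for the Riemannian submersion $\pi:(\textup{U}(n),g)\to(\textup{V}_k(\bb{C}^n),h)$, which is the quotient by the right action of $H=\textup{U}(n-k)$ in \cref{eq:defactionstiefel}. Since left translations by $\textup{U}(n)$ are isometries of $g$ that map fibers to fibers, they descend to isometries of $(\textup{V}_k(\bb{C}^n),h)$ acting transitively. It therefore suffices to bound the curvature at $\pi(e)$.

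Let $\pi_*X,\pi_*Y$ be orthonormal in $T_{\pi(e)}\textup{V}_k(\bb{C}^n)$, with horizontal lifts $X,Y\in\mathfrak{u}(n)$ orthogonal to $\mathfrak{h}=\mathfrak{u}(n-k)$. By \cref{horizontalleftinvariant}, the left-invariant fields $X^L,Y^L$ are horizontal everywhere. By \cref{remarkcurvaturasubmersion}, we may evaluate $[X,Y]^{\mathcal V}$ using these extensions, and \cref{propinvvectorfields2} gives $[X^L,Y^L](e)=XY-YX$. O'Neill's formula then reads
\[
K_h(\pi_*X,\pi_*Y)=K_g(X,Y)+\tfrac34\big|[X,Y]^{\mathcal V}\big|_g^2 .
\]
Both terms are nonnegative because $K_g\ge0$ by \cref{LemaCurvaturaUn}, so $K_h\ge0$. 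For the upper bound, we use $K_g(X,Y)=\tfrac14|[X,Y]|^2_g$ from \cref{curvLieGroups} and $|[X,Y]^{\mathcal V}|\le|[X,Y]|$. This gives $K_h\le |[X,Y]|_g^2\le 2|X|^2|Y|^2=2$ by \cref{ineqliebracket}.

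Finally, $\textup{Ric}_h\ge0$ follows because $\textup{Ric}_h(v,v)$ is a sum of terms $K_h(v,e_i)|v|^2$ over an orthonormal basis completing $v/|v|$, and each of these terms is nonnegative.

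The only delicate step is justifying that the vertical part of the bracket of two horizontal vectors can be computed with the left-invariant extensions. This is exactly what \cref{remarkcurvaturasubmersion} guarantees: the quantity depends only on pointwise values, and the left-invariant extensions of $X$ and $Y$ are horizontal fields. We also need $X,Y$ orthonormal in $g$, which holds because $\pi_*$ is an isometry on horizontal vectors.
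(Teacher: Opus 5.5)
Your proof is correct and follows essentially the same route as the paper. Both arguments use O'Neill's formula (\cref{CurvaturaSubmersion}) with left-invariant horizontal extensions, $K_g=\tfrac14|[X,Y]|_g^2$, the bound obtained by dropping the vertical projection, and the commutator estimate \cref{ineqliebracket}, and both obtain $\textup{Ric}_h\ge 0$ as a sum of nonnegative sectional curvatures. The only cosmetic difference is that you first reduce to the identity coset via the descended left translations, whereas the paper evaluates at an arbitrary $p$ and invokes the left-invariance of $K_g$ and of the bracket $[\tilde X,\tilde Y]$ directly.
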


\begin{proof}
Recall that $\textup{V}_k(\bb{C}^{n})$ is the quotient space associated with the action shown in \cref{eq:defactionstiefel}. Now, let $p \in \textup{U}(n)$. We can use \cref{CurvaturaSubmersion,horizontalleftinvariant} to conclude that, given two horizontal orthonormal vectors $X, Y \in \mathfrak{u}(n)$, we can extend them as left-invariant vector fields $\tilde{X}, \tilde{Y} \in \mathfrak{X}(\textup{U}(n))$ which remain horizontal. Thus,
\begin{align*}
0 \leq K_{h}(\pi_* \tilde{X}(p), \pi_* \tilde{Y}(p)) &= K_{g}(\tilde{X}(p), \tilde{Y}(p)) + \frac{3}{4}|[\tilde{X}, \tilde{Y}]|_p^\mathcal{V}|^2_g
\\&\leq K_{g}(X, Y) + \frac{3}{4}|[\tilde{X}, \tilde{Y}]|_p|^2_g 
\\&= |[X, Y]|^2_g
\\&\leq 2.
\end{align*}
where the last equality follows from \cref{propinvvectorfields1,propinvvectorfields2,curvLieGroups}, and the last inequality follows from \cref{ineqliebracket}.

Furthermore, since the sectional curvature is non-negative, it holds that its Ricci curvature is non-negative as well. Indeed, let $V \in T_p \textup{V}_k(\bb{C}^n)$ be a unit vector, we can extend $V$ to an orthonormal basis of $T_p \textup{V}_k(\bb{C}^n)$, $\{V, e_2, \dotsc, e_d\}$, $d = 2nk - k^2$. This way, writing $\textup{Ric}_{h}(V, V)$ as 
\[
\textup{Ric}_{h}(V, V) = \sum_{i = 2}^d K_{h}(V, e_i) \geq 0, 
\]
the result follows. 
\end{proof}

\subsubsection{Curvature of complex Grassmann manifolds}
\label{curvaturegrassmann}

Lastly, we will bound the curvature of the manifold of linear subspaces of complex dimension $k$ in $\mathbb{C}^n$, which is known as the Grassmann manifold $\textup{Gr}_k(\bb{C}^n)$. For an in-depth study of Grassmann manifolds, see \cite{bendokat2024grassmann} and the references therein. 

\begin{definition}[Grassmann manifold]
We define the \textup{complex Grassmann manifold} $\textup{Gr}_k(\bb{C}^n)$ as
\[
\textup{Gr}_k(\bb{C}^n) = \{P \in \bb{C}^{n \times n} : P^\dagger = P,\ P^2 = P,\ \Tr(P) = k\}.
\]
\end{definition}

In particular, each point $P \in \textup{Gr}_k(\bb{C}^n)$ can be seen as a rank-$k$ projector in $\bb{C}^n$. Furthermore, $\textup{Gr}_k(\mathbb{C}^n)$ can also be seen as the quotient space
\[
\textup{U}(n) / (\textup{U}(n-k) \times \textup{U}(k)),
\]
where the quotient is considered with respect to the right action on $\textup{U}(n)$ defined as
\begin{align*}
\textup{U}(k) \times \textup{U}(n-k) \times \textup{U}(n) &\to \textup{U}(n)\\
((U, V),X) &\mapsto X \begin{pmatrix}
U & 0\\
0 & V
\end{pmatrix}. 
\end{align*}
This action is again free, smooth and isometric whenever $\textup{U}(n)$ is endowed with the bi-invariant metric $g$. Therefore, the quotient description of $\textup{Gr}_k(\mathbb{C}^n)$ allows us to conclude that it can be endowed with a Riemannian metric $h$ for which 
\[
\pi: (\textup{U}(n), g) \to (\textup{Gr}_k(\mathbb{C}^n), h)
\]
is a Riemannian submersion with totally geodesic fibers.

This way, following the same reasoning as in the proof of \cref{curvStiefel}, we obtain the same bounds on the sectional and Ricci curvatures of $\textup{Gr}_k(\bb{C}^n)$. 

\begin{proposition}
\label{prop:curvGrassmann}
Let $\textup{U}(n)$ be endowed with its bi-invariant metric $g$ (cf. \cref{metrica}). Consider the induced metric $h$ on $\textup{Gr}_k(\bb{C}^{n})$ for which $\pi : (\textup{U}(n), g) \to (\textup{Gr}_k(\bb{C}^{n}), h)$ is a Riemannian submersion. Then
\[
0 \leq K_{h} \leq 2,\quad \textup{\textit{and}}\quad 
\textup{Ric}_{h} \geq 0.
\]
\end{proposition}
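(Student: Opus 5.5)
We prove \cref{prop:curvGrassmann} exactly as \cref{curvStiefel}, replacing the subgroup $\textup{U}(n-k)$ by $H = \textup{U}(k)\times\textup{U}(n-k)$. The first step is to note that $H$ is a closed subgroup of $\textup{U}(n)$ acting on the right. By \cref{horizontalleftinvariant}, any $X \in \mathfrak{u}(n)$ that is horizontal at the identity extends to a left-invariant vector field $\tilde X$ that is horizontal everywhere. Since left translations are isometries mapping fibers to fibers, every horizontal pair at a point $p$ is the left translate of a horizontal pair at $e$. It therefore suffices to check curvature at $\pi(e)$; equivalently, we take $p$ arbitrary and use the left-invariant extensions.

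Next we apply O'Neill's formula (\cref{CurvaturaSubmersion}) to orthonormal horizontal $X,Y$:
\[
K_h(\pi_*X,\pi_*Y) = K_g(X,Y) + \tfrac34\big|[\tilde X,\tilde Y]^{\mathcal V}\big|_g^2 .
\]
This is a sum of nonnegative terms, by \cref{LemaCurvaturaUn}, so $K_h \geq 0$. For the upper bound, we drop the vertical projection, so that $|[\tilde X,\tilde Y]^{\mathcal V}|_g \leq |[\tilde X,\tilde Y]|_g$. By \cref{propinvvectorfields1,propinvvectorfields2}, the bracket of left-invariant fields at $p$ is the left translate of the matrix commutator $[X,Y]$, so its norm equals $|[X,Y]|_g$. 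Using \cref{curvLieGroups}, $K_g(X,Y)=\frac14|[X,Y]|_g^2$, and the total is therefore at most $|[X,Y]|_g^2 \leq 2$ by \cref{ineqliebracket}.

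Finally, nonnegativity of the Ricci curvature follows from nonnegative sectional curvature. We complete a unit vector $V$ to an orthonormal basis $\{V,e_2,\dots,e_d\}$ of the tangent space, with $d=2k(n-k)$, and write $\textup{Ric}_h(V,V)=\sum_i K_h(V,e_i)\geq 0$.

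The only point needing care is the well-definedness of the vertical part of the bracket in terms of pointwise values, which is \cref{remarkcurvaturasubmersion}, together with the fact that horizontal vectors at $p$ lift to horizontal left-invariant fields, which is \cref{horizontalleftinvariant}. Both are already available, so no real obstacle is expected.
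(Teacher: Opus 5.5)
Your proposal is correct and is essentially the paper's proof. The paper proves \cref{prop:curvGrassmann} by saying that the argument of \cref{curvStiefel} carries over, which is exactly what you do with $\textup{U}(k)\times\textup{U}(n-k)$ in place of $\textup{U}(n-k)$: O'Neill's formula on left-invariant horizontal extensions (\cref{horizontalleftinvariant}), the estimate $K_h \leq \tfrac14|[X,Y]|_g^2+\tfrac34|[X,Y]|_g^2 = |[X,Y]|_g^2 \leq 2$ via \cref{ineqliebracket}, and $\textup{Ric}_h \geq 0$ from $K_h \geq 0$.
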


Note that we can also define Grassmann manifolds via the action of $\textup{U}(k)$ on the Stiefel manifold $\textup{V}_k(\bb{C}^n)$  given as
\begin{align*}
\textup{U}(k) \times \textup{V}_k(\bb{C}^n) &\to \textup{V}_k(\bb{C}^n) \\
(U, X) &\mapsto XU.
\end{align*}
The action is smooth, free---$XU = X$ implies that $U = \mathds{1}$ by simply multiplying by $X^\dagger$ on both sides---and isometric whenever $\textup{V}_k(\bb{C}^n)$ is endowed with the induced metric $\tilde g$ making the projection map $\pi_1 : (\textup{U}(n), g) \to (\textup{V}_k(\bb{C}^n), \tilde g)$ a Riemannian submersion, where $g$ is the bi-invariant metric of $\textup{U}(n)$. This way, by \cref{existenceRiemannianSubmersionMetrics} we know that there exists a smooth manifold structure and a Riemannian metric $\tilde h$ on $\textup{Gr}_k(\bb{C}^n)$ for which the projection $\pi_2 : (\textup{V}_k(\bb{C}^n), \tilde g) \to (\textup{Gr}_k(\bb{C}^n), \tilde h)$ is a Riemannian submersion. Furthermore, since $\pi = \pi_2 \circ \pi_1$ is also a Riemannian submersion, by the uniqueness of the construction (cf. \cref{existenceRiemannianSubmersionMetrics}), we know that the manifold structures induced by $\pi$ and $\pi_2$ are the same, and that $h = \tilde h$. Lastly, by \cref{prop:inducedtotgeodfibers} we know that $\pi_2$ has totally geodesic fibers. 

\subsection{Riemannian submersions and fiber-invariant functions}
\label{sec:riemanniansubmersionsandopti}

In this subsection, we present several results concerning functions that are constant along the fibers of a surjective Riemannian submersion. The majority of the results presented here can be found in \cite[Section 9]{boumal2022intromanifolds}.

We will always be working with a surjective Riemannian submersion $\pi: (M, g) \to (B, h)$ and a sufficiently smooth function $f: M \to \bb{R}$ which is constant in the fibers of $\pi$, or a sufficiently smooth function $\tilde f : B \to \bb{R}$. The invariance of $f$ implies that there exists a unique function $\tilde f$ satisfying $f = \tilde f \circ \pi$ and vice versa. Furthermore, both functions have the same smoothness properties \cite[Theorem 9.21]{boumal2022intromanifolds}. 

We will conclude that having information about the gradient, the Hessian or the Laplacian of either $f$ or $\tilde f$ easily translates to having information about that of its \textit{projected} or \textit{lifted} version. This will be very convenient for us, as it will allow us to work in either the total or the base space of a Riemannian submersion, and then translate the properties obtained using the results that we will present in this section. 

Let us first define the horizontal lift of a tangent vector in a Riemannian submersion. 
\begin{definition}
Let $\pi: (M, g) \to (B, h)$ be a Riemannian submersion, let $x \in M$, and let $v \in T_{\pi(x)} B$. The \textup{horizontal lift} of $v$ at $x$, denoted as $\textup{lift}_x\, v$, is the unique horizontal vector $u \in H_x(M)$ such that $\pi_*|_x(u) = v$.
\end{definition}

The first result allows us to relate the gradient of a function in the base space $\tilde f$ with the gradient of the lifted function $f$ in the total space. In fact, as the following proposition claims, any function that is constant in the fibers of a Riemannian submersion has a horizontal gradient which is the lift of the gradient of $\tilde f$. 

\begin{definition}[Gradient]
Let $(M, g)$ be a Riemannian manifold. Given a differentiable function $f: M \rightarrow \bb{R}$, the \textup{gradient} of $f$, denoted by $\Grad{g} f$ is the unique vector field on $M$ such that
\[
\langle\Grad{g}f, X\rangle_g = df(X),
\]
for any vector field $X \in \mathfrak{X}(M)$, where $df$ denotes the differential of $f$, also written as $f_*$
\end{definition}

\begin{proposition}[{\cite[Proposition 9.39]{boumal2022intromanifolds}}]
\label{prop:gradientprojectedfunction}
Let $\pi: (M, g) \to (B, h)$ be a surjective Riemannian submersion. Let $\tilde f \in C^1(B)$, and let $f : M \to \bb{R}$ be the unique function such that $f = \tilde f \circ \pi$. Then  for every $x \in M$,
\[
\textup{grad}_g \, f(x) = \textup{lift}_x\, \textup{grad}_{h}\, \tilde f(\pi(x)).
\]
\end{proposition}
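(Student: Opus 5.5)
The claim is that for a surjective Riemannian submersion $\pi$ and $f=\tilde f\circ\pi$, we have $\textup{grad}_g f(x)=\textup{lift}_x\,\textup{grad}_h\tilde f(\pi(x))$. Both sides are tangent vectors at $x$, so I will compare their inner products with an arbitrary $X\in T_xM$, decomposed as $X=X^{\mathcal V}+X^{\mathcal H}$ according to \cref{verticalhorizontaltangent}, and handle the two components separately.

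First I compute $df|_x$. By the chain rule, $df|_x(X)=d\tilde f|_{\pi(x)}(\pi_*|_x X)$. For vertical $X$, $\pi_*|_x X=0$ because $V_xM=\ker\pi_*|_x$, so $df|_x$ vanishes on $V_xM$. Since $\langle \textup{grad}_g f(x),X\rangle_g=df|_x(X)$, it follows that $\textup{grad}_g f(x)$ is orthogonal to $V_xM$, i.e. it lies in $H_xM$.

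Next let $X$ be horizontal and set $u:=\textup{lift}_x\,\textup{grad}_h\tilde f(\pi(x))\in H_xM$. Because $\pi_*|_x$ maps $H_xM$ isometrically onto $T_{\pi(x)}B$ and $\pi_*|_x u=\textup{grad}_h\tilde f(\pi(x))$, we get
\[
\langle u,X\rangle_g=\langle \pi_* u,\pi_* X\rangle_h=\langle \textup{grad}_h\tilde f(\pi(x)),\pi_*X\rangle_h=d\tilde f(\pi_*X)=df|_x(X)=\langle\textup{grad}_g f(x),X\rangle_g .
\]
Thus $u$ and $\textup{grad}_g f(x)$ have the same inner product with every horizontal vector. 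Both also lie in $H_xM$, $u$ by definition of the lift and $\textup{grad}_g f(x)$ by the previous step. Since $g$ is nondegenerate on $H_xM$, they coincide.

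The only delicate point is making sure the horizontal lift is well defined. This holds because $\pi_*|_x$ restricted to $H_xM$ is a linear isometry onto $T_{\pi(x)}B$: it is injective, and dimension counting gives surjectivity since $\pi$ is a submersion. Surjectivity of $\pi$ is needed only so that every point of $B$ is reached, which is what makes $\tilde f$ uniquely determined by $f$. Beyond this, no real obstacle is expected.
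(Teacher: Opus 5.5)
Your proof is correct and is the standard argument from the cited reference; the paper does not prove this statement and only cites Boumal. The argument: $df|_x$ vanishes on $V_xM$, so $\textup{grad}_g f(x)\in H_xM$, and the isometry $\pi_*|_x\colon H_xM\to T_{\pi(x)}B$ identifies it with $\textup{lift}_x\,\textup{grad}_h\tilde f(\pi(x))$. Your closing remark on surjectivity is harmless but unnecessary here: $f=\tilde f\circ\pi$ is built from $\tilde f$, so the pointwise identity does not use surjectivity of $\pi$.
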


\cref{prop:gradientprojectedfunction} allows us to relate the critical points of a fiber-invariant function $f$ and its \textit{projection} $\tilde f$. Indeed, for any $p \in B$ which is a critical point of $\tilde f$, and any $x \in \pi^{-1}(p)$, it holds that $x$ is a critical point of $f$. Furthermore, as $\pi$ is a Riemannian submersion, we can also relate the norm of both gradients at any given point. 
\begin{remark}
Since $\pi$ preserves the norm of horizontal vectors, it holds that 
\[
|\textup{grad}_g\, f(x)|_g = |\textup{grad}_{h}\, \tilde f(\pi(x))|_{h},
\]
for every $x \in M$.
\end{remark}

The following propositions allow us to relate the Hessian of the fiber-invariant function in the total space with that of the function in the base space.

\begin{definition}[Hessian]
\label{definicionhessian}
Let $(M, g)$ be a Riemannian manifold. Let $f : M \rightarrow \mathbb{R}$ be a twice differentiable function on $M$. The \textup{Hessian} of $f$ is the tensor defined as
\[
\nabla^2 f := \nabla \nabla f = \nabla df,
\]
where $\nabla$ denotes the Levi-Civita connection of $(M, g)$, acting as
\[
\nabla^2 f(X, Y) = \langle \nabla_X \Grad{g}f, Y \rangle_g,
\]
for any vector fields $X, Y \in \mathfrak{X}(M)$. Moreover, $\nabla^2 f$ is symmetric.

We will also denote 
\[
\nabla^2 f(x)[X] := \nabla_X\, \Grad{g}f|_x
\]
for any $x \in M$ and any $X \in T_x M$.
\end{definition}

\begin{proposition}[{\cite[Proposition 9.45]{boumal2022intromanifolds}}]
\label{prop:projectedHessian}
Let $(M, g)$ and $(B, h)$ be two Riemannian manifolds. Let $\pi: (M, g) \to (B, h)$ be a surjective Riemannian submersion. Let $\tilde f \in C^2(B)$, and let $f = \tilde f \circ \pi$. Then  for every $x \in M$ and every horizontal vector $u \in H_x M$, 
\[
(\nabla^2_g\, f(x)[u])^{\mathcal{H}} = \textup{lift}_x \nabla^2_h\, \tilde f(\pi(x))[\pi_*u],
\]
where we used $\nabla^2_g$ and $\nabla^2_h$ to distinguish between the Hessian on $M$ and $B$.

In particular, if $x \in M$ is a critical point of $f$, then the eigenvalues of $\nabla^2_g\, f(x)$ and $\nabla^2_h\, \tilde f(\pi(x))$, are the same, together with a set of $\dim(M) - \dim(B)$ additional eigenvalues equal to zero. 
\end{proposition}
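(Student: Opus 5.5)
The proof will be carried out at a fixed point $x\in M$. The key tool is horizontal lifts of vector fields together with O'Neill's relation between the Levi-Civita connections of $M$ and $B$.

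First, let $X,Y$ be vector fields on $B$ near $\pi(x)$, and let $\bar X,\bar Y$ denote their basic horizontal lifts. By \cref{prop:gradientprojectedfunction}, $\Grad{g} f=\overline{\Grad{h}\tilde f}$, which is itself the basic lift of $\Grad{h}\tilde f$. O'Neill's lemma \cite[Lemma 1]{oneill1966fundamental} gives, for basic fields $\bar X,\bar Z$,
\[
(\nabla^g_{\bar X}\bar Z)^{\mathcal H}=\overline{\nabla^h_X Z}.
\]
Applying this with $Z=\Grad{h}\tilde f$ yields
\[
(\nabla^g_{\bar X}\Grad{g}f)^{\mathcal H}=\overline{\nabla^h_X\Grad{h}\tilde f}.
\]
Evaluating at $x$ with $\bar X(x)=u$, and hence $X(\pi(x))=\pi_*u$, this is exactly the claimed identity $(\nabla^2_g f(x)[u])^{\mathcal H}=\textup{lift}_x\nabla^2_h\tilde f(\pi(x))[\pi_*u]$. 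Since the Hessian is tensorial in $u$, the value does not depend on the chosen extension.

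For the eigenvalue statement, fix a critical point $x$. Write the symmetric operator $\nabla^2_g f(x)$ in block form with respect to $T_xM=V_xM\oplus H_xM$. The horizontal–horizontal block is conjugate, via the isometry $\pi_*|_{H_xM}$, to $\nabla^2_h\tilde f(\pi(x))$ by the first part.

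It remains to show that the vertical rows and columns vanish. Take $V$ vertical. Then
\[
\nabla^2 f(V,W)=V(Wf)-(\nabla_VW)f .
\]
At a critical point the term $(\nabla_VW)f=df(\nabla_VW)$ vanishes. The term $V(Wf)$ also vanishes when $W$ is extended as a vertical field, because $f$ is constant on fibers and so $Wf\equiv0$. Hence the vertical–vertical block is zero. By symmetry of the Hessian, the mixed block can be computed with the vertical direction differentiated: $\nabla^2 f(\bar X,V)=\nabla^2 f(V,\bar X)=V(\bar Xf)-df(\nabla_V\bar X)$. At a critical point the last term vanishes. Moreover $\bar Xf=(X\tilde f)\circ\pi$ is constant on fibers, so $V(\bar Xf)=0$.

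Therefore $\nabla^2_g f(x)$ is block diagonal, with a zero block of size $\dim M-\dim B$ and a block isospectral to $\nabla^2_h\tilde f(\pi(x))$, which gives the eigenvalue claim. The main obstacle is justifying O'Neill's horizontal-connection identity. If I did not simply cite it, I would derive it from the Koszul formula, using that $\pi_*$ preserves inner products and brackets of basic fields up to vertical parts.
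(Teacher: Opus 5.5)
Your proof is correct and is essentially the standard argument behind the cited result; the paper gives no proof of its own, only the reference to Boumal. The lift identity for gradients combined with O'Neill's formula $(\nabla_{\bar X}\bar Z)^{\mathcal H}=\overline{\nabla_X Z}$ for basic fields yields the first claim, and the eigenvalue claim follows from block-diagonality once the Hessian annihilates vertical vectors at a critical point, which is precisely the separately cited \cref{prop:kernelofhessian} that you verify by hand. A quicker route to that last fact is that $\Grad{g} f$ is the lift of $\Grad{h}\tilde f$ and hence vanishes identically along the fiber through a critical point, so $\nabla^2_g f(x)[V]=\nabla_V \Grad{g} f=0$ for every vertical $V$, which kills the vertical and mixed blocks at once.
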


\begin{proposition}[{\cite[Lemma 9.41]{boumal2022intromanifolds}}]
\label{prop:kernelofhessian}
Let $\pi: (M, g) \to (B, h)$ be a surjective Riemannian submersion. If $x \in M$ is a critical point of a function $f \in C^2(M)$ which is constant in the fibers of $\pi$, then $\nabla^2_g\, f(x)[v] = 0$ for every $v \in V_x M$. 
\end{proposition}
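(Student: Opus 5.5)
The statement is \cref{prop:kernelofhessian}: at a critical point $x$ of a fiber-invariant $f$, the Hessian annihilates vertical vectors. I would prove it directly from \cref{definicionhessian}, $\nabla^2 f(x)[v] = \nabla_v \Grad{g} f|_x$, by showing that $\Grad{g} f$ vanishes along the whole fiber through $x$.

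\textbf{Step 1: the gradient vanishes on the fiber.} Since $f = \tilde f \circ \pi$, \cref{prop:gradientprojectedfunction} gives $\Grad{g} f(y) = \textup{lift}_y\, \Grad{h}\tilde f(\pi(y))$ for every $y \in M$. At the critical point $x$, the lift is an isomorphism $H_xM \to T_{\pi(x)}B$, so $\Grad{g} f(x)=0$ forces $\Grad{h}\tilde f(\pi(x)) = 0$. Every $y \in \pi^{-1}(\pi(x))$ has the same base point, hence $\Grad{g} f(y) = \textup{lift}_y 0 = 0$. Thus the vector field $\Grad{g} f$ vanishes identically on the fiber $\pi^{-1}(\pi(x))$.

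\textbf{Step 2: differentiate along the fiber.} Let $v \in V_xM = \ker \pi_*|_x$. Because $\pi$ is a submersion, the fiber is an embedded submanifold with tangent space $V_xM$ at $x$. Choose a smooth curve $c$ in the fiber with $c(0)=x$ and $c'(0)=v$. Then $\nabla_v \Grad{g} f|_x = D_t\big(\Grad{g} f\circ c\big)(0)$, the covariant derivative along $c$. Since $\Grad{g} f \circ c \equiv 0$ by Step 1, this derivative is zero, so $\nabla^2 f(x)[v]=0$.

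The only potential subtlety is the tensoriality used in Step 2. The value of $\nabla_v Y$ at $x$ depends only on $Y$ along any curve tangent to $v$; in local coordinates, $\nabla_v Y = \big(v(Y^k) + v^iY^j\Gamma^k_{ij}\big)\partial_k$, and both terms vanish when $Y\equiv 0$ along $c$. Since this fact is standard, I do not expect a real obstacle. By the symmetry of $\nabla^2 f$, the result also gives $\nabla^2 f(x)(u,v)=0$ for all $u \in T_xM$, consistent with the extra zero eigenvalues in \cref{prop:projectedHessian}.
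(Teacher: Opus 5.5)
Your proof is correct and follows essentially the argument behind the cited \cite[Lemma 9.41]{boumal2022intromanifolds}; the paper gives no proof of its own beyond the citation. Criticality and \cref{prop:gradientprojectedfunction} force $\Grad{h}\tilde f(\pi(x))=0$, so $\Grad{g}f$ vanishes on the whole fiber, and differentiating along a fiber curve tangent to $v$ gives $\nabla^2_g f(x)[v]=0$. One harmless slip: $\textup{lift}_x$ maps $T_{\pi(x)}B$ isomorphically onto $H_xM$, not the reverse, but you only use its injectivity; also, applying \cref{prop:gradientprojectedfunction} needs $\tilde f\in C^1(B)$, which follows from $f\in C^2(M)$ via local sections of $\pi$.
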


The following proposition allows us to relate the Hessian of a fiber-invariant function $f$ on the total space of a Riemannian submersion to that of the \textit{projected} function $\tilde f$ in the base space.

\begin{proposition}
\label{prop:equalityHessians}
Let $\pi: (M, g) \to (B, h)$ be a surjective Riemannian submersion. Let $\tilde f \in C^2(B)$, and let $f: M \to \bb{R}$ be such that $f = \tilde f \circ \pi$. Let $x \in M$ be a critical point of $f$ in $M$. Then  for every $v \in T_xM$ it holds that 
\[
\nabla^2_g\,f(x)[v, v] = \nabla^2_g\,f(x)[v^{\mathcal{H}}, v^{\mathcal{H}}]= \nabla^2_h\, \tilde f(\pi(x))[\pi_* v, \pi_* v].
\]

In particular, if $p \in B$ is a critical point of $\tilde f$ in $B$, then for every $v \in T_p B$ and every $x \in \pi^{-1}(p)$ it holds that $\nabla^2_h\, \tilde f(p)[v, v] = \nabla^2_g\, f(x)[\textup{lift}_x v, \textup{lift}_x v]$. 
\end{proposition}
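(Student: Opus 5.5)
The plan is to read the first equality directly off \cref{prop:kernelofhessian}, and to reduce the second to the horizontal identity of \cref{prop:projectedHessian}.

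Fix a critical point $x$ of $f$ and $v \in T_xM$. Decompose $v = v^{\mathcal{V}} + v^{\mathcal{H}}$ as in \cref{notverticalhorizontal}. By bilinearity and symmetry of $\nabla^2_g f(x)$,
\[
\nabla^2_g f(x)[v,v] = \nabla^2_g f(x)[v^{\mathcal{H}},v^{\mathcal{H}}] + 2\,\nabla^2_g f(x)[v^{\mathcal{V}},v^{\mathcal{H}}] + \nabla^2_g f(x)[v^{\mathcal{V}},v^{\mathcal{V}}].
\]
Since $f$ is constant on fibers and $x$ is critical, \cref{prop:kernelofhessian} gives $\nabla^2_g f(x)[v^{\mathcal{V}}] = 0$. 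Hence both terms containing $v^{\mathcal{V}}$ vanish, and this gives the first equality.

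For the second equality, write
\[
\nabla^2_g f(x)[v^{\mathcal{H}},v^{\mathcal{H}}] = \langle \nabla^2_g f(x)[v^{\mathcal{H}}], v^{\mathcal{H}}\rangle_g .
\]
Because $v^{\mathcal{H}}$ is horizontal, only the horizontal component of $\nabla^2_g f(x)[v^{\mathcal{H}}]$ contributes to this inner product. By \cref{prop:projectedHessian}, that component equals $\textup{lift}_x \nabla^2_h \tilde f(\pi(x))[\pi_* v^{\mathcal{H}}]$. Since $\pi_*$ is an isometry on $H_xM$ and $\textup{lift}_x$ is its inverse, the inner product becomes
\[
\langle \nabla^2_h \tilde f(\pi(x))[\pi_* v^{\mathcal{H}}], \pi_* v^{\mathcal{H}}\rangle_h .
\]
Finally $\pi_* v^{\mathcal{H}} = \pi_* v$, because $\pi_* v^{\mathcal{V}} = 0$ by the definition of the vertical space. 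This yields the second equality.

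For the final claim, let $p$ be a critical point of $\tilde f$ and $x \in \pi^{-1}(p)$. By \cref{prop:gradientprojectedfunction}, $\textup{grad}_g f(x) = \textup{lift}_x 0 = 0$, so $x$ is critical for $f$. Apply the identity just proved with $v = \textup{lift}_x u$ for $u \in T_pB$, using that $\pi_*\textup{lift}_x u = u$.

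There is no real obstacle here. The only step needing care is to use the critical-point hypothesis, through \cref{prop:kernelofhessian}, to kill the mixed term $\nabla^2_g f(x)[v^{\mathcal{V}}, v^{\mathcal{H}}]$, which need not vanish at non-critical points.
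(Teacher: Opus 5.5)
Your proof is correct and takes the same route as the paper. Both arguments kill the terms involving $v^{\mathcal{V}}$ using \cref{prop:kernelofhessian}, and both then identify the horizontal part of $\nabla^2_g f(x)[v^{\mathcal{H}}]$ with the lifted base Hessian via \cref{prop:projectedHessian}. The only difference is cosmetic: the paper uses linearity of $\nabla$ in the direction slot plus symmetry, where you use a full bilinear expansion. Your explicit check, via \cref{prop:gradientprojectedfunction}, that every $x \in \pi^{-1}(p)$ is a critical point of $f$ fills in a step the paper leaves implicit in the final claim.
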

\begin{proof}
Assume that $x \in M$ is some critical point of $f$, and let $v \in T_x M$. Using \cref{prop:kernelofhessian} and the linearity of $\nabla$ we conclude that 
\[
\nabla_v \, \textup{grad}_g\, f|_x = \nabla_{v^{\mathcal{V}}} \,\textup{grad}_g\, f|_x + \nabla_{v^{\mathcal{H}}} \,\textup{grad}_g\, f|_x = \nabla_{v^{\mathcal{H}}} \,\textup{grad}_g\, f|_x. 
\]

Therefore, 
\begin{align*}
\nabla^2_g\, f(x)[v, v] = \langle \nabla_v \,\textup{grad}_g\, f, v\rangle_g &= \langle \nabla_{v^{\mathcal{H}}} \,\textup{grad}_g\, f, v\rangle_g 
\\&= \langle \nabla_{v^{\mathcal{H}}} \,\textup{grad}_g\, f, v^{\mathcal{H}}\rangle_g + \langle \nabla_{v^{\mathcal{H}}} \,\textup{grad}_g\, f, v^{\mathcal{V}}\rangle_g
\\&= \langle \nabla_{v^{\mathcal{H}}} \,\textup{grad}_g\, f, v^{\mathcal{H}}\rangle_g + \langle \nabla_{v^{\mathcal{V}}} \,\textup{grad}_g\, f, v^{\mathcal{H}}\rangle_g
\\&= \langle \nabla_{v^{\mathcal{H}}} \,\textup{grad}_g\, f, v^{\mathcal{H}}\rangle_g 
\\&= \nabla^2_g\, f(x)[v^{\mathcal{H}}, v^{\mathcal{H}}],
\end{align*}
where the third-last equality follows from the symmetry of the Hessian. Thus, 
\begin{align*}
\nabla^2_g\, f(x)[v, v] = \nabla^2_g \, f(x)[v^{\mathcal{H}}, v^{\mathcal{H}}] = \langle \nabla^2_g\, f(x)[v^H], v^{\mathcal{H}}\rangle_g &= \langle (\nabla^2_g\, f(x)[v^H])^{\mathcal{H}}, v^{\mathcal{H}}\rangle_g \\
&= \langle \nabla^2_h\, \tilde f(\pi(x))[\pi_* v^H], \pi_* v^{\mathcal{H}}\rangle_h 
\\&= \nabla^2_h\, \tilde f(\pi(x))[\pi_* v, \pi_* v],
\end{align*}
where the second-last equality follows from \cref{prop:projectedHessian} and from $\pi$ being a Riemannian submersion. This expression also allows us to conclude that, for every critical point $p \in B$ of $\tilde f$, every $x \in \pi^{-1}(p)$, and every $v \in T_p B$, 
\[
\nabla^2_h\, \tilde f(p)[v, v] = \nabla^2_g\, f(x)[\textup{lift}_x\, v, \textup{lift}_x\, v].
\]
\end{proof}

Let us now define the Laplace-Beltrami operator. 
\begin{definition}[Laplace-Beltrami operator]
Given a Riemannian manifold $(M, g)$, let $f : M \to \bb{R}$ be twice differentiable. We define the \textup{Laplacian} of $f$ as
\[
\Delta_g f = \textup{div}(\Grad{g}f),
\]
where 
\[
\textup{div} X:= \Tr(\nabla X),
\]
for any vector field $X \in \mathfrak{X}(M)$, and $\Tr(\nabla X)$ denotes the trace of $Y \mapsto \nabla_Y X$. The operator $\Delta_g$ is the \textup{Laplace-Beltrami operator}. 
\end{definition}

To conclude the section, we briefly discuss how the Lipschitz constants of a function $f$ that is constant in the fibers of a Riemannian submersion are valid for its \textit{projected} version $\tilde f$. Before we do so, let us first discuss \textit{parallel transport}. 
\begin{remark}[parallel transport]
\label{def:paralleltransport}
Given some curve $\gamma$ in $M$ and two points $\gamma(t_1), \gamma(t_2)$ along the curve, the \textup{parallel transport} along $\gamma$ is a vector space isomorphism between $T_{\gamma(t_1)} M$ and $T_{\gamma(t_2)} M$. For every $x \in M$ and every $v \in T_xM$, we denote the \textup{parallel transport} from $x$ to $\exp_x(v)$ along the curve $t \mapsto \exp_x (tv)$ as $\textup{P}_v$ . For an in-depth study of the parallel transport, see for example \cite[Section 4]{lee2018introductionRiemannian}.
\end{remark}

\begin{definition}[Lipschitz function]
\label{def:lipschitzfunction}
Let $(M, g)$ be a complete Riemannian manifold. We say that a function $f: M \rightarrow \bb{R}$ is $A_1$-Lipschitz if  
\[
|f(x) - f(y)| \leq A_1 d_g(x,y),\quad \forall x, y \in M.
\]
Similarly, we say that $\Grad{g}f$ is $A_2$-Lipschitz if 
\[
|\textup{P}^{-1}_{v}\,\Grad{g}f(\exp_x(v)) - \Grad{g}f(x)|_g \leq A_2 |v|_g,\quad \forall x \in M,\ \forall v \in T_xM,
\]
where $\textup{P}^{-1}_v$ denotes the inverse of $\textup{P}_v$. 
 
Lastly, we say that the Hessian $\nabla^2 f$ is $A_3$-Lipschitz if 
\[
\max_{\substack{s \in T_xM\\|s|_g=1}} |\textup{P}^{-1}_{v} \circ \nabla^2 f(\exp_x(v)) \circ \textup{P}_{v}[s] - \nabla^2 f(x)[s]|_g \leq A_3 |v|_g,    
\]
for every $x \in M$ and every $v \in T_xM$. 
\end{definition}

Alternatively, the Lipschitz constant of the gradient of a function $f$ corresponds to the operator norm of its Hessian. 
\begin{remark}[{\cite[Corollary 10.47]{boumal2022intromanifolds}}]
\label{rmk:LipschitzGrad}
Let $(M, g)$ be a complete Riemannian manifold and let $f: M \to \bb{R}$ be twice continuously differentiable. Then $\Grad{g} f$ is $L$-Lipschitz if and only if $\nabla^2 f$ has an operator norm bounded by $L$ for all $x \in M$, i.e. 
\[
\max_{\substack{s \in T_xM\\|s|_g = 1}}|\nabla^2 f(x)[s]|_g \leq L. 
\]
\end{remark}

To relate the Lipschitz constants of a fiber-invariant function $f$ to those of its \textit{projected} version, we can use the following result. 
\begin{proposition}
\label{prop:preservationofLipschitz}
Let $(M, g)$ be a compact Riemannian manifold, and let $\pi: (M, g) \to (B, h)$ be a surjective Riemannian submersion. Let $f \in C^2(M)$ be constant in the fibers of $\pi$ and assume that it is $A_1$-Lipschitz, with an $A_2$-Lipschitz gradient. Let $\tilde f : B \to \bb{R}$ be such that $f = \tilde f \circ \pi$. Then  it follows that $\tilde f$ is $A_1$-Lipschitz, with an $A_2$-Lipschitz gradient.
\end{proposition}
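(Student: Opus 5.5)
The plan is to treat the two claims separately: the Lipschitz bound for $\tilde f$ by lifting minimizing geodesics, and the gradient bound via the Hessian characterization of \cref{rmk:LipschitzGrad}. Throughout, $B=\pi(M)$ is compact, hence complete, because $\pi$ is continuous and surjective and $M$ is compact; so all the cited results apply on both $M$ and $B$.

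\emph{Step 1 ($\tilde f$ is $A_1$-Lipschitz).} Fix $p,q\in B$ and let $\tilde\gamma:[0,L]\to B$ be a unit-speed minimizing geodesic from $p$ to $q$, so $L=d_h(p,q)$.
\begin{itemize}
\item Since $M$ is compact, hence complete, \cref{LemmaHorizontalGeodesics} gives a horizontal geodesic $\gamma$ in $M$ with $\pi\circ\gamma=\tilde\gamma$ and $\gamma(0)=x\in\pi^{-1}(p)$. Its endpoint $y=\gamma(L)$ lies in $\pi^{-1}(q)$.
\item Because $\gamma'$ is horizontal and $\pi_*$ is an isometry on horizontal vectors, $|\gamma'|_g=|\tilde\gamma'|_h=1$. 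Hence $\gamma$ has length $L$ and $d_g(x,y)\le L$.
\item Using $F=\tilde f\circ\pi$ (here the lifted function $f$),
\[
|\tilde f(p)-\tilde f(q)|=|f(x)-f(y)|\le A_1\,d_g(x,y)\le A_1\,d_h(p,q).
\]
\end{itemize}

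\emph{Step 2 ($\operatorname{grad}_h\tilde f$ is $A_2$-Lipschitz).} By \cref{rmk:LipschitzGrad}, applied on $M$, the hypothesis is equivalent to $\max_{|s|_g=1}|\nabla^2_g f(x)[s]|_g\le A_2$ for every $x\in M$. Applied on $B$, it suffices to prove the same operator-norm bound for $\nabla^2_h\tilde f$.
\begin{itemize}
\item Fix $p\in B$ and a unit vector $v\in T_pB$. Choose $x\in\pi^{-1}(p)$ and set $u=\textup{lift}_x v$, which is horizontal with $|u|_g=1$.
\item By \cref{prop:projectedHessian},
\[
\textup{lift}_x\nabla^2_h\tilde f(p)[v]=(\nabla^2_g f(x)[u])^{\mathcal H}.
\]
\item Horizontal lifting is an isometry $T_pB\to H_xM$, and orthogonal projection onto the horizontal space does not increase norms. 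Therefore
\[
|\nabla^2_h\tilde f(p)[v]|_h=|(\nabla^2_g f(x)[u])^{\mathcal H}|_g\le|\nabla^2_g f(x)[u]|_g\le A_2 .
\]
\end{itemize}
Taking the maximum over unit $v$ gives the operator-norm bound, and \cref{rmk:LipschitzGrad} then yields that $\operatorname{grad}_h\tilde f$ is $A_2$-Lipschitz.

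\emph{Main obstacle.} The only delicate point is in Step 1: the lifted geodesic must start in the correct fiber and its length must equal $d_h(p,q)$. Both follow from the horizontal-lift statement of \cref{LemmaHorizontalGeodesics}, because the horizontal geodesic can be chosen to start at any point of $\pi^{-1}(p)$. The Hessian characterization requires $C^2$ regularity of $\tilde f$, which is inherited from that of $f$ as noted earlier in this section (\cite[Theorem 9.21]{boumal2022intromanifolds}).
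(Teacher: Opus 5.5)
Your proposal is correct and follows the paper's proof. The Lipschitz bound comes from picking points in the two fibers at distance at most $d_h(p,q)$, and the gradient bound comes from the Hessian operator-norm characterization of \cref{rmk:LipschitzGrad}, combined with \cref{prop:projectedHessian} and the fact that horizontal projection and lifting do not increase norms. The one difference is that you justify the choice of $x,y$ by horizontally lifting a minimizing geodesic via \cref{LemmaHorizontalGeodesics}, which the paper simply asserts; as you note, the inequality $d_g(x,y)\le d_h(p,q)$ is all the argument needs.
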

\begin{proof}
Let $p, q \in B$, and let $x \in \pi^{-1}(p)$, $y \in \pi^{-1}(q)$ be such that $d_h(p, q) = d_g(x, y)$. Then  
\[
|\tilde f(p) - \tilde f(q)| = |f(x) - f(y)| \leq A_1d_g(x, y) = A_1 d_h(p, q).
\]

Now, recall that the gradient of $f$ is $A_2$-Lipschitz if 
\[
\max_{\substack{s \in T_xM\\|s|_g = 1}}|\nabla^2 f(x)[s]|_g \leq A_2. 
\]
for every $x \in M$.

Now, let $x \in M$. Using \cref{prop:projectedHessian} we know that 
\begin{align*}
\max_{\substack{s \in T_xM\\|s|_g = 1}} |\nabla^2_g\, f(x)[s]|_g  
&\geq  \max_{\substack{s \in H_xM\\|s|_g = 1}}|(\nabla^2_g\, f(x)[s])^{\mathcal{H}}|_g
\\&= \max_{\substack{s \in H_xM\\|s|_g = 1}}|\textup{lift}_x \nabla^2_h\, \tilde{f}(\pi(x))[\pi_* s]|_g
\\&= \max_{\substack{s \in H_xM\\|s|_g = 1}}|\nabla^2_h\, \tilde{f}(\pi(x))[\pi_* s]|_h.
\end{align*}
From $\pi$ being surjective it follows that $\tilde f$ has an $A_2$-Lipschitz gradient. 
\end{proof}

\subsection{Periodic geodesics, diameter, dimension and volume}
\label{sec:studyingmanifolds}

In this subsection, we will first give upper bounds for the curvature of the round sphere. Then we will give lower bounds for the injectivity and convexity radii of the sphere, the unitary group, as well as the Stiefel and Grassmann manifolds via the study of their shortest non-trivial periodic geodesic and the upper bounds on their sectional curvature. Lastly, we will bound their diameter and volume and obtain an expression for their dimension.

First, the sectional and Ricci curvatures of the sphere endowed with the round metric are widely known. Let us therefore state the following result without proof. 
\begin{proposition}[{\cite[Theorem 8.34]{lee2018introductionRiemannian}}]
\label{sectionalcurvatureofsphere}
The sphere $\bb{S}^n$ endowed with the round metric has constant sectional curvature $1$. Furthermore, it is an Einstein manifold with constant $(n-1)$. 
\end{proposition}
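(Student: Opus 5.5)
This result is quoted from a standard reference, so the plan is to recall the classical argument: first compute the curvature tensor of the round sphere, then trace it to get the Ricci tensor.

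\textbf{Step 1: realize $\bb{S}^n$ as a hypersurface.} Take $\bb{S}^n\subset\bb{R}^{n+1}$ with the induced metric. At $x\in\bb{S}^n$ the unit normal is $N(x)=x$. The Weingarten map is the differential of $N$, which is the identity on $T_x\bb{S}^n$. Hence the second fundamental form is
\[
\mathrm{II}(X,Y)=-\langle X,Y\rangle N,
\]
up to a sign convention that does not matter below.

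\textbf{Step 2: apply the Gauss equation.} Since the ambient space $\bb{R}^{n+1}$ is flat, the Gauss equation gives
\[
R(X,Y,Z,W)=\langle \mathrm{II}(Y,Z),\mathrm{II}(X,W)\rangle-\langle \mathrm{II}(X,Z),\mathrm{II}(Y,W)\rangle .
\]
Substituting the formula from Step 1, this becomes
\[
R(X,Y,Z,W)=\langle Y,Z\rangle\langle X,W\rangle-\langle X,Z\rangle\langle Y,W\rangle,
\]
possibly with an overall sign depending on convention. For orthonormal $X,Y$, the sectional curvature is then $K(X,Y)=|X|^2|Y|^2-\langle X,Y\rangle^2=1$.

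\textbf{Alternative to Steps 1--2.} One could instead use homogeneity: the group $\mathrm{O}(n+1)$ acts transitively on orthonormal 2-frames, so the sectional curvature is constant. The constant can then be read off from a great $2$-sphere, which is totally geodesic with Gaussian curvature $1$.

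\textbf{Step 3: trace to get Ricci.} Let $\{X,e_2,\dots,e_n\}$ be an orthonormal basis of $T_x\bb{S}^n$, with $X$ a unit vector. Then
\[
\textup{Ric}(X,X)=\sum_{i=2}^n K(X,e_i)=n-1.
\]
Polarizing this identity gives $\textup{Ric}=(n-1)g$, so $\bb{S}^n$ is Einstein with constant $n-1$.

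The only delicate point is keeping the sign conventions of $R$ consistent with the paper's definition $K=R(X,Y,X,Y)$, as used in \cref{curvProduct}. Beyond that there is no real obstacle.
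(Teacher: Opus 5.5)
Your argument is correct and is the standard proof behind the cited result (Lee, Theorem 8.34); the paper itself gives no proof and only cites it. The Gauss equation for $\bb{S}^n\subset\bb{R}^{n+1}$, with shape operator $\pm\mathrm{Id}$, gives $K\equiv 1$, and tracing over an orthonormal basis gives $\textup{Ric}=(n-1)g$, which is exactly how the textbook argues. Your caveat about the slot convention is harmless: with the paper's convention $K=R(X,Y,X,Y)$, the $4$-tensor simply carries the opposite overall sign to the formula you wrote, and $K=1$ is unaffected.
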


This result, together with the bounds found for the sectional and Ricci curvatures of the unitary and special unitary groups---see \cref{LemaCurvaturaUn,LemaCurvaturaSUn}---and the Stiefel and Grassmann manifolds---see \cref{curvStiefel,prop:curvGrassmann}---are summarized in \cref{tab:TablaCurvaturas1}.

\begin{table}
    \centering
    \begin{tabular}{|c|c|c|c|}
    \hline
    Manifold & Metric & Sectional curvature & Ricci curvature\\
    \hline
    $\textup{U}(n)$ & Bi-invariant & $\leq 1/2$ & $\geq 0$\\
    \hline
    $\textup{SU}(n)$ & Bi-invariant & $\leq 1/2$ & $= n/2$\\
    \hline
    $\bb{S}^n$ & Round & $= 1$ & $= n-1$ \\
    \hline
    $\textup{V}_k(\bb{C}^n)$ & Induced & $\leq 2$ & $\geq 0$ \\
    \hline
    $\textup{Gr}_k(\bb{C}^n)$ & Induced & $\leq 2$ & $\geq 0$ \\
    \hline
    \end{tabular}
    \caption{Summary of the various bounds obtained for the sectional and Ricci curvatures of the manifolds studied.}
    \label{tab:TablaCurvaturas1}
\end{table}

\subsubsection{Bounding the injectivity and convexity radii}
\label{shortestgeod}

In order to obtain lower bounds on the injectivity and convexity radii in the manifolds of our interest, we saw in \cref{secinjectivity} that it is useful to have lower bounds on the length of their shortest non-trivial periodic geodesic. 

Let us first briefly discuss the length of the shortest non-trivial periodic geodesic in a product manifold. This is in fact a rather easy task: given two Riemannian manifolds, $(M, g)$ and $(N, h)$, every geodesic of $M \times N$ endowed with the product metric can be written as 
\[
\gamma(t) = (\gamma_1(t), \gamma_2(t)),
\]
where $\gamma_1(t)$ and $\gamma_2(t)$ are geodesics in $M$ and $N$, respectively. For this reason, the length of the shortest non-trivial periodic geodesic in $M \times N$, is 
\[
l(M\times N) = \min \{l(M), l(N)\}.
\] 
Certainly, the shortest non-trivial periodic geodesic of the product manifold corresponds to fixing a point for either $M$ or $N$, and considering the shortest non-trivial periodic geodesic of $N$ or $M$, respectively.

Let us now lower bound the length of the shortest non-trivial periodic geodesics in the manifolds of our interest. The lower bounds that will be derived are summarized in \cref{tab:TablaGeodDiametro} along with the upper bounds for the diameter of the manifolds, which will be discussed in \cref{sec:diamcontrol}.

\begin{table}
    \centering
    \begin{tabular}{|c|c|c|c|}
    \hline
    Manifold & Metric & $l(M)$ & Diameter\\
    \hline
    $\textup{U}(n)$ & Bi-invariant & $\geq 2\pi$ & $\leq \pi \sqrt{n} (\sqrt{2}+1)$ \\
    \hline
    $\bb{S}^n$ & Round & $= 2\pi$ & $\leq \pi$\\
    \hline
    $\textup{V}_k(\bb{C}^n)$ & Induced & $\geq 2\pi$ & $\leq \pi \sqrt{n} (\sqrt{2}+1)$ \\
    \hline
    $\textup{Gr}_k(\bb{C}^n)$ & Induced & $\geq \sqrt{2}\pi$ & $\leq \pi \sqrt{n} (\sqrt{2}+1)$ \\
    \hline
    \end{tabular}
    \caption{Summary of various bounds for the diameter and length of the shortest non-trivial periodic geodesic of the manifolds that are studied in \cref{sec:traceratio}.}
    \label{tab:TablaGeodDiametro}
\end{table}

The length of the shortest non-trivial periodic geodesic on the sphere when endowed with the round metric is known to be $2\pi$ \cite[Proposition 5.27]{lee2018introductionRiemannian}, as geodesics on the sphere correspond to great circles. Let us now compute the length of the shortest non-trivial periodic geodesic in $\textup{U}(n)$ endowed with the bi-invariant metric.
\begin{proposition}
\label{shortestperiodicgeodUn}
The length of the shortest non-trivial periodic geodesic of $\textup{U}(n)$ endowed with the bi-invariant metric is bounded from below by $2\pi$. 
\end{proposition}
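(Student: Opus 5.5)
The plan is to characterize the geodesics of $\textup{U}(n)$ explicitly and read off their periods. By the proposition on geodesics of $\textup{U}(n)$ with the bi-invariant metric of \cref{metrica}, every geodesic through $p$ has the form $\gamma(t) = e^{tX}p$ with $X \in \mathfrak{u}(n)$. Left/right translations are isometries, so it suffices to consider geodesics through the identity, $\gamma(t) = e^{tX}$, parametrized with constant speed $|X|_g = \sqrt{\Tr(X^\dagger X)}$.

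Next, diagonalize $X$. Since $X$ is anti-Hermitian, we can write $X = W\,\textup{diag}(i\theta_1, \dotsc, i\theta_n)\,W^\dagger$ with $W$ unitary and $\theta_j \in \bb{R}$. Then
\[
e^{tX} = W\,\textup{diag}(e^{it\theta_1}, \dotsc, e^{it\theta_n})\,W^\dagger.
\]
For the geodesic to be non-trivial and periodic with period $T>0$ (i.e.\ $\gamma(T) = \gamma(0)$, which is what closure of a periodic geodesic through $\mathds{1}$ amounts to), we need $e^{iT\theta_j} = 1$ for all $j$. That is, $T\theta_j \in 2\pi\bb{Z}$ for every $j$, and at least one $\theta_j \neq 0$ because $X \neq 0$.

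The length of $\gamma|_{[0,T]}$ is then
\[
T|X|_g = T\Big(\sum_j \theta_j^2\Big)^{1/2} = \Big(\sum_j (T\theta_j)^2\Big)^{1/2}.
\]
The right-hand side is the Euclidean norm of a nonzero vector in $2\pi\bb{Z}^n$, hence at least $2\pi$. This gives the bound, with equality attained for $X = \textup{diag}(i,0,\dotsc,0)$.

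The only point requiring care is the definition of periodic geodesic. If it is read as a closed geodesic loop (with $\gamma'(T) = \gamma'(0)$ as well), the condition $\gamma(T)=\gamma(0)$ is still necessary, so the lower bound persists. If the loop passes through a general point $p$, we translate by $p^{-1}$, which is an isometry and preserves lengths. I expect no real obstacle here; the main step is simply the diagonalization reducing everything to the lattice $2\pi\bb{Z}^n$.
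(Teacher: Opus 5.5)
Your proof is correct and is essentially the paper's argument. Both reduce to one-parameter subgroups $e^{tX}$, diagonalize the anti-Hermitian $X$, and use that closure forces $t\lambda_j \in 2\pi\bb{Z}$ for every eigenvalue. The only difference is normalization: the paper takes $|X|_g = 1$ and uses $|\lambda_j| \leq 1$, while you keep the speed arbitrary and write the length as the norm of a nonzero vector of $2\pi\bb{Z}^n$, which is the same estimate (your example $X=\textup{diag}(i,0,\dotsc,0)$ also shows the bound is attained).
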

\begin{proof}
Let us fix $X \in \mathfrak{u}(n)$ with norm one, i.e. $-\Tr(X^2) = 1$, and a periodic geodesic starting in $U \in \textup{U}(n)$ in the direction of $X$, i.e. $\gamma(t) = e^{tX}U$. We want to find
\[
\min\{t > 0 : e^{tX}U = U\} = \min\{t > 0 : e^{tX} = \mathds{1}\}.
\]

Since $X \in \mathfrak{u}(n)$ we know that $X^\dagger = -X$, therefore all its eigenvalues are imaginary. Furthermore, $X$ is normal (i.e. $X^\dagger X = X X^\dagger$). Therefore, without loss of generality, we can consider $X$ to be in its diagonal form
\[
\Sigma = \begin{bmatrix} 
    i\lambda_1 &  & & \\
     & i\lambda_2 &  &  \\
     &  & \ddots &  \\
     & &  & i\lambda_n 
    \end{bmatrix},
\]
where $\lambda_j \in \bb{R}$ for every $j \in \{1, \dotsc, n\}$. Otherwise, there would exist a unitary matrix $V$ such that $X = V \Sigma V^\dagger$ and $e^{tX} = V e^{t\Sigma} V^\dagger$, which yields
\[
e^{tX} = \mathds{1} \iff V e^{t\Sigma} V^\dagger = \mathds{1} \iff e^{t\Sigma} = \mathds{1}. 
\]
Now, 
\[
e^{tX} = \begin{bmatrix} 
    e^{it\lambda_1} & & \\
    & e^{it\lambda_2} & &  \\
    &  &  \ddots &\\
    & &  & e^{it\lambda_n} 
    \end{bmatrix},
\]
where all the off-diagonal elements are zero. In order for $e^{tX}$ to be the identity it has to hold that
\[
e^{it\lambda_j} = 1,\quad t > 0, \text{ for each } j \in \{1, \dotsc, n\},
\]
which implies that
\[
t\lambda_j = 2\pi k_j\quad \text{for some } k_j \in \bb{Z},\quad \text{for every }j \in \{1,\dotsc,n\}.
\]
Since for every eigenvalue $i\lambda_j$ it holds that $|\lambda_j|^2 \leq 1$, we can conclude that $t$ has to be at least $2\pi$ and so $l(\textup{U}(n)) \geq 2\pi$.
\end{proof}

Let us now finish by obtaining a lower bound on the length of the shortest non-trivial periodic geodesic on the Stiefel and Grassmann manifolds when endowed with their induced metrics.

Let $g$ be the bi-invariant metric of $\textup{U}(n)$. The length of the shortest non-trivial periodic geodesic on the Stiefel manifold $\textup{V}_k(\bb{C}^n)$ when endowed with the metric $h$ making $\pi: (\textup{U}(n), g) \to (\textup{V}_k(\bb{C}^n), h)$ a Riemannian submersion is known to be lower bounded by $2\pi$ \cite{rentmeesters2013algorithms,absil2025ultimate}.

Finally, let us obtain a lower bound for the shortest non-trivial periodic geodesic of Grassmann manifolds. 
\begin{proposition}
Consider $\textup{U}(n)$ endowed with the bi-invariant metric, and let $\textup{Gr}_k(\bb{C}^n)$ be endowed with the metric making the quotient map $\pi: \textup{U}(n) \to \textup{Gr}_k(\bb{C}^n)$ a Riemannian submersion. The length of its shortest non-trivial periodic geodesic is lower bounded by $\sqrt{2}\pi$. 
\end{proposition}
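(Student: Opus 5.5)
The plan is to mirror the proof of \cref{shortestperiodicgeodUn}, using \cref{LemmaHorizontalGeodesics} to lift geodesics of $\textup{Gr}_k(\bb{C}^n)$ to $\textup{U}(n)$. Let $\tilde\gamma$ be a non-trivial periodic unit-speed geodesic in $\textup{Gr}_k(\bb{C}^n)$ with $\tilde\gamma(0) = \pi(U)$. Since $\textup{U}(n)$ is compact, hence complete, $\tilde\gamma = \pi\circ\gamma$ for a horizontal geodesic $\gamma(t) = Ue^{tX}$ in $\textup{U}(n)$. Here the direction is chosen via \cref{horizontalleftinvariant}, with $X \in \mathfrak{u}(n)$ horizontal at the identity and $|X|_g = 1$. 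For the right action by $\textup{U}(k)\times\textup{U}(n-k)$, the vertical space at $\mathds{1}$ consists of the block-diagonal anti-Hermitian matrices. Horizontality therefore forces
\[
X = \begin{pmatrix} 0 & -Z^\dagger \\ Z & 0\end{pmatrix},\quad Z \in \bb{C}^{(n-k)\times k},\quad 2\Tr(Z^\dagger Z) = 1.
\]

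The key step is to compute when $\pi(Ue^{tX}) = \pi(U)$. This happens exactly when $e^{tX}$ lies in the block-diagonal subgroup $\textup{U}(k)\times\textup{U}(n-k)$, i.e.\ when the off-diagonal blocks of $e^{tX}$ vanish. Take a singular value decomposition $Z = W\Sigma V^\dagger$ with singular values $\sigma_1,\dotsc,\sigma_r$. Conjugating by $\textup{diag}(V,W)$, which is itself block-diagonal and so does not change the condition, reduces $X$ to a direct sum of $2\times 2$ rotation generators $\begin{pmatrix}0&-\sigma_j\\ \sigma_j&0\end{pmatrix}$ plus zeros. The exponential then consists of rotation blocks
\[
\begin{pmatrix}\cos t\sigma_j & -\sin t\sigma_j\\ \sin t\sigma_j & \cos t\sigma_j\end{pmatrix}.
\]
The off-diagonal entries vanish precisely when $\sin(t\sigma_j) = 0$ for every $j$ with $\sigma_j > 0$, i.e.\ $t\sigma_j \in \pi\bb{Z}$.

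Next comes the normalisation. Since $2\sum_j\sigma_j^2 = 1$, every nonzero singular value satisfies $\sigma_j \le 1/\sqrt{2}$. Non-triviality of $\tilde\gamma$ means $X \neq 0$, so some $\sigma_j>0$. The smallest period $t>0$ must then satisfy $t \ge \pi/\sigma_j \ge \sqrt{2}\pi$. Because $\tilde\gamma$ has unit speed, since $\pi$ is a Riemannian submersion and $X$ is horizontal, the period $t$ equals the length of the closed geodesic. This gives $l(\textup{Gr}_k(\bb{C}^n)) \ge \sqrt{2}\pi$.

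The main obstacle is the first step: it must be justified that every geodesic of the Grassmannian arises as $\pi(Ue^{tX})$ with $X$ horizontal at $\mathds{1}$ in the above block form, rather than merely horizontal at $U$. I would handle this via \cref{horizontalleftinvariant}: left translation by $U$ is an isometry that maps fibres to fibres. Combined with the geodesic description $e^{tX}p$ for bi-invariant metrics (written equivalently as $Ue^{tX}$), this identifies horizontal geodesics through $U$ with left translates of those through $\mathds{1}$. A secondary point is that periodicity of $\tilde\gamma$ requires only $\pi(\gamma(t)) = \pi(\gamma(0))$, not $\gamma(t)=\gamma(0)$. I would also check that returning to the same point forces the same tangent direction. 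This holds automatically, since the rotation blocks at $t\sigma_j\in\pi\bb{Z}$ act as $\pm 1$ and therefore map $X$ back to itself up to the fibre action.
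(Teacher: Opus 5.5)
Your proposal is correct and is essentially the paper's argument. The paper works in the projector model with the geodesic $e^{t[X,P]}Pe^{-t[X,P]}$, and $[X,P]$ is exactly your horizontal generator, so its return condition $\cos^2(t\sqrt{AA^\dagger})=\mathds{1}_k$ is equivalent to your $\sin(t\sigma_j)=0$, and it uses the same normalization $2\sum_j\sigma_j^2=1$. Your lift via \cref{LemmaHorizontalGeodesics} and the singular value decomposition only make explicit two steps the paper takes for granted: that the induced metric is the Frobenius one, and its ``without loss of generality $AA^\dagger$ diagonal'' reduction. Your final check that the velocity also returns is not needed for the lower bound.
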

\begin{proof}
Let us fix some $P \in \textup{Gr}_k(\bb{C}^n)$, and some unit tangent vector $X \in T_P \textup{Gr}_k(\bb{C}^n)$. It is known (cf. \cite{bendokat2024grassmann}) that $X$ corresponds to a Hermitian matrix such that $X = PX + XP$ and that the geodesic starting at $P$ in the direction $X$ is given by 
\[
\gamma_{P, X}(t) := e^{t[X, P]}P e^{-t[X, P]}.
\]

Without loss of generality, we may assume that $P = \begin{pmatrix}
\mathds{1}_k & 0 \\
0 & 0_{n-k}
\end{pmatrix}$, 
and so the condition $X = XP + PX$ implies that $X$ is of the form 
\[
X = \begin{pmatrix}
0_k & A\\
A^\dagger & 0_{n-k}
\end{pmatrix},
\]
where $A$ is a complex $k \times (n-k)$ matrix. 

Standard calculations allow us to conclude that 
\begin{align*}
[X, P]^{2m} &= \begin{pmatrix}
(-1)^m (AA^\dagger)^m & 0\\
0 & (-1)^m (A^\dagger A)^m
\end{pmatrix},\\
[X, P]^{2m+1} &= \begin{pmatrix}
0 & (-1)^{m+1}(A A^\dagger)^m A\\
(-1)^{m}(A^\dagger A)^m A^\dagger & 0
\end{pmatrix},
\end{align*}
for every $m \in \bb{N}\cup\{0\}$. Thus, 
\begin{align*}
e^{t[X, P]} &= \begin{pmatrix}
\cos(t\sqrt{AA^\dagger}) & -\frac{\sin(t\sqrt{AA^\dagger})}{\sqrt{AA^\dagger}} A\\
\frac{\sin(t\sqrt{A^\dagger A})}{\sqrt{A^\dagger A}} A^\dagger & \cos(t\sqrt{A^\dagger A})
\end{pmatrix},
\end{align*}
where $\sin(t\cdot0)/0 := t$. 

This way, in order for $P = e^{t[X, P]} P e^{-t[X, P]}$ to hold, it must be the case that 
\begin{equation}
\label{eq:identitycosines}
\cos^2(t\sqrt{AA^\dagger}) = \mathds{1}_k.
\end{equation}

Let us assume without loss of generality that $A A^\dagger$ is diagonal with eigenvalues $\{\lambda_i\}_{i = 1}^k$. If \cref{eq:identitycosines} holds, every $\lambda_i$ must be non-negative and 
\begin{equation}
\label{eq:conditionclosedgeodGr}
t\sqrt{\lambda_i} = \pi k_i,\quad k_i \in \bb{Z},
\end{equation}
for every $i \in \{1, \dotsc, k\}$. 

Finally, note that, as $X$ has norm one,
\[
1 = \Tr(X^\dagger X) = \Tr(AA^\dagger) + \Tr(A^\dagger A) = 2\sum_{i = 1}^k \lambda_i,
\]
and so, as the eigenvalues are assumed to be non-negative, it holds that $\lambda_i \leq \frac{1}{2}$ for every $i \in \{1, \dotsc, k\}$, and $\sqrt{\lambda_i} \leq \frac{\sqrt{2}}{2}$. Thus, in order for \cref{eq:conditionclosedgeodGr} to hold, it must be the case that 
\[
t \geq \frac{2\pi}{\sqrt{2}} = \sqrt{2}\pi,
\]
finishing the proof.
\end{proof}

The lower bounds on the length of the shortest non-trivial periodic geodesic and the upper bounds on the sectional curvature of the manifolds considered allow us to obtain bounds on their injectivity and convexity radii. 
\begin{corollary}
\label{cor:injectivityradii}
Consider the manifolds $\textup{U}(n)$, $\bb{S}^n$, $\textup{V}_k(\bb{C}^n)$ and $\textup{Gr}_k(\bb{C}^n)$ endowed with the metrics shown in \cref{tab:TablaGeodDiametro}. Then their injectivity and convexity radii can be lower bounded as follows;
\begin{align*}
i(\textup{U}(n)) &\geq \pi,&\quad \mathit{conv}(\textup{U}(n)) &\geq \frac{\pi}{2},\\
i(\bb{S}^n) &\geq \pi, &\quad \mathit{conv}(\bb{S}^n) &\geq \frac{\pi}{2},\\
i(\textup{V}_k(\bb{C}^n)) &\geq \frac{\sqrt{2}\pi}{2}, &\quad \mathit{conv}(\textup{V}_k(\bb{C}^n)) &\geq  \frac{\sqrt{2}\pi}{4}, \\
i(\textup{Gr}_k(\bb{C}^n)) &\geq \frac{\sqrt{2}\pi}{2},&\quad \mathit{conv}(\textup{Gr}_k(\bb{C}^n)) &\geq \frac{\sqrt{2}\pi}{4}.\\
\end{align*}
\end{corollary}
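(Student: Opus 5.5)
The plan is to feed the two inputs of \cref{InjectivityRadiusBound} and \cref{controlconvexityradius} for each manifold: the upper bound $D$ on sectional curvature from \cref{tab:TablaCurvaturas1}, and the lower bound on $l(M)$ from \cref{tab:TablaGeodDiametro}. Both results give a bound of the form $\min\{\pi/\sqrt{D}, l(M)/2\}$, halved in the case of the convexity radius. So each line of the corollary should reduce to evaluating one minimum.

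The four cases go as follows.
\begin{itemize}
\item For $\textup{U}(n)$, \cref{LemaCurvaturaUn} gives $D=1/2$, so $\pi/\sqrt{D}=\sqrt{2}\pi$. With $l\ge 2\pi$ from \cref{shortestperiodicgeodUn}, the minimum is $\pi$. This gives $i\ge\pi$ and $\mathit{conv}\ge\pi/2$.
\item For the round sphere, $D=1$ by \cref{sectionalcurvatureofsphere} and $l=2\pi$. The minimum is again $\pi$, giving the same two bounds.
\item For $\textup{V}_k(\bb{C}^n)$, \cref{curvStiefel} gives $D=2$, so $\pi/\sqrt2=\sqrt2\pi/2$. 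The cited bound $l\ge 2\pi$ gives $l/2\ge\pi$. The minimum is therefore $\sqrt2\pi/2$, which yields $i\ge \sqrt2\pi/2$ and $\mathit{conv}\ge\sqrt2\pi/4$.
\item For $\textup{Gr}_k(\bb{C}^n)$, \cref{prop:curvGrassmann} gives $D=2$, so again $\pi/\sqrt2=\sqrt2\pi/2$. The preceding proposition gives $l\ge\sqrt2\pi$, so $l/2\ge \sqrt2\pi/2$ as well. Both terms are at least $\sqrt2\pi/2$, which gives the stated bounds.
\end{itemize}

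One point needs care. \cref{InjectivityRadiusBound,controlconvexityradius} require $D>0$ and compactness; both hold in all four cases, and nonnegative curvature is fine. The bounds are monotone: replacing $D$ by a larger upper bound, or $l(M)$ by a smaller lower bound, only decreases the right-hand side. So inserting the bounds from the tables in place of the exact quantities is legitimate.

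I expect the main obstacle to be the Stiefel and Grassmann periodic-geodesic bounds, since the curvature bounds themselves are already proved. For the Stiefel manifold the paper cites external work. For the Grassmannian the minimum is attained jointly by both terms, so there is no slack and the $\sqrt2\pi$ bound must be exactly as proved. Once these are accepted, the corollary is a direct substitution.
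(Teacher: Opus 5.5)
Your proposal is correct and follows the paper's own proof. For each manifold you insert the sectional curvature upper bound and the lower bound on the shortest periodic geodesic into \cref{InjectivityRadiusBound} and \cref{controlconvexityradius}, and evaluate the resulting minimum, with the same four computations; your remark that both bounds are monotone in $D$ and $l(M)$ makes explicit why substituting the tabulated estimates is legitimate, which the paper leaves implicit.
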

\begin{proof}
To obtain the bounds, we can use \cref{InjectivityRadiusBound,controlconvexityradius}, along with the bounds on the sectional curvature and the length of their shortest non-trivial periodic geodesic shown in \cref{tab:TablaCurvaturas1,tab:TablaGeodDiametro}. 

For the unitary group endowed with the bi-invariant metric, we obtained that  
\[
K \leq \frac{1}{2},\quad \text{and}\quad l(\textup{U}(n)) \geq 2\pi.
\]
Thus
\[
i(\textup{U}(n)) \geq \min\{\sqrt{2}\pi, \pi\} = \pi,\quad \textup{and}\quad 
\mathit{conv}(\textup{U}(n)) \geq \frac{\pi}{2}. 
\]

For the sphere endowed with the round metric, we showed that 
\[
K \leq 1,\quad \textup{and}\quad l(\bb{S}^n) = 2\pi,
\]
so
\[
i(\bb{S}^n) \geq \min\{\pi, \pi\} = \pi,\quad \textup{and}\quad \mathit{conv}(\bb{S}^n) \geq \frac{\pi}{2}.
\]

For the Stiefel manifold $\textup{V}_k(\bb{C}^n)$ endowed with the metric $h$ making $\pi: (\textup{U}(n), g) \to (\textup{V}_k(\bb{C}^n), h)$ a Riemannian submersion, where $g$ is the bi-invariant metric of $\textup{U}(n)$, we proved that
\[
K \leq 2,\quad \textup{and}\quad l(\textup{V}_k(\bb{C}^n)) \geq 2\pi,
\]
which implies that
\[
i(\textup{V}_k(\bb{C}^n)) \geq \min\{\frac{\sqrt{2}\pi}{2}, \pi\} = \frac{\sqrt{2}\pi}{2},\quad \textup{and}\quad 
\mathit{conv}(\textup{V}_k(\bb{C}^n)) \geq \frac{\sqrt{2}\pi}{4}.    
\]

Lastly, for the Grassmann manifold also endowed with its induced metric, we showed that
\[
K \leq 2,\quad \textup{and}\quad l(\textup{Gr}_k(\bb{C}^n)) \geq \sqrt{2}\pi,
\]
yielding
\[
i(\textup{Gr}_k(\bb{C}^n)) \geq \min\{\frac{\sqrt{2}\pi}{2}, \frac{\sqrt{2}\pi}{2}\} = \frac{\sqrt{2}\pi}{2},\quad \textup{and}\quad 
\mathit{conv}(\textup{Gr}_k(\bb{C}^n)) \geq \frac{\sqrt{2}\pi}{4}.     
\]
\end{proof}

\subsubsection{Controlling the diameter}
\label{sec:diamcontrol}

We will now derive the diameter bounds shown in \cref{tab:TablaGeodDiametro}. In this subsection we will also show how the diameter of a product manifold is related to that of its components, and how the diameters of the total and the base space of a Riemannian submersion are related. 

\begin{proposition}
\label{diametersubmersion}
Let $M$ be a compact Riemannian manifold and let $\pi : (M, g) \to (B, h)$ be a Riemannian submersion. Denote by $d_g$ and $d_h$ the geodesic distances on $M$ and $B$, respectively. Then  for every $x, y \in M$
\[
d_g(x, y) \geq d_h(\pi(x), \pi(y)).
\]
In particular, if $\pi$ is surjective, $\textup{diam}(B) \leq \textup{diam}(M)$. 
\end{proposition}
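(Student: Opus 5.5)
Given $x, y \in M$, the plan is to take a minimizing curve between them in $M$, push it down by $\pi$, and compare lengths. Since $M$ is compact, and hence complete, Hopf--Rinow gives a minimizing geodesic $\gamma:[0,1]\to M$ with $\gamma(0)=x$, $\gamma(1)=y$ and $L_g(\gamma)=d_g(x,y)$. The curve $\pi\circ\gamma$ is then a piecewise smooth curve in $B$ joining $\pi(x)$ to $\pi(y)$, so by the definition of the geodesic distance $d_h(\pi(x),\pi(y)) \leq L_h(\pi\circ\gamma)$.

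The key pointwise estimate is $|\pi_*v|_h \leq |v|_g$ for every $v\in T_pM$. To see it, split $v = v^{\mathcal{V}} + v^{\mathcal{H}}$ as in \cref{notverticalhorizontal}. Then $\pi_* v = \pi_* v^{\mathcal{H}}$ because $v^{\mathcal{V}}\in\ker\pi_*|_p$. By the definition of Riemannian submersion, $|\pi_*v^{\mathcal{H}}|_h = |v^{\mathcal{H}}|_g$. Orthogonality of the splitting gives $|v|_g^2 = |v^{\mathcal{V}}|_g^2+|v^{\mathcal{H}}|_g^2 \ge |v^{\mathcal{H}}|_g^2$. Applying this estimate to $v=\gamma'(t)$ and integrating yields
\[
L_h(\pi\circ\gamma)=\int_0^1 |\pi_*\gamma'(t)|_h\,dt \leq \int_0^1 |\gamma'(t)|_g\,dt = d_g(x,y),
\]
which is the claimed inequality.

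For the diameter statement, let $p,q\in B$. Surjectivity of $\pi$ gives $x\in\pi^{-1}(p)$ and $y\in\pi^{-1}(q)$. Then $d_h(p,q)\le d_g(x,y)\le \textup{diam}(M)$, and taking the supremum over $p,q$ gives $\textup{diam}(B)\le\textup{diam}(M)$.

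No step is a real obstacle. The only points needing care are that the existence of a minimizing geodesic comes from compactness, and that the definition of $d_h$ allows piecewise smooth curves. If one wants to avoid minimizing geodesics entirely, the same length comparison works for any curve from $x$ to $y$; taking the infimum over such curves gives the result directly.
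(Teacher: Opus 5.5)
Your proof is correct and follows the same route as the paper. You take a minimizing geodesic (which exists by compactness), bound $|\gamma'|_g$ below by its horizontal part, use that $\pi_*$ is an isometry on horizontal vectors, and compare with the length of $\pi\circ\gamma$; you also make the diameter consequence via surjectivity explicit and note that minimizing geodesics are not actually needed, both of which the paper leaves implicit.
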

\begin{proof}
Since $M$ is compact, the distance between $x$ and $y$ is attained by a geodesic. We denote such a geodesic by $\gamma$, and assume that $\gamma(0) = x$, $\gamma(1) = y$. Then 
\begin{align*}
d_g(x, y) = \int_0^1 |\gamma'(s)|_g\, ds \geq \int_0^1 |(\gamma')(s)^\mathcal{H}|_g\, ds = \int_0^1 |\pi_* (\gamma')(s)^\mathcal{H}|_h\, ds \geq d_h(\pi(x), \pi(y)),
\end{align*}
concluding the proof. 
\end{proof}

Lastly, the diameter of a product manifold endowed with the product metric can be easily studied, by simply obtaining an expression for the distance function in the product manifold. 
\begin{proposition}[Diameter of the product]
\label{diameterProduct}
Given two Riemannian manifolds $(M, g)$ and $(N, h)$, if we consider the product manifold $(M \times N, g\oplus h)$ endowed with the product metric, the distance induced by this metric is such that for every $(x_1,y_1), (x_2, y_2) \in M \times N$, 
\[
d_{g\oplus h}((x_1,y_1), (x_2, y_2)) = \sqrt{d^2_g(x_1, x_2) + d^2_h(y_1,y_2)},
\]
and so 
\[
\textup{diam}(M\times N) = \sqrt{\textup{diam}(M)^2 + \textup{diam}(N)^2}.
\]
\end{proposition}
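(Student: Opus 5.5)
We plan to prove the statement in two parts: first the distance formula, then the diameter identity as a consequence.

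For the distance formula we prove two inequalities.

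\emph{Upper bound.} Take minimizing geodesics $\gamma_1:[0,1]\to M$ from $x_1$ to $x_2$ and $\gamma_2:[0,1]\to N$ from $y_1$ to $y_2$, both parametrized proportionally to arc length. They exist if $M,N$ are complete; otherwise we use almost-minimizing curves and pass to a limit. The pair $\gamma=(\gamma_1,\gamma_2)$ has constant speed
\[
|\gamma'|^2 = |\gamma_1'|_g^2+|\gamma_2'|_h^2 = d_g(x_1,x_2)^2+d_h(y_1,y_2)^2 .
\]
Hence its length is $\sqrt{d_g^2+d_h^2}$, and this gives $d_{g\oplus h}\le\sqrt{d_g^2+d_h^2}$.

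\emph{Lower bound.} Let $c=(c_1,c_2):[0,1]\to M\times N$ be any piecewise $C^1$ curve joining the two points. Set $a(t)=|c_1'(t)|_g$ and $b(t)=|c_2'(t)|_h$. Then
\[
L(c)=\int_0^1\sqrt{a^2+b^2}\,dt \ \ge\ \sqrt{\Big(\int_0^1 a\Big)^2+\Big(\int_0^1 b\Big)^2}.
\]
This is Minkowski's inequality, i.e. the triangle inequality for the Euclidean norm of the vector-valued integral $\int (a,b)\,dt$. Since $\int a\ge d_g(x_1,x_2)$ and $\int b\ge d_h(y_1,y_2)$, taking the infimum over $c$ gives the reverse inequality.

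For the diameter, we take suprema over pairs $(x_1,y_1),(x_2,y_2)$. The two coordinates can be chosen independently, so
\[
\sup\big(d_g^2+d_h^2\big)=\sup d_g^2+\sup d_h^2 ,
\]
which yields $\textup{diam}(M\times N)=\sqrt{\textup{diam}(M)^2+\textup{diam}(N)^2}$.

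The only mildly delicate step is the Minkowski inequality for integrals in the lower bound. We would justify it by writing $\sqrt{a^2+b^2}=\sup_{\theta}(a\cos\theta+b\sin\theta)$, integrating, and then optimizing over $\theta$.
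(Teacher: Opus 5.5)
Your proposal is correct and follows the route the paper intends. The paper states this proposition without a written proof, remarking only that it follows from computing the distance function of the product, and your argument supplies exactly that computation: the constant-speed product of minimizing curves gives $\le$, Minkowski's integral inequality gives $\ge$, and the suprema separate. The one phrasing to tighten is the non-complete case, where no limit of curves is needed: take $\varepsilon$-almost-minimizing curves, reparametrize them at constant speed, and let $\varepsilon\to 0$ in the resulting length bound.
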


With \cref{diameterProduct,diametersubmersion} in mind, we can now bound the diameter of the manifolds of our interest. We start by establishing upper bounds on the diameter of both the unitary group and the round sphere. To this end, we will use the following result.
\begin{proposition}[Bonnet-Myers]
Let $(M, g)$ be a compact, connected Riemannian manifold of dimension $n \geq 2$, and suppose that there is a positive constant $c = \frac{1}{R^2}$ such that the Ricci curvature of $M$ satisfies $\textup{Ric}_g(v,v) \geq (n-1)c$ for all unit vectors $v$. Then  the diameter of $M$ is less than or equal to $\pi R$.    
\end{proposition}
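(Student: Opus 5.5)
The statement is Bonnet--Myers: if $\textup{Ric}_g(v,v)\ge (n-1)/R^2$ for all unit $v$, then $\textup{diam}(M)\le \pi R$. I would prove it by the classical second-variation argument along a minimizing geodesic.

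First I use compactness (or just completeness, via Hopf--Rinow) to choose $p,q\in M$ and a unit-speed minimizing geodesic $\gamma:[0,L]\to M$ with $L=d_g(p,q)$. I argue by contradiction and assume $L>\pi R$. Then I take an orthonormal frame $\{\gamma'(0),e_2,\dotsc,e_n\}$ at $p$ and parallel transport it along $\gamma$, giving parallel fields $E_i(t)$ orthonormal to $\gamma'$. For $i=2,\dotsc,n$ I define the variation fields
\[
V_i(t):=\sin\Big(\frac{\pi t}{L}\Big)E_i(t),
\]
which vanish at both endpoints.

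The key step is the second variation formula for energy (equivalently length) of a proper variation of a geodesic:
\[
I(V_i,V_i)=\int_0^L\Big(|V_i'|^2_g-R(V_i,\gamma',\gamma',V_i)\Big)dt
=\int_0^L\Big(\frac{\pi^2}{L^2}\cos^2\frac{\pi t}{L}-\sin^2\frac{\pi t}{L}\,K(E_i,\gamma')\Big)dt .
\]
Summing over $i$ and using $\sum_{i\ge2}K(E_i,\gamma')=\textup{Ric}_g(\gamma',\gamma')\ge (n-1)/R^2$ gives
\[
\sum_{i=2}^n I(V_i,V_i)\le (n-1)\int_0^L\Big(\frac{\pi^2}{L^2}\cos^2\frac{\pi t}{L}-\frac{1}{R^2}\sin^2\frac{\pi t}{L}\Big)dt=\frac{(n-1)L}{2}\Big(\frac{\pi^2}{L^2}-\frac1{R^2}\Big).
\]
This is negative when $L>\pi R$. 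Hence some $I(V_i,V_i)<0$, so a nearby curve from $p$ to $q$ is strictly shorter than $\gamma$, contradicting minimality. Therefore every pair of points satisfies $d_g(p,q)\le\pi R$, which is the claim.

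The main obstacle is justifying the second variation formula and the sign convention relating $R(V,\gamma',\gamma',V)$ to sectional curvature. I would cite this from \cite{lee2018introductionRiemannian} rather than derive it. The rest of the argument is routine integration.
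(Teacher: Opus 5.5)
The paper gives no proof of this statement: it quotes Bonnet--Myers as a classical theorem (in essentially the textbook formulation) and only uses it to bound $\textup{diam}(\bb{S}^n)$ and $\textup{diam}(\textup{SU}(n))$. There is therefore no paper argument to compare against.

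Your proof is the standard second-variation argument, and it is correct as written. With the convention $R(V_i,\gamma',\gamma',V_i)=\sin^2(\pi t/L)\,K(E_i,\gamma')$, the summed index form is $\frac{(n-1)L}{2}\big(\frac{\pi^2}{L^2}-\frac{1}{R^2}\big)$, which is negative when $L>\pi R$. Because the first variation vanishes along a geodesic, a negative second variation of a proper variation yields a strictly shorter curve from $p$ to $q$, which is the contradiction you need.

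Your argument uses only completeness, via Hopf--Rinow, so it also yields compactness, which the paper's version takes as a hypothesis.
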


This result allows us to upper bound the diameter of the round sphere and $\textup{SU}(n)$.
\begin{corollary}
\label{diamSUn}
Let $\bb{S}^n$ be endowed with the round metric, and let $\textup{SU}(n)$ be endowed with the bi-invariant metric, where $n \geq 2$. Then  
\[
\textup{diam}(\bb{S}^n) \leq \pi,\quad \textup{\textit{and}}\quad \textup{diam}(\textup{SU}(n)) \leq \pi \sqrt{2n}.
\]
\end{corollary}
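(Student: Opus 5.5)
The plan is to apply the Bonnet--Myers theorem stated just above, once to each manifold, after computing a lower bound on its Ricci curvature of the form $(\dim-1)c$ with $c>0$.

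For the round sphere $\bb{S}^n$ with $n \geq 2$, \cref{sectionalcurvatureofsphere} gives that it is Einstein with $\textup{Ric}_{g_{\mathit{round}}} = (n-1)g_{\mathit{round}}$. Hence $\textup{Ric}(v,v) \geq (n-1)\cdot 1$ for every unit vector $v$. Bonnet--Myers then applies with $c = 1$, so $R = 1$, and it yields $\textup{diam}(\bb{S}^n) \leq \pi$. The sphere is compact and connected, so the hypotheses hold.

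For $\textup{SU}(n)$ with its bi-invariant metric, \cref{LemaCurvaturaSUn} gives $\textup{Ric}_g = \frac{n}{2}g$. The dimension is $d = n^2-1 \geq 3$, and $\textup{SU}(n)$ is compact and connected. We need $c$ with $(d-1)c = \frac{n}{2}$, that is, $c = \frac{n}{2(n^2-2)}$. Bonnet--Myers then gives
\[
\textup{diam}(\textup{SU}(n)) \leq \pi\sqrt{\frac{2(n^2-2)}{n}}.
\]
Since $\frac{n^2-2}{n} \leq n$, this is at most $\pi\sqrt{2n}$, which is the claimed bound.

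The main (mild) obstacle is bookkeeping. One must use the manifold dimension $n^2-1$ in the Bonnet--Myers normalization rather than $n$, and then check that the resulting constant is dominated by $\sqrt{2n}$. One must also make sure the Ricci identity of \cref{LemaCurvaturaSUn} is read with the same normalization of the bi-invariant metric, $g(X,Y)=\Tr(Y^\dagger X)$, that defines the diameter. Beyond this, no further estimates are needed.
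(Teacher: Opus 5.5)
Your proposal is correct and follows the paper's proof essentially verbatim. You apply Bonnet--Myers with $c=1$ for the round sphere, and with $c=\frac{n}{2(n^2-2)}$ for $\textup{SU}(n)$, using $\textup{Ric}_g=\frac{n}{2}g$ and $\dim \textup{SU}(n)=n^2-1$, and then finish with the estimate $\frac{n^2-2}{n}\le n$.
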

\begin{proof}
First, using \cref{sectionalcurvatureofsphere} we know that the round sphere $\bb{S}^n$ is an Einstein manifold with constant $n-1$, and so it satisfies the Bonnet-Myers theorem with constant $c = 1$. 

For the special unitary group, recall from \cref{LemaCurvaturaSUn} that $\textup{SU}(n)$ is an Einstein manifold with constant $\frac{n}{2}$. Moreover, $\textup{SU}(n)$ is connected, compact and has dimension $n^2-1$. Therefore $\textup{Ric}(v,v) = ((n^2-1) -1)c$ with
\[
c = \frac{n}{2(n^2 -2)}.
\]
This way, $R = \sqrt{\frac{2(n^2-2)}{n}}$ and so we can conclude that
\begin{align*}
\textup{diam}(\textup{SU}(n)) &\leq \pi \sqrt{\frac{2(n^2-2)}{n}} \leq \pi \sqrt{2n}.
\end{align*}
\end{proof}

Having obtained a bound on the diameter of $\textup{SU}(n)$, we can now upper bound the diameter of $\textup{U}(n)$ when endowed with its bi-invariant metric. 
\begin{corollary}
\label{diamUn}
Let $\textup{U}(n)$ be endowed with the bi-invariant metric. Then  
\[
\textup{diam}(\textup{U}(n)) \leq \pi\sqrt{n}(1+ \sqrt{2}).
\]
\end{corollary}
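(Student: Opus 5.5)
The plan is to bound $\textup{U}(n)$ by writing it as a product of $\textup{SU}(n)$ and a circle, up to a finite quotient. I would combine \cref{diamSUn} with \cref{diameterProduct} and \cref{diametersubmersion}.

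Consider the surjective smooth map
\[
\Phi: \textup{SU}(n) \times \textup{U}(1) \to \textup{U}(n),\quad (V, e^{i\theta}) \mapsto e^{i\theta} V.
\]
At the identity, the Lie algebra splits orthogonally as $\mathfrak{u}(n) = \mathfrak{su}(n) \oplus i\bb{R}\mathds{1}$ with respect to $\Tr(Y^\dagger X)$, since $\Tr(X \cdot (i\mathds{1})^\dagger) = -i\,\Tr(X) = 0$ for traceless $X$. The central direction $i\mathds{1}$ has norm $\sqrt{n}$.

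Endow $\textup{U}(1)$ with the metric making $\theta \mapsto e^{i\theta}\mathds{1}$ a curve of speed $\sqrt{n}$, so that $\textup{U}(1)$ is a circle of length $2\pi\sqrt{n}$ and diameter $\pi\sqrt{n}$. With the product metric on the domain, $d\Phi$ is an isometry at every point, by bi-invariance and orthogonality of the splitting. Hence $\Phi$ is a local isometry, and in particular a Riemannian covering, since its kernel $\{(\omega^{-1}\mathds{1}, \omega) : \omega^n = 1\}$ is finite. A Riemannian covering is a Riemannian submersion, so \cref{diametersubmersion} gives
\[
\textup{diam}(\textup{U}(n)) \leq \textup{diam}(\textup{SU}(n)\times \textup{U}(1)).
\]

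By \cref{diameterProduct} and \cref{diamSUn},
\[
\textup{diam}(\textup{SU}(n)\times\textup{U}(1)) \leq \sqrt{2\pi^2 n + \pi^2 n} = \pi\sqrt{3n} \leq \pi\sqrt{n}(1+\sqrt{2}).
\]
Even the cruder triangle-inequality bound $\pi\sqrt{2n} + \pi\sqrt{n} = \pi\sqrt{n}(1+\sqrt{2})$ matches the stated constant, so the product formula is not strictly needed.

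The main point to check is that $\Phi$ is distance non-increasing. Instead of invoking submersion theory, I would argue directly: any path in the product maps to a path of equal length, because $d\Phi$ preserves norms. The case $n=1$, where \cref{diamSUn} does not apply, is trivial: $\textup{U}(1)$ has diameter $\pi$.
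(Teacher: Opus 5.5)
Your argument is correct. It shares the paper's decomposition $\textup{U}(n) = \textup{U}(1)\cdot\textup{SU}(n)$ and its reliance on \cref{diamSUn}, but it combines the two factors differently.

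The paper writes $A = n_A h_A$ and $B = n_B h_B$ with scalars $n_A, n_B$ and $h_A, h_B \in \textup{SU}(n)$. It uses bi-invariance to reduce to $d(h_A, \overline{n}_A n_B h_B)$ and then applies the triangle inequality through $h_B$. The term $d(h_A,h_B)$ is bounded by $\textup{diam}(\textup{SU}(n))$, since paths in $\textup{SU}(n)$ are paths in $\textup{U}(n)$. The term $d(h_B, \overline{n}_A n_B h_B) = d(\mathds{1}, \overline{n}_A n_B \mathds{1})$ is bounded by the length $\pi\sqrt{n}$ of the scalar curve. This is exactly the ``cruder'' bound you mention at the end.

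Your route instead checks that the multiplication map $\Phi$ is a local isometry. This uses the orthogonality of $\mathfrak{su}(n)$ and $i\bb{R}\mathds{1}$, the norm $\sqrt{n}$ of $i\mathds{1}$, and the dimension count $(n^2-1)+1 = n^2$. You then apply the product distance formula of \cref{diameterProduct}. The two contributions therefore add in quadrature rather than linearly, giving the sharper bound $\pi\sqrt{3n}$.

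The paper's argument is more elementary: it needs neither the orthogonality of the splitting nor that $d\Phi$ preserves norms, only bi-invariance and the length of one explicit curve. Yours costs a little more verification and improves the constant from $1+\sqrt{2}$ to $\sqrt{3}$.

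Neither bound is sharp. Geodesics through $\mathds{1}$ are one-parameter subgroups, so $d(\mathds{1},U)$ is the minimal norm of a logarithm of $U$, namely $(\sum_j \alpha_j^2)^{1/2}$ with $\alpha_j \in [-\pi,\pi]$ the arguments of its eigenvalues. Hence in fact $\textup{diam}(\textup{U}(n)) = \pi\sqrt{n}$.
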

\begin{proof}
Let $A, B \in \textup{U}(n)$. We can rewrite $A$ and $B$ as $A = n_A h_A$, $B = n_B h_B$, where $n_A, n_B \in \textup{U}(1)$ and $h_A, h_B \in \textup{SU}(n)$. Using this decomposition we can write
\[
d_{\textup{U}(n)}(A, B) = d_{\textup{U}(n)}(n_A h_A, n_B h_B).
\]
Moreover, since the metric of $\textup{U}(n)$ is bi-invariant, it follows that 
\[
d_{\textup{U}(n)}(n_A h_A, n_B h_B) = d_{\textup{U}(n)}(h_A, \overline{n}_A n_B h_B).
\]
Now, applying the triangle inequality 
\[
d_{\textup{U}(n)}(h_A, \overline{n}_A n_B h_B) \leq d_{\textup{U}(n)}(h_A, h_B)  + d_{\textup{U}(n)}(h_B, \overline{n}_A n_B h_B).
\]

Using the fact that $d_{\textup{U}(n)}(h_A, h_B) \leq d_{\textup{SU}(n)}(h_A, h_B)$ and the bi-invariance of the metric again,
\begin{align*}
d_{\textup{U}(n)}(h_A, h_B)  + d_{\textup{U}(n)}(h_B, \overline{n}_A n_B h_B) &\leq d_{\textup{SU}(n)}(h_A, h_B) + d_{\textup{U}(n)} (h_B, \overline{n}_A n_B h_B)
\\&\leq \textup{diam}(\textup{SU}(n)) + d_{\textup{U}(n)}(\mathds{1}, \overline{n}_A n_B).
\end{align*}

Finally, since $n_A, n_B \in \textup{U}(1)$ and $d_{\textup{U}(n)}(\mathds{1}, \overline{n}_A n_B) = \sqrt{n}\, d_{\textup{U}(1)}(1, \overline{n}_A n_B)$, 
\begin{align*}
\textup{diam}(\textup{SU}(n)) + d_{\textup{U}(n)}(\mathds{1}, \overline{n}_A n_B) &\leq \textup{diam}(\textup{SU}(n)) + \sqrt{n}\, d_{\textup{U}(1)}(1, \overline{n}_A n_B)
\\&\leq \textup{diam}(\textup{SU}(n)) + \pi\sqrt{n},
\end{align*}
where the last inequality follows from $\textup{U}(1) \simeq \bb{S}^1$ having a diameter upper bounded by $\pi$. 
\end{proof}

Lastly, we can apply \cref{diametersubmersion} to obtain a bound on the diameter of $\textup{V}_k(\bb{C}^n)$ and $\textup{Gr}_k(\bb{C}^n)$ when endowed with their induced metrics. Indeed, it follows automatically that 
\begin{align}
\label{eq:diamaterboundstiefelgrassmann}
\begin{split}
\textup{diam}(\textup{V}_k(\bb{C}^n)) &\leq \textup{diam}(\textup{U}(n)) \leq \pi\sqrt{n}(1+ \sqrt{2}),\\
\textup{diam}(\textup{Gr}_k(\bb{C}^n)) &\leq \textup{diam}(\textup{U}(n)) \leq \pi\sqrt{n}(1+ \sqrt{2}).
\end{split}
\end{align}

\subsubsection{Dimension and volume}
\label{dimensionandvolume}

Lastly, let us briefly discuss the dimension and volume of the manifolds studied in this section, when endowed with the metrics shown in \cref{tab:TablaGeodDiametro}. 

\paragraph{Sphere.} The dimension of the $n$-sphere $\bb{S}^n$ is $n$. Its volume when endowed with the round metric is known to be
\[
\textup{Vol}_{g_{\textit{round}}}(\bb{S}^n) = \frac{2\pi^{(n+1)/2}}{\Gamma(\frac{n+1}{2})},
\]
(cf. \cite[Section 10]{lee2018introductionRiemannian}).

\paragraph{Unitary group.} The unitary group $\textup{U}(n)$ is a Lie group of dimension $n^2$. Furthermore, it is known (cf. \cite[Corollary 3.5.2]{lando2013graphs}) that when endowed with the bi-invariant metric, its volume is
\[
\textup{Vol}(\textup{U}(n)) = \frac{(2\pi)^{n(n+1)/2}}{\prod_{k = 1}^{n-1} k!}.
\]

Now, in order to study Stiefel and Grassmann manifolds, we will make use of two formulae which allow us to relate their dimension and volumes to those of the total space and the fibers. Given a Riemannian submersion $\pi: (M, g) \to (M/G, h)$, it holds that
\[
\dim(M/G) = \dim(M) - \dim(G).
\]
Moreover, if $\pi$ has totally geodesic fibres, 
\[
\textup{Vol}(M) = \textup{Vol(G)} \textup{Vol}(M/G).
\]
Note that the volume formula can be inferred using the Fubini formula from \cref{sec:fubini}. 

\paragraph{Stiefel manifold.} Using the formulae for the volume and the dimension of quotient spaces, it is easy to see that the dimension of the Stiefel manifold $\textup{V}_{k}(\bb{C}^{n})$ is 
\[
n^2 - (n-k)^2 = 2nk - k^2.
\]
Moreover, when endowed with the metric $h$ which makes $\pi: (\textup{U}(n), g) \to (\textup{V}_{k}(\bb{C}^{n}), h)$ a Riemannian submersion---where $g$ is the bi-invariant metric of $\textup{U}(n)$---its volume is given by
\begin{align*}
\textup{Vol}(\textup{V}_{k}(\bb{C}^{n})) &= \textup{Vol}(\textup{U}(n))/ \textup{Vol}(\textup{U}(n-k))
\\&= \frac{(2\pi)^{nk - \frac{k(k-1)}{2}}}{\prod_{a = n-k}^{n-1}a!}.
\end{align*}

\paragraph{Grassmann manifold.} Similarly, the dimension of the Grassmann manifold $\textup{Gr}_k (\bb{C}^n)$ is 
\[
n^2 - (n-k)^2 - k^2 = 2k(n - k),
\]
and its volume, when endowed with its corresponding induced metric is given by
\begin{align*}
\textup{Vol}(\textup{Gr}_{k}(\bb{C}^{n})) &= \textup{Vol}(\textup{U}(n))/ \textup{Vol}(\textup{U}(n-k) \times U(k))
\\&= \frac{(2\pi)^{nk - k^2} \prod_{a = 1}^{k-1} a!}{\prod_{a = n-k}^{n-1}a!}.
\end{align*}

%% file: Chapters/Sobolev_Spaces.tex
\section{Sobolev spaces on manifolds}
\label{sec:obolevspaces}

In this section, we will introduce some basic notions about Sobolev spaces on manifolds, which provide the natural setting for the Poincaré and log-Sobolev inequalities. We will follow the structure of Alexander Grigor'yan's notes \cite{Grigoryan_2024}. Let us first define \textit{weighted manifolds}.
\begin{definition}[Weighted manifold]
Let $(M, g)$ be a Riemannian manifold, and let $D(x)$ be a smooth and positive function on $M$, also known as a \textup{density function}. Let us define the measure $\mu$ on $M$ as
\[
d\mu(x) := D(x)\textup{dVol}_g(x).
\]
We say $(M, g, \mu)$ is a \textup{weighted manifold}.
\end{definition}

Since $(M, \mu)$ is a measure space, it is straightforward to define the space of square-integrable functions on $M$ with respect to $\mu$, denoted as $L^2(M, \mu)$. We also denote by $L^2_{\mathit{loc}}(M, \mu)$ the space of measurable functions on $M$ such that $f \in L^2(V, \mu)$ for every $V \subset M$ compact.

It is a well-known result that, whenever we consider a measure space endowed with a finite measure, the spaces $L^p$ are nested. 
\begin{lemma}
\label{lem:nestedLp}
Let $(E, \mu)$ be a measure space with finite measure $\mu$. Let $p, q \in [1, \infty]$, be such that $p > q$. Then $L^p(E, \mu) \subset L^q(E, \mu)$.    
\end{lemma}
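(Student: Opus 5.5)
The inclusion follows from Hölder's inequality together with the finiteness of $\mu(E)$. I treat the finite and infinite exponents separately. Throughout, $1\le q<p$.

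For the case $p=\infty$, take $f\in L^\infty(E,\mu)$. Then $|f|\le \norm{f}_\infty$ holds $\mu$-almost everywhere, so
\[
\int_E |f|^q\,d\mu \le \norm{f}_\infty^q\,\mu(E)<\infty,
\]
which gives $f\in L^q(E,\mu)$.

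For $p<\infty$, take $f\in L^p(E,\mu)$ and write $|f|^q=|f|^q\cdot 1$. I apply Hölder's inequality with the conjugate exponents $r=p/q>1$ and $r'=p/(p-q)$. This gives
\[
\int_E |f|^q\,d\mu \le \Big(\int_E |f|^{p}\,d\mu\Big)^{q/p}\,\mu(E)^{1-q/p},
\]
that is,
\[
\norm{f}_q\le \mu(E)^{\frac1q-\frac1p}\norm{f}_p .
\]
Since $\mu(E)<\infty$, the right-hand side is finite, so $f\in L^q(E,\mu)$.

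No step here is a real obstacle. The only points to check are that the exponents are chosen correctly so that Hölder's inequality applies, and that $\mu(E)<\infty$ is used when bounding the integral of the constant function $1$.
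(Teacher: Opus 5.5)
Your proof is correct: the $p=\infty$ case and the Hölder step with conjugate exponents $p/q$ and $p/(p-q)$, giving $\|f\|_q\le \mu(E)^{1/q-1/p}\|f\|_p$, are both right. The paper gives no proof of this lemma and simply cites it as well known, and your argument is the standard one behind that citation.
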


Let us now introduce an analogue of the space $L^2(M, \mu)$ for vector fields on a weighted manifold $(M, g, \mu)$. 

\begin{definition}[Spaces $\overrightarrow{L}^2(M, \mu)$ and $\overrightarrow{L}^2_{\mathit{loc}}(M, \mu)$]
Given a weighted manifold $(M, g, \mu)$, we denote by $\overrightarrow{L}^2(M, \mu)$ the space of all measurable vector fields on $M$ such that $|X|_g \in L^2(M, \mu)$. 

We can define an inner product on $\overrightarrow{L}^2(M, \mu)$ given as
\[
(v, w)_{\overrightarrow{L}^2(M, \mu)} := \int_M \langle v, w\rangle_g\; d\mu,
\]
for every pair of vector fields $v, w \in \overrightarrow{L}^2(M, \mu)$. Its corresponding norm is given as
\[
\norm{v}^2_{\overrightarrow{L}^2(M, \mu)} = \int_M |v|^2_g\; d\mu.
\]

The space $\overrightarrow{L}^2_{\mathit{loc}}(M, \mu)$ is defined as the space of all measurable vector fields $X$ on $M$ such that $|X|_g \in L^2_{\mathit{loc}}(M, \mu)$.
\end{definition}

Before we define Sobolev spaces on a weighted manifold $(M, g, \mu)$, we need to define the notion of \textit{weak gradient}.
\begin{definition}[Weak gradient]
Let $(M, g)$ be a Riemannian manifold and let $u : M \to \bb{R}$ be such that $u \in L^2_{\mathit{loc}}(M, \mu)$. A \textup{weak gradient} of $u$ is a vector field $v \in \overrightarrow{L}^2_{\mathit{loc}}(M, \mu)$---also denoted $\Grad{g} u$---such that, for any smooth vector field $\psi$ on $M$ with compact support, it holds that
\begin{equation}
\label{defweakgradient}
\int_M u\, \textup{div}_{g, \mu} \psi\; d\mu = -\int_M \langle v, \psi\rangle_g\; d\mu,
\end{equation}
where 
\[
\textup{div}_{g, \mu} X := \frac{1}{D} \textup{div} (DX),\quad \forall X \in \mathfrak{X}(M).
\]
\end{definition}

The weak gradient of a given function $u$ is uniquely defined in $\overrightarrow{L}^2_{\mathit{loc}}(M)$. Moreover, if $u$ is assumed to be smooth, the usual gradient $\Grad{g}u$ coincides with the weak gradient. Indeed, the left hand side of \cref{defweakgradient} can be rewritten using the following proposition:
\begin{proposition}
\label{propdiv}
Given a smooth function $f: M \rightarrow \mathbb{R}$ and a smooth vector field $X \in \mathfrak{X}(M)$, the following identity holds
\[
\textup{div}(fX) = f\textup{div}(X) + X(f).
\]
\end{proposition}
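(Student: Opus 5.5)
This identity follows by expanding $\textup{div}(fX)$ as the trace of the endomorphism $Y \mapsto \nabla_Y(fX)$ and applying the Leibniz rule of the Levi-Civita connection; it is pointwise and linear-algebraic, so no analytic input is needed.

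Fix $p \in M$ and $Y \in T_pM$, extended to a vector field near $p$. The Leibniz property of the connection gives
\[
\nabla_Y(fX) = Y(f)\,X + f\,\nabla_Y X.
\]
Taking the trace of both sides as endomorphisms of $T_pM$, linearity of the trace gives
\[
\textup{div}(fX) = \Tr\big(Y \mapsto f\,\nabla_Y X\big) + \Tr\big(Y \mapsto df(Y)\,X\big).
\]
The first term is $f\,\textup{div}(X)$ by definition of the divergence. The second is the trace of the rank-one endomorphism $df \otimes X$, namely $Y \mapsto df(Y)X$.

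The only step needing care is computing this trace, and it is elementary. Choose an orthonormal basis $\{e_i\}$ of $T_pM$ (any basis with its dual basis works). Then
\[
\Tr\big(Y \mapsto df(Y)\,X\big) = \sum_i \langle df(e_i)X, e_i\rangle_g = \sum_i df(e_i)\langle X, e_i\rangle_g = df\Big(\sum_i \langle X, e_i\rangle_g e_i\Big) = df(X) = X(f).
\]
Combining the two terms gives $\textup{div}(fX) = f\,\textup{div}(X) + X(f)$ at $p$. Since $p$ was arbitrary, the identity holds on all of $M$.

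Alternatively, one can verify the identity in local coordinates using $\textup{div}X = \frac{1}{\sqrt{\det g}}\partial_i(\sqrt{\det g}\,X^i)$ and the product rule on $\partial_i(f\sqrt{\det g}\,X^i)$. The invariant argument above avoids any coordinate bookkeeping.
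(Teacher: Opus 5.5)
Your argument is correct and is the natural proof for the paper's definition $\textup{div}\,X = \Tr(Y \mapsto \nabla_Y X)$. The Leibniz rule splits $\nabla_Y(fX)$ into two terms, and the trace of the rank-one map $Y \mapsto df(Y)\,X$ is $df(X) = X(f)$. The paper states this identity as a standard fact without proof, so there is no competing argument to compare. One small point: extending $Y$ to a vector field is unnecessary, since $\nabla_Y$ depends only on $Y_p$.
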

Applying the divergence theorem---which we state below---\cref{defweakgradient} holds.
\begin{proposition}[Divergence theorem]
\label{divtheo}
Let $(M, g)$ be an oriented Riemannian manifold with boundary and let $X \in \mathfrak{X}(M)$ be a compactly supported smooth vector field, then
\[
\int_{M} \textup{div}(X)\, \textup{dVol}_g = \int_{\partial M} g(X,\hat{n})\, \textup{dVol}_{\tilde{g}},
\]
where $\hat{n}$ is the unit outward normal vector field to $\partial M$ and $\tilde{g}$ is the induced metric on the boundary.
\end{proposition}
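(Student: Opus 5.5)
The identity to establish is the Divergence theorem (\cref{divtheo}): for a compactly supported smooth vector field $X$ on an oriented Riemannian manifold with boundary,
\[
\int_{M} \textup{div}(X)\, \textup{dVol}_g = \int_{\partial M} g(X,\hat{n})\, \textup{dVol}_{\tilde{g}}.
\]
I would reduce it to Stokes' theorem $\int_M d\omega = \int_{\partial M}\omega$, which holds for compactly supported $(n-1)$-forms on an oriented manifold with boundary (with the induced Stokes orientation on $\partial M$). The natural form is $\omega := \iota_X \textup{dVol}_g$. The proof then has two steps: identify $d\omega$ with $\textup{div}(X)\,\textup{dVol}_g$, and identify the pullback of $\omega$ to $\partial M$ with $g(X,\hat n)\,\textup{dVol}_{\tilde g}$.

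\textbf{Step 1: the interior term.} Cartan's formula gives
\[
d(\iota_X \textup{dVol}_g) = \mathcal{L}_X \textup{dVol}_g - \iota_X d\,\textup{dVol}_g = \mathcal{L}_X \textup{dVol}_g,
\]
since $d\,\textup{dVol}_g$ is an $(n+1)$-form and vanishes. It remains to show $\mathcal{L}_X \textup{dVol}_g = \textup{Tr}(\nabla X)\,\textup{dVol}_g$, with the trace as in the paper's definition of divergence. Take a local orthonormal frame $\{E_i\}$ with $\textup{dVol}_g(E_1,\dots,E_n)=1$. Because $\nabla\, \textup{dVol}_g = 0$ and the connection is torsion-free, so that $\mathcal{L}_X E_i = \nabla_X E_i - \nabla_{E_i} X$, we get
\[
(\mathcal{L}_X \textup{dVol}_g)(E_1,\dots,E_n) = \sum_i \textup{dVol}_g(E_1,\dots,\nabla_{E_i}X,\dots,E_n) = \sum_i g(\nabla_{E_i}X, E_i).
\]
The last sum is $\textup{div}(X)$. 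Alternatively one can check in coordinates that both sides equal $\frac{1}{\sqrt{\det g}}\partial_i(\sqrt{\det g}\,X^i)$.

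\textbf{Step 2: the boundary term.} This is the step needing the most care, because of the orientation convention. At $p\in\partial M$, split $X = g(X,\hat n)\hat n + X^{\top}$ with $X^\top$ tangent to $\partial M$. For $v_1,\dots,v_{n-1}\in T_p\partial M$, the vectors $X^\top, v_1,\dots,v_{n-1}$ are $n$ vectors in an $(n-1)$-dimensional space, hence linearly dependent, so $\iota_{X^\top}\textup{dVol}_g$ restricts to zero on $\partial M$. Therefore
\[
\iota_X \textup{dVol}_g\big|_{\partial M} = g(X,\hat n)\,\iota_{\hat n}\textup{dVol}_g\big|_{\partial M}.
\]
Finally, $\iota_{\hat n}\textup{dVol}_g$ evaluates to $1$ on any oriented orthonormal frame of $T_p\partial M$ when $\partial M$ carries the Stokes orientation, i.e. a frame $(w_1,\dots,w_{n-1})$ such that $(\hat n, w_1,\dots,w_{n-1})$ is oriented for $M$. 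Hence $\iota_{\hat n}\textup{dVol}_g|_{\partial M} = \textup{dVol}_{\tilde g}$.

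Combining Step 1, Step 2 and Stokes' theorem gives the stated identity. Compact support of $X$ makes both integrals finite and justifies applying Stokes. If $\partial M=\emptyset$, the right-hand side is zero and the same argument gives $\int_M \textup{div}(X)\,\textup{dVol}_g = 0$.
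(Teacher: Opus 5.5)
Your proof is correct. It is the standard argument: Cartan's formula together with $\nabla\,\textup{dVol}_g=0$ gives $d(\iota_X\textup{dVol}_g)=\textup{div}(X)\,\textup{dVol}_g$, and the splitting $X=g(X,\hat n)\hat n+X^{\top}$ combined with the outward-normal-first (Stokes) orientation of $\partial M$ identifies the boundary term. The paper gives no proof of this proposition; it quotes it as a classical fact in the appendix on Sobolev spaces, so there is no separate route in the paper to compare yours against.
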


Note that the weak gradient of a function $u$ is independent of the measure considered. Indeed, given a weighted manifold $(M, g, \mu)$, recall that for a given locally square-integrable function $u$ on $(M, g, \mu)$, its weak gradient is defined as the unique locally square-integrable vector field such that, for any smooth compactly supported vector field $\psi$ on $M$, it holds that
\[
\int_M u\, \frac{1}{D}\textup{div}(D \psi) D\; \textup{dVol}_g = -\int_M \langle v, \psi\rangle_g\, D\; \textup{dVol}_g,
\]
which can be rewritten as
\[
\int_M u\, \textup{div}(D \psi) \textup{dVol}_g = -\int_M \langle v, D\psi\rangle_g\; \textup{dVol}_g.
\]
Since this identity is satisfied by every smooth and compactly supported vector field on $M$, it suffices to consider $\tilde{\psi} := D \psi$ to conclude that the definition is independent of the measure $\mu$. 

Having defined the weak gradient of a locally square-integrable function, we can define the Sobolev space $H^1$. 
\begin{definition}[Sobolev space $H^1$]
\label{def:Sobolevspace}
We denote by $H^1(M, \mu)$ the \textup{Sobolev space}
\[
H^1(M, \mu) := \left\{u \in L^2(M, \mu) : \Grad{g}u \in \overrightarrow{L}^2(M, \mu) \right\},
\]
which can be endowed with the following inner product
\[
(u, v)_{H^1(M, \mu)} := (u, v)_{L^2(M, \mu)} + (\Grad{g}u, \Grad{g}v)_{\overrightarrow{L}^2(M, \mu)},\quad \forall u, v \in H^1(M, \mu).
\]
The associated norm is given by
\[
\norm{u}^2_{H^1(M, \mu)} = \norm{u}^2_{L^2(M, \mu)} + \norm{\Grad{g}u}^2_{\overrightarrow{L}^2(M, \mu)}.
\]
\end{definition}

The Sobolev space $H^1(M, \mu)$ endowed with the inner product described above is in fact a Hilbert space \cite[Lemma 2.5]{Grigoryan_2024}. Moreover, whenever $\mu$ is a finite measure on $M$, it follows from \cref{lem:nestedLp} that $H^1(M, \mu) \subset L^1(M, \mu)$.

Note that both the Sobolev and $L^2$ spaces have been defined for a general weighted manifold $(M, g, \mu)$. In particular, given any open set $U \subset M$ and some measure $\tilde \mu$ on $U$ given by a density function, we can define $L^2(U, \tilde\mu)$ and $H^1(U, \tilde \mu)$.

\begin{notation}
When the measure $\mu$ considered is the one given by the volume form associated with $g$---i.e. $D(x) \equiv 1$---we will simply write $L^2(M)$, $L^2_{\mathit{loc}}(M)$, $\overrightarrow{L}^2(M)$, $\overrightarrow{L}^2_{\mathit{loc}}(M)$ and $H^1(M)$ for simplicity.  
\end{notation}

Whenever $M$ is compact, the following holds.
\begin{remark}
\label{equivLpspaces}
Let $(M, g, \mu)$ be a compact weighted manifold, and let $U$ be an open subset of $M$ endowed with the restricted measure $\mu$ on $U$. Then 
\begin{enumerate}
    \item $L^2(U, \mu|_U) = L^2(U),$
    \item $L^2_{\mathit{loc}}(U, \mu|_U) = L^2_{\mathit{loc}}(U),$
    \item $\overrightarrow{L}^2(U, \mu|_U) = \overrightarrow{L}^2(U),$
    \item $\overrightarrow{L}^2_{\mathit{loc}}(U, \mu|_U) = \overrightarrow{L}^2_{\mathit{loc}}(U),$
\end{enumerate}
where equality is meant in the sense of the spaces being equal and having equivalent norms. 
\end{remark}
\begin{proof}
Let us only prove $1$, as it implies the rest of the equalities. Recall that $d\mu(x) = D(x) \textup{dVol}_g(x)$, where $D(x)$ is assumed to be smooth and positive. Thus, since $M$ is compact, there exist two constants $k_1$ and $k_2$ such that
\[
0 < k_1 \leq D(x) \leq k_2 < \infty,
\]
and so
\[
k_1 \int_U |f|^2 \textup{dVol}_g \leq \int_U |f|^2 d\mu|_U \leq k_2\int_U |f|^2 \textup{dVol}_g,
\]
which finishes the proof. 
\end{proof}

In fact, the same holds for Sobolev spaces. 
\begin{remark}
\label{equivSobolev}
Let $(M, g, \mu)$ be a compact weighted manifold, and let $U$ be an open subset of $M$ endowed with the restricted measure $\mu$ on $U$. Then 
\[
H^1(U) = H^1(U, \mu|_U),
\]
where equality is meant in the sense of the spaces being equal and having equivalent norms. 
\end{remark}

%% file: Chapters/BakrEmery.tex
\section{Bakry-Émery theory in domains with a convex boundary}
\label{BakryEmeryandLyapunov}

The goal of this section is to introduce the \textit{curvature-dimension condition}, formulated by Bakry and Émery \cite{bakry2013analysis}, and to explore its connection to the Poincaré inequality in a \textit{domain} with \textit{convex boundary}. In particular, given a Riemannian manifold $(M, g)$ and a smooth function $F$ on $M$, the curvature-dimension condition relates the Hessian of $F$, the Ricci curvature of $M$ and the metric $g$. Moreover, when the curvature-dimension condition holds on a domain of $M$ with a convex boundary, we can obtain a lower bound on the first non-zero eigenvalue of the Langevin dynamics generator
\[
\operatorname{L} =  -\Grad{g} F + \frac{1}{\beta} \Delta_g,
\]
which in turn yields a Poincaré inequality. 

\begin{definition}[Domain]
Let $M$ be a manifold, and let $U \subset M$ be a subset. We say that $U$ is a \textup{domain} if the closure of $U$ coincides with that of its interior. 
\end{definition}

Throughout the section we will only consider domains which are connected and have a smooth boundary. 

Let us now give two definitions of convexity. 
\begin{definition}[Weakly geodesically convex set]
Let $(M, g)$ be a Riemannian manifold. A subset $U \subset M$ is said to be \textup{weakly geodesically convex} if, for each $p, q \in U$ there exists a distance-minimizing geodesic in $M$ joining them and staying inside $U$. 
\end{definition}

For the second notion of convexity, we first need to define the \textit{second fundamental form} of a manifold with boundary. 
\begin{definition}[Second fundamental form of a manifold with boundary]
Let $(M, g)$ be a Riemannian manifold with boundary, and let $N$ be the inward pointing unit normal vector field of $\partial M$. We define the \textup{second fundamental form} of $\partial M$ as 
\[
\bb{I}(X, Y) := - \langle \nabla_X N, Y\rangle_g,
\]
for every $X, Y \in \mathfrak{X}(\partial M)$. 
\end{definition}

\begin{definition}[Convex boundary]
Let $(M, g)$ be a Riemannian manifold and let $U \subset M$ be a domain. We say that $\partial U$ is \textup{convex} if $\bb{I}(X,X) \geq 0$ for every $X \in \mathfrak{X}(\partial U)$.
\end{definition}

Although the two notions of convexity may seem quite different, it is known that if a closed domain is weakly geodesically convex, then its boundary is convex \cite[Proposition 2.1.5]{wang2006functional}.

In order to derive a Poincaré inequality on a convex domain, let us consider a slightly more general setting than in the main text. This way, let $(M, g)$ be a compact Riemannian manifold. Let us define the operator
\begin{equation}
\label{eq:generaloperatorL}
\mathcal{L} := \alpha(\Grad{g} V + \Delta_g), 
\end{equation}
where $\alpha > 0$ is some constant and $V \in C^\infty(M)$. In particular, choosing $\alpha = \frac{1}{\beta}$, $V = -\beta F$, for some smooth function $F$ on $M$, we recover the Langevin dynamics generator $\operatorname{L}$. 

As we mentioned in \cref{SectionPI}, $\mathcal{L}$ is naturally defined on $C^2(M)$, and it can be extended via its closure in $L^2(M, \mu)$ considering boundary conditions. Let us also denote this extension by $\mathcal{L}$. To each operator $\mathcal{L}$, we can associate a Markov triple $(M, \mu, \mathbf{\Gamma})$.

\begin{definition}[Carré du champ operator]
\label{def:carreduchampGeneric}
Let $(M, g)$ be a compact manifold, given some operator $\mathcal{L}$ as in \cref{eq:generaloperatorL}, we define the \textup{carré du champ operator} $\mathbf{\Gamma}$ on $C^2(M) \times C^2(M)$ as
\[
\mathbf{\Gamma}(f_1, f_2) := \frac{1}{2}(\mathcal{L}(f_1f_2) - f_1\mathcal{L}f_2 - f_2\mathcal{L}f_1),
\]
and denote $\mathbf{\Gamma}(f,f)$ by $\mathbf{\Gamma}(f)$. 
\end{definition}

Following an analogous reasoning to that followed in \cref{rem:explicitexprLangevin}, we can obtain an explicit expression for the carré du champ operator associated with $\mathcal{L}$. 

\begin{remark}
The carré du champ operator associated with $\mathcal{L}$ as defined in \cref{eq:generaloperatorL} can be written as
\[
\mathbf{\Gamma}(f_1, f_2) = \alpha\langle \Grad{g} f_1, \Grad{g} f_2\rangle_g,    
\]
for any $f_1, f_2 \in C^2(M)$, and so 
\[
\mathbf{\Gamma}(f) = \alpha |\Grad{g} f|^2_g,
\]
for any $f \in C^2(M)$.
\end{remark}

\begin{definition}[Markov triple]
\label{def:markovtripleGeneric}
Let $(M, g)$ be a compact manifold, and let $\mathcal{L}$ be defined as 
\[
\mathcal{L} = \alpha(\Grad{g} V + \Delta_g),
\]
where $\alpha > 0$ is some constant and $V \in C^\infty(M)$. The \textup{Markov triple} $(M, \mu, \mathbf{\Gamma})$ associated with $\mathcal{L}$ consists of
\[
d\mu := e^{V} \textup{dVol}_g,
\]
which can be assumed to be a probability measure by simply shifting $V$ by a suitable constant ($\mathcal{L}$ is invariant under this shift), and the associated carré du champ operator given by 
\[
\mathbf{\Gamma}(f_1, f_2) = \alpha \langle \Grad{g} f_1, \Grad{g} f_2\rangle_g,
\]
for any $f_1, f_2 \in C^2(M)$.
\end{definition}

To define the curvature-dimension condition, we first need to define the second order carré du champ operator.
\begin{definition}[Second order carré du champ operator]
\label{secondordercarreduchamp}
Given a compact manifold $(M, g)$ and an operator $\mathcal{L}$ defined as above, we define the \textup{second order carré du champ operator} as
\[
\mathbf{\Gamma}_2(f, h) := \frac{1}{2}(\mathcal{L}\mathbf{\Gamma}(f, h) - \mathbf{\Gamma}(f,\mathcal{L}h) - \mathbf{\Gamma}(\mathcal{L}f, h)),\quad \forall f, h \in C^{\infty}(M),
\]
where $\mathbf{\Gamma}$ is the carré du champ operator associated with $\mathcal{L}$. 
\end{definition}

\begin{definition}[Curvature-dimension condition]
Let $(M, g)$ be a compact Riemannian manifold, let $\mathcal{L}$ be as above and let $(M, \mu, \mathbf{\Gamma})$ be the associated Markov triple. Given some constant $\kappa \in \bb{R}$, we say that $(M, \mu, \mathbf{\Gamma})$ satisfies a (tight) \textup{curvature-dimension condition} $\textup{CD}(\kappa)$\footnote{Typically, the curvature-dimension condition as stated here is denoted as $\textup{CD}(\kappa, \infty)$. Nevertheless, we will use $\textup{CD}(\kappa)$ for readability.} if, for all $f \in C^{\infty}(M)$,
\begin{equation}
\label{eq:CDOriginal}
\mathbf{\Gamma}_2(f, f) \geq \kappa \mathbf{\Gamma}(f, f).
\end{equation}
\end{definition}

While this definition of the curvature-dimension condition is given with respect to a general Markov triple, we are interested in curvature-dimension condition associated with the Markov triple induced by the Langevin diffusion generator
\[
\operatorname{L} = -\Grad{g} F + \frac{1}{\beta} \Delta_g,
\]
where $F$ is some smooth function on $M$ and $\beta$ is some positive constant. To obtain such an expression, let us first present the second order carré du champ operator associated with $\operatorname{L}$. 
\begin{lemma}[{\cite[Lemma G.1]{LiErd2022Supp}}]
\label{lem:explicitsecondorder}
Let $(M, g)$ be a compact Riemannian manifold. Let $\beta >0$, let $F$ be some smooth function on $M$ and consider the Langevin diffusion generator $\operatorname{L}$. Then its associated second order carré du champ operator can be written as
\[
\Gamma_2(f, f) = \frac{1}{\beta^2}|\nabla^2 f|^2_g + \frac{1}{\beta^2} \textup{Ric}_g(\Grad{g} f, \Grad{g} f) + \frac{1}{\beta} \nabla^2F(\Grad{g} f, \Grad{g} f),
\]
for any $f \in C^{\infty}(M)$.
\end{lemma}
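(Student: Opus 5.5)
The plan is to compute $\Gamma_2$ directly from the definition, using $\Gamma(f,h) = \frac{1}{\beta}\langle \Grad{g} f, \Grad{g} h\rangle_g$ (\cref{rem:explicitexprLangevin}) and $\operatorname{L} = \frac{1}{\beta}\Delta_g - \langle \Grad{g} F, \Grad{g}\,\cdot\,\rangle_g$.

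\textbf{Step 1: expand the definition.} With $h = f$, the definition gives
\[
\Gamma_2(f,f) = \frac{1}{2}\operatorname{L}\Gamma(f) - \Gamma(f, \operatorname{L}f) = \frac{1}{2\beta}\operatorname{L}|\Grad{g} f|^2_g - \frac{1}{\beta}\langle \Grad{g} f, \Grad{g}\operatorname{L}f\rangle_g .
\]
I will split both $\operatorname{L}$ terms into their Laplacian part and their drift part.

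\textbf{Step 2: the Laplacian part.} This part equals
\[
\frac{1}{\beta^2}\Big(\frac{1}{2}\Delta_g|\Grad{g} f|^2_g - \langle \Grad{g} f, \Grad{g}\Delta_g f\rangle_g\Big).
\]
By the Bochner formula, the bracket equals $|\nabla^2 f|^2_g + \textup{Ric}_g(\Grad{g} f, \Grad{g} f)$. This produces the first two terms of the claim. The Bochner identity is the one nontrivial input, and I would cite it as standard: commute covariant derivatives, so that $\nabla_i\nabla_j\nabla_j f = \nabla_j\nabla_i\nabla_j f - R_{ij}\nabla_j f$, and then use $\frac{1}{2}\Delta|\nabla f|^2 = |\nabla^2 f|^2 + \langle \nabla f, \nabla\Delta f\rangle + \textup{Ric}(\nabla f,\nabla f)$.

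\textbf{Step 3: the drift part.} Write $\langle \Grad{g} F, \Grad{g}\,\cdot\,\rangle_g = dF^\sharp(\cdot)$. The drift contributions are
\[
-\frac{1}{2\beta}\langle \Grad{g} F, \Grad{g}|\Grad{g} f|^2_g\rangle_g + \frac{1}{\beta}\langle \Grad{g} f, \Grad{g}\langle \Grad{g} F, \Grad{g} f\rangle_g\rangle_g .
\]
Using metric compatibility, the first term is $-\frac{1}{\beta}\nabla^2 f(\Grad{g} F, \Grad{g} f)$. Expanding the second term by the product rule gives
\[
\frac{1}{\beta}\big(\nabla^2 F(\Grad{g} f, \Grad{g} f) + \nabla^2 f(\Grad{g} f, \Grad{g} F)\big).
\]
Since $\nabla^2 f$ is symmetric (\cref{definicionhessian}), the two $\nabla^2 f$ terms cancel, leaving $\frac{1}{\beta}\nabla^2F(\Grad{g} f, \Grad{g} f)$.

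\textbf{Step 4: combine.} Adding Steps 2 and 3 gives the stated formula.

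\textbf{Main obstacle.} The only real content is the Bochner formula, including getting the sign convention of $\textup{Ric}$ consistent with the paper's. The rest is bookkeeping with the product rule, where the point to check is that the mixed Hessian terms cancel exactly.
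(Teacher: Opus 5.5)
Your proof is correct. The expansion $\Gamma_2(f,f)=\frac12\operatorname{L}\Gamma(f)-\Gamma(f,\operatorname{L}f)$, the Bochner identity with the sign convention you state, and the cancellation of the two mixed terms $\nabla^2 f(\Grad{g} F,\Grad{g} f)$ by symmetry of the Hessian all check out. The paper does not prove this lemma itself but cites \cite[Lemma G.1]{LiErd2022Supp}, and your direct computation via the Bochner formula is the standard route to this identity.
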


With this lemma and the explicit expression for the carré du champ operator $\Gamma$ associated with $\operatorname{L}$ in mind, we can rewrite the curvature-dimension condition associated with $\operatorname{L}$ in terms of the Ricci curvature of $M$, the metric $g$ and the Hessian of $F$. 
\begin{remark}
Let $(M, g)$ be a compact Riemannian manifold, let $\beta > 0$, let $F$ be some smooth function on $M$ and let $\operatorname{L}$ be the Langevin diffusion generator. Then  the curvature-dimension condition $\textup{CD}(\kappa)$ from \cref{eq:CDOriginal} is equivalent to 
\begin{equation}
\label{eq:CDineqaux}
\nabla^2 F(X,X) + \frac{1}{\beta} \textup{Ric}_g(X,X) \geq \kappa g(X,X),
\end{equation}
for every $X \in \mathfrak{X}(M)$.
\end{remark}
\begin{proof}
Assume first that \cref{eq:CDineqaux} holds. Using the explicit expression for the carré du champ operator $\Gamma$ associated with $\operatorname{L}$, which is given by 
\[
\Gamma(f, h)  = \frac{1}{\beta} \langle \Grad{g}f, \Grad{g}h\rangle_g,
\]
and using \cref{lem:explicitsecondorder} we know that
\begin{align*}
\Gamma_2(f, f) &= \frac{1}{\beta^2}|\nabla^2 f|^2_g + \frac{1}{\beta^2}\textup{Ric}_g(\Grad{g}f,\Grad{g}f) + \frac{1}{\beta}\nabla^2 F(\Grad{g}f, \Grad{g}f)     
\\&\geq \frac{1}{\beta^2}\textup{Ric}_g(\Grad{g}f,\Grad{g}f) + \frac{1}{\beta}\nabla^2 F(\Grad{g}f, \Grad{g}f) 
\\&\geq \frac{\kappa}{\beta}\langle\Grad{g}f, \Grad{g}f \rangle_g
\\&= \kappa \Gamma(f, f).
\end{align*}

To prove the converse result, let $p \in M$ and $X \in T_pM$. Let $f$ be a smooth function such that $\Grad{g}f(p) = X$ and $\nabla^2 f(p) = 0$. Then, at $p$,   
\begin{align*}
\Gamma_2(f, f) &= \frac{1}{\beta^2}|\nabla^2 f|^2_g + \frac{1}{\beta^2} \textup{Ric}_g(\Grad{g}f, \Grad{g}f) + \frac{1}{\beta} \nabla^2F(\Grad{g}f, \Grad{g}f)
\\&= \frac{1}{\beta^2} \textup{Ric}_g(X, X) + \frac{1}{\beta} \nabla^2F(X, X),
\end{align*}
and 
\[
\Gamma(f, f) = \frac{1}{\beta} |\Grad{g}f|^2_g = \frac{1}{\beta} |X|^2_g.
\]
Lastly, since
\[
\Gamma_2(f, f) \geq \kappa \Gamma(f, f),
\]
holds by assumption, we obtain that 
\[
\frac{1}{\beta^2} \textup{Ric}_g(X, X) + \frac{1}{\beta} \nabla^2F(X, X) \geq \frac{\kappa}{\beta} |X|^2_g,
\]
finishing the proof. 
\end{proof}

Let us finish the section by showing the connection between the curvature-dimension condition and the Poincaré inequality on a domain of a compact Riemannian manifold. As we mentioned before, we will be working on $U \subset M$, a \textit{connected domain} with smooth boundary $\partial U$ of a compact Riemannian manifold $(M, g)$. We will consider the operator 
\begin{equation}
\label{eq:operatorsinalpha}
\hat{\mathcal{L}} := \Grad{g}V + \Delta_g,
\end{equation}
where $V \in C^\infty(M)$, and the \textit{Neumann eigenvalue problem}
\begin{equation*}
\begin{cases}
\hat{\mathcal{L}}f = -\lambda f,\\
Nf|_{\partial U} = 0,
\end{cases}
\end{equation*}
where $N$ denotes the inward pointing unit normal vector field of $\partial U$. 

More specifically, we will study the minimum value $\lambda > 0$ for which there exists some $f \in C^2_N(U)$---the space of $C^2$ functions on $U$ with Neumann boundary condition---such that $\hat{\mathcal{L}} f = - \lambda f$. This quantity is known as the first Neumann eigenvalue of $\hat{\mathcal{L}}$, and is denoted by $\lambda_{1, \hat{\mathcal{L}}}(U)$. 

The first Neumann eigenvalue of $\hat{\mathcal{L}}$ corresponds to the Poincaré inequality associated with the Markov triple $(U, \mu|_U, |\Grad{g}\cdot|_g^2)$, where $\mu|_U = \mathds{1}_U e^{V(x)}\textup{dVol}_g(x)$ is assumed to be a probability measure. Before we prove this result in \cref{rmk:gapimpliesPI}, let us present a simple characterization of $\lambda_{1, \hat{\mathcal{L}}}$ (cf. \cite[Section 2.2]{wang2006functional}). In the following Lemma, given an operator $\mathscr{L}$, we will use the notation $(\mathscr{L}, D(\mathscr{L}))$ to highlight that its domain is $D(\mathscr{L})$. 
\begin{lemma}
Let $(M, g)$ be a compact Riemannian manifold and let $U$ be a connected bounded open domain of $M$ with smooth boundary. Let $\hat{\mathcal{L}}$ be defined as in \cref{eq:operatorsinalpha} and let $(\hat{\mathcal{L}}, D(\mathcal{\hat{L}}))$ be the unique self-adjoint extension (cf. \cite[Theorem 0.3.3]{wang2006functional}) of $(\hat{\mathcal{L}}, C^2_N(U))$ on $L^2(U, \mu)$. Then $\lambda_{1, \hat{\mathcal{L}}}(U)$ is the spectral gap of $(\hat{\mathcal{L}}, D(\hat{\mathcal{L}}))$ and so 
\begin{equation}
\label{firstNeumanneig}
\lambda_{1, \hat{\mathcal{L}}}(U) = \inf \{\mu|_U(|\Grad{g}f|_g^2) : f \in C^2(U),\ \left.\mu\right|_U(f) = 0,\ \left.\mu\right|_U(f^2) = 1\},
\end{equation}
where $d\mu|_U := \mathds{1}_U e^{V(x)}\,\textup{dVol}_g(x)$.
\end{lemma}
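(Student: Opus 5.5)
Since $U$ is connected, bounded and has smooth boundary, and $V$ is smooth up to $\overline U$, I treat the statement as standard Neumann spectral theory for the weighted operator $\hat{\mathcal{L}}$ on $L^2(U,\mu|_U)$. The argument has three parts: symmetry and the Dirichlet form, the spectral structure of the self-adjoint extension, and the variational identification of the gap.

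\textbf{Step 1: symmetry and the Dirichlet form.} Take $f,h\in C^2_N(U)$. By \cref{propdiv} and \cref{divtheo}, applied to the vector field $h\,e^{V}\Grad{g}f$,
\[
\int_U h\,\hat{\mathcal{L}}f\,d\mu = -\int_U \langle \Grad{g} f,\Grad{g} h\rangle_g\,d\mu .
\]
The boundary term vanishes because $Nf=0$ on $\partial U$. This is the Neumann analogue of \cref{Neumannboundarycondition}.

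Two consequences follow. First, $(\hat{\mathcal{L}},C^2_N(U))$ is symmetric and nonpositive on $L^2(U,\mu)$. Second, its associated closed form is $\mathcal{E}(f,f)=\mu|_U(|\Grad{g}f|_g^2)$, with form domain $H^1(U,\mu)$. By \cref{equivSobolev}, this space coincides with $H^1(U)$.

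The unique self-adjoint extension, taken from \cite[Theorem 0.3.3]{wang2006functional}, is the Friedrichs extension of this form. Hence $D(\hat{\mathcal{L}})\subset H^1(U)$, and $-\hat{\mathcal{L}}$ is the operator associated with $\mathcal{E}$.

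\textbf{Step 2: spectrum of the extension.} Since $\overline U$ is a compact manifold with smooth boundary, Rellich's theorem gives a compact embedding $H^1(U)\hookrightarrow L^2(U)$. Therefore $-\hat{\mathcal{L}}$ has compact resolvent and purely discrete spectrum $0=\lambda_0\le\lambda_1\le\cdots$.

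If $\mathcal{E}(f,f)=0$, then $\Grad{g}f=0$ a.e., and connectedness of $U$ makes $f$ constant. So the kernel consists exactly of the constants, and the spectral gap is the first eigenvalue on $\{\mu|_U(f)=0\}$.

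Elliptic regularity up to the boundary for the Neumann problem (all coefficients are smooth) shows that eigenfunctions lie in $C^\infty(\overline U)$ and satisfy $Nf|_{\partial U}=0$. The gap eigenvalue is therefore attained by some $f\in C^2_N(U)$. It then coincides with the first Neumann eigenvalue $\lambda_{1,\hat{\mathcal{L}}}(U)$ as defined above: any smaller Neumann eigenvalue with a $C^2_N$ eigenfunction would itself belong to the spectrum of the extension.

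\textbf{Step 3: variational formula \cref{firstNeumanneig}.} By the min–max principle,
\[
\lambda_1=\inf\{\mathcal{E}(f,f): f\in H^1(U),\ \mu|_U(f)=0,\ \mu|_U(f^2)=1\}.
\]
It remains to replace $H^1(U)$ by $C^2(U)$. The direction "$\le$" is immediate, since $C^2(\overline U)$-type functions lie in $H^1$.

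For "$\ge$", I would use the density of $C^\infty(\overline U)$ in $H^1(U)$, which holds for smooth bounded domains. Given an $H^1$ minimizer, approximate it by smooth functions, then recentre and renormalize them; both operations are continuous in $H^1$ because $\mu$ is finite. Alternatively, the minimizer is itself the smooth eigenfunction from Step 2, which already lies in the admissible class.

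\textbf{Main obstacle.} The delicate point is boundary regularity: showing that Neumann eigenfunctions of the weighted operator are $C^2$ up to $\partial U$, and correctly identifying which self-adjoint extension is meant. For regularity I would cite standard elliptic theory; for the extension I would rely on \cite{wang2006functional}, rather than reprove either.
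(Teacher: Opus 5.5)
Your proof is correct. The paper gives no proof of this lemma: it states it as known and cites \cite[Section 2.2]{wang2006functional}, with \cite[Theorem 0.3.3]{wang2006functional} for the self-adjoint extension. Your proposal reconstructs the standard Neumann spectral theory behind that citation.

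The chain of steps is sound:
\begin{itemize}
\item Green's formula, using $Nf=0$, gives symmetry and the form $\mathcal{E}(f,f)=\mu|_U(|\Grad{g}f|_g^2)$.
\item Uniqueness of the self-adjoint extension forces it to coincide with the form (Friedrichs) operator.
\item Rellich's theorem gives discrete spectrum, and connectedness makes the kernel exactly the constants.
\item Boundary regularity places the gap eigenfunction in $C^2_N(U)$. This matches the paper's eigenvalue-based definition of $\lambda_{1,\hat{\mathcal{L}}}(U)$, and it shows the infimum in the variational formula is attained by an admissible function.
\end{itemize}
The citation buys brevity. Your version makes explicit which extension is meant and why the infimum may be taken over $C^2(U)$ rather than over $H^1(U)$.

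Two points deserve a line each if you write this out.

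First, identifying the form domain with $H^1(U)$ requires that $C^2_N(\overline U)$ be dense in $H^1(U)$. This holds because the Neumann condition is natural. Let $\rho$ be the distance to $\partial U$ and $\phi$ a cutoff equal to $1$ near $0$. For $u$ smooth up to the boundary, subtract $\rho\,\phi(\rho/\varepsilon)\,\langle \Grad{g}u,\Grad{g}\rho\rangle_g$. This kills the normal derivative on $\partial U$ at an $H^1$-cost of $O(\sqrt{\varepsilon})$.

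Second, in the direction you call immediate, the admissible class is $C^2(U)$, not $C^2(\overline U)$. You should say why this is harmless. An admissible $f$ with finite energy lies in $H^1(U)$, because its classical gradient on $U$ is its weak gradient. An admissible $f$ with infinite energy cannot lower the infimum.
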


Let us now show that the first Neumann eigenvalue $\lambda_{1, \hat{\mathcal{L}}}(U)$ corresponds to the Poincaré inequality constant of $(U, \mu|_U, |\Grad{g}\cdot|_g^2)$. 
\begin{remark}
\label{rmk:gapimpliesPI}
Let $(M, g)$ be a compact Riemannian manifold and let $U$ be a connected bounded open domain of $M$ with smooth boundary. Let $\hat{\mathcal{L}}$ be defined as in \cref{eq:operatorsinalpha}, and consider its associated Markov triple $(U, \mu|_U, |\Grad{g}\cdot|_g^2)$, where $\mu|_U$ is assumed to be a probability measure. Let $\lambda_{1, \hat{\mathcal{L}}}(U)$ be the first Neumann eigenvalue of $\hat{\mathcal{L}}$ on $U$. Then $(U, \mu|_U, |\Grad{g}\cdot|_g^2)$ satisfies a Poincaré inequality with constant $\lambda_{1, \hat{\mathcal{L}}}(U)$, i.e. 
\[
\int_U f^2\, d\mu|_U - \Big(\int_U f\, d\mu|_U\Big)^2 \leq \frac{1}{\lambda_{1, \hat{\mathcal{L}}}(U)} \int_U |\Grad{g}f|_g^2 \, d\mu|_U,
\]
for every $f \in C^2(U) \cap H^1(U, \mu|_U)$. 
\end{remark}
\begin{proof}
Let $f \in C^2(U) \cap H^1(U,\mu|_U)$. If $f \equiv \mu|_U(f)$, the result follows automatically. Otherwise, consider 
\[
\tilde f := \frac{f - \left.\mu\right|_U(f)}{\left.\mu\right|_U((f - \left.\mu\right|_U(f))^2)^{\frac{1}{2}}}.
\]
Since $f \in L^2(U, \mu|_U)$ and $\mu|_U$ is a probability measure on $U$, \cref{lem:nestedLp} ensures that $f \in L^1(U, \mu|_U)$, and so $\tilde f$ is well-defined and belongs to $C^2(U) \cap H^1(U, \mu|_U)$. 

It is clear that $\left.\mu\right|_U(\tilde f) = 0$ and $\left.\mu\right|_U(\tilde{f}^2) = 1$. Thus by \cref{firstNeumanneig} it holds that 
\[
\lambda_{1, \hat{\mathcal{L}}}(U) \leq \left.\mu\right|_U(|\Grad{g}\tilde{f}|_g^2) = \frac{\left.\mu\right|_U(|\Grad{g}f|_g^2)}{\left.\mu\right|_U((f - \left.\mu\right|_U(f))^2)},
\]
i.e. 
\begin{equation}
\label{eq:preliminaryineqpoincare}
\left.\mu\right|_U((f - \left.\mu\right|_U(f))^2) \leq \frac{1}{\lambda_{1, \hat{\mathcal{L}}}(U)}\left.\mu\right|_U(|\Grad{g}f|_g^2).   
\end{equation}

Moreover, since $\mu|_U$ is a probability measure on $U$, it holds that 
\[
\left.\mu\right|_U((f - \left.\mu\right|_U(f))^2) = \left.\mu\right|_U(f^2) - 2\left.\mu\right|_U(f)^2 + \left.\mu\right|_U(f)^2 = \left.\mu\right|_U(f^2) - \left.\mu\right|_U(f)^2,
\]
and so we can rewrite \cref{eq:preliminaryineqpoincare} as
\[
\left.\mu\right|_U(f^2) - \left.\mu\right|_U(f)^2 \leq \frac{1}{\lambda_{1, \hat{\mathcal{L}}}(U)}\left.\mu\right|_U(|\Grad{g}f|_g^2)
\]
for every $f \in C^2(U) \cap H^1(U, \mu|_U)$. 
\end{proof}

The following proposition allows us to relate the curvature-dimension condition with the existence of a lower bound on the first eigenvalue of $\hat{\mathcal{L}}$. To do this, it is necessary to ask for $\partial U$ to be convex, as otherwise, the curvature-dimension condition does not necessarily imply a Poincaré inequality. See \cite{wang2009log} for more details. 

\begin{proposition}[{\cite[Theorem 2.2.5]{wang2006functional}}]
\label{thm:boundneumanneigval}
Let $(M, g)$ be a compact Riemannian manifold and let $U$ be a bounded open domain of $M$ with smooth boundary $\partial U$. Consider the operator
\[
\hat{\mathcal{L}} = \Grad{g}V + \Delta_g,
\]
where $V \in C^\infty(M)$. Assume that there exists some constant $\upkappa \in \bb{R}$ such that
\begin{equation}
\label{modifiedCD}
\textup{Ric}_g(X,X) - \langle \nabla_X\, \Grad{g}V, X\rangle_g \geq -\upkappa |X|^2_g,
\end{equation}
for every $X \in \mathfrak{X}(U)$. If $\partial U$ is convex, then the first Neumann eigenvalue of $\hat{\mathcal{L}}$ satisfies
\[
\lambda_{1, \hat{\mathcal{L}}}(U) \geq \frac{8}{D^2} - \frac{1}{2} \upkappa,
\]
where $D$ is the diameter of $(U, g|_U)$.
\end{proposition}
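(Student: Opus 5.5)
The plan is a coupling-by-reflection argument in the spirit of Chen--Wang, reduced to a one-dimensional comparison on $[0,D]$. Let $\phi$ be a first Neumann eigenfunction of $\hat{\mathcal{L}}$ on $U$, with $\hat{\mathcal{L}}\phi=-\lambda\phi$ and $\lambda=\lambda_{1,\hat{\mathcal{L}}}(U)$. Let $P_t$ be the Neumann semigroup of the reflected diffusion generated by $\hat{\mathcal{L}}$ on $\overline U$. Then $P_t\phi=e^{-\lambda t}\phi$. Consequently, for any coupling $(X_t,Y_t)$ of two reflected diffusions started at $x,y\in U$,
\[
e^{-\lambda t}|\phi(x)-\phi(y)| \;=\; \big|\mathbb{E}[\phi(X_t)-\phi(Y_t)]\big| \;\leq\; \|\nabla\phi\|_\infty\,\mathbb{E}\, d_g(X_t,Y_t).
\]
It therefore suffices to construct a coupling and a test function $\psi$ with $\mathbb{E}\,\psi(d_g(X_t,Y_t))\leq e^{-ct}\psi(d_g(x,y))$ and $c=\frac{8}{D^2}-\frac{\upkappa}{2}$. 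Letting $t\to\infty$ then forces $\lambda\geq c$, unless $\phi$ is constant.

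\emph{Step 1: the distance process.} I would use the reflection coupling, in which $Y_t$ is driven by the Brownian increment of $X_t$ reflected through the hyperplane orthogonal to the minimal geodesic joining the two points. Itô's formula for $\rho_t=d_g(X_t,Y_t)$ splits into three pieces.
\begin{itemize}
\item The martingale part is $2\sqrt2\,dB_t$, since the diffusion coefficient of $\Delta_g$ is $\sqrt2$ and reflection doubles the radial noise.
\item The drift combines the index form of the minimal geodesic with the difference $\langle \Grad{g}V, \dot\gamma\rangle|_{y}-\langle \Grad{g}V,\dot\gamma\rangle|_{x}$. Using Jacobi fields linear along $\gamma$ and the condition \cref{modifiedCD} integrated along $\gamma$, this is bounded by $\upkappa\rho_t\,dt$.
\item The boundary terms are local-time terms $\langle N,\nabla d\rangle\,dL_t$. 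Here convexity of $\partial U$ is used: since $\overline U$ is (weakly) convex, the minimal geodesic between two points of $\overline U$ stays in $\overline U$, so $\nabla_x d$ points outward at boundary points and these terms are nonpositive.
\end{itemize}
The result is $d\rho_t\leq 2\sqrt2\,dB_t+\upkappa\rho_t\,dt$ up to the coupling time. The points where $\gamma$ is not unique (cut locus) would be handled by the standard argument that they contribute only a nonpositive singular term.

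\emph{Step 2: the one-dimensional comparison.} For concave increasing $\psi$ on $[0,D]$, Step 1 gives
\[
d\psi(\rho_t)\leq \big(4\psi''(\rho_t)+\upkappa\rho_t\psi'(\rho_t)\big)dt+dM_t.
\]
It remains to find $\psi$ with $4\psi''+\upkappa r\psi'\leq -c\,\psi$ on $[0,D]$. I would try first $\psi(r)=\sin\!\big(\tfrac{\pi r}{2D}\big)$ and, if the constants fail, the quadratic $\psi(r)=r-\tfrac{r^2}{2D}$, which gives $4\psi''=-\tfrac{4}{D}$ against $\psi\leq \tfrac D2$. The normalisation of the time scale, or possibly a factor $\tfrac12$ in the drift, would be tuned so that the resulting constant is exactly $\frac{8}{D^2}-\frac{\upkappa}{2}$. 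The term $\upkappa r\psi'$ is bounded by using $r\psi'(r)\leq C\psi(r)$ for the chosen $\psi$, which is what produces the loss $-\tfrac{\upkappa}{2}$. Gronwall then yields the required exponential decay of $\mathbb{E}\,\psi(\rho_t)$, and since $\psi(r)\gtrsim r/2$ on $[0,D]$ this transfers to $\mathbb{E}\,\rho_t$.

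The main obstacle is Step 1, namely rigorously justifying the Itô inequality for $\rho_t$ with reflecting boundary and across the cut locus. Here I would rely on Wang's convexity and local-time lemmas rather than reprove them. The constant-matching in Step 2 is the secondary difficulty, but it is a routine ODE calculation.
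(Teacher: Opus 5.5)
\paragraph{Verdict.} Your overall route is sound but Step~2 has a genuine gap. That step is where all the content lies, and the test functions you propose fail, most seriously for $\upkappa<0$. The paper itself gives no proof and only cites Wang.

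\paragraph{Step 1 and the case $\upkappa\ge 0$.} The reflection-coupling reduction is the standard Chen--Wang framework and is fine, with one correction. Convexity of $\partial U$ does not make $\overline U$ geodesically convex in $M$. Work instead with the intrinsic distance of $\overline U$: its minimizing curves are $M$-geodesics precisely because $\partial U$ is convex, and the local-time term is $\le 0$ simply because such a geodesic leaves a boundary point into $\overline U$. Note also that the generator $\Delta_g+\langle\Grad{g}V,\cdot\rangle_g$ fixes the comparison operator as $4\partial_r^2+\upkappa r\partial_r$. There is no normalisation left to tune, so the factor $\frac12$ in front of $\upkappa$ must come entirely from $\psi$. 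The mechanism you describe cannot supply it. For any $\psi$ with $\psi(0)=0<\psi'(0)$ one has $r\psi'(r)/\psi(r)\to1$ as $r\to0$. Hence $r\psi'\le C\psi$ forces $C\ge1$, which costs at best $-\upkappa$. For $\upkappa\ge0$ the quadratic does work, but only by direct computation. With $s=r/D$, $k=\upkappa D^2$ and $c=\frac{8}{D^2}-\frac{\upkappa}{2}$,
\[
D\bigl(4\psi''+\upkappa r\psi'+c\psi\bigr)=-4(1-s)^2+k\,s\bigl(\tfrac12-\tfrac34 s\bigr),
\]
which is $\le0$ on $[0,1]$ exactly when $0\le k\le16$, i.e.\ whenever the claim is not vacuous.

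\paragraph{The case $\upkappa<0$.} This is precisely the case used in \cref{rem:CDimpliesPI}, where $\upkappa=-\beta\kappa$, and here both your candidates fail.
\begin{itemize}
\item Both have $\psi'(D)=0$. At $r=D$ the favourable drift $\upkappa r\psi'$ therefore vanishes, while the inequality still demands a gain of $\frac{|\upkappa|}{2}\psi(D)$.
\item For the quadratic, the display above equals $|k|/4>0$ at $s=1$, for every $\upkappa<0$.
\item For the sine, $4\psi''(D)+c\psi(D)=\frac{8-\pi^2}{D^2}+\frac{|\upkappa|}{2}>0$ as soon as $|\upkappa|>2(\pi^2-8)/D^2$.
\end{itemize}
In fact no $\upkappa$-independent $\psi$ can work for all $\upkappa<0$. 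Letting $\upkappa\to-\infty$ forces $r\psi'\ge\frac12\psi$ on $(0,D]$. Letting $\upkappa\to0^-$ forces $4\psi''\le-\frac{8}{D^2}\psi$, and Sturm comparison with $\sin(\sqrt2\,r/D)$ then gives $D\psi'(D)/\psi(D)\le\sqrt2\cot\sqrt2<\frac12$, a contradiction.

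\paragraph{A test function that works.} What is missing is a $\upkappa$-adapted $\psi$. One that works for every real $\upkappa$ is
\[
\psi(r)=\int_0^r e^{-\upkappa s^2/16}\cos\Bigl(\frac{\pi s}{2D}\Bigr)\,ds .
\]
It is nondecreasing on $[0,D]$ and satisfies $\psi(r)\ge\epsilon r$ for some $\epsilon>0$. Write $\psi'=e^{-\upkappa r^2/16}w$ with $w=\cos\frac{\pi r}{2D}$. A short computation gives
\[
4\psi'''+\upkappa r\psi''=-\psi'\Bigl(\frac{\pi^2}{D^2}+\frac{\upkappa}{2}+\frac{\upkappa^2r^2}{16}\Bigr).
\]
Set $E:=4\psi''+\upkappa r\psi'+c\psi$ with $c=\frac{\pi^2}{D^2}-\frac{\upkappa}{2}$. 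Then $E(0)=0$ and $E'=-\frac{\upkappa^2r^2}{16}\psi'\le0$, so $E\le0$ on $[0,D]$. Your Gronwall and eigenfunction argument then yields
\[
\lambda_{1,\hat{\mathcal L}}(U)\ge\frac{\pi^2}{D^2}-\frac{\upkappa}{2},
\]
which is stronger than the stated bound since $\pi^2>8$.
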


As we mentioned above, this result allows us to relate the curvature-dimension condition with the Poincaré inequality of the Markov triple $(U, \nu|_U, \Gamma)$ associated with the operator
\[
\operatorname{L} f = \langle -\Grad{g}F, \Grad{g}f\rangle_g + \frac{1}{\beta} \Delta_g\, f,
\]
where $F$ is some smooth function on $M$ and $\beta$ is some positive constant. 
\begin{proposition}
\label{rem:CDimpliesPI}
Let $(M, g)$ be a compact Riemannian manifold and let $U$ be a bounded open domain with smooth convex boundary $\partial U$. Let $F$ be some smooth function on $M$ and assume that there exists some constant $\kappa > 0$ for which 
\begin{equation}
\label{eq:CDconditionremark}
\nabla^2 F(X, X) + \frac{1}{\beta} \textup{Ric}_g(X, X) \geq \kappa g(X, X),
\end{equation}
for every $X \in \mathfrak{X}(U)$. Then  it holds that 
\[
\int_U f^2\, d\nu|_U  - \Big(\int_U f\, d\nu|_U\Big)^2 \leq \frac{2}{\kappa\beta} \int_U |\Grad{g}f|_g^2 \, d\nu|_U,
\]
for every $f \in C^2(U) \cap H^1(U, \nu|_U)$. 
\end{proposition}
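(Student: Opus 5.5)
The plan is to reduce to \cref{thm:boundneumanneigval} and then translate the resulting Neumann eigenvalue bound into the stated Poincaré inequality through \cref{rmk:gapimpliesPI}.

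First, rescale the generator. The Langevin generator is
\[
\operatorname{L} = \tfrac{1}{\beta}\big(\langle -\beta\,\Grad{g}F,\Grad{g}\cdot\rangle_g + \Delta_g\big).
\]
With $V := -\beta F$ (shifted by a constant so that $e^{V}\textup{dVol}_g$ restricted to $U$ is a probability measure), the operator in brackets is exactly $\hat{\mathcal{L}} = \Grad{g}V + \Delta_g$ of \cref{eq:operatorsinalpha}. The measure $\mu|_U$ then coincides with $\nu|_U$.

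Second, verify the hypothesis \cref{modifiedCD}. Since $\langle \nabla_X \Grad{g}V, X\rangle_g = -\beta\,\nabla^2F(X,X)$, multiplying \cref{eq:CDconditionremark} by $\beta$ gives
\[
\textup{Ric}_g(X,X) - \langle\nabla_X\Grad{g}V,X\rangle_g \ge \kappa\beta\,|X|_g^2 .
\]
So \cref{modifiedCD} holds with $\upkappa = -\kappa\beta$. Because $\partial U$ is convex, \cref{thm:boundneumanneigval} yields
\[
\lambda_{1,\hat{\mathcal{L}}}(U) \ge \frac{8}{D^2} + \frac{\kappa\beta}{2} \ge \frac{\kappa\beta}{2}.
\]
Here the diameter term is simply dropped, as it is nonnegative.

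Third, apply \cref{rmk:gapimpliesPI} to $(U,\mu|_U,|\Grad{g}\cdot|_g^2)$. This gives
\[
\textup{Var}_{\nu|_U}(f) \le \frac{1}{\lambda_{1,\hat{\mathcal{L}}}(U)}\int_U|\Grad{g}f|_g^2\,d\nu|_U \le \frac{2}{\kappa\beta}\int_U|\Grad{g}f|_g^2\,d\nu|_U
\]
for all $f\in C^2(U)\cap H^1(U,\nu|_U)$, which is the claim.

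The step needing the most care is the sign and scaling bookkeeping. One must check that $\upkappa$ is allowed to be negative in \cref{thm:boundneumanneigval}, and that the normalization constant in $V$ does not affect $\hat{\mathcal{L}}$ or the hypothesis \cref{modifiedCD}; both of these are immediate. The factor $1/\beta$ in $\operatorname{L}$ is absorbed into the choice of $V$, since the Poincaré inequality is stated with the bare $|\Grad{g}f|^2$ and an explicit $1/\beta$.
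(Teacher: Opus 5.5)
Your proposal is correct and follows the paper's proof step for step. You take $V=-\beta F$ up to the normalizing constant, observe that \cref{eq:CDconditionremark} multiplied by $\beta$ is exactly \cref{modifiedCD} with $\upkappa=-\kappa\beta$, apply \cref{thm:boundneumanneigval} to get $\lambda_{1,\hat{\mathcal{L}}}(U)\ge \kappa\beta/2$ after discarding $8/D^2$, and conclude with \cref{rmk:gapimpliesPI}. One cosmetic point: the factor $1/\beta$ is not absorbed into $V$. It is simply removed by working with $\hat{\mathcal{L}}=\beta\operatorname{L}$, exactly as the paper does with its operator $\textup{\textbf{L}}$.
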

\begin{proof}
First, let $V$ be defined as 
\begin{equation}
\label{eq:defofV}
V(x) := -\beta F(x) - \log(Z),
\end{equation}
where $Z = \int_U e^{-\beta F(x)} \textup{dVol}_g(x)$, and define 
\[
\textup{\textbf{L}} := \beta \langle - \Grad{g}F, \cdot\rangle_g + \Delta_g.
\]

Let us obtain a lower bound on the first Neumann eigenvalue $\lambda_{1, \textup{\textbf{L}}}(U)$. In order to apply \cref{thm:boundneumanneigval}, let us begin by rewriting the curvature-dimension condition. 

The curvature-dimension condition from \cref{eq:CDconditionremark} can be rewritten as 
\begin{equation}
\label{eq:compare1}
\beta \nabla^2 F +\textup{Ric}_g \geq \beta \kappa g.    
\end{equation}

If we now rewrite the left-hand side of \cref{modifiedCD} for $V(x)$ as defined in \cref{eq:defofV}, using the fact that
\[
\nabla^2 F(X, Y) = \langle \nabla_X \Grad{g}F, Y\rangle_g,
\]
we arrive at
\begin{equation}
\label{eq:compare2}
\textup{Ric}_g(X,X) +\beta \nabla^2 F(X,X) \geq -\upkappa |X|^2_g.    
\end{equation}

Comparing \cref{eq:compare1,eq:compare2}, we conclude that \cref{eq:compare2} holds in our case for $-\upkappa := \beta\kappa$. Thus, we can apply \cref{thm:boundneumanneigval} to find that 
\[
\lambda_{1, \textup{\textbf{L}}}(U) \geq \frac{8}{D^2} + \frac{1}{2}\beta \kappa \geq \frac{\beta \kappa}{2},
\]
where $D$ is the diameter of $U$. And so, using \cref{rmk:gapimpliesPI}, since $\nu(x)|_U := \mathds{1}_U\frac{1}{Z} e^{-\beta F}$ is a probability measure on $U$, it holds that
\[
\int_U f^2\, d\nu|_U - \Big(\int_U f \, d\nu|_U \Big)^2 \leq \frac{2}{\kappa \beta} \int_U |\Grad{g}f|_g^2 \, d\nu|_U,
\]
for every $f \in C^2(U) \cap H^1(U, \nu|_U)$. 
\end{proof}

%% file: Chapters/Laplacian.tex
\section{Studying the pseudo-radial process}
\label{sec:LaplacianAndGradientBound}

The goal of this section is to analyze in detail the function $\tilde r_{p, v}$ defined in \cref{sec2.2} and used in \cref{prop9.6} to construct quasi-Lyapunov functions around the saddle points of the objective function. Given a manifold $(B, h)$, some fixed point $p \in B$ and a fixed tangent vector $v \in T_p B$ with norm one, $\tilde r_{p, v}$ is defined in normal coordinates outside the cut locus of $p$ as
\begin{equation}
\label{eq:deftilder}
\tilde r_{p, v}(x) = \langle v, \log_p x\rangle,
\end{equation}
where $\log_p x$ denotes the inverse of the exponential mapping, and $\langle \cdot , \cdot \rangle$ is the usual Euclidean inner product. 

In the proof of \cref{prop9.6} both the norm of the gradient of $\tilde r_{p, v}$ and its Laplacian need to be bounded. A natural strategy would be to try to show that its Laplacian is non-negative and its gradient has constant norm one. However, in this section we will first show that the Laplacian of $\tilde r_{p, v}$ cannot be assumed to be non-negative, nor the norm of its gradient can be assumed to be constant and equal to one. Then we will show how, under the assumption of $(B, h)$ being a symmetric space (cf. \cref{def:symmetricspace}), the absolute value of its Laplacian and the norm of its gradient stay close to zero and one, respectively, when restricting $\tilde r_{p, v}$ to a geodesic ball centered at $p$ of sufficiently small radius. Lastly, we will state and prove the auxiliary lemmas \ref{lem:boundtermgradlaplacian} and \ref{AuxLemma3} used in \cref{sec:secondlyapunovfunction}. 

\subsection{No-go results for the Laplacian and the gradient of \texorpdfstring{$\tilde r_{p,v}$}{rpv}}
\label{sec:notrivialboundlaplacian}

Consider a very simple Riemannian manifold; the two-dimensional round sphere.

\begin{remark}
\label{rem:laplaciannotpositive}
Let $(\bb{S}^2, g_{\mathit{round}})$ be the two-sphere endowed with the usual round metric, let $p \in \bb{S}^2$ be a fixed point and let $v \in T_p \bb{S}^2$ with $|v|_{g_{\mathit{round}}} = 1$. There is no radius $R$ such that $\Delta_{g_{\mathit{round}}} \tilde r_{p, v}$ is non-negative when restricted to $\mathcal{B}(R, p)$. 
\end{remark}

\begin{proof}
Without loss of generality, we can assume that $p \in \bb{S}^2$ has coordinates $(0, 0, 1)$, and therefore $v = (v_1, v_2, 0)$ with $v_1^2 + v_2^2 = 1$. In these coordinates, 
\[
\log_p(x) = (\theta \cos\varphi, \theta\sin\varphi, 0)
\]
where $\theta \in (0, \pi)$ and $\varphi \in [0, 2\pi)$. Therefore,
\[
\tilde r_{p, v}(x) = \theta(v_1\cos\varphi + v_2 \sin\varphi).
\]

It is a standard fact that the round metric of $\bb{S}^2$ in geodesic polar coordinates is of the form
\[
g_{\mathit{round}}(\theta, \varphi) = d\theta^2 + \sin^2 \theta d\varphi^2,
\]
which implies that
\[
\Gamma^\theta_{\varphi \varphi} = - \sin\theta\cos\theta,\quad \Gamma^\varphi_{\varphi \theta} = \Gamma^\varphi_{\theta\varphi} = \cot\theta,
\]
while the rest of the Christoffel symbols vanish identically. 

Since the Laplacian can be written in local coordinates as
\begin{equation}
\label{eq:Laplacianincoords}
\Delta_g = g^{ij}(x) \frac{\partial^2}{\partial x^i \partial x^j} - g^{kl}(x) \Gamma^n_{kl}(x) \frac{\partial}{\partial x^n},
\end{equation}
we can use the expressions for the Christoffel symbols, the metric and its inverse to conclude that
\begin{align*}
\Delta_{g_\mathit{round}} &= g^{\theta \theta}_{\mathit{round}}(x) \frac{\partial^2}{\partial \theta^2} + g^{\varphi \varphi}_{\mathit{round}}(x) \frac{\partial^2}{\partial \varphi^2} - g^{\varphi\varphi}_{\mathit{round}}(x) \Gamma^\theta_{\varphi\varphi}(x) \frac{\partial}{\partial \theta} 
\\&= \frac{\partial^2}{\partial \theta^2} + \frac{1}{\sin^2 \theta} \frac{\partial^2}{\partial \varphi^2} + \cot \theta \frac{\partial}{\partial \theta},
\end{align*}
and so 
\begin{align*}
\Delta_{g_\mathit{round}} \tilde r_{p, v}(x) &= \Big(\frac{\partial^2}{\partial \theta^2} + \frac{1}{\sin^2 \theta} \frac{\partial^2}{\partial \varphi^2} + \cot \theta \frac{\partial}{\partial \theta}\Big)\theta(v_1\cos\varphi + v_2 \sin\varphi)
\\&= -\frac{\theta}{\sin^2 \theta}(v_1\cos\varphi+v_2\sin\varphi) + \cot\theta (v_1\cos\varphi + v_2 \sin\varphi)
\\&=  (v_1\cos\varphi + v_2 \sin\varphi)\Big(\cot\theta -\frac{\theta}{\sin^2 \theta}\Big).
\end{align*}

In order to obtain the result in Cartesian normal coordinates, we can simply substitute 
\[
y_1 := \theta \cos \varphi,\ y_2 := \theta \sin\varphi,\ \rho := \sqrt{y_1^2 + y_2^2} = \theta,
\]
to obtain that 
\[
\Delta_{g_\mathit{round}} \tilde r_{p, v}(x) = (v_1 y_1 + v_2 y_2 )\Big(\frac{\cot \rho}{\rho} - \frac{1}{\sin^2\rho}\Big). 
\]
Observing this identity we can conclude that, even if the Laplacian of $\tilde r_{p, v}$ tends to zero as $\rho$ becomes small, this does not imply that $\Delta_{g_\mathit{round}}\, \tilde r_{p, v}(x) \geq 0$ near $p$. 
\end{proof}

Now, working again with the two-sphere, let us show that the norm of the gradient of $\tilde r_{p, v}$ cannot be assumed to be constant. 

\begin{remark}
There is no radius $R > 0$ such that $|\Grad{g_{\mathit{round}}} \tilde r_{p, v}|_{g_{\mathit{round}}} \equiv 1$ when restricted to the ball $\mathcal{B}(R, p)$. 
\end{remark}
\begin{proof}
Using the expressions for $\tilde r_{p, v}$ and the round metric in normal coordinates obtained in the previous remark, along with the expression of $\Grad{g_{\mathit{round}}} \tilde r_{p, v}$ in local coordinates, namely
\[
\Grad{g_{\mathit{round}}} \tilde r_{p, v}(x) = g_{\mathit{round}}^{ij}(x) \frac{\partial \tilde r_{p, v}}{\partial x^i} \partial_j,
\]
we conclude that, in geodesic polar coordinates, 
\[
\Grad{g_{\mathit{round}}} \tilde r_{p, v}(x) = (v_1 \cos \varphi + v_2 \sin \varphi) \partial_{\theta} + \frac{1}{\sin^2\theta} \theta (v_2 \cos\varphi - v_1 \sin \varphi)\partial_{\varphi}.
\]
Therefore
\[
|\Grad{g_{\mathit{round}}} \tilde r_{p, v}(x)|^2_{g_{\mathit{round}}} = (v_1 \cos \varphi + v_2 \sin \varphi)^2 + \frac{1}{\sin^2\theta} \theta^2 (v_2 \cos\varphi - v_1 \sin \varphi)^2.
\]
Again, writing this expression in Cartesian normal coordinates we obtain
\begin{align*}
|\Grad{g_{\mathit{round}}} \tilde r_{p, v}(x)|^2_{g_{\mathit{round}}} &= \frac{(v_1y_1 + v_2 y_2)^2}{\rho^2} + \frac{(v_2 y_1 - v_1 y_2)^2}{\sin^2\rho}
\\&= 1 + \Big(\frac{1}{\sin^2\rho} - \frac{1}{\rho^2}\Big) (v_2 y_1 - v_1 y_2)^2.
\end{align*}
where we used the fact that $(v_1y_1 + v_2 y_2)^2 + (v_2 y_1 - v_1 y_2)^2 = (v^2_1 + v_2^2)y_1^2 + (v_1^2 + v_2^2)y_2^2 = \rho^2$, as $v_1^2 + v_2^2 = 1$. 

Although it is easy to see that the norm of the gradient tends to $1$ as $\rho$ approaches zero, it is clear that the expression is not constant in $\mathcal{B}(R, p)$ for any value of $R$. 
\end{proof}
\subsection{Bounding the Laplacian in symmetric spaces}
\label{sec:boundingLaplaciansymmetricspaces}

Although $\Delta_h \tilde r_{p, v}$ cannot be assumed to be non-negative, here we will prove that if $\tilde r_{p, v}$ is defined on a symmetric space, its absolute value can be bounded when restricting $\tilde r_{p, v}$ to a sufficiently small geodesic ball. Throughout this section, $(B, h)$ will be assumed to be a complete and connected symmetric space of dimension $d$. 

Let $p \in B$ be fixed. Using normal coordinates centered at $p$ we can rewrite \cref{eq:deftilder} as 
\[
\tilde r_{p, v}(x) = \delta_{ij} v^i x^i.
\]
Therefore, using the expression of the Laplacian in local coordinates (cf.  \cref{eq:Laplacianincoords}), and assuming without loss of generality that $v = \frac{\partial }{\partial x^1}$, we can rewrite the Laplacian of $\tilde r_{p, v}$ as
\begin{equation}
\label{eq:exprlaplacianr}
\Delta_h \tilde r_{p, v}(x) = -h^{kl}(x) \Gamma^{1}_{kl}(x).
\end{equation}
Thus, if we are able to bound the coefficients of the inverse metric and the Christoffel symbols near $p$, we will be able to bound $\Delta_h \tilde r_{p, v}(x)$, as desired. To this end, we will first obtain an explicit expression for the Taylor series of the coefficients of the metric $h$ in normal coordinates. This expression could be of independent interest; although there are many references in which the Taylor expansion of the metric is computed up to a certain order---to the best of our knowledge, always lower than 7th---the full Taylor series of $h$ is not so easy to find in the literature. This will be done below. Then, in \cref{sec:laplacianbound} we will obtain bounds for the absolute value of $h^{ij}(x)$ and $\partial_a h^{ij}(x)$, for every $a, i, j \in \{1, \dotsc, d\}$ using the Taylor expansion of $h$. These bounds will allow us to relate the Laplacian of $\tilde r_{p, v}$ to the coefficients of the Riemann curvature tensor at $p$ and the radius of the ball to which $\tilde r_{p, v}$ is restricted. 

\subsubsection{Taylor expansion of the metric coefficients}
\label{sec:taylorexpansionmetric}

As we mentioned earlier, we will give an explicit expression for the Taylor series of the coefficients of $h$ in normal coordinates, when $(B, h)$ is a complete and connected symmetric space. Indeed, whenever this is the case, the metric $h$ is analytic (cf. \cite[Proposition 5.5, Chapter IV]{helgason1979differential}). This subsection is based on the proof of \cite[Theorem 2.65]{andrews2010ricci}.

\begin{proposition}
\label{thm:taylorexpansionmetric}
Let $(B, h)$ be a complete and connected $d$-dimensional Riemannian manifold which is furthermore a symmetric space, and let $p \in B$ be some fixed point in $B$. For every $x \in B$ outside the cut locus of $p$, the coefficients $h_{ij}(x)$ can be written in normal coordinates centered in $p$ as
\[
h_{ij}(x) = \delta_{ij} + \sum_{n = 1}^{\infty} \frac{1}{(2n)!}h^{(2n)}_{ij}(0),
\]
whenever the right-hand side is a convergent series, with
\[
h^{(k)}_{ij}(t) := (\partial_t)^k h_{ij}(\gamma(t)),
\]
and $\gamma$ is defined as
\begin{align*}
\gamma: \mathbb{R} &\to B\\
t &\mapsto \exp_p(t\log_p x).
\end{align*}
\end{proposition}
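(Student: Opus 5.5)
The plan is to study the one-variable function $\phi_{ij}(t) := h_{ij}(\gamma(t))$. Here $\gamma(t) = \exp_p(t\log_p x)$, so in normal coordinates centred at $p$ this is simply the straight line $t \mapsto t\,\log_p x$. The proof has three steps: (i) $\phi_{ij}(0) = \delta_{ij}$; (ii) $\phi_{ij}$ is an even function of $t$; (iii) the Taylor series of $\phi_{ij}$ at $0$, evaluated at $t=1$, represents $h_{ij}(x)$.

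Step (i) is the defining property of normal coordinates: the coordinate frame is orthonormal at $p$ (\cref{def:normalcoordinates}).

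For step (ii) I would use \cref{def:symmetricspace} directly. Let $s_p$ be the point reflection at $p$. It is an isometry fixing $p$ with $d s_p|_p = -\textup{Id}$. Isometries map geodesics to geodesics, so $s_p(\exp_p(w)) = \exp_p(-w)$ for every $w$ in the normal neighbourhood. In normal coordinates this says $s_p$ is the map $y \mapsto -y$. Its Jacobian is $-\textup{Id}$, so the isometry condition $s_p^* h = h$ reads $h_{ij}(-y) = h_{ij}(y)$. Taking $y = t\log_p x$ gives $\phi_{ij}(-t) = \phi_{ij}(t)$. Since $\phi_{ij}$ is even, all of its odd derivatives vanish at $0$, that is, $h^{(2n+1)}_{ij}(0) = 0$.

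For step (iii), I would invoke the analyticity of $h$ on a symmetric space (Helgason, as cited just before the statement). Then $\phi_{ij}$ is real-analytic on the interval of $t$ for which $t\log_p x$ stays in the domain of normal coordinates. This interval contains $[0,1]$ because $x$ is not in the cut locus of $p$. Expanding at $t=0$, discarding the odd terms, and evaluating at $t=1$ yields
\[
h_{ij}(x) = \delta_{ij} + \sum_{n\geq 1} \frac{1}{(2n)!}\, h^{(2n)}_{ij}(0).
\]
The proviso ``whenever the right-hand side is convergent'' is exactly what is needed here.

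The main obstacle is making sure the Taylor series at $0$ actually converges to $\phi_{ij}(1)$, and not merely that $\phi_{ij}$ is analytic near each point. Real-analyticity on an interval does not by itself give convergence of the series at $0$ over the whole interval. I would handle this with an explicit formula. Along $\gamma$, the coordinate fields satisfy $\partial_i|_{\gamma(t)} = t^{-1}J_i(t)$, where $J_i$ is the Jacobi field with $J_i(0)=0$ and $J_i'(0)=\partial_i$. On a symmetric space $\nabla R = 0$, so the operator $R(\cdot,\gamma')\gamma'$ is a constant matrix $A$ in a parallel orthonormal frame. The Jacobi equation then has constant coefficients, and its solution is
\[
t^{-1}J_i(t) = \sum_{k\geq 0} \frac{(-1)^k t^{2k}}{(2k+1)!}A^k e_i.
\]
This is an entire, even power series in $t$, and hence so is $\phi_{ij}(t) = t^{-2}\langle J_i,J_j\rangle$. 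This gives convergence at $t=1$ directly, and it also re-proves the evenness from step (ii).
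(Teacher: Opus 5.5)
Your proof is correct, but it reaches the statement by a different route from the paper's, in both of its ingredients.

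\textbf{Odd derivatives.} The paper does not use the point reflection. It takes the variation $\varphi(s,t)=\exp_p(tV(s))$ and computes $\partial_t^k\, h(\varphi_*\partial_s,\varphi_*\partial_s)$ at $(0,0)$ in two ways: once by the Leibniz rule applied to $t^2\,\partial_sV^i\partial_sV^j\,h_{ij}(\varphi(s,t))$, and once by metric compatibility of the pullback connection. It then feeds $\nabla R=0$ into the recursion for ${}^{\varphi}\nabla^{(k)}_{\partial_t}\varphi_*\partial_s$ to show that all even-order covariant derivatives vanish at $(0,0)$ (\cref{lem:covariantderivativeszero}). Hence for odd $k$ every term of the Leibniz sum is zero. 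Your observation that $s_p$ acts as $y\mapsto -y$ in normal coordinates gives $h_{ij}(-y)=h_{ij}(y)$ in one line. It uses the defining property of a symmetric space directly rather than its consequence $\nabla R=0$.

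\textbf{Convergence.} The paper relies on Helgason's analyticity of $h$ together with the proviso in the statement. It leaves implicit the identity-theorem argument (plus Abel's theorem at the boundary of the interval of convergence) needed to conclude that a convergent Taylor series at $t=1$ actually sums to $h_{ij}(x)$. Your constant-coefficient Jacobi equation instead shows that $t\mapsto h_{ij}(\gamma(t))$ is an entire even function, formally $\bigl(\sin(t\sqrt{A})/(t\sqrt{A})\bigr)^2$ with $A$ the matrix of $R(\cdot,\log_p x)\log_p x$ at $p$. So the series converges to $h_{ij}(x)$ for every $x$ off the cut locus, the proviso is superfluous, and analyticity of $h$ is not needed.

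\textbf{What each route buys.} The paper's machinery pays off in \cref{prop:explicitexpresisonmetricderivatives}: the same recursion gives the coefficients $h^{(2n)}_{ij}(0)$ as contracted products of curvature tensors, which are later used to bound $h$ and $h^{-1}$. Your closed form produces these too. Expanding it gives $\frac{1}{(2n)!}h^{(2n)}_{ij}(0)=\frac{(-1)^n2^{2n+1}}{(2n+2)!}(A^n)_{ij}$, which agrees with the paper's formula once the sign convention for $R$ is accounted for. The paper's recursion does have one advantage: it is the specialization of a general expansion (with $\nabla^kR$ terms) valid on any manifold, whereas your argument is tied to the locally symmetric case.
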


Before we proceed to the proof of this result, let us prove some auxiliary lemmas, which will be fundamental when obtaining an explicit expression for the Taylor expansion of $h$. 

\begin{definition}
Let $(B, h)$ be a complete and connected Riemannian manifold. For every fixed $p \in B$ and any $x \in B$ outside the cut locus of $p$, we define $\varphi$ as
\begin{align*}
\varphi: \mathbb{R}^2 &\to B\\
(s, t) &\mapsto \exp_p(tV(s)),
\end{align*}
where $V: \bb{R} \to T_p B$ is some smooth function satisfying $V(0) = \log_p x$. 
\end{definition}

\begin{notation}
Let us denote the pullback connection by $\varphi$ as ${}^{\varphi}\nabla$, which is defined as 
\[
{}^{\varphi}\nabla_X Y|_{(s, t)} := \nabla_{\varphi_* X} Y|_{\varphi(s, t)},
\]
for every $X \in \mathfrak{X}(\bb{R}^2)$ and any $Y \in \mathfrak{X}(\varphi)$. 
\end{notation}

\begin{lemma}
\label{lem:pushforwardandcovariant}
Let $(B, h)$ be a complete and connected Riemannian manifold. Let $p \in B$ be fixed, let $x \in B$ be outside the cut locus of $p$ and let $\varphi$ be as defined previously. Consider $(\partial_s, \partial_t)$ to be the (global) frame of $\bb{R}^2$ associated with the coordinates $(s, t)$. Let us denote $u := V(0), w:= V'(0)$. Then  it holds that
\begin{align}
\label{eq:firstcovder}
\begin{split}
\varphi_* \partial_s|_{(0,0)} &=  0,\\
\varphi_* \partial_t|_{(0, 0)} &= u,\\
{}^{\varphi}\nabla_{\partial_t}\, \varphi_* \partial_t|_{(0,0)} & = 0,\\
{}^{\varphi}\nabla_{\partial_t}\,\varphi_* \partial_s|_{(0,0)} &= w.
\end{split}
\end{align}
Moreover, it follows that ${}^{\varphi}\nabla_{\partial_t}\, \varphi_* \partial_t \equiv 0$. 
\end{lemma}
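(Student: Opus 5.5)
We compute directly from the definition $\varphi(s,t)=\exp_p(tV(s))$, using that for each fixed $s$ the curve $t\mapsto\varphi(s,t)$ is a geodesic and that $\varphi(s,0)=p$ for every $s$.

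The first two identities come from differentiating along the coordinate lines through $(0,0)$. Since $\varphi(s,0)=p$ for all $s$, the curve $s\mapsto\varphi(s,0)$ is constant, so $\varphi_*\partial_s|_{(0,0)}=0$. For the second, $t\mapsto\varphi(0,t)=\exp_p(tu)$ is the geodesic with initial velocity $u$, hence $\varphi_*\partial_t|_{(0,0)}=u$ (identifying $T_0(T_pB)\cong T_pB$ via $d(\exp_p)_0=\textup{Id}$).

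The global identity ${}^{\varphi}\nabla_{\partial_t}\varphi_*\partial_t\equiv 0$ follows because, for each fixed $s$, $\gamma_s(t):=\varphi(s,t)=\exp_p(tV(s))$ is a geodesic. Moreover, ${}^{\varphi}\nabla_{\partial_t}\varphi_*\partial_t|_{(s,t)}$ is exactly the covariant derivative $D_t\dot\gamma_s(t)$ along $\gamma_s$, by the definition of the pullback connection, and this vanishes. In particular it vanishes at $(0,0)$, which gives the third identity.

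For the last identity, use the symmetry lemma for the pullback of the Levi-Civita connection (torsion-freeness): since $[\partial_s,\partial_t]=0$,
\[
{}^{\varphi}\nabla_{\partial_t}\varphi_*\partial_s={}^{\varphi}\nabla_{\partial_s}\varphi_*\partial_t .
\]
Along $t=0$ the image of $\varphi$ stays at the point $p$, so ${}^{\varphi}\nabla_{\partial_s}\varphi_*\partial_t|_{(s,0)}$ is just the ordinary derivative in the fixed vector space $T_pB$ of the curve $s\mapsto\varphi_*\partial_t|_{(s,0)}=V(s)$. Evaluated at $s=0$ this gives $V'(0)=w$. Concretely, in normal coordinates at $p$ the Christoffel symbols vanish at $p$, so the covariant derivative reduces to the coordinate derivative there.

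The only delicate point is justifying that the covariant derivative along the constant curve $s\mapsto p$ is the plain derivative. A constant curve has zero velocity, so in coordinates $D_sY=\dot Y^k\partial_k+\Gamma^k_{ij}\dot\gamma^iY^j\partial_k=\dot Y^k\partial_k$. This holds without even needing $\Gamma(p)=0$, and it settles the step.
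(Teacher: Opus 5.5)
Your proof is correct, and for the first three identities it matches the paper's argument: the constant curve $s\mapsto\varphi(s,0)=p$, the radial geodesic $t\mapsto\exp_p(tu)$, and the geodesic equation for each $\gamma_s$.

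For the fourth identity you take a different route.
\begin{itemize}
\item \textbf{The paper's route.} It writes $\varphi(s,t)=(tV^1(s),\dots,tV^d(s))$ in normal coordinates at $p$ and reads off $\varphi_*\partial_s|_{(s,t)}=t\,\partial_sV^i(s)\,\partial_i$ and $\varphi_*\partial_t|_{(s,t)}=V^i(s)\,\partial_i$. It then uses $\Gamma^k_{ij}(p)=0$ to reduce ${}^{\varphi}\nabla_{\partial_t}\varphi_*\partial_s|_{(0,0)}$ to the coordinate derivative $\partial_t\bigl(t\,\partial_sV^i(s)\bigr)\partial_i=w$.
\item \textbf{Your route.} You invoke the symmetry lemma ${}^{\varphi}\nabla_{\partial_t}\varphi_*\partial_s={}^{\varphi}\nabla_{\partial_s}\varphi_*\partial_t$, which uses torsion-freeness and $[\partial_s,\partial_t]=0$. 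You then differentiate along the constant curve $s\mapsto p$, where the covariant derivative is just the ordinary derivative of $V(s)$ in $T_pB$.
\end{itemize}

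What your route buys:
\begin{itemize}
\item It is coordinate-free and needs neither normal coordinates nor the vanishing of the Christoffel symbols at $p$.
\item It makes evident that the cut-locus hypothesis plays no role.
\item It actually gives ${}^{\varphi}\nabla_{\partial_t}\varphi_*\partial_s|_{(s,0)}=V'(s)$ for every $s$, i.e.\ the initial data of the Jacobi field along each geodesic $\gamma_s$.
\end{itemize}

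What the paper's route buys: the coordinate computation produces the explicit formula $\varphi_*\partial_s=t\,\partial_sV^i(s)\,\partial_i$. The paper reuses this immediately in \cref{eq:pullbackmetric} in the proof of \cref{thm:taylorexpansionmetric}, so for its purposes the coordinate approach does double duty.
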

\begin{proof}
Writing $\varphi$ in normal coordinates centered at $p$ one gets
\[
\varphi(s, t) = (tV^1(s), \dotsc, tV^d(s)),
\]
so it is easy to see that the pushforward of the vector fields $\partial_s$ and $\partial_t$ by $\varphi$ are given by
\[
\varphi_* \partial_s|_{(s,t)} = d\varphi(\partial_s)|_{(s, t)} = \partial_s \varphi(s, t) = t\,\partial_s V^i(s) \partial_i|_{\varphi(s, t)},
\]
and
\[
\varphi_* \partial_t|_{(s, t)} = V^{i}(s) \partial_i|_{\varphi(s, t)}.
\]

Now, since in normal coordinates the Christoffel symbols all vanish at $p = \varphi(0, 0)$, we can conclude that
\[
\nabla_X Y|_p = X(Y^k)\frac{\partial }{\partial x^k}, \quad \forall X, Y \in \mathfrak{X}(B). 
\]
Therefore, 
\begin{align*}
{}^{\varphi}\nabla_{\partial_t}\, \varphi_* \partial_t|_{(0,0)} &= \nabla_{\varphi_* \partial_t} \,\varphi_* \partial_t|_{\varphi(0,0)} = (\partial_t V^i(s)) \partial_i|_{\varphi(0,0)} = 0,\\
{}^{\varphi}\nabla_{\partial_t}\,\varphi_* \partial_s|_{(0,0)} &= \nabla_{\varphi_* \partial_t} \,\varphi_* \partial_s|_{\varphi(0,0)} = \partial_t (t \partial_s V^i(s)) \partial_i|_{\varphi(0,0)} = \partial_s V^i(0) \partial_i|_{\varphi(0,0)} = w,
\end{align*}
where we have used the fact that $(\varphi_* \partial_t) f = \partial_t (f \circ \varphi)$.     

Moreover, since $\gamma_{V(s)}(\cdot) := \varphi(s, \cdot)$ is a geodesic for every $s$, it holds that ${}^{\varphi}\nabla_{\partial_t}\, \varphi_* \partial_t \equiv 0$, finishing the proof. 
\end{proof}

\begin{lemma}
\label{lem:covariantderivativeszero}
Under the same hypothesis of \cref{eq:firstcovder}, for every $k \in \mathbb{N}$, it holds that
\[
{}^{\varphi}\nabla^{(2k)}_{\partial_t} \varphi_* \partial_s|_{(0,0)}  = 0.
\]
\end{lemma}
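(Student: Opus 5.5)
The plan is to show that $J(t) := \varphi_*\partial_s|_{(0,t)}$ is a Jacobi field along the geodesic $\gamma(t) = \varphi(0,t)$, and that on a symmetric space its even covariant derivatives at $t=0$ can be written in terms of $J(0)$ alone, which vanishes. In what follows, write $D_t := {}^{\varphi}\nabla_{\partial_t}$ and $D_s := {}^{\varphi}\nabla_{\partial_s}$.

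\textbf{Jacobi equation.} Since $[\partial_s,\partial_t]=0$, the symmetry lemma gives $D_t\varphi_*\partial_s = D_s\varphi_*\partial_t$. By \cref{lem:pushforwardandcovariant} we have $D_t\varphi_*\partial_t\equiv 0$. Commuting covariant derivatives then introduces the curvature term:
\[
D_t D_t \varphi_*\partial_s = D_t D_s \varphi_*\partial_t = D_s D_t\varphi_*\partial_t + R(\varphi_*\partial_t,\varphi_*\partial_s)\varphi_*\partial_t = R(\varphi_*\partial_t,\varphi_*\partial_s)\varphi_*\partial_t ,
\]
up to the paper's sign convention for $R$. Along $s=0$ this is the Jacobi equation $D_t^2 J = R(\dot\gamma, J)\dot\gamma$.

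\textbf{Symmetric space.} On a symmetric space $\nabla R = 0$, and $\dot\gamma$ is parallel along $\gamma$. Hence the operator $\mathcal{R}(Y) := R(\dot\gamma, Y)\dot\gamma$ commutes with $D_t$. Applying $D_t$ repeatedly to the Jacobi equation and inducting on $k$ gives
\[
D_t^{2k}J = \mathcal{R}^k(J), \qquad D_t^{2k+1}J = \mathcal{R}^k(D_t J).
\]
At $t=0$, \cref{lem:pushforwardandcovariant} gives $J(0)=\varphi_*\partial_s|_{(0,0)}=0$. Since $\mathcal{R}^k$ is linear, $D_t^{2k}J|_{t=0} = \mathcal{R}^k(0) = 0$ for every $k\ge 1$, which is the claim. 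The odd derivatives are $\mathcal{R}^k(w)$, and these will be the ones that feed the Taylor expansion of \cref{thm:taylorexpansionmetric}.

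\textbf{Main obstacle.} The delicate step is justifying $D_t(\mathcal{R}Y) = \mathcal{R}(D_tY)$. Expanding with the Leibniz rule gives
\[
D_t\bigl(R(\dot\gamma,Y)\dot\gamma\bigr) = (\nabla_{\dot\gamma}R)(\dot\gamma,Y)\dot\gamma + R(D_t\dot\gamma,Y)\dot\gamma + R(\dot\gamma,D_tY)\dot\gamma + R(\dot\gamma,Y)D_t\dot\gamma .
\]
The first term vanishes because $\nabla R=0$ on a locally symmetric space; this needs to be cited, for instance as a standard consequence of \cref{def:symmetricspace}. The second and fourth terms vanish because $D_t\dot\gamma=0$. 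This commutation is exactly where the symmetric-space assumption enters, and without it the induction breaks.

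A second point needing care is that the computation is done in the pullback bundle over $\mathbb{R}^2$, so the symmetry lemma and the curvature commutation formula for $D_s, D_t$ must be used in their pullback form. This is routine, but it is also where the sign convention of $R$ has to be fixed consistently with O'Neill's formulas used elsewhere in the paper.
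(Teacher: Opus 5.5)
Your proof is correct and is essentially the paper's argument. The paper quotes the general Andrews--Hopper formula for ${}^{\varphi}\nabla^{(k)}_{\partial_t}\varphi_*\partial_s$ and uses $\nabla R = 0$ to discard every term containing $\nabla^{(m)}R$ with $m \geq 1$, which leaves the recursion ${}^{\varphi}\nabla^{(k)}_{\partial_t}\varphi_*\partial_s = R({}^{\varphi}\nabla^{(k-2)}_{\partial_t}\varphi_*\partial_s, \varphi_*\partial_t)\varphi_*\partial_t$; you derive this same recursion directly from the Jacobi equation, and both arguments then conclude by induction from $\varphi_*\partial_s|_{(0,0)} = 0$.
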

\begin{proof}
First, note that for each $k \geq 2$ (cf. \cite[Claim 2.67]{andrews2010ricci})
\begin{equation}
\label{eq:originalcovderkth}
{}^{\varphi}\nabla^{(k)}_{\partial_t} \varphi_* \partial_s = \sum_{l = 0}^{k-2} {\binom{k-2}{l}} (\nabla^{(k-2-l)} R)_{\varphi} (\underbrace{\varphi_* \partial_t, \dotsc, \varphi_*\partial_t}_{k-2-l \text{ times}}, {}^{\varphi}\nabla^{(l)}_{\partial t} (\varphi_* \partial s),  \varphi_* \partial_t)\varphi_* \partial_t, 
\end{equation}
where $R$ denotes the Riemann curvature tensor of $(B, h)$. 

Since $(B, h)$ is a complete and connected symmetric space by assumption, its Riemann curvature tensor is parallel, i.e. $\nabla R \equiv 0$ (cf. \cite[Corollary 12.7]{lee2018introductionRiemannian}). Therefore, \cref{eq:originalcovderkth} simplifies to
\begin{equation}
\label{eq:kthcovdertivative}
{}^{\varphi}\nabla^{(k)}_{\partial_t} \varphi_* \partial_s = R_{\varphi} ({}^{\varphi}\nabla^{(k-2)}_{\partial t} (\varphi_* \partial s),  \varphi_* \partial_t)\varphi_* \partial_t. 
\end{equation}
Moreover, recall that $\varphi_* \partial_s|_{\varphi(0,0)} = 0$
and so 
\[
{}^{\varphi}\nabla^{(2)}_{\partial_t} \varphi_* \partial_s|_{(0,0)}  = R_{\varphi(0,0)} (\varphi_* \partial s,  \varphi_* \partial_t)\varphi_* \partial_t = R_{\varphi(0,0)}(0, u, u) = 0. 
\]
The result now follows by induction on $k$. 
\end{proof}

With the two previous lemmas in mind, let us now prove \cref{thm:taylorexpansionmetric}.

\begin{proof}[{Proof of \cref{thm:taylorexpansionmetric}}]

Let $x \in B$ be outside the cut locus of $p$, and consider the curve $\gamma(t) = \varphi(0, t) = \exp_p(t\log_p x)$. Whenever 
\[
\sum_{n = 1}^{\infty} \frac{1}{n!}h^{(n)}_{ij}(0) t^n
\]
converges, we can write
\[
h_{ij}(\gamma(t)) = \delta_{ij} + \sum_{n = 1}^{\infty} \frac{1}{n!}h^{(n)}_{ij}(0) t^n,
\]
where $h^{(n)}_{ij}$ is as defined in \cref{thm:taylorexpansionmetric}. Choosing $t = 1$ we obtain that 
\[
h_{ij}(x) = \delta_{ij} + \sum_{n = 1}^{\infty} \frac{1}{n!}h^{(n)}_{ij}(0).
\]

In order to obtain an explicit formula for $h^{(n)}_{ij}(0)$, we start by showing that, using the expression for $\varphi_* \partial_s$ shown in \cref{lem:pushforwardandcovariant}, we have 
\begin{equation}
\label{eq:pullbackmetric}
h_{\varphi(s, t)}(\varphi_* \partial_s, \varphi_* \partial_s) = t^2 \partial_s V^i(s) \partial_s V^j(s) h_{ij}(\varphi(s, t)). 
\end{equation}
Thus, using the general Leibniz rule on \cref{eq:pullbackmetric} we find that for every $k \geq 2$
\begin{align*}
(\partial_t)^{k} h_{\varphi(s, t)}(\varphi_* \partial_s, \varphi_* \partial_s) &= \big\{k(k-1) (\partial_t)^{k-2} h_{ij}(\varphi(s, t)) + 2k t (\partial_t)^{k-1} h_{ij}(\varphi(s, t))
\\&\quad + t^2 (\partial_t)^{k} h_{ij}(\varphi(s, t))\big\} \partial_s V^i(s) \partial_s V^j(s).
\end{align*}
In particular, for $(s, t) = (0, 0)$,
\begin{equation}
\label{eq:derivativemetric2}
(\partial_t)^k h_{\varphi(0,0)}(\varphi_* \partial_s, \varphi_* \partial_s) = k(k-1)  h^{(k-2)}_{ij}(0) \partial_s V^i(0) \partial_s V^j(0) = k (k-1) h^{(k-2)}_{ij}(0) w^i w^j,
\end{equation}
where $w = V'(0)$. 

On the other hand, since ${}^{\varphi}\nabla$ is compatible with $h_{\varphi}$ (cf. \cite[Proposition 2.58]{andrews2010ricci}), it follows that
\[
\partial_t h_{\varphi(s, t)}(\varphi_* \partial_s, \varphi_* \partial_s) = h_{\varphi(s, t)}({}^{\varphi}\nabla_{\partial_t} \varphi_* \partial_s, \varphi_* \partial_s) + h_{\varphi(s, t)}(\varphi_* \partial_s, {}^{\varphi}\nabla_{\partial_t} \varphi_* \partial_s),
\]
and so, for every $k \geq 0$ it holds that
\begin{equation}
\label{eq:derivativemetric1}
(\partial_t)^k h_{\varphi(s, t)}(\varphi_* \partial_s, \varphi_* \partial_s) = \sum_{l = 0}^k \binom{k}{l} h_{\varphi(s, t)}({}^{\varphi}\nabla^{(k-l)}_{\partial_t} \varphi_* \partial_s, {}^{\varphi}\nabla^{(l)}_{\partial_t} \varphi_* \partial_s).
\end{equation}

Putting together \cref{eq:derivativemetric1,eq:derivativemetric2} we conclude that for every $k \geq 0$
\begin{equation}
\label{eq:expressionderivativemetric}
(k+2)(k+1)h^{(k)}_{ij}(0) w^i w^j = \sum_{l = 0}^{k+2} \binom{k+2}{l} h_{\gamma(0)}({}^{\varphi}\nabla^{(k+2-l)}_{\partial t} \varphi_* \partial s, {}^{\varphi}\nabla^{(l)}_{\partial t} \varphi_* \partial s ).
\end{equation}
In particular, this identity implies that if $k$ is odd, 
\begin{equation}
\label{eq:oddderivativesmetriczero}
(k+2)(k+1)h^{(k)}_{ij}(0) w^i w^j = 0,
\end{equation}
since for every value of $l$ either $k+2-l$ or $l$ is even, and so by \cref{lem:covariantderivativeszero}  the right-hand side of \cref{eq:expressionderivativemetric} is zero. 

If $k$ is even, following an analogous reasoning we conclude that
\begin{equation}
\label{eq:identitywithderivativesofg}
(k+2)(k+1)h^{(k)}_{ij}(0) w^i w^j = \sum_{\substack{l = 1\\l \text{ odd}}}^{k+1} \binom{k+2}{l} h_{\gamma(0)}({}^{\varphi}\nabla^{(k+2-l)}_{\partial t} \varphi_* \partial s, {}^{\varphi}\nabla^{(l)}_{\partial t} \varphi_* \partial s ).    
\end{equation}
Going back to the Taylor expansion of $h_{ij}(x)$, we get that 
\[
h_{ij}(x) = \delta_{ij} + \sum_{n = 1}^{\infty} \frac{1}{(2n)!}h^{(2n)}_{ij}(0),
\]
finishing the proof. 
\end{proof}

\begin{remark}
Although we did not give an explicit formula for the terms in the Taylor series of $h_{ij}(x)$ in the statement of \cref{thm:taylorexpansionmetric}, we did find such an explicit formula in the proof. In particular, we proved that for every $n \in \bb{N}$, the coefficients $h_{ij}^{(2n)}(0)$ satisfy \cref{eq:identitywithderivativesofg}.
\end{remark}

Before we obtain a bound for the Laplacian of the function $\tilde r_{p, v}$, let us first give a simpler and more explicit expression for the even derivatives of the metric coefficients of $(B, h)$. 
\begin{proposition}
\label{prop:explicitexpresisonmetricderivatives}
Let $(B, h)$ be a complete and connected symmetric space of dimension $d$. Let us denote by $R$ its Riemann curvature tensor. Let $p \in B$ be fixed and let $x \in B$ be outside the cut locus of $p$. Let $\gamma(t) := \exp_p(t \log_p x)$. Then, in normal coordinates centered at $p$,
\[
h^{(2n)}_{ij}(0) = \frac{2^{2n+1} }{(2n+2)(2n+1)} \delta_{\alpha j} T^\alpha_{i, i_1, \dotsc, i_{2n}} x^{i_1}\dotsm x^{i_{2n}}, \quad \forall n \geq 0, \forall i, j \in \{1, \dotsc, d\},
\]
where
\[
h^{(k)}_{ij}(t) = (\partial_t)^k h_{ij}(\gamma(t)),
\]
and
\begin{align*}
T^\alpha_\beta &:= \delta^\alpha_\beta,\\
T^\alpha_{\beta i_1i_2} &:= R^\alpha_{\beta i_1 i_2},\\
T^\alpha_{\beta i_1, \dotsc, i_{2n}} &:= R^\alpha_{\alpha_1 i_1 i_2} \dotsm R^{\alpha_{n-2}}_{\alpha_{n-1} i_{2n-3} i_{2n-2}} R^{\alpha_{n-1}}_{\beta i_{2n-1} i_{2n}},\quad n \geq 2,
\end{align*}
where the coefficients of $R$ are considered in normal coordinates centered at $p$ and evaluated at $p$.  
\end{proposition}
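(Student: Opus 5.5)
The plan is to start from the identity \cref{eq:identitywithderivativesofg} obtained in the proof of \cref{thm:taylorexpansionmetric}. Its left-hand side is $(k+2)(k+1)h^{(k)}_{ij}(0)w^iw^j$, and the right-hand side involves only the odd-order covariant derivatives ${}^{\varphi}\nabla^{(l)}_{\partial_t}\varphi_*\partial_s|_{(0,0)}$. The first step is therefore to compute these odd derivatives explicitly.

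By \cref{lem:pushforwardandcovariant}, the first derivative is ${}^{\varphi}\nabla_{\partial_t}\varphi_*\partial_s|_{(0,0)}=w$. The recursion \cref{eq:kthcovdertivative} is valid because $\nabla R\equiv 0$. Together with $\varphi_*\partial_t|_{(0,0)}=u=\log_p x$, which has coordinates $x^i$ in normal coordinates, this gives by induction
\[
{}^{\varphi}\nabla^{(2m+1)}_{\partial_t}\varphi_*\partial_s|_{(0,0)} = \big(R(\cdot,u)u\big)^m w .
\]
In coordinates at $p$ this reads $R^{\alpha}_{\alpha_1 i_1 i_2}\cdots R^{\alpha_{m-1}}_{\beta i_{2m-1}i_{2m}}w^\beta x^{i_1}\cdots x^{i_{2m}}$, up to fixing the index convention of $R$. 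I would write it as $T^\alpha_{\beta i_1\dots i_{2m}}w^\beta x^{i_1}\cdots x^{i_{2m}}$, i.e.\ as the powers $\mathcal R^m w$ of the symmetric (Jacobi) operator $\mathcal R:=R(\cdot,u)u$.

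Next I substitute into the identity with $k=2n$. Since $h_{\gamma(0)}=\delta$ in normal coordinates, each term becomes
\[
\binom{2n+2}{l}\langle \mathcal R^{a}w,\mathcal R^{b}w\rangle, \qquad l=2a+1,\quad 2n+2-l=2b+1,\quad a+b=n .
\]
Because $\mathcal R$ is self-adjoint, this equals $\binom{2n+2}{l}\langle \mathcal R^n w,w\rangle$. Summing over the odd $l$ uses $\sum_{l\text{ odd}}\binom{2n+2}{l}=2^{2n+1}$, so
\[
(2n+2)(2n+1)\,h^{(2n)}_{ij}(0)w^iw^j = 2^{2n+1}\,\delta_{\alpha j}T^\alpha_{i,i_1\dots i_{2n}}x^{i_1}\cdots x^{i_{2n}}w^iw^j .
\]
This is a quadratic form in the arbitrary vector $w=V'(0)$, which can be chosen freely by choosing $V$.

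Both sides are symmetric in $(i,j)$: the left side because $h$ is symmetric, and the right side because $\mathcal R^n$ is self-adjoint and the metric at $p$ is $\delta$. The polarization identity therefore allows dropping $w$, which yields the claimed formula. The case $n=0$ is trivial, since $T^\alpha_\beta=\delta^\alpha_\beta$ gives $h_{ij}(0)=\delta_{ij}$.

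I expect the main obstacle to be the bookkeeping of index conventions and signs for $R$. Concretely, I must check that \cref{eq:kthcovdertivative} iterates to exactly the contraction pattern defining $T$, including the placement of $x$-indices and the sign of $R(\cdot,u)u$ versus $R(u,\cdot)u$. I would also verify the self-adjointness of $\mathcal R$ via the pair symmetry of $R$ at $p$, since this is what makes every summand reduce to the same $\langle\mathcal R^n w,w\rangle$.
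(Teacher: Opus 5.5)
Your proposal is correct and follows the paper's proof essentially step for step. You iterate the parallel-curvature recursion \cref{eq:kthcovdertivative} to write the odd derivatives as $T^\alpha_{\beta i_1\dots i_{2m}}w^\beta x^{i_1}\cdots x^{i_{2m}}$, insert them into \cref{eq:identitywithderivativesofg}, collapse each pairing to the same term using the symmetry of $R$ (your self-adjointness of $R(\cdot,u)u$), and sum $\sum_{l\ \mathrm{odd}}\binom{2n+2}{l}=2^{2n+1}$. Your explicit polarization step, which uses that both sides are symmetric in $(i,j)$, simply makes rigorous what the paper compresses into ``since $w$ was arbitrary.''
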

\begin{proof}
Let us define $\varphi$ as in \cref{lem:pushforwardandcovariant}. We will obtain expressions for ${}^{\varphi}\nabla^{(n)}_{\partial_t} \varphi_* \partial_s|_{(0, 0)}$ in normal coordinates, for odd values of $n$---the even case was studied in \cref{lem:covariantderivativeszero}. Using \cref{eq:firstcovder,eq:kthcovdertivative} we see that, in normal coordinates centered at $p$, 
\begin{align*}
{}^{\varphi}\nabla_{\partial_t}\, \varphi_* \partial_s|_{(0, 0)} &= w^b \partial_b,\\
{}^{\varphi}\nabla^{(3)}_{\partial_t} \varphi_* \partial_s|_{(0, 0)} &= R_{\varphi(0,0)} ({}^{\varphi}\nabla^{}_{\partial t} (\varphi_* \partial s),  u)u = R_{\varphi(0,0)} (w,  u)u = R_{ai_1 i_2}^b w^a x^{i_1} x^{i_2} \partial_b,\\
{}^{\varphi}\nabla^{(2n+1)}_{\partial_t} \varphi_* \partial_s|_{(0, 0)} &= R^b_{\alpha_1 i_1 i_2} \dotsm R^{\alpha_{n-2}}_{\alpha_{n-1} i_{2n-3} i_{2n-2}} R^{\alpha_{n-1}}_{a i_{2n-1} i_{2n}} w^a x^{i_1} \dotsm x^{i_{2n}} \partial_b, \quad n \geq 2,
\end{align*}
where again $w = V'(0)$ and $u = \log_p x$. To simplify the notation, we can write these expressions using the notation from the statement of the proposition,
\[
{}^{\varphi}\nabla^{(2n+1)}_{\partial_t} \varphi_* \partial_s|_{(0, 0)} = T^\alpha_{\beta i_1, \dotsc, i_{2n}} w^\beta x^{i_1}\dotsm x^{i_{2n}}\partial_\alpha,
\]
for every $n \geq 0$. 

This way, since in normal coordinates $h_{\varphi(0,0)}$ is the Euclidean metric, for every $n \in \bb{N}$ and every odd value $2l+1$, $0 \leq l \leq n$ we can write
\begin{align*}
h_{\gamma(0)}({}^{\varphi}\nabla^{(2n+2-(2l+1))}_{\partial t} \varphi_* \partial s, {}^{\varphi}\nabla^{(2l+1)}_{\partial t} \varphi_* \partial s ) &= \delta_{\alpha \beta}T^\alpha_{i, i_1, \dotsc, i_{2(n-l)}} T^\beta_{j, j_1, \dotsc, j_{2l}}
\\&\quad \times w^i w^j x^{i_1}\dotsm x^{i_{2(n-l)}}  x^{j_1}\dotsm x^{j_{2l}}\\
&= \delta_{\alpha \beta}T^\alpha_{i, i_1, \dotsc, i_{2(n-l)}} T^\beta_{j, i_{2(n-l)+1}, \dotsc, i_{2n}} w^i w^j x^{i_1}\dotsm x^{i_{2n}}
\\&= \delta_{\alpha j} T^\alpha_{i, i_1, \dotsc, i_{2n}} w^i w^j x^{i_1}\dotsm x^{i_{2n}},
\end{align*}
where the last equality follows simply by re-ordering the indices and the symmetry of $R$. Inserting this identity into \cref{eq:identitywithderivativesofg}, since $w$ was arbitrary, we obtain that for every $i, j \in \{1, \dotsc, d\}$
\begin{align}
\label{eq:expansiontaylorterms}
\begin{split}
(2n+2)(2n+1)h^{(2n)}_{ij}(0) &= \sum_{\substack{l = 1\\l \text{ odd}}}^{2n+1} \binom{2n+2}{l}\delta_{\alpha j} T^\alpha_{i, i_1, \dotsc, i_{2n}} x^{i_1}\dotsm x^{i_{2n}}
\\&= 2^{2n+1} \delta_{\alpha j} T^\alpha_{i, i_1, \dotsc, i_{2n}} x^{i_1}\dotsm x^{i_{2n}}.
\end{split}
\end{align}

\end{proof}

\subsubsection{Bounding the Laplacian}
\label{sec:laplacianbound}

Having obtained an explicit expression for the Taylor series of the coefficients of the metric $h$ in normal coordinates, we now proceed to upper bound the absolute value of $h_{ij}(x)$ and $\partial_a h_{ij}(x)$ for every $a, i, j \in \{1, \dotsc, d\}$. By bounding the former, we will be able to upper bound the absolute value of the coefficients of the inverse metric, which will allow us to upper bound the absolute value of the Christoffel symbols, and ultimately $|\Delta_h \tilde r_{p, v}(x)|$. 

\begin{proposition}
\label{prop:boundmetricterms}
Let $(B, h)$ be a complete and connected $d$-dimensional Riemannian manifold which is also a symmetric space. Let $p \in B$ be a fixed point and let us consider normal coordinates centered at $p$. Let $0 < r < i(B)$ and let $x \in B$ be such that $d_h(x, p) \leq r$. Assume that, at $p$, there exists some constant $\mathbf{K} \geq 1$ such that the coefficients of the Riemann curvature tensor in these coordinates satisfy $|R_{ijk}^l(p)| \leq \mathbf{K}$ for every $i, j, k, l \in \{1, \dotsc, d\}$. Then  
\[
\left|\sum_{n = 1}^{\infty} \frac{1}{(2n)!}h^{(2n)}_{ij}(0)\right| \leq\frac{2}{d} \sum_{n = 1}^{\infty}\frac{d^{3n}  \mathbf{K}^n (2r)^{2n} }{(2n+2)!},
\]
for every $i, j \in \{1, \dotsc, d\}$.
\end{proposition}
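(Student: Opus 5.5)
We start from the explicit formula of \cref{prop:explicitexpresisonmetricderivatives}:
\[
h^{(2n)}_{ij}(0) = \frac{2^{2n+1}}{(2n+2)(2n+1)}\, \delta_{\alpha j} T^\alpha_{i, i_1, \dotsc, i_{2n}} x^{i_1}\dotsm x^{i_{2n}}.
\]
The plan is to bound each term of the series separately and then sum. Dividing by $(2n)!$ and using $(2n)!(2n+1)(2n+2) = (2n+2)!$, the $n$-th term becomes
\[
\frac{2^{2n+1}}{(2n+2)!}\, T^j_{i, i_1, \dotsc, i_{2n}} x^{i_1}\dotsm x^{i_{2n}}.
\]
So it suffices to show
\[
\big|T^j_{i, i_1, \dotsc, i_{2n}} x^{i_1}\dotsm x^{i_{2n}}\big| \leq \frac{1}{d}\, d^{3n} \mathbf{K}^n r^{2n}.
\]
Then $2^{2n+1} r^{2n} = 2 (2r)^{2n}$, and the triangle inequality for the series gives the claim, once we know absolute convergence.

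To bound the contraction, expand $T^j_{i,i_1,\dots,i_{2n}}$ as the product $R^j_{\alpha_1 i_1 i_2} R^{\alpha_1}_{\alpha_2 i_3 i_4}\dotsm R^{\alpha_{n-1}}_{i\, i_{2n-1} i_{2n}}$. This is a sum over $n-1$ internal indices $\alpha_1,\dots,\alpha_{n-1}$ and $2n$ indices $i_1,\dots,i_{2n}$, and each factor is bounded by $\mathbf{K}$. Since $x^k$ are normal coordinates with $\sum_k (x^k)^2 = d_h(x,p)^2 \leq r^2$, we have $|x^k|\leq r$.

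The crude bound counts terms: $d^{n-1}$ choices of internal indices times $d^{2n}$ choices of the $x$-indices. This gives at most $d^{3n-1}\mathbf{K}^n r^{2n} = \frac{1}{d} d^{3n}\mathbf{K}^n r^{2n}$, exactly the needed factor. The case $n=1$ has no internal indices: there are $d^2$ terms, which again matches $d^{3-1}$.

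The only point requiring care is this index bookkeeping: checking that the count is $d^{3n-1}$ rather than $d^{3n}$, and that the boundary cases $n=1,2$ of the definition of $T$ fit the same pattern. We could sharpen the count with Cauchy–Schwarz ($\sum_k |x^k| \leq \sqrt d\, r$), but this is unnecessary for the stated bound.

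Finally, the majorizing series $\sum_n \frac{d^{3n}\mathbf{K}^n(2r)^{2n}}{(2n+2)!}$ converges for every $r$, since it is dominated by a $\cosh$-type series. Hence the original series converges absolutely, and the bound holds termwise and in sum. This also justifies applying \cref{thm:taylorexpansionmetric} within $d_h(x,p)\leq r < i(B)$.
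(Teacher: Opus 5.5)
Your proposal is correct and follows the paper's proof. Both arguments start from the explicit formula of \cref{prop:explicitexpresisonmetricderivatives}, bound $|T^\alpha_{\beta i_1\dotsc i_{2n}}|\le d^{n-1}\mathbf{K}^n$ by counting the $n-1$ internal indices, bound the $d^{2n}$ coordinate monomials by $r^{2n}$ via $|x^k|\le d_h(x,p)\le r$, and sum termwise using $(2n)!(2n+1)(2n+2)=(2n+2)!$. Your added observation that the majorizing series converges for every $r$, so the left-hand side is absolutely convergent, makes explicit a point the paper leaves implicit.
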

\begin{proof}
In \cref{prop:explicitexpresisonmetricderivatives} we showed that for every $n \in \bb{N}$
\[
h^{(2n)}_{ij}(0) = \frac{2^{2n+1} }{(2n+2)(2n+1)} \delta_{\alpha j} T^\alpha_{i, i_1, \dotsc, i_{2n}} x^{i_1}\dotsm x^{i_{2n}}, \quad \forall i, j \in \{1, \dotsc, d\}. 
\]
It is now straightforward to obtain a bound for this expression. As in normal coordinates centered at $p$ it holds that $|R_{ijk}^l(p)| \leq \mathbf{K}$ for every $i, j, k, l \in \{1, \dotsc, d\}$, it follows that for each $n \in \bb{N}$
\begin{equation}
\label{eq:tensorTbound}
|T^\alpha_{\beta i_1, \dotsc, i_{2n}} | \leq d^{n-1} \mathbf{K}^{n}, \quad \forall \alpha, \beta, i_1, \dotsc, i_{2n} \in \{1, \dotsc, d\}. 
\end{equation}
Since we are also assuming that $d_h(p, x) = |\log_p x|_h \leq r$, it follows that for every $n \in \bb{N}$
\begin{align*}
(2n+2)(2n+1)|h^{(2n)}_{ij}(0)| &\leq 2^{2n+1} d^{2n} d^{n-1} \mathbf{K}^{n} r^{2n}
\\&= 2^{2n+1} d^{3n-1} \mathbf{K}^n r^{2n},
\end{align*}
for every $i, j \in \{1, \dotsc, d\}$. 

This way, we proved that for every $i, j \in \{1, \dotsc, d\}$
\[
\left|\sum_{n = 1}^{\infty} \frac{1}{(2n)!}h^{(2n)}_{ij}(0)\right| \leq \frac{2}{d} \sum_{n = 1}^{\infty}\frac{d^{3n}  \mathbf{K}^n (2r)^{2n} }{(2n+2)!},
\]
finishing the proof. 
\end{proof}

\cref{prop:boundmetricterms} allows us to obtain bounds which can be as small as desired, simply by requiring $r$ to be small enough.

\begin{remark}
\label{rem:boundtaylor}
Under the conditions of \cref{prop:boundmetricterms}, let
\[
r \leq \frac{1}{d^{3/2} \mathbf{K}^{1/2}}. 
\]
Then  
\[
\left|\sum_{n = 1}^{\infty} \frac{1}{(2n)!}h^{(2n)}_{ij}(0)\right| < \frac{1}{d}. 
\]
\end{remark}
\begin{proof}
By simply inserting the bound on $r$ into the expression found in \cref{prop:boundmetricterms} we see that
\begin{align*}
\left|\sum_{n = 1}^{\infty} \frac{1}{(2n)!}h^{(2n)}_{ij}(0)\right| &\leq \frac{2}{d} \sum_{n = 1}^{\infty}\frac{d^{3n}  \mathbf{K}^n (2r)^{2n} }{(2n+2)!}
\\&\leq \frac{2}{d} \sum_{n = 1}^{\infty}\frac{d^{3n}  \mathbf{K}^n (2 \frac{1}{d^{3/2} \mathbf{K}^{1/2}})^{2n} }{(2n+2)!}
\\&= \frac{2}{d} \sum_{n = 1}^{\infty}\frac{4^n}{(2n+2)!}
\\&\leq \frac{1}{d}.
\end{align*}
\end{proof}

As we mentioned earlier, we need to find an upper bound on the absolute value of the coefficients of $h^{-1}$ at $x$ in normal coordinates, as they appear both in the expression of the Christoffel symbols and in the Laplacian of $\tilde r_{p, v}$. In order to do so, we will use two auxiliary lemmas.

\begin{lemma}
\label{lem:frobeniusnorm}
For every real matrix $A$ of size $n \times m$ it holds that 
\[
\max_{ij} |a_{ij}| \leq \norm{A}_F \leq \sqrt{nm} \max_{ij} |a_{ij}|,
\]
where $\norm{A}_F$ denotes the Frobenius norm of $A$, namely
\[
\norm{A}_F := \Tr(A^T A)^{1/2} = \Big(\sum_{ij} |a_{ij}|^2\Big)^{1/2}.
\]
\end{lemma}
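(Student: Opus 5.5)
The left inequality is immediate. For any fixed index pair $(i_0,j_0)$, the sum $\sum_{ij}|a_{ij}|^2$ contains the single term $|a_{i_0j_0}|^2$ together with other nonnegative terms. Hence
\[
|a_{i_0j_0}|^2 \leq \sum_{ij}|a_{ij}|^2 = \norm{A}_F^2 .
\]
Taking square roots and choosing $(i_0,j_0)$ to be the entry of largest absolute value gives $\max_{ij}|a_{ij}|\leq \norm{A}_F$.

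For the right inequality, set $\mathbf{m}:=\max_{ij}|a_{ij}|$ and bound each of the $nm$ terms of the sum by $\mathbf{m}^2$:
\[
\norm{A}_F^2=\sum_{i=1}^n\sum_{j=1}^m |a_{ij}|^2 \leq nm\,\mathbf{m}^2 .
\]
Taking square roots gives $\norm{A}_F\leq \sqrt{nm}\,\max_{ij}|a_{ij}|$. We should also note that $\Tr(A^TA)=\sum_{j}(A^TA)_{jj}=\sum_j\sum_i a_{ij}^2$, so the trace expression and the entrywise expression of $\norm{A}_F$ coincide.

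Neither step is a real obstacle; the only thing to track is the count of $nm$ entries in the second bound. In what follows, this lemma will serve to pass from entrywise bounds on $h_{ij}(x)-\delta_{ij}$, coming from \cref{rem:boundtaylor}, to operator or Frobenius norm bounds. Those norm bounds are what a Neumann series argument needs in order to control the entries of $h^{-1}$.
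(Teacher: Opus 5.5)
Your proof is correct and complete. The single-term bound gives the left inequality, and bounding each of the $nm$ summands by $\max_{ij}|a_{ij}|^2$ gives the right one; your check that $\Tr(A^TA)=\sum_{i,j}a_{ij}^2$ shows the two expressions for $\norm{A}_F$ agree. The paper states this lemma without proof as a standard fact, your argument is the standard one, and with $n=m=d$ it yields the factor $d$ used in \cref{eq:normboundX}.
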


The next lemma will allow us to obtain a Taylor series expansion for the coefficients of $h^{-1}$ in normal coordinates using what is known as a Neumann series approximation. We refer the reader to \cite[Theorem 4.20]{stewart1998matrix} for more details. 

\begin{lemma}
\label{prop:neumannseries}
Let $X$ be some matrix such that $\norm{X} < 1$ for some (submultiplicative) matrix norm $\norm{\cdot}$. Then $(\mathds{1}-X)$ is non-singular and 
\[
(\mathds{1} - X)^{-1} = \sum_{k = 0}^{\infty} X^k. 
\]    
\end{lemma}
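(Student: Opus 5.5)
The statement to prove is the Neumann series lemma (\cref{prop:neumannseries}): if $\norm{X} < 1$ for a submultiplicative matrix norm, then $\mathds{1}-X$ is invertible and $(\mathds{1}-X)^{-1} = \sum_{k\ge 0} X^k$. The plan is the standard telescoping argument.

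\textbf{Convergence.} Write $S_N := \sum_{k=0}^{N} X^k$. By submultiplicativity, $\norm{X^k} \le \norm{X}^k$, so the series $\sum_k \norm{X^k}$ is dominated by the geometric series $\sum_k \norm{X}^k = (1-\norm{X})^{-1} < \infty$. The space of matrices is finite-dimensional, hence complete in any norm. Therefore $S_N$ is Cauchy: for $M>N$ we have $\norm{S_M - S_N} \le \sum_{k=N+1}^{M}\norm{X}^k$. It follows that $S_N$ converges to some matrix $S$.

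\textbf{Telescoping identity.} Directly,
\[
(\mathds{1}-X)S_N = S_N(\mathds{1}-X) = \mathds{1} - X^{N+1}.
\]
Since $\norm{X^{N+1}} \le \norm{X}^{N+1} \to 0$, we get $X^{N+1}\to 0$.

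\textbf{Passing to the limit.} Matrix multiplication is continuous because all norms on a finite-dimensional space are equivalent. Letting $N\to\infty$ therefore gives $(\mathds{1}-X)S = S(\mathds{1}-X) = \mathds{1}$. Hence $\mathds{1}-X$ is non-singular and its inverse is $S=\sum_{k\ge0}X^k$.

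There is no real obstacle. The only point needing a remark is that convergence in the given submultiplicative norm is equivalent to entrywise convergence, by norm equivalence in finite dimensions. This is what justifies taking limits through the products above.
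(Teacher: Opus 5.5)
Your proof is correct and is the standard telescoping argument for the Neumann series. The paper does not prove this lemma itself and only cites \cite[Theorem 4.20]{stewart1998matrix}, so there is no separate argument to compare against.

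One small nit: $\norm{X^k}\le\norm{X}^k$ is only guaranteed for $k\ge 1$, since $k=0$ gives $\norm{\mathds{1}}$, e.g.\ $\norm{\mathds{1}}_F=\sqrt{n}$ for $n\times n$ matrices. This is harmless, because your Cauchy estimate only uses $k\ge N+1$. Also, continuity of the products already follows from submultiplicativity, $\norm{A(B-B')}\le\norm{A}\,\norm{B-B'}$, so you do not need norm equivalence.
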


As we mentioned previously, these lemmas allow us to prove the following result, which gives us a bound on the absolute value of the coefficients of $h^{-1}$ in normal coordinates. 

\begin{proposition}
\label{prop:boundinversemetric}
Let $(B, h)$ be a complete and connected $d$-dimensional Riemannian manifold which is also a symmetric space. Let $p \in B$ be a fixed point and let us consider normal coordinates centered at $p$. Let $0 < r < i(B)$ and let $x \in B$ be such that $d_h(x, p) \leq r$. Assume that, at $p$, there exists some constant $\mathbf{K} \geq 1$ such that the coefficients of the Riemann curvature tensor in these coordinates satisfy $|R_{ijk}^l(p)| \leq \mathbf{K}$ for every $i, j, k, l \in \{1, \dotsc, d\}$. Further assume that
\[
r \leq \frac{1}{d^{3/2} \mathbf{K}^{1/2}}.
\]
Then  for every $i, j \in \{1, \dotsc, d\}$,
\[
|h^{ij}(x) | \leq \frac{1}{1-L},
\]
where 
\begin{equation}
\label{eq:defconstanteL}
L := 2\sum_{n = 1}^{\infty}\frac{d^{3n}  \mathbf{K}^n (2r)^{2n} }{(2n+2)!}.
\end{equation}
\end{proposition}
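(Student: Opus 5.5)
Write the metric at $x$ in normal coordinates as $h(x) = \mathds{1} - X$, where by \cref{thm:taylorexpansionmetric} the matrix $X$ has entries $X_{ij} = -\sum_{n\ge1}\frac{1}{(2n)!}h^{(2n)}_{ij}(0)$. The hypothesis $r \leq d^{-3/2}\mathbf{K}^{-1/2}$ makes this series converge absolutely, since its terms are dominated by those in \cref{prop:boundmetricterms}. That same proposition then gives the entrywise bound
\[
|X_{ij}| \leq \frac{2}{d}\sum_{n=1}^\infty \frac{d^{3n}\mathbf{K}^n(2r)^{2n}}{(2n+2)!} = \frac{L}{d}.
\]
Applying \cref{lem:frobeniusnorm} with $n=m=d$, I get $\norm{X}_F \leq d\max_{ij}|X_{ij}| \leq L$. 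By \cref{rem:boundtaylor}, or simply by evaluating the series at the extremal $r$, we have $L \leq 2\sum_n 4^n/(2n+2)! < 1$. So the Frobenius norm, which is submultiplicative, satisfies $\norm{X}_F<1$.

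Next I apply \cref{prop:neumannseries}: $h(x)$ is invertible and $h^{-1}(x) = \sum_{k\ge0}X^k$. Submultiplicativity and the triangle inequality give $\norm{h^{-1}(x)}_F \leq \sum_k \norm{X}_F^k \leq \frac{1}{1-L}$. Finally, \cref{lem:frobeniusnorm} gives $|h^{ij}(x)| \leq \norm{h^{-1}(x)}_F \leq \frac{1}{1-L}$, which is the claim.

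The only delicate point is justifying that the formal expansion of \cref{thm:taylorexpansionmetric} actually equals $h_{ij}(x)$. That theorem requires convergence of the right-hand side, which the majorant above supplies. It also uses analyticity of $h$ in the symmetric case, which holds for $x$ within $r<i(B)$ of $p$, so outside the cut locus. A minor check is that $L<1$ numerically: $2\sum_{n\ge1}4^n/(2n+2)! = 2\cdot\frac{\cosh 2 - 1 - 2}{4} \approx 0.38$, which is comfortably below $1$.
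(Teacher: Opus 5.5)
Your argument breaks at the Neumann-series step. Submultiplicativity gives $\norm{X^k}_F\le\norm{X}_F^k$ only for $k\ge 1$. The $k=0$ term is $\norm{\mathds{1}}_F=\sqrt d$, not $1$, so what you actually obtain is $\norm{h^{-1}(x)}_F\le \sqrt d + L/(1-L)$.

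Your intermediate claim $\norm{h^{-1}(x)}_F\le 1/(1-L)$ is in fact false. Take the admissible point $x=p$: there $h^{-1}(p)=\mathds{1}$ has Frobenius norm $\sqrt d$. Under your hypothesis $L<0.39$, so $1/(1-L)<1.7<\sqrt d$ for every $d\ge 3$, and also for $d=2$ once $r$ is small. The entrywise bound you want is still true, but your chain of inequalities does not prove it.

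There are two easy repairs. The first is what the paper does: pass to the spectral norm, which is submultiplicative and satisfies $\norm{\mathds{1}}_2=1$. From $\norm{X}_2\le\norm{X}_F\le L<1$ you get $\norm{h^{-1}(x)}_2\le\sum_{k\ge0}\norm{X}_2^k=1/(1-L)$. Then conclude with $|h^{ij}(x)|=|e_i^{T}h^{-1}(x)e_j|\le\norm{h^{-1}(x)}_2$.

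The second keeps the Frobenius norm but bounds entries before summing, isolating the identity term. Since $|(X^k)_{ij}|\le\norm{X^k}_F\le\norm{X}_F^k$ for $k\ge1$, you get $|h^{ij}(x)|\le|\delta_{ij}|+\sum_{k\ge1}|(X^k)_{ij}|\le 1+\sum_{k\ge1}L^k=1/(1-L)$.

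Your remaining steps are correct and match the paper: the entrywise bound $|X_{ij}|\le L/d$, the estimate $\norm{X}_F\le L$, and the numerical check $L\le(\cosh 2-3)/2\approx 0.38<1$.
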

\begin{proof}
First, we can write $h$ as 
\[
h(x) = \mathds{1} + X(x),
\]
where
\[
X_{ij}(x) = h_{ij}(x) - \delta_{ij} = \sum_{n = 1}^\infty \frac{h^{(2n)}_{ij}(0)}{(2n)!},
\]
for every $i, j \in \{1, \dotsc, d\}$. 

From \cref{prop:boundmetricterms} we can conclude that for every $i, j \in \{1, \dotsc, d\}$, 
\[
|X_{ij}(x)| = |h_{ij}(x) - \delta_{ij}| \leq \frac{L}{d},
\]
where $L$ is as defined in \cref{eq:defconstanteL}. Therefore, by \cref{rem:boundtaylor,lem:frobeniusnorm} we can conclude that 
\begin{equation}
\label{eq:normboundX}
\norm{X(x)}_F \leq d \max_{ij}|X_{ij}(x)| \leq L < 1.
\end{equation}

Next, we define the norm,
\[
\norm{X(x)}_2 := \sigma_{\max}(X(x)),
\]
where $\sigma_{\max}(X(x))$ denotes the greatest singular value of $X(x)$ . This norm is upper bounded by the Frobenius norm from \cref{lem:frobeniusnorm}, which can be written as 
\[
\norm{X(x)}_F = \sqrt{\sigma_1(X(x))^2 + \dotsc + \sigma_d(X(x))^2 } ,
\]
and so $\norm{X(x)}_2 \leq \norm{X(x)}_F \leq L < 1$. Thus, by \cref{prop:neumannseries} the metric $h$ is non-singular and 
\[
h^{-1}(x) = (\mathds{1} + X(x))^{-1} = \sum_{k = 0}^{\infty} (-1)^k X(x)^k,
\]
which implies that 
\begin{equation}
\label{eq:normboundinverseg}
\norm{h^{-1}(x)}_2 \leq \sum_{k= 0}^{\infty} \norm{X(x)}^k_2 \leq \frac{1}{1-L}.    
\end{equation}
Lastly, we can use the fact that for every $i, j \in \{1, \dotsc, d\}$, $|h^{ij}(x)| \leq \norm{h^{-1}(x)}_2$ to guarantee that 
\[
|h^{ij}(x)| \leq \norm{h^{-1}}_2 \leq \frac{1}{1-L}, 
\]
for every $i, j \in \{1, \dotsc, d\}$
\end{proof}

To finish, we will follow the same steps as in \cref{prop:boundmetricterms} to give an upper bound on $|\partial_a h_{ij}(x)|$ in normal coordinates centered at $p$, depending on $d_h(x, p)$ and the bound on the Riemann curvature tensor.  
\begin{proposition}
\label{prop:bounddermetric}
Let $(B, h)$ be a complete and connected $d$-dimensional Riemannian manifold which is also a symmetric space. Let $p \in B$ be a fixed point and let us consider normal coordinates centered at $p$. Let $0 < r < i(B)$ and let $x \in B$ be such that $d_h(x, p) \leq r$. Assume that, at $p$, there exists some constant $\mathbf{K} \geq 1$ such that the coefficients of the Riemann curvature tensor in these coordinates satisfy $|R_{ijk}^l(p)| \leq \mathbf{K}$ for every $i, j, k, l \in \{1, \dotsc, d\}$. Then  
\[
|\partial_a h_{ij}(x)| \leq \sum_{n = 1}^\infty \frac{n 2^{2n+2} d^{3n-2} \mathbf{K}^{n} r^{2n-1}}{(2n+2)!},
\]
for every $a, i, j \in \{1, \dotsc, d\}$. 
\end{proposition}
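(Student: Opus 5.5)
We mirror the proof of \cref{prop:boundmetricterms}, but now we differentiate the Taylor series term by term in the normal coordinate $x^a$ rather than bounding it directly. By \cref{thm:taylorexpansionmetric,prop:explicitexpresisonmetricderivatives}, for $x$ in the ball of radius $r<i(B)$ we have
\[
h_{ij}(x) = \delta_{ij} + \sum_{n=1}^\infty \frac{2^{2n+1}}{(2n+2)!}\,\delta_{\alpha j} T^\alpha_{i,i_1,\dotsc,i_{2n}}\, x^{i_1}\dotsm x^{i_{2n}}.
\]
Here we used that $\frac{1}{(2n)!}\cdot\frac{1}{(2n+2)(2n+1)} = \frac{1}{(2n+2)!}$. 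Each summand is a homogeneous polynomial of degree $2n$ in the normal coordinates, and its coefficients $T$ are constants evaluated at $p$. So $h_{ij}$ is, on this ball, a power series in $x$ with coefficients independent of $x$.

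The first step is to differentiate termwise in $x^a$. By the Leibniz rule, $\partial_a(x^{i_1}\dotsm x^{i_{2n}})$ is a sum of $2n$ terms. In each term one factor $x^{i_m}$ is replaced by $\delta^{i_m}_a$, which removes one summation index. Hence
\[
|\partial_a h_{ij}(x)| \le \sum_{n=1}^\infty \frac{2^{2n+1}}{(2n+2)!}\cdot 2n \cdot \max|T|\cdot d^{2n-1} \max_k |x^k|^{2n-1}.
\]
The factor $d^{2n-1}$ counts the remaining free indices $i_1,\dotsc,i_{2n}$ after one is fixed to $a$. We also have $|x^k|\le |\log_p x|_h = d_h(x,p)\le r$, since normal coordinates are orthonormal at $p$.

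The second step inserts \cref{eq:tensorTbound}, $|T|\le d^{n-1}\mathbf{K}^n$. This gives the general term
\[
\frac{2n\cdot 2^{2n+1} d^{3n-2}\mathbf{K}^n r^{2n-1}}{(2n+2)!} = \frac{n\, 2^{2n+2} d^{3n-2}\mathbf{K}^n r^{2n-1}}{(2n+2)!},
\]
which is exactly the claimed bound.

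The main obstacle is justifying termwise differentiation, that is, showing that the series of derivatives converges uniformly on a neighbourhood of $x$. The majorant series $\sum_n n\,4^n d^{3n}\mathbf{K}^n r^{2n}/(2n+2)!$ has factorial decay in its denominator, so it converges for every $r$. The Weierstrass M-test on the closed ball of radius $r$ then gives uniform convergence of both the series and its derivative series. Together with analyticity of $h$ on symmetric spaces (already invoked for \cref{thm:taylorexpansionmetric}), this legitimises the interchange of $\partial_a$ and the sum.

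A subtlety is that \cref{thm:taylorexpansionmetric} writes the expansion along the radial geodesic. We therefore need to view it as a genuine multivariate power series in $x$. This is immediate from the explicit polynomial form supplied by \cref{prop:explicitexpresisonmetricderivatives}, whose coefficients do not depend on $x$.
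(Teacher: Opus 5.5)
Your proposal is correct and follows essentially the same route as the paper: you differentiate the explicit degree-$2n$ terms coming from \cref{eq:expansiontaylorterms} with the Leibniz rule, count the $2n\cdot d^{2n-1}$ surviving index configurations, and insert $|T|\le d^{n-1}\mathbf{K}^n$ and $|x^k|\le r$. Your Weierstrass M-test justification of termwise differentiation is a useful addition that the paper leaves implicit; to cover points with $d_h(x,p)=r$, run the M-test on a ball of radius $r'\in(r,i(B))$, which costs nothing because the majorant converges for every radius.
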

\begin{proof}
For every $a, i \in \{1, \dotsc, d\}$, we can write
\[
\partial_a x^i = \delta^i_a. 
\]
Therefore, if we take the derivative with respect to the $a$-th coordinate in \cref{eq:expansiontaylorterms} we obtain that for every $n \in \bb{N}$ and every $i, j \in \{1, \dotsc, d\}$
\[
(2n+2)(2n+1)\partial_a h^{(2n)}_{ij}(0) = 2^{2n+1} \delta_{\alpha j} T^\alpha_{i, i_1, \dotsc, i_{2n}}  x^{i_1, \dotsc, i_{2n}}_a
\]
where 
\[
x^{i_1,\dotsc,i_{2n}}_a := \delta^{i_1}_a x^{i_2} \dotsm x^{i_{2n}} + x^{i_1}\delta^{i_2}_a x^{i_3} \dotsm x^{i_{2n}} + \dotsm + x^{i_1} \dotsm x^{i_{2n-1}} \delta^{i_{2n}}_a.
\]
The norm of these elements can be bounded as 
\begin{align*}
|x^{i_1,\dotsc,i_{2n}}_a| &\leq |\delta^{i_1}_a x^{i_2} \dotsm x^{i_{2n}}| + |x^{i_1}\delta^{i_2}_a x^{i_3} \dotsm x^{i_{2n}}| + \dotsm + |x^{i_1} \dotsm x^{i_{2n-1}} \delta^{i_{2n}}_a|\\
&\leq 2n r^{2n-1}.
\end{align*}
Therefore, using this inequality together with \cref{eq:tensorTbound} we conclude that for every $n \geq 1$ and every $a, i, j \in \{1, \dotsc, d\}$,
\begin{align*}
(2n+2)(2n+1)|\partial_a h^{(2n)}_{ij}(0)| &= |2^{2n+1} \delta_{\alpha j} T^\alpha_{i, i_1, \dotsc, i_{2n}}  x^{i_1, \dotsc, i_{2n}}_a|
\\&\leq 2^{2n+1} |T^j_{i, i_1, \dotsc, i_{2n}}|  |x^{i_1, \dotsc, i_{2n}}_a|
\\&\leq 2^{2n+1} d^{2n-1} d^{n-1} \mathbf{K}^n 2nr^{2n-1}
\\&= 2^{2n+2} n d^{3n-2} \mathbf{K}^n r^{2n-1},
\end{align*}
and so 
\begin{align*}
|\partial_a h_{ij}(x)| &\leq \sum_{n = 1}^\infty \frac{1}{(2n)!} |\partial_ah^{(2n)}_{ij}(0)|
\\&\leq \sum_{n = 1}^\infty \frac{n 2^{2n+2} d^{3n-2} \mathbf{K}^{n} r^{2n-1}}{(2n+2)!},
\end{align*}
for every $a, i, j \in \{1, \dotsc, d\}$. 
\end{proof}

Having obtained a bound on $|\partial_a h_{ij}(x)|$ and $|h^{ij}(x)|$ for every $a, i, j \in \{1, \dotsc, d\}$, the bound on the Laplacian of $\tilde r_{p, v}$ follows almost immediately, as we will show in the next theorem. Note that the bound for the absolute value of the Laplacian can be made arbitrarily small by shrinking $r$. Nevertheless, the bound we obtain below is sufficiently good for the purpose of this work. 

\begin{proposition}[Upper bound on the absolute value of the Laplacian of $\tilde r_{p, v}$]
\label{thm:lowerboundlaplacianr}
Let $(B, h)$ be a complete and connected $d$-dimensional Riemannian manifold, which is also a symmetric space. Let $p \in B$ be a fixed point and let us consider normal coordinates centered at $p$. Let $v \in T_p B$, $|v|_h = 1$, $0 < r < i(B)$ and $x \in B$ be such that $d_h(x, p) \leq r$. Assume that, at $p$, there exists some constant $\mathbf{K} \geq 1$ such that the coefficients of the Riemann curvature tensor in these coordinates satisfy $|R_{ijk}^l(p)| \leq \mathbf{K}$ for every $i, j, k, l \in \{1, \dotsc, d\}$. Then  if 
\[
r \leq \frac{1}{d^{3/2} \mathbf{K}},
\]
it holds that
\[
|\Delta_h \tilde r_{p, v}(x)| \leq 24 d^{5/2}. 
\]  
\end{proposition}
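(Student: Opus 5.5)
The plan is to start from the coordinate expression \cref{eq:exprlaplacianr}, namely $\Delta_h \tilde r_{p,v}(x) = -h^{kl}(x)\Gamma^1_{kl}(x)$, which holds after rotating normal coordinates so that $v = \partial_1$. This rotation is legitimate: it preserves normal coordinates and only conjugates the curvature coefficients by an orthogonal matrix. Strictly, this could change the bound $|R^l_{ijk}(p)|\le \mathbf{K}$. I would sidestep that either by noting that the hypothesis is meant for any normal frame at $p$, or by writing $\tilde r_{p,v} = v_i x^i$ with $\sum v_i^2 = 1$. In the latter case $\Delta_h \tilde r_{p,v} = -v_m h^{kl}\Gamma^m_{kl}$, and Cauchy--Schwarz over $m$ costs at most a factor $\sqrt d$, which the final constant can absorb. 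Expanding the Christoffel symbols gives
\[
\Gamma^m_{kl} = \tfrac12 h^{mq}\bigl(\partial_k h_{ql} + \partial_l h_{qk} - \partial_q h_{kl}\bigr),
\]
so $|\Gamma^m_{kl}(x)| \le \tfrac32\, d \,\max|h^{mq}|\,\max|\partial_a h_{ij}|$. Summing over $k,l$ then yields
\[
|\Delta_h \tilde r_{p,v}(x)| \le \tfrac32\, d^{3}\,\bigl(\max_{ij}|h^{ij}(x)|\bigr)^2 \max_{a,i,j}|\partial_a h_{ij}(x)|,
\]
up to the extra $\sqrt d$ factor if the non-rotated form is used.

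Next I insert the earlier bounds. The hypothesis $r \le 1/(d^{3/2}\mathbf{K})$ implies $r \le 1/(d^{3/2}\mathbf{K}^{1/2})$ since $\mathbf{K}\ge 1$. Hence \cref{prop:boundinversemetric} applies, and by \cref{rem:boundtaylor}, $L\le 2\sum_n 4^n/(2n+2)! < 1/2$ (numerically about $0.18$). This gives $|h^{ij}(x)| \le 1/(1-L) \le 2$, or even about $1.22$.

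For the derivatives I use \cref{prop:bounddermetric}. Writing $r^{2n-1} = r\cdot r^{2n-2}$ and using $d^{3n-2}\mathbf{K}^n r^{2n-1} \le d^{3n-2}\mathbf{K}^n r\,(d^{3}\mathbf{K}^{2})^{-(n-1)}$, the $n$-th term is bounded by a multiple of $d\,\mathbf{K}\,r$ times $\mathbf{K}^{-(n-1)}\le 1$. With $r\le 1/(d^{3/2}\mathbf{K})$ this gives $d\mathbf{K} r \le d^{-1/2}$. So
\[
\max|\partial_a h_{ij}(x)| \le d^{-1/2}\sum_{n\ge1}\frac{n\,2^{2n+2}}{(2n+2)!},
\]
and the numerical series is small (first term $16/24$, total below about $1$). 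This is the only place where the stronger hypothesis $r\le 1/(d^{3/2}\mathbf{K})$, rather than $\mathbf{K}^{1/2}$, is needed: one power of $\mathbf{K}$ must be cancelled.

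Combining the estimates gives $|\Delta_h\tilde r_{p,v}(x)| \le \tfrac32 d^3\cdot 4\cdot C\, d^{-1/2} = 6C\, d^{5/2}$, with $C\le 1$, plus the possible $\sqrt d$ issue discussed above. The main obstacle is bookkeeping the powers of $d$ so that the result lands exactly at $d^{5/2}$ with constant $24$. If the naive count overshoots, I would tighten it by treating the contraction $h^{kl}\Gamma^m_{kl}$ as a trace, using $\|h^{-1}\|_2 \le 1/(1-L)$ from \eqref{eq:normboundinverseg} instead of entrywise bounds. This saves a factor of $d$ and leaves ample room for the constant $24$.
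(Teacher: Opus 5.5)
Your proposal is correct and follows the paper's proof: with $v=\partial_1$ you expand $\Gamma^1_{kl}$, use $|h^{ij}(x)|\le 2$ from \cref{prop:boundinversemetric} and $|\partial_a h_{ij}(x)|\le d^{-1/2}\sum_{n\ge 1} n\,2^{2n+2}/(2n+2)!$ from \cref{prop:bounddermetric}, and count $d^3$ terms, exactly as the paper does (the paper bounds that series crudely by $4$, which is where its constant $24$ comes from, while your sharper estimate gives at most $6d^{5/2}$). One aside needs correcting: an extra factor $\sqrt d$ from the unrotated form could not be ``absorbed by the constant'', since it would turn $d^{5/2}$ into $d^3$. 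You never need it, because the curvature hypothesis can be read in any normal frame at $p$ (which is implicitly how the paper justifies taking $v=\partial_1$); and even without rotating, replacing your entrywise $\sum_m|h^{1m}|\le 2d$ by $\sum_q\bigl|\sum_m h^{qm}(x)v_m\bigr|\le\sqrt d\,\|h^{-1}(x)\|_2\le 2\sqrt d$ removes the loss (also, $L\approx 0.38$ rather than $0.18$, which is harmless since only $L<1/2$ is used).
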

\begin{proof}
In normal coordinates, the Christoffel symbols can be written as 
\begin{equation}
\label{eq:expressionChristoffel}
\Gamma^1_{kl}(x) = \frac{1}{2} h^{1m}(x) \Big(\partial_l h_{mk}(x) + \partial_k h_{ml}(x) - \partial_m h_{kl}(x)\Big),
\end{equation}
for every $k, l \in \{1, \dotsc, d\}$. Thus, we can rewrite \cref{eq:exprlaplacianr} as
\begin{equation}
\label{eq:explicitLaplacianr}
\Delta_h \tilde r_{p, v}(x) = -h^{ij}(x) \Gamma^1_{ij}(x) = -\frac{1}{2}h^{ij}(x)h^{1k}(x)(\partial_j h_{ki}(x) + \partial_i h_{kj}(x) - \partial_k h_{ij}(x)). 
\end{equation}

By \cref{prop:bounddermetric}, we know that for every $a, i, j \in \{1, \dotsc, d\}$, 
\[
|\partial_a h_{ij}(x)| \leq \sum_{n = 1}^\infty \frac{n 2^{2n+2} d^{3n-2} \mathbf{K}^{n} r^{2n-1}}{(2n+2)!}.
\]
Therefore, since
\[
r \leq \frac{1}{d^{3/2} \mathbf{K}},
\]
and since we are assuming without loss of generality that $\mathbf{K} \geq 1$, we see that
\begin{align}
\begin{split}
\label{eq:ineqvaluederivativemetric}
|\partial_a h_{ij}(x)| &\leq \sum_{n = 1}^\infty \frac{n 2^{2n+2} d^{3n-2} \mathbf{K}^{n} (\frac{1}{d^{3/2} \mathbf{K}})^{2n-1}}{(2n+2)!}
\\&= \sum_{n = 1}^\infty \frac{n 2^{2n+2} d^{-1/2} \mathbf{K}^{-n+1}}{(2n+2)!}
\\&\leq \frac{4}{\sqrt{d}} \sum_{n = 1}^\infty \frac{n 4^{n}}{(2n+2)!}
\\&\leq \frac{4}{\sqrt{d}}. 
\end{split}
\end{align}
Now, we can use \cref{prop:boundinversemetric} to conclude that for every $i, j \in \{1, \dotsc, d\}$
\[
|h^{ij}(x)| \leq \frac{1}{1-L},
\]
where $L$ is as defined in \cref{eq:defconstanteL}.

Again, using that 
\[
r \leq \frac{1}{d^{3/2} \mathbf{K}} \leq \frac{1}{d^{3/2} \mathbf{K}^{1/2}},
\]
we can conclude that 
\[
L \leq 2 \sum_{n = 1}^{\infty}\frac{4^{n}}{(2n+2)!} < \frac{1}{2}. 
\]
and so 
\begin{equation}
\label{eq:ineqvalueinversemetric}
|h^{ij}(x)| \leq \frac{1}{1-\frac{1}{2}} = 2.     
\end{equation}

Inserting \cref{eq:ineqvaluederivativemetric,eq:ineqvalueinversemetric} into \cref{eq:explicitLaplacianr} it follows that
\[
|\Delta_h \tilde r_{p, v}(x)| \leq 24 d^{5/2},
\]
concluding the proof.
\end{proof}

\subsection{Bounding the norm of the gradient in symmetric spaces}
\label{sec:boundinggradientsymmetric}

To finish, let us focus on the gradient of $\tilde r_{p, v}(x)$, for which we will obtain a lower and an upper bound on its norm. To this end, we will use the bounds on the metric derived in the previous section. 

\begin{lemma}
\label{AuxLemma2}
Let $(B, h)$ be a complete and connected Riemannian manifold of dimension $d$. Let $p \in B$ be a fixed point and let us consider normal coordinates centered at $p$. Let $0 < r < i(B)$ and let $x \in B$ be such that $d_h(x, p) \leq r$. In these coordinates, the gradient of $\tilde r_{p, v}$ can be written as
\[
\Grad{h} \tilde r_{p, v}(x) = h^{ij}(x) \delta_{ik} v^k \partial_j,
\]
where $v = v^i \partial_i  \in T_p B$ is the vector with respect to which $\tilde r_{p, v}$ is defined. 
\end{lemma}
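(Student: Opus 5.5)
The argument is a direct coordinate computation built on two facts: in the chosen normal coordinates the function $\tilde r_{p,v}$ is linear, and the gradient has a standard local-coordinate formula. We work in normal coordinates $(x^1,\dotsc,x^d)$ centered at $p$, defined on $\mathcal{B}(r,p)$. This is legitimate because $r < i(B)$, so $\exp_p$ is a diffeomorphism from the ball of radius $r$ in $T_pB$ onto its image; hence $\log_p$ is smooth there and the coordinates are well defined at $x$.

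The first step is to express $\tilde r_{p,v}$ in these coordinates. By \cref{eq:deftilder}, $\tilde r_{p,v}(x) = \langle v, \log_p x\rangle$. In normal coordinates the coordinate vector of $x$ is exactly the component vector of $\log_p x$ in the basis $\{\partial_i|_p\}$. Since $h_{ij}(p)=\delta_{ij}$, this basis is orthonormal at $p$, so the Euclidean pairing of components agrees with $h_p$. Writing $v = v^k\partial_k$, we obtain
\[
\tilde r_{p,v}(x) = \delta_{ik} v^k x^i ,
\]
which is a linear function of the coordinates, as already noted before \cref{eq:exprlaplacianr}.

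The second step differentiates this expression, giving $\partial_i \tilde r_{p,v} = \delta_{ik}v^k$, which is constant. We then apply the local expression for the gradient, $\Grad{h} f = h^{ij}\,\partial_i f\,\partial_j$, which follows from the defining identity $\langle \Grad{h} f, X\rangle_h = df(X)$ by raising an index. Substituting the partial derivatives yields $\Grad{h}\tilde r_{p,v}(x) = h^{ij}(x)\,\delta_{ik}v^k\,\partial_j$, which is the claim.

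There is no real obstacle. The only point needing care is the first step: we must justify that the Euclidean pairing in \cref{eq:deftilder} coincides with the $h_p$-inner product. This holds precisely because the normal frame is orthonormal at $p$, and it is what makes $\tilde r_{p,v}$ exactly linear in the coordinates.
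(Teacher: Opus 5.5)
Your proposal is correct and matches the paper's proof essentially step for step: you write $\tilde r_{p,v}$ as the linear function $\delta_{ik}v^k x^i$ in normal coordinates, differentiate to get $\partial_i \tilde r_{p,v}=\delta_{ik}v^k$, and substitute into $h^{ij}\,\partial_i f\,\partial_j$. Your two added justifications are ones the paper leaves implicit: that $r<i(B)$ makes the coordinates valid at $x$ (they exist on all of $\mathcal{B}(i(B),p)$, which contains the closed ball $d_h(x,p)\le r$), and that the Euclidean pairing agrees with $h_p$ because the coordinate frame is orthonormal at $p$.
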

\begin{proof}
In normal coordinates centered at $p$, $\tilde r_{p, v}(x) = \langle v, x\rangle = \delta_{ij} v^i x^j$. Recall that, in these coordinates, 
\[
\Grad{h} \tilde r_{p, v}(x) = h^{ij}(x) \frac{\partial \tilde r_{p, v}(x)}{\partial x^i} \partial_j. 
\]
Therefore, using that $\partial_a x^i = \delta_a^i$ for every $a, i \in \{1, \dotsc, d\}$ it holds that
\[
\partial_a \tilde r_{p, v}(x) = \delta_{ai} v^i,
\]
and so 
\[
\Grad{h} \tilde r_{p, v}(x) = h^{ij}(x) \delta_{ik} v^k \partial_j.
\]
\end{proof}

\begin{proposition}
\label{prop:lowerboundgradtilder}
Under the conditions of \cref{AuxLemma2}, assume that $(B, h)$ is symmetric, $|v|_h = 1$ and that there exists a constant $\mathbf{K} \geq 1$ such that for every $x \in B$, in normal coordinates centered at $x$, the coefficients of the Riemann curvature tensor are such that $|R^l_{ijk}(x)| \leq \mathbf{K}$ for every $i, j, k, l \in \{1, \dotsc, d\}$. If
\[
r \leq \frac{1}{d^{3/2} \mathbf{K}^{1/2}},
\]
then  
\[
\frac{1}{1 + L} \leq |\Grad{h} \tilde r_{p, v}(x)|^2_h  \leq \frac{1}{1 - L},
\]
where
\[
L = 2\sum_{n = 1}^{\infty}\frac{d^{3n}  \mathbf{K}^n (2r)^{2n} }{(2n+2)!}.
\]
\end{proposition}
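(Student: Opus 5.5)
By \cref{AuxLemma2}, in normal coordinates centered at $p$ the gradient has components $(\Grad{h} \tilde r_{p, v}(x))^j = h^{ij}(x)\,\delta_{ik}v^k$, that is, the vector $h^{-1}(x)\bar v$, where $\bar v$ is the coordinate vector of $v$. Its squared norm is therefore
\[
|\Grad{h} \tilde r_{p, v}(x)|^2_h = h_{jl}(x)\, h^{ij}(x)\bar v_i\, h^{ml}(x)\bar v_m = \bar v^{T} h^{-1}(x)\,\bar v .
\]
Since $h$ is the identity at $p$ in normal coordinates, $|v|_h = 1$ means $|\bar v|_{\mathrm{Eucl}} = 1$. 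The whole problem thus reduces to bounding the Rayleigh quotient of the symmetric positive definite matrix $h^{-1}(x)$ at a Euclidean unit vector. That quotient lies between $\lambda_{\min}(h^{-1}(x)) = 1/\lambda_{\max}(h(x))$ and $\lambda_{\max}(h^{-1}(x)) = 1/\lambda_{\min}(h(x))$.

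Next I will write $h(x) = \mathds{1} + X(x)$, exactly as in the proof of \cref{prop:boundinversemetric}. Since $(B,h)$ is symmetric, \cref{thm:taylorexpansionmetric} gives the series expansion of $X_{ij}$. Under the hypothesis $r \leq d^{-3/2}\mathbf{K}^{-1/2}$, \cref{prop:boundmetricterms,rem:boundtaylor} give $|X_{ij}(x)| \leq L/d$ with $L<1$. Then \cref{lem:frobeniusnorm} yields $\norm{X(x)}_2 \leq \norm{X(x)}_F \leq L$. Because $X(x)$ is symmetric, its eigenvalues lie in $[-L, L]$, so the eigenvalues of $h(x)$ lie in $[1-L, 1+L]$. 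Inverting gives
\[
\frac{1}{1+L} \leq \bar v^T h^{-1}(x)\bar v \leq \frac{1}{1-L}.
\]
This is exactly the claim. The upper bound can equally be taken from \eqref{eq:normboundinverseg}, via the Neumann series of \cref{prop:neumannseries}.

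The main point needing care is the first step: identifying $|\Grad{h}\tilde r_{p,v}|_h^2$ with $\bar v^T h^{-1}\bar v$. This uses that the components of $v$ are taken at $p$, where $\delta_{ij} = h_{ij}(p)$, while the metric contracting the gradient is evaluated at $x$. One also needs $x$ to lie in the normal coordinate chart, which holds because $r < i(B)$.

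The only other subtlety is where the curvature bound comes from. The statement assumes a uniform curvature bound at every point, so in particular it holds at $p$, which is what \cref{prop:boundmetricterms} requires. Everything else is routine linear algebra.
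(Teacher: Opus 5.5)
Your proposal is correct and follows the paper's proof: use \cref{AuxLemma2} to write $|\Grad{h}\tilde r_{p,v}(x)|_h^2 = \bar v^T h^{-1}(x)\bar v$, then bound the extreme eigenvalues of $h(x) = \mathds{1} + X(x)$ via $\norm{X(x)}_2 \leq \norm{X(x)}_F \leq L$. The only difference is cosmetic: you get the upper bound from $\lambda_{\min}(h(x)) \geq 1-L$, using the symmetry of $X$, while the paper gets the equivalent bound $\norm{h^{-1}(x)}_2 \leq 1/(1-L)$ from the Neumann series.
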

\begin{proof}
In \cref{AuxLemma2} we showed that
\[
\Grad{h} \tilde r_{p, v}(x) = h^{ij}(x) \delta_{ik} v^k \partial_j,
\]
and so 
\[
|\Grad{h}\tilde r_{p, v}(x)|^2_h = h_{ij}(x) h^{ai}(x) \delta_{ab}v^b h^{cj}(x) \delta_{cd}v^d =  \delta_{jb} \delta_{cd} h^{cj}(x) v^b v^d = \sum_{ij} h^{ij}(x) v^i v^j. 
\]
This expression can be easily lower bounded,
\begin{align*}
|\Grad{h}\tilde r_{p, v}(x)|^2_h = \sum_{ij} h^{ij}(x) v^i v^j \geq \lambda_{\min}(h^{-1}(x)) |v|_h^2 = \frac{1}{\lambda_{\max}(h(x))},
\end{align*}
as $|v|_h^2 = 1$. 

Thus, using \cref{eq:normboundX} we can conclude that
\begin{equation}
\label{eq:boundmaxeigg}
\lambda_{\max}(h(x)) = \norm{h}_2 = \norm{\mathds{1} + X}_2 \leq \norm{1}_2 + \norm{X}_2 \leq 1 + \norm{X}_F \leq 1 + L,
\end{equation}
and so 
\[
|\Grad{h}\tilde r_{p, v}(x)|^2_h  \geq \frac{1}{1 + L}.
\]

Similarly, 
\[
|\Grad{h}\tilde r_{p, v}(x)|^2_h \leq \lambda_{\max}(h^{-1}(x)) = \norm{h^{-1}(x)}_2 \leq \frac{1}{1-L},
\]
where the last inequality follows from \cref{eq:normboundinverseg}.
\end{proof}

\subsection{Two auxiliary lemmas}
\label{sec:secpruebalemma}

We now state and prove the auxiliary lemmas \ref{lem:boundtermgradlaplacian} and \ref{AuxLemma3} used in \cref{sec:secondlyapunovfunction}.

\subsubsection{First auxiliary lemma}

\begin{lemma}
\label{lem:boundtermgradlaplacian}
Let $(B, h)$ be a compact and connected symmetric space. Assume that there exists some constant $\mathbf{K} \geq 1$ such that, for every $x \in B$, the coefficients of the Riemann curvature tensor in normal coordinates centered at $x$ satisfy $|R_{ijk}^l(x)| \leq \mathbf{K}$ for every $i, j, k, l \in \{1, \dotsc, \dim(B)\}$. Let 
\[
r \leq \min\Big\{\frac{i(B)}{2}, \frac{1}{72 \dim(B)^{5/2} \mathbf{K}}\Big\}.
\]
Let $p \in B$ be some fixed point of $B$, and let $v \in T_p B$ be a unit tangent vector at $p$. Let us consider 
\[
\tilde r_{p, v}(q) = \langle v, \log_p q\rangle,
\]
for every $q \in \mathcal{B}(r, p)$ it holds that
\[
\frac{10}{11} \leq |\textup{grad}_h \Tilde{r}_{p, v}(q)|^2_h \leq \frac{10}{9},
\]
and
\[
|\textup{grad}_h \Tilde{r}_{p, v}(q)|^2_h + \Tilde{r}_{p, v}(q)\Delta_h \Tilde{r}_{p, v}(q) \geq \frac{1}{2}.
\]
\end{lemma}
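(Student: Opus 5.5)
The plan is to derive both bounds directly from \cref{prop:lowerboundgradtilder} and \cref{thm:lowerboundlaplacianr}, once the radius hypothesis of the lemma is shown to imply theirs. Write $d := \dim(B)$ and fix $q \in \mathcal{B}(r,p)$. Since $r \leq i(B)/2 < i(B)$, the point $q$ lies outside the cut locus of $p$, so normal coordinates centered at $p$ are available and $\tilde r_{p,v}$ is smooth on the ball.

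The first step is to check the radius requirements. Because $d \geq 1$ and $\mathbf{K} \geq 1$, we have
\[
\frac{1}{72 d^{5/2}\mathbf{K}} \leq \frac{1}{d^{3/2}\mathbf{K}} \leq \frac{1}{d^{3/2}\mathbf{K}^{1/2}}.
\]
So both \cref{prop:lowerboundgradtilder} and \cref{thm:lowerboundlaplacianr} apply with this $r$.

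The second step is to bound the constant $L = 2\sum_{n\geq1} d^{3n}\mathbf{K}^n(2r)^{2n}/(2n+2)!$. With $r \leq 1/(72 d^{5/2}\mathbf{K})$ we get
\[
d^{3n}\mathbf{K}^n(2r)^{2n} \leq \left(\frac{4}{72^2\, d^2\, \mathbf{K}}\right)^n \leq \left(\frac{1}{1296}\right)^n.
\]
Hence $L \leq 2\sum_{n\geq1} 1296^{-n}/(2n+2)!$, which is much smaller than $1/10$; already the $n=1$ term is about $2/(24\cdot 1296)$. \cref{prop:lowerboundgradtilder} then gives
\[
\frac{10}{11} \leq \frac{1}{1+L} \leq |\textup{grad}_h \tilde r_{p,v}(q)|_h^2 \leq \frac{1}{1-L} \leq \frac{10}{9},
\]
which is the first claim.

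The third step handles the second claim. By \cref{eq:ineqtilderd}, $|\tilde r_{p,v}(q)| \leq d_h(q,p) < r$, and \cref{thm:lowerboundlaplacianr} gives $|\Delta_h \tilde r_{p,v}(q)| \leq 24 d^{5/2}$. Therefore
\[
|\tilde r_{p,v}\,\Delta_h \tilde r_{p,v}| \leq r \cdot 24 d^{5/2} \leq \frac{24 d^{5/2}}{72 d^{5/2}\mathbf{K}} \leq \frac{1}{3}.
\]
Combining this with the lower gradient bound gives $|\textup{grad}_h \tilde r_{p,v}|_h^2 + \tilde r_{p,v}\Delta_h \tilde r_{p,v} \geq 10/11 - 1/3 > 1/2$.

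I expect no real obstacle here; the only care needed is making sure the hypotheses of the cited propositions hold. In particular, the curvature bound $\mathbf{K}$ is assumed at every point, so it holds at $p$ in normal coordinates centered at $p$, as \cref{thm:lowerboundlaplacianr} requires. The rest is the elementary numerical check that $L < 1/10$.
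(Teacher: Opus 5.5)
Your proposal is correct and follows the paper's proof essentially line by line: you invoke \cref{prop:lowerboundgradtilder} and \cref{thm:lowerboundlaplacianr}, show $L \leq \frac{1}{10}$ to get $\frac{10}{11} \leq |\textup{grad}_h \tilde r_{p,v}|_h^2 \leq \frac{10}{9}$, and then combine $|\tilde r_{p,v}| \leq r$ with $|\Delta_h \tilde r_{p,v}| \leq 24\dim(B)^{5/2}$ to get $|\tilde r_{p,v}\Delta_h\tilde r_{p,v}| \leq \frac{1}{3}$, hence $\frac{10}{11}-\frac{1}{3} \geq \frac{1}{2}$. The only cosmetic difference is how $L$ is bounded: the paper uses the weaker radius bound $r \leq \frac{1}{2\dim(B)^{3/2}\mathbf{K}}$ to get $L \leq 2\sum_{n\geq 1} \frac{1}{(2n+2)!} \leq \frac{1}{10}$, while your estimate via $r \leq \frac{1}{72\dim(B)^{5/2}\mathbf{K}}$ is sharper but plays the same role.
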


We will use the results from \cref{sec:boundingLaplaciansymmetricspaces,sec:boundinggradientsymmetric} to obtain bounds for the norm of the gradient and the Laplacian of $\Tilde{r}_{p, v}$, using the fact that it is defined on a symmetric space.

\begin{proof}[Proof of \cref{lem:boundtermgradlaplacian}]
As we are assuming that 
\[
r \leq \frac{1}{72 \dim(B)^{5/2} \mathbf{K}}, 
\]
we can apply \cref{thm:lowerboundlaplacianr,prop:lowerboundgradtilder} to $\Tilde{r}_{p, v}$ and conclude that for every $q \in \mathcal{B}(r, p)$ it holds that 
\begin{align*}
|\Delta_h\, \Tilde{r}_{p, v}(q)| &\leq 24 \dim(B)^{5/2},\\
\frac{1}{1+L} \leq |\textup{grad}_h \, \Tilde{r}_{p, v}(q)|^2_h &\leq \frac{1}{1-L}.
\end{align*}

Since $r \leq \frac{1}{2\text{dim}(B)^{3/2} \mathbf{K}}$, and $\mathbf{K} \geq 1$, we can bound $L$ as
\begin{equation}
\label{eq:boundonL}
L \leq 2\sum_{n = 1}^\infty \frac{\mathbf{K}^{-n}}{(2n+2)!} \leq 2\sum_{n = 1}^\infty \frac{1}{(2n+2)!} \leq \frac{1}{10},
\end{equation}
and so 
\[
\frac{10}{11} \leq \frac{1}{1+L} \leq |\textup{grad}_h \,\tilde r_{p, v}(q)|^2_h \leq \frac{1}{1-L} \leq \frac{10}{9},
\]
for every $q \in \mathcal{B}(r, p)$.

Lastly, note that $|\tilde r_{p, v}(q)| \leq d_h(p,q)$ for every $q \in \mathcal{B}(r, p)$. Thus, since $r \leq \frac{1}{72\dim(B)^{5/2}}$ by assumption, it holds that
\[
|\textup{grad}_h \,\tilde r_{p, v}(q)|^2_h + \tilde r_{p, v}(q) \Delta_h \tilde r_{p, v}(q) \geq \frac{10}{11} - \frac{1}{3} \geq \frac{1}{2},
\]
for every $q \in \mathcal{B}(r, p)$.
\end{proof}

\subsubsection{Second auxiliary Lemma}
\label{sec:proofsecondlemma}

In order to simplify the notation, let us first define the following auxiliary function. 
\begin{definition}
\label{def:auxfunctionH}
Let $(B, h)$ be a connected Riemannian manifold, and let $\tilde F$ be a smooth function on $B$. Let $y \in B$ be some fixed critical point of $\tilde F$. Using normal coordinates centered at $y$, let us define $H_y(x)$ for every $x$ outside the cut locus of $y$ as 
\[
H_y(x) := \delta^{ij} \partial_{jk} \tilde F(y)x^k \partial_i|_y = \nabla^2 \tilde F(y)[\log_y x] \in T_y B,
\]
where
\[
\nabla^2 \tilde F(y)[\log_y x] = \nabla_{\log_y x} \, \Grad{h}\tilde F|_y.
\]
\end{definition}

\begin{lemma}
\label{AuxLemma3}
Let $(B, h)$ be a compact and connected symmetric space. Assume that there exists some constant $\mathbf{K} \geq 1$ such that, for every $x \in B$, the coefficients of the Riemann curvature tensor in normal coordinates centered at $x$ satisfy $|R_{ijk}^l(x)| \leq \mathbf{K}$ for every $i, j, k, l \in \{1, \dotsc, \dim(B)\}$. Let $\tilde F$ be a smooth function on $B$ satisfying \cref{modassumption3.4,modassumption3.5.1}. Let 
\[
r \leq \min\Big\{\frac{i(B)}{2}, \frac{1}{72\dim(B)^{5/2}\mathbf{K}}\Big\}.
\]

Let $p \in B$ be a fixed saddle point of $\tilde F$ and let $v$ be a unit eigenvector of $\nabla^2 \tilde F(p)$ that corresponds to its minimum eigenvalue, $-\lambda_v$. Defining $\tilde{r}_{p, v}(q)$ with respect to $v$, i.e. 
\[
\tilde{r}_{p, v}(q) = \langle v, \log_p q\rangle,
\]
for every $q$ outside the cut locus of $p$, it holds that
\[
|\langle - \textup{grad}_h \tilde F(q) , \textup{grad}_h \tilde r_{p, v}(q) \rangle_h - \lambda_v \tilde r_{p, v}(q)| \leq  (24 \dim(B)^{5/2} A_2  + A_3) r^2,
\]
for every $q \in \mathcal{B}\big(r, p\big)$, where $A_2$ and $A_3$ are the Lipschitz constants of the gradient and the Hessian of $\tilde F$, respectively. 
\end{lemma}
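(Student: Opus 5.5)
Throughout, write $d:=\dim(B)$. Work in normal coordinates centered at $p$, with $v=\partial_1$, so that $\tilde r_{p,v}(q)=x^1$ where $x=\log_p q$. By \cref{AuxLemma2},
\[
\langle -\Grad{h}\tilde F(q),\Grad{h}\tilde r_{p,v}(q)\rangle_h=-\partial_1\tilde F(q),
\]
since $h_{ij}h^{ik}\delta_{k1}\partial_j\tilde F=\partial_1\tilde F$. The quantity to control is therefore
\[
|\partial_1\tilde F(x)+\lambda_v x^1| ,
\]
and since $\nabla^2\tilde F(p)v=-\lambda_v v$, at $p$ we have $\partial_{1k}\tilde F(p)x^k=-\lambda_v x^1$. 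The statement thus reduces to a second-order Taylor estimate of $\partial_1\tilde F$ along the radial geodesic $\gamma(t)=\exp_p(tx)$, with $|x|=d_h(p,q)<r$.

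\emph{Step 1 (split into two errors).} The covariant route avoids coordinate second derivatives. Let $P_t$ be parallel transport along $\gamma$ and set $G(t):=P_t^{-1}\Grad{h}\tilde F(\gamma(t))\in T_pB$. Then $G(0)=0$ and $G'(t)=P_t^{-1}\nabla^2\tilde F(\gamma(t))[P_t x]$. The Hessian-Lipschitz property (\cref{def:lipschitzfunction}) gives
\[
|G'(t)-\nabla^2\tilde F(p)x|\le A_3 t|x|^2 ,
\]
so $|G(1)-\nabla^2\tilde F(p)x|\le A_3 r^2/2\le A_3r^2$. Taking the $v$-component yields $|\langle v,G(1)\rangle+\lambda_v x^1|\le A_3 r^2$. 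It remains to compare $\langle v,G(1)\rangle$, the parallel-transported component, with the coordinate component $\partial_1\tilde F(q)=h_{1j}(q)(\Grad{h}\tilde F)^j(q)$.

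\emph{Step 2 (geometric error).} Parallel transport differs from the coordinate frame by $O(|x|^2)$ times curvature. Along a radial geodesic in normal coordinates the Christoffel symbols satisfy $\Gamma(\gamma(t))=O(\partial h)$, which gives $|P_t-\mathrm{Id}|\lesssim$ the bound on $\partial_a h_{ij}$ from \cref{prop:bounddermetric} times $|x|$. Similarly $h_{1j}-\delta_{1j}$ is bounded by \cref{prop:boundmetricterms}, of order $d^2\mathbf K r^2$. Because $|\Grad{h}\tilde F(q)|\le A_2 r$ (by the $A_2$-Lipschitz gradient and $\Grad{h}\tilde F(p)=0$), each geometric correction is of size (metric deviation)$\,\times A_2 r$. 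To get a factor $r^2$ overall, the metric deviation must be $O(r)$. This follows by reusing the estimates in the proof of \cref{thm:lowerboundlaplacianr}: $|\partial_a h_{ij}|\le 4/\sqrt d$ integrated over a path of length $r$, together with $|h^{ij}|\le 2$, gives deviation $\lesssim d^{3/2}r$ per entry. Summing over indices with Cauchy--Schwarz then produces a constant like $24d^{5/2}$, matching the Laplacian constant, and hence an error of at most $24d^{5/2}A_2r^2$.

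\emph{Main obstacle.} The difficulty is the bookkeeping in Step 2: showing that the Christoffel/parallel-transport correction really is linear in $r$ with constant at most $24d^{5/2}$, uniformly on the ball. A cruder alternative is to write $\partial_1\tilde F(x)=\int_0^1\partial_{1k}\tilde F(tx)x^k\,dt$ and use $\partial_{1k}\tilde F=\nabla^2\tilde F_{1k}+\Gamma^m_{1k}\partial_m\tilde F$. Then the Hessian term is handled by Step 1 (with parallel transport versus frame again costing the same metric deviation), and the $\Gamma\,\partial\tilde F$ term is bounded using \eqref{eq:expressionChristoffel} with $|h^{ij}|\le 2$, $|\partial h|\le 4/\sqrt d$, and $|\partial\tilde F|\le 2A_2 t r$. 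I would follow this second route, since it reuses exactly the bounds already derived in \cref{sec:laplacianbound}.
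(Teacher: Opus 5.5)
Your Step~1 is correct; it is the same estimate the paper uses for its third summand (\cref{lem:normdifferencegradFH}). The gap is in the reduction before it: the identity $\langle\Grad{h}\tilde F,\Grad{h}\tilde r_{p,v}\rangle_h=\partial_1\tilde F$ is false. In $h_{ij}h^{ik}\delta_{k1}\partial_j\tilde F$ you contracted the metric against the covector components $\partial_j\tilde F$ without raising the index. The correct computation gives
\[
\langle \Grad{h}\tilde F(q),\Grad{h}\tilde r_{p,v}(q)\rangle_h=h^{1m}(q)\,\partial_m\tilde F(q)=\big(\Grad{h}\tilde F(q)\big)^1 ,
\]
which differs from $\partial_1\tilde F(q)$ whenever $h^{1m}(q)\neq\delta^{1m}$. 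Already on the round $\bb{S}^2$, $h^{12}\neq 0$ in normal coordinates off the coordinate axes. Consequently both routes in your Step~2 estimate the wrong quantity: the vector to compare with $\textup{P}_{\log_p q}v$ is $\Grad{h}\tilde r_{p,v}(q)$, not the coordinate field $\partial_1|_q$. The missing correction $(h^{1m}-\delta^{1m})\partial_m\tilde F$ is of order $\dim(B)^3\mathbf{K}r^2\cdot A_2 r$, so it would be absorbable, but it has to be included. Separately, Step~2 never proves the stated constant. ``A constant like $24\dim(B)^{5/2}$'' is an assertion, and your own ``main obstacle'' paragraph leaves exactly that bookkeeping undone.

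The clean repair, which is essentially the paper's argument, avoids coordinate derivatives of $\tilde F$ altogether. Use $|\Grad{h}\tilde F(q)|_h\le A_2 r$ and $\Grad{h}\tilde r_{p,v}(p)=v$. After Step~1 the remaining error is then
\[
\big|\langle \Grad{h}\tilde F(q),\Grad{h}\tilde r_{p,v}(q)-\textup{P}_{\log_p q}v\rangle_h\big|\leq A_2 r\,\big|\textup{P}^{-1}_{\log_p q}\Grad{h}\tilde r_{p,v}(q)-\Grad{h}\tilde r_{p,v}(p)\big|_h .
\]
\begin{itemize}
\item By \cref{lem:bounderrorparallel}, the last factor is at most $r\sup_{\mathcal{B}(r,p)}|\nabla^2\tilde r_{p,v}|$.
\item Since $\tilde r_{p,v}=x^1$ is linear in normal coordinates, $(\nabla^2\tilde r_{p,v})_{ij}=-\Gamma^k_{ij}v_k$.
\item The bounds $|h^{ij}|\le 2$ and $|\partial_a h_{ij}|\le 4/\sqrt{\dim(B)}$ from the proof of \cref{thm:lowerboundlaplacianr} give $|\Gamma^k_{ij}|\le 12\sqrt{\dim(B)}$, and hence $|\nabla^2\tilde r_{p,v}[a,b]|\le 24\dim(B)^{5/2}$ for $h$-unit $a,b$.
\end{itemize}
Combined with your Step~1, this yields $(24\dim(B)^{5/2}A_2+\tfrac{1}{2}A_3)r^2$, which is the claim. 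The Hessian of $\tilde r_{p,v}$ is the right object because it measures exactly how $\Grad{h}\tilde r_{p,v}$ fails to be parallel along the radial geodesic. That is the quantity entering the error, not the deviation of $\partial_1$ from its parallel transport.
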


Before we present the proof, let us state some auxiliary results. We start with a version of the mean value theorem in Riemannian manifolds. 
\begin{lemma}[{\cite[Proposition 10.55]{boumal2022intromanifolds}}]
\label{proposition10.55}
Let $(B, h)$ be a Riemannian manifold and let $f \in C^2(B)$. Let $x \in B$, $v \in T_xB$, and let $\gamma(t) = \exp_x(tv)$ be defined on $[0, 1]$, where $\exp_x: T_xB \to B$ denotes the exponential map at $x$. Also assume that there exists some $\kappa \geq 0$ such that, for all $t \in [0, 1]$, 
\[
\norm{(\textup{P}^\gamma_{x \to \gamma(t)})^{-1} \circ \nabla^2 f(\gamma(t)) \circ \textup{P}^\gamma_{x \to \gamma(t)} - \nabla^2 f(x)} \leq \kappa|tv|_h.
\]
Then  
\[
|\textup{P}^{-1}_v \Grad{h}f(\exp_x(v)) - \Grad{h}f(x) - \nabla^2f(x)[v]|_h \leq \frac{\kappa}{2}|v|^2_h,
\]
where $\textup{P}^\gamma_{x \to \gamma(t)}$ denotes the parallel transport from $x$ to $\gamma(t)$ along $\gamma$. 
\end{lemma}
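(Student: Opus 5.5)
The plan is to reduce the statement to the fundamental theorem of calculus in the fixed vector space $T_xB$. Since $\gamma(t)=\exp_x(tv)$, the parallel transport $\textup{P}_v$ coincides with $\textup{P}^\gamma_{x\to\gamma(1)}$. I will define the curve
\[
G(t) := (\textup{P}^\gamma_{x\to\gamma(t)})^{-1}\,\Grad{h}f(\gamma(t)) \in T_xB, \qquad t\in[0,1].
\]
This curve satisfies $G(0)=\Grad{h}f(x)$ and $G(1)=\textup{P}^{-1}_v\Grad{h}f(\exp_x(v))$. The quantity to be bounded is therefore $|G(1)-G(0)-\nabla^2 f(x)[v]|_h$.

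The first step is to compute $G'(t)$. I will fix an orthonormal basis $\{e_i\}$ of $T_xB$ and extend it by parallel transport to a parallel orthonormal frame $E_i(t)$ along $\gamma$. Writing $\Grad{h}f(\gamma(t))=\sum_i a_i(t)E_i(t)$, the covariant derivative along $\gamma$ is $D_t\Grad{h}f = \sum_i a_i'(t)E_i(t)$, because the $E_i$ are parallel. Since $(\textup{P}^\gamma_{x\to\gamma(t)})^{-1}E_i(t)=e_i$, we get $G(t)=\sum_i a_i(t)e_i$, and hence
\[
G'(t) = (\textup{P}^\gamma_{x\to\gamma(t)})^{-1} D_t \Grad{h}f(\gamma(t)) = (\textup{P}^\gamma_{x\to\gamma(t)})^{-1}\nabla^2 f(\gamma(t))[\gamma'(t)].
\]
Because $\gamma$ is a geodesic, $\gamma'(t)$ is parallel, so $\gamma'(t)=\textup{P}^\gamma_{x\to\gamma(t)}v$. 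This gives $G'(t) = \big((\textup{P}^\gamma_{x\to\gamma(t)})^{-1}\circ\nabla^2 f(\gamma(t))\circ \textup{P}^\gamma_{x\to\gamma(t)}\big)[v]$. Since $f\in C^2$, $G$ is $C^1$ on $[0,1]$.

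The second step is to integrate. Writing $\nabla^2 f(x)[v]=\int_0^1 \nabla^2 f(x)[v]\,dt$, we obtain
\[
G(1)-G(0)-\nabla^2 f(x)[v] = \int_0^1 \Big((\textup{P}^\gamma_{x\to\gamma(t)})^{-1}\circ \nabla^2 f(\gamma(t))\circ \textup{P}^\gamma_{x\to\gamma(t)} - \nabla^2 f(x)\Big)[v]\,dt.
\]
The hypothesis bounds the operator norm of the integrand's operator by $\kappa|tv|_h=\kappa t|v|_h$. Hence the norm of the integral is at most $\int_0^1\kappa t|v|_h^2\,dt=\frac{\kappa}{2}|v|_h^2$, which is the claim.

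I expect the only delicate point to be the first step: rigorously identifying the ordinary derivative of the transported curve $G$ with the covariant derivative, and that with the Hessian applied to $\gamma'$. The parallel-frame argument handles the first identification. The second is the definition $\nabla^2 f(\cdot)[X]=\nabla_X\Grad{h}f$ from \cref{definicionhessian}, applied along the curve via the chain rule for the induced connection. Everything else is elementary integration in a fixed inner-product space.
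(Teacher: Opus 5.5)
Your argument is correct and matches the standard proof. The paper gives no proof of its own and only cites Boumal's Proposition 10.55, which argues the same way: pull $\Grad{h} f(\gamma(t))$ back to $T_xB$ by parallel transport, identify the derivative of that curve with $\bigl((\textup{P}^\gamma_{x\to\gamma(t)})^{-1}\circ\nabla^2 f(\gamma(t))\circ\textup{P}^\gamma_{x\to\gamma(t)}\bigr)[v]$ using that $\gamma'$ is parallel, then integrate the bound $\kappa t|v|_h^2$ over $[0,1]$. Your parallel-frame computation of $G'(t)$ is the right way to make the one delicate step rigorous.
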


Applying this result to our setting we can obtain the following bound.  
\begin{lemma}
\label{lem:normdifferencegradFH}
Let $y \in \tilde{\mathcal{S}}$ be some fixed saddle point of $\tilde F$. Then  for every $x$ outside the cut locus of $y$ it holds that 
\[
|\textup{P}^{-1}_{\log_y x}\,  \Grad{h}\tilde F(x) - H_y(x)|_h \leq \frac{A_3}{2} d_h(x, y)^2,
\]
where $A_3$ is the Lipschitz constant of the Hessian of $\tilde F$. 
\end{lemma}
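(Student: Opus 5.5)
The bound in \cref{lem:normdifferencegradFH} will follow by applying \cref{proposition10.55} with $f = \tilde F$, base point $x_0 := y$ and tangent vector $u := \log_y x \in T_y B$, so that $\gamma(t) = \exp_y(t\log_y x)$ is the minimizing geodesic from $y$ to $x$. The conclusion of \cref{proposition10.55} then reads
\[
|\textup{P}^{-1}_{\log_y x}\Grad{h}\tilde F(x) - \Grad{h}\tilde F(y) - \nabla^2\tilde F(y)[\log_y x]|_h \leq \frac{\kappa}{2}|\log_y x|_h^2 .
\]
Three identifications turn this into the claimed inequality.

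First, since $y \in \tilde{\mathcal{S}}$ is a critical point, $\Grad{h}\tilde F(y) = 0$. Second, by \cref{def:auxfunctionH}, $\nabla^2\tilde F(y)[\log_y x] = H_y(x)$. Third, since $x$ lies outside the cut locus of $y$, we have $|\log_y x|_h = d_h(x,y)$.

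It then remains to verify the hypothesis of \cref{proposition10.55} with $\kappa = A_3$. For each $t\in[0,1]$, the parallel transport $\textup{P}^\gamma_{y\to\gamma(t)}$ along $\gamma$ coincides with $\textup{P}_{t\log_y x}$ in the notation of \cref{def:paralleltransport}, because $\gamma|_{[0,t]}$ is the curve $s\mapsto \exp_y(s\,t\log_y x)$ up to reparametrization. The $A_3$-Lipschitz property of $\nabla^2\tilde F$ in \cref{def:lipschitzfunction}, applied at base point $y$ with vector $t\log_y x$, gives exactly
\[
\norm{\textup{P}^{-1}\circ\nabla^2\tilde F(\gamma(t))\circ \textup{P} - \nabla^2\tilde F(y)} \le A_3 |t\log_y x|_h ,
\]
where the operator norm is the maximum over unit $s$ appearing in that definition. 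Substituting $\kappa = A_3$ and the three identifications above yields $\frac{A_3}{2}d_h(x,y)^2$ on the right-hand side.

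The only delicate point is this matching of the parallel transports, namely that transport along the geodesic segment to $\gamma(t)$ is the same as $\textup{P}_{t\log_y x}$. It holds because parallel transport is invariant under reparametrization of the curve. No curvature or symmetric-space assumptions are needed for this lemma.
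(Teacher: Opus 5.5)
Your proposal is correct and follows the paper's proof: the paper also applies \cref{proposition10.55} at base point $y$ with vector $\log_y x$, using $\Grad{h}\tilde F(y)=0$ and the $A_3$-Lipschitz Hessian. You additionally spell out the identifications the paper leaves implicit: $\nabla^2\tilde F(y)[\log_y x]=H_y(x)$, $|\log_y x|_h=d_h(x,y)$, and the agreement of $\textup{P}^\gamma_{y\to\gamma(t)}$ with $\textup{P}_{t\log_y x}$.
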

\begin{proof}
The proof follows automatically by applying \cref{proposition10.55}, using the fact that $\Grad{h}\tilde F(y) = 0$ and that $\tilde F$ has a $A_3$-Lipschitz Hessian.
\end{proof}

Lastly, the following lemma allows us to bound the error incurred when transporting a vector field via parallel transport. 
\begin{lemma}[{\cite[Proposition 10.46]{boumal2022intromanifolds}}]
\label{lem:bounderrorparallel}
If $V$ is a continuously differentiable vector field on a Riemannian manifold $(B, h)$, for which there exists a constant $K$ such that for every $x \in B$ and every $s \in T_xB$ with $|s|_h = 1$
\[
|\nabla_s V(x)|_h \leq K. 
\]
Then, for any $x \in B$ and any $y$ outside the cut locus of $x$, it holds that 
\[
|\textup{P}^{-1}_{\log_x y} V(y) - V(x)|_h \leq K d_h(x,y).
\] 
\end{lemma}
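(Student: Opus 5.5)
The plan is to reduce the statement to a one-variable estimate along the geodesic joining $x$ to $y$, by pulling everything back to the fixed tangent space $T_xB$ with parallel transport. Fix $x \in B$ and $y$ outside the cut locus of $x$, set $u := \log_x y$, and let $\gamma(t) := \exp_x(tu)$ for $t \in [0,1]$. Since $y$ is not in the cut locus of $x$, $\gamma$ is the minimizing geodesic from $x$ to $y$. Hence $|\gamma'(t)|_h = |u|_h = d_h(x,y)$ for every $t$. By \cref{def:paralleltransport}, $\textup{P}_{u}$ is exactly the parallel transport along $\gamma$ from $t=0$ to $t=1$; denote by $\textup{P}_{0\to t}$ the transport along $\gamma|_{[0,t]}$.

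The central object is the curve in the single vector space $T_xB$ given by
\[
W(t) := \textup{P}_{0\to t}^{-1}\, V(\gamma(t)), \qquad t \in [0,1].
\]
It satisfies $W(0) = V(x)$ and $W(1) = \textup{P}^{-1}_{u} V(y)$. Since $W$ lives in one vector space, the fundamental theorem of calculus applies:
\[
W(1) - W(0) = \int_0^1 W'(t)\,dt .
\]

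The key step, and the only one requiring care, is to identify $W'(t)$. I will show
\[
W'(t) = \textup{P}_{0\to t}^{-1}\, \big(\nabla_{\gamma'(t)} V\big)(\gamma(t)).
\]
To do this, choose an orthonormal basis $\{e_i\}$ of $T_xB$ and extend it to a parallel orthonormal frame $E_i(t) := \textup{P}_{0\to t} e_i$ along $\gamma$. Write $V(\gamma(t)) = \sum_i a_i(t) E_i(t)$, where the $a_i$ are $C^1$ because $V$ is continuously differentiable. Because each $E_i$ is parallel, the covariant derivative along $\gamma$ is
\[
D_t V = \sum_i a_i'(t) E_i(t) = \nabla_{\gamma'(t)} V .
\]
On the other hand, $W(t) = \sum_i a_i(t) e_i$, so $W'(t) = \sum_i a_i'(t) e_i$. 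This is precisely $\textup{P}^{-1}_{0\to t}$ applied to $D_t V$, which proves the identity.

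Parallel transport is an isometry, so $|W'(t)|_h = |\nabla_{\gamma'(t)} V|_h$. When $\gamma'(t) \neq 0$, linearity of $\nabla$ in the direction gives $\nabla_{\gamma'(t)} V = |\gamma'(t)|_h\,\nabla_{s} V$ with $s := \gamma'(t)/|\gamma'(t)|_h$ a unit vector. The hypothesis then yields $|W'(t)|_h \le K\,|\gamma'(t)|_h = K\, d_h(x,y)$. The case $x = y$ is trivial, since both sides of the claimed inequality vanish. Integrating over $t \in [0,1]$ gives
\[
|\textup{P}^{-1}_{\log_x y} V(y) - V(x)|_h = |W(1) - W(0)|_h \le \int_0^1 |W'(t)|_h\,dt \le K\, d_h(x,y),
\]
which is the claimed bound.
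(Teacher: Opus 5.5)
Your proof is correct, and it follows essentially the same route as the standard proof of \cite[Proposition 10.46]{boumal2022intromanifolds}, which the paper cites without reproducing: pull $V\circ\gamma$ back to $T_xB$ with a parallel orthonormal frame, identify the derivative as $\textup{P}_{0\to t}^{-1}\nabla_{\gamma'(t)}V$, and integrate. Your argument only uses the bound on $|\nabla_s V|_h$ at points of the geodesic $\gamma$, and that local form is exactly how the paper applies the lemma in the proof of \cref{AuxLemma3}, where the supremum is taken only over $\mathcal{B}(r,p)$.
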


With the above lemmas in mind, we can finally give a proof for \cref{AuxLemma3}. 
\begin{proof}[Proof of \cref{AuxLemma3}]
First, since parallel transport is an isometry, we can bound
\begin{align*}
|\langle - \textup{grad}_h \tilde F(q) , \textup{grad}_h \tilde r_{p, v}(q) \rangle_h - \lambda_v \tilde r_{p, v}(q)| &\leq
|\langle - \textup{P}_{\log_p q}\, H_p(q), \textup{grad}_h \tilde r_{p, v}(q)\rangle_h - \lambda_v \tilde r_{p, v}(q)| 
\\&\quad + |\langle \textup{grad}_h \tilde F(q)- \textup{P}_{\log_p q}\, H_p(q), \textup{grad}_h \tilde r_{p, v}(q)\rangle_h| 
\\&\leq |\langle - H_p(q), \textup{grad}_h \tilde r_{p, v}(p)\rangle_h - \lambda_v \tilde r_{p, v}(q)|  
\\&\quad + |\langle - H_p(q), \textup{P}_{\log_p q}^{-1}\, \textup{grad}_h \tilde r_{p, v}(q) - \textup{grad}_h \tilde r_{p, v}(p)\rangle_h | 
\\&\quad + |\langle \textup{grad}_h \tilde F(q)- \textup{P}_{\log_p q}\, H_p(q), \textup{grad}_h \tilde r_{p, v}(q)\rangle_h| 
\\&\leq |\langle - H_p(q), \textup{grad}_h \tilde r_{p, v}(p)\rangle_h - \lambda_v \tilde r_{p, v}(q)|  
\\&\quad + |H_p(q)|_h |\textup{P}_{\log_p q}^{-1}\, \textup{grad}_h \tilde r_{p, v}(q) - \textup{grad}_h \tilde r_{p, v}(p)|_h 
\\&\quad + | \textup{grad}_h \tilde F(q)- \textup{P}_{\log_p q}\, H_p(q)|_h |\textup{grad}_h \tilde r_{p, v}(q)|_h.
\end{align*}
Let us now study each of the summands on the right-hand side separately. First, using the expressions for the gradient of $\tilde r_{p, v}$ and $H_p$ in normal coordinates centered at $p$ (cf. \cref{def:auxfunctionH,AuxLemma2}), we can rewrite, denoting $q$ in coordinates as $(x^1, \dotsc, x^{\dim(B)})$, 
\begin{align}
\label{eq:expressionatz}
\begin{split}
\langle -H_p(q), \textup{grad}_h\,\tilde{r}_{p, v}(p)\rangle_h &= -(H_p(q))^i\left(\textup{grad}_h\, \tilde{r}_{p, v}(p)\right)^j \delta_{ij}
\\&= -\delta^{ik}\partial_{kl} \tilde{F}(0) x^l v^j \delta_{ij}
\\&= -\delta^{k}_j \partial_{kl} \tilde{F}(0) x^l v^j
\\&= - \partial_{kl} \tilde{F}(0) x^l v^k.
\end{split}
\end{align}
Recall that
\[
\nabla^2_h\, \tilde F(p)[v] = -\lambda_v v.
\]
Thus, using \cref{eq:expressionatz} we can conclude that
\begin{equation}
\label{eq:boundfirstsummand}
\langle -H_p(q), \textup{grad}_h\,\tilde{r}_{p, v}(p)\rangle_h - \lambda_v \tilde{r}_{p,v}(q) = \lambda_v\delta_{ij} x^i v^j - \lambda_v \delta_{ab} x^a v^b = 0. 
\end{equation}

For the second summand, we can bound the norm of $H_p(q)$ using \cref{rmk:LipschitzGrad} as
\begin{equation}
\label{eq:firsttermsecondsummand}
|H_p(q)|_h = |\nabla^2_h \tilde{F}(p)[\log_p q]|_h \leq A_2 r,    
\end{equation}
where $A_2$ is the Lipschitz constant of the gradient of $\tilde F$. Let us now bound the norm of 
\[
\textup{P}_{\log_p q}^{-1}\, \textup{grad}_h \tilde r_{p, v}(q) - \textup{grad}_h \tilde r_{p, v}(p).
\]
To this end, we will use \cref{lem:bounderrorparallel}. First note that for every $x \in \mathcal{B}(r, p)$ and every two unit vectors $a, b \in T_x B$ we can bound, using normal coordinates centered at $p$, 
\begin{align}
\big|\nabla^2_h \tilde r_{p, v}(x)[a, b]\big| \leq \norm{\nabla^2_h \tilde r_{p, v}(x)}_2 |a|_2 |b|_2,
\end{align}
where $|a|_2$ and $|b|_2$ denote the Euclidean norm of $a$ and $b$, which can be upper bounded as
\[
|a|^2_h \geq m |a|^2_2, 
\]
where $m$ denotes the minimum eigenvalue of $h(x)$ in normal coordinates centered at $p$. In particular, $m$ is equal to $\frac{1}{\norm{h^{-1}(x)}_2}$. Thus
\begin{align}
\big|\nabla^2_h \tilde r_{p, v}(x)[a, b]\big| \leq \norm{\nabla^2_h \tilde r_{p, v}(x)}_2 |a|_2 |b|_2  \leq \norm{\nabla^2_h \tilde r_{p, v}(x)}_2 \norm{h^{-1}(x)}_2.
\end{align}
Since 
\[
r \leq \frac{1}{72\dim(B)^{5/2}K}
\]
we can upper bound the norm of $h^{-1}$ using \cref{eq:normboundinverseg,eq:boundonL}:  
\begin{align*}
\norm{h^{-1}(x)}_2 \leq  \frac{1}{1-L} \leq \frac{10}{9}.
\end{align*}
The norm of the Hessian of $\tilde r_{p, v}$ can also be bounded, using \cref{lem:frobeniusnorm} by 
\begin{align*}
\norm{\nabla^2_h \tilde r_{p, v}(x)}_2 \leq \norm{\nabla^2_h \tilde r_{p, v}(x)}_F \leq \dim(B) \max_{ij} |(\nabla^2_h \tilde r_{p, v}(x))_{ij}|.
\end{align*}
where $(\nabla^2_h \tilde r_{p, v}(x))_{ij}$ denote the coefficients of the Hessian of $\tilde r_{p,v}$ evaluated at $x$ in normal coordinates centered at $p$. We can obtain an explicit expression for these (cf. \cite[Example 4.22]{lee2018introductionRiemannian}). Indeed, for every $i, j \in \{1, \dotsc, \dim(B)\}$, 
\[
(\nabla^2_h \tilde r_{p, v}(x))_{ij} = \partial_{ij} \tilde r_{p, v}(x) - \Gamma^k_{ij}(x) \partial_k \tilde{r}_{p, v}(x) = - \Gamma^k_{ij}(x) v_k, 
\]
where $v_k := \delta_{ki} v^i$. This way, since $v$ has norm one, and using the expression for the Christoffel symbols found in \cref{eq:expressionChristoffel} along with \cref{eq:ineqvaluederivativemetric,eq:ineqvalueinversemetric} we can bound
\[
|(\nabla^2_h \tilde r_{p, v}(x))_{ij}| \leq \dim(B) \max_k |\Gamma^k_{ij}(x)| \leq \frac{3}{2} \dim(B)^2 \max_{kl}|h^{kl}(x)| \max_{kln} |\partial_k h_{ln}(x)| \leq 12 \dim(B)^{3/2},
\]
which allows us to conclude that 
\[
\big|\nabla^2_h \tilde r_{p, v}(x)[a, b]\big| \leq 24 \dim(B)^{5/2}.
\]
Using this inequality, we apply \cref{lem:bounderrorparallel} and reach
\begin{align}
\label{eq:bounderrorparallel}
\begin{split}
|\textup{P}_{\log_p q}^{-1}\, \textup{grad}_h \tilde r_{p, v}(q) - \textup{grad}_h \tilde r_{p, v}(p)|_h &\leq \sup_{\substack{x \in \mathcal{B}(r, p)\\s \in T_x B\\|s|_h = 1}}|\nabla^2_h \tilde r_{p, v}(x)[s]|_h\, r
\\&= \sup_{\substack{x \in \mathcal{B}(r, p)\\a,b \in T_x B\\|a|_h = 1\\|b|_h=1}}\big|\nabla^2_h \tilde r_{p, v}(x)[a, b]\big|\, r
\\&\leq 24\dim(B)^{5/2} r.
\end{split}
\end{align}
Using this bound, along with \cref{eq:firsttermsecondsummand} we see that 
\begin{equation}
\label{eq:boundsecondsummand}
|H_p(q)|_h |\textup{P}_{\log_p q}^{-1}\, \textup{grad}_h \tilde r_{p, v}(q) - \textup{grad}_h \tilde r_{p, v}(p)|_h  \leq 24 \dim(B)^{5/2} A_2 r^2.
\end{equation}

For the third summand, we can bound it using \cref{lem:normdifferencegradFH,lem:boundtermgradlaplacian}; 
\begin{align}
\label{eq:boundthirdsummand}
\begin{split}
| \textup{grad}_h \tilde F(q)- \textup{P}_{\log_p q}\, H_p(q)|_h|\textup{grad}_h \tilde r_{p, v}(q)|_h &= |\textup{P}_{\log_p q}^{-1} \, \textup{grad}_h \tilde F(q)- H_p(q)|_h |\textup{grad}_h \tilde r_{p, v}(q)|_h  
\\&\leq A_3 r^2.
\end{split}
\end{align}
Putting \cref{eq:boundfirstsummand,eq:boundsecondsummand,eq:boundthirdsummand} together, we obtain
\[
|\langle - \textup{grad}_h \tilde F(q) , \textup{grad}_h \tilde r_{p, v}(q) \rangle_h - \lambda_v \tilde r_{p, v}(q)| \leq  (24 \dim(B)^{5/2} A_2  + A_3) r^2,
\]
concluding the proof.
\end{proof}